\title{Gauge origami and quiver W-algebras III: Donaldson--Thomas $qq$-characters }
\author[$\diamondsuit$]{Taro Kimura}
\author[$\spadesuit$]{Go Noshita}
\affil[$\diamondsuit$]{Institut de Mathématiques de Bourgogne, Université de Bourgogne, CNRS, Dijon, France\footnote{Unité Mixte de Recherche (UMR 5584) commune au CNRS et à l'Université de Bourgogne}}
\affil[$\spadesuit$]{Department of Physics, The University of Tokyo, Tokyo, Japan}
\date{ }
\begin{document}

\begin{titlepage}
\vspace*{1cm}
\vskip 12mm
\begin{center}
    {\LARGE \bf{Gauge origami and quiver W-algebras III: Donaldson--Thomas $qq$-characters}\par}          
\vskip 3cm    
\begin{center}
{\Large Taro Kimura${}^{\diamondsuit}$ and Go Noshita${}^{\spadesuit}$}
\end{center}
\vskip 2cm 
{\it $\diamondsuit$ Institut de Mathématiques de Bourgogne,}
{\it Université de Bourgogne, CNRS, Dijon, France }\\
{\it $\spadesuit$ Department of Physics, The University of Tokyo, Tokyo, Japan}
\end{center}
\begin{center}
{E-mail: \href{mailto:taro.kimura@u-bourgogne.fr}{\texttt{taro.kimura@u-bourgogne.fr}}, \\ 
 \href{mailto:noshita-go969@g.ecc.u-tokyo.ac.jp}{\texttt{noshita-go969@g.ecc.u-tokyo.ac.jp}}}
\end{center}
\vskip 2cm

\begin{abstract}
We further develop the BPS/CFT correspondence between quiver W-algebras/$qq$-characters and partition functions of gauge origami. We introduce $qq$-characters associated with multi-dimensional partitions with nontrivial boundary conditions which we call Donaldson--Thomas (DT) $qq$-characters. They are operator versions of the equivariant DT vertices of toric Calabi--Yau three and four-folds. Moreover, we revisit the construction of the D8 $qq$-characters with no boundary conditions and give a quantum algebraic derivation of the sign rules of the magnificent four partition function. We also show that under the proper sign rules, the D6 and D8 $qq$-characters with no boundary conditions all commute with each other and discuss its physical interpretation.

\end{abstract}

\end{titlepage}

\begin{spacing}{1.3}
\setcounter{tocdepth}{2}
\tableofcontents
\end{spacing}

\newpage
\section{Introduction}

\noindent Gauge origami, introduced by Nekrasov \cite{Nekrasov:2015wsu,Nekrasov:2016qym,Nekrasov:2016ydq,Nekrasov:2017rqy,Nekrasov:2017gzb} (see also \cite{Szabo:2022zyn,Kanno:2020ybd}) is a generalization of gauge theory in the sense that it is defined on multiple, generally intersecting space-time components. In string theory, such kind of setup is obtained as the low energy limit of a collection of multiple possibly intersecting D-branes. Even for such cases, with suitable numbers of supersymmetries and the power of supersymmetric localization, the infinite-dimensional path integral is reduced to a finite-dimensional integral over the moduli space of Bogomolny--Prasad--Sommerfield (BPS) configurations which eventually gives the \textit{gauge origami partition functions}.

The simplest example is the gauge origami obtained from type IIA string theory\footnote{Of course, one can consider setups like $\mathcal{C}\times \mathbb{C}^{4}$, where $\mathcal{C}=\mathbb{R}^{2},\mathbb{T}^{2}$ and get rational and elliptic versions of the partition functions.} on $\mathbb{R}\times \mathbb{S}^{1}\times \mathbb{C}^{4}$ with $\D(2p)$-branes wrapping $\mathbb{S}^{1}\times \Sigma$, where $\Sigma$ are holomorphic cycles in $\mathbb{C}^{4}$. The D0-branes wrapping~$\mathbb{S}^{1}$ and probing the $\D(2p)$-branes will play roles of instantons and the partition function is the Witten index \cite{Witten:1982df} of the supersymmetric quantum mechanics of these D0-brane worldvolume theories. Depending on $p=2,3,4$, the setup is called spiked instantons \cite{Nekrasov:2016gud,Nekrasov:2016qym}, tetrahedron instantons \cite{Pomoni:2021hkn,Pomoni:2023nlf}, and magnificent four \cite{Nekrasov:2017cih,Nekrasov:2018xsb}, respectively, and one can evaluate their partition functions explicitly.

Our specific interest lies in the generalization of the gauge origami setup to general toric Calabi--Yau four-folds (CY4). To do this, one can replace $\mathbb{C}^{4}$ to general toric CY4 and consider D-branes wrapping holomorphic cycles inside the toric CY4. One strong motivation to study these kind of setups is to get a complete understanding of the geometric engineering of quantum field theories related to Calabi--Yau four-folds and compute the associated physical quantities explicitly. See for example \cite{Leung:1997tw,Diaconescu:1998ua,Gukov:1999ya,Intriligator:2012ue,Jockers:2016bwi} for early works and \cite{Najjar:2023hee,Sangiovanni:2024nfz,Moleti:2024skd} for recent discussions.

Basically, there are two approaches to study such quantities. The first method is to use the quiver structure of toric CY4 and study the associated Witten indices of them, which we call the \textit{quiver formalism}. Attempts and examples on this direction are done in \cite{Kimura:2023bxy,Bao:2024ygr,Nekrasov:2012xe,Nekrasov:2013xda,Nekrasov:2015wsu,Jeong:2018qpc,Jeong:2021rll,Szabo:2024lcp, Bonelli:2020gku,Szabo:2023ixw}. The second method is to decompose the toric CY4 into $\mathbb{C}^{4}$ patches and consider the \textit{gluing} of them. This construction is related to the topological vertex methods \cite{Aganagic:2003db,Iqbal:2007ii,Taki:2007dh,Awata:2008ed,Nekrasov:2014nea,Iqbal:2003ds} so we call them the \textit{vertex formalism}. See also~\cite{Nekrasov:2003vi}. Such directions are studied in \cite{Nekrasov:2023nai,Piazzalunga:2023qik,Cao:2019tnw,Cao:2019tvv,Monavari:2022rtf,Bae:2022pif,Bae:2024bpx}. The relation between these two methods for the CY 4-fold setup is not so clear for the moment (see \cite{Yamazaki:2010fz} for discussion on three-folds for example) but we expect that new physical and mathematical phenomena will definitely occur here. Thus, conducting research from both perspectives is of utmost importance.

An interesting property of the gauge origami partition functions is the existence of non-perturbative Dyson--Schwinger equations associated with the symmetries of adding and removing BPS particles \cite{Nekrasov:2013xda,Nekrasov:2015wsu,Kim:2016qqs,Bourgine:2016vsq,Kimura:2015rgi}. Such kind of symmetries imply the existence of a quantum algebraic structure which leads to a duality that is now called the BPS/CFT correspondence \cite{Alday:2009aq,Nekrasov:2015wsu,Awata:2009ur,Awata:2010yy} (see \cite{LeFloch:2020uop} for a review). The $qq$-characters or the quiver W-algebras\footnote{Strictly speaking, the quiver W-algebras are vertex operator versions of the $qq$-characters, but we will not distinguish them from each other.} are physical observables characterizing the non-perturbative Dyson--Schwinger equations, and quantum toroidal algebras\footnote{One can also consider rational and elliptic variants of the story and affine Yangians and elliptic analogues will appear, but in this paper, we will basically focus on the trigonometric story.} \cite{Ginzburg,FFJMM1,Feigin2011,Feigin2011plane} are the infinite dimensional algebras that play the roles of creation and annihilation of the BPS particles. Various studies have been done in this direction and for example see \cite{DIMreview} and the references there. 

Construction of the $qq$-characters/quiver W-algebras associated with the gauge origami system of $\mathbb{C}^{4}$ was performed in a unified way in our previous work \cite{Kimura:2023bxy}. There, we introduced $qq$-characters associated with each $\D(2p)$-branes wrapping non-compact subspaces $\mathbb{C},\mathbb{C}^{2},\mathbb{C}^{3},\mathbb{C}^{4}$ and showed that the compositions of them reproduce the gauge origami partition functions, which establishes the BPS/CFT correspondence. Moreover, we gave conjectures for generalizations to toric Calabi--Yau four-folds on both directions, the quiver formalism and the vertex formalism, and generally called them BPS $qq$-characters. Since the former method is related with the quiver structure of toric CY4 and quiver Yangian \cite{Li:2020rij,Galakhov:2020vyb,Galakhov:2021xum} and its generalizations \cite{Galakhov:2021vbo,Noshita:2021ldl,Noshita:2021dgj}, we call the $qq$-characters appearing there the BPS quiver $qq$-characters or BPS quiver W-algebras. Some examples already appeared in \cite{Kimura:2023bxy} and a full description will be discussed in a future work \cite{Kimura-Noshita2}. For the latter method, the associated $qq$-characters were dubbed as \emph{webs of BPS $qq$-characters}, whose name comes from the fact that the appearing $qq$-characters are associated with brane webs and topological vertices. Although, the concept was introduced, the explicit construction of such $qq$-characters is yet to be done.

The goal of this paper is to initiate explicit studies on the constructions of the webs of BPS $qq$-characters. We will introduce a new type of $qq$-characters which we call the Donaldson--Thomas $qq$-characters. The D-brane $qq$-characters introduced in \cite{Kimura:2023bxy} have monomial terms associated with multi-dimensional partitions. For the D4 $qq$-characters, they are 2d partitions (Young diagrams), for the D6 $qq$-characters, they are 3d partitions (plane partitions), and for the D8 $qq$-characters, they are 4d partitions (solid partitions). Combinatorially, the DT $qq$-characters are $qq$-characters whose monomial terms correspond to multi-dimensional partitions with nontrivial asymptotic boundary conditions. Mathematically, they are vertex operator lift ups of equivariant DT vertices. Namely, the vacuum expectation value of the DT $qq$-characters reproduces the equivariant K-theoretic and elliptic DT vertices. Given such DT $qq$-characters, one would like to \emph{glue}\footnote{A different gluing method of Y-algebras \cite{Gaiotto:2017euk} and their $q$-deformations \cite{Harada:2021xnm,Kojima2019,Kojima2021} was introduced in \cite{Prochazka:2017qum, Harada:2020woh}. See also a recent paper discussing how to define $q$-deformations of $\mathcal{N}=2$ SCA \cite{Awata:2024jzk}. We expect that both of the constructions are related by string duality but for the moment it is not clear yet.} them and reproduce the DT partition function of toric CY4. For the moment, such gluing procedure is technically difficult and a complete description will be done in a future work \cite{Kimura-Noshita1}. Namely, in this paper, we will focus only on the equivariant DT vertex contribution.

The organization and summary of this paper is as follows. In section~\ref{sec:partitionfunct}, we review the contour integral formulas and partition functions where multidimensional partitions with nontrivial asymptotic boundary conditions appear, which are equivalent to the equivariant K-theoretic vertex of toric Calabi--Yau three and four-folds. We then move on to the free field realizations of the contour integral formulas in section~\ref{sec:freefieldvertexop}. We will see there that co-dimensional one boundary conditions (leg boundary conditions) are related with the D2-vertex operators, co-dimensional two boundary conditions (surface boundary conditions) are related with the D4-vertex operators, and co-dimensional three boundary conditions (hypersurface boundary conditions) are related with the D6 vertex operators. The free field realizations discussed here gives a systematic way to define highest weights which will be essential for the construction of DT $qq$-characters. Moreover, the free field realization of the contour integrals introduced here gives the contour integral expression of the DT $qq$-character. In this section, we will also discuss on how to dynamically generate the boundary conditions from the vertex operator viewpoint.

In section~\ref{sec:D4_qq} and \ref{sec:D6_qq}, we derive the DT $qq$-characters of the D4 and D6 setup by using the commutativity with screening charges. At the end we will see that the commutativity will uniquely determine the D4 and D6 DT $qq$-characters similar to how it was done in \cite{Kimura:2023bxy}. We will also see that the vacuum expectation value reproduces the partition functions in section~\ref{sec:partitionfunct}, which establishes the BPS/CFT correspondence. 

Section~\ref{sec:D8_qq} explores the generalization to the D8 setup. We first revisit the D8 $qq$-characters with no boundary conditions introduced in \cite{Kimura:2023bxy} by studying the infinite products of D6 $qq$-characters and give a quantum algebraic proof of the sign issues in section~\ref{sec:fusionD6toD8}. This is a new aspect which was not treated properly in our previous work. At the end, we will see that the infinite products will reproduce the magnificent four partition function \emph{including} the sign rules. We will also see that the D8 $qq$-characters with trivial boundary conditions defined in this way all commute with each other in section~\ref{sec:D8qqsignrule}. Moreover, the commutativity of them determines the sign rules uniquely. We will also show that one can obtain the D6 $qq$-characters from the D8 $qq$-character by tuning the parameter controlling the distance between the D8 and anti D8-branes. Therefore, at the end, we will see that the D6 and D8 $qq$-characters with trivial boundary conditions all commute with each other. We also show that this phenomenon is related with the fact that the magnificent four partition function and tetrahedron instanton partition functions do not depend on the Coulomb branch parameters and have a nice plethystic exponential formula. In sections~\ref{sec:D8qqlegboundary}, \ref{sec:D8qqsurfaceboundary}, \ref{sec:D8qqhypersurfaceboundary}, we explicitly construct the D8 DT $qq$-characters with nontrivial boundary conditions. Finally, in section~\ref{sec:conclusion}, we give a conclusion and discuss on future directions.

\section{Partition functions}\label{sec:partitionfunct}
In this section, we review the partition functions when the multi-dimensional partitions have nontrivial boundary conditions and have infinite number of boxes. We follow the notation of \cite{Kimura:2023bxy,Nekrasov:2023nai} and a summary is given in Appendix~\ref{app:notations}.

\subsection{D4 partition functions}\label{sec:D4partitionfunction}
\paragraph{Spiked instantons and finite Young diagrams}
Let us review the partition functions arising from the spiked instanton setup \cite{Nekrasov:2016gud,Nekrasov:2016qym,Nekrasov:2016ydq,Nekrasov:2015wsu}. The spiked instanton partition function comes from the following characters (see Appendix~\ref{app:notations} for the notations)
\bea\label{eq:D4chexpand}
\mathbf{V}=\frac{\mathbf{Y}^{\vee}\bfY}{\bfP_{\four}},\quad \bfY=\sum_{A\in\six}\bfP_{\bar{A}}\bfY_{A},\quad \bfY_{A}=\bfN_{A}-\bfP_{A}\bfK_{A},\quad \bfN_{A}=\sum_{\alpha=1}^{n_{A}}v_{A,\alpha},\quad \bfK_{A}=\sum_{I=1}^{k_{A}}x_{A,I}
\eea
for $A\in\six$. Here $\bfN_{A}$ represents the character of the framing bundle coming from the $\D4_{A}$-branes and $\bfK_{A}$ represents the character of the instanton bundle coming from the D0-branes attached to the $\D4_{A}$-branes. For later use, we also introduce the character of the total instanton bundle as $\bfK=\sum_{A\in\six}\bfK_{A}=\sum_{I=1}^{k}x_{I}$ and then obtain
\bea
 \bfY=\sum_{A\in\six}\bfP_{\bar{A}}\bfN_{A}-\bfK,\quad k=\sum_{A\in\six}k_{A}.
\eea
Expanding it, we have
\bea
\mathbf{V}&=\mathbf{V}_{\text{pert.}}+\mathbf{V}_{\text{inst.}},\quad\mathbf{V}_{\text{pert.}}=\sum_{A,B\in\six}\bfN_{A}^{\vee}\frac{\bfP_{\bar{A}}^{\vee}\bfP_{\bar{B}}}{\bfP_{\four}}\bfN_{B},\\
\mathbf{V}_{\text{inst.}}&=-\sum_{A\in\six}\bfP_{\bar{A}}^{\vee}\bfN_{A}^{\vee}\bfK-\sum_{A\in\six}\bfP_{\bar{A}}\bfN_{A}\bfK^{\vee}+\bfP_{\four}\bfK^{\vee}\bfK\\
&=\sum_{A,B\in\six}\left(-\bfN_{A}^{\vee}\bfP_{\bar{A}}^{\vee}\bfK_{B}-\bfK_{A}^{\vee}\bfP_{\bar{B}}\bfN_{B}+\bfP_{\four}\bfK_{A}^{\vee}\bfK_{B}\right).
\eea
To obtain the instanton partition function, we need to extract the square root part, which is identified with the (minus of) tangent bundle of the corresponding moduli space,
\beq
\mathbf{V}_{\text{inst.}}=\mathbf{v}_{\text{inst.}}+\mathbf{v}^{\vee}_{\text{inst.}},\quad 
\mathbf{v}_{\text{inst.}}=-\sum_{A\in\six}\bfP_{\bar{A}}^{\vee}\bfN_{A}^{\vee}\bfK+\sqrt{\bfP_{\four}\bfK^{\vee}\bfK},
\eeq
where we denoted the square root part of $\bfP_{\four}\bfK^{\vee}\bfK$ as $\sqrt{\bfP_{\four}\bfK^{\vee}\bfK}$. This square root part depends on which D4-brane the instanton is attached to. For the explicit formula of this square root part, see for example \cite[eq.~(3.5.5)]{Kimura:2023bxy}.

The contour integral formula is schematically given as
\beq
\mathcal{Z}_{k}^{\D4}=\frac{1}{k!}\oint\prod_{I=1}^{k}\frac{dx_{I}}{2\pi\iota x_{I}}\mathbb{I}'\left[\mathbf{v}_{\text{inst.}}\right],
\eeq
where $\mathbb{I}'$ means we extracted the divergent collision terms and $\iota=\sqrt{-1}$.\footnote{The contour integral formula is given by taking the index of the character $\mathbf{v}_{\text{inst.}}$. The collision term is understood as $\oint\prod_{I=1}^{k}\frac{dx_{I}}{2\pi\iota x_{I}}$:
\bea
\mathcal{Z}_{k}^{\D4}=\mathbb{I}[\mathbf{v}_{\text{inst.}}]=\frac{1}{k!}\oint\prod_{I=1}^{k}\frac{dx_{I}}{2\pi\iota x_{I}} \mathbb{I}'[\mathbf{v}_{\text{inst.}}],
\eea
where $\mathbb{I}'$ is understood as $\mathbb{I}[\mathbf{v}_{\text{inst.}}-k]$ and the unmovable terms (see Appendix~\ref{app:notations} for the definitions) are omitted.} The contour integral formula will then be
\bea
\mathcal{Z}_{k}^{\D4}=\frac{\mathcal{G}^{k}}{k!}\oint \prod_{I=1}^{k}\frac{dx_{I}}{2\pi\iota x_{I}}\prod_{A\in\six}\prod_{\alpha=1}^{n_{A}}\prod_{I=1}^{k}\mathscr{S}_{\bar{A}}\left(\frac{v_{A,\alpha}}{x_{I}}\right)\prod_{I<J}\mathcal{A}_{\mathbb{C}^{4}}\left(\frac{x_{I}}{x_{J}}\right)^{-1}
\eea
where
\bea
\mathcal{G}=-q_{4}\frac{(1-q_{12})(1-q_{13})(1-q_{23})}{(1-q_{1})(1-q_{2})(1-q_{3})(1-q_{4})}.
\eea
Note that the factor $\mathcal{G}$ is invariant under permutation of the equivariant parameters $q_{1,2,3,4}$.

After evaluating the contour integral formula, the poles are classified by \textit{finite} 2d partitions (Young diagrams) and we obtain 
\bea
\mathbf{K}_{A}|_{\vec{\lambda}_{A}}=\sum_{\alpha=1}^{n_{A}}\sum_{\Abox\in\lambda_{A}^{(\alpha)}}\chi_{A,v_{A,\alpha}}(\Bbox),\quad A\in\six,
\eea
where
\bea
\chi_{A,x}(\Bbox)=xq_{a}^{i-1}q_{b}^{j-1},\quad A=(ab)\in\six.
\eea
The partition function will then be a sum of arbitrary finite Young diagrams:
\bea
\mathcal{Z}^{\D4}_{\text{inst.}}=\sum_{\underline{\vec{\lambda}}}\mathfrak{q}^{|\underline{\vec{\lambda}}|}\mathcal{Z}^{\D4}_{\text{spk.inst.}}[\underline{\vec{v}},\underline{\vec{\lambda}}].
\eea

\paragraph{Young diagrams with infinite size}We denote a Young diagram with infinite size as $\tilde{\lambda}$. The Young diagram $\tilde{\lambda}$ will look like
\begin{equation}\label{eq:2dpartition}
        \adjustbox{valign=c}{\begin{tikzpicture}[scale=0.7]
    
        \fill[white!50!gray] (0,0)--(0,4.3)--(0.6,4.3)--(0.6,0)--(0,0);
        \fill[white!50!gray] (0,0)--(4.2,0)--(4.2,0.8)--(0,0.8)--(0,0);
        \fill[white!50!gray] (0,0)--(0.6,0)--(0.6,0.8)--(0,0.8)--(0,0);
        \draw [decorate,decoration = {mirror,brace}] (0.02,-0.05)--(0.58,-0.05);
        \draw [decorate,decoration = {brace}] (-0.05,0.02)--(-0.05,0.78);
        \node [left] at (-0.05,0.4){$k$};
        \node[below] at (0.3,-0.05){$l$};
        
        \draw[->] (0,-0.5)--(0,4.4);
        \draw[->] (-0.5,0)--(4.3,0);
        \node[right] at (4.3,0){$1$};
        \node[above] at (0,4.4) {$2$};

        \draw (0.2,4.28)--(0.2,0);
        \draw (0.4,4.28)--(0.4,0);
        \draw (0.6,4.28)--(0.6,0);

        \draw (0,0.2)--(4.2,0.2);
        \draw (0,0.4)--(4.2,0.4);
        \draw (0,0.6)--(4.2,0.6);
        \draw (0,0.8)--(4.2,0.8);
        \draw[thick] (0.6,0.8)--(3.8,0.8);
        \draw [thick] (3.8,0.8)--+(0,0.6)--+(-0.6,0.6)--+(-0.6,1.2)--+(-1.2,1.2)--+(-1.2,1.8)--+(-1.8,1.8)--+(-1.8,2.4)--+(-2.4,2.4)--+(-2.4,3.0)--+(-3.2,3.0);
        \fill[red] (0.6,0.8)--(0.8,0.8)--(0.8,1.0)--(0.6,1.0)--(0.6,0.8);
        \draw[thick] (0.6,3.8)--(0.6,0.8);
        \draw[thick] (1.8,1.8)--(3.5,3.5);
        \node[scale=1.5,thick,right] at (3.5,3.5){$\lambda_{\reg}$};
        \draw (0.1,3.4)--(-0.4,3.8);
        \node[scale=1.5,left] at (-0.4,3.8){$\lambda_{\bd}$};
        \end{tikzpicture}
        }
\end{equation}
Namely, we have nontrivial boundary conditions at the two legs of the Young diagram. The conditions are written as
\bea\label{eq:2dpartitionbd}
\tilde{\lambda}_{i}=\infty,\quad (i=1,\ldots, l),\quad \tilde{\lambda}^{\rmT}_{j}=\infty,\quad (j=1,\ldots, k),
\eea
where $\tilde{\lambda}_{i}$ denotes the number of boxes in the 2-axis and $\tilde{\lambda}^{\rmT}$ is the transpose of it. We can decompose $\tilde{\lambda}$ into two parts $\lambda_{\bd}$ and $\lambda_{\reg}$ as \eqref{eq:2dpartition}. The part $\lambda_{\bd}$ is the infinite size part of the Young diagram determined by two parameters $k,l\in\mathbb{Z}_{\geq 0}$ and it will be called the boundary contributions. Obviously, once the boundary condition is fixed, the finite size part $\lambda_{\reg}$ determines the possible Young diagrams with infinite size obeying the conditions \eqref{eq:2dpartitionbd} and it is just a Young diagram with finite size whose origin is shifted from $(1,1)$ to $(l+1,k+1)$.

\paragraph{D4 partition functions with boundary conditions}To consider partition functions where the appearing Young diagrams have nontrivial boundary conditions, we can first formally decompose the character as
\bea
\bfK=\bfK^{\text{bd}}+\bfK^{\text{reg}},\quad \bfK_{A}=\bfK_{A}^{\text{bd}}+\bfK_{A}^{\text{reg}},\quad \bfK^{\text{bd,reg}}=\sum_{A\in\six}\bfK^{\text{bd,reg}}_{A}.
\eea
Inserting this expansion, we have
\bea
\bfV&=\bfV_{\text{pert.}}+\bfV_{\text{bd.}}+\bfV_{\text{inst.}},\quad \bfV_{\text{bd.}}=-\sum_{A\in\six}\bfP_{\bar{A}}^{\vee}\bfN_{A}^{\vee}\bfK^{\bd}-\sum_{A\in\six}\bfP_{\bar{A}}\bfN_{A}\bfK^{\bd}+\bfP_{\four}\bfK^{\bd\vee}\bfK^{\bd},\\
\bfV_{\text{inst.}}&=-\sum_{A\in\six}\bfP_{\bar{A}}^{\vee}\bfN_{A}^{\vee}\bfK^{\reg}-\sum_{A\in\six}\bfP_{\bar{A}}\bfN_{A}\bfK^{\reg\vee}+\bfP_{\four}\bfK^{\bd\vee}\bfK^{\reg}+\bfP_{\four}\bfK^{\reg\vee}\bfK^{\bd}+\bfP_{\four}\bfK^{\reg\vee}\bfK^{\reg}.
\eea
The square root part is 
\bea
\mathbf{v}_{\text{inst.}}&=-\sum_{A\in\six}\bfP_{\bar{A}}^{\vee}\bfN_{A}^{\vee}\bfK^{\reg}+\bfP_{\four}\bfK^{\bd\vee}\bfK^{\reg}+\sqrt{\bfP_{\four}\bfK^{\reg\vee}\bfK^{\reg}}\\
&=-\sum_{A\in\six}\bfP_{\bar{A}}^{\vee}\bfN_{A}^{\reg\vee}\bfK^{\reg}+\sqrt{\bfP_{\four}\bfK^{\reg\vee}\bfK^{\reg}}.
\eea

Let us focus on when there is only one $\D4_{12}$-brane:
\bea
\mathbf{v}_{\text{inst.}}=-\bfP_{34}^{\vee}\bfN_{12}^{\vee}\bfK^{\reg}_{12}+\bfP_{\four}\bfK^{\bd\vee}\bfK^{\reg}+\bfP_{123}^{\vee}\bfK_{12}^{\reg\vee}\bfK_{12}^{\reg}
\eea
where
\bea
\bfN_{12}=x,\quad \bfK^{\bd}_{12}=\sum_{\Abox\in\lambda_{\bd}}\chi_{12,x}(\Bbox),\quad \bfK^{\reg}_{12}=\sum_{I=1}^{k}x_{I}.
\eea
The character $\bfK_{12}^{\bd}$ can is written explicitly as
\bea
\bfK_{12}^{\bd}&=\sum_{i=1}^{\infty}\sum_{j=1}^{k}xq_{1}^{i-1}q_{2}^{j-1}+\sum_{j=1}^{\infty}\sum_{i=1}^{l}xq_{2}^{k}q_{2}^{j-1}q_{1}^{i-1}.
\eea
The boundary contribution can be further computed as
\bea\label{eq:D4boundarych}
\bfK_{12}^{\bd}=\frac{(1-q_{2}^{k}q_{1}^{l})x}{(1-q_{1})(1-q_{2})}.
\eea
Inserting this to the character, we have
\bea
\mathbf{v}_{\text{inst.}}&=-\bfP_{34}^{\vee}\bfN_{12}^{\vee}\bfK_{12}^{\reg}+\bfP_{\four}\left(\frac{(1-q_{2}^{k}q_{1}^{l})x}{\bfP_{12}}\right)^{\vee}\bfK^{\reg}+\bfP_{123}^{\vee}\bfK_{12}^{\reg\vee}\bfK_{12}^{\reg}\\
&=-\bfP_{34}^{\vee}\bfN_{12}^{\reg\vee}\bfK_{12}^{\reg}+\bfP_{123}^{\vee}\bfK_{12}^{\reg\vee}\bfK_{12}^{\reg}
\eea
where $\bfN_{12}^{\reg}=q_{2}^{k}q_{1}^{l}x$. The contour integral formula is then written as
\bea
\mathcal{Z}_{k}= \frac{\mathcal{G}^{k}}{k!}\oint\prod_{I=1}^{k}\frac{dx_{I}}{2\pi\iota x_{I}}\prod_{I=1}^{k}\mathscr{S}_{34}\left(\frac{q_{2}^{k}q_{1}^{l}x}{x_{I}}\right)\prod_{I<J}\mathcal{A}_{\mathbb{C}^{4}}\left(\frac{x_{I}}{x_{J}}\right)^{-1}.
\eea
The nontrivial boundary condition only shifts the Coulomb branch parameter $x\rightarrow q_{2}^{k}q_{1}^{l}x$ and thus the poles are simply classified by finite Young diagrams $\lambda_{\reg}$ as
\bea
\bfK^{\reg}_{12}|_{\lambda^{\reg}}=\sum_{\Abox\in\lambda_{\reg}}\chi_{12,q_{2}^{k}q_{1}^{l}x}(\Bbox)
\eea
 which is also consistent with \eqref{eq:2dpartitionbd}. The partition function is just 
 \bea\label{eq:D4-partition}
\mathcal{Z}^{\D4}_{\text{inst.}}=\sum_{\lambda_{\reg}}\mathfrak{q}^{|\lambda_{\reg}|}\widetilde{\mathcal{Z}}^{\D4}_{12}[\lambda_{\reg}],\quad\widetilde{\mathcal{Z}}^{\D4}_{12}[\lambda_{\text{reg}}]&=\mathbb{I}\left[-\bfP_{34}^{\vee}\bfN_{12}^{\reg\vee}\bfK_{12}^{\reg}+\bfP_{123}^{\vee}\bfK_{12}^{\reg\vee}\bfK_{12}^{\reg}\right].
 \eea
Generalizations to the spiked instanton case is straightforward.

\subsection{D6 partition functions}\label{sec:D6partitionfunction}

\paragraph{Tetrahedron instantons and finite plane partitions}Let us review the partition function of the tetrahedron instantons \cite{Pomoni:2021hkn, Pomoni:2023nlf,Fasola:2023ypx}. The partition function comes from the following character
\bea
\mathbf{V}=\frac{\bfY^{\vee}\bfY}{\bfP_{\four}},\quad \bfY=\sum_{a\in\four}\bfP_{a}\bfY_{\bar{a}},\quad \bfY_{\bar{a}}=\bfN_{\bar{a}}-\bfP_{\bar{a}}\bfK_{\bar{a}},\quad a\in\four,
\eea
where
\bea
\bfN_{\bar{a}}=\sum_{\alpha=1}^{n_{\bar{a}}}v_{\bar{a},\alpha},\quad \bfK_{\bar{a}}=\sum_{I=1}^{k_{\bar{a}}}x_{\bar{a},I},\quad a\in\four.
\eea
Similar to the spiked instanton case, we introduce the character of the total instanton bundle as
\bea
\bfK=\sum_{a\in\four}\bfK_{\bar{a}}=\sum_{I=1}^{k}x_{I},\quad \bfY=\sum_{a\in\four}\bfP_{a}\bfN_{\bar{a}}-\bfK,\quad k=\sum_{a\in\four}k_{\bar{a}}.
\eea
Expanding $\mathbf{V}$, we have
\bea
\mathbf{V}&=\bfV_{\text{pert.}}+\mathbf{V}_{\text{inst.}},\quad \bfV_{\text{pert.}}=\sum_{a,b\in\four}\bfN_{\bar{a}}^{\vee}\frac{\bfP^{\vee}_{a}\bfP_{b}}{\bfP_{\four}}\bfN_{\bar{b}},\\
\mathbf{V}_{\text{inst.}}&=-\sum_{a\in\four}\bfP_{a}^{\vee}\bfN_{\bar{a}}^{\vee}\bfK-\sum_{a\in\four}\bfP_{a}\bfN_{\bar{a}}\bfK^{\vee}+\bfP_{\four}\bfK^{\vee}\bfK\\
&=\sum_{a,b\in\four}\left(-\bfN_{\bar{a}}^{\vee}\bfP_{a}^{\vee}\bfK_{\bar{b}}-\bfK_{\bar{a}}^{\vee}\bfP_{b}\bfN_{\bar{b}}+\bfP_{\four}\bfK_{\bar{a}}^{\vee}\bfK_{\bar{b}}\right).
\eea
The instanton partition function comes from the square root part as
\bea
\mathbf{V}_{\text{inst.}}&=\mathbf{v}_{\text{inst.}}+\mathbf{v}^{\vee}_{\text{inst.}},\quad 
\mathbf{v}_{\text{inst.}}=-\sum_{a\in\four}\bfP_{a}^{\vee}\bfN_{\bar{a}}^{\vee}\bfK+\sqrt{\bfP_{\four}\bfK^{\vee}\bfK}.
\eea
See \cite[eq.~(3.4.2)]{Kimura:2023bxy} for the explicit choice of the square root part $\sqrt{\bfP_{\four}\bfK^{\vee}\bfK}$. The contour integral formula is then 
\bea
\mathcal{Z}_{k}^{\D6}= \frac{\mathcal{G}^{k}}{k!}\oint \prod_{I=1}^{k}\frac{dx_{I}}{2\pi\iota x_{I}}\prod_{a\in\four}\prod_{I=1}^{k}\mathscr{V}_{a}\left(\frac{v_{\bar{a},\alpha}}{x_{I}}\right)\prod_{I<J}\mathcal{A}_{\mathbb{C}^{4}}\left(\frac{x_{I}}{x_{J}}\right)^{-1}.
\eea
The poles are classified by a \textit{finite} plane partition, which is a sequence of non-negative integers obeying
\bea\label{eq:planepartitioncond}
\pi=\{\pi_{i,j}\in\mathbb{Z}_{\geq 0}\},\quad \pi_{i,j}\geq \pi_{i+1,j},\quad \pi_{i,j}\geq \pi_{i,j+1}.
\eea
We then have
\bea
\bfK_{\bar{a}}|_{\vec{\pi}_{\bar{a}}}=\sum_{\alpha=1}^{n_{\bar{a}}}\sum_{\scube\in\pi_{\bar{a}}^{(\alpha)}}\chi_{\bar{a},v_{\bar{a},\alpha}}(\cube),\quad \chi_{\bar{a},x}(\cube)=xq_{b}^{i-1}q_{c}^{j-1}q_{d}^{k-1},\quad \bar{a}=(bcd)
\eea
which gives the tetrahedron instanton partition function
\bea
\mathcal{Z}^{\D6}_{\text{inst.}}=\sum_{\underline{\vec{\pi}}}\mathfrak{q}^{|\underline{\vec{\pi}}|}\mathcal{Z}^{\D6}_{\text{tet.inst.}}[\underline{\vec{v}},\underline{\vec{\pi}}],\quad \mathcal{Z}^{\D6}_{\text{tet.inst.}}[\underline{\vec{v}},\underline{\vec{\pi}}]=\mathbb{I}[\mathbf{v}_{\text{inst.}}|_{\underline{\vec{\pi}}}].
\eea

\paragraph{Plane partitions with infinite size} We denote a plane partition with infinite size as $\tilde{\pi}$. We have the following two cases for such kind of plane partitions:
\bea\label{eq:D6planepartition}
\adjustbox{valign=c}{\includegraphics[width=6cm]{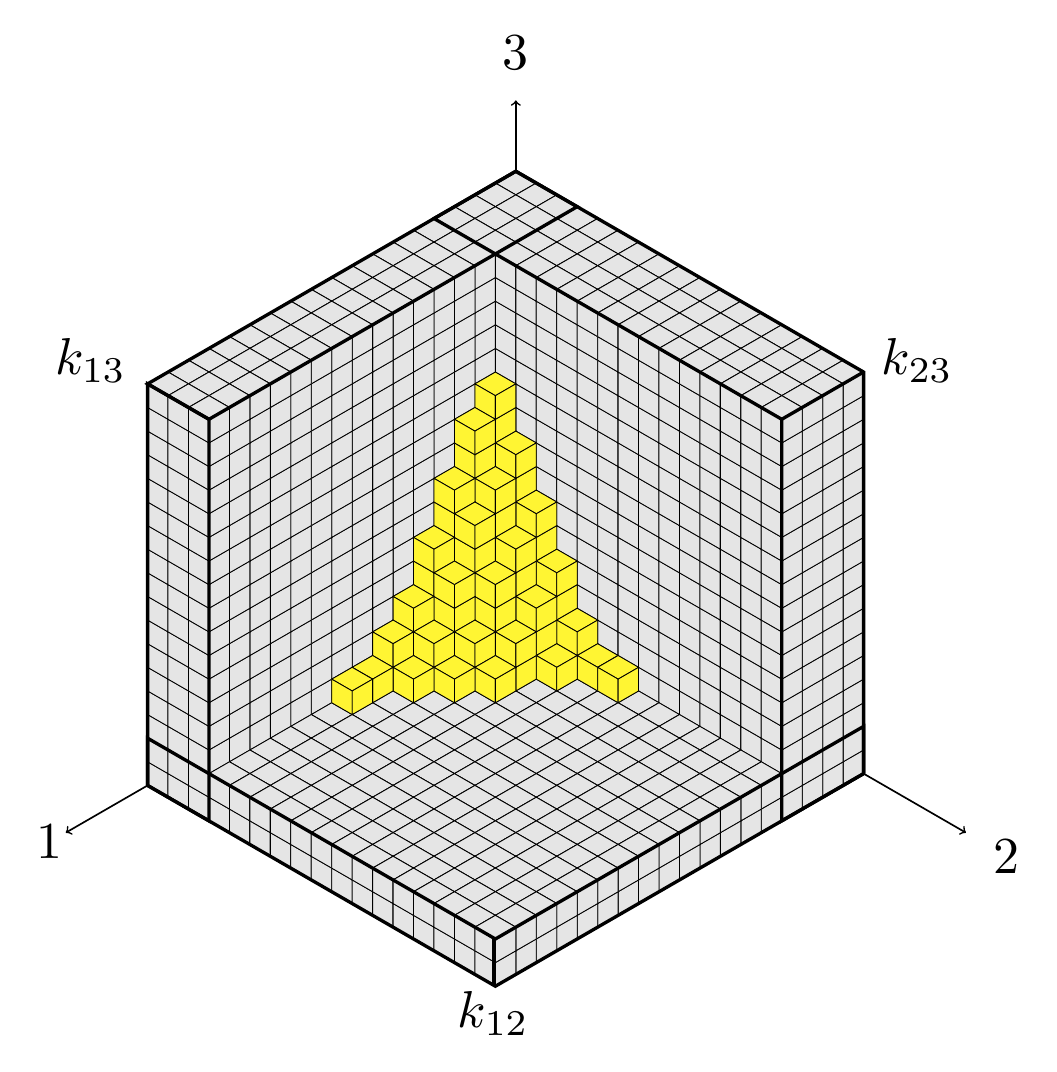}}\quad \adjustbox{valign=c}{\includegraphics[width=6cm]{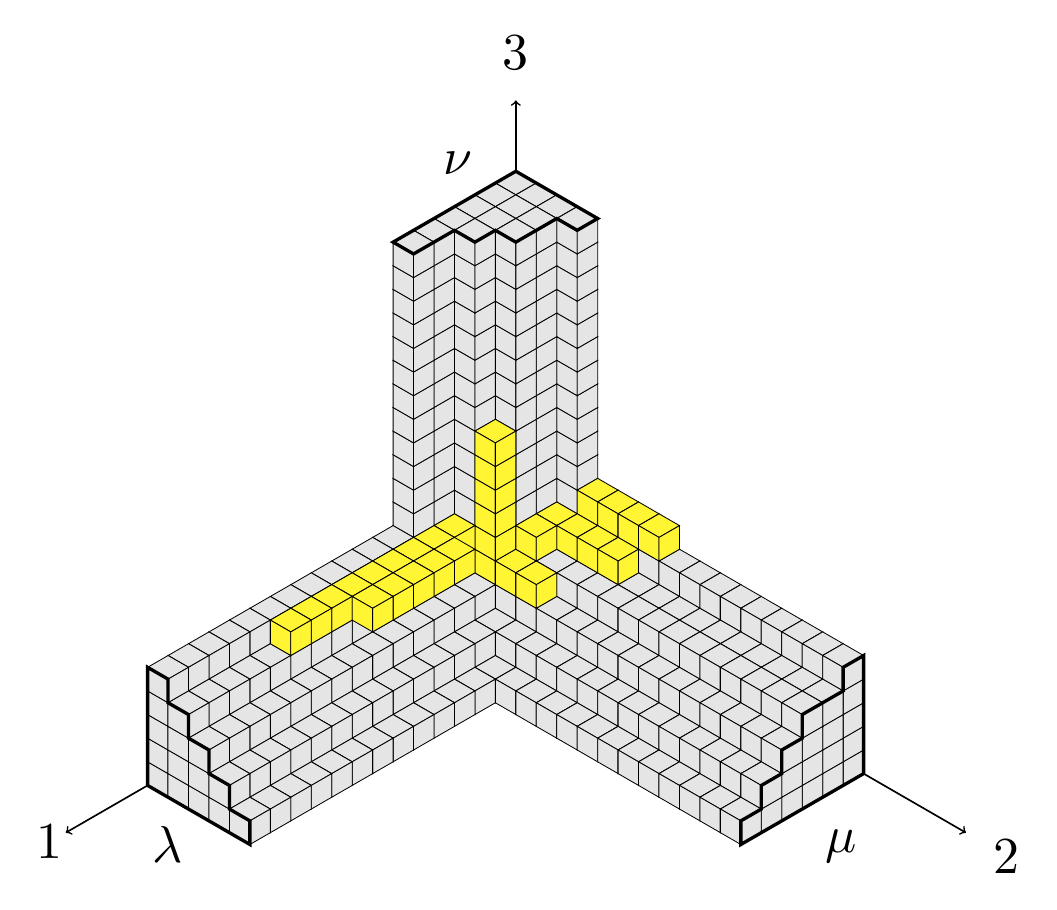}}
\eea
The left plane partition is a plane partition where we have asymptotic 1d partitions denoted as $k_{13},k_{23},k_{12}$, while the right plane partition is a plane partition where we have asymptotic Young diagrams denoted as $\lambda,\mu,\nu$. Generally, we can also combine these two boundary conditions, but in this paper we will only focus when the boundary conditions are either the above cases though the generalizations are straightforward. Note that the plane partition $\tilde{\pi}$ is the set of both the gray boxes and the yellow boxes. We can decompose the plane partition $\tilde{\pi}$ into two parts: the gray boxes and the yellow boxes. We denote them $\pi_{\bd},\pi_{\reg}$ respectively. 

The boundary conditions of the left plane partition are written as
\bea\label{eq:planesurfaceboundcond}
&\tilde{\pi}_{i,j}=k_{12},\quad i\rightarrow \infty,\,\,j\rightarrow \infty,\\
&\tilde{\pi}_{i,j}=\infty,\quad i=1,\ldots,k_{23},\,\,j\rightarrow \infty,\\
&\tilde{\pi}_{i,j}=\infty,\quad i\rightarrow \infty,\,\,j=1,\ldots,k_{13}.
\eea
On the other hand, the boundary conditions of the right plane partition are written as
\bea\label{eq:planelegboundcond}
&\tilde{\pi}_{i,j}=\infty,\quad  i=1,2,\ldots,\,\,\, j=1,\ldots,\nu_{i},\\
&\tilde{\pi}_{i,j}=\lambda_{j},\quad i\rightarrow \infty,\\
&\tilde{\pi}_{i,j}=\mu_{i}^{\rmT},\quad j\rightarrow \infty.
\eea
The $\tilde{\pi}_{i,j}$ represents the number of boxes in the third direction and $(i,j)\in \mathbb{Z}_{>0}$ is the coordinate of the 12-plane. We also chose a natural orientation for the boundary Young diagrams. 

Given a plane partition $\tilde{\pi}$, we may consider boxes which can be added to or removed from the plane partition without breaking the plane partition condition \eqref{eq:planepartitioncond} and the boundary conditions \eqref{eq:planesurfaceboundcond}, \eqref{eq:planelegboundcond}. The set of these addable boxes and removable boxes are represented by $A(\tilde{\pi}),R(\tilde{\pi})$ respectively. Since the boxes in $\pi_{\bd}$ are fixed depending on the boundary conditions, the addable/removable boxes are yellow boxes. To make the notation simple, we write $A(\pi_{\reg})\coloneqq A(\tilde{\pi}),R(\pi_{\reg})\coloneqq R(\tilde{\pi})$ but one has to be careful that whether a box is addable or removable depends on the whole plane partition~$\tilde{\pi}$.

Let $\mathcal{PP}$ denote the set of arbitrary finite plane partitions. We also denote $\mathcal{PP}_{k_{23},k_{13},k_{12}}$ the possible plane partitions of the left hand side with fixed $k_{23},k_{13},k_{12}$. Obviously, the set of the yellow boxes of the left hand side is just a normal plane partition with finite size but the origin of the plane partition is only shifted. Therefore, we have the following equivalency
\bea
\mathcal{PP}_{k_{23},k_{13},k_{12}}=\mathcal{PP}.
\eea
For the right hand side, we denote the set of all possible plane partitions with fixed boundary Young diagrams $\lambda,\mu,\nu$ as $\mathcal{PP}_{\lambda\mu\nu}$. For later use, let $\mathcal{B}_{\lambda\mu\nu}$ denote the set of boxes of $\pi_{\bd}$.

\paragraph{D6 partition functions with boundary conditions}
Similar to the spiked instanton case, we formally decompose the character $\bfK$ into
\bea
\bfK=\bfK^{\bd}+\bfK^{\reg},\quad \bfK_{\bar{a}}=\bfK_{\bar{a}}^{\bd}+\bfK_{\bar{a}}^{\reg},\quad \bfK^{\bd,\reg}=\sum_{a\in\four}\bfK_{\bar{a}}^{\bd,\reg}.
\eea
We then have
\bea
\bfV&=\bfV_{\text{pert.}}+\bfV_{\text{bd.}}+\bfV_{\text{inst.}},\quad \bfV_{\text{bd.}}=-\sum_{a\in\four}\bfP_{a}^{\vee}\bfN_{\bar{a}}^{\vee}\bfK^{\bd}-\sum_{a\in\four}\bfP_{a}\bfN_{\bar{a}}\bfK^{\bd\vee}+\bfP_{\four}\bfK^{\bd\vee}\bfK^{\bd},\\
\bfV_{\text{inst.}}&=-\sum_{a\in\four}\bfP_{a}^{\vee}\bfN_{\bar{a}}^{\vee}\bfK^{\reg}-\sum_{a\in\four}\bfP_{a}\bfN_{\bar{a}}\bfK^{\reg\vee}+\bfP_{\four}\bfK^{\bd\vee}\bfK^{\reg}+\bfP_{\four}\bfK^{\reg\vee}\bfK^{\bd}+\bfP_{\four}\bfK^{\reg\vee}\bfK^{\reg}.
\eea
The square roots of $\mathbf{V}_{\bd.}$ and $\mathbf{V}_{\text{inst.}}$ are
\bea\label{eq:D6boundarycharactergeneral}
\mathbf{v}_{\bd.}&=-\sum_{a\in\four}\bfP_{a}^{\vee}\bfN_{\bar{a}}^{\vee}\bfK^{\bd}+\sqrt{\bfP_{\four}\bfK^{\bd\vee}\bfK^{\bd}},\\
\mathbf{v}_{\text{inst.}}&=-\sum_{a\in\four}\bfP_{a}^{\vee}\bfN_{\bar{a}}^{\vee}\bfK^{\reg}+\bfP_{\four}\bfK^{\bd\vee}\bfK^{\reg}+\sqrt{\bfP_{\four}\bfK^{\reg\vee}\bfK^{\reg}}.
\eea
The tetrahedron instanton partition function is then obtained by taking the index of $\mathbf{v}_{\text{inst.}}$. In this paper, we will only focus on the contributions coming from the $\mathbf{v}_{\text{inst.}}$ part. To explicitly write down the partition function, we need to specify the explicit boundary contributions $\bfK^{\bd}$. The contour integral formula is formally given as
\bea
\mathcal{Z}_{k}=\frac{1}{k!}\oint \prod_{I=1}^{k}\frac{dx_{I}}{2\pi\iota x_{I}}\mathbb{I}'[\mathbf{v}_{\text{inst.}}].
\eea

Let us focus when there is only one D6$_{123}$-brane. The partition function is given as
\bea
\mathbf{v}_{\text{inst.}}=-\bfP_{4}^{\vee}\bfN_{\bar{4}}^{\vee}\bfK_{\bar{4}}^{\reg}+\bfP_{\four}\bfK_{\bar{4}}^{\bd\vee}\bfK_{\bar{4}}^{\reg}+\bfP_{123}^{\vee}\bfK_{\bar{4}}^{\reg\vee}\bfK_{\bar{4}}^{\reg}
\eea
where
\bea
\bfN_{\bar{4}}=x,\quad \bfK_{\bar{4}}^{\bd}=\sum_{\scube\in\pi_{\bd}}\chi_{\bar{4},x}(\cube).
\eea
When $\pi_{\bd}=\emptyset$ and $\bfK^{\bd}=0$, we simply obtain the U(1) partition function of the 7d theory on the $\D6_{123}$-brane compactified on the circle $\mathbb{S}^1$. The boundary contributions $\bfK_{\bar{4}}^{\bd}$ comes from the gray boxes in \eqref{eq:D6planepartition}. After deriving the contour integral formula and evaluating the residues, $\mathbf{K}^{\reg}_{\bar{4}}$ will eventually be the character of $\pi_{\reg}$, which is the set of yellow boxes in \eqref{eq:D6planepartition}. The partition function is then defined as
\bea
\mathcal{Z}=\sum_{\pi_{\reg}}\mathfrak{q}^{|\pi_{\reg}|}\,\mathbb{I}[\mathbf{v}_{\text{inst.}}].
\eea
Note here that the topological term counts the boxes of the \textit{yellow boxes} but not the boxes of the full plane partition $\tilde{\pi}$. Let us show this explicitly for the two boundary conditions.

\paragraph{Surface boundary condition} We call the boundary condition when the asymptotics of the three surfaces $12,23,31$-planes are specified, the \textit{surface boundary conditions}. In this case, we can explicitly compute the characters $\bfK_{\bar{4}}^{\bd},\bfK_{\bar{4}}^{\reg}$. For the boundary contributions, the result is
\bea
\bfK_{\bar{4}}^{\bd}=\frac{x}{\bfP_{123}}(1-q_{1}^{k_{23}}q_{2}^{k_{13}}q_{3}^{k_{12}}).
\eea
We then have
\bea
\mathbf{v}_{\text{inst.}}=-\bfP_{4}^{\vee}\bfN_{\bar{4}}^{\reg\vee}\bfK_{\bar{4}}^{\reg}+\bfP_{123}^{\vee}\bfK_{\bar{4}}^{\reg}\bfK_{\bar{4}}^{\reg},\quad \bfN_{\bar{4}}^{\reg}=xq_{1}^{k_{23}}q_{2}^{k_{13}}q_{3}^{k_{12}}.
\eea
The computation here is similar to what happened in the D4 case of section~\ref{sec:D4partitionfunction}. 

The contour integral is then given as
\bea\label{eq:D6surfacecontourintegral}
\mathcal{Z}_{k}=\frac{\mathcal{G}^{k}}{k!}\oint \frac{dx_{I}}{2\pi\iota x_{I}}\prod_{I=1}^{k}\mathscr{V}_{4}\left(\frac{xq_{1}^{k_{23}}q_{2}^{k_{13}}q_{3}^{k_{12}}}{x_{I}}\right)\prod_{I<J}\mathcal{A}_{\mathbb{C}^{4}}\left(\frac{x_{I}}{x_{J}}\right)^{-1}
\eea
which is just the same contour integral for the usual D6 partition function but only the Coulomb branch parameter $x$ is shifted to $xq_{1}^{k_{23}}q_{2}^{k_{13}}q_{3}^{k_{12}}$. The poles are classified with finite plane partitions which the origin at $xq_{1}^{k_{23}}q_{2}^{k_{13}}q_{3}^{k_{12}}$. Namely, the yellow boxes in the left figure of \eqref{eq:D6planepartition} contributes to $\bfK_{\bar{4}}^{\reg}$ for this case:
\bea
\bfK_{\bar{4}}^{\reg}=\sum_{\scube\in\pi_{\reg}}\chi_{\bar{4},q_{1}^{k_{23}}q_{2}^{k_{13}}q_{3}^{k_{12}}x}(\cube).
\eea
Therefore, the partition function is then given as
\bea
\mathcal{Z}&=\sum_{\pi_{\reg}\in\mathcal{PP}}\mathfrak{q}^{|\pi_{\reg}|}\widetilde{\mathcal{Z}}_{\bar{4}}^{\D6}[\pi_{\reg}],\quad \widetilde{\mathcal{Z}}^{\D6}_{\bar{4}}[\pi_{\reg}]=\mathbb{I}\left[-\bfP_{4}^{\vee}\bfK_{\bar{4}}^{\reg}\bfN_{\bar{4}}^{\reg\vee}+\bfP_{123}^{\vee}\bfK_{\bar{4}}^{\reg\vee}\bfK_{\bar{4}}^{\reg}\right]
\eea
and $\widetilde{\mathcal{Z}}^{\D6}_{\bar{4}}[\pi_{\reg}]$ is just the same 7d U(1) partition function.

\paragraph{Leg boundary conditions}We call the boundary condition when the asymptotics of the three axes are specified, the \textit{leg boundary conditions}. In this case,
\bea
\bfK_{\bar{4}}^{\bd}&=\sum_{\scube\in\pi_{\bd}}\chi_{\bar{4},x}(\cube)\eqqcolon\bfN_{\bar{4},\lambda\mu\nu},
\eea
where we denoted the asymptotic Young diagrams explicitly. 
Writing down the explicit formula for this character is difficult because there are boxes at the intersection of the three legs. Let us introduce the following set of boxes
\bea
\mathcal{B}_{\lambda}&=\left\{(a,b,c)\mid a=1,\ldots,\infty,\,b=1,\ldots, \ell(\lambda),\,c=1,\ldots,\lambda_{b}\right\},\\
\mathcal{B}_{\mu}&=\left\{ (a,b,c) \mid b=1,\ldots,\infty,\,a=1,\ldots, \mu_{c},\,c=1,\ldots,\ell(\nu)  \right\},\\
\mathcal{B}_{\nu}&=\left\{(a,b,c) \mid c=1,\ldots,\infty,\, a=1,\ldots, \ell(\nu),\, b=1,\ldots, \nu_{a}   \right\}
\eea
where we identified the boxes with the $q$-coordinates. We also introduce
\bea
\mathcal{B}_{\lambda\cap \mu}=\mathcal{B}_{\lambda}\cap \mathcal{B}_{\mu},\quad \mathcal{B}_{\lambda\cap \nu}=\mathcal{B}_{\lambda}\cap \mathcal{B}_{\nu},\quad \mathcal{B}_{\mu\cap \nu}=\mathcal{B}_{\mu}\cap \mathcal{B}_{\nu},\quad \mathcal{B}_{\lambda\cap\mu\cap\nu}=\mathcal{B}_{\lambda}\cap\mathcal{B}_{\mu}\cap \mathcal{B}_{\nu}
\eea
and define
\bea
\mathcal{B}_{\lambda\mu\nu}\coloneqq (\mathcal{B}_{\lambda}+\mathcal{B}_{\mu}+\mathcal{B}_{\nu})- (\mathcal{B}_{\lambda\cap \mu}+\mathcal{B}_{\lambda\cap \nu}+\mathcal{B}_{\mu\cap \nu})+\mathcal{B}_{\lambda\cap\mu\cap\nu}.
\eea

\begin{itemize}
    \item One-leg boundary condition: Focusing on $\bfN_{\emptyset\emptyset\mu}$, we have
    \bea
    \bfN_{\bar{4},\emptyset\emptyset\nu}=\sum_{\scube\in\mathcal{B}_{\nu}}\chi_{\bar{4},x}(\cube)=\frac{\sum_{\Abox\in\nu}\chi_{12,x}(\Bbox)}{1-q_{3}}=\frac{\sum_{i=1}^{\ell(\nu)}\sum_{j=1}^{\nu_{i}}xq_{1}^{i-1}q_{2}^{j-1}}{1-q_{3}}.
    \eea
    \item Two-legs boundary condition: Focusing on $\bfN_{\lambda\mu\emptyset}$, we have
    \bea\label{eq:D6twolegscharacter}
    \bfN_{\bar{4},\lambda\mu\emptyset}&=\sum_{\scube\in\mathcal{B}_{\lambda}}\chi_{\bar{4},x}(\cube)+\sum_{\scube\in\mathcal{B}_{\mu}}\chi_{\bar{4},x}(\cube)-\sum_{\scube\in\mathcal{B}_{\lambda\cap\mu}}\chi_{\bar{4},x}(\cube)\\
    &=\frac{\sum_{k=1}^{\infty}\sum_{j=1}^{\lambda_{k}^{\rmT}}xq_{2}^{j-1}q_{3}^{k-1}}{1-q_{1}}+\frac{\sum_{k=1}^{\infty}\sum_{i=1}^{\mu_{k}}xq_{3}^{k-1}q_{1}^{i-1}}{1-q_{2}}-\sum_{k=1}^{\infty}\frac{(1-q_{1}^{\mu_{k}})(1-q_{2}^{\lambda^{\rmT}_{k}})}{\bfP_{12}}q_{3}^{k-1}x\\
    &=\frac{1}{\bfP_{12}}\sum_{k=1}^{\infty}q_{3}^{k-1}\left(1-q_{1}^{\mu_{k}}q_{2}^{\lambda^{\rmT}_{k}}\right)x.
    \eea
    \item Three-legs boundary condition: We have
    \bea
    \bfN_{\bar{4},\lambda\mu\nu}=\sum_{\scube\in\mathcal{B}_{\lambda\mu\nu}}\chi_{\bar{4},x}(\cube).
    \eea
    For this case, writing down the explicit form is difficult.
\end{itemize}
For all cases, $\bfP_{123}^{\vee}\bfN_{\bar{4},\lambda\mu\nu}$ is a character with finite terms. This is because for large integers $N_{1,2,3}\gg 1$, we can write $\bfN_{\bar{4},\lambda\mu\nu}$ as
\bea
\bfN_{\bar{4},\lambda\mu\nu}=\sum_{\scube\in\pi_{\bd}\langle N_{1},N_{2},N_{3}\rangle}\chi_{\bar{4},x}(\cube)+\frac{\sum_{\Abox\in\lambda}\chi_{23,x}(\Bbox)}{1-q_{1}}q_{1}^{N_{1}}+\frac{\sum_{\Abox\in\mu}\chi_{13,x}(\Bbox)}{1-q_{2}}q_{2}^{N_{2}}+\frac{\sum_{\Abox\in\nu}\chi_{12,x}(\Bbox)}{1-q_{3}}q_{3}^{N_{3}}
    \eea
    where we denoted $\pi_{\bd}\langle N_{1},N_{2},N_{3}\rangle$ the set of boxes of $\pi_{\bd}\cap[1,N_{1}]\times [1,N_{2}]\times [1,N_{3}]$. We then have
    \bea
\bfP_{123}^{\vee}\bfN_{\bar{4},\lambda\mu\nu}&=\bfP_{123}^{\vee}\sum_{\scube\in\pi_{\bd}\langle N_{1},N_{2},N_{3}\rangle}\chi_{\bar{4},x}(\cube)-\bfP_{23}^{\vee}\sum_{\Abox\in\lambda}\chi_{23,x}(\Bbox)q_{1}^{N_{1}-1}\\
&-\bfP_{13}^{\vee}\sum_{\Abox\in\mu}\chi_{13,x}(\Bbox)q_{2}^{N_{2}-1}-\bfP_{12}^{\vee}\sum_{\Abox\in\nu}\chi_{12,x}(\Bbox)q_{3}^{N_{3}-1}.
\eea
All terms are now regularized properly and we only have finite number of terms.

The contour integral formulas are then given as the following, where we regularized the infinite terms properly.
\begin{itemize}

\item One-leg boundary condition:
\bea\label{eq:D6onelegcontourintegral}
\mathcal{Z}_{k}=\frac{\mathcal{G}^{k}}{k!} \oint \prod_{I=1}^{k}\frac{dx_{I}}{2\pi\iota x_{I}} \prod_{I<J}\mathcal{A}_{\mathbb{C}^{4}}\left(\frac{x_{I}}{x_{J}}\right)^{-1}\prod_{I=1}^{k}\mathscr{V}_{4}\left(\frac{x}{x_{I}}\right) \prod_{I=1}^{k}\prod_{\Abox \in\nu} g_{\bar{3}}\left(\frac{\chi_{12,x}(\Bbox)}{x_{I}}\right)^{-1}
\eea

\item Two-legs boundary condition:
\bea\label{eq:D6twolegscontourintegral}
\mathcal{Z}_{k}&=\frac{\mathcal{G}^{k}}{k!}\oint\prod_{I=1}^{k}\frac{dx_{I}}{2\pi\iota x_{I}}\prod_{I<J}\mathcal{A}_{\mathbb{C}^{4}}\left(\frac{x_{I}}{x_{J}}\right)^{-1}\prod_{I=1}^{k}\mathscr{V}_{4}\left(\frac{x}{x_{I}}\right)\prod_{I=1}^{k}\prod_{i=1}^{\max(\ell(\mu),\ell(\lambda^{\rmT}))}\mathscr{S}_{34}\left(\frac{q_{3}^{i-1}q_{1}^{\mu_{i}}q_{2}^{\lambda_{i}^{\rmT}}x}{x_{I}}\right)\mathscr{S}_{34}\left(\frac{q_{3}^{i-1}x}{x_{I}}\right)^{-1}
\eea

\item Three-legs boundary condition:
\bea\label{eq:D6threelegscontourintegral}
\mathcal{Z}_{k}&=\frac{\mathcal{G}^{k}}{k!}\oint \prod_{I=1}^{k}\frac{dx_{I}}{2\pi\iota x_{I}} \prod_{I<J}\mathcal{A}_{\mathbb{C}^{4}}\left(\frac{x_{I}}{x_{J}}\right)^{-1} \prod_{I=1}^{k}\mathscr{V}_{4}\left(\frac{x}{x_{I}}\right) \\
&\qquad \times \prod_{I=1}^{k}\left(\prod_{\Abox \in\lambda}g_{\bar{1}}\left(\frac{\chi_{23,x}(\Bbox)}{x_{I}}\right)^{-1}\prod_{\Abox \in\mu}g_{\bar{2}}\left(\frac{\chi_{13,x}(\Bbox)}{x_{I}}\right)^{-1}\prod_{\Abox \in\nu}g_{\bar{3}}\left(\frac{\chi_{12,x}(\Bbox)}{x_{I}}\right)^{-1}\right)\\
&\qquad \times \prod_{I=1}^{k}\prod_{\scube \in \mathcal{S}}\mathcal{A}_{\mathbb{C}^{4}}\left(\frac{\chi_{\bar{4},x}(\cube)}{x_{I}}\right)
\eea 
where $\mathcal{S}_{\lambda\mu\nu}=-\mathcal{B}_{\lambda\cap\mu\cap\nu}+(\mathcal{B}_{\lambda\cap \mu}+\mathcal{B}_{\lambda\cap \nu}+\mathcal{B}_{\mu\cap \nu})$ is a finite set of boxes.
\end{itemize}
After evaluating the residues, the D6 partition function is given as
\bea\label{eq:D6legpartitionfunction}
\mathcal{Z}&=\sum_{\pi_{\reg}\in\mathcal{PP}_{\lambda\mu\nu}}\mathfrak{q}^{|\pi_{\reg}|}\widetilde{\mathcal{Z}}^{\D6}_{\bar{4};\lambda\mu\nu}[\pi_{\reg}],\\
\widetilde{\mathcal{Z}}^{\D6}_{\bar{4};\lambda\mu\nu}[\pi_{\reg}]&=\mathbb{I}\left[-\bfP_{4}^{\vee}\bfN_{\bar{4}}^{\vee}\bfK_{\pi_{\reg}}+\bfP_{\four}\bfN_{\lambda\mu\nu}^{\vee}\bfK_{\pi_{\reg}}+\bfP_{123}^{\vee}\bfK_{\pi_{\reg}}^{\vee}\bfK_{\pi_{\reg}}\right].
\eea

\begin{remark}
    We note that the partition function $\widetilde{\mathcal{Z}}^{\D6}_{\bar{4};\lambda\mu\nu}[\pi]$ defined above slightly differs from the \textit{vertex term} defined in \cite{Nekrasov:2014nea}. The difference comes from the definition of $\bfN_{\bar{4},\lambda\mu\nu}$. The boundary contribution of \cite{Nekrasov:2014nea} is defined as
    \bea
\bfN^{\text{NO}}_{\bar{4},\lambda\mu\nu}=\frac{\sum_{\Abox\in\lambda}\chi_{23,x}(\Bbox)}{1-q_{1}}+\frac{\sum_{\Abox\in\mu}\chi_{13,x}(\Bbox)}{1-q_{2}}+\frac{\sum_{\Abox\in\nu}\chi_{12,x}(\Bbox)}{1-q_{3}}
    \eea
    which is symmetric in the three legs. When there is only one leg, we have $\bfN_{\bar{4},\emptyset\emptyset\nu}^{\text{NO}}=\bfN_{\bar{4},\emptyset\emptyset\nu}$. The difference comes from the boxes living at the intersection of the three legs. For example, for two legs we have
    \bea
    \bfN_{\bar{4},\lambda\mu\emptyset}^{\text{NO}}-\bfN_{\bar{4},\lambda\mu\emptyset}=\frac{1}{\bfP_{12}}\sum_{i=1}^{\infty}q_{3}^{i-1}(1-q_{1}^{\mu_{i}})(1-q_{2}^{\lambda_{i}^{\rmT}})x.
    \eea
    The difference will appear as overall factors only depending on the boundary conditions and thus it is not important if we are interested in the instanton contributions coming from each vertex term. However, they will play roles when we study the gluings of the vertex terms to obtain the DT invariants of toric Calabi--Yau three-folds. Depending on the definition of this vertex contribution, we need to define a proper \textit{edge} contribution. After this procedure, the result will not depend on the above decomposition.

    In the later sections, we will see that the $qq$-characters reproduce the partition functions obtained by using our definition $\bfN_{\bar{4},\lambda\mu\nu}$. Gluings of these $qq$-characters are dubbed as the \textbf{web of BPS $qq$-characters} and will be discussed in a future work \cite{Kimura-Noshita1}.
\end{remark}

\paragraph{Tetrahedron instantons}
Generally, using the quadrality of $q_{1,2,3,4}$, we can define partition functions for theories on $\mathbb{C}^{3}_{\bar{a}}\times \mathbb{S}^{1}\,\,(a\in\four)$. We denote them such as
\bea
\mathcal{Z}=\begin{dcases}
    \sum_{\pi_{\reg}\in\mathcal{PP}}\mathfrak{q}^{|\pi_{\reg}|}\widetilde{\mathcal{Z}}_{\bar{a}}^{\D6}[\pi_{\reg}]\qquad \text{surface boundary cond.}\\
    \sum_{\pi_{\reg}\in\mathcal{PP}_{\lambda\mu\nu}}\mathfrak{q}^{|\pi_{\reg}|}\widetilde{\mathcal{Z}}^{\D6}_{\bar{a};\lambda\mu\nu}[\pi_{\reg}]\qquad \text{leg boundary cond.}
\end{dcases}
\eea
If $\bar{a}=bcd\,\,(b<c<d)$, the three Young diagrams $\lambda,\mu,\nu$ are assigned to the legs $q_{b},q_{c},q_{d}$, respectively. From now on, when discussing the finite part of the plane partition (the set of yellow boxes), we omit the subscript ``$\reg$" and simply denote it as $\pi$. 

We furthermore can also consider the tetrahedron instanton generalization of this setup by doing the same procedure with the character given in \eqref{eq:D6boundarycharactergeneral}, so we omit the explicit expressions (see \cite{Nekrasov:2023nai} for example).

\subsection{D8 partition functions}\label{sec:D8partitionfunction}
\paragraph{Magnificent four and finite solid partitions}
Let us review the magnificent four setup. Consider the following character
\bea
\mathbf{V}=\frac{\mathbf{Y}^{\vee}\mathbf{Y}}{\bfP_{\four}},\quad \bfY=\bfN-\bfK,\quad \bfN=\sum_{\alpha=1}^{n}(1-K_{\alpha})v_{\alpha},\quad \bfK=\sum_{I=1}^{k}x_{I}.
\eea
The parameters $\{K_{\alpha}\}_{\alpha=1}^{n}$ represent the distance between each pair of $\D8\tbar\overline{\D8}$ branes. Expanding $\mathbf{V}$, we have 
\bea
\mathbf{V}=\mathbf{V}_{\text{pert.}}+\mathbf{V}_{\text{inst.}},\quad \mathbf{V}_{\text{pert.}}=\frac{\mathbf{N}^{\vee}\mathbf{N}}{\bfP_{\four}},\quad \mathbf{V}_{\text{inst.}}=-\bfN^{\vee}\bfK-\bfK^{\vee}\bfN+\bfP_{\four}\bfK^{\vee}\bfK.
\eea
The instanton partition function comes from the square root part as
\bea
\mathbf{V}_{\text{inst.}}=\mathbf{v}_{\text{inst.}}+\mathbf{v}_{\text{inst.}}^{\vee},\quad \mathbf{v}_{\text{inst.}}=-\bfN^{\vee}\bfK+\bfP_{123}^{\vee}\bfK^{\vee}\bfK
\eea
where we chose a specific square root part of $\bfP_{\four}\bfK^{\vee}\bfK$. The contour integral formula is then given as
\bea
\mathcal{Z}_{k}^{\D8}= \frac{\mathcal{G}^{k}}{k!}\oint \prod_{I=1}^{k}\frac{dx_{I}}{2\pi\iota x_{I}}\prod_{I=1}^{k}\prod_{\alpha=1}^{n}\frac{1-K_{\alpha}v_{\alpha}/x_{I}}{1-v_{\alpha}/x_{I}}\prod_{I<J}\mathcal{A}_{\mathbb{C}^{4}}\left(\frac{x_{I}}{x_{J}}\right)^{-1}
\eea
The poles are then classified by \textit{finite} solid partitions:
\bea
\bfK|_{\vec{\rho}}=\sum_{\alpha=1}^{n}\sum_{\shcube\in\rho^{(\alpha)}}\chi_{\four,v_{\alpha}}(\hcube),\quad \chi_{\four,v}(\hcube)=vq_{1}^{i-1}q_{2}^{j-1}q_{3}^{k-1}q_{4}^{l-1},
\eea
where we denoted $\rho^{(\alpha)}$ as solid partitions obeying the condition
\bea\label{eq:solidpartitionmelting}
\rho^{(\alpha)}=\{\rho^{(\alpha)}_{i,j,k}\in\mathbb{Z}_{\geq 0}\},\quad \rho^{(\alpha)}_{i,j,k}\geq \rho^{(\alpha)}_{i+1,j,k},\quad \rho^{(\alpha)}_{i,j,k}\geq \rho^{(\alpha)}_{i,j+1,k},\quad \rho^{(\alpha)}_{i,j,k}\geq \rho^{(\alpha)}_{i,j,k+1}.
\eea
The partition function is then given as
\bea
\mathcal{Z}_{\text{inst.}}^{\D8}=\sum_{\vec{\rho}}\mathfrak{q}^{|\vec{\rho}|}(-1)^{\sigma_{4}(\vec{\rho})}\mathcal{Z}_{\four;4}^{\D8}[\vec{v},\vec{\rho};\vec{K}],\quad \mathcal{Z}_{\four;4}^{\D8}[\vec{v},\vec{\rho};\vec{K}]=\mathbb{I}[\mathbf{v}_{\text{inst.}}|_{\vec{\rho}}], 
\eea
where $(-1)^{\sigma_{4}(\vec{\rho})}\in\mathbb{Z}_{2}$ is a sign factor determined by the solid partitions. The subindex $4$ means that we chose the $4$ direction to be a special direction determining the square root part.\footnote{ Instead of the sign rule \eqref{eq:signfactor}, we may choose a different sign rule using the quadrality as
\bea
\sigma_{i}(\rho)=\#\{(x_{1},x_{2},x_{3},x_{4})\in\rho\mid x_{j}=x_{k}=x_{l}<x_{i}\}.
\eea
Similarly, we can define $\mathcal{Z}^{\D8}_{\four;a}[\rho;K]$ by using a different square root part $\bfP_{\bar{a}}^{\vee}\bfK^{\vee}\bfK$ in the definition of the character. Actually, the partition function itself does not depend on the choice of the square root part as long as the sign factor is chosen properly (see for example \cite[Thm.~2.8]{Monavari:2022rtf} for the proof):
\bea
(-1)^{\sigma_{a}(\rho)}\mathcal{Z}^{\D8}_{\bar{4};a}[\rho;K]=(-1)^{\sigma_{b}(\rho)}\mathcal{Z}_{\four;b}^{\D8}[\rho;K].
\eea
}

Focusing on the rank one $\U(1|1)$ case, we have
\bea\label{eq:D8U1partitionfunction}
\mathcal{Z}=\sum_{\rho\in\mathcal{SP}}\mathfrak{q}^{|\rho|}(-1)^{\sigma_{4}(\rho)}\mathcal{Z}_{\four;4}^{\D8}[\rho;K],\quad \mathcal{Z}^{\D8}_{\four;4}[\rho\,;K]&=\prod_{\shcube\in\rho}\frac{(1-Kx/\chi_{\four,x}(\shcube))}{(1-x/\chi_{\four,x}(\shcube))}\prod_{\shcube,\shcube'\in\rho}g_{\bar{4}}\left(\frac{\chi_{\four,x}(\hcube)}{\chi_{\four,x}(\hcube')}\right)^{-1},
\eea
where $\mathcal{SP}$ denotes the set of solid partitions and the sign factor we use in this paper is defined as\footnote{The sign rule here is the same with the one used in \cite{Cao:2019tvv,Monavari:2022rtf}. A different sign rule was proposed in \cite{Nekrasov:2018xsb} which is different from the one used in this paper. This comes from the fact that the square root used here is different from the one in \cite{Nekrasov:2018xsb}.
}
\bea\label{eq:signfactor}
    \sigma_{4}(\rho)=\#\left\{(i,i,i,j)\in\rho\mid i< j\right\}
\eea
Under this notation, actually one can show that the partition function has a nice plethystic exponential expression \cite{Nekrasov:2017cih,Nekrasov:2018xsb} as
\bea
\mathcal{Z}=\PE\left[\frac{-q_{4}\prod_{i=1}^{3}(1-q_{i4}^{-1})}{\prod_{a\in\four}(1-q_{a})}\frac{1-K^{-1}}{(1-\mathfrak{q})(1-K^{-1}\mathfrak{q}^{-1})}\right],
\eea
where the plethystic exponential operator $\PE$ is defined as
\bea
\PE[f(x_{1},\ldots,x_{n})]=\exp\left(\sum_{\ell=1}^{\infty}\frac{1}{\ell}f(x_{1}^{\ell},\ldots,x_{n}^{\ell})\right).
\eea

Following the previous D4, D6 partition functions, we will generalize the story to the case when we have nontrivial boundary conditions at the asymptotics of the solid partition.

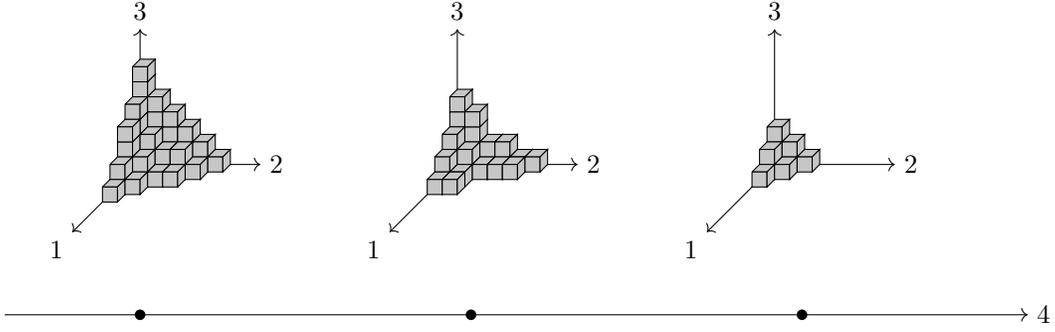
\begin{figure}
    \centering
    \begin{tikzpicture}[scale=0.4]
    \draw[->] (-4,-5)--(30,-5);
    \node[right] at (30,-5){$4$};
    \fill (0.5,-5) circle (5pt);
    \fill (11.5,-5) circle (5pt);
    \fill (22.5,-5) circle (5pt);
    
    \begin{scope}[scale=0.5]
        \draw[->] (1,0)--(9,0);
        \node[right] at (9,0){$2$};
\draw[->] (1,0)--(1,9);
\node[above] at (1,9){$3$};
\draw[->] (1,0)--(-3.5,-4.5);
\node[below left] at (-3.5,-4.5) {$1$};

 \foreach \z in {1,2,3,4,5,6,7} \diagcube{1}{1}{\z};
 \foreach \z in {1,2,3,4,5} \diagcube{2}{1}{\z};
 \foreach \z in {1,2,3,4} \diagcube{3}{1}{\z};
 \foreach \z in {1,2} \diagcube{4}{1}{\z};
\foreach \z in {1} \diagcube{5}{1}{\z};

\foreach \z in {1,2,3,4,5} \diagcube{1}{2}{\z};
 \foreach \z in {1,2,3} \diagcube{2}{2}{\z};
 \foreach \z in {1,2} \diagcube{3}{2}{\z};
 \foreach \z in {1} \diagcube{4}{2}{\z};

\foreach \z in {1,2,3,4} \diagcube{1}{3}{\z};
 \foreach \z in {1,2} \diagcube{2}{3}{\z};
 \foreach \z in {1} \diagcube{3}{3}{\z};

\foreach \z in {1,2,3} \diagcube{1}{4}{\z};
 \foreach \z in {1,2} \diagcube{2}{4}{\z};
 \foreach \z in {1} \diagcube{3}{4}{\z};

\foreach \z in {1,2} \diagcube{1}{5}{\z};
 \foreach \z in {1} \diagcube{2}{5}{\z};

\foreach \z in {1} \diagcube{1}{6}{\z};
\end{scope}
\begin{scope}[xshift=600,scale=0.5]
        \draw[->] (1,0)--(9,0);
        \node[right] at (9,0){$2$};
\draw[->] (1,0)--(1,9);
\node[above] at (1,9){$3$};
\draw[->] (1,0)--(-3.5,-4.5);
\node[below left] at (-3.5,-4.5) {$1$};

 \foreach \z in {1,2,3} \diagcube{1}{1}{\z};
 \foreach \z in {1,2} \diagcube{2}{1}{\z};
 \foreach \z in {1} \diagcube{3}{1}{\z};

\foreach \z in {1,2} \diagcube{1}{2}{\z};
 \foreach \z in {1} \diagcube{2}{2}{\z};
 
\foreach \z in {1} \diagcube{1}{3}{\z};
 
\end{scope}

\begin{scope}[xshift=300,scale=0.5]
        \draw[->] (1,0)--(9,0);
        \node[right] at (9,0){$2$};
\draw[->] (1,0)--(1,9);
\node[above] at (1,9){$3$};
\draw[->] (1,0)--(-3.5,-4.5);
\node[below left] at (-3.5,-4.5) {$1$};

 \foreach \z in {1,2,3,4,5} \diagcube{1}{1}{\z};
 \foreach \z in {1,2,3} \diagcube{2}{1}{\z};
 \foreach \z in {1,2} \diagcube{3}{1}{\z};
 \foreach \z in {1} \diagcube{4}{1}{\z};

\foreach \z in {1,2,3,4} \diagcube{1}{2}{\z};
 \foreach \z in {1,2} \diagcube{2}{2}{\z};
 \foreach \z in {1} \diagcube{3}{2}{\z};
 \foreach \z in {1} \diagcube{4}{2}{\z};

\foreach \z in {1,2} \diagcube{1}{3}{\z};
 \foreach \z in {1} \diagcube{2}{3}{\z};

\foreach \z in {1,2} \diagcube{1}{4}{\z};
 \foreach \z in {1} \diagcube{2}{4}{\z};

\foreach \z in {1} \diagcube{1}{5}{\z};
 \foreach \z in {1} \diagcube{2}{5}{\z};

\foreach \z in {1} \diagcube{1}{6}{\z};

\end{scope}

\end{tikzpicture}
    \caption{Decomposition of solid partitions. The solid partition extends in the 4-axis and for each layer orthogonal to the 4-axis, we have plane partitions. The plane partitions get smaller along the 4-axis which comes from the condition  \eqref{eq:solidpartitionmelting} of the solid partition.}
    \label{fig:solid-partition}
\end{figure}
\paragraph{Solid partitions with infinite size}To visualize the solid partition, we decompose it into non-increasing sequences of plane partitions
\bea
\rho=(\Pi^{(1)},\Pi^{(2)},\ldots,),\quad \Pi^{(i)}\succeq \Pi^{(i+1)}
\eea
where $\Pi^{(i)}$ are plane partitions. We thus have the condition
\bea
(i,j,k)\in \Pi^{(i+1)}\Rightarrow (i,j,k)\in\Pi^{(i)}.
\eea
This is the $(1,3)$-type description (see \cite[Sec.~2]{Kimura:2023bxy} for a summary). Depending on which axis we take for the decomposition, we have four possible descriptions. In this paper, we fix the 4-direction to be this direction.

The above discussion is even true when the solid partition has nontrivial boundary conditions. We denote such solid partition as $\tilde{\rho}$. Let us study the possible nontrivial boundary conditions. To simplify the figures, we use the following figures to denote the boundary conditions of the plane partitions:
\bea\label{eq:rod-surface-figure}
\adjustbox{valign=c}{
\begin{tikzpicture}[scale=0.2]
\draw[->] (1,0)--(13,0);
\node[right] at (13,0){$2$};
\draw[->] (1,0)--(1,9);
\node[above] at (1,9){$3$};
\draw[->] (1,0)--(-5.5,-6.5);
\node[below left] at (-5.5,-6.5){$1$};

\zstickvar{1}{1}{1}{7};
\zstickvar{2}{1}{1}{7};

\ystickvar{1}{2}{1}{9};
\ystickvar{1}{2}{2}{9};
\ystickvar{1}{2}{3}{9};
\ystickvar{2}{2}{1}{9};
\ystickvar{2}{2}{2}{9};

\xstickvar{3}{1}{1}{4};
\xstickvar{3}{1}{2}{4};
\xstickvar{3}{1}{3}{4};
\xstickvar{3}{2}{1}{4};
\xstickvar{3}{2}{2}{4};
\xstickvar{3}{3}{1}{4};



\end{tikzpicture}}\qquad 
\adjustbox{valign=c}{
\begin{tikzpicture}[scale=0.2]
\draw[->] (1,0)--(13,0);
\draw[->] (1,0)--(1,9);
\draw[->] (1,0)--(-5.5,-6.5);
\node[right] at (13,0){$2$};
\node[above] at (1,9){$3$};
\node[below left] at (-5.5,-6.5){$1$};

\zxsurfacevar{1}{1}{1}{7}{6};
\zxsurfacevar{1}{2}{1}{7}{6};
\zxsurfacevar{1}{3}{1}{7}{6};

\xysurfacevar{1}{4}{1}{6}{8};
\xysurfacevar{1}{4}{2}{6}{8};

\yzsurfacevar{1}{4}{3}{8}{5};
\yzsurfacevar{1}{4}{3}{8}{5};

\end{tikzpicture}}
\eea
The leg-boundary conditions in the plane partition are colored in blue and the surface boundary conditions in the plane partition are colored in green. Note that infinite number of boxes extending in one direction becomes the blue \textit{rod} configuration. On the other hand, infinite number of boxes extending in two directions become the green \textit{surface} configuration.

We have the following three possible boundary conditions depending on how the infinite number of boxes extends. We call the boundary condition when infinite number of boxes extends in an one-dimensional way the leg boundary condition. When infinite number of boxes extends in a two (three)-dimensional way, we call it the (hyper)surface boundary condition. Generally we can have multiple boundary conditions as long as they are compatible and obey the solid partition condition \eqref{eq:solidpartitionmelting}. In this paper, we only focus on the case when there are only leg, surface, hypersurface boundary conditions but generalizations are straightforward.
\begin{itemize}
    \item Leg boundary conditions: This is a configuration where we have asymptotic plane partitions denoted as $\pi_{1,2,3,4}$ in the four legs of the solid partition. Namely, the solid partition $\tilde{\rho}$ obeys the boundary conditions:
    \bea
    &\tilde{\rho}_{i,j,k} =\infty,\quad (i,j,k)\in \pi_{4},\\
    &\pi_{1}=\{\tilde{\rho}_{\infty,j,k}\},\quad \pi_{2}=\{\tilde{\rho}_{i,\infty,k}\},\quad \pi_{3}=\{\tilde{\rho}_{i,j,\infty}\}.
    \eea
    We denote the set of possible solid partitions with the leg boundary conditions $\pi_{1,2,3,4}$ as $\mathcal{SP}_{\pi_{1}\pi_{2}\pi_{3}\pi_{4}}$. 
    
In the $(1,3)$-type description, the asymptotic plane partitions of the legs $1,2,3$ are visualized as
    \bea\label{eq:fig-solidpartitionleg123}
    \begin{tikzpicture}[scale=0.3]
    \draw[->] (-4,-5)--(30,-5);
    \node[right] at (30,-5){$4$};
    \fill (0.5,-5) circle (5pt);
    \fill (11.5,-5) circle (5pt);
    \fill (22.5,-5) circle (5pt);
    
    \begin{scope}[scale=0.45]
    \draw[->] (1,0)--(13,0);
\node[right] at (13,0){$2$};
\draw[->] (1,0)--(1,9);
\node[above] at (1,9){$3$};
\draw[->] (1,0)--(-5.5,-6.5);
\node[below left] at (-5.5,-6.5){$1$};

\zstickvar{1}{1}{1}{7};
\zstickvar{2}{1}{1}{7};

\ystickvar{1}{2}{1}{9};
\ystickvar{1}{2}{2}{9};
\ystickvar{1}{2}{3}{9};
\ystickvar{2}{2}{1}{9};
\ystickvar{2}{2}{2}{9};

\xstickvar{3}{1}{1}{4};
\xstickvar{3}{1}{2}{4};
\xstickvar{3}{1}{3}{4};
\xstickvar{3}{2}{1}{4};
\xstickvar{3}{2}{2}{4};
\xstickvar{3}{3}{1}{4};

\end{scope}
\begin{scope}[xshift=300,scale=0.45]
        \draw[->] (1,0)--(13,0);
\node[right] at (13,0){$2$};
\draw[->] (1,0)--(1,9);
\node[above] at (1,9){$3$};
\draw[->] (1,0)--(-5.5,-6.5);
\node[below left] at (-5.5,-6.5){$1$};

\zstickvar{1}{1}{1}{7};
\zstickvar{2}{1}{1}{7};

\ystickvar{1}{2}{1}{9};
\ystickvar{1}{2}{2}{9};
\ystickvar{1}{2}{3}{9};
\ystickvar{2}{2}{1}{9};

\xstickvar{3}{1}{1}{4};
\xstickvar{3}{1}{2}{4};
\xstickvar{3}{2}{1}{4};

\end{scope}

\begin{scope}[xshift=600,scale=0.45]
        \draw[->] (1,0)--(13,0);
\node[right] at (13,0){$2$};
\draw[->] (1,0)--(1,9);
\node[above] at (1,9){$3$};
\draw[->] (1,0)--(-5.5,-6.5);
\node[below left] at (-5.5,-6.5){$1$};

\zstickvar{1}{1}{1}{7};

\ystickvar{1}{2}{1}{9};
\ystickvar{1}{2}{2}{9};

\xstickvar{2}{1}{1}{5};
\xstickvar{2}{1}{2}{5};

\end{scope}
\end{tikzpicture}
    \eea
Namely, for each layer orthogonal to the 4-axis, we have a plane partition with asymptotic Young diagrams and the asymptotic Young diagrams obey the non-increasing condition along the 4th direction. 

For the case when we have an asymptotic plane partition at the leg $4$, it is visualize as
\bea\label{eq:fig-solidpartitionleg4}
    \begin{tikzpicture}[scale=0.33]
    \draw[->] (-4,-5)--(30,-5);
    \node[right] at (30,-5){$4$};
    \fill (0.5,-5) circle (5pt);
    \fill (11.5,-5) circle (5pt);
    \fill (22.5,-5) circle (5pt);
    
    \begin{scope}[scale=0.5]
        \draw[->] (1,0)--(9,0);
        \node[right] at (9,0){$2$};
\draw[->] (1,0)--(1,9);
\node[above] at (1,9){$3$};
\draw[->] (1,0)--(-3.5,-4.5);
\node[below left] at (-3.5,-4.5) {$1$};

 \foreach \z in {1,2,3,4,5,6,7} \diagcubecolor{1}{1}{\z};
 \foreach \z in {1,2,3,4,5} \diagcubecolor{2}{1}{\z};
 \foreach \z in {1,2,3,4} \diagcubecolor{3}{1}{\z};
 \foreach \z in {1,2} \diagcubecolor{4}{1}{\z};
\foreach \z in {1} \diagcubecolor{5}{1}{\z};

\foreach \z in {1,2,3,4,5} \diagcubecolor{1}{2}{\z};
 \foreach \z in {1,2,3} \diagcubecolor{2}{2}{\z};
 \foreach \z in {1,2} \diagcubecolor{3}{2}{\z};
 \foreach \z in {1} \diagcubecolor{4}{2}{\z};

\foreach \z in {1,2,3,4} \diagcubecolor{1}{3}{\z};
 \foreach \z in {1,2} \diagcubecolor{2}{3}{\z};
 \foreach \z in {1} \diagcubecolor{3}{3}{\z};

\foreach \z in {1,2,3} \diagcubecolor{1}{4}{\z};
 \foreach \z in {1,2} \diagcubecolor{2}{4}{\z};
 \foreach \z in {1} \diagcubecolor{3}{4}{\z};

\foreach \z in {1,2} \diagcubecolor{1}{5}{\z};
 \foreach \z in {1} \diagcubecolor{2}{5}{\z};

\foreach \z in {1} \diagcubecolor{1}{6}{\z};
\end{scope}
\begin{scope}[xshift=600,scale=0.5]
        \draw[->] (1,0)--(9,0);
        \node[right] at (9,0){$2$};
\draw[->] (1,0)--(1,9);
\node[above] at (1,9){$3$};
\draw[->] (1,0)--(-3.5,-4.5);
\node[below left] at (-3.5,-4.5) {$1$};

 \foreach \z in {1,2,3,4,5,6,7} \diagcubecolor{1}{1}{\z};
 \foreach \z in {1,2,3,4,5} \diagcubecolor{2}{1}{\z};
 \foreach \z in {1,2,3,4} \diagcubecolor{3}{1}{\z};
 \foreach \z in {1,2} \diagcubecolor{4}{1}{\z};
\foreach \z in {1} \diagcubecolor{5}{1}{\z};

\foreach \z in {1,2,3,4,5} \diagcubecolor{1}{2}{\z};
 \foreach \z in {1,2,3} \diagcubecolor{2}{2}{\z};
 \foreach \z in {1,2} \diagcubecolor{3}{2}{\z};
 \foreach \z in {1} \diagcubecolor{4}{2}{\z};

\foreach \z in {1,2,3,4} \diagcubecolor{1}{3}{\z};
 \foreach \z in {1,2} \diagcubecolor{2}{3}{\z};
 \foreach \z in {1} \diagcubecolor{3}{3}{\z};

\foreach \z in {1,2,3} \diagcubecolor{1}{4}{\z};
 \foreach \z in {1,2} \diagcubecolor{2}{4}{\z};
 \foreach \z in {1} \diagcubecolor{3}{4}{\z};

\foreach \z in {1,2} \diagcubecolor{1}{5}{\z};
 \foreach \z in {1} \diagcubecolor{2}{5}{\z};

\foreach \z in {1} \diagcubecolor{1}{6}{\z};
\end{scope}

\begin{scope}[xshift=300,scale=0.5]
        \draw[->] (1,0)--(9,0);
        \node[right] at (9,0){$2$};
\draw[->] (1,0)--(1,9);
\node[above] at (1,9){$3$};
\draw[->] (1,0)--(-3.5,-4.5);
\node[below left] at (-3.5,-4.5) {$1$};

 \foreach \z in {1,2,3,4,5,6,7} \diagcubecolor{1}{1}{\z};
 \foreach \z in {1,2,3,4,5} \diagcubecolor{2}{1}{\z};
 \foreach \z in {1,2,3,4} \diagcubecolor{3}{1}{\z};
 \foreach \z in {1,2} \diagcubecolor{4}{1}{\z};
\foreach \z in {1} \diagcubecolor{5}{1}{\z};

\foreach \z in {1,2,3,4,5} \diagcubecolor{1}{2}{\z};
 \foreach \z in {1,2,3} \diagcubecolor{2}{2}{\z};
 \foreach \z in {1,2} \diagcubecolor{3}{2}{\z};
 \foreach \z in {1} \diagcubecolor{4}{2}{\z};

\foreach \z in {1,2,3,4} \diagcubecolor{1}{3}{\z};
 \foreach \z in {1,2} \diagcubecolor{2}{3}{\z};
 \foreach \z in {1} \diagcubecolor{3}{3}{\z};

\foreach \z in {1,2,3} \diagcubecolor{1}{4}{\z};
 \foreach \z in {1,2} \diagcubecolor{2}{4}{\z};
 \foreach \z in {1} \diagcubecolor{3}{4}{\z};

\foreach \z in {1,2} \diagcubecolor{1}{5}{\z};
 \foreach \z in {1} \diagcubecolor{2}{5}{\z};

\foreach \z in {1} \diagcubecolor{1}{6}{\z};
\end{scope}
\end{tikzpicture}
\eea
Namely, we have a fixed plane partition for each layer and it is extending in the 4-direction semi-infinitely. For this type, the boxes corresponding to the boundary condition is colored in orange. Combining the above two cases, the most general configuration when the boundaries have four generic asymptotic plane partitions is decomposed into non-increasing plane partitions with asymptotic Young diagrams: 
\bea\label{eq:fig-solidpartitionleg1234}
 \begin{tikzpicture}[scale=0.33]
    \draw[->] (-4,-5)--(30,-5);
    \node[right] at (30,-5){$4$};
    \fill (0.5,-5) circle (5pt);
    \fill (11.5,-5) circle (5pt);
    \fill (22.5,-5) circle (5pt);
    
    \begin{scope}[scale=0.45]
        \draw[->] (1,0)--(12,0);
        \node[right] at (12,0){$2$};
\draw[->] (1,0)--(1,12);
\node[above] at (1,12){$3$};
\draw[->] (1,0)--(-5.5,-6.5);
\node[below left] at (-5.5,-6.5) {$1$};



 \foreach \z in {1,2,3,4,5,6,7} \diagcubecolor{1}{1}{\z};
 \zstickvar{1}{1}{8}{3};
 \foreach \z in {1,2,3,4,5} \diagcubecolor{2}{1}{\z};
\zstickvar{2}{1}{6}{5};
 
 \foreach \z in {1,2,3,4} \diagcubecolor{3}{1}{\z};
 \foreach \z in {1,2} \diagcubecolor{4}{1}{\z};
\foreach \z in {1} \diagcubecolor{5}{1}{\z};
\xstickvar{6}{1}{1}{3};
\xstickvar{5}{1}{2}{3.5};
\xstickvar{4}{1}{3}{4};

\foreach \z in {1,2,3,4,5} \diagcubecolor{1}{2}{\z};
 \foreach \z in {1,2,3} \diagcubecolor{2}{2}{\z};
 \foreach \z in {1,2} \diagcubecolor{3}{2}{\z};
 \foreach \z in {1} \diagcubecolor{4}{2}{\z};
\xstickvar{5}{2}{1}{3.5};
\xstickvar{4}{2}{2}{4};

\foreach \z in {1,2,3,4} \diagcubecolor{1}{3}{\z};
 \foreach \z in {1,2} \diagcubecolor{2}{3}{\z};
 \foreach \z in {1} \diagcubecolor{3}{3}{\z};

\foreach \z in {1,2,3} \diagcubecolor{1}{4}{\z};
 \foreach \z in {1,2} \diagcubecolor{2}{4}{\z};
 \foreach \z in {1} \diagcubecolor{3}{4}{\z};

\foreach \z in {1,2} \diagcubecolor{1}{5}{\z};
 \foreach \z in {1} \diagcubecolor{2}{5}{\z};

\foreach \z in {1} \diagcubecolor{1}{6}{\z};

\ystickvar{1}{7}{1}{3};
\ystickvar{1}{6}{2}{4};
\ystickvar{1}{5}{3}{5};
\ystickvar{2}{6}{1}{4};
\ystickvar{2}{5}{2}{5};

\end{scope}
\begin{scope}[xshift=300,scale=0.45]
             \draw[->] (1,0)--(12,0);
        \node[right] at (12,0){$2$};
\draw[->] (1,0)--(1,12);
\node[above] at (1,12){$3$};
\draw[->] (1,0)--(-5.5,-6.5);
\node[below left] at (-5.5,-6.5) {$1$};

 \foreach \z in {1,2,3,4,5,6,7} \diagcubecolor{1}{1}{\z};
 \zstickvar{1}{1}{8}{3};
 \foreach \z in {1,2,3,4,5} \diagcubecolor{2}{1}{\z};
\zstickvar{2}{1}{6}{5};
 
 \foreach \z in {1,2,3,4} \diagcubecolor{3}{1}{\z};
 \foreach \z in {1,2} \diagcubecolor{4}{1}{\z};
\foreach \z in {1} \diagcubecolor{5}{1}{\z};
\xstickvar{6}{1}{1}{3};
\xstickvar{5}{1}{2}{3.5};

\foreach \z in {1,2,3,4,5} \diagcubecolor{1}{2}{\z};
 \foreach \z in {1,2,3} \diagcubecolor{2}{2}{\z};
 \foreach \z in {1,2} \diagcubecolor{3}{2}{\z};
 \foreach \z in {1} \diagcubecolor{4}{2}{\z};
\xstickvar{5}{2}{1}{3.5};

\foreach \z in {1,2,3,4} \diagcubecolor{1}{3}{\z};
 \foreach \z in {1,2} \diagcubecolor{2}{3}{\z};
 \foreach \z in {1} \diagcubecolor{3}{3}{\z};

\foreach \z in {1,2,3} \diagcubecolor{1}{4}{\z};
 \foreach \z in {1,2} \diagcubecolor{2}{4}{\z};
 \foreach \z in {1} \diagcubecolor{3}{4}{\z};

\foreach \z in {1,2} \diagcubecolor{1}{5}{\z};
 \foreach \z in {1} \diagcubecolor{2}{5}{\z};

\foreach \z in {1} \diagcubecolor{1}{6}{\z};

\ystickvar{1}{7}{1}{3};
\ystickvar{1}{6}{2}{4};
\ystickvar{1}{5}{3}{5};
\ystickvar{2}{6}{1}{4};

\end{scope}
\begin{scope}[xshift=600,scale=0.45]
            \draw[->] (1,0)--(12,0);
        \node[right] at (12,0){$2$};
\draw[->] (1,0)--(1,12);
\node[above] at (1,12){$3$};
\draw[->] (1,0)--(-5.5,-6.5);
\node[below left] at (-5.5,-6.5) {$1$};

 \foreach \z in {1,2,3,4,5,6,7} \diagcubecolor{1}{1}{\z};
 \zstickvar{1}{1}{8}{3};
 \foreach \z in {1,2,3,4,5} \diagcubecolor{2}{1}{\z};
 
 \foreach \z in {1,2,3,4} \diagcubecolor{3}{1}{\z};
 \foreach \z in {1,2} \diagcubecolor{4}{1}{\z};
\foreach \z in {1} \diagcubecolor{5}{1}{\z};
\xstickvar{6}{1}{1}{3};
\xstickvar{5}{1}{2}{3.5};

\foreach \z in {1,2,3,4,5} \diagcubecolor{1}{2}{\z};
 \foreach \z in {1,2,3} \diagcubecolor{2}{2}{\z};
 \foreach \z in {1,2} \diagcubecolor{3}{2}{\z};
 \foreach \z in {1} \diagcubecolor{4}{2}{\z};

\foreach \z in {1,2,3,4} \diagcubecolor{1}{3}{\z};
 \foreach \z in {1,2} \diagcubecolor{2}{3}{\z};
 \foreach \z in {1} \diagcubecolor{3}{3}{\z};

\foreach \z in {1,2,3} \diagcubecolor{1}{4}{\z};
 \foreach \z in {1,2} \diagcubecolor{2}{4}{\z};
 \foreach \z in {1} \diagcubecolor{3}{4}{\z};

\foreach \z in {1,2} \diagcubecolor{1}{5}{\z};
 \foreach \z in {1} \diagcubecolor{2}{5}{\z};

\foreach \z in {1} \diagcubecolor{1}{6}{\z};

\ystickvar{1}{7}{1}{3};
\ystickvar{1}{6}{2}{4};

\end{scope}

\end{tikzpicture}
\eea

\item Surface boundary conditions: This is a configuration when we have asymptotic Young diagrams denoted as $\lambda_{12,13,14,23,24,34}$ for the six surfaces of the solid partition. Namely, the solid partition $\tilde{\rho}$ obeys the boundary conditions:
\bea
&\lambda_{12}=\{\tilde{\rho}_{\infty,\infty,k}\mid k=1,\ldots,\infty\},\quad \lambda_{23}=\{\tilde{\rho}_{i,\infty,\infty}\mid i=1,\ldots,\infty\}\quad \lambda_{13}=\{\tilde{\rho}_{\infty,j,\infty}\mid j=1,\ldots,\infty\},\\
&\tilde{\rho}_{\infty,j,k}=\infty,\,\,(j,k)\in\lambda_{14},\quad \tilde{\rho}_{i,\infty,k}=\infty,\,\,(i,k)\in\lambda_{24},\quad \tilde{\rho}_{i,j,\infty}=\infty,\,\,(i,j\in\lambda_{34})
\eea
The Young diagram $\lambda_{ab}\,(ab\in\six)$ extends in the $(a,b)$-direction semi-infinitely. We denote the set of possible solid plane partitions with surface boundary conditions $\lambda_{A},\,(A\in\six)$ as $\mathcal{SP}_{\{\lambda_{A}\}_{A\in\six}}$.

In the $(1,3)$-type description, the boundary Young diagrams $\lambda_{14,24,34}$ are visualized as
 \bea\label{eq:fig-solidpartitionsurface1}
    \begin{tikzpicture}[scale=0.33]
    \draw[->] (-4,-5)--(30,-5);
    \node[right] at (30,-5){$4$};
    \fill (0.5,-5) circle (5pt);
    \fill (11.5,-5) circle (5pt);
    \fill (22.5,-5) circle (5pt);
    
    \begin{scope}[scale=0.45]
    \draw[->] (1,0)--(13,0);
\node[right] at (13,0){$2$};
\draw[->] (1,0)--(1,9);
\node[above] at (1,9){$3$};
\draw[->] (1,0)--(-5.5,-6.5);
\node[below left] at (-5.5,-6.5){$1$};

\zstickvar{1}{1}{1}{7};
\zstickvar{2}{1}{1}{7};

\ystickvar{1}{2}{1}{9};
\ystickvar{1}{2}{2}{9};
\ystickvar{1}{2}{3}{9};
\ystickvar{2}{2}{1}{9};
\ystickvar{2}{2}{2}{9};

\xstickvar{3}{1}{1}{4};
\xstickvar{3}{1}{2}{4};
\xstickvar{3}{1}{3}{4};
\xstickvar{3}{2}{1}{4};
\xstickvar{3}{2}{2}{4};
\xstickvar{3}{3}{1}{4};

\end{scope}
\begin{scope}[xshift=300,scale=0.45]
        \draw[->] (1,0)--(13,0);
\node[right] at (13,0){$2$};
\draw[->] (1,0)--(1,9);
\node[above] at (1,9){$3$};
\draw[->] (1,0)--(-5.5,-6.5);
\node[below left] at (-5.5,-6.5){$1$};

\zstickvar{1}{1}{1}{7};
\zstickvar{2}{1}{1}{7};

\ystickvar{1}{2}{1}{9};
\ystickvar{1}{2}{2}{9};
\ystickvar{1}{2}{3}{9};
\ystickvar{2}{2}{1}{9};
\ystickvar{2}{2}{2}{9};

\xstickvar{3}{1}{1}{4};
\xstickvar{3}{1}{2}{4};
\xstickvar{3}{1}{3}{4};
\xstickvar{3}{2}{1}{4};
\xstickvar{3}{2}{2}{4};
\xstickvar{3}{3}{1}{4};

\end{scope}

\begin{scope}[xshift=600,scale=0.45]
\draw[->] (1,0)--(13,0);
\node[right] at (13,0){$2$};
\draw[->] (1,0)--(1,9);
\node[above] at (1,9){$3$};
\draw[->] (1,0)--(-5.5,-6.5);
\node[below left] at (-5.5,-6.5){$1$};

\zstickvar{1}{1}{1}{7};
\zstickvar{2}{1}{1}{7};

\ystickvar{1}{2}{1}{9};
\ystickvar{1}{2}{2}{9};
\ystickvar{1}{2}{3}{9};
\ystickvar{2}{2}{1}{9};
\ystickvar{2}{2}{2}{9};

\xstickvar{3}{1}{1}{4};
\xstickvar{3}{1}{2}{4};
\xstickvar{3}{1}{3}{4};
\xstickvar{3}{2}{1}{4};
\xstickvar{3}{2}{2}{4};
\xstickvar{3}{3}{1}{4};

\end{scope}
\end{tikzpicture}
    \eea
    where for each layer, we have a plane partition with asymptotic Young diagrams $\lambda_{i4}\,(i=1,2,3)$ for each leg. In this case, the asymptotic Young diagrams are the same for every layer.

The other boundary Young diagrams $\lambda_{12,13,23}$ are visualized as
\bea\label{eq:fig-solidpartitionsurface2}
\begin{tikzpicture}[scale=0.35]
    \draw[->] (-4,-5)--(30,-5);
    \node[right] at (30,-5){$4$};
    \fill (0.5,-5) circle (5pt);
    \fill (11.5,-5) circle (5pt);
    \fill (22.5,-5) circle (5pt);
    
    \begin{scope}[scale=0.4]
        \draw[->] (1,0)--(13,0);
\draw[->] (1,0)--(1,9);
\draw[->] (1,0)--(-5.5,-6.5);
\node[right] at (13,0){$2$};
\node[above] at (1,9){$3$};
\node[below left] at (-5.5,-6.5){$1$};

\zxsurfacevar{1}{1}{1}{7}{6};
\zxsurfacevar{1}{2}{1}{7}{6};
\zxsurfacevar{1}{3}{1}{7}{6};

\xysurfacevar{1}{4}{1}{6}{8};
\xysurfacevar{1}{4}{2}{6}{8};

\yzsurfacevar{1}{4}{3}{8}{5};
\yzsurfacevar{2}{4}{3}{8}{5};
\end{scope}
\begin{scope}[xshift=300,scale=0.4]
\draw[->] (1,0)--(13,0);
\draw[->] (1,0)--(1,9);
\draw[->] (1,0)--(-5.5,-6.5);
\node[right] at (13,0){$2$};
\node[above] at (1,9){$3$};
\node[below left] at (-5.5,-6.5){$1$};

\zxsurfacevar{1}{1}{1}{7}{6};
\zxsurfacevar{1}{2}{1}{7}{6};

\xysurfacevar{1}{3}{1}{6}{9};

\yzsurfacevar{1}{3}{2}{9}{6};
\end{scope}

\begin{scope}[xshift=600,scale=0.4]
  \draw[->] (1,0)--(13,0);
\draw[->] (1,0)--(1,9);
\draw[->] (1,0)--(-5.5,-6.5);
\node[right] at (13,0){$2$};
\node[above] at (1,9){$3$};
\node[below left] at (-5.5,-6.5){$1$};

\zxsurfacevar{1}{1}{1}{7}{6};

\xysurfacevar{1}{2}{1}{6}{10};

\end{scope}
\end{tikzpicture}
\eea
where we have surfaces boundary conditions for each layer. The number of surfaces obey the non-increasing condition according the 4-direction. For example, for each layer we have $k^{(i)}_{23},k^{(i)}_{13},k_{12}^{(i)}\in\mathbb{Z}_{\geq 0}$ with $k_{12,13,23}^{(i)}\geq k_{12,13,23}^{(i+1)}$ and 
\bea\label{eq:D8surfacedecomp}
\lambda_{12}=\{k_{12}^{(i)}\mid i=1,\ldots, \infty\},\quad \lambda_{13}=\{k_{13}^{(i)}\mid i=1,\ldots, \infty\},\quad \lambda_{23}=\{k_{23}^{(i)}\mid i=1,\ldots,\infty\}
\eea

\item Hypersurface boundary conditions: This is a configuration when we have asymptotic 1d partitions denoted as $k_{234,134,124,123}=k_{\bar{1},\bar{2},\bar{3},\bar{4}}\in\mathbb{Z}_{\geq 0}$ for the four hypersurfaces. The solid partition $\tilde{\rho}$ obeys the boundary conditions
\bea
\rho_{i,\infty,\infty}=\infty\,\,(i=1,\ldots,k_{234}),&\quad \rho_{\infty,j,\infty}=\infty\,\,(j=1,\ldots,k_{134}),\\ \rho_{\infty,\infty,k}=\infty\,\,(i=1,\ldots,k_{124}),&\quad  \rho_{\infty\infty\infty}=k_{123}.
\eea
We denote the set of solid partitions with hypersurface boundary conditions as $\mathcal{SP}_{k_{\bar{1}},k_{\bar{2}},k_{\bar{3}},k_{\bar{4}}}$. Actually, one can see that after shifting the origin of the solid partition, we have $\mathcal{SP}_{k_{\bar{1}},k_{\bar{2}},k_{\bar{3}},k_{\bar{4}}}=\mathcal{SP}$.

In the $(1,3)$-type description, the 1d partitions $k_{234,134,124}$ are visualized as
\bea\label{eq:fig-solidpartitionhypersurface123}
\begin{tikzpicture}[scale=0.35]
    \draw[->] (-4,-5)--(30,-5);
    \node[right] at (30,-5){$4$};
    \fill (0.5,-5) circle (5pt);
    \fill (11.5,-5) circle (5pt);
    \fill (22.5,-5) circle (5pt);
    
    \begin{scope}[scale=0.4]
        \draw[->] (1,0)--(13,0);
\draw[->] (1,0)--(1,9);
\draw[->] (1,0)--(-5.5,-6.5);
\node[right] at (13,0){$2$};
\node[above] at (1,9){$3$};
\node[below left] at (-5.5,-6.5){$1$};

\zxsurfacevar{1}{1}{1}{7}{6};
\zxsurfacevar{1}{2}{1}{7}{6};
\zxsurfacevar{1}{3}{1}{7}{6};

\xysurfacevar{1}{4}{1}{6}{8};
\xysurfacevar{1}{4}{2}{6}{8};

\yzsurfacevar{1}{4}{3}{8}{5};
\yzsurfacevar{2}{4}{3}{8}{5};
\end{scope}
\begin{scope}[xshift=300,scale=0.4]
 \draw[->] (1,0)--(13,0);
\draw[->] (1,0)--(1,9);
\draw[->] (1,0)--(-5.5,-6.5);
\node[right] at (13,0){$2$};
\node[above] at (1,9){$3$};
\node[below left] at (-5.5,-6.5){$1$};

\zxsurfacevar{1}{1}{1}{7}{6};
\zxsurfacevar{1}{2}{1}{7}{6};
\zxsurfacevar{1}{3}{1}{7}{6};

\xysurfacevar{1}{4}{1}{6}{8};
\xysurfacevar{1}{4}{2}{6}{8};

\yzsurfacevar{1}{4}{3}{8}{5};
\yzsurfacevar{2}{4}{3}{8}{5};
\end{scope}

\begin{scope}[xshift=600,scale=0.4]
 \draw[->] (1,0)--(13,0);
\draw[->] (1,0)--(1,9);
\draw[->] (1,0)--(-5.5,-6.5);
\node[right] at (13,0){$2$};
\node[above] at (1,9){$3$};
\node[below left] at (-5.5,-6.5){$1$};

\zxsurfacevar{1}{1}{1}{7}{6};
\zxsurfacevar{1}{2}{1}{7}{6};
\zxsurfacevar{1}{3}{1}{7}{6};

\xysurfacevar{1}{4}{1}{6}{8};
\xysurfacevar{1}{4}{2}{6}{8};

\yzsurfacevar{1}{4}{3}{8}{5};
\yzsurfacevar{2}{4}{3}{8}{5};
\end{scope}
\end{tikzpicture}
\eea
where $k_{234,134,124}$ is the number of $23,13,12$-surfaces and we have the same number for each layer orthogonal to the 4-direction.

The 1d partition $k_{123}$ is visualized as
\bea\label{eq:fig-solidpartitionhypersurface4}
\begin{tikzpicture}[scale=0.33]
    \draw[->] (-4,-5)--(30,-5);
    \node[right] at (30,-5){$4$};
    \fill (0.5,-5) circle (5pt);
    \fill (11.5,-5) circle (5pt);
    \fill (22.5,-5) circle (5pt);
    
\begin{scope}[scale=0.4]
        \draw[->] (1,0)--(12,0);
\draw[->] (1,0)--(1,9);
\draw[->] (1,0)--(-5.5,-6.5);
\node[right] at (12,0){$2$};
\node[above] at (1,9){$3$};
\node[below left] at (-5.5,-6.5){$1$};

\xyzsurfacevar{1}{1}{1}{5}{9}{6};
\end{scope}
\begin{scope}[xshift=300,scale=0.4]
 \draw[->] (1,0)--(12,0);
\draw[->] (1,0)--(1,9);
\draw[->] (1,0)--(-5.5,-6.5);
\node[right] at (12,0){$2$};
\node[above] at (1,9){$3$};
\node[below left] at (-5.5,-6.5){$1$};

\xyzsurfacevar{1}{1}{1}{5}{9}{6};
\end{scope}

\begin{scope}[xshift=600,scale=0.4]
 \draw[->] (1,0)--(12,0);
\draw[->] (1,0)--(1,9);
\draw[->] (1,0)--(-5.5,-6.5);
\node[right] at (12,0){$2$};
\node[above] at (1,9){$3$};
\node[below left] at (-5.5,-6.5){$1$};

\end{scope}
\end{tikzpicture}
\eea
where up to the $k_{123}$-th layer, we have a plane partition spanning the whole 123-plane.

\end{itemize}

\paragraph{D8 partition functions with boundary conditions}
Similar to the spiked instanton and tetrahedron instanton cases, we formally decompose the character $\bfK$ into $\bfK=\bfK^{\bd}+\bfK^{\reg}$
and then have
\bea
\mathbf{V}&=\bfV_{\text{pert.}}+\bfV_{\bd.}+\bfV_{\text{inst.}},\quad\bfV_{\bd.}=-\bfN^{\vee}\bfK^{\bd}-\bfN\bfK^{\bd\vee}+\bfP_{\four}\bfK^{\bd\vee}\bfK,\\
\bfV_{\text{inst.}}&=-\bfN^{\vee}\bfK^{\reg}-\bfN\bfK^{\reg\vee}+\bfP_{\four}\left(\bfK^{\bd\vee}\bfK^{\reg}+\bfK^{\bd}\bfK^{\reg\vee}\right)+\bfP_{123}^{\vee}\bfK^{\reg\vee}\bfK^{\reg}.
\eea
We choose the following square roots
\bea
\mathbf{v}_{\bd.}&=-\bfN^{\vee}\bfK^{\bd}+\sqrt{\bfP_{\four}\bfK^{\bd\vee}\bfK^{\bd}},\quad \mathbf{v}_{\text{inst.}}=-\bfN^{\vee}\bfK^{\reg}+\bfP_{\four}\bfK^{\bd\vee}\bfK^{\reg}+\bfP_{123}^{\vee}\bfK^{\reg\vee}\bfK^{\reg}
\eea
where we omit the explicit formula for the nontrivial square root part for the boundary contributions. The instanton partition function then comes from the contour integral formula
\bea
\mathcal{Z}_{k}=\frac{\mathcal{G}^{k}}{k!}\oint \prod_{I=1}^{k}\frac{dx_{I}}{2\pi\iota x_{I}}\mathbb{I}'[\mathbf{v}_{\text{inst.}}].
\eea
The complete formula needs to be studied case by case by giving $\bfK^{\bd}$ explicitly
\bea
\bfK^{\bd}=\sum_{\shcube\in\rho_{\bd}}\chi_{\four,x}(\hcube)
\eea
where $\rho_{\bd}$ is the boundary contributions, but generally we have the following structure. After evaluating the residues, $\bfK^{\reg}$ will be 
\bea
\bfK^{\reg}=\sum_{\shcube\in\rho_{\reg}}\chi_{\four,x}(\hcube),
\eea
where $\rho_{\reg}$ denotes the set of possible boxes that can be added to a given boundary conditions which was classified above. Namely, we have an infinite size solid partition $\tilde{\rho}$ obeying the solid partition function with nontrivial boundary contributions $\rho_{\bd}$ and $\rho_{\reg}$ is the set of boxes not included in the boundary contributions. The partition function is then defined as
\bea
\mathcal{Z}=\sum_{\rho_{\reg}}\mathfrak{q}^{|\rho_{\reg}|}(-1)^{\sigma_{4}(\rho_{\reg})}\mathbb{I}[\mathbf{v}|_{\rho_{\reg}}],
\eea
where we introduced the sign factors. We conjecture the sign rules to be
\bea\label{eq:D8boundary_signrule}
\sigma_{4}(\rho_{\reg})=\#\left\{(i,i,i,j)\in\rho_{\reg}\mid  i< j\right\}.
\eea
 Namely, only the boxes of the $\rho_{\reg}$ but not $\rho_{\bd}$ contribute to the signs.\footnote{Strictly speaking, sign factors depending on the boundary conditions will appear when considering the gluings of these \textit{vertex terms}, but they are only overall factors from the vertex perspective.} A derivation of this sign rule\footnote{This sign rule is similar to the one used in \cite[Conj.~2.11]{Monavari:2022rtf}. See \cite[Thm.~5.16]{Nekrasov:2023nai} also for another description of the sign factor.} in our formalism will be given in section~\ref{sec:D8qqlegboundary}, \ref{sec:D8qqsurfaceboundary}.


\paragraph{Leg boundary conditions}
We introduce the following set of boxes
\bea
\mathcal{B}_{a,\pi_{a}}=\left\{(x_{1},x_{2},x_{3},x_{4})\mid x_{a}=1,\ldots \infty,\quad (x_{b},x_{c},x_{d})\in\pi_{a},\,\,(b,c,d\neq a)\right\}
\eea
for $a\in\four$, where $\pi_{a}$ are finite plane partitions. We also define
\bea
\mathcal{B}_{\pi_{1}\pi_{2}\pi_{3}\pi_{4}}\coloneqq \sum_{a\in\four}\mathcal{B}_{a,\pi_{a}}-\sum_{ab\in\six}\mathcal{B}_{ab,\pi_{a}\cap\pi_{b}}+\sum_{\overline{(abc)}\in\four}\mathcal{B}_{abc,\pi_{a}\cap\pi_{b}\cap\pi_{c}}-\mathcal{B}_{\four,\pi_{1}\cap\pi_{2}\cap\pi_{3}\cap\pi_{4}}
\eea
where $\mathcal{B}_{ab,\pi_{a}\cap\pi_{b}}=\mathcal{B}_{a,\pi_{a}}\cap \mathcal{B}_{b,\pi_{b}},\,\,\mathcal{B}_{abc,\pi_{a}\cap \pi_{b}\cap\pi_{c}}=\bigcap\limits_{i=a,b,c}\mathcal{B}_{i,\pi_{a}} $ and $\mathcal{B}_{\four,\cap_{a\in\four}\pi_{a}}=\bigcap\limits_{a\in\four}\mathcal{B}_{a,\pi_{a}}$. The contributions coming from the leg boundaries are
\bea
\bfK^{\bd}=\sum_{\shcube\in\mathcal{B}_{\pi_{1}\pi_{2}\pi_{3}\pi_{4}}}\chi_{\four,x}(\hcube)=\sum_{a\in\four}\sum_{\scube\in\pi_{a}}\frac{\chi_{\bar{a},x}(\cube)}{1-q_{a}}-\sum_{\shcube\in\mathcal{S}_{\pi_{1}\pi_{2}\pi_{3}\pi_{4}}}\chi_{\four,x}(\hcube)\eqqcolon \bfN_{\pi_{1}\pi_{2}\pi_{3}\pi_{4}}
\eea
where $\mathcal{S}_{\pi_{1}\pi_{2}\pi_{3}\pi_{4}}=\sum_{a\in\four}\mathcal{B}_{a,\pi^{(a)}}-\mathcal{B}_{\pi^{(1)}\pi^{(2)}\pi^{(3)}\pi^{(4)}}$ is a finite set.

The contour integral formula is then explicitly written as
\bea\label{eq:D8contourintegral-leg}
\mathcal{Z}_{k}&= \frac{\mathcal{G}^{k}}{k!}\oint \prod_{I=1}^{k}\frac{dx_{I}}{2\pi\iota x_{I}}\prod_{I=1}^{k}\frac{1-Kx/x_{I}}{1-x/x_{I}}\prod_{I<J}\mathcal{A}_{\mathbb{C}^{4}}\left(\frac{x_{I}}{x_{J}}\right)^{-1}\\
&\qquad \times \prod_{I=1}^{k}\prod_{a\in\four}\prod_{\scube\in\pi_{a}}g_{\bar{a}}\left(\frac{\chi_{\bar{a},x}(\cube)}{x_{I}}\right)^{-1}\prod_{I=1}^{k}\prod_{\shcube\in\mathcal{S}_{\pi_{1}\pi_{2}\pi_{3}\pi_{4}}}\mathcal{A}_{\mathbb{C}^{4}}\left(\frac{\chi_{\four,x}(\hcube)}{x_{I}}\right).
\eea
For the case when we have only one leg, the set $\mathcal{S}_{\emptyset\emptyset\emptyset\pi_{4}}$ is empty and thus the contour integral formula is simplified as
\bea
\mathcal{Z}_{k}=\frac{\mathcal{G}^{k}}{k!} \oint \prod_{I=1}^{k}\frac{dx_{I}}{2\pi\iota x_{I}}\prod_{I=1}^{k}\frac{1-Kx/x_{I}}{1-x/x_{I}}\prod_{I<J}\mathcal{A}_{\mathbb{C}^{4}}\left(\frac{x_{I}}{x_{J}}\right)^{-1}\prod_{I=1}^{k}\prod_{\scube\in\pi_{4}}g_{\bar{4}}\left(\frac{\chi_{\bar{4},x}(\cube)}{x_{I}}\right)^{-1}.
\eea

After evaluating the residues, the D8 partition function is given as
\bea\label{eq:D8legpartition}
&\mathcal{Z}=\sum_{\rho_{\reg}\in\mathcal{SP}_{\pi_{1}\pi_{2}\pi_{3}\pi_{4}}}\mathfrak{q}^{|\rho_{\reg}|}(-1)^{\sigma_{4}(\rho_{\reg})}\mathcal{Z}^{\D8}_{\four;4;\pi_{1}\pi_{2}\pi_{3}\pi_{4}}[\rho_{\reg},K],\\
&\mathcal{Z}^{\D8}_{\four;4;\pi_{1}\pi_{2}\pi_{3}\pi_{4}}[\rho_{\reg},K]=\mathbb{I}\left[-\bfN^{\vee}\bfK^{\reg}+\bfP_{\four}\bfN_{\pi_{1}\pi_{2}\pi_{3}\pi_{4}}^{\vee}\bfK^{\reg}+\bfP_{123}^{\vee}\bfK^{\reg\vee}\bfK^{\reg}\right].
\eea
Note here that we are identifying the elements $\rho_{\reg}\in\mathcal{SP}_{\pi_{1}\pi_{2}\pi_{3}\pi_{4}}$ with the set of boxes \textit{not} included in the boundary plane partitions.

\paragraph{Surface boundary conditions}
Following the previous discussion, we introduce the following set of boxes
\bea
\mathcal{B}_{A,\lambda_{A}}=\{(x_{1},x_{2},x_{3},x_{4})\mid x_{a,b}=1,\ldots,\infty\,\,\,(a,b\in A),\quad (x_{c},x_{d})\in\lambda_{A}\,\,\,(c,d\in\bar{A})\}
\eea
for $A\in\six$, where $\lambda_{A}$ are finite Young diagrams. Namely, we have six Young diagrams extending infinitely in the two directions in $A\in\six$ (see \eqref{eq:fig-solidpartitionsurface1} and \eqref{eq:fig-solidpartitionsurface2}). The set of boxes included in the boundaries is given as
\bea
\mathcal{B}_{\{\lambda_{A}\}_{A\in\six}}=\sum_{A\in\six}\mathcal{B}_{A,\lambda_{A}}-\mathcal{S}_{\{\lambda_{A}\}_{A\in\six}},
\eea
where $\mathcal{S}_{\{\lambda_{A}\}}$ is a finite set. The explicit formula of $\mathcal{S}_{\{\lambda_{A}\}}$ can be written but it is complicated so we do not write it here. Roughly speaking, the set $\sum_{A\in\six}\mathcal{B}_{A,\lambda_{A}}$ has contribution of boxes with double counting coming from the intersection of the six possible surfaces and the set $\mathcal{S}_{\{\lambda_{A}\}}$ removes such double counting. The surface boundary contribution is then given as 
\bea
\bfK^{\bd}=\sum_{\shcube\in\mathcal{B}_{\{\lambda_{A}\}}}\chi_{\four,x}(\hcube)=\sum_{A\in\six}\sum_{\Abox\in\lambda_{A}}\frac{\chi_{A,x}(\Bbox)}{\bfP_{A}}-\sum_{\shcube\in\mathcal{S}_{\{\lambda_{A}\}}}\chi_{\four,x}(\hcube)\eqqcolon \bfN_{\{\lambda_{A}\}_{A\in\six}}
\eea
The contour integral formula is then given as
\bea\label{eq:D8contourintegral-surface}
\mathcal{Z}_{k}=\frac{\mathcal{G}^{k}}{k!} \oint \prod_{I=1}^{k}\frac{dx_{I}}{2\pi\iota x_{I}}\prod_{I=1}^{k}\frac{1-Kx/x_{I}}{1-x/x_{I}}\prod_{I<J}\mathcal{A}_{\mathbb{C}^{4}}\left(\frac{x_{I}}{x_{J}}\right)^{-1}\prod_{I=1}^{k}\prod_{A\in\six}\prod_{\Abox\in\lambda_{A}}\mathscr{S}_{\bar{A}}\left(\frac{\chi_{A,x}(\Bbox)}{x_{I}}\right)^{-1}\prod_{I=1}^{k}\prod_{\shcube\in\mathcal{S}_{\{\lambda_{A}\}}}\mathcal{A}_{\mathbb{C}^{4}}\left(\frac{\chi_{\four,x}(\hcube)}{x_{I}}\right).
\eea
For the case, when we have only one surface, say $\lambda_{12}$, the set $\mathcal{S}_{\{\lambda_{A}\}}$ will be empty and the contour integral formula is simplified as
\bea
\mathcal{Z}_{k}=\frac{\mathcal{G}^{k}}{k!} \oint \prod_{I=1}^{k}\frac{dx_{I}}{2\pi\iota x_{I}}\prod_{I=1}^{k}\frac{1-Kx/x_{I}}{1-x/x_{I}}\prod_{I<J}\mathcal{A}_{\mathbb{C}^{4}}\left(\frac{x_{I}}{x_{J}}\right)^{-1}\prod_{I=1}^{k}\prod_{\Abox\in\lambda_{12}}\mathscr{S}_{34}\left(\frac{\chi_{12,x}(\Bbox)}{x_{I}}\right)^{-1}.
\eea
The poles are then classified by the position of boxes possible to add to the solid partition with boundary conditions and the partition function is given as
\bea\label{eq:D8surfacepartition}
&\mathcal{Z}=\sum_{\rho_{\reg}\in\mathcal{SP}_{\{\lambda_{A}\}}}\mathfrak{q}^{|\rho_{\reg}|}(-1)^{\sigma_{4}(\rho_{\reg})}\mathcal{Z}_{\four;4;\{\lambda_{A}\}}^{\D8}[\rho_{\reg},K],\\
&\mathcal{Z}_{\four;4;\{\lambda_{A}\}}^{\D8}[\rho_{\reg},K]=\mathbb{I}\left[-\bfN^{\vee}\bfK^{\reg}+\bfP_{\four}\bfN_{\{\lambda_{A}\}}^{\vee}\bfK^{\reg}+\bfP_{123}^{\vee}\bfK^{\reg\vee}\bfK^{\reg}\right].
\eea
Note here that we are identifying the elements $\rho_{\reg}\in\mathcal{SP}_{\{\lambda_{A}\}}$ with the set of boxes \textit{not} included in the boundary Young diagrams.

\paragraph{Hypersurface boundary conditions}
Let us next consider the hypersurface boundary conditions. In this case, we can explicitly compute the character $\bfK^{\bd}$. For example, consider the situation when we have the $k_{\bar{4}}$ hypersurfaces spanning the $123$-plane. The boundary contributions are given as
\bea
\bfK^{\bd}=\sum_{l=1}^{k_{\bar{4}}}\sum_{i,j,k=1}^{\infty}xq_{1}^{i-1}q_{2}^{j-1}q_{3}^{k-1}q_{4}^{l-1}=\frac{1}{\bfP_{123}}\sum_{l=1}^{k_{\bar{4}}}xq_{4}^{l-1}=\frac{x}{\bfP_{\four}}(1-q_{4}^{k_{\bar{4}}}).
\eea
The general situation when we have $k_{\bar{a}}$ hypersurfaces for the $\bar{a}$-plane, the boundary contributions are computed as 
\bea
\bfK^{\bd}=\frac{x}{\bfP_{\four}}(1-q_{1}^{k_{\bar{1}}}q_{2}^{k_{\bar{2}}}q_{3}^{k_{\bar{3}}}q_{4}^{k_{\bar{4}}})
\eea
where the computation is similar to the leg boundary conditions of the Young diagram and the surface boundary conditions of the plane partition. We then have
\bea
\mathbf{v}_{\text{inst.}}&=-\bfN^{\vee}\bfK^{\reg}+\bfP_{\four}\left(\frac{x(1-q_{1}^{k_{\bar{1}}}q_{2}^{k_{\bar{2}}}q_{3}^{k_{\bar{3}}}q_{4}^{k_{\bar{4}}})}{\bfP_{\four}}\right)^{\vee}\bfK^{\reg}+\bfP_{123}^{\vee}\bfK^{\reg\vee}\bfK^{\reg}\\
&=-\bfN^{\reg\vee}\bfK^{\reg}+\bfP_{123}^{\vee}\bfK^{\reg\vee}\bfK^{\reg}
\eea
where $\bfN^{\reg}=(q_{1}^{k_{\bar{1}}}q_{2}^{k_{\bar{2}}}q_{3}^{k_{\bar{3}}}q_{4}^{k_{\bar{4}}}-K)x$. The contour integral formula is then given as
\bea\label{eq:D8contourintegral-hypersurface}
\mathcal{Z}_{k}=\frac{\mathcal{G}^{k}}{k!} \oint \prod_{I=1}^{k}\frac{dx_{I}}{2\pi\iota x_{I}}\prod_{I=1}^{k}\frac{1-Kx/x_{I}}{1-q_{1}^{k_{\bar{1}}}q_{2}^{k_{\bar{2}}}q_{3}^{k_{\bar{3}}}q_{4}^{k_{\bar{4}}}x/x_{I}}\prod_{I<J}\mathcal{A}_{\mathbb{C}^{4}}\left(\frac{x_{I}}{x_{J}}\right)^{-1}.
\eea
The poles will be classified by a finite solid partition whose origin is shifted from $x$ to $q_{1}^{k_{\bar{1}}}q_{2}^{k_{\bar{2}}}q_{3}^{k_{\bar{3}}}q_{4}^{k_{\bar{4}}}x$. Effectively, the parameter $K$ is modified to $q_{1}^{-k_{\bar{1}}}q_{2}^{-k_{\bar{2}}}q_{3}^{-k_{\bar{3}}}q_{4}^{-k_{\bar{4}}}K$:
\bea\label{eq:D8hypersurfacepartition}
\mathcal{Z}=\sum_{\rho\in\mathcal{SP}}\mathfrak{q}^{|\rho|}(-1)^{\sigma_{4}(\rho)}\mathcal{Z}^{\D8}_{\four;4}[\rho\,,\,\,q_{1}^{-k_{\bar{1}}}q_{2}^{-k_{\bar{2}}}q_{3}^{-k_{\bar{3}}}q_{4}^{-k_{\bar{4}}}K].
\eea



\begin{remark}
Similar to the D6 setup, we note that the partition functions $\mathcal{Z}^{\D8}_{\four;4;\pi_{1}\pi_{2}\pi_{3}\pi_{4}}[\rho,K]$ and $\mathcal{Z}^{\D8}_{\four;4;\{\lambda_{A}\}}[\rho,K]$ introduced here differs with the one used in \cite{Nekrasov:2023nai,Cao:2019tnw,Cao:2019tvv,Monavari:2022rtf,Bae:2022pif,Bae:2024bpx} up to boundary contributions. The one used there is defined in a symmetric way as
\bea
\bfN^{\text{DT}4}_{\pi_{1}\pi_{2}\pi_{3}\pi_{4}}=\sum_{a\in\four}\sum_{\scube\in\pi_{a}}\frac{\chi_{\bar{a},x}(\cube)}{1-q_{a}},\quad \bfN^{\text{DT}4}_{\{\lambda_{A}\}}=\sum_{A\in\six}\sum_{\Abox\in\lambda_{A}}\frac{\chi_{A,x}(\Bbox)}{\bfP_{A}}.
\eea
The difference comes from the contributions at the intersection of the boundary contributions which are nonessential when considering the vertex contributions.
\end{remark}

\section{Free field realizations and vertex operators}\label{sec:freefieldvertexop}
In this section, we introduce vertex operators which reproduce the contour integral formulas where partitions with nontrivial boundary conditions appear. Let us first review the vertex operators introduced in \cite{Kimura:2023bxy}.

\begin{definition}[\cite{Kimura:2023bxy}]\label{def:vertex-op}
    We introduce the following vertex operators:
    \bea
    \mathsf{A}(x)&=\mathsf{a}_{0}(x):\exp\left(\sum_{n\neq 0}\mathsf{a}_{n}x^{-n}\right):,\quad  \mathsf{S}_{a}(x)=\mathsf{s}_{a,0}(x):\exp\left(\sum_{n\neq 0}\mathsf{s}_{a,n}x^{-n}\right):,\\
    \mathsf{X}_{A}(x)&=\mathsf{x}_{A,0}(x):\exp\left(\sum_{n\neq 0}\mathsf{x}_{A,n}x^{-n}\right):,\quad 
    \mathsf{W}_{\bar{a}}(x)=\mathsf{w}_{\bar{a},0}(x):\exp\left(\sum_{n\neq 0}\mathsf{w}_{\bar{a},n}x^{-n}\right):,\\
   \mathsf{Z}(x)&=\mathsf{z}_{0}(x):\exp\left(\sum_{n\neq 0}\mathsf{z}_{n}x^{-n}\right):,\quad [\mathsf{a}_{n},\mathsf{a}_{m}]=-\frac{1}{n}\mathbf{P}_{\four}^{[n]}\delta_{n+m,0},\quad \mathsf{a}_{n}=\begin{dcases}
       \bfP^{[-n]}_{a}\mathsf{s}_{a,n},\\
       \bfP^{[-n]}_{A}\mathsf{x}_{A,n},\\
       \bfP^{[-n]}_{\bar{a}}\mathsf{w}_{\bar{a},n},\\
       \bfP_{\four}^{[-n]}\mathsf{z}_{n}.
   \end{dcases}
    \eea
for $a\in\four,A\in\six$. The zero-modes $\mathsf{a}_{0}(x),\mathsf{s}_{a,0}(x),\mathsf{x}_{A,0}(x),\mathsf{w}_{\bar{a},0}(x),\mathsf{z}_{0}(x)$ are given in Appendix~\ref{app:zero-modes}.
\end{definition}
Physically, $\mathsf{A}(x),\mathsf{S}_{a}(x),\mathsf{X}_{A}(x),\mathsf{W}_{\bar{a}}(x),\mathsf{Z}(x)$ correspond with the D0, D2, D4, D6, and D8-branes. For later use, we also introduce the following vertex operator
\bea
\mathsf{Z}(K,x)={:\frac{\mathsf{Z}(x)}{\mathsf{Z}(Kx)}:}=\widetilde{\mathsf{z}}^{K}_{0}(x):\exp\left(\sum_{n\neq 0}\widetilde{\mathsf{z}}^{K}_{n}x^{-n}\right):,\quad \widetilde{\mathsf{z}}^{K}_{n}=(1-K^{-n})\mathsf{z}_{n}.
\eea
Physically, this corresponds with the $\U(1|1)$ $\D8\tbar\overline{\D8}$ magnificent four system where we need the antibranes so that the system is stabilized with the background flux~\cite{Witten:2000mf}.

Under the explicit zero-modes given in Appendix~\ref{app:zero-modes}, some of the operator product formulas when the arising factors are rational functions are given as follows.
\begin{proposition}[\cite{Kimura:2023bxy}]
The operator products of the operators $\mathsf{A}(x),\mathsf{S}_{\bar{a}}(x),\mathsf{X}_{A}(x)\,(A\in\six),\mathsf{W}_{\bar{a}}(x)\,(a\in\four),\mathsf{Z}(K,x)$ are
\bea\label{eq:contractions}
\mathsf{A}(x)\mathsf{S}_{a}(x')=g_{\bar{a}}\left(x'/x\right)^{-1}: \mathsf{A}(x)\mathsf{S}_{a}(x'):,&\quad \mathsf{S}_{a}(x')\mathsf{A}(x)=g_{\bar{a}}(q_{a}x/x'):\mathsf{A}(x)\mathsf{S}_{a}(x'):,\\
\mathsf{A}(x)\mathsf{X}_{A}(x')=\mathscr{S}_{\bar{A}}(x'/x)^{-1}:\mathsf{A}(x)\mathsf{X}_{A}(x'):,&\quad \mathsf{X}_{A}(x')\mathsf{A}(x)=\mathscr{S}_{\bar{A}}(q_{A}x/x')^{-1}:\mathsf{X}_{A}(x')\mathsf{X}_{A}(x):, \\
    \mathsf{A}(x)\mathsf{W}_{\bar{a}}(x')=\mathscr{V}_{a}\left(x'/x\right)^{-1}:\mathsf{A}(x)\mathsf{W}_{\bar{a}}(x'):,&\quad 
    \mathsf{W}_{\bar{a}}(x')\mathsf{A}(x)=q_{a}^{-1}\mathscr{V}_{a}(q_{a}^{-1}x/x'):\mathsf{W}_{\bar{a}}(x')\mathsf{A}(x):,\\
    \mathsf{S}_{a}(x)\mathsf{S}_{b}(x')=\mathscr{S}_{\overline{ab}}(q_{a}x'/x):\mathsf{S}_{a}(x)\mathsf{S}_{b}(x'):,&\quad 
        \mathsf{S}_{b}(x')\mathsf{S}_{a}(x)=\mathscr{S}_{\overline{ab}}(q_{b}x/x'):\mathsf{S}_{a}(x)\mathsf{S}_{b}(x'):,\\
        \mathsf{X}_{A}(x)\mathsf{S}_{c}(x')=\mathscr{V}_{d}\left(q_{A}x'/x\right)^{-1} : \mathsf{X}_{A}(x)\mathsf{S}_{c}(x'):  ,&\quad
        \mathsf{S}_{c}(x')\mathsf{X}_{A}(x)=q_{d}^{-1}\mathscr{V}_{d}\left(q_{d}^{-1}q_{A}^{-1}x/x'\right): \mathsf{X}_{A}(x)\mathsf{S}_{c}(x'):,\\
\mathsf{W}_{\bar{a}}(x)\mathsf{S}_{a}(x')=\frac{x'}{1-q_{a}^{-1}x'/x}:\mathsf{W}_{\bar{a}}(x)\mathsf{S}_{a}(x'):,&\quad 
\mathsf{S}_{a}(x')\mathsf{W}_{\bar{a}}(x)=\frac{-q_{a}x}{1-q_{a}x/x'}:\mathsf{W}_{\bar{a}}(x)\mathsf{S}_{a}(x'):,\\
\mathsf{Z}(K,x)\mathsf{A}(x')=K^{-1}\frac{1-x'/x}{1-K^{-1}x'/x}:\mathsf{Z}(K,x)\mathsf{A}(x'):,&\quad 
\mathsf{A}(x')\mathsf{Z}(K,x)=\frac{1-x/x'}{1-Kx/x'}:\mathsf{Z}(K,x)\mathsf{A}(x'):,
\eea
where the structure functions are given in \eqref{eq:structure-funct}.
\end{proposition}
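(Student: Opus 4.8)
Each identity is an instance of Wick's theorem (equivalently, Baker--Campbell--Hausdorff) for the Heisenberg algebra generated by the modes $\mathsf{a}_{n}$, so the plan is: (i) reduce every contraction to a single $c$-number power sum in the mode commutators; (ii) resum that power sum into a rational function via the plethystic exponential; (iii) track the zero modes separately. Write each operator as $\mathsf{V}(x)=\mathsf{v}_{0}(x)\,{:}\exp(\phi^{-}_{\mathsf{V}}(x))\exp(\phi^{+}_{\mathsf{V}}(x)){:}$ with annihilation part $\phi^{+}_{\mathsf{V}}(x)=\sum_{n>0}\mathsf{v}_{n}x^{-n}$ and creation part $\phi^{-}_{\mathsf{V}}(x)=\sum_{n<0}\mathsf{v}_{n}x^{-n}$. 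Using the rescalings $\mathsf{a}_{n}=\bfP^{[-n]}_{\bullet}(\cdot)_{n}$ and $[\mathsf{a}_{n},\mathsf{a}_{m}]=-\frac1n\bfP_{\four}^{[n]}\delta_{n+m,0}$, for any ordered pair $(\mathsf{V}_{1},\mathsf{V}_{2})$ from the list one finds $[\phi^{+}_{\mathsf{V}_{1}}(x),\phi^{-}_{\mathsf{V}_{2}}(x')]=-\sum_{n>0}\frac1n\,c^{\mathsf{V}_{1}\mathsf{V}_{2}}_{n}\,(x'/x)^{n}$, a pure number, where $c^{\mathsf{V}_{1}\mathsf{V}_{2}}_{n}$ is the ratio $\bfP_{\four}^{[n]}/(\bfP^{[n]}_{\bullet_{1}}\bfP^{[n]}_{\bullet_{2}})$ dressed by the $q$-monomial shifts built into the two zero modes --- e.g.\ $c^{\mathsf{A}\mathsf{S}_{a}}_{n}=\bfP^{[n]}_{\bar a}$, $c^{\mathsf{A}\mathsf{X}_{A}}_{n}=\bfP^{[n]}_{\bar A}$, $c^{\mathsf{A}\mathsf{W}_{\bar a}}_{n}=\bfP^{[n]}_{a}=1-q_{a}^{n}$, $c^{\mathsf{S}_{a}\mathsf{S}_{b}}_{n}=\bfP^{[n]}_{\overline{ab}}$, $c^{\mathsf{A}\mathsf{Z}(K,\cdot)}_{n}=1-K^{-n}$, and similarly for the remaining mixed pairs. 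BCH then gives $\mathsf{V}_{1}(x)\mathsf{V}_{2}(x')=\mathsf{v}_{1,0}(x)\,\mathsf{v}_{2,0}(x')\,\exp\!\big([\phi^{+}_{\mathsf{V}_{1}}(x),\phi^{-}_{\mathsf{V}_{2}}(x')]\big)\,{:}\mathsf{V}_{1}(x)\mathsf{V}_{2}(x'){:}$.

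For step (ii), resum using $\exp(\sum_{\ell\geq1}\frac1\ell f^{[\ell]}z^{\ell})=\PE[f z]$: since $c^{\mathsf{V}_{1}\mathsf{V}_{2}}_{n}$ is always the $n$-th Adams power of a fixed Laurent polynomial $\mathsf{c}^{\mathsf{V}_{1}\mathsf{V}_{2}}$ in the $q_{a}$'s (resp.\ of $1-K^{-1}$), one gets $\exp\!\big([\phi^{+}_{\mathsf{V}_{1}}(x),\phi^{-}_{\mathsf{V}_{2}}(x')]\big)=\PE[-\mathsf{c}^{\mathsf{V}_{1}\mathsf{V}_{2}}\,x'/x]$, a finite rational function of $x'/x$. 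For instance $\PE[-\bfP_{S}z]=\prod_{T\subseteq S}(1-q_{T}z)^{(-1)^{|T|}}$ with $q_{T}=\prod_{b\in T}q_{b}$; comparing with the definitions \eqref{eq:structure-funct} this is exactly $g_{\bar a}(z)^{-1}$, $\mathscr{S}_{\bar A}(z)^{-1}$, $\mathscr{V}_{a}(z)^{-1}$, or $\mathcal{A}_{\mathbb{C}^{4}}(z)$ in the respective rows, while $\PE[-(1-K^{-1})z]=\frac{1-z}{1-K^{-1}z}$. This accounts for the entire exponential factor in every line.

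For step (iii), substitute the explicit zero modes $\mathsf{a}_{0}(x),\mathsf{s}_{a,0}(x),\mathsf{x}_{A,0}(x),\mathsf{w}_{\bar a,0}(x),\mathsf{z}_{0}(x)$ from Appendix~\ref{app:zero-modes} (and $\widetilde{\mathsf{z}}^{K}_{0}(x)=\mathsf{z}_{0}(x)/\mathsf{z}_{0}(Kx)$); these are $c$-number valued, so $\mathsf{v}_{1,0}(x)\,\mathsf{v}_{2,0}(x')$ is literally the monomial prefactor displayed on the right-hand side ($1$, $g_{\bar a}(\cdot)$, $q_{a}^{-1}$, $-q_{a}x$, $x'$, $K^{-1}$, etc.), and the normal-ordered symbol ${:}\mathsf{V}_{1}\mathsf{V}_{2}{:}$ is by construction independent of the order of the factors. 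The opposite ordering in each row is handled the same way: $c^{\mathsf{V}_{2}\mathsf{V}_{1}}_{n}$ differs from $c^{\mathsf{V}_{1}\mathsf{V}_{2}}_{n}$ only by a factor $q_{\bullet}^{\mp n}$ --- the mismatch of the $q$-shifts in the two zero modes --- so $[\phi^{+}_{\mathsf{V}_{2}}(x'),\phi^{-}_{\mathsf{V}_{1}}(x)]$ resums to the same structure function evaluated at the reflected argument $q_{\bullet}x/x'$ rather than $x'/x$; this is precisely the asymmetry between $g_{\bar a}(x'/x)^{-1}$ and $g_{\bar a}(q_{a}x/x')$, between $\mathscr{S}_{\bar A}(x'/x)^{-1}$ and $\mathscr{S}_{\bar A}(q_{A}x/x')^{-1}$, and so on. Finally, the $\mathsf{Z}(K,x)$ rows follow from the $\mathsf{Z}$ contractions by the elementary identity $\mathsf{Z}(K,x)={:}\mathsf{Z}(x)\mathsf{Z}(Kx)^{-1}{:}$ together with $\widetilde{\mathsf{z}}^{K}_{n}=(1-K^{-n})\mathsf{z}_{n}$, which is the source of $c^{\mathsf{A}\mathsf{Z}(K,\cdot)}_{n}=1-K^{-n}$.

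The difficulty is not conceptual but a matter of disciplined bookkeeping: getting every zero-mode monomial right, \emph{including the signs} (e.g.\ the $-q_{a}x$ in $\mathsf{S}_{a}(x')\mathsf{W}_{\bar a}(x)$ and the prefactor $K^{-1}$ in $\mathsf{Z}(K,x)\mathsf{A}(x')$), pinning down the exact power of $q$ in each reflected argument, and fixing the expansion domain ($|x|\gg|x'|$ versus the reverse) consistently so that the two orderings are genuinely distinct rational functions related by the functional equations obeyed by $g_{\bullet},\mathscr{S}_{\bullet},\mathscr{V}_{\bullet}$. Since all structure functions are finite products, once the power-sum resummation and the explicit zero modes of Appendix~\ref{app:zero-modes} are in hand, what remains is a finite --- if lengthy --- row-by-row verification.
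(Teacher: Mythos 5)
Your overall strategy is the right one and is, in essence, the only proof these identities admit: the paper does not reprove the proposition (it is imported from \cite{Kimura:2023bxy}), and the computation there is exactly your steps (i)--(ii) --- reduce each contraction to $[\phi^{+}_{\mathsf{V}_1},\phi^{-}_{\mathsf{V}_2}]=-\sum_{n>0}\tfrac1n c_n (x'/x)^n$ with $c_n$ the $n$-th Adams image of a fixed Laurent polynomial, then resum via $\exp(-\sum_n\tfrac1n\mathbf{P}_S^{[n]}z^n)=\prod_{T\subseteq S}(1-q_Tz)^{(-1)^{|T|}}$ to recover $g_{\bar a}^{-1},\mathscr{S}_{\bar A}^{-1},\mathscr{V}_a^{-1}$. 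Your tabulated coefficients $c_n^{\mathsf{A}\mathsf{S}_a}=\mathbf{P}_{\bar a}^{[n]}$, $c_n^{\mathsf{A}\mathsf{W}_{\bar a}}=\mathbf{P}_a^{[n]}$, $c_n^{\mathsf{A}\mathsf{Z}}=1-K^{\pm n}$ all check out against $\mathsf{a}_n=\mathbf{P}^{[-n]}_{\bullet}(\cdot)_n$ and $[\mathsf{a}_n,\mathsf{a}_m]=-\tfrac1n\mathbf{P}_{\four}^{[n]}\delta_{n+m,0}$.

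The one genuine flaw is in step (iii): the zero modes of Appendix~\ref{app:zero-modes} are \emph{not} $c$-numbers. They are built from two Heisenberg pairs, $[\partial_{\mathsf{t}},\mathsf{t}_0]=[\widetilde\partial_{\mathsf{t}},\widetilde{\mathsf{t}}_0]=1$, and the monomial prefactors you list ($x'$ versus $-q_ax$ in the $\mathsf{W}_{\bar a}\mathsf{S}_a$ row, the $q_a^{-1}$, the $K^{-1}$) arise precisely from commuting, say, $e^{-\log q_a\,\partial_{\mathsf{t}}}$ past $x'^{-(\log q_a)^{-1}\mathsf{t}_0}$ when bringing the zero modes into normal order. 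If the zero modes commuted, as you assert, the product $\mathsf{v}_{1,0}(x)\mathsf{v}_{2,0}(x')$ would be identical for both orderings and could not account for the fact that the two sides of each row carry \emph{different} prefactors (which are, in each case, the two expansions of one and the same rational function in the two domains $|x|\gg|x'|$ and $|x'|\gg|x|$). So the asymmetry you attribute entirely to the reflected argument of the structure function in fact splits between the oscillator resummation and the noncommutative zero-mode exchange; your proof needs that exchange computed explicitly, row by row, to land the signs and powers of $q$ and $K$. With that correction the argument is complete and coincides with the intended one.
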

After using the definitions of the vertex operators in Def.~\ref{def:vertex-op} and the zero-modes in Appendix~\ref{app:zero-modes}, we also have the following relations:
\bea\label{eq:vertexoprelation}
\mathsf{A}(x)={:\frac{\mathsf{S}_{a}(x)}{\mathsf{S}_{a}(q_{a}x)}:},\quad  \mathsf{S}_{a}(x)=\mathsf{s}_{a,0}(x):\frac{\mathsf{X}_{ab}(x)}{\mathsf{X}_{ab}(q_{b}x)}:,\quad  \mathsf{X}_{ab}(x)={:\frac{\mathsf{W}_{abc}(x)}{\mathsf{W}_{abc}(q_{c}x)}:},\quad \mathsf{W}_{\bar{a}}(x)={:\frac{\mathsf{Z}(x)}{\mathsf{Z}(q_{a}x)}:}=\mathsf{Z}(q_{a},x)
\eea
and 
\bea\label{eq:vertexoprelation2}
\mathsf{A}(x)={\mathsf{a}_{0}(x):\frac{\mathsf{X}_{A}(x)\mathsf{X}_{A}(q_{A}x)}{\prod_{a\in A}\mathsf{X}_{A}(q_{a}x)}:}={\mathsf{a}_{0}(x):\frac{\mathsf{W}_{\bar{a}}(x)\prod_{i\in\bar{a}}\mathsf{W}_{\bar{a}}(q_{ia}^{-1}x)}{\mathsf{W}_{\bar{a}}(q_{a}^{-1}x)\prod_{i\in \bar{a}}\mathsf{W}_{\bar{a}}(q_{i}x)}:}=\mathsf{a}_{0}(x):\frac{\mathsf{Z}(x)^{2}\prod_{A\in\six}\mathsf{Z}(q_{A}x)}{\prod_{a\in\four}\mathsf{Z}(q_{a}x)\prod_{a\in\four}\mathsf{Z}(q_{a}^{-1}x)}:,
\eea
where $a,b,c,\in\four$ and $a\neq b\neq c$. For example, the relation between $\mathsf{A}(x)=:\mathsf{S}_{a}(x)/\mathsf{S}_{a}(q_{a}x):$ is obtained by using 
\bea
\mathsf{a}_{n}=(1-q_{a}^{-n})\mathsf{s}_{a,n},\quad \frac{\mathsf{s}_{a,0}(x)}{\mathsf{s}_{a,0}(q_{a}x)}=q_{a}^{-\mathsf{s}_{a,0}}=e^{\mathsf{a}_{0}}=\mathsf{a}_{0}(x).
\eea
We also will use the following relation in later sections:
\bea\label{eq:vertexoprelation3}
{:\frac{\mathsf{Z}(K,x)}{\mathsf{Z}(K,q_{a}x)}:}={:\frac{\mathsf{W}_{\bar{a}}(x)}{\mathsf{W}_{\bar{a}}(Kx)}:}.
\eea



The free field realizations of the contour integral formulas for the spiked instanton, tetrahedron instanton, magnificent four partition functions are obtained generally as follows.
\begin{proposition}[\cite{Kimura:2023bxy}]\label{prop:D0general}
    The $k$-instanton contribution to the partition function is given as\footnote{Just as usual partition functions, besides the non-perturbative contributions, we have classical and perturbative ones. The classical ones cannot be determined by the quantum algebraic structure and must be implemented by hand. The one-loop perturbative part may be included by using the vertex operators $\mathsf{V}_{i}(v_{i})$, but since we have the radial ordering of the vertex operators, we need to be careful of the analytic region of the spectral parameters. Since we are not interested in all of these aspects, we simply discard them in this paper.
}
    \bea
    \mathcal{Z}_{k}=\frac{\mathcal{G}^{k}}{k!} \oint \prod_{I=1}^{k}\frac{dx_{I}}{2\pi\iota x_{I}}\left\langle\prod_{I=1}^{k}\mathsf{A}(x_{I})^{-1}:\prod_{i}\mathsf{V}_{i}(v_{i}):\right\rangle
    \eea
    where $\mathsf{V}_{i}(x)$ is an operator written in $\{\mathsf{S}_{a}(x),\mathsf{X}_{A}(x),\mathsf{W}_{\bar{a}}(x),\mathsf{Z}(x)\}$ and $\langle \mathcal{O}\rangle=\bra{0}\mathcal{O}\ket{0}$. The product of the $\mathsf{A}$ operators is given in a specific order.
\end{proposition}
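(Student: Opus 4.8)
\emph{Proof strategy.} The plan is to verify the identity by computing the right-hand side directly: move every operator into normal order using the operator product expansions of \eqref{eq:contractions} (together with the $\mathsf{A}$--$\mathsf{A}$ contraction, which follows from the mode commutator $[\mathsf{a}_n,\mathsf{a}_m]=-\frac1n\bfP_{\four}^{[n]}\delta_{n+m,0}$), take the vacuum expectation value, and match the resulting rational function of $x_1,\dots,x_k$ and the framing parameters $v_i$ with the index $\mathbb{I}'[\mathbf{v}_{\text{inst.}}]$ that appears in the contour integral formula of the respective partition function.

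First I would fix, uniformly for the spiked, tetrahedron, and magnificent four cases, the dictionary between brane data and vertex operators: the $\D4_A$-brane framing is realized by $\mathsf{X}_A(v)$, the $\D6_{\bar{a}}$-brane framing by $\mathsf{W}_{\bar{a}}(v)$, and the $\D8$-brane framing by $\mathsf{Z}(K,v)$, so that $:\prod_i \mathsf{V}_i(v_i):$ is the normal-ordered product of the framing operators (and hence carries no internal contractions), while the D0-branes are realized by the factors $\mathsf{A}(x_I)^{-1}$ inserted to its left in the prescribed radial order. Since normal ordering places all annihilation modes to the right, $\langle :\mathcal{O}:\rangle$ equals the $c$-number (zero-mode) part of $\mathcal{O}$ for any product $\mathcal{O}$; therefore the expectation value factorizes into (i) the product of all $\mathsf{A}$--$\mathsf{A}$ and $\mathsf{A}$--(framing) contractions and (ii) the product of the zero-mode functions $\mathsf{a}_0(x_I)$, $\mathsf{x}_{A,0}(v)$, $\mathsf{w}_{\bar{a},0}(v)$, $\mathsf{z}_0(v)$.

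Second I would evaluate these contractions. A contraction of $\mathsf{A}(x_I)^{-1}$ against a framing operator produces, by \eqref{eq:contractions}, exactly the structure function ($g_{\bar{a}}$, $\mathscr{S}_{\bar{A}}$, $\mathscr{V}_a$, or the $\mathsf{Z}$--$\mathsf{A}$ ratio $K^{-1}\frac{1-x'/x}{1-K^{-1}x'/x}$) whose product over $I$ and over the framing weights reproduces $\mathbb{I}[-\bfP^{\vee}\bfN^{\vee}\bfK]$ term by term; a contraction of $\mathsf{A}(x_I)^{-1}$ against $\mathsf{A}(x_J)^{-1}$ produces a factor $\mathcal{A}_{\mathbb{C}^4}(x_I/x_J)^{-1}$, reproducing $\prod_{I<J}\mathcal{A}_{\mathbb{C}^4}(x_I/x_J)^{-1}=\mathbb{I}\bigl[\sqrt{\bfP_{\four}\bfK^{\vee}\bfK}\bigr]$; and the zero modes, combined with the $x_I^{-1}$ carried by $\mathsf{A}(x_I)^{-1}$ and the vacuum normalization, assemble into the prefactor $\mathcal{G}^k$ (consistent with its manifest permutation symmetry in $q_{1,2,3,4}$). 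Together with the $\tfrac1{k!}$ and the measure this reconstitutes $\mathcal{Z}_k$. One then checks that the regularization $\mathbb{I}'$ (removal of the divergent collision and unmovable terms) matches the normal-ordering prescription on the operator side, so that no spurious finite factor remains.

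The main obstacle I anticipate is the bookkeeping of the ordering of the operators $\mathsf{A}(x_I)^{-1}$: because the $\mathsf{A}$--$\mathsf{A}$ OPE is not symmetric as a rational function, the ``specific order'' in which the $\mathsf{A}$ operators are written is precisely what selects the chosen square-root branch $\sqrt{\bfP_{\four}\bfK^{\vee}\bfK}$ of \cite[eq.~(3.5.5), (3.4.2)]{Kimura:2023bxy}. Reorganizing the ordered double product of contractions into the symmetric $\prod_{I<J}$ form will require the functional relation obeyed by $\mathcal{A}_{\mathbb{C}^4}$ under $z\mapsto z^{-1}$, and this must be checked to be consistent across all three cases. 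A subsidiary point: in the magnificent four case the sign factor $(-1)^{\sigma_4(\rho)}$ does not enter the naive vertex computation — it only appears after taking residues and summing over solid partitions — so it plays no role in this contour-integral identity and can be deferred to the later sections.
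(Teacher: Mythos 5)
Your overall strategy is the right one, and it is essentially what the paper (and the cited reference \cite{Kimura:2023bxy}) does: the proposition is used here as a recalled result, and every later instance in section~\ref{sec:freefieldvertexop} is verified exactly by your recipe --- normal-order the radially ordered product using the contractions \eqref{eq:contractions}, take the vacuum expectation value, and read off the contour integrand, with the $\mathsf{A}$--$\mathsf{A}$ contractions supplying $\prod_{I<J}\mathcal{A}_{\mathbb{C}^{4}}(x_{I}/x_{J})^{-1}$ (the symmetric rewriting is unproblematic because $\mathcal{A}_{\mathbb{C}^{4}}(x^{-1})=\mathcal{A}_{\mathbb{C}^{4}}(x)$, so the ordering only affects the expansion region and hence the pole prescription, not the integrand). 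Your remark that the sign factor $(-1)^{\sigma_{4}(\rho)}$ is invisible at the level of the contour integral is also correct.

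There is one concrete error in the bookkeeping: you claim that the zero modes of the vertex operators, together with the $x_{I}^{-1}$ and the vacuum normalization, ``assemble into the prefactor $\mathcal{G}^{k}$.'' They do not, and if they did the formula would acquire $\mathcal{G}^{2k}$, since $\mathcal{G}^{k}$ already sits outside the correlator in the statement. The zero-mode pairings (between $e^{\mathsf{t}_{0}}$-type and $\partial_{\mathsf{t}}$-type modes) only produce the monomial prefactors such as $x'$, $-q_{a}x$, $K^{-1}$ visible in \eqref{eq:contractions}; compare, e.g., the explicit evaluation $\langle \prod_{I}\mathsf{A}(x_{I})^{-1}\mathsf{X}_{12}(q_{2}^{k_{12}}q_{1}^{l_{12}}x)\rangle=\prod_{I}\mathscr{S}_{34}(\cdot)\prod_{I<J}\mathcal{A}_{\mathbb{C}^{4}}(\cdot)^{-1}$ in section~\ref{sec:D4contourfreefield}, which contains no $\mathcal{G}$. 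The factor $\mathcal{G}$ is the regularized \emph{diagonal} ($I=J$) contribution of $\sqrt{\bfP_{\four}\bfK^{\vee}\bfK}$ after the divergent collision terms are stripped by $\mathbb{I}'$ (the footnote's $\mathbb{I}[\mathbf{v}_{\text{inst.}}-k]$); since each $\mathsf{A}(x_{I})$ is itself normal-ordered, no self-contraction can generate it, and it must be inserted by hand as the explicit prefactor. Correcting this, your argument closes.
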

The vertex operators $\mathsf{V}_{i}(x)$ correspond with the framing bundles and they determine how the multi-dimensional partitions expand. Moreover, they have a one-to-one correspondence with the vacuum configuration. To obtain the contour integrals given in section~\ref{sec:D4partitionfunction}, \ref{sec:D6partitionfunction}, \ref{sec:D8partitionfunction}, we need to introduce the boundary conditions to the operators\footnote{This is called the highest weight in the context of quantum algebra.} $\mathsf{V}_{i}(x)$. In the following subsections, we classify the corresponding highest weights for each configurations.

\subsection{D4 partition functions with boundary conditions}\label{sec:D4contourfreefield}
\paragraph{One-leg}
Let us consider the one-leg boundary condition first.
The vacuum configuration when we have one boundary condition comes from the following figure and the highest weight is given as
\begin{equation}\label{eq:D4oneboundcondfigure}
        \adjustbox{valign=c}{\begin{tikzpicture}[scale=0.7]
    
        \fill[white!60!red] (0,0)--(3.8,0)--(3.8,0.8)--(0,0.8)--(0,0);
        \fill[white!60!red] (0,0)--(0.6,0)--(0.6,0.8)--(0,0.8)--(0,0);
        \draw [decorate,decoration = {mirror, brace}] (3.84,0.02)--(3.84,0.78);
        \node [right] at (3.84,0.45){$\,\,k_{12}$};
        
        \draw[->] (0,-0.5)--(0,2.8);
        \draw[->] (-0.5,0)--(4.3,0);
        \node[right] at (4.3,0){$q_{1}$};
        \node[above] at (0,2.8) {$q_{2}$};


        \draw (0,0.2)--(3.8,0.2);
        \draw (0,0.4)--(3.8,0.4);
        \draw (0,0.6)--(3.8,0.6);
        \draw (0,0.8)--(3.8,0.8);
        \end{tikzpicture}
        }\quad =\quad :\mathsf{X}_{12}(x)\prod_{i=1}^{\infty}\prod_{j=1}^{k_{12}}\mathsf{A}^{-1}(xq_{1}^{i-1}q_{2}^{j-1}):
\end{equation}
The infinite product can be regularized properly as
\bea\label{eq:D4oneboundaryregularize}
:\mathsf{X}_{12}(x)\prod_{i=1}^{\infty}\prod_{j=1}^{k_{12}}\mathsf{A}^{-1}(xq_{1}^{i-1}q_{2}^{j-1}):&={:\mathsf{X}_{12}(x)\prod_{j=1}^{k_{12}}\mathsf{S}_{1}(xq_{2}^{j-1})^{-1}:},
\eea
where we used \eqref{eq:vertexoprelation} and \eqref{eq:vertexoprelation2} as
\bea\label{eq:D4D2vertexrelation}
\mathsf{A}(x)={:\frac{\mathsf{S}_{a}(x)}{\mathsf{S}_{a}(q_{a}x)}:},\quad {:\prod_{i=1}^{\infty}\mathsf{A}^{-1}(q_{a}^{i-1}x):}={:\prod_{i=1}^{\infty}\frac{\mathsf{S}_{a}(q_{a}^{i}x)}{\mathsf{S}_{a}(q_{a}^{i-1}x)}:}={:\mathsf{S}_{a}(x)^{-1}:}.
\eea
Actually, we further can simplify the vertex operator as
\bea
{:\prod_{j=1}^{k_{12}}\mathsf{s}_{1,0}(xq_{2}^{j-1})\frac{\mathsf{X}_{12}(x)}{\prod_{j=1}^{k}\mathsf{S}_{1}(xq_{2}^{j-1})}:}={:\mathsf{X}_{12}(q_{2}^{k_{12}}x):}
\eea
up to extra zero-modes. The zero-modes have no contractions with the $\mathsf{A}(x)$ operators and do not affect the pole structure and thus the highest weight is effectively $\mathsf{X}_{12}(q_{2}^{k_{12}}x)$. Namely, the D4-D2 highest weight simply shifts the Coulomb branch parameter.
\paragraph{Two-legs}
When there are two legs, the highest weight is given as follows:
\begin{equation}\label{eq:D4twoboundcondfigure}
        \adjustbox{valign=c}{\begin{tikzpicture}[scale=0.6]
    
        \fill[white!60!blue] (0,0)--(0,3.8)--(0.6,3.8)--(0.6,0)--(0,0);
        \fill[white!60!red] (0,0)--(3.8,0)--(3.8,0.8)--(0,0.8)--(0,0);
        \fill[white!80!gray] (0,0)--(0.6,0)--(0.6,0.8)--(0,0.8)--(0,0);
        \draw [decorate,decoration = {brace}] (0.02,3.84)--(0.58,3.84);
        \draw [decorate,decoration = {mirror, brace}] (3.84,0.02)--(3.84,0.78);
        \node [right] at (3.84,0.45){$\,k_{12}$};
        \node[above] at (0.4,3.84){$\,\,\,l_{12}$};
        
        \draw[->] (0,-0.5)--(0,4.3);
        \draw[->] (-0.5,0)--(4.3,0);
        \node[right] at (4.3,0){$q_{1}$};
        \node[above] at (0,4.3) {$q_{2}$};

        \draw (0.2,3.8)--(0.2,0);
        \draw (0.4,3.8)--(0.4,0);
        \draw (0.6,3.8)--(0.6,0);

        \draw (0,0.2)--(3.8,0.2);
        \draw (0,0.4)--(3.8,0.4);
        \draw (0,0.6)--(3.8,0.6);
        \draw (0,0.8)--(3.8,0.8);
        \end{tikzpicture}
        }={:\mathsf{X}_{12}(x)\frac{\prod_{\Abox\in l_{12}}\mathsf{
A}^{-1}(\chi_{12,x}(\Bbox))\prod_{\Abox\in k_{12}}\mathsf{
A}^{-1}(\chi_{12,x}(\Bbox))}{\prod_{\Abox\in l_{12}\cap k_{12}}\mathsf{
A}^{-1}(\chi_{12,x}(\Bbox))}:}
\end{equation}
Note that in this case, the two stacks intersect at the origin and we have to be careful of the double counting. The infinite product above can be regularized similarly as
\bea
&{:\mathsf{X}_{12}(x)\prod_{i=1}^{\infty}\prod_{j=1}^{k_{12}}\mathsf{A}^{-1}(xq_{1}^{i-1}q_{2}^{j-1})\prod_{j=1}^{\infty}\prod_{i=1}^{l_{12}}\mathsf{A}^{-1}(xq_{1}^{i-1}q_{2}^{j-1})\prod_{i=1}^{l_{12}}\prod_{j=1}^{k_{12}}\mathsf{A}(xq_{1}^{i-1}q_{2}^{j-1}):}\\
=&{:\mathsf{X}_{12}(x)\prod_{i=1}^{l_{12}}\mathsf{S}_{2}(xq_{2}^{k_{12}}q_{1}^{i-1})^{-1}\prod_{j=1}^{k_{12}}\mathsf{S}_{1}(xq_{2}^{j-1})^{-1}:}={:\mathsf{X}_{12}(x)\prod_{i=1}^{l_{12}}\mathsf{S}_{2}(xq_{1}^{i-1})^{-1}\prod_{j=1}^{k_{12}}\mathsf{S}_{1}(xq_{1}^{l_{12}}q_{2}^{j-1})^{-1}:}
\eea
When rewriting the infinite product into finite product of $\mathsf{S}_{1,2}(x)$, we need to choose an ordering to rewrite it, which corresponds to how we decompose the stack of boxes of the boundary conditions into a one-dimensional rod. Two typical examples are the following configurations:
\bea
\label{eq:D4twoboundcondfigure2}
        \adjustbox{valign=c}{\begin{tikzpicture}[scale=0.7]
    
        \fill[white!60!blue] (0,0)--(0,3.8)--(0.6,3.8)--(0.6,0)--(0,0);
        \fill[white!60!red] (0,0)--(3.8,0)--(3.8,0.8)--(0,0.8)--(0,0);
        \fill[white!60!red] (0,0)--(0.6,0)--(0.6,0.8)--(0,0.8)--(0,0);
        \draw [decorate,decoration = {brace}] (0.02,3.84)--(0.58,3.84);
        \draw [decorate,decoration = {mirror, brace}] (3.84,0.02)--(3.84,0.78);
        \node [right] at (3.84,0.5){$\,\,k_{12}$};
        \node[above] at (0.35,3.84){$\,\,\,l_{12}$};
        
        \draw[->] (0,-0.5)--(0,4.3);
        \draw[->] (-0.5,0)--(4.3,0);
        \node[right] at (4.3,0){$q_{1}$};
        \node[above] at (0,4.3) {$q_{2}$};

        \draw (0.2,3.8)--(0.2,0.8);
        \draw (0.4,3.8)--(0.4,0.8);
        \draw (0.6,3.8)--(0.6,0.8);

        \draw (0,0.2)--(3.8,0.2);
        \draw (0,0.4)--(3.8,0.4);
        \draw (0,0.6)--(3.8,0.6);
        \draw (0,0.8)--(3.8,0.8);
        \node at (1.9,-1.3){\scalebox{1.3}{${:\frac{\mathsf{X}_{12}(x)}{\prod\limits_{i=1}^{l_{12}}\mathsf{S}_{2}(xq_{2}^{k_{12}}q_{1}^{i-1})\prod\limits_{j=1}^{k_{12}}\mathsf{S}_{1}(xq_{2}^{j-1})}:}$}};
        \end{tikzpicture}
        }\qquad \qquad 
        \adjustbox{valign=c}{\begin{tikzpicture}[scale=0.7]
    
        \fill[white!60!blue] (0,0)--(0,3.8)--(0.6,3.8)--(0.6,0)--(0,0);
        \fill[white!60!red] (0,0)--(3.8,0)--(3.8,0.8)--(0,0.8)--(0,0);
        \fill[white!60!blue] (0,0)--(0.6,0)--(0.6,0.8)--(0,0.8)--(0,0);
        \draw [decorate,decoration = {brace}] (0.02,3.84)--(0.58,3.84);
        \draw [decorate,decoration = {mirror, brace}] (3.84,0.02)--(3.84,0.78);
        \node [right] at (3.84,0.5){$\,\,k_{12}$};
        \node[above] at (0.35,3.84){$\,\,\,l_{12}$};
        
        \draw[->] (0,-0.5)--(0,4.3);
        \draw[->] (-0.5,0)--(4.3,0);
        \node[right] at (4.3,0){$q_{1}$};
        \node[above] at (0,4.3) {$q_{2}$};

        \draw (0.2,3.8)--(0.2,0);
        \draw (0.4,3.8)--(0.4,0);
        \draw (0.6,3.8)--(0.6,0);

        \draw (0.6,0.2)--(3.8,0.2);
        \draw (0.6,0.4)--(3.8,0.4);
        \draw (0.6,0.6)--(3.8,0.6);
        \draw (0.6,0.8)--(3.8,0.8);
        \node at (1.9,-1.3){\scalebox{1.3}{${:\frac{\mathsf{X}_{12}(x)}{\prod_{i=1}^{l}\mathsf{S}_{2}(xq_{1}^{i-1})\prod_{j=1}^{k_{12}}\mathsf{S}_{1}(xq_{1}^{l_{12}}q_{2}^{j-1})}:}$}};
        \end{tikzpicture}}
\eea
We further can simplify the highest weight as
\bea\label{eq:D4-D2simplication}
{:\frac{\mathsf{X}_{12}(x)}{\prod\limits_{i=1}^{l}\mathsf{S}_{2}(xq_{2}^{k_{12}}q_{1}^{i-1})\prod\limits_{j=1}^{k_{12}}\mathsf{S}_{1}(xq_{2}^{j-1})}:}\propto {:\mathsf{X}_{12}(x)\prod_{i=1}^{l_{12}}\frac{\mathsf{X}_{12}(xq_{1}^{i})}{\mathsf{X}_{12}(xq_{1}^{i-1})}\prod_{j=1}^{k_{12}}\frac{\mathsf{X}_{12}(xq_{1}^{l_{12}}q_{2}^{j})}{\mathsf{X}_{12}(xq_{1}^{l_{12}}q_{2}^{j-1})}:}=\mathsf{X}_{12}(q_{1}^{l_{12}}q_{2}^{k_{12}}x)
\eea
up to zero-modes and thus the highest weight is effectively $\mathsf{X}_{12}(q_{2}^{k_{12}}q_{1}^{l_{12}}x)$. After this simplification, the highest weight will not depend on the expressions in terms of the $\mathsf{S}_{1,2}$ operators.

\begin{proposition}[D4 two-legs]
    The highest weight of the D4 partition function with two nontrivial leg boundary conditions specified by $k_{12},l_{12}\in\mathbb{Z}_{\geq 0}$ is $\mathsf{X}_{12}(xq_{2}^{k_{12}}q_{1}^{l_{12}})$ and the free field realization of the contour integral formula is given as
    \bea
    \mathcal{Z}_{k}&=\frac{\mathcal{G}^{k}}{k!} \oint \prod_{I=1}^{k}\frac{dx_{I}}{2\pi\iota x_{I}}\left\langle \prod_{I=1}^{k}\mathsf{A}(x_{I})^{-1}\mathsf{X}_{12}(q_{2}^{k_{12}}q_{1}^{l_{12}}x)   \right\rangle\\
    &=\frac{\mathcal{G}^{k}}{k!}\oint\prod_{I=1}^{k}\frac{dx_{I}}{2\pi\iota x_{I}}\prod_{I=1}^{k}\mathscr{S}_{34}\left(\frac{q_{2}^{k_{12}}q_{1}^{l_{12}}x}{x_{I}}\right)\prod_{I<J}\mathcal{A}_{\mathbb{C}^{4}}\left(\frac{x_{I}}{x_{J}}\right)^{-1}.
    \eea
\end{proposition}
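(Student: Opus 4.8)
The plan is to establish first the form of the highest weight and then the contour-integral formula, the latter following from the former together with Proposition~\ref{prop:D0general} and the operator product formulas \eqref{eq:contractions}. For the highest weight I would promote the computation sketched in \eqref{eq:D4twoboundcondfigure}--\eqref{eq:D4-D2simplication} to a proof: start from the geometric Ansatz for the $\D4_{12}$ brane dressed by two perpendicular semi-infinite stacks of D0-branes, with inclusion--exclusion over the overlapping $l_{12}\times k_{12}$ rectangle, i.e.\ $:\mathsf{X}_{12}(x)\,\prod_{\Abox\in l_{12}}\mathsf{A}^{-1}\,\prod_{\Abox\in k_{12}}\mathsf{A}^{-1}/\prod_{\Abox\in l_{12}\cap k_{12}}\mathsf{A}^{-1}:$ with arguments $\chi_{12,x}(\Bbox)$; regularize the two infinite legs column by column via the telescoping identity \eqref{eq:D4D2vertexrelation}, ${:\prod_{i\geq1}\mathsf{A}^{-1}(q_a^{i-1}x):}={:\mathsf{S}_a(x)^{-1}:}$, turning the infinite product into a finite product of $\mathsf{S}_1,\mathsf{S}_2$ operators for one choice of rod decomposition as in \eqref{eq:D4twoboundcondfigure2}; and then collapse this finite product back to a single operator using $\mathsf{S}_a(x)\propto{:\mathsf{X}_{ab}(x)/\mathsf{X}_{ab}(q_bx):}$ from \eqref{eq:vertexoprelation}, so that it telescopes to $\mathsf{X}_{12}(xq_1^{l_{12}}q_2^{k_{12}})$ up to zero-mode prefactors, exactly as in \eqref{eq:D4-D2simplication}. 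Since those prefactors commute with every oscillator $\mathsf{a}_n$ ($n\neq0$), they never contract with the $\mathsf{A}(x_I)$ insertions and hence affect neither the poles in the $x_I$ nor the residues, so the effective highest weight is $\mathsf{X}_{12}(xq_2^{k_{12}}q_1^{l_{12}})$, independently of the rod decomposition.

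Feeding this into Proposition~\ref{prop:D0general} with $\mathsf{V}=\mathsf{X}_{12}(xq_2^{k_{12}}q_1^{l_{12}})$ gives the first displayed line. For the second, I would evaluate the correlator by normal ordering: moving each $\mathsf{A}(x_I)^{-1}$ past $\mathsf{X}_{12}(x')$, with $x'=xq_2^{k_{12}}q_1^{l_{12}}$, produces $\mathscr{S}_{34}(x'/x_I)$ --- the reciprocal of the $\mathsf{A}$--$\mathsf{X}_{12}$ contraction $\mathscr{S}_{\bar{A}}(x'/x_I)^{-1}$ of \eqref{eq:contractions} (with $A=12$, $\bar{A}=34$), the reciprocal appearing because $\mathsf{A}$ is inverted --- and moving it past each $\mathsf{A}(x_J)^{-1}$ with $J>I$ produces $\mathcal{A}_{\mathbb{C}^{4}}(x_I/x_J)^{-1}$ from the $\mathsf{A}$--$\mathsf{A}$ contraction. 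The surviving normal-ordered exponential has unit vacuum expectation value and the zero-mode prefactors combine trivially, the overall $\mathcal{G}^k$ having already been extracted in Proposition~\ref{prop:D0general}; what remains is exactly the asserted integrand. As a consistency check, this coincides with the contour integral obtained directly in Section~\ref{sec:D4partitionfunction} by the Coulomb-parameter shift $x\to xq_2^{k_{12}}q_1^{l_{12}}$, cf.\ \eqref{eq:D4-partition}.

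The main obstacle is the regularization underlying the first step: one must give rigorous meaning to the doubly-infinite product of $\mathsf{A}^{\pm1}$ along the two legs and verify that the zero-mode remnants in \eqref{eq:D4-D2simplication} are genuinely inert --- that they commute with the whole Heisenberg subalgebra (so they never contract with the $\mathsf{A}(x_I)^{\pm1}$), carry no $x_I$-dependence (so they pull out of the contour integral as an overall constant), and are insensitive to the choice of one-dimensional rod decomposition of the boundary D0-stack, so that the two orderings in \eqref{eq:D4twoboundcondfigure2} --- and any other --- yield the same effective highest weight. Once this bookkeeping is in place, the second step is a routine application of \eqref{eq:contractions}.
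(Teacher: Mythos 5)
Your proposal is correct and follows essentially the same route as the paper: the highest weight is obtained by the inclusion--exclusion Ansatz of \eqref{eq:D4twoboundcondfigure}, regularized leg-by-leg via \eqref{eq:D4D2vertexrelation} and collapsed to $\mathsf{X}_{12}(xq_{1}^{l_{12}}q_{2}^{k_{12}})$ up to inert zero-modes as in \eqref{eq:D4-D2simplication}, after which the contour integral follows from Proposition~\ref{prop:D0general} and the contractions \eqref{eq:contractions}. The paper treats the zero-mode and decomposition-independence issues exactly as you flag them, so no gap remains.
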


\subsection{D6 partition functions with boundary conditions}\label{sec:D6contourfreefield}
\subsubsection{Surface boundary conditions} Let us first consider the highest weight condition when we have surface boundary conditions.
For example, when we have $k_{12}$-surfaces in the 12-plane, the highest weight is given as
\bea
\adjustbox{valign=c}{
\begin{tikzpicture}[scale=0.15]
\draw[->] (1,0)--(13,0);
\draw[->] (1,0)--(1,9);
\draw[->] (1,0)--(-6.5,-7.5);
\node[right] at (13,0){$2$};
\node[above] at (1,9){$3$};
\node[below left] at (-6.5,-7.5){$1$};
\xysurfacevar{1}{1}{1}{6}{10};
\xysurfacevar{1}{1}{2}{6}{10};
\xysurfacevar{1}{1}{3}{6}{10};
\xysurfacevar{1}{1}{4}{6}{10};
\end{tikzpicture}}\quad =\quad {:\mathsf{W}_{\bar{4}}(x)\prod_{i,j=1}^{\infty}\prod_{k=1}^{k_{12}}\mathsf{A}^{-1}(xq_{1}^{i-1}q_{2}^{j-1}q_{3}^{k-1}):}.
\eea
Using the relations in \eqref{eq:vertexoprelation}, \eqref{eq:vertexoprelation2}, we have
\bea
&{:\prod_{i,j=1}^{\infty}\mathsf{A}(xq_{1}^{i-1}q_{2}^{j-1}):}\simeq {:\prod_{i,j=1}^{\infty}\frac{\mathsf{X}_{12}(xq_{1}^{i-1}q_{2}^{j-1})\mathsf{X}_{12}(xq_{1}^{i}q_{2}^{j})}{\mathsf
{X}_{12}(q_{1}^{i}q_{2}^{j-1}x)\mathsf{X}_{12}(xq_{1}^{i-1}q_{2}^{j})}:}=\mathsf{X}_{12}(x),
\eea
and
\bea
 {:\mathsf{W}_{\bar{4}}(x)\prod_{i,j=1}^{\infty}\prod_{k=1}^{k_{12}}\mathsf{A}^{-1}(xq_{1}^{i-1}q_{2}^{j-1}q_{3}^{k-1}):}\simeq {:\frac{\mathsf{W}_{\bar{4}}(x)}{\prod_{k=1}^{k_{12}}\mathsf{X}_{12}(xq_{3}^{k-1})}:}.
\eea
Note that the equality is up to zero-modes but since the zero-modes will not affect the contraction with $\mathsf{A}(x)$, we can effectively use the right hand side as the highest weight. The right hand side can be further simplified as
\bea
{:\frac{\mathsf{W}_{\bar{4}}(x)}{\prod_{k=1}^{k_{12}}\mathsf{X}_{12}(xq_{3}^{k-1})}:}=\mathsf{W}_{\bar{4}}(xq_{3}^{k_{12}}),
\eea
where we used \eqref{eq:vertexoprelation}. 

For the most general surface boundary conditions, a similar computation gives the highest weight 
\bea\label{eq:D6surface-vertop-simple}
\adjustbox{valign=c}{
\begin{tikzpicture}[scale=0.15]
\draw[->] (1,0)--(13,0);
\draw[->] (1,0)--(1,9);
\draw[->] (1,0)--(-5.5,-6.5);
\node[right] at (13,0){$2$};
\node[above] at (1,9){$3$};
\node[below left] at (-5.5,-6.5){$1$};

\zxsurfacevar{1}{1}{1}{7}{6};
\zxsurfacevar{1}{2}{1}{7}{6};
\zxsurfacevar{1}{3}{1}{7}{6};

\xysurfacevar{1}{4}{1}{6}{8};
\xysurfacevar{1}{4}{2}{6}{8};

\yzsurfacevar{1}{4}{3}{8}{5};
\yzsurfacevar{1}{4}{3}{8}{5};

\end{tikzpicture}}=\mathsf{W}_{\bar{4}}(xq_{1}^{k_{23}}q_{2}^{k_{13}}q_{3}^{k_{12}}).
\eea
\begin{proposition}
    The highest weight giving the D6 partition function with nontrivial surface boundary conditions specified by $k_{23},k_{13},k_{12}\in\mathbb{Z}_{\geq 0}$ is
    \bea
   \mathsf{W}_{\bar{4}}(xq_{1}^{k_{23}}q_{2}^{k_{13}}q_{3}^{k_{12}}).
    \eea
    The free field realization of the contour integral formulas for D6 partition functions with surface boundary conditions are given as
    \bea
\mathcal{Z}_{k}&=\frac{\mathcal{G}^{k}}{k!} \oint\prod_{I=1}^{k}\frac{dx_{I}}{2\pi\iota x_{I}}\left\langle \prod_{I=1}^{k}\mathsf{A}(x_{I})^{-1}\mathsf{W}_{\bar{4}}(xq_{1}^{k_{23}}q_{2}^{k_{13}}q_{3}^{k_{12}}) \right \rangle\\
&=\frac{\mathcal{G}^{k}}{k!}\oint \frac{dx_{I}}{2\pi\iota x_{I}}\prod_{I=1}^{k}\mathscr{V}_{4}\left(\frac{xq_{1}^{k_{23}}q_{2}^{k_{13}}q_{3}^{k_{12}}}{x_{I}}\right)\prod_{I<J}\mathcal{A}_{\mathbb{C}^{4}}\left(\frac{x_{I}}{x_{J}}\right)^{-1}
    \eea
    which reproduces \eqref{eq:D6surfacecontourintegral}. 
\end{proposition}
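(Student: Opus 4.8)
The plan is to repeat the reasoning used above for the D4 one- and two-leg highest weights, now for the D6$_{123}$ operator $\mathsf{W}_{\bar{4}}$ dressed by surface boundary data. The starting point is the identification (illustrated by \eqref{eq:D6surface-vertop-simple}) of the surface-boundary vacuum configuration with the formally divergent operator
\[
:\mathsf{W}_{\bar{4}}(x)\prod_{\scube\in\pi_{\bd}}\mathsf{A}^{-1}\bigl(\chi_{\bar{4},x}(\cube)\bigr):,
\]
where $\pi_{\bd}$ consists of $k_{23}$ copies of the $23$-plane, $k_{13}$ copies of the $13$-plane and $k_{12}$ copies of the $12$-plane, the boxes along the mutual intersections being counted with the inclusion--exclusion multiplicities that produce $\sum_{\scube\in\pi_{\bd}}\chi_{\bar{4},x}(\cube)=\bfK_{\bar{4}}^{\bd}=\frac{x}{\bfP_{123}}\bigl(1-q_{1}^{k_{23}}q_{2}^{k_{13}}q_{3}^{k_{12}}\bigr)$. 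The first and essential step is to regularize this product and show that it collapses, up to zero-modes, to $\mathsf{W}_{\bar{4}}(xq_{1}^{k_{23}}q_{2}^{k_{13}}q_{3}^{k_{12}})$.

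For a single surface type --- say only the $k_{12}$ layers in the $12$-plane --- this is precisely the computation already carried out before \eqref{eq:D6surface-vertop-simple}: one collapses the doubly infinite product over the two asymptotic directions using the special case $\mathsf{A}(y)=\mathsf{a}_{0}(y):\frac{\mathsf{X}_{12}(y)\mathsf{X}_{12}(q_{1}q_{2}y)}{\mathsf{X}_{12}(q_{1}y)\mathsf{X}_{12}(q_{2}y)}:$ of \eqref{eq:vertexoprelation2}, obtaining $\prod_{k=1}^{k_{12}}\mathsf{X}_{12}(xq_{3}^{k-1})^{-1}$ up to zero-modes, and then telescopes it against $\mathsf{W}_{\bar{4}}(x)$ via $\mathsf{X}_{12}(y)=:\mathsf{W}_{\bar{4}}(y)/\mathsf{W}_{\bar{4}}(q_{3}y):$ from \eqref{eq:vertexoprelation}. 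For the general configuration one cannot simply iterate this direction by direction, since the three surface stacks overlap along their mutual intersections and the inclusion--exclusion multiplicities built into $\pi_{\bd}$ must be kept; the clean way around this is to pass to the oscillator modes. Using $\mathsf{a}_{n}=\bfP_{123}^{[-n]}\mathsf{w}_{\bar{4},n}$ from Definition~\ref{def:vertex-op}, the oscillator exponent of the operator above is $\sum_{n\neq 0}\mathsf{w}_{\bar{4},n}\bigl(x^{-n}-\bfP_{123}^{[-n]}S_{n}\bigr)$, where $S_{n}$ is the power-sum of $\bfK_{\bar{4}}^{\bd}$ obtained by the plethystic substitution $x\to x^{-n},\ q_{i}\to q_{i}^{-n}$; substituting the explicit $\bfK_{\bar{4}}^{\bd}$ gives $\bfP_{123}^{[-n]}S_{n}=x^{-n}\bigl(1-q_{1}^{-nk_{23}}q_{2}^{-nk_{13}}q_{3}^{-nk_{12}}\bigr)$ and hence $x^{-n}-\bfP_{123}^{[-n]}S_{n}=\bigl(xq_{1}^{k_{23}}q_{2}^{k_{13}}q_{3}^{k_{12}}\bigr)^{-n}$, which is exactly the oscillator exponent of $\mathsf{W}_{\bar{4}}(xq_{1}^{k_{23}}q_{2}^{k_{13}}q_{3}^{k_{12}})$. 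The remaining zero-modes are $x$-dependent monomials with no contraction against any $\mathsf{A}(x_{I})$, so they leave the $x_{I}$-pole structure untouched and the effective highest weight is $\mathsf{W}_{\bar{4}}(xq_{1}^{k_{23}}q_{2}^{k_{13}}q_{3}^{k_{12}})$.

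Given the highest weight, the free field realization of the contour integral is an immediate application of Proposition~\ref{prop:D0general} with $\mathsf{V}=\mathsf{W}_{\bar{4}}(xq_{1}^{k_{23}}q_{2}^{k_{13}}q_{3}^{k_{12}})$, which yields $\mathcal{Z}_{k}=\frac{\mathcal{G}^{k}}{k!}\oint\prod_{I=1}^{k}\frac{dx_{I}}{2\pi\iota x_{I}}\langle\prod_{I=1}^{k}\mathsf{A}(x_{I})^{-1}\mathsf{W}_{\bar{4}}(xq_{1}^{k_{23}}q_{2}^{k_{13}}q_{3}^{k_{12}})\rangle$. The vacuum expectation value is then evaluated with the operator products \eqref{eq:contractions} exactly as in the boundary-free D6 case of \cite{Kimura:2023bxy}, the only difference being the replacement $x\to xq_{1}^{k_{23}}q_{2}^{k_{13}}q_{3}^{k_{12}}$ of the Coulomb parameter: each $\mathsf{A}(x_{I})^{-1}$ contributes a factor $\mathscr{V}_{4}\bigl(xq_{1}^{k_{23}}q_{2}^{k_{13}}q_{3}^{k_{12}}/x_{I}\bigr)$ on passing $\mathsf{W}_{\bar{4}}$, the mutual contractions of the $\mathsf{A}(x_{I})^{-1}$ give $\prod_{I<J}\mathcal{A}_{\mathbb{C}^{4}}(x_{I}/x_{J})^{-1}$, and the leftover normal-ordered operator has trivial vacuum expectation value while the zero-mode prefactors combine with $\mathcal{G}^{k}$ as in the boundary-free computation. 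This reproduces \eqref{eq:D6surfacecontourintegral}.

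The step I expect to be the genuine obstacle is making the regularization of the triply infinite product rigorous: the naive per-direction telescoping over-counts the boxes lying in more than one surface stack, so one must verify that the inclusion--exclusion multiplicities cancel consistently inside the normal ordering. The oscillator-level reformulation reduces all of this to the single identity $x^{-n}-\bfP_{123}^{[-n]}S_{n}=\bigl(xq_{1}^{k_{23}}q_{2}^{k_{13}}q_{3}^{k_{12}}\bigr)^{-n}$, so in practice the only delicate points are the convergence of the regularized products and the bookkeeping of the $[n]$ versus $[-n]$ conventions fixed in Definition~\ref{def:vertex-op}.
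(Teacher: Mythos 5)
Your proof is correct, and its overall architecture (identify the effective highest weight, feed it into Prop.~\ref{prop:D0general}, contract) matches the paper's; the one step where you genuinely diverge is the regularization of the triply infinite product of $\mathsf{A}^{-1}$'s. The paper handles this at the operator level: it collapses each doubly infinite stack into a finite product of $\mathsf{X}_{A}^{-1}$'s via \eqref{eq:vertexoprelation2} and then telescopes against $\mathsf{W}_{\bar{4}}$ via \eqref{eq:vertexoprelation}, dealing with the overlaps of the three surface stacks by choosing an ordering of the surfaces and shifting the arguments of the later $\mathsf{X}_{A}$'s accordingly (made explicit in Prop.~\ref{prop:D6surface_qq}, where the result is noted to be ordering-independent). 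So your remark that one cannot simply iterate direction by direction is slightly overstated: one can, provided the origins are shifted as one goes, exactly as in the D4 two-leg case. Your alternative --- descending to the oscillator modes and using the already-regularized boundary character $\bfK_{\bar{4}}^{\bd}=\frac{x}{\bfP_{123}}\bigl(1-q_{1}^{k_{23}}q_{2}^{k_{13}}q_{3}^{k_{12}}\bigr)$ together with $\mathsf{a}_{n}=\bfP_{123}^{[-n]}\mathsf{w}_{\bar{4},n}$ to reduce everything to the single identity $x^{-n}-\bfP_{123}^{[-n]}\bigl(\bfK_{\bar{4}}^{\bd}\bigr)^{[-n]}=\bigl(xq_{1}^{k_{23}}q_{2}^{k_{13}}q_{3}^{k_{12}}\bigr)^{-n}$ --- is a clean and manifestly ordering-independent route to the same answer, since the inclusion--exclusion multiplicities are already encoded in the character. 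What it buys is uniformity; what it costs is that it only works when $\bfK^{\bd}$ has a closed form (which is precisely the surface case), whereas the paper's telescoping also produces the intermediate $\mathsf{X}_{A}^{-1}$ presentation of the highest weight that is reused later when building the $qq$-characters. The second half of your argument, the contraction bookkeeping producing the $\mathscr{V}_{4}$ and $\mathcal{A}_{\mathbb{C}^{4}}^{-1}$ factors, coincides with the paper's.
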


\subsubsection{Leg boundary conditions}
\paragraph{One-leg}Let us determine the highest weight when we have one-leg boundary condition. The highest weight is given as
\bea
\adjustbox{valign=c}{
\begin{tikzpicture}[scale=0.15]
\draw[->] (1,0)--(13,0);
\node[right] at (13,0){$2$};
\draw[->] (1,0)--(1,9);
\node[above] at (1,9){$3$};
\draw[->] (1,0)--(-5.5,-6.5);
\node[below left] at (-5.5,-6.5){$1$};

\zstickvar{1}{1}{1}{7};
\zstickvar{2}{1}{1}{7};
\zstickvar{3}{1}{1}{7};
\zstickvar{4}{1}{1}{7};
\zstickvar{5}{1}{1}{7};
\zstickvar{1}{2}{1}{7};
\zstickvar{2}{2}{1}{7};
\zstickvar{3}{2}{1}{7};
\zstickvar{4}{2}{1}{7};
\zstickvar{1}{3}{1}{7};
\zstickvar{2}{3}{1}{7};
\zstickvar{3}{3}{1}{7};
\zstickvar{1}{4}{1}{7};
\zstickvar{2}{4}{1}{7};
\zstickvar{1}{5}{1}{7};
\end{tikzpicture}}\quad = \quad :\mathsf{W}_{\bar{4}}(x)\prod_{(i,j)\in\nu}\prod_{k=1}^{\infty}\mathsf{A}(xq_{1}^{i-1}q_{2}^{j-1}q_{3}^{k-1})^{-1}:.
\eea
The infinite product can be regularized as
\bea
{:\mathsf{W}_{\bar{4}}(x)\prod_{(i,j)\in\nu}\prod_{k=1}^{\infty}\mathsf{A}(xq_{1}^{i-1}q_{2}^{j-1}q_{3}^{k-1})^{-1}:}={:\mathsf{W}_{4}(x)\prod_{\Abox\in\nu}\mathsf{S}_{3}(\chi_{12,x}(\Bbox))^{-1}:}
\eea
where we used \eqref{eq:D4D2vertexrelation}.
\begin{proposition}[D6 one-leg]
The free field realization of the contour integral of the D6 one-leg partition function is given as
\bea
\mathcal{Z}_{k}&=\frac{\mathcal{G}^{k}}{k!} \oint \prod_{I=1}^{k}\frac{dx_{I}}{2\pi\iota x_{I}} \left\langle \prod_{I=1}^{k}\mathsf{A}(x_{I})^{-1} :\frac{\mathsf{W}_{\bar{4}}(x)}{\prod_{\Abox \in \nu}\mathsf{S}_{3}(\chi_{12,x}(\Bbox))}:\right\rangle\\
&= \frac{\mathcal{G}^{k}}{k!} \prod_{I=1}^{k}\frac{dx_{I}}{2\pi\iota x_{I}} \prod_{I<J}\mathcal{A}_{\mathbb{C}^{4}}\left(\frac{x_{I}}{x_{J}}\right)^{-1}\prod_{I=1}^{k}\mathscr{V}_{4}\left(\frac{x}{x_{I}}\right) \prod_{I=1}^{k}\prod_{\Abox \in\nu} g_{\bar{3}}\left(\frac{\chi_{12,x}(\Bbox)}{x_{I}}\right)^{-1}.
\eea
\end{proposition}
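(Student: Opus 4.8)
The plan is to reduce the claim to Proposition~\ref{prop:D0general} together with a purely free-field computation. By that proposition, once we identify the correct highest-weight operator $\mathsf{V}(x)$ attached to the single $\D6_{123}$-brane carrying a one-leg asymptotic Young diagram $\nu$ in the $q_3$-direction, the partition function is the contour integral of $\left\langle\prod_{I}\mathsf{A}(x_I)^{-1}\,\mathsf{V}(x)\right\rangle$. The heuristic brane picture dictates $\mathsf{V}(x)={:}\mathsf{W}_{\bar4}(x)\prod_{(i,j)\in\nu}\prod_{k\ge1}\mathsf{A}(xq_1^{i-1}q_2^{j-1}q_3^{k-1})^{-1}{:}$; the first step is to make this well defined by collapsing the semi-infinite product over $k$ via the telescoping identity in \eqref{eq:D4D2vertexrelation} (a consequence of \eqref{eq:vertexoprelation}), which yields the finite expression $\mathsf{V}(x)={:}\mathsf{W}_{\bar4}(x)\prod_{\Abox\in\nu}\mathsf{S}_3(\chi_{12,x}(\Bbox))^{-1}{:}$ already written above.

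The second step is the evaluation of $\left\langle\prod_{I=1}^{k}\mathsf{A}(x_I)^{-1}\,\mathsf{V}(x)\right\rangle$ by normal-ordering. Since every $\mathsf{A}(x_I)^{-1}$ stands to the left of the normal-ordered block $\mathsf{V}(x)$, only the ``left'' operator products of \eqref{eq:contractions} are needed, and the vacuum expectation value factorizes into a product of pairwise contractions. Concretely, $\mathsf{A}(x_I)^{-1}\mathsf{W}_{\bar4}(x)$ contributes $\mathscr{V}_4(x/x_I)$ (inverting $\mathsf{A}$ flips the exponent in the $\mathsf{A}\mathsf{W}_{\bar4}$ product); $\mathsf{A}(x_I)^{-1}\mathsf{S}_3(\chi_{12,x}(\Bbox))^{-1}$ contributes $g_{\bar3}(\chi_{12,x}(\Bbox)/x_I)^{-1}$, since inverting \emph{both} operators leaves the contraction factor unchanged; and $\mathsf{A}(x_I)^{-1}\mathsf{A}(x_J)^{-1}$ for $I<J$ contributes $\mathcal{A}_{\mathbb{C}^4}(x_I/x_J)^{-1}$, exactly the pattern already encoded in Proposition~\ref{prop:D0general}. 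Collecting these factors, together with the overall $\mathcal{G}^{k}/k!$ bookkeeping, reproduces \eqref{eq:D6onelegcontourintegral} verbatim.

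The main obstacle is the regularization of the semi-infinite product and the control of the leftover zero modes. One must check that $\prod_{k\ge1}\mathsf{A}(q_3^{k-1}y)^{-1}={:}\mathsf{S}_3(y)^{-1}{:}$ holds both at the level of the nonzero oscillator modes (where it is a genuine telescoping sum, convergent in the appropriate formal sense) and at the level of the zero modes, and that the residual zero-mode discrepancy, as in the D4 discussion around \eqref{eq:vertexoprelation2}, has no contraction with the $\mathsf{A}(x_I)$ operators and therefore neither shifts the poles nor the residues of the contour integral. A secondary point to verify is that the factorization of $\left\langle\,\cdot\,\right\rangle$ into pairwise contractions, and the irrelevance of the mutual ordering of the $\mathsf{A}(x_I)^{-1}$'s once the non-movable poles are discarded, follow from the standard free-field structure of Definition~\ref{def:vertex-op}; with these in hand the comparison with the contour integral \eqref{eq:D6onelegcontourintegral} is immediate.
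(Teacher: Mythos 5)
Your proposal is correct and follows essentially the same route as the paper: identify the one-leg highest weight as the semi-infinite product of $\mathsf{A}^{-1}$ over the cylinder above $\nu$, collapse it to $:\mathsf{W}_{\bar4}(x)\prod_{\Abox\in\nu}\mathsf{S}_{3}(\chi_{12,x}(\Bbox))^{-1}:$ via \eqref{eq:D4D2vertexrelation}, and then read off the integrand from the pairwise contractions of \eqref{eq:contractions} as in Proposition~\ref{prop:D0general}. Your sign bookkeeping for the inverted operators ($\mathsf{A}^{-1}\mathsf{W}_{\bar4}$ giving $\mathscr{V}_4$ and $\mathsf{A}^{-1}\mathsf{S}_3^{-1}$ giving $g_{\bar3}^{-1}$) and your remark that the residual zero modes do not contract with the $\mathsf{A}(x_I)$ and hence do not affect the pole structure match the paper's treatment.
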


\paragraph{Two-legs}
The highest weight when there are two nontrivial boundary conditions is given as
\bea\label{eq:D6two-legs-hw0}
\adjustbox{valign=c}{
\begin{tikzpicture}[scale=0.15]
\draw[->] (1,0)--(13,0);
\node[right] at (13,0){$2$};
\draw[->] (1,0)--(1,9);
\node[above] at (1,9){$3$};
\draw[->] (1,0)--(-5.5,-6.5);
\node[below left] at (-5.5,-6.5){$1$};

\ystickvar{1}{1}{1}{10};
\ystickvar{1}{1}{2}{10};
\ystickvar{1}{1}{3}{10};
\ystickvar{2}{1}{1}{10};
\ystickvar{2}{1}{2}{10};

\xstickvar{3}{1}{1}{4};
\xstickvar{3}{1}{2}{4};
\xstickvar{2}{1}{3}{4.5};
\xstickvar{3}{2}{1}{4};
\xstickvar{3}{2}{2}{4};
\xstickvar{3}{3}{1}{4};
\end{tikzpicture}}\quad =\quad {:\mathsf{W}_{\bar{4}}(x)\frac{\prod\limits_{\scube\in\mathcal{B}_{\lambda}}\mathsf{A}^{-1}(\chi_{\bar{4},x}(\cube))\prod\limits_{\scube\in\mathcal{B}_{\mu}}\mathsf{A}^{-1}(\chi_{\bar{4},x}(\cube))}{\prod\limits_{\scube\in\mathcal{B}_{\lambda\cap\mu}}\mathsf{A}^{-1}(\chi_{\bar{4},x}(\cube))}:}.
\eea
The infinite product appearing in $\mathcal{B}_{\lambda,\mu}$ can be regularized as
\bea
{:\prod_{\scube\in\mathcal{B}_{\lambda}}\mathsf{A}^{-1}(\chi_{\bar{4},x}(\cube)):}={:\prod_{\Abox\in\lambda}\mathsf{S}_{1}(\chi_{23,x}(\Bbox))^{-1}:},\quad {:\prod_{\scube\in\mathcal{B}_{\mu}}\mathsf{A}^{-1}(\chi_{\bar{4},x}(\cube)):}={:\prod_{\Abox\in\mu}\mathsf{S}_{2}(\chi_{13,x}(\Bbox))^{-1}:}.
\eea
A different realization is obtained by using \eqref{eq:D4twoboundcondfigure2} for each layer and for example, we have
\bea\label{eq:D4tensor-op}
:\frac{\mathsf{W}_{\bar{4}}(x)}{\prod_{i=1}^{\ell(\lambda^{\rmT})}\prod_{j=1}^{\lambda_{i}^{\rmT}}\mathsf{S}_{1}(q_{2}^{j-1}q_{3}^{i-1}x)\prod_{i=1}^{\ell(\mu)}\prod_{j=1}^{\mu_{i}}\mathsf{S}_{2}(q_{1}^{j-1}q_{2}^{\lambda_{i}^{\rmT}}q_{3}^{i-1}x)}:.
\eea
\begin{proposition}[D6 two-legs]
The contour integral of the D6 two-legs partition function has the free field realization:
\bea
\mathcal{Z}_{k}&=\frac{\mathcal{G}^{k}}{k!}\oint \prod_{I=1}^{k}\frac{dx_{I}}{2\pi\iota x_{I}}\left\langle \prod_{I=1}^{k}\mathsf{A}(x_{I})^{-1} {:\mathsf{W}_{\bar{4}}(x)\frac{\prod\limits_{\scube\in\mathcal{B}_{\lambda}}\mathsf{A}^{-1}(\chi_{\bar{4},x}(\cube))\prod\limits_{\scube\in\mathcal{B}_{\mu}}\mathsf{A}^{-1}(\chi_{\bar{4},x}(\cube))}{\prod\limits_{\scube\in\mathcal{B}_{\lambda\cap\mu}}\mathsf{A}^{-1}(\chi_{\bar{4},x}(\cube))}:} \right \rangle\\
&=\frac{\mathcal{G}^{k}}{k!}\oint \prod_{I=1}^{k}\frac{dx_{I}}{2\pi\iota x_{I}} \left\langle \prod_{I=1}^{k}\mathsf{A}(x_{I})^{-1} :\frac{\mathsf{W}_{\bar{4}}(x)}{\prod_{i=1}^{\ell(\lambda^{\rmT})}\prod_{j=1}^{\lambda_{i}^{\rmT}}\mathsf{S}_{1}(q_{2}^{j-1}q_{3}^{i-1}x)\prod_{i=1}^{\ell(\mu)}\prod_{j=1}^{\mu_{i}}\mathsf{S}_{2}(q_{1}^{j-1}q_{2}^{\lambda_{i}^{\rmT}}q_{3}^{i-1}x)}:\right\rangle\\
&=\frac{\mathcal{G}^{k}}{k!}\oint \prod_{I=1}^{k}\frac{dx_{I}}{2\pi\iota x_{I}}  \prod_{I<J}\mathcal{A}_{\mathbb{C}^{4}}\left(\frac{x_{I}}{x_{J}}\right)^{-1}\prod_{I=1}^{k}\mathscr{V}_{4}\left(\frac{x}{x_{I}}\right)\prod_{I=1}^{k}\prod_{l=1}^{\ell(\lambda^{\rmT})}\prod_{j=1}^{\lambda_{i}^{\rmT}}g_{\bar{1}}\left(\frac{q_{2}^{j-1}q_{3}^{i-1}x}{x_{I}}\right)^{-1} \prod_{I=1}^{k}\prod_{i=1}^{\ell(\mu)}\prod_{j=1}^{\mu_{i}}g_{\bar{2}}\left(\frac{q_{3}^{i-1}q_{1}^{j-1}q_{2}^{\lambda_{i}^{\rmT}}x}{x_{I}}\right)^{-1},
\eea
which reproduces \eqref{eq:D6twolegscontourintegral} after computation.
\end{proposition}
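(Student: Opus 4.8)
The plan is to prove the chain of three equalities from left to right: the first line is essentially the definition of the two-legs highest weight, the second is a regularization of the doubly-infinite product of D0 operators into a finite product of D2 operators, and the third is a direct evaluation of the vacuum expectation value via \eqref{eq:contractions}; a short telescoping argument then identifies the outcome with \eqref{eq:D6twolegscontourintegral}.

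\emph{First equality.} By Proposition~\ref{prop:D0general} together with the leg-boundary prescription used above, the vacuum configuration carrying asymptotic Young diagrams $\lambda$ (extending along the $q_{1}$-leg, i.e. living in the $23$-plane) and $\mu$ (extending along the $q_{2}$-leg, i.e. living in the $13$-plane) is the normal-ordered operator obtained by dressing $\mathsf{W}_{\bar{4}}(x)$ with one factor $\mathsf{A}^{-1}(\chi_{\bar{4},x}(\cube))$ for every $\cube\in\mathcal{B}_{\lambda}\cup\mathcal{B}_{\mu}$, the union being written through inclusion--exclusion, whence the quotient by the boxes of $\mathcal{B}_{\lambda\cap\mu}$ where the two stacks overlap. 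Substituting this operator for $:\prod_{i}\mathsf{V}_{i}(v_{i}):$ in Proposition~\ref{prop:D0general} yields the first line, the $x_{I}$ being the residue variables of the remaining $k$ copies of $\mathsf{A}^{-1}$.

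\emph{Second equality.} Slice $\mathcal{B}_{\lambda}\cup\mathcal{B}_{\mu}=\mathcal{B}_{\lambda}\sqcup(\mathcal{B}_{\mu}\setminus\mathcal{B}_{\lambda\cap\mu})$ along the $q_{3}$-axis. On the $i$-th slice the $\mathcal{B}_{\lambda}$-part is a disjoint union of $\lambda_{i}^{\rmT}$ columns, each semi-infinite in the $q_{1}$-direction with base point $q_{2}^{j-1}q_{3}^{i-1}x$ for $j=1,\dots,\lambda_{i}^{\rmT}$, while the $\mathcal{B}_{\mu}\setminus\mathcal{B}_{\lambda\cap\mu}$-part is a disjoint union of $\mu_{i}$ rows, each semi-infinite in the $q_{2}$-direction with base point $q_{1}^{j-1}q_{2}^{\lambda_{i}^{\rmT}}q_{3}^{i-1}x$ for $j=1,\dots,\mu_{i}$, the shift by $q_{2}^{\lambda_{i}^{\rmT}}$ clearing the overlap with the columns. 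Applying the regularization identity \eqref{eq:D4D2vertexrelation}, ${:\prod_{m\geq1}\mathsf{A}^{-1}(q_{a}^{m-1}y):}={:\mathsf{S}_{a}(y)^{-1}:}$, to each such column and row replaces the infinite dressing by the finite operator \eqref{eq:D4tensor-op}. As in the D4 discussion around \eqref{eq:D4-D2simplication}, this rewriting holds up to zero-mode prefactors, which have no contraction with the dynamical $\mathsf{A}(x_{I})$ and therefore affect neither the pole structure nor the residues; in particular the resulting finite operator is independent of the chosen dissection of the L-shaped slices, which is the operator counterpart of the character identity \eqref{eq:D6twolegscharacter}. I expect this to be the delicate step: the only genuine work is to confirm that the zero-mode ambiguities introduced by the regularization remain inconsequential for the contour integrand.

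\emph{Third equality and matching.} Insert \eqref{eq:D4tensor-op} into the VEV of Proposition~\ref{prop:D0general}, whose $\mathcal{G}^{k}/k!\oint$ prefactor is already in place, and commute the $k$ operators $\mathsf{A}(x_{I})^{-1}$ past the highest weight using \eqref{eq:contractions}: the contraction of $\mathsf{A}(x_{I})^{-1}$ with $\mathsf{W}_{\bar{4}}(x)$ contributes $\mathscr{V}_{4}(x/x_{I})$, with each $\mathsf{S}_{1}(\cdot)^{-1}$ contributes $g_{\bar{1}}(\cdot/x_{I})^{-1}$, with each $\mathsf{S}_{2}(\cdot)^{-1}$ contributes $g_{\bar{2}}(\cdot/x_{I})^{-1}$, with each other $\mathsf{A}(x_{J})^{-1}$ contributes $\mathcal{A}_{\mathbb{C}^{4}}(x_{I}/x_{J})^{-1}$ in the prescribed order, and the leftover normal-ordered operator has trivial vacuum expectation value. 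This produces the third line. To identify it with \eqref{eq:D6twolegscontourintegral}, note that the relation $\mathsf{S}_{a}(x)=\mathsf{s}_{a,0}(x):\mathsf{X}_{ab}(x)/\mathsf{X}_{ab}(q_{b}x):$ of \eqref{eq:vertexoprelation} forces $g_{\bar{a}}(z)=\mathscr{S}_{\overline{ab}}(z)/\mathscr{S}_{\overline{ab}}(q_{b}z)$ for any $b\neq a$, so the $j$-products over the columns and rows of each slice telescope,
\bea
\prod_{j=1}^{\lambda_{i}^{\rmT}}g_{\bar{1}}\!\left(\frac{q_{2}^{j-1}q_{3}^{i-1}x}{x_{I}}\right)^{-1}&=\frac{\mathscr{S}_{34}(q_{2}^{\lambda_{i}^{\rmT}}q_{3}^{i-1}x/x_{I})}{\mathscr{S}_{34}(q_{3}^{i-1}x/x_{I})},\\
\prod_{j=1}^{\mu_{i}}g_{\bar{2}}\!\left(\frac{q_{1}^{j-1}q_{2}^{\lambda_{i}^{\rmT}}q_{3}^{i-1}x}{x_{I}}\right)^{-1}&=\frac{\mathscr{S}_{34}(q_{1}^{\mu_{i}}q_{2}^{\lambda_{i}^{\rmT}}q_{3}^{i-1}x/x_{I})}{\mathscr{S}_{34}(q_{2}^{\lambda_{i}^{\rmT}}q_{3}^{i-1}x/x_{I})},
\eea
and their product over $i$ collapses, the intermediate factors cancelling, to $\prod_{i}\mathscr{S}_{34}(q_{1}^{\mu_{i}}q_{2}^{\lambda_{i}^{\rmT}}q_{3}^{i-1}x/x_{I})\,\mathscr{S}_{34}(q_{3}^{i-1}x/x_{I})^{-1}$, which is \eqref{eq:D6twolegscontourintegral} once the product range is harmlessly extended to $i=1,\dots,\max(\ell(\mu),\ell(\lambda^{\rmT}))$. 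Equivalently, this last identity is nothing but $\mathbb{I}[\,\cdot\,]$ applied to the character identity \eqref{eq:D6twolegscharacter}.
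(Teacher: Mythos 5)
Your proposal is correct and follows essentially the route the paper intends: the paper states this proposition without a separate proof, relying on the preceding identification of the two-leg highest weight, its regularization into the finite $\mathsf{S}_{1,2}$ product \eqref{eq:D4tensor-op}, and the contraction rules \eqref{eq:contractions}, which is exactly what you spell out. Your telescoping of the $g_{\bar{1}},g_{\bar{2}}$ products into $\mathscr{S}_{34}$ ratios is the operator-level counterpart of the character identity \eqref{eq:D6twolegscharacter} that the paper summarizes as ``after computation,'' and your handling of the zero-mode ambiguity matches the paper's stance that such factors do not affect the contractions or pole structure.
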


\paragraph{Three-legs}
The highest weight when there are three nontrivial boundary conditions is given as
\bea
\adjustbox{valign=c}{
\begin{tikzpicture}[scale=0.15]
\draw[->] (1,0)--(13,0);
\node[right] at (13,0){$2$};
\draw[->] (1,0)--(1,9);
\node[above] at (1,9){$3$};
\draw[->] (1,0)--(-5.5,-6.5);
\node[below left] at (-5.5,-6.5){$1$};

\zstickvar{1}{1}{1}{7};
\zstickvar{2}{1}{1}{7};

\ystickvar{1}{2}{1}{9};
\ystickvar{1}{2}{2}{9};
\ystickvar{1}{2}{3}{9};
\ystickvar{2}{2}{1}{9};
\ystickvar{2}{2}{2}{9};

\xstickvar{3}{1}{1}{4};
\xstickvar{3}{1}{2}{4};
\xstickvar{3}{1}{3}{4};
\xstickvar{3}{2}{1}{4};
\xstickvar{3}{2}{2}{4};
\xstickvar{3}{3}{1}{4};



\end{tikzpicture}}\quad =\quad {:\mathsf{W}_{\bar{4}}(x)\prod_{\scube\in\mathcal{B}_{\lambda\mu\nu}}\mathsf{A}^{-1}(\chi_{\bar{4},x}(\cube)):}.
\eea
The infinite product can be regularized as
\bea
{:\mathsf{W}_{\bar{4}}(x)\prod_{\scube\in\mathcal{B}_{\lambda\mu\nu}}\mathsf{A}^{-1}(\chi_{\bar{4},x}(\cube)):}={:\frac{\mathsf{W}_{\bar{4}}(x)}{\prod\limits_{\Abox\in\lambda}\mathsf{S}_{1}(\chi_{23,x}(\Bbox))\prod\limits_{\Abox\in\mu}\mathsf{S}_{2}(\chi_{13,x}(\Bbox))\prod\limits_{\Abox\in\nu}\mathsf{S}_{3}(\chi_{12,x}(\Bbox))}\prod_{\scube \in \mathcal{S}_{\lambda\mu\nu}}\mathsf{A}(\chi_{\bar{4},x}(\cube)):}.
\eea

\begin{proposition}[D6 three-legs]
The contour integral formula for the three-legs partition function has a free field realization:
\bea
\mathcal{Z}_{k}&=\frac{\mathcal{G}^{k}}{k!} \oint \prod_{I=1}^{k}\frac{dx_{I}}{2\pi\iota x_{I}} \left\langle \prod_{I=1}^{k}\mathsf{A}^{-1}(x_{I}):\mathsf{W}_{\bar{4}}(x)\prod_{\scube\in\mathcal{B}_{\lambda\mu\nu}}\mathsf{A}^{-1}(\chi_{\bar{4},x}(\cube)): \right\rangle\\
&=\frac{\mathcal{G}^{k}}{k!}\oint \prod_{I=1}^{k}\frac{dx_{I}}{2\pi\iota x_{I}} \left\langle \prod_{I=1}^{k}\mathsf{A}^{-1}(x_{I}):\frac{\mathsf{W}_{\bar{4}}(x)}{\prod\limits_{\Abox\in\lambda}\mathsf{S}_{1}(\chi_{23,x}(\Bbox))\prod\limits_{\Abox\in\mu}\mathsf{S}_{2}(\chi_{13,x}(\Bbox))\prod\limits_{\Abox\in\nu}\mathsf{S}_{3}(\chi_{12,x}(\Bbox))}\prod_{\scube \in \mathcal{S}_{\lambda\mu\nu}}\mathsf{A}(\chi_{\bar{4},x}(\cube)): \right\rangle\\
&=\frac{\mathcal{G}^{k}}{k!}\oint \prod_{I=1}^{k}\frac{dx_{I}}{2\pi\iota x_{I}} \prod_{I<J}\mathcal{A}_{\mathbb{C}^{4}}\left(\frac{x_{I}}{x_{J}}\right)^{-1} \prod_{I=1}^{k}\mathscr{V}_{4}\left(\frac{x}{x_{I}}\right)\prod_{I=1}^{k}\prod_{\scube \in \mathcal{S}_{\lambda\mu\nu}}\mathcal{A}_{\mathbb{C}^{4}}\left(\frac{\chi_{\bar{4},x}(\cube)}{x_{I}}\right)\\
&\quad \times \prod_{I=1}^{k}\left(\prod_{\Abox \in\lambda}g_{\bar{1}}\left(\frac{\chi_{23,x}(\Bbox)}{x_{I}}\right)^{-1}\prod_{\Abox \in\mu}g_{\bar{2}}\left(\frac{\chi_{13,x}(\Bbox)}{x_{I}}\right)^{-1}\prod_{\Abox \in\nu}g_{\bar{3}}\left(\frac{\chi_{12,x}(\Bbox)}{x_{I}}\right)^{-1}\right)
\eea
which reproduces \eqref{eq:D6threelegscontourintegral}.
\end{proposition}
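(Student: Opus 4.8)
The plan is to follow the same three-step strategy already used for the one-leg and two-legs cases: (i) identify the highest-weight operator attached to the three asymptotic Young diagrams $\lambda,\mu,\nu$; (ii) regularize the infinite product of $\mathsf{A}^{-1}$'s it contains into a finite product of screening-type operators $\mathsf{S}_{1,2,3}$ together with a finite correction $\prod_{\scube\in\mathcal{S}_{\lambda\mu\nu}}\mathsf{A}$; and (iii) evaluate the vacuum expectation value against $\prod_{I}\mathsf{A}(x_I)^{-1}$ using the operator product expansions of \eqref{eq:contractions}, and match the result with \eqref{eq:D6threelegscontourintegral}. For step (i), the D6 vertex operator is $\mathsf{W}_{\bar{4}}(x)$ and the boundary boxes are exactly the set $\mathcal{B}_{\lambda\mu\nu}$ introduced in Section~\ref{sec:D6partitionfunction}, so the highest weight is $:\mathsf{W}_{\bar{4}}(x)\prod_{\scube\in\mathcal{B}_{\lambda\mu\nu}}\mathsf{A}^{-1}(\chi_{\bar{4},x}(\cube)):$, in direct analogy with \eqref{eq:D6two-legs-hw0}; this gives the first line of the statement.

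For step (ii), I would use the inclusion–exclusion decomposition $\mathcal{B}_{\lambda\mu\nu}=(\mathcal{B}_\lambda+\mathcal{B}_\mu+\mathcal{B}_\nu)-(\mathcal{B}_{\lambda\cap\mu}+\mathcal{B}_{\lambda\cap\nu}+\mathcal{B}_{\mu\cap\nu})+\mathcal{B}_{\lambda\cap\mu\cap\nu}$. Each of $\mathcal{B}_\lambda,\mathcal{B}_\mu,\mathcal{B}_\nu$ is a union of semi-infinite columns along one of the three axes, so the telescoping identity ${:\prod_{i\geq 1}\mathsf{A}^{-1}(q_a^{i-1}y):}={:\mathsf{S}_a(y)^{-1}:}$ from \eqref{eq:D4D2vertexrelation} collapses the corresponding infinite products to $\prod_{\Abox\in\lambda}\mathsf{S}_1(\chi_{23,x}(\Bbox))^{-1}$, $\prod_{\Abox\in\mu}\mathsf{S}_2(\chi_{13,x}(\Bbox))^{-1}$, $\prod_{\Abox\in\nu}\mathsf{S}_3(\chi_{12,x}(\Bbox))^{-1}$ respectively, exactly as recorded for the two-legs case. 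The three pairwise intersections and the triple intersection are \emph{finite} sets of boxes, since two distinct semi-infinite legs overlap only in a bounded region, so they contribute the honest finite product $\prod_{\scube\in\mathcal{S}_{\lambda\mu\nu}}\mathsf{A}(\chi_{\bar{4},x}(\cube))$ with $\mathcal{S}_{\lambda\mu\nu}=-\mathcal{B}_{\lambda\cap\mu\cap\nu}+(\mathcal{B}_{\lambda\cap\mu}+\mathcal{B}_{\lambda\cap\nu}+\mathcal{B}_{\mu\cap\nu})$ and no regularization is needed. The zero-modes generated in rewriting the infinite products commute past the $\mathsf{A}$'s without producing poles, so they may be discarded as elsewhere in the paper; this also establishes the equality of the first two lines of the statement.

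For step (iii), I would move every $\mathsf{A}(x_I)^{-1}$ to the left of the normal-ordered product and contract using \eqref{eq:contractions}: $\mathsf{A}(x_I)^{-1}\mathsf{W}_{\bar{4}}(x)$ produces $\mathscr{V}_4(x/x_I)$; $\mathsf{A}(x_I)^{-1}\mathsf{S}_1(\chi_{23,x}(\Bbox))$, $\mathsf{A}(x_I)^{-1}\mathsf{S}_2(\chi_{13,x}(\Bbox))$, $\mathsf{A}(x_I)^{-1}\mathsf{S}_3(\chi_{12,x}(\Bbox))$ produce the factors $g_{\bar{1}}$, $g_{\bar{2}}$, $g_{\bar{3}}$, entering with inverse powers since the $\mathsf{S}$'s sit in the denominator; $\mathsf{A}(x_I)^{-1}\mathsf{A}(\chi_{\bar{4},x}(\cube))$ produces $\mathcal{A}_{\mathbb{C}^4}(\chi_{\bar{4},x}(\cube)/x_I)$; and the self-contractions $\mathsf{A}(x_I)^{-1}\mathsf{A}(x_J)^{-1}$ produce $\prod_{I<J}\mathcal{A}_{\mathbb{C}^4}(x_I/x_J)^{-1}$ once the prescribed radial ordering of the $\mathsf{A}$'s is fixed. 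Collecting all these factors reproduces \eqref{eq:D6threelegscontourintegral} term by term, and evaluating the residues then gives \eqref{eq:D6legpartitionfunction}.

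The hard part will be the bookkeeping at the triple leg intersection in step (ii): one must check that the finite correction set $\mathcal{S}_{\lambda\mu\nu}$ coming from inclusion–exclusion of the three semi-infinite regions is precisely the one in the statement, and that the choice of order in which each infinite rod of $\mathsf{A}$'s is collapsed into $\mathsf{S}$-operators (cf.\ the two equal-looking presentations in \eqref{eq:D4twoboundcondfigure2}) does not affect the final answer — the independence on this choice being the operator counterpart of the character identity $\bfK^{\bd}=\bfN_{\bar{4},\lambda\mu\nu}$ used in Section~\ref{sec:D6partitionfunction}. Everything else is a mechanical application of \eqref{eq:contractions} and \eqref{eq:D4D2vertexrelation}.
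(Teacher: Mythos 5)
Your proposal is correct and follows essentially the same route as the paper: the paper likewise takes the highest weight $:\mathsf{W}_{\bar{4}}(x)\prod_{\scube\in\mathcal{B}_{\lambda\mu\nu}}\mathsf{A}^{-1}(\chi_{\bar{4},x}(\cube)):$, regularizes each semi-infinite leg of $\mathsf{A}^{-1}$'s into $\mathsf{S}_{1,2,3}^{-1}$ via \eqref{eq:D4D2vertexrelation} while the inclusion–exclusion over the (finite) pairwise and triple intersections yields the correction $\prod_{\scube\in\mathcal{S}_{\lambda\mu\nu}}\mathsf{A}$, and then reads off the integrand from the operator products \eqref{eq:contractions} as in Prop.~\ref{prop:D0general}. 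No gaps.
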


\subsection{D8 partition function with boundary conditions}\label{sec:D8_w_bdry}
\subsubsection{Leg boundary conditions}
\paragraph{one-leg}The strategy to find the highest weight is the same as the previous examples. For simplicity, let us consider when we only have one-leg at the $4$-axis. The highest weight is then given as
\bea
  \adjustbox{valign=c}{  \begin{tikzpicture}[scale=0.25]
    \draw[->] (-4,-5)--(30,-5);
    \node[right] at (30,-5){$4$};
    \fill (0.5,-5) circle (5pt);
    \fill (11.5,-5) circle (5pt);
    \fill (22.5,-5) circle (5pt);
    
    \begin{scope}[scale=0.4]
        \draw[->] (1,0)--(9,0);
        \node[right] at (9,0){$2$};
\draw[->] (1,0)--(1,9);
\node[above] at (1,9){$3$};
\draw[->] (1,0)--(-3.5,-4.5);
\node[below left] at (-3.5,-4.5) {$1$};

 \foreach \z in {1,2,3,4,5,6,7} \diagcubecolor{1}{1}{\z};
 \foreach \z in {1,2,3,4,5} \diagcubecolor{2}{1}{\z};
 \foreach \z in {1,2,3,4} \diagcubecolor{3}{1}{\z};
 \foreach \z in {1,2} \diagcubecolor{4}{1}{\z};
\foreach \z in {1} \diagcubecolor{5}{1}{\z};

\foreach \z in {1,2,3,4,5} \diagcubecolor{1}{2}{\z};
 \foreach \z in {1,2,3} \diagcubecolor{2}{2}{\z};
 \foreach \z in {1,2} \diagcubecolor{3}{2}{\z};
 \foreach \z in {1} \diagcubecolor{4}{2}{\z};

\foreach \z in {1,2,3,4} \diagcubecolor{1}{3}{\z};
 \foreach \z in {1,2} \diagcubecolor{2}{3}{\z};
 \foreach \z in {1} \diagcubecolor{3}{3}{\z};

\foreach \z in {1,2,3} \diagcubecolor{1}{4}{\z};
 \foreach \z in {1,2} \diagcubecolor{2}{4}{\z};
 \foreach \z in {1} \diagcubecolor{3}{4}{\z};

\foreach \z in {1,2} \diagcubecolor{1}{5}{\z};
 \foreach \z in {1} \diagcubecolor{2}{5}{\z};

\foreach \z in {1} \diagcubecolor{1}{6}{\z};
\end{scope}
\begin{scope}[xshift=600,scale=0.4]
        \draw[->] (1,0)--(9,0);
        \node[right] at (9,0){$2$};
\draw[->] (1,0)--(1,9);
\node[above] at (1,9){$3$};
\draw[->] (1,0)--(-3.5,-4.5);
\node[below left] at (-3.5,-4.5) {$1$};

 \foreach \z in {1,2,3,4,5,6,7} \diagcubecolor{1}{1}{\z};
 \foreach \z in {1,2,3,4,5} \diagcubecolor{2}{1}{\z};
 \foreach \z in {1,2,3,4} \diagcubecolor{3}{1}{\z};
 \foreach \z in {1,2} \diagcubecolor{4}{1}{\z};
\foreach \z in {1} \diagcubecolor{5}{1}{\z};

\foreach \z in {1,2,3,4,5} \diagcubecolor{1}{2}{\z};
 \foreach \z in {1,2,3} \diagcubecolor{2}{2}{\z};
 \foreach \z in {1,2} \diagcubecolor{3}{2}{\z};
 \foreach \z in {1} \diagcubecolor{4}{2}{\z};

\foreach \z in {1,2,3,4} \diagcubecolor{1}{3}{\z};
 \foreach \z in {1,2} \diagcubecolor{2}{3}{\z};
 \foreach \z in {1} \diagcubecolor{3}{3}{\z};

\foreach \z in {1,2,3} \diagcubecolor{1}{4}{\z};
 \foreach \z in {1,2} \diagcubecolor{2}{4}{\z};
 \foreach \z in {1} \diagcubecolor{3}{4}{\z};

\foreach \z in {1,2} \diagcubecolor{1}{5}{\z};
 \foreach \z in {1} \diagcubecolor{2}{5}{\z};

\foreach \z in {1} \diagcubecolor{1}{6}{\z};
\end{scope}

\begin{scope}[xshift=300,scale=0.4]
        \draw[->] (1,0)--(9,0);
        \node[right] at (9,0){$2$};
\draw[->] (1,0)--(1,9);
\node[above] at (1,9){$3$};
\draw[->] (1,0)--(-3.5,-4.5);
\node[below left] at (-3.5,-4.5) {$1$};

 \foreach \z in {1,2,3,4,5,6,7} \diagcubecolor{1}{1}{\z};
 \foreach \z in {1,2,3,4,5} \diagcubecolor{2}{1}{\z};
 \foreach \z in {1,2,3,4} \diagcubecolor{3}{1}{\z};
 \foreach \z in {1,2} \diagcubecolor{4}{1}{\z};
\foreach \z in {1} \diagcubecolor{5}{1}{\z};

\foreach \z in {1,2,3,4,5} \diagcubecolor{1}{2}{\z};
 \foreach \z in {1,2,3} \diagcubecolor{2}{2}{\z};
 \foreach \z in {1,2} \diagcubecolor{3}{2}{\z};
 \foreach \z in {1} \diagcubecolor{4}{2}{\z};

\foreach \z in {1,2,3,4} \diagcubecolor{1}{3}{\z};
 \foreach \z in {1,2} \diagcubecolor{2}{3}{\z};
 \foreach \z in {1} \diagcubecolor{3}{3}{\z};

\foreach \z in {1,2,3} \diagcubecolor{1}{4}{\z};
 \foreach \z in {1,2} \diagcubecolor{2}{4}{\z};
 \foreach \z in {1} \diagcubecolor{3}{4}{\z};

\foreach \z in {1,2} \diagcubecolor{1}{5}{\z};
 \foreach \z in {1} \diagcubecolor{2}{5}{\z};

\foreach \z in {1} \diagcubecolor{1}{6}{\z};
\end{scope}
\end{tikzpicture}}=\quad :\mathsf{Z}(K,x)\prod_{l=1}^{\infty}\prod_{\scube\in\pi_{4}}\mathsf{A}^{-1}(q_{4}^{l-1}\chi_{\bar{4},x}(\cube)):
\eea
and the infinite product is regularized as
\bea
{:\mathsf{Z}(K,x)\prod_{l=1}^{\infty}\prod_{\scube\in\pi_{4}}\mathsf{A}^{-1}(q_{4}^{l-1}\chi_{\bar{4},x}(\cube)):}={:\frac{\mathsf{Z}(K,x)}{\prod\limits_{\scube\in\pi_{4}}\mathsf{S}_{4}(\chi_{\bar{4},x}(\cube))}:}.
\eea

\begin{proposition}[D8 one-leg]
The free field realization of the contour integral formula of the one-leg D8 partition function is 
\bea\label{eq:D8_1leg_bdy}
\mathcal{Z}_{k}&=\frac{\mathcal{G}^{k}}{k!} \oint \prod_{I=1}^{k}\frac{dx_{I}}{2\pi\iota x_{I}} \left\langle \prod_{I=1}^{k}\mathsf{A}^{-1}(x_{I}):\frac{\mathsf{Z}(K,x)}{\prod_{\scube \in\pi_{4}}\mathsf{S}_{4}(\chi_{\bar{4},x}(\cube))}: \right\rangle\\
&= \frac{\mathcal{G}^{k}}{k!}\oint \prod_{I=1}^{k}\frac{dx_{I}}{2\pi\iota x_{I}} \prod_{I<J}\mathcal{A}_{\mathbb{C}^{4}}\left(\frac{x_{I}}{x_{J}}\right)^{-1}\prod_{I=1}^{k}\frac{1-Kx/x_{I}}{1-x/x_{I}}\prod_{I=1}^{k}\prod_{\scube\in\pi_{4}}g_{\bar{4}}\left(\frac{\chi_{\bar{4},x}(\cube)}{x_{I}}\right)^{-1}.
\eea
\end{proposition}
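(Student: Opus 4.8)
The plan is to split the asserted chain of equalities into two pieces: identifying the one-leg highest weight (which, combined with Proposition~\ref{prop:D0general}, gives the first equality), and a Wick-type evaluation of the vacuum expectation value (which gives the second equality and recovers the one-leg specialization of \eqref{eq:D8contourintegral-leg}). For the highest weight: by Proposition~\ref{prop:D0general} a D8-brane with no boundary condition is the operator $\mathsf{Z}(K,x)$, and every box of the attached solid partition inserts one factor $\mathsf{A}^{-1}$ at the $q$-coordinate of that box. For the one-leg condition the boundary boxes fill the semi-infinite leg along the $4$-axis with transverse section the finite plane partition $\pi_{4}$, i.e.\ they sit at $q_{4}^{l-1}\chi_{\bar{4},x}(\cube)$ for $l\geq 1$, $\cube\in\pi_{4}$, which gives the first displayed form. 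I would then regularize the infinite product by applying, for each fixed $\cube\in\pi_{4}$, the telescoping identity \eqref{eq:D4D2vertexrelation} at $a=4$,
\bea
{:\prod_{l=1}^{\infty}\mathsf{A}^{-1}(q_{4}^{l-1}y):}={:\prod_{l=1}^{\infty}\frac{\mathsf{S}_{4}(q_{4}^{l}y)}{\mathsf{S}_{4}(q_{4}^{l-1}y)}:}={:\mathsf{S}_{4}(y)^{-1}:},
\eea
with $y=\chi_{\bar{4},x}(\cube)$, which rests on $\mathsf{A}(x)={:\mathsf{S}_{4}(x)/\mathsf{S}_{4}(q_{4}x):}$ from \eqref{eq:vertexoprelation}; the normal-ordered product over $\cube\in\pi_{4}$, times $\mathsf{Z}(K,x)$, is the regularized highest weight. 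This is an equality only up to zero-mode factors, but the zero-modes are central with respect to every $\mathsf{A}(x_{I})$, hence contract trivially, cannot move the poles in the $x_{I}$, and are harmless for the residues --- exactly as in the D4 and D6 one-leg cases already treated above.

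For the expectation value, I would expand $\langle \prod_{I=1}^{k}\mathsf{A}(x_{I})^{-1}\,{:\mathsf{Z}(K,x)/\prod_{\scube\in\pi_{4}}\mathsf{S}_{4}(\chi_{\bar{4},x}(\cube)):}\rangle$ by Wick's theorem using the operator products \eqref{eq:contractions}, keeping in mind that replacing an operator by its inverse inverts each contraction prefactor. The $\mathsf{A}$--$\mathsf{A}$ contractions among the $\mathsf{A}(x_{I})^{-1}$, taken in the order fixed in Proposition~\ref{prop:D0general}, give the boundary-independent factor $\prod_{I<J}\mathcal{A}_{\mathbb{C}^{4}}(x_{I}/x_{J})^{-1}$, with the zero-mode part collected into $\mathcal{G}^{k}$. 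Each $\mathsf{A}(x_{I})^{-1}$--$\mathsf{Z}(K,x)$ contraction contributes $(1-Kx/x_{I})/(1-x/x_{I})$, the inverse of the $\mathsf{A}(x')\mathsf{Z}(K,x)$ prefactor in the last line of \eqref{eq:contractions}; and each $\mathsf{A}(x_{I})^{-1}$--$\mathsf{S}_{4}(\chi_{\bar{4},x}(\cube))^{-1}$ contraction contributes $g_{\bar{4}}(\chi_{\bar{4},x}(\cube)/x_{I})^{-1}$, since $\mathsf{A}(x)\mathsf{S}_{4}(x')=g_{\bar{4}}(x'/x)^{-1}{:\cdots:}$ and the two inversions cancel. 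Contractions internal to the block ${:\mathsf{Z}(K,x)/\prod\mathsf{S}_{4}:}$ vanish by construction, and the vacuum expectation of the remaining fully contracted normal-ordered operator contributes only zero-modes, again absorbed into $\mathcal{G}^{k}$ as in Proposition~\ref{prop:D0general}. Multiplying the factors reproduces exactly the integrand of the one-leg specialization of \eqref{eq:D8contourintegral-leg}, for which $\mathcal{S}_{\emptyset\emptyset\emptyset\pi_{4}}=\emptyset$; this is the claimed formula. As a consistency check, setting $\pi_{4}=\emptyset$ collapses the expression to the $\U(1|1)$ magnificent four contour integral \eqref{eq:D8U1partitionfunction}.

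The main obstacle is the regularization of the infinite product of $\mathsf{A}^{-1}$'s and the accompanying claim that the discarded zero-mode factors do not affect the residues: making this precise requires the analyticity assumption $|q_{4}|<1$ (so that $q_{4}^{l}\to 0$ as one goes out along the leg) together with the centrality of the zero-modes relative to the $\mathsf{A}(x_{I})$. Once that is in place, the rest is routine bookkeeping with the contraction formulas, strictly parallel to the D4 and D6 one-leg propositions established above.
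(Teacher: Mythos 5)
Your proposal is correct and follows essentially the same route as the paper: identify the one-leg highest weight as the infinite product of $\mathsf{A}^{-1}$ over the boundary boxes, telescope it into $:\mathsf{Z}(K,x)/\prod_{\scube\in\pi_{4}}\mathsf{S}_{4}(\chi_{\bar{4},x}(\cube)):$ via $\mathsf{A}(x)={:\mathsf{S}_{4}(x)/\mathsf{S}_{4}(q_{4}x):}$, and then read off the integrand from the contraction formulas \eqref{eq:contractions} as in Prop.~\ref{prop:D0general}. The contraction bookkeeping (the inverted $\mathsf{A}$--$\mathsf{Z}$ factor, the doubly-inverted $\mathsf{A}$--$\mathsf{S}_{4}$ factor giving $g_{\bar{4}}^{-1}$, and the zero-modes being absorbed into $\mathcal{G}^{k}$) matches the paper's treatment exactly.
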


\paragraph{Two-legs} For later use, let us also consider the situation when we have two nontrivial plane partitions at two legs of the four legs. We consider the case when we have two legs at the 1, 2 axes. Namely, we have asymptotic plane partitions $\pi_{1},\pi_{2}$. In the $(1,3)$-type description, the plane partitions $\pi_{1,2}$ are decomposed into finite Young diagrams as
\bea
\pi_{1}&=\{\lambda^{(1)},\lambda^{(2)},\cdots\},\quad \lambda^{(i)}\succeq \lambda^{(i+1)},\\
\pi_{2}&=\{\mu^{(1)},\mu^{(2)},\cdots \},\quad \mu^{(i)}\succeq \mu^{(i+1)}.
\eea
Namely, the plane partitions $\pi_{1,2}$ are stacks of Young diagrams piled up into the 4-direction. We also choose the orientation of the Young diagrams in a similar way as the two-legs situation of the plane partition:
\bea
&\lambda^{(i)}=\{\lambda^{(i)}_{1},\lambda^{(i)}_{2},\ldots\},\quad \lambda^{(i)}_{j}\geq \lambda^{(i)}_{j+1},\\
&\mu^{(i)}=\{\mu^{(i)}_{1},\mu^{(i)}_{2},\ldots\},\quad \mu^{(i)}_{j}\geq \mu^{(i)}_{j+1}.
\eea
The $\lambda^{(i)}_{j}$ extends in the 3-direction and $\mu^{(i)}_{k}$ extends in the 1-direction. Using the result of \eqref{eq:D6two-legs-hw0} and \eqref{eq:D4tensor-op}, the highest weight is then given as
\bea
:\mathsf{Z}(K,x)\prod_{k}\left(\prod_{i=1}^{\ell(\lambda^{(k)\rmT})}\prod_{j=1}^{\lambda^{(k)\rmT}_{i}}\mathsf{S}_{1}(xq_{2}^{j-1}q_{3}^{i-1}q_{4}^{k-1})^{-1} \prod_{i=1}^{\ell(\mu^{(k)})}\prod_{j=1}^{\mu^{(k)}_{i}}\mathsf{S}_{2}(xq_{1}^{j-1}q_{2}^{\lambda^{(k)\rmT}_{i}}q_{3}^{i-1}q_{4}^{k-1})^{-1}\right):.
\eea
The product of the $\mathsf{S}_{1}$ operators can be simplified as
\bea
:\prod_{\scube\in\pi_{1}}\mathsf{S}_{1}(\chi_{\bar{1},x}(\cube))^{-1}:
\eea
but the second product can not be simplified in this way. The position of the $\mathsf{S}_{2}$ will be modified because of the existence of the $\mathsf{S}_{1}$ operators.

Using this highest weight, one can explicitly write down the free field realization of the contour integral formula, but since it is too complicated, we omit the explicit formula.

\paragraph{Four-legs}
The highest weight for the general case when we have four leg boundary conditions is 
\bea
\adjustbox{valign=c}{ \begin{tikzpicture}[scale=0.25]
    \draw[->] (-4,-5)--(30,-5);
    \node[right] at (30,-5){$4$};
    \fill (0.5,-5) circle (5pt);
    \fill (11.5,-5) circle (5pt);
    \fill (22.5,-5) circle (5pt);
    
    \begin{scope}[scale=0.45]
        \draw[->] (1,0)--(12,0);
        \node[right] at (12,0){$2$};
\draw[->] (1,0)--(1,12);
\node[above] at (1,12){$3$};
\draw[->] (1,0)--(-5.5,-6.5);
\node[below left] at (-5.5,-6.5) {$1$};



 \foreach \z in {1,2,3,4,5,6,7} \diagcubecolor{1}{1}{\z};
 \zstickvar{1}{1}{8}{3};
 \foreach \z in {1,2,3,4,5} \diagcubecolor{2}{1}{\z};
\zstickvar{2}{1}{6}{5};
 
 \foreach \z in {1,2,3,4} \diagcubecolor{3}{1}{\z};
 \foreach \z in {1,2} \diagcubecolor{4}{1}{\z};
\foreach \z in {1} \diagcubecolor{5}{1}{\z};
\xstickvar{6}{1}{1}{3};
\xstickvar{5}{1}{2}{3.5};
\xstickvar{4}{1}{3}{4};

\foreach \z in {1,2,3,4,5} \diagcubecolor{1}{2}{\z};
 \foreach \z in {1,2,3} \diagcubecolor{2}{2}{\z};
 \foreach \z in {1,2} \diagcubecolor{3}{2}{\z};
 \foreach \z in {1} \diagcubecolor{4}{2}{\z};
\xstickvar{5}{2}{1}{3.5};
\xstickvar{4}{2}{2}{4};

\foreach \z in {1,2,3,4} \diagcubecolor{1}{3}{\z};
 \foreach \z in {1,2} \diagcubecolor{2}{3}{\z};
 \foreach \z in {1} \diagcubecolor{3}{3}{\z};

\foreach \z in {1,2,3} \diagcubecolor{1}{4}{\z};
 \foreach \z in {1,2} \diagcubecolor{2}{4}{\z};
 \foreach \z in {1} \diagcubecolor{3}{4}{\z};

\foreach \z in {1,2} \diagcubecolor{1}{5}{\z};
 \foreach \z in {1} \diagcubecolor{2}{5}{\z};

\foreach \z in {1} \diagcubecolor{1}{6}{\z};

\ystickvar{1}{7}{1}{3};
\ystickvar{1}{6}{2}{4};
\ystickvar{1}{5}{3}{5};
\ystickvar{2}{6}{1}{4};
\ystickvar{2}{5}{2}{5};

\end{scope}
\begin{scope}[xshift=300,scale=0.45]
             \draw[->] (1,0)--(12,0);
        \node[right] at (12,0){$2$};
\draw[->] (1,0)--(1,12);
\node[above] at (1,12){$3$};
\draw[->] (1,0)--(-5.5,-6.5);
\node[below left] at (-5.5,-6.5) {$1$};

 \foreach \z in {1,2,3,4,5,6,7} \diagcubecolor{1}{1}{\z};
 \zstickvar{1}{1}{8}{3};
 \foreach \z in {1,2,3,4,5} \diagcubecolor{2}{1}{\z};
\zstickvar{2}{1}{6}{5};
 
 \foreach \z in {1,2,3,4} \diagcubecolor{3}{1}{\z};
 \foreach \z in {1,2} \diagcubecolor{4}{1}{\z};
\foreach \z in {1} \diagcubecolor{5}{1}{\z};
\xstickvar{6}{1}{1}{3};
\xstickvar{5}{1}{2}{3.5};

\foreach \z in {1,2,3,4,5} \diagcubecolor{1}{2}{\z};
 \foreach \z in {1,2,3} \diagcubecolor{2}{2}{\z};
 \foreach \z in {1,2} \diagcubecolor{3}{2}{\z};
 \foreach \z in {1} \diagcubecolor{4}{2}{\z};
\xstickvar{5}{2}{1}{3.5};

\foreach \z in {1,2,3,4} \diagcubecolor{1}{3}{\z};
 \foreach \z in {1,2} \diagcubecolor{2}{3}{\z};
 \foreach \z in {1} \diagcubecolor{3}{3}{\z};

\foreach \z in {1,2,3} \diagcubecolor{1}{4}{\z};
 \foreach \z in {1,2} \diagcubecolor{2}{4}{\z};
 \foreach \z in {1} \diagcubecolor{3}{4}{\z};

\foreach \z in {1,2} \diagcubecolor{1}{5}{\z};
 \foreach \z in {1} \diagcubecolor{2}{5}{\z};

\foreach \z in {1} \diagcubecolor{1}{6}{\z};

\ystickvar{1}{7}{1}{3};
\ystickvar{1}{6}{2}{4};
\ystickvar{1}{5}{3}{5};
\ystickvar{2}{6}{1}{4};

\end{scope}
\begin{scope}[xshift=600,scale=0.45]
            \draw[->] (1,0)--(12,0);
        \node[right] at (12,0){$2$};
\draw[->] (1,0)--(1,12);
\node[above] at (1,12){$3$};
\draw[->] (1,0)--(-5.5,-6.5);
\node[below left] at (-5.5,-6.5) {$1$};

 \foreach \z in {1,2,3,4,5,6,7} \diagcubecolor{1}{1}{\z};
 \zstickvar{1}{1}{8}{3};
 \foreach \z in {1,2,3,4,5} \diagcubecolor{2}{1}{\z};
 
 \foreach \z in {1,2,3,4} \diagcubecolor{3}{1}{\z};
 \foreach \z in {1,2} \diagcubecolor{4}{1}{\z};
\foreach \z in {1} \diagcubecolor{5}{1}{\z};
\xstickvar{6}{1}{1}{3};
\xstickvar{5}{1}{2}{3.5};

\foreach \z in {1,2,3,4,5} \diagcubecolor{1}{2}{\z};
 \foreach \z in {1,2,3} \diagcubecolor{2}{2}{\z};
 \foreach \z in {1,2} \diagcubecolor{3}{2}{\z};
 \foreach \z in {1} \diagcubecolor{4}{2}{\z};

\foreach \z in {1,2,3,4} \diagcubecolor{1}{3}{\z};
 \foreach \z in {1,2} \diagcubecolor{2}{3}{\z};
 \foreach \z in {1} \diagcubecolor{3}{3}{\z};

\foreach \z in {1,2,3} \diagcubecolor{1}{4}{\z};
 \foreach \z in {1,2} \diagcubecolor{2}{4}{\z};
 \foreach \z in {1} \diagcubecolor{3}{4}{\z};

\foreach \z in {1,2} \diagcubecolor{1}{5}{\z};
 \foreach \z in {1} \diagcubecolor{2}{5}{\z};

\foreach \z in {1} \diagcubecolor{1}{6}{\z};

\ystickvar{1}{7}{1}{3};
\ystickvar{1}{6}{2}{4};

\end{scope}

\end{tikzpicture}}=\quad :\mathsf{Z}(K,x)\prod_{\shcube\in\mathcal{B}_{\pi_{1}\pi_{2}\pi_{3}\pi_{4}}}\mathsf{A}^{-1}(\chi_{\four,x}(\hcube)):
\eea
and the infinite product is regularized as
\bea
{:\mathsf{Z}(K,x)\prod_{\shcube\in\mathcal{B}_{\pi_{1}\pi_{2}\pi_{3}\pi_{4}}}\mathsf{A}^{-1}(\chi_{\four,x}(\hcube)):}={:\frac{\mathsf{Z}(K,x)}{\prod\limits_{a\in\four}\prod\limits_{\scube\in\pi_{a}}\mathsf{S}_{a}(\chi_{\bar{a},x}(\cube))}\prod_{\shcube\in\mathcal{S}_{\pi_{1}\pi_{2}\pi_{3}\pi_{4}}}\mathsf{A}(\chi_{\four,x}(\hcube)):}.
\eea
\begin{proposition}[D8 four-legs]
The free field realization of the contour integral formula \eqref{eq:D8contourintegral-leg} of the D8 partition function with nontrivial leg boundary conditions is given as
\bea
\mathcal{Z}_{k}&=\frac{\mathcal{G}^{k}}{k!} \oint \prod_{I=1}^{k}\frac{dx_{I}}{2\pi\iota x_{I}} \left\langle \prod_{I=1}^{k}\mathsf{A}^{-1}(x_{I}){:\frac{\mathsf{Z}(K,x)}{\prod\limits_{a\in\four}\prod\limits_{\scube\in\pi_{a}}\mathsf{S}_{a}(\chi_{\bar{a},x}(\cube))}\prod_{\shcube\in\mathcal{S}_{\pi_{1}\pi_{2}\pi_{3}\pi_{4}}}\mathsf{A}(\chi_{\four,x}(\hcube)):} \right\rangle\\
&=\frac{\mathcal{G}^{k}}{k!} \oint \prod_{I=1}^{k}\frac{dx_{I}}{2\pi\iota x_{I}} \prod_{I<J}\mathcal{A}_{\mathbb{C}^{4}}\left(\frac{x_{I}}{x_{J}}\right)^{-1}\prod_{I=1}^{k}\frac{1-Kx/x_{I}}{1-x/x_{I}}\prod_{I=1}^{k}\prod_{a\in\four}\prod_{\scube\in\pi_{a}}g_{\bar{a}}\left(\frac{\chi_{\bar{a},x}(\cube)}{x_{I}}\right)^{-1}\\
&\qquad\qquad  \times \prod_{I=1}^{k}\prod_{\shcube\in\mathcal{S}_{\pi_{1}\pi_{2}\pi_{3}\pi_{4}}}\mathcal{A}_{\mathbb{C}^{4}}\left(\frac{\chi_{\four,x}(\hcube)}{x_{I}}\right).
\eea
\end{proposition}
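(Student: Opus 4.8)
The plan is to establish the first equality by identifying the highest-weight operator attached to the four-leg boundary configuration, and then to obtain the second equality by invoking Proposition~\ref{prop:D0general} and evaluating the resulting vacuum expectation value through a Wick (operator-product) computation, matching each contraction against the corresponding factor of \eqref{eq:D8contourintegral-leg}.

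First I would pin down the highest weight. Reading off the figure, the unregularized operator is $:\mathsf{Z}(K,x)\prod_{\shcube\in\mathcal{B}_{\pi_{1}\pi_{2}\pi_{3}\pi_{4}}}\mathsf{A}^{-1}(\chi_{\four,x}(\hcube)):$, whose character equals $-\bfK^{\bd}=-\bfN_{\pi_{1}\pi_{2}\pi_{3}\pi_{4}}$; as already computed, this equals $-\sum_{a\in\four}\sum_{\scube\in\pi_{a}}\frac{\chi_{\bar{a},x}(\cube)}{1-q_{a}}+\sum_{\shcube\in\mathcal{S}_{\pi_{1}\pi_{2}\pi_{3}\pi_{4}}}\chi_{\four,x}(\hcube)$, in which only finitely many summands fail to organize into genuinely infinite geometric sums. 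I would then apply the telescoping identity \eqref{eq:D4D2vertexrelation}, $:\prod_{l=1}^{\infty}\mathsf{A}^{-1}(q_{a}^{l-1}y):\,=\,:\mathsf{S}_{a}(y)^{-1}:$ (up to a zero-mode prefactor), along each infinite leg of $\mathcal{B}_{\pi_{1}\pi_{2}\pi_{3}\pi_{4}}$, organizing the product through the inclusion--exclusion $\mathcal{B}_{\pi_{1}\pi_{2}\pi_{3}\pi_{4}}=\sum_{a\in\four}\mathcal{B}_{a,\pi_{a}}-\sum_{ab\in\six}\mathcal{B}_{ab,\pi_{a}\cap\pi_{b}}+\cdots$. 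This converts every semi-infinite product into an $\mathsf{S}_{a}$-factor and leaves precisely $\mathcal{S}_{\pi_{1}\pi_{2}\pi_{3}\pi_{4}}=\sum_{a\in\four}\mathcal{B}_{a,\pi_{a}}-\mathcal{B}_{\pi_{1}\pi_{2}\pi_{3}\pi_{4}}$ worth of $\mathsf{A}$-factors, yielding the normal-ordered operator in the first line of the claim. The accumulated zero-mode prefactors carry no contraction with the $\mathsf{A}(x_{I})$'s, hence do not affect the pole structure, and are absorbed into the overall normalization $\mathcal{G}^{k}$ already extracted in Proposition~\ref{prop:D0general}, exactly as in the one- and two-leg cases treated above.

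Next I would compute the vacuum expectation value. By Proposition~\ref{prop:D0general}, $\mathcal{Z}_{k}=\frac{\mathcal{G}^{k}}{k!}\oint\prod_{I}\frac{dx_{I}}{2\pi\iota x_{I}}\big\langle\prod_{I}\mathsf{A}^{-1}(x_{I})\cdot(\text{highest weight})\big\rangle$ with the $\mathsf{A}$'s placed to the left in the fixed order, and the vacuum expectation value of free-boson vertex operators factorizes into the product of all pairwise operator products. From \eqref{eq:contractions} together with the $\mathsf{A}$--$\mathsf{A}$ product (which contributes $\mathcal{A}_{\mathbb{C}^{4}}$): the $\mathsf{A}^{-1}$--$\mathsf{A}^{-1}$ contractions give $\prod_{I<J}\mathcal{A}_{\mathbb{C}^{4}}(x_{I}/x_{J})^{-1}$; each $\mathsf{A}^{-1}(x_{I})$ against $\mathsf{Z}(K,x)$ gives $\frac{1-Kx/x_{I}}{1-x/x_{I}}$; each $\mathsf{A}^{-1}(x_{I})$ against $\mathsf{S}_{a}(\chi_{\bar{a},x}(\cube))^{-1}$ gives $g_{\bar{a}}(\chi_{\bar{a},x}(\cube)/x_{I})^{-1}$; and each $\mathsf{A}^{-1}(x_{I})$ against $\mathsf{A}(\chi_{\four,x}(\hcube))$ with $\shcube\in\mathcal{S}_{\pi_{1}\pi_{2}\pi_{3}\pi_{4}}$ gives $\mathcal{A}_{\mathbb{C}^{4}}(\chi_{\four,x}(\hcube)/x_{I})$; the $\mathsf{Z},\mathsf{S},\mathsf{A}$ operators inside the highest-weight normal ordering do not contract among themselves. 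Collecting these factors reproduces the integrand of \eqref{eq:D8contourintegral-leg} verbatim. As consistency checks, setting $\pi_{1}=\pi_{2}=\pi_{3}=\emptyset$ (so that $\mathcal{S}_{\emptyset\emptyset\emptyset\pi_{4}}=\emptyset$) recovers the one-leg formula \eqref{eq:D8_1leg_bdy}, and evaluating the residues reproduces \eqref{eq:D8legpartition}, since $\bfN_{\pi_{1}\pi_{2}\pi_{3}\pi_{4}}$ enters both computations identically.

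The hard part will be the regularization in the first step: one must organize the quadruple inclusion--exclusion defining $\mathcal{B}_{\pi_{1}\pi_{2}\pi_{3}\pi_{4}}$ so that every infinite leg telescopes cleanly to a single $\mathsf{S}_{a}$, verify that the residual boxes are exactly the finite set $\mathcal{S}_{\pi_{1}\pi_{2}\pi_{3}\pi_{4}}$, and keep careful track of the zero-mode prefactors, confirming that they feed only into the already-extracted $\mathcal{G}^{k}$ and never generate spurious poles. One should also observe that $\chi_{\four,x}(\hcube)$ for $\shcube\in\mathcal{S}_{\pi_{1}\pi_{2}\pi_{3}\pi_{4}}$ is a fixed point independent of the integration variables, so the contractions in the last family remain honest rational functions with no unforeseen pole collisions against the $x_{I}$. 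Once the highest weight is firmly established, the remaining computation is routine bookkeeping of structure functions, identical in form to the one- and two-leg cases already carried out.
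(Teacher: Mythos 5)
Your proposal is correct and follows essentially the same route as the paper: regularize the infinite product of $\mathsf{A}^{-1}$'s over $\mathcal{B}_{\pi_{1}\pi_{2}\pi_{3}\pi_{4}}$ into $\mathsf{S}_{a}^{-1}$ factors plus the finite $\mathcal{S}_{\pi_{1}\pi_{2}\pi_{3}\pi_{4}}$ correction via \eqref{eq:D4D2vertexrelation}, then evaluate the correlator by the pairwise operator products of \eqref{eq:contractions}. One small imprecision: the zero-mode prefactors are not absorbed into $\mathcal{G}^{k}$ (which comes from the index of the collision terms); the paper simply discards them as non-essential overall factors that do not contract with the $\mathsf{A}(x_{I})$'s and hence do not affect the pole structure.
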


\subsubsection{Surface boundary conditions}

\paragraph{One-surface}
Let us consider the case when we only have one surface Young diagram extending in the 12-direction. The highest weight comes from the following configuration
\bea
\adjustbox{valign=c}{\begin{tikzpicture}[scale=0.3]
    \draw[->] (-4,-5)--(24,-5);
    \node[right] at (24,-5){$4$};
    \fill (0.5,-5) circle (5pt);
    \fill (11.5,-5) circle (5pt);
    \fill (22.5,-5) circle (5pt);
    
    \begin{scope}[scale=0.35]
        \draw[->] (1,0)--(13,0);
\draw[->] (1,0)--(1,9);
\draw[->] (1,0)--(-5.5,-6.5);
\node[right] at (13,0){$2$};
\node[above] at (1,9){$3$};
\node[below left] at (-5.5,-6.5){$1$};


\xysurfacevar{1}{1}{1}{5}{10};
\xysurfacevar{1}{1}{2}{5}{10};

\end{scope}
\begin{scope}[xshift=300,scale=0.35]
\draw[->] (1,0)--(13,0);
\draw[->] (1,0)--(1,9);
\draw[->] (1,0)--(-5.5,-6.5);
\node[right] at (13,0){$2$};
\node[above] at (1,9){$3$};
\node[below left] at (-5.5,-6.5){$1$};


\xysurfacevar{1}{1}{1}{5}{10};

\end{scope}

\begin{scope}[xshift=600,scale=0.35]
  \draw[->] (1,0)--(13,0);
\draw[->] (1,0)--(1,9);
\draw[->] (1,0)--(-5.5,-6.5);
\node[right] at (13,0){$2$};
\node[above] at (1,9){$3$};
\node[below left] at (-5.5,-6.5){$1$};


\xysurfacevar{1}{1}{1}{5}{10};

\end{scope}
\end{tikzpicture}}={:\mathsf{Z}(K,x)\prod_{i,j=1}^{\infty}\prod_{\Abox\in\lambda_{12}}\mathsf{A}^{-1}(q_{1}^{i-1}q_{2}^{j-1}\chi_{34,x}(\Bbox)):}.
\eea
In the type $(1,3)$ description, using the decomposition in \eqref{eq:D8surfacedecomp} as $\lambda_{12}=\{k_{12}^{(i)}\mid i=1,\ldots, \}$, the highest weight is expressed as
\bea
:\mathsf{Z}(K,x)\prod_{i,j=1}^{\infty}\prod_{l=1}^{\infty}\prod_{k=1}^{k_{12}^{(l)}}\mathsf{A}^{-1}(q_{1}^{i-1}q_{2}^{j-1}q_{3}^{k-1}q_{4}^{l-1}x):.
\eea

We can further regularize this highest weight as
\bea
{:\mathsf{Z}(K,x)\prod_{i,j=1}^{\infty}\prod_{\Abox\in\lambda_{12}}\mathsf{A}^{-1}(q_{1}^{i-1}q_{2}^{j-1}\chi_{34,x}(\Bbox)):}\simeq{:\frac{\mathsf{Z}(K,x)}{\prod\limits_{\Abox\in\lambda_{12}}\mathsf{X}_{12}(\chi_{34,x}(\Bbox))}:} 
\eea
where the equality is up to non-essential zero-modes. 

Although not so essential, another way to describe this setup is to use the identity in \eqref{eq:vertexoprelation3} and rewrite it as
\bea
:\prod_{i=1}^{\infty}\frac{\mathsf{W}_{\bar{4}}(xq_{3}^{k_{12}^{(i)}}q_{4}^{i-1})}{\mathsf{W}_{\bar{4}}(Kxq_{4}^{i-1})}:.
\eea

\begin{proposition}[D8 one-surface]
The free field realization of the contour integral for this case is given as
\bea\label{eq:D8_1surface_bdy}
\mathcal{Z}_{k}&=\frac{\mathcal{G}^{k}}{k!} \oint \prod_{I=1}^{k}\frac{dx_{I}}{2\pi\iota x_{I}} \left\langle \prod_{I=1}^{k}\mathsf{A}^{-1}(x_{I}):\frac{\mathsf{Z}(K,x)}{\prod\limits_{\Abox\in\lambda_{12}}\mathsf{X}_{12}(\chi_{34,x}(\Bbox))}:\right\rangle\\
&=\frac{\mathcal{G}^{k}}{k!}\oint \prod_{I=1}^{k}\frac{dx_{I}}{2\pi\iota x_{I}} \prod_{I<J}\mathcal{A}_{\mathbb{C}^{4}}\left(\frac{x_{I}}{x_{J}}\right)^{-1}\prod_{I=1}^{k}\frac{1-Kx/x_{I}}{1-x/x_{I}}\prod_{I=1}^{k}\prod_{\Abox \in\lambda_{12}}\mathscr{S}_{34}\left(\frac{\chi_{34,x}(\Bbox)}{x_{I}}\right)^{-1}.
\eea
\end{proposition}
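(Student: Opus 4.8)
The plan is to prove the two displayed equalities separately. For the first, I would invoke Proposition~\ref{prop:D0general}, which already guarantees that the $k$-instanton contribution has the form $\frac{\mathcal{G}^k}{k!}\oint\prod_I\frac{dx_I}{2\pi\iota x_I}\,\langle\,\prod_I\mathsf{A}(x_I)^{-1}:\mathsf{V}:\rangle$ once the highest-weight operator $\mathsf{V}$ dual to the vacuum configuration is identified. Here the vacuum configuration is the $\D8\tbar\overline{\D8}$ stack with the asymptotic surface $\lambda_{12}$ frozen in, which translates into $:\mathsf{Z}(K,x)\prod_{i,j=1}^{\infty}\prod_{\Abox\in\lambda_{12}}\mathsf{A}^{-1}(q_1^{i-1}q_2^{j-1}\chi_{34,x}(\Bbox)):$, and the regularization carried out immediately above the statement collapses this---using $\prod_{i,j\geq1}\mathsf{A}(q_1^{i-1}q_2^{j-1}y)\simeq\mathsf{X}_{12}(y)$, a consequence of \eqref{eq:vertexoprelation2}---to $:\mathsf{Z}(K,x)/\prod_{\Abox\in\lambda_{12}}\mathsf{X}_{12}(\chi_{34,x}(\Bbox)):$, up to zero-mode prefactors that contract with none of the $\mathsf{A}(x_I)$ and are therefore immaterial for the residue computation. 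This yields the first equality.

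For the second equality I would compute the vacuum expectation value by Wick's theorem, commuting each $\mathsf{A}(x_I)^{-1}$ past the normal-ordered highest weight and past the other $\mathsf{A}(x_J)^{-1}$. Three classes of contractions contribute: (i) $\mathsf{A}(x_I)^{-1}$ with $\mathsf{A}(x_J)^{-1}$, $I<J$, producing $\prod_{I<J}\mathcal{A}_{\mathbb{C}^4}(x_I/x_J)^{-1}$ exactly as in the boundary-free case; (ii) $\mathsf{A}(x_I)^{-1}$ with $\mathsf{Z}(K,x)$, which by the last line of \eqref{eq:contractions} gives $\prod_I\frac{1-Kx/x_I}{1-x/x_I}$; and (iii) $\mathsf{A}(x_I)^{-1}$ with each $\mathsf{X}_{12}(\chi_{34,x}(\Bbox))^{-1}$, which by the second line of \eqref{eq:contractions} (with $\bar A=34$ when $A=12$) gives $\prod_I\prod_{\Abox\in\lambda_{12}}\mathscr{S}_{34}(\chi_{34,x}(\Bbox)/x_I)^{-1}$. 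Since the leftover normal-ordered operator has unit vacuum expectation value, assembling these factors produces the stated integrand, i.e. the one-surface specialization of \eqref{eq:D8contourintegral-surface}.

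The only subtle point is the bookkeeping hidden in the regularization: contracting a fixed $\mathsf{A}(x_I)^{-1}$ directly against the unregularized product $\prod_{i,j\geq1}\mathsf{A}^{-1}(q_1^{i-1}q_2^{j-1}\chi_{34,x}(\Bbox))$ yields an infinite product of $\mathcal{A}_{\mathbb{C}^4}$ factors, and one must confirm that after telescoping it coincides with the single factor $\mathscr{S}_{34}(\chi_{34,x}(\Bbox)/x_I)^{-1}$ arising from the collapsed operator $\mathsf{X}_{12}^{-1}$. Equivalently, one must check that the operator identity $\prod_{i,j\geq1}\mathsf{A}(q_1^{i-1}q_2^{j-1}y)\simeq\mathsf{X}_{12}(y)$ is compatible with the $\mathsf{A}$--$\mathsf{A}$ and $\mathsf{A}$--$\mathsf{X}_{12}$ operator products, i.e. that $\prod_{i,j\geq1}\mathcal{A}_{\mathbb{C}^4}(q_1^{i-1}q_2^{j-1}z)^{-1}$ regularizes to $\mathscr{S}_{34}(z)^{-1}$; I would verify this directly from the definitions of the structure functions in \eqref{eq:structure-funct}, and the residual zero-mode factors are discarded just as in the parallel D6 surface computation since they never meet an $\mathsf{A}(x_I)$ inside the correlator.
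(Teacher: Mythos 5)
Your proposal is correct and follows essentially the same route as the paper: the highest weight is read off from the frozen surface configuration, the infinite product of $\mathsf{A}^{-1}$'s is collapsed to $\prod_{\Abox\in\lambda_{12}}\mathsf{X}_{12}(\chi_{34,x}(\Bbox))^{-1}$ via \eqref{eq:vertexoprelation2} up to inert zero-modes, and the integrand is then assembled from the three classes of contractions in \eqref{eq:contractions}. Your final consistency check is also sound, since $\bfP_{\four}^{\vee}=\bfP_{12}^{\vee}\bfP_{34}^{\vee}$ gives $\sum_{i,j\geq 1}\bfP_{\four}^{\vee}q_{1}^{-(i-1)}q_{2}^{-(j-1)}=\bfP_{34}^{\vee}$, so the infinite product of $\mathcal{A}_{\mathbb{C}^{4}}$ factors indeed telescopes to the single $\mathscr{S}_{34}^{-1}$ factor, exactly as required for compatibility with the collapsed operator.
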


\paragraph{Two--surfaces}For later use, let us consider the situation when we have two surfaces spanning the $12$ and $23$ planes:
\bea
\begin{tikzpicture}[scale=0.3]
    \draw[->] (-4,-5)--(30,-5);
    \node[right] at (30,-5){$4$};
    \fill (0.5,-5) circle (5pt);
    \fill (11.5,-5) circle (5pt);
    \fill (22.5,-5) circle (5pt);
    
    \begin{scope}[scale=0.4]
        \draw[->] (1,0)--(13,0);
\draw[->] (1,0)--(1,9);
\draw[->] (1,0)--(-5.5,-6.5);
\node[right] at (13,0){$2$};
\node[above] at (1,9){$3$};
\node[below left] at (-5.5,-6.5){$1$};

\xysurfacevar{1}{1}{1}{5}{10};
\xysurfacevar{1}{1}{2}{5}{10};

\yzsurfacevar{1}{1}{3}{10}{5};
\yzsurfacevar{2}{1}{3}{10}{5};
\end{scope}
\begin{scope}[xshift=300,scale=0.4]
 \draw[->] (1,0)--(13,0);
\draw[->] (1,0)--(1,9);
\draw[->] (1,0)--(-5.5,-6.5);
\node[right] at (13,0){$2$};
\node[above] at (1,9){$3$};
\node[below left] at (-5.5,-6.5){$1$};

\xysurfacevar{1}{1}{1}{5}{10};
\xysurfacevar{1}{1}{2}{5}{10};

\yzsurfacevar{1}{1}{3}{10}{5};
\end{scope}

\begin{scope}[xshift=600,scale=0.4]
 \draw[->] (1,0)--(13,0);
\draw[->] (1,0)--(1,9);
\draw[->] (1,0)--(-5.5,-6.5);
\node[right] at (13,0){$2$};
\node[above] at (1,9){$3$};
\node[below left] at (-5.5,-6.5){$1$};

\xysurfacevar{1}{1}{1}{5}{10};

\end{scope}
\end{tikzpicture}
\eea
Using the expression in \eqref{eq:D8surfacedecomp}
\bea
\lambda_{12}=\{k_{12}^{(i)}\mid i=1,\ldots, \infty\},\quad \lambda_{23}=\{k_{23}^{(i)}\mid i=1,\ldots, \infty\}
\eea
the highest weight can be written as
\bea
:\mathsf{Z}(K,x)\prod_{i=1}^{\ell(\lambda_{12})}\prod_{j=1}^{k_{12}^{(i)}}\mathsf{X}_{12}(xq_{4}^{i-1}q_{3}^{j-1})^{-1}\prod_{i=1}^{\ell(\lambda_{23})}\prod_{j=1}^{k_{23}^{(i)}}\mathsf{X}_{23}(xq_{3}^{k_{12}^{(i)}}q_{1}^{j-1}q_{4}^{i-1})^{-1}:.
\eea
Similar to the previous case, a different way to write this will be
\bea
:\prod_{i=1}^{\infty}\frac{\mathsf{W}_{\bar{4}}(xq_{4}^{i-1}q_{1}^{k_{23}^{(i)}}q_{3}^{k_{12}^{(i)}})}{\mathsf{W}_{\bar{4}}(Kxq_{4}^{i-1})}:.
\eea
The free field realization of the contour integral formula can be written explicitly but we omit it.

\paragraph{General case}
For the generic case when we have six surfaces, the highest weight is given as
\bea
{:\mathsf{Z}(K,x)\prod_{\shcube\in\mathcal{B}_{\{\lambda_{A}\}}}\mathsf{A}^{-1}(\chi_{\four,x}(\hcube)):}\simeq {:\frac{\mathsf{Z}(K,x)}{\prod\limits_{A\in\six}\prod\limits_{\Abox\in\lambda_{A}}\mathsf{X}_{A}(\chi_{\bar{A},x}(\Bbox))}\prod_{\shcube\in\mathcal{S}_{\{\lambda_{A}\}}}\mathsf{A}(\chi_{\four,x}(\hcube)):}.
\eea

\begin{proposition}[D8 six-surfaces]
The free field realization of the contour integral of the D8 partition function with nontrivial surface boundary conditions \eqref{eq:D8contourintegral-surface} is
\bea
\mathcal{Z}_{k}&=\frac{\mathcal{G}^{k}}{k!} \oint \prod_{I=1}^{k}\frac{dx_{I}}{2\pi\iota x_{I}} \left\langle \prod_{I=1}^{k}\mathsf{A}^{-1}(x_{I})\,{:\frac{\mathsf{Z}(K,x)}{\prod\limits_{A\in\six}\prod\limits_{\Abox\in\lambda_{A}}\mathsf{X}_{A}(\chi_{\bar{A},x}(\Bbox))}\prod_{\shcube\in\mathcal{S}_{\{\lambda_{A}\}}}\mathsf{A}(\chi_{\four,x}(\hcube)):}\right\rangle\\
&= \frac{\mathcal{G}^{k}}{k!}\oint \prod_{I=1}^{k}\frac{dx_{I}}{2\pi\iota x_{I}}\prod_{I=1}^{k}\frac{1-Kx/x_{I}}{1-x/x_{I}}\prod_{I<J}\mathcal{A}_{\mathbb{C}^{4}}\left(\frac{x_{I}}{x_{J}}\right)^{-1}\prod_{I=1}^{k}\prod_{A\in\six}\prod_{\Abox\in\lambda_{A}}\mathscr{S}_{\bar{A}}\left(\frac{\chi_{A,x}(\Bbox)}{x_{I}}\right)^{-1}\prod_{I=1}^{k}\prod_{\shcube\in\mathcal{S}_{\{\lambda_{A}\}}}\mathcal{A}_{\mathbb{C}^{4}}\left(\frac{\chi_{\four,x}(\hcube)}{x_{I}}\right).
\eea
\end{proposition}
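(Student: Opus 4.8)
The plan is to reduce the statement to a direct evaluation of the vacuum expectation value by the contraction formulas \eqref{eq:contractions}, after first establishing the regularization of the infinite product of $\mathsf{A}$-operators attached to the boundary boxes $\mathcal{B}_{\{\lambda_{A}\}}$. This is the same strategy as in the already-proven one-surface case \eqref{eq:D8_1surface_bdy} and the D6 three-leg case; the only new ingredient is the bookkeeping of the six overlapping semi-infinite surfaces.

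\textbf{Step 1 (Regularization of the highest weight).} Starting from the unregularized operator $:\mathsf{Z}(K,x)\prod_{\shcube\in\mathcal{B}_{\{\lambda_{A}\}}}\mathsf{A}^{-1}(\chi_{\four,x}(\hcube)):$, I would first treat a single surface $\lambda_{A}$ ($A\in\six$): its boxes fill the two infinite directions $a,b\in A$ and a finite Young diagram in $\bar{A}$, and iterating the telescoping identity $\mathsf{A}(x)=\mathsf{a}_{0}(x):\mathsf{X}_{A}(x)\mathsf{X}_{A}(q_{A}x)/\prod_{a\in A}\mathsf{X}_{A}(q_{a}x):$ from \eqref{eq:vertexoprelation2} collapses the two-fold infinite product to a single $\mathsf{X}_{A}(\chi_{\bar{A},x}(\Bbox))^{-1}$ per box $\Bbox\in\lambda_{A}$, up to zero-mode factors. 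Since $\mathsf{a}_{0}$ has no contraction with any $\mathsf{A}(x_{I})$, these zero modes only produce an overall $x_{I}$-independent factor which is absorbed into the (untracked) normalization. For the general configuration I would then use the inclusion--exclusion decomposition of the box set recorded in section~\ref{sec:D8partitionfunction}, $\mathcal{B}_{\{\lambda_{A}\}}=\sum_{A\in\six}\mathcal{B}_{A,\lambda_{A}}-\mathcal{S}_{\{\lambda_{A}\}}$: each $\mathcal{B}_{A,\lambda_{A}}$ regularizes as above, whereas the finitely many boxes of $\mathcal{S}_{\{\lambda_{A}\}}$ (those over-counted at surface intersections) admit no telescoping and survive as genuine $\mathsf{A}(\chi_{\four,x}(\hcube))$ factors. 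This gives exactly the regularized highest weight displayed in the statement, and the same inclusion--exclusion is the one already used to compute $\bfK^{\bd}=\bfN_{\{\lambda_{A}\}}$, so the two computations cross-check.

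\textbf{Step 2 (Evaluation of the VEV).} With the regularized operator $\mathsf{V}$ in hand I would expand $\langle\prod_{I}\mathsf{A}^{-1}(x_{I})\,\mathsf{V}\rangle$ using \eqref{eq:contractions}: the $\mathsf{A}(x_{I})$--$\mathsf{A}(x_{J})$ contractions give $\prod_{I<J}\mathcal{A}_{\mathbb{C}^{4}}(x_{I}/x_{J})^{-1}$; the $\mathsf{A}(x_{I})$--$\mathsf{Z}(K,x)$ contraction gives $\prod_{I}(1-Kx/x_{I})/(1-x/x_{I})$ up to a constant; each $\mathsf{A}(x_{I})$--$\mathsf{X}_{A}$ contraction gives $\mathscr{S}_{\bar{A}}(\chi_{A,x}(\Bbox)/x_{I})^{-1}$; and the $\mathsf{A}(x_{I})$--$\mathsf{A}(\chi_{\four,x}(\hcube))$ contractions with the boxes of $\mathcal{S}_{\{\lambda_{A}\}}$ give $\prod\mathcal{A}_{\mathbb{C}^{4}}(\chi_{\four,x}(\hcube)/x_{I})$. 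Placing the $\mathsf{A}(x_{I})$'s in the prescribed order of Proposition~\ref{prop:D0general} and collecting all spectral-parameter-independent constants into $\mathcal{G}^{k}$ reproduces \eqref{eq:D8contourintegral-surface} verbatim. As a consistency check one reads off that the residues are labelled by $\rho_{\reg}\in\mathcal{SP}_{\{\lambda_{A}\}}$, recovering \eqref{eq:D8surfacepartition}, since the $\mathscr{S}_{\bar{A}}$ factors implement the framing shift $\bfN\to\bfN_{\{\lambda_{A}\}}$ and the $\mathcal{S}_{\{\lambda_{A}\}}$ factors excise exactly the boxes already fixed by the boundary.

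The main obstacle is Step 1: making the telescoping/regularization identity precise for the six mutually intersecting semi-infinite surfaces and verifying that the vertex-operator inclusion--exclusion matches the box-set inclusion--exclusion that defines $\mathcal{S}_{\{\lambda_{A}\}}$, so that the leftover finite $\mathsf{A}$ factors are indexed by precisely that set. This is purely combinatorial — careful accounting of boxes lying on intersections of two, three, or more surfaces — and because the very same counting already enters the derivation of $\bfK^{\bd}$ in section~\ref{sec:D8partitionfunction}, the two can be used to validate one another; the remaining contraction bookkeeping in Step 2 is routine.
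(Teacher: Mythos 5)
Your proposal is correct and follows essentially the same route as the paper: regularize the infinite product of $\mathsf{A}^{-1}$ over the boundary boxes via the telescoping identities \eqref{eq:vertexoprelation2} (collapsing each doubly-infinite surface to one $\mathsf{X}_{A}^{-1}$ per box and leaving the finite over-counting set $\mathcal{S}_{\{\lambda_{A}\}}$ as residual $\mathsf{A}$ factors), then evaluate the correlator term by term with the contraction formulas \eqref{eq:contractions}. The paper likewise leaves the combinatorics of $\mathcal{S}_{\{\lambda_{A}\}}$ implicit, so your identification of that bookkeeping as the only nontrivial point is consistent with its treatment.
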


\subsubsection{Hypersurface boundary conditions}
When we have only one type of hypersurfaces spanning the $124$-plane, the highest weight is given as
\bea
\adjustbox{valign=c}{\begin{tikzpicture}[scale=0.35]
    \draw[->] (-4,-5)--(18,-5);
    \node[right] at (18,-5){$4$};
    \fill (0.5,-5) circle (5pt);
    \fill (7.5,-5) circle (5pt);
    \fill (14.5,-5) circle (5pt);
    
    \begin{scope}[scale=0.3]
        \draw[->] (1,0)--(13,0);
\draw[->] (1,0)--(1,9);
\draw[->] (1,0)--(-5.5,-6.5);
\node[right] at (13,0){$2$};
\node[above] at (1,9){$3$};
\node[below left] at (-5.5,-6.5){$1$};


\xysurfacevar{1}{1}{1}{5}{10};
\xysurfacevar{1}{1}{2}{5}{10};

\end{scope}
\begin{scope}[xshift=200,scale=0.3]
 \draw[->] (1,0)--(13,0);
\draw[->] (1,0)--(1,9);
\draw[->] (1,0)--(-5.5,-6.5);
\node[right] at (13,0){$2$};
\node[above] at (1,9){$3$};
\node[below left] at (-5.5,-6.5){$1$};


\xysurfacevar{1}{1}{1}{5}{10};
\xysurfacevar{1}{1}{2}{5}{10};

\end{scope}

\begin{scope}[xshift=400,scale=0.3]
 \draw[->] (1,0)--(13,0);
\draw[->] (1,0)--(1,9);
\draw[->] (1,0)--(-5.5,-6.5);
\node[right] at (13,0){$2$};
\node[above] at (1,9){$3$};
\node[below left] at (-5.5,-6.5){$1$};


\xysurfacevar{1}{1}{1}{5}{10};
\xysurfacevar{1}{1}{2}{5}{10};

\end{scope}
\end{tikzpicture}}={:\mathsf{Z}(K,x)\prod_{i,j,l=1}^{\infty}\prod_{k=1}^{k_{\bar{3}}}\mathsf{A}^{-1}(q_{1}^{i-1}q_{2}^{j-1}q_{3}^{k-1}q_{4}^{l-1}x):}.
\eea
The infinite part is regularized as
\bea\label{eq:D81hypersurfacevertex}
:\mathsf{Z}(K,x)\prod_{i,j,l=1}^{\infty}\prod_{k=1}^{k_{\bar{3}}}\mathsf{A}^{-1}(q_{1}^{i-1}q_{2}^{j-1}q_{3}^{k-1}q_{4}^{l-1}x):\,\simeq {:\frac{\mathsf{Z}(K,x)}{\prod_{k=1}^{k_{\bar{3}}}\mathsf{W}_{\bar{3}}(xq_{3}^{k-1})}:}
\eea
where we used \eqref{eq:vertexoprelation2} and the equality of \eqref{eq:D81hypersurfacevertex} is up to non-essential zero-modes. We actually can further simplify the highest weight as
\bea
{:\frac{\mathsf{Z}(K,x)}{\prod_{k=1}^{k_{\bar{3}}}\mathsf{W}_{\bar{3}}(xq_{3}^{k-1})}:}={:\frac{\mathsf{Z}(q_{3}^{k_{\bar{3}}}x)}{\mathsf{Z}(Kx)}:}=\mathsf{Z}(Kq_{3}^{-k_{\bar{3}}},q_{3}^{k_{\bar{3}}}x)
\eea
where we used \eqref{eq:vertexoprelation}. 
Note that this time the equality is exact.

We can do a similar computation for the generic case when we have four hypersurfaces and the highest weight is simply given as
\bea
\mathsf{Z}(Kq_{1}^{-k_{\bar{1}}}q_{2}^{-k_{\bar{2}}}q_{3}^{-k_{\bar{3}}}q_{4}^{-k_{\bar{4}}},q_{1}^{k_{\bar{1}}}q_{2}^{k_{\bar{2}}}q_{3}^{k_{\bar{3}}}q_{4}^{k_{\bar{4}}}x)={:\frac{\mathsf{Z}(q_{1}^{k_{\bar{1}}}q_{2}^{k_{\bar{2}}}q_{3}^{k_{\bar{3}}}q_{4}^{k_{\bar{4}}}x)}{\mathsf{Z}(Kx)}:}.
\eea

\begin{proposition}[D8 four-hypersurfaces]
The free field realization of the contour integral formula of the D8 partition function with four nontrivial hypersurface boundary conditions \eqref{eq:D8contourintegral-hypersurface} is
\bea
\mathcal{Z}_{k}&=\frac{\mathcal{G}^{k}}{k!}\oint \prod_{I=1}^{k}\frac{dx_{I}}{2\pi\iota x_{I}} \left\langle \prod_{I=1}^{k}\mathsf{A}^{-1}(x_{I})\mathsf{Z}(Kq_{1}^{-k_{\bar{1}}}q_{2}^{-k_{\bar{2}}}q_{3}^{-k_{\bar{3}}}q_{4}^{-k_{\bar{4}}},q_{1}^{k_{\bar{1}}}q_{2}^{k_{\bar{2}}}q_{3}^{k_{\bar{3}}}q_{4}^{k_{\bar{4}}}x)\right\rangle \\
&=\frac{\mathcal{G}^{k}}{k!} \oint \prod_{I=1}^{k}\frac{dx_{I}}{2\pi\iota x_{I}}\prod_{I=1}^{k}\frac{1-Kx/x_{I}}{1-q_{1}^{k_{\bar{1}}}q_{2}^{k_{\bar{2}}}q_{3}^{k_{\bar{3}}}q_{4}^{k_{\bar{4}}}x/x_{I}}\prod_{I<J}\mathcal{A}_{\mathbb{C}^{4}}\left(\frac{x_{I}}{x_{J}}\right)^{-1}.
\eea
\end{proposition}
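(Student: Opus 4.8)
The plan is to identify the highest weight carried by the hypersurface boundary data $(k_{\bar1},k_{\bar2},k_{\bar3},k_{\bar4})$, insert it into Proposition~\ref{prop:D0general}, and evaluate the resulting vacuum expectation value with the operator products of \eqref{eq:contractions}. The first equality of the statement is then just Proposition~\ref{prop:D0general} applied to that highest weight, so the real content lies in (i) the regularization of the boundary insertion and (ii) one short contraction computation.

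\emph{Step 1: the highest weight.} The $\D8$ configuration with the $\U(1|1)$ background is created by $\mathsf{Z}(K,x)={:}\mathsf{Z}(x)/\mathsf{Z}(Kx){:}$, and each boundary slab amounts to inserting $\mathsf{A}^{-1}$ at every box of $\rho_{\bd}$, so the operator to analyze is ${:}\mathsf{Z}(K,x)\prod_{\shcube\in\rho_{\bd}}\mathsf{A}^{-1}(\chi_{\four,x}(\hcube)){:}$. By \eqref{eq:vertexoprelation2}, up to the zero mode $\mathsf{a}_{0}(x)$, $\mathsf{A}(x)$ is a finite ratio of $\mathsf{Z}$'s whose arguments are $x$ times the monomials of $\bfP_{\four}=\prod_{a\in\four}(1-q_{a})$; additivity of the oscillator modes then shows that, whenever $\bfP_{\four}\,\bfK^{\bd}$ is a finite Laurent polynomial (with $\bfK^{\bd}=\sum_{\shcube\in\rho_{\bd}}\chi_{\four,x}(\hcube)$), the normal-ordered product ${:}\prod_{\shcube\in\rho_{\bd}}\mathsf{A}(\chi_{\four,x}(\hcube)){:}$ equals, modulo zero modes, the finite $\mathsf{Z}$-ratio encoded by that polynomial — this is exactly the regularization already performed in \eqref{eq:D81hypersurfacevertex} for a single hypersurface. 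Here $\bfK^{\bd}=\tfrac{x}{\bfP_{\four}}(1-q_{1}^{k_{\bar1}}q_{2}^{k_{\bar2}}q_{3}^{k_{\bar3}}q_{4}^{k_{\bar4}})$, hence $\bfP_{\four}\,\bfK^{\bd}=x(1-q_{1}^{k_{\bar1}}q_{2}^{k_{\bar2}}q_{3}^{k_{\bar3}}q_{4}^{k_{\bar4}})$ and
\[
{:}\mathsf{Z}(K,x)\prod_{\shcube\in\rho_{\bd}}\mathsf{A}^{-1}(\chi_{\four,x}(\hcube)){:}\ \simeq\ {:}\frac{\mathsf{Z}(q_{1}^{k_{\bar1}}q_{2}^{k_{\bar2}}q_{3}^{k_{\bar3}}q_{4}^{k_{\bar4}}x)}{\mathsf{Z}(Kx)}{:}\ =\ \mathsf{Z}\bigl(\widetilde K,\widetilde x\bigr),
\]
with $\widetilde x=q_{1}^{k_{\bar1}}q_{2}^{k_{\bar2}}q_{3}^{k_{\bar3}}q_{4}^{k_{\bar4}}x$ and $\widetilde K=Kq_{1}^{-k_{\bar1}}q_{2}^{-k_{\bar2}}q_{3}^{-k_{\bar3}}q_{4}^{-k_{\bar4}}$ (so $\widetilde K\widetilde x=Kx$), using \eqref{eq:vertexoprelation} in the last step. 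This recovers the $k_{\bar1}=k_{\bar2}=k_{\bar4}=0$ computation of the preceding paragraph and yields the first equality of the proposition.

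\emph{Step 2: the correlator.} Since the highest weight is the single operator $\mathsf{Z}(\widetilde K,\widetilde x)$, the expectation value $\langle\prod_{I}\mathsf{A}(x_{I})^{-1}\mathsf{Z}(\widetilde K,\widetilde x)\rangle$ factorizes into pairwise contractions. The $\mathsf{A}^{-1}(x_{I})\mathsf{A}^{-1}(x_{J})$ pairings give $\prod_{I<J}\mathcal{A}_{\mathbb{C}^{4}}(x_{I}/x_{J})^{-1}$ exactly as in the boundary-free $\D8$ correlator, and inverting $\mathsf{A}(x')\mathsf{Z}(K,x)=\tfrac{1-x/x'}{1-Kx/x'}{:}\mathsf{Z}(K,x)\mathsf{A}(x'){:}$ from \eqref{eq:contractions} gives the $\mathsf{A}^{-1}(x_{I})\mathsf{Z}(\widetilde K,\widetilde x)$ pairing
\[
\prod_{I=1}^{k}\frac{1-\widetilde K\widetilde x/x_{I}}{1-\widetilde x/x_{I}}
=\prod_{I=1}^{k}\frac{1-Kx/x_{I}}{1-q_{1}^{k_{\bar1}}q_{2}^{k_{\bar2}}q_{3}^{k_{\bar3}}q_{4}^{k_{\bar4}}x/x_{I}} .
\]
The remaining normal-ordered operator has trivial vacuum expectation value, with the zero-mode normalization absorbed into $\mathcal{G}^{k}$; collecting the three factors and the prefactor reproduces \eqref{eq:D8contourintegral-hypersurface}.

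\emph{Main obstacle.} Nothing here is hard once Step 1 is granted; the delicate point is precisely the regularization in Step 1 — namely that the formal identity $\bfP_{\four}\bfK^{\bd}=x(1-q_{1}^{k_{\bar1}}q_{2}^{k_{\bar2}}q_{3}^{k_{\bar3}}q_{4}^{k_{\bar4}})$, which cancels the infinitely many boxes of the four semi-infinite slabs against the divergent $1/\bfP_{\four}$, lifts to an honest identity of vertex operators up to zero modes, and that those discarded zero modes do not contract with any $\mathsf{A}(x_{I})$ and hence leave the pole structure of the contour integral untouched. This is the same mechanism used in the D4 and D6 boundary computations and in \eqref{eq:D81hypersurfacevertex}, so it introduces no new difficulty beyond careful bookkeeping of radial ordering and of the branches of the rational structure functions.
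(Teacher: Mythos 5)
Your proposal is correct and follows essentially the same route as the paper: regularize the infinite product of $\mathsf{A}^{-1}$ insertions over the boundary slabs into the shifted operator $\mathsf{Z}(Kq_{1}^{-k_{\bar1}}q_{2}^{-k_{\bar2}}q_{3}^{-k_{\bar3}}q_{4}^{-k_{\bar4}},\,q_{1}^{k_{\bar1}}q_{2}^{k_{\bar2}}q_{3}^{k_{\bar3}}q_{4}^{k_{\bar4}}x)$ and then read off the contour integrand from Proposition~\ref{prop:D0general} together with the $\mathsf{A}\tbar\mathsf{Z}$ and $\mathsf{A}\tbar\mathsf{A}$ contractions of \eqref{eq:contractions}. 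The only cosmetic difference is that the paper passes through the intermediate $\mathsf{W}_{\bar a}$ operators via \eqref{eq:vertexoprelation} before reaching the exact $\mathsf{Z}$-ratio, while you regularize directly through the character identity $\bfP_{\four}\bfK^{\bd}=x(1-q_{1}^{k_{\bar1}}q_{2}^{k_{\bar2}}q_{3}^{k_{\bar3}}q_{4}^{k_{\bar4}})$.
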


\subsection{Dynamical generation of boundary conditions}

We have discussed the D0 brane counting with the fixed boundary conditions associated with the D2, D4, D6 branes extending in the non-compact directions.
From this point of view, we have focused only on the dynamics of D0 branes, while the remaining branes are treated as non-dynamical objects.
On the other hand, in order to apply the vertex formalism to construct generic toric geometries, it is indispensable to glue the building blocks by summing up all the possible boundary conditions, and hence we should incorporate dynamics of the non-compact branes as well \cite{Kimura-Noshita1}.
In this part, we explain that the contour integral formula arising from the vertex operator formalism naturally describes the dynamics of non-compact branes, which also gives rise to the edge and the face contribution obtained in~\cite{Nekrasov:2023nai}.

As discussed above, the boundary conditions are concisely organized by the vertex operators corresponding to D2, D4 branes.
We have the following operator product factors.
\begin{lemma}\label{lem:SS_XX_OPE}
    Let $x_{ij} = x_i/x_j$.
    Then, we have {\allowdisplaybreaks
   \begin{subequations}
        \begin{align}
            \prod_{i=1}^k \mathsf{S}_a(x_i) & = {: \prod_{i=1}^k \mathsf{S}_a(x_i) :} \times 
            \begin{cases}
                \displaystyle 
                \prod_{1 \le i < j \le k} \frac{(x_{ij};q_a^{-1})_\infty \prod_{b \in \bar{a}} (q_{ab}^{-1} x_{ij};q_a^{-1})_\infty}{(q_a^{-1} x_{ij};q_a^{-1})_\infty \prod_{b \in \bar{a}} (q_b x_{ij};q_a^{-1})_\infty}
                & (|q_a| > 1) \\
                \displaystyle 
                \prod_{1 \le i < j \le k} \frac{(x_{ij};q_a)_\infty \prod_{b \in \bar{a}} (q_{ab} x_{ij};q_a)_\infty}{(q_a x_{ij};q_a)_\infty \prod_{b \in \bar{a}} (q_b^{-1} x_{ij};q_a)_\infty}
                & (|q_a| < 1)
            \end{cases}\\
            \prod_{i=1}^k \mathsf{X}_{ab}(x_i) & = {: \prod_{i=1}^k \mathsf{X}_{ab}(x_i) :} \times 
            \begin{cases}
                \displaystyle 
                \prod_{1 \le i < j \le k} \frac{(x_{ij};q_{a,b}^{-1})_\infty (q_{ab}^{-1} x_{ij};q_{a,b}^{-1})_\infty}{\prod_{c \in \overline{ab}} (q_{c} x_{ij};q_{a,b}^{-1})_\infty}
                & (|q_{a,b}| > 1) \\
                \displaystyle 
                \prod_{1 \le i < j \le k} \frac{(x_{ij};q_{a,b})_\infty (q_{ab} x_{ij};q_{a,b})_\infty}{\prod_{c \in \overline{ab}} (q_{c}^{-1} x_{ij};q_{a,b})_\infty}
                & (|q_{a,b}| < 1) \\
                \displaystyle 
                \prod_{1 \le i < j \le k} \frac{\prod_{c \in \overline{ab}}(q_{bc}^{-1} x_{ij};q_a,q_b^{-1})_\infty}{ (q_{a} x_{ij};q_a,q_b^{-1})_\infty (q_{b}^{-1} x_{ij};q_a,q_b^{-1})_\infty}
                & (|q_{a}| < 1, |q_b| > 1)
            \end{cases}
        \end{align}
    \end{subequations}
    and
    \begin{subequations}
        \begin{align}
            \mathsf{S}_a(x)^{-1} \mathsf{Z}(K,x') & = {:\frac{\mathsf{Z}(K,x')}{\mathsf{S}_a(x)}:} \times
            \begin{cases}
                \displaystyle \frac{(Kx'/x;q_a^{-1})_\infty}{(x'/x;q_a^{-1})_\infty} & (|q_a|>1) \\[1em]
                \displaystyle \frac{(x/Kx';q_a)_\infty}{(x/x';q_a)_\infty} \frac{\theta(x/x';q_a)}{\theta(x/Kx';q_a)} & (|q_a|<1) 
            \end{cases}\\
            \mathsf{X}_{ab}(x)^{-1} \mathsf{Z}(K,x') & = {: \frac{\mathsf{Z}(K,x')}{\mathsf{X}_{ab}(x)} :} \times
            \begin{cases}
                \displaystyle \frac{(Kx'/x;q_{a,b}^{-1})_\infty}{(x'/x;q_{a,b}^{-1})_\infty} & (|q_{a,b}|>1) \\[1em]
                \displaystyle \frac{(x/Kx';q_{a,b})_\infty}{(x/x';q_{a,b})_\infty} \frac{\Gamma(x/Kx';q_{a,b})}{\Gamma(x/x';q_{a,b})} & (|q_{a,b}|>1) \\[1em]
                \displaystyle \frac{(x/Kx';q_{a,b})_\infty}{(x/x';q_{a,b})_\infty} \frac{\theta(x/x';q_a^{-1},q_b)_\infty}{\theta(x/Kx';q_a^{-1},q_b)_\infty} & (|q_{a}|>1,|q_b|<1)
            \end{cases}
        \end{align}
    \end{subequations}}
\end{lemma}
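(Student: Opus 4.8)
\emph{Proof strategy.} The plan is to reduce each claimed identity to the elementary normal-ordering rule for Heisenberg vertex operators and then resum the resulting mode series into (double) $q$-Pochhammer symbols. For any two operators $\mathsf{O}(x)=\mathsf{o}_{0}(x){:}\exp(\sum_{n\neq0}\mathsf{o}_{n}x^{-n}){:}$ and $\mathsf{O}'(x')=\mathsf{o}'_{0}(x'){:}\exp(\sum_{n\neq0}\mathsf{o}'_{n}x'^{-n}){:}$, only the annihilation part of the left factor fails to commute with the creation part of the right one, so
\[
\mathsf{O}(x)\,\mathsf{O}'(x')
= \exp\!\Big(-\sum_{n\ge 1}\tfrac{1}{n}\,c_{n}\,\zeta^{n}\Big)\,\mathsf{o}_{0}(x)\,\mathsf{o}'_{0}(x')\,{:}\mathsf{O}(x)\,\mathsf{O}'(x'){:},
\]
where $\zeta$ is the spectral-parameter ratio fixed by the radial ordering and $n c_{n}$ is read off from $[\mathsf{o}_{n},\mathsf{o}'_{-n}]$ by stripping the $\delta$-symbol. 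Using $\mathsf{a}_n = \bfP_{a}^{[-n]}\mathsf{s}_{a,n}=\bfP_{ab}^{[-n]}\mathsf{x}_{ab,n}=\bfP_{\four}^{[-n]}\mathsf{z}_n$ and $[\mathsf{a}_n,\mathsf{a}_m]=-\tfrac1n\bfP_{\four}^{[n]}\delta_{n+m,0}$ from Definition~\ref{def:vertex-op}, the kernels are $c_n=\bfP_{\four}^{[n]}/(\bfP_{a}^{[-n]}\bfP_{a}^{[n]})$ for the $\mathsf{S}_a$ contractions and $c_n=\bfP_{\four}^{[n]}/(\bfP_{ab}^{[-n]}\bfP_{ab}^{[n]})$ for the $\mathsf{X}_{ab}$ ones, while for the $\mathsf{Z}(K,x)$ cross-terms the extra factor $\widetilde{\mathsf z}^{K}_n=(1-K^{-n})\mathsf z_n$ simply multiplies $c_n$ by $1-K^{\pm n}$. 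The zero-mode products $\mathsf{o}_0(x)\mathsf{o}'_0(x')$ are computed directly from the explicit formulas in Appendix~\ref{app:zero-modes} and account for the elementary $q$- and $x$-prefactors; the content is in the exponentiated mode sum.

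First I would do $\prod_i\mathsf{S}_a(x_i)$. Writing $\bfP_{a}^{[n]}=1-q_a^n$ and $\bfP_{\four}^{[n]}=\prod_{b\in\four}(1-q_b^n)$, one has $c_n=-q_a^{n}\prod_{b\in\bar a}(1-q_b^{n})/(1-q_a^{n})$. Expanding $\prod_{b\in\bar a}(1-q_b^n)$ over subsets $S\subseteq\bar a$ and expanding $1/(1-q_a^n)$ geometrically — as $\sum_{k\ge0}q_a^{kn}$ when $|q_a|<1$, or as $-\sum_{k\ge1}q_a^{-kn}$ when $|q_a|>1$ — the double sum over $n$ and $k$ telescopes via $\sum_{n\ge1}\tfrac1n w^n=-\log(1-w)$ into $\sum_{S}(-1)^{|S|}\log$ of a single $q$-Pochhammer symbol; the Calabi--Yau relation $q_1q_2q_3q_4=1$ turns $q_{\bar a}$ into $q_a^{-1}$ and each $q_bq_c$ with $\{b,c,d\}=\bar a$ into $q_{ad}^{-1}$, so after exponentiation the product over $S$ assembles into exactly the displayed ratio, with base $q_a$ for $|q_a|<1$ and $q_a^{-1}$ for $|q_a|>1$. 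The computation for $\prod_i\mathsf{X}_{ab}(x_i)$ is identical except that $\bfP_{ab}^{[\pm n]}=(1-q_a^{\pm n})(1-q_b^{\pm n})$ supplies two inverse-difference factors $1/(1-q_a^{-n})$ and $1/(1-q_b^{-n})$, each expandable in one of two directions; the three chambers listed correspond to the sign patterns $(+,+),(-,-),(+,-)$ of $(\log|q_a|,\log|q_b|)$, and the telescoped sum produces a \emph{double} Pochhammer symbol $(z;q_a^{\pm1},q_b^{\pm1})_\infty$.

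The cross-contractions $\mathsf{S}_a(x)^{-1}\mathsf{Z}(K,x')$ and $\mathsf{X}_{ab}(x)^{-1}\mathsf{Z}(K,x')$ run along the same lines: from $\mathsf z_{-n}=\mathsf a_{-n}/\bfP_{\four}^{[n]}$ one gets $n[\mathsf s_{a,n},\widetilde{\mathsf z}^{K}_{-n}]=-(1-K^{n})/(1-q_a^{-n})$, and in the chamber $|q_a|>1$ the convergent expansion $1/(1-q_a^{-n})=\sum_{k\ge0}q_a^{-kn}$ telescopes directly to $(Kx'/x;q_a^{-1})_\infty/(x'/x;q_a^{-1})_\infty$, with no correction. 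The hard part is the complementary chamber $|q_a|<1$ (and the mixed chambers for $\mathsf{X}_{ab}$): there the geometric expansion of $1/(1-q_a^{-n})$ that converges has base $q_a$, so rewriting the answer in the required form — equivalently, carrying through the reordering of the semi-infinite product $\prod_k$ together with the zero-mode normalization $\widetilde{\mathsf z}^{K}_0(x')$ — forces the reflection identity relating $(z;q_a)_\infty$ and $(z;q_a^{-1})_\infty$, which is what produces the theta prefactor $\theta(x/x';q_a)/\theta(x/Kx';q_a)$; in the two-variable $\mathsf{X}_{ab}$ case the analogous reflection for double Pochhammers produces the elliptic/$q$-Gamma prefactor $\Gamma(x/Kx';q_{a,b})/\Gamma(x/x';q_{a,b})$. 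I expect this — keeping the theta- and Gamma-function prefactors consistent across the finitely many convergence chambers, including their interplay with the zero modes — to be the only step beyond routine bookkeeping; once the reflection identities are in place, checking the remaining elementary factors chamber by chamber is mechanical.
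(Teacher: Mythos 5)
Your strategy is correct and is the standard (indeed essentially the only) way to verify this lemma; the paper itself states Lemma~\ref{lem:SS_XX_OPE} without proof, so there is no alternative argument to compare against. I spot-checked your kernels — $c_n=\bfP_{\four}^{[n]}/(\bfP_a^{[n]}\bfP_a^{[-n]})=-q_a^n\prod_{b\in\bar a}(1-q_b^n)/(1-q_a^n)$, the $(1-K^{\pm n})$ dressing for the $\mathsf{Z}(K,x')$ cross-terms, and the subset expansion using $q_1q_2q_3q_4=1$ — and the chamber-by-chamber geometric resummation does reproduce every displayed Pochhammer ratio, including the $\theta$- and $\Gamma$-prefactors in the complementary chambers via the reflection $(x;q)_\infty=(xq^{-1};q^{-1})_\infty^{-1}$ and its two-variable analogue. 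The one point you leave implicit is whether the contraction variable is $x_i/x_j$ or $x_j/x_i$: with the paper's ordering conventions the annihilation modes of the left factor hit the creation modes of the right, producing a function of $x_j/x_i$ for the pair $i<j$, so you should pin down the radial-ordering convention explicitly when matching to the stated $x_{ij}=x_i/x_j$; this affects only the bookkeeping, not the structure of the result.
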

Based on these factors, we have a generalized version of Prop.~\ref{prop:D0general} which provides the contour integral formula incorporating the boundary conditions.
\begin{conjecture}
    Let $\underline{d}_{\underline{\mathbf{4}}} = (d_a)_{a \in \underline{\mathbf{4}}}$, $\underline{d}_{\underline{\mathbf{6}}} = (d_A)_{A \in \underline{\mathbf{6}}}$ the numbers of the non-compact D2, D4 branes, and $(y_{a,i})_{a \in \underline{{\mathbf{4}}},i=1,\ldots,d_a}$, $(z_{A,i})_{A \in \underline{{\mathbf{6}}},i=1,\ldots,d_A}$ the positions of these branes.
    The free field realization of the contour integral formula for the single D8 brane system is given as follows, 
    \begin{align}
    \mathcal{Z}_{k,\underline{d}_{\underline{\mathbf{4}}},\underline{d}_{\underline{\mathbf{6}}}} &= \frac{1}{k!\underline{d}_{\underline{\mathbf{4}}}!\underline{d}_{\underline{\mathbf{6}}}!} \oint \prod_{I=1}^k \frac{dx_I}{2 \pi \iota x_I} \oint \prod_{\substack{a \in \underline{\mathbf{4}} \\ i = 1, \ldots, d_a}} \frac{dy_{a,i}}{2 \pi \iota y_{a,i}} \oint \prod_{\substack{A \in \underline{\mathbf{6}} \\ i = 1, \ldots, d_A}} \frac{dz_{A,i}}{2 \pi \iota z_{A,i}} \nonumber \\ 
    & \hspace{8em} \left\langle \prod_{I=1,\ldots,k}\mathsf{A}^{-1}(x_{I}) \prod_{\substack{a \in \underline{\mathbf{4}} \\ i = 1, \ldots, d_a}}\mathsf{S}_{a}^{-1}(y_{a,i}) \prod_{\substack{A \in \underline{\mathbf{6}} \\ i = 1, \ldots, d_A}}\mathsf{X}_{A}^{-1}(z_{A,i}) \mathsf{Z}(K,x) \right\rangle
    \end{align}
    where we use the notation, $\underline{d}_{\underline{\mathbf{4}}}! = \prod_{a \in \underline{\mathbf{4}}} d_a!$, $\underline{d}_{\underline{\mathbf{6}}}! = \prod_{A \in \underline{\mathbf{6}}} d_A!$.
    The choice of integration contours of the $y$- and $z$-variables controls the boundary conditions.
    This integral also reproduces the edge and face contributions.
\end{conjecture}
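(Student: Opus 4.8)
The plan is to establish the claimed identity in three stages: (i) reduce the correlator to an explicit product of rational (or $q$-shifted factorial) factors by Wick contraction; (ii) fix the contours of the auxiliary variables $y_{a,i}$, $z_{A,i}$ so that their residues reconstruct arbitrary leg and surface boundary data; and (iii) recognize the resulting multiple residue sum as the product of the DT vertex with the edge and face contributions of \cite{Nekrasov:2023nai}. For stage (i) one normal-orders the correlator using \eqref{eq:contractions} and Lemma~\ref{lem:SS_XX_OPE}: the $\mathsf{A}$--$\mathsf{A}$ contractions give $\prod_{I<J}\mathcal{A}_{\mathbb{C}^{4}}(x_I/x_J)^{-1}$, the $\mathsf{A}$--$\mathsf{Z}$ contractions give $\prod_I(1-Kx/x_I)/(1-x/x_I)$, and the $\mathsf{A}$--$\mathsf{S}_a$, $\mathsf{A}$--$\mathsf{X}_A$ contractions give $\prod_{I,a,i}g_{\bar{a}}(y_{a,i}/x_I)^{-1}$ and $\prod_{I,A,i}\mathscr{S}_{\bar{A}}(z_{A,i}/x_I)^{-1}$; these are exactly the D0-brane factors of Sections~\ref{sec:D4partitionfunction}--\ref{sec:D8partitionfunction}, now with the spectral parameters of the non-compact branes left free. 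The remaining $\mathsf{S}$--$\mathsf{S}$, $\mathsf{X}$--$\mathsf{X}$, $\mathsf{S}$--$\mathsf{X}$, $\mathsf{S}$--$\mathsf{Z}$ and $\mathsf{X}$--$\mathsf{Z}$ contractions, precisely those tabulated in Lemma~\ref{lem:SS_XX_OPE}, encode the self-interactions of the non-compact branes and their coupling to the D8, and are the candidates for the edge and face kernels.

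The crux is stage (ii). The factor accompanying $\mathsf{S}_a(x)^{-1}\mathsf{Z}(K,x')$ has simple poles at $x = q_a^{-m}x'$, $m \geq 0$ (in a suitable chamber), so a $y_{a,i}$-contour that encircles these poles pins each non-compact D2 brane to a semi-infinite column of D0 boxes anchored on the D8; the $\mathsf{S}$--$\mathsf{S}$ operator products, together with the coupling to $\mathsf{Z}$, restrict the admissible pole assignments to those labelled by Young diagrams, and summing over the number $d_a$ of such branes and over their residues reconstructs an arbitrary Young-diagram (resp.\ plane-partition, in the D8 case) leg boundary condition. This is the mechanism behind the regularizations \eqref{eq:D4oneboundaryregularize} and \eqref{eq:D4-D2simplication}, now read in reverse. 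Running the same argument with the $\mathsf{X}_A^{-1}\mathsf{Z}$ factor builds a surface boundary condition, and the mixed $\mathsf{S}$--$\mathsf{X}$ poles account for the boxes in the intersection of a leg and a surface, reproducing the inclusion--exclusion subtractions carried by the sets $\mathcal{S}_{\pi_1\pi_2\pi_3\pi_4}$ and $\mathcal{S}_{\{\lambda_A\}}$. One thereby recovers the sum $\sum_{\{\pi_a\}}$, $\sum_{\{\lambda_A\}}$ weighted by the effective highest weights classified in Section~\ref{sec:D8_w_bdry}.

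A convenient way to organize stage (iii) is to integrate the $x_I$-variables first at fixed $y,z$: by Prop.~\ref{prop:D0general} and the propositions of Section~\ref{sec:D8_w_bdry} this produces the DT vertex $\mathcal{Z}^{\D8}_{\four;4}[\rho_{\reg},K]$ with highest weight dictated by the positions of the non-compact branes. Performing the $y,z$-integrals afterwards then glues vertices: residues at poles shared between $\mathsf{S}$/$\mathsf{X}$ operators attached to different vertices produce the edge factors along the compact curves, while the face-type residues produce the face factors, matching \cite{Nekrasov:2023nai}. Combining these residue sums with the prefactors from stage (i) is then bookkeeping, provided normalizations are tracked; in particular the difference between the symmetric ``$\mathrm{NO}$''-type boundary weights and the ones used here (cf.\ the Remarks of Section~\ref{sec:partitionfunct}) should be absorbed into a redefinition of the edge contribution, so that the final gluing is independent of this choice.

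The main obstacle is to make stage (ii) canonical and to prove it as an exact identity rather than order by order in $\mathfrak{q}$ and in box number. One must show that the semi-infinite strings of poles telescope \emph{exactly} into the finite highest-weight operators of Section~\ref{sec:D8_w_bdry} (for instance that a column of $\mathsf{S}_1$-poles collapses to a single $\mathsf{X}_{12}$ with shifted spectral parameter, as in \eqref{eq:D4oneboundaryregularize}, \eqref{eq:D4-D2simplication}), that the inclusion--exclusion structure of the OPE poles reproduces $\mathcal{S}_{\pi_1\pi_2\pi_3\pi_4}$, $\mathcal{S}_{\{\lambda_A\}}$ with no spurious residues, and that all sign factors --- the solid-partition signs \eqref{eq:D8boundary_signrule} and any boundary-dependent signs generated in the gluing --- come out correctly. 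Controlling the interchange of the infinite products with the contour integrals, and pinning down exactly which contour corresponds to which edge or face of the toric diagram, is where the genuine work lies; the complete treatment is deferred to \cite{Kimura-Noshita1}.
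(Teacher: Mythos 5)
This statement is a conjecture in the paper; the authors do not prove it in general but only verify it in special cases (the one-leg proposition with $d_a = d\,\delta_{a,4}$, a concrete two-leg example with $d_{\underline{\mathbf{4}}}=(1,2,0,0)$, and the one-surface proposition), in each case by applying Lemma~\ref{lem:SS_XX_OPE}, choosing the $y$- or $z$-contours to pick up the poles at $y_i=\chi_{\bar 4,x}(\cube)$ for $\cube\in\pi_4$ (resp.\ $z_i=\chi_{34,x}(\Bbox)$ for $\Bbox\in\lambda_{12}$), and reading off the edge/face factor from the leftover OPE data. Your three-stage plan is essentially the same strategy, correctly identifying the key lemma and the residue mechanism, and you are candid that the hard part --- making the contour prescription canonical and proving the telescoping and inclusion--exclusion structure exactly --- remains open, which matches the status of the statement in the paper. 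Two points of caution. First, your stage~(iii) proposes integrating the $x_I$ first at fixed $y,z$; the paper does the reverse (evaluate the $y,z$ residues first to land on the known boundary-condition correlator of Section~\ref{sec:D8_w_bdry}), and the reverse order is what actually lets one quote Prop.~\ref{prop:D0general} and the boundary-condition propositions. Second, and more substantively, your description of the $\mathsf{S}_a^{-1}\mathsf{Z}$ poles as a semi-infinite string $y = q_a^{-m}x'$, $m\ge 0$, conflates two distinct families: in the paper's DT prescription only the poles at $y_i=\chi_{\bar 4,x}(\cube)$ (built from the $m=0$ pole together with the transverse $\mathsf{S}$--$\mathsf{S}$ poles) are enclosed, while the $q_4^{\mathbb{Z}_{<0}}$-shifted poles are explicitly excluded --- the paper remarks that those give the PT4 vertex instead. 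Any proof of the conjecture must make this separation precise, so your stage~(ii) as written would pick up the wrong (PT-type) residues unless the contour is specified more carefully.
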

From the point of view of the vertex operators, generalization to the multiple D8 brane system is straightforward.
We can in the same way consider the D4 and D6 brane systems by replacing the $\mathsf{Z}$-operator with the corresponding D4 and D6 vertex operators.
We examine this conjecture with several examples in the following.

%

\paragraph{One-leg boundary conditions}
The contour integral formula with a one-leg boundary condition~\eqref{eq:D8_1leg_bdy} may be obtained from the following vertex operator correlator.
\begin{proposition}
Let $|q_4| > 1$.
For the one-leg boundary condition such that $|\pi_4| = d$, namely $d_a = d \delta_{a,4}$ for $a \in \underline{\mathbf{4}}$, we have
\begin{align}
\mathcal{Z}_{k,d,0}&= \frac{1}{k!d!} \oint \prod_{I=1}^{k}\frac{dx_{I}}{2\pi\iota x_{I}} \oint \prod_{i=1}^{d}\frac{dy_{i}}{2\pi\iota y_{i}} \left\langle \prod_{I=1}^{k}\mathsf{A}^{-1}(x_{I}) \prod_{i=1}^{d}\mathsf{S}_{4}^{-1}(y_i) \mathsf{Z}(K,x) \right\rangle \nonumber \\
& = \frac{{\mathcal{Z}}_\text{edge}}{k!} \oint \prod_{I=1}^{k}\frac{dx_{I}}{2\pi\iota x_{I}} \left\langle \prod_{I=1}^{k}\mathsf{A}^{-1}(x_{I}):\frac{\mathsf{Z}(K,x)}{\prod_{\scube \in\pi_4}\mathsf{S}_{4}(\chi_{\bar{4},x}(\cube))}: \right\rangle
\end{align}
where the integration contour of the $y$-variables is taken to pick up the poles at $y_i = \chi_{\bar{4},x}(\cube)$ for $\cube \in \pi_4$, and the edge factor is given by
\begin{align}
    {\mathcal{Z}}_\text{edge} = \left( q_{1,2,3};q_4^{-1} \right)^d \mathbb{I}\left[ \frac{1}{\mathbf{P}_4} \left( (1 -\mathbf{Q}_{1,2,3}^\vee) \mathbf{S}^\vee \mathbf{S} - \mathbf{N}^\vee \mathbf{S}  \right) \right] \left.\prod_{1 \le i < j \le d} \frac{\theta(q_{1,2,3}y_{ji};q_4^{-1})}{\theta(y_{ji};q_4^{-1})}\right|_{y_i = \chi_{\bar{4},x}(\cube)}
\end{align}
with $\mathbf{Q}_{1,2,3} = \sum_{i=1}^3 \mathbf{Q}_i$ and $\mathbf{S} = \sum_{\scube \in \pi_4} \chi_{\bar{4},x}(\cube)$.
\end{proposition}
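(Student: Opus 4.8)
The plan is to evaluate the vertex-operator correlator defining $\mathcal{Z}_{k,d,0}$ directly through the operator product expansions, and then to perform the $d$ integrals over $y_{1},\dots,y_{d}$ by residues. First I would apply the contractions \eqref{eq:contractions} together with Lemma~\ref{lem:SS_XX_OPE} (using the $|q_{4}|>1$ branch, which is the hypothesis) to write the integrand of $\mathcal{Z}_{k,d,0}$ as a product of elementary factors: the $\mathsf{A}$--$\mathsf{A}$ contractions produce $\prod_{I<J}\mathcal{A}_{\mathbb{C}^{4}}(x_{I}/x_{J})^{-1}$; the $\mathsf{A}$--$\mathsf{Z}(K,x)$ contractions produce $\prod_{I}\frac{1-Kx/x_{I}}{1-x/x_{I}}$; the $\mathsf{A}$--$\mathsf{S}_{4}$ contractions produce $\prod_{I,i}g_{\bar{4}}(y_{i}/x_{I})$ up to zero modes; the $\mathsf{S}_{4}$--$\mathsf{S}_{4}$ contractions produce the $q_{4}^{-1}$-Pochhammer product of Lemma~\ref{lem:SS_XX_OPE}; and the $\mathsf{S}_{4}$--$\mathsf{Z}(K,x)$ contractions produce $\prod_{i}\frac{(Kx/y_{i};q_{4}^{-1})_{\infty}}{(x/y_{i};q_{4}^{-1})_{\infty}}$. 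The constant prefactor $(q_{1,2,3};q_{4}^{-1})^{d}$, and (in the conventions of Prop.~\ref{prop:D0general}) the overall $\mathcal{G}^{k}$, are supplied by the zero modes $\mathsf{a}_{0},\mathsf{s}_{4,0},\mathsf{z}_{0}$ of Appendix~\ref{app:zero-modes}.

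Second I would carry out the $y$-integrals recursively. The only factors with poles inside the prescribed contour are the $\mathsf{S}_{4}$--$\mathsf{Z}$ factor, with a simple pole at $y=x$ (the box at the origin of $\pi_{4}$), and the $\mathsf{S}_{4}$--$\mathsf{S}_{4}$ factor, whose denominator $\prod_{b\in\bar{4}}(q_{b}y_{ij};q_{4}^{-1})_{\infty}$, once a subset of the $y$'s is fixed at box positions, contributes simple poles at $y_{j}=q_{b}y_{i}$ for $b\in\{1,2,3\}$ — exactly the positions at which a box may legally be added to a plane partition in the $\bar{4}$-directions. Iterating, the contour collects one residue for each way of growing $\pi_{4}$ box by box, and after dividing by $d!$ this sum reduces to the single configuration $\pi_{4}$; this is the standard plane-partition residue calculus, transported here along the $q_{4}$-layered map $\cube\mapsto\chi_{\bar{4},x}(\cube)$. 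At the residue the normal-ordered factor $:\prod_{i}\mathsf{S}_{4}^{-1}(y_{i}):$ becomes $:\prod_{\cube\in\pi_{4}}\mathsf{S}_{4}^{-1}(\chi_{\bar{4},x}(\cube)):$, which recombines with $\mathsf{Z}(K,x)$ into $:\mathsf{Z}(K,x)/\prod_{\cube\in\pi_{4}}\mathsf{S}_{4}(\chi_{\bar{4},x}(\cube)):$; the $\mathsf{A}$--$\mathsf{S}_{4}$ factors evaluated at $y_{i}=\chi_{\bar{4},x}(\cube)$ are precisely the contractions of the $\mathsf{A}(x_{I})$ with this operator, and reinserting them into the correlator reproduces the right-hand side of \eqref{eq:D8_1leg_bdy}. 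Everything left over — the $\mathsf{S}_{4}$--$\mathsf{S}_{4}$ and $\mathsf{S}_{4}$--$\mathsf{Z}$ contraction data minus the residue-carrying poles, evaluated at $y_{i}=\chi_{\bar{4},x}(\cube)$, together with the constant prefactor — is independent of the $x_{I}$ and constitutes $\mathcal{Z}_{\mathrm{edge}}$.

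Third I would identify this $\mathcal{Z}_{\mathrm{edge}}$ with the stated index expression. Using that $\mathbb{I}[\,\cdot\,]$ is defined so that products of structure functions are indices of the corresponding characters (Appendix~\ref{app:notations}), the rational part of the $\mathsf{S}_{4}$--$\mathsf{S}_{4}$ contraction at the box positions becomes $\mathbb{I}\!\left[\frac{1}{\mathbf{P}_{4}}(1-\mathbf{Q}_{1,2,3}^{\vee})\mathbf{S}^{\vee}\mathbf{S}\right]$ with $\mathbf{S}=\sum_{\cube\in\pi_{4}}\chi_{\bar{4},x}(\cube)$, while the pieces that do not close into the index survive as the quasi-periodic ratio $\prod_{1\le i<j\le d}\frac{\theta(q_{1,2,3}y_{ji};q_{4}^{-1})}{\theta(y_{ji};q_{4}^{-1})}$ evaluated at $y_{i}=\chi_{\bar{4},x}(\cube)$; likewise the $\mathsf{S}_{4}$--$\mathsf{Z}$ contraction becomes $\mathbb{I}\!\left[-\mathbf{N}^{\vee}\mathbf{S}/\mathbf{P}_{4}\right]$ with $\mathbf{N}=(1-K)x$ the framing character of $\mathsf{Z}(K,x)$, and the $d$ zero-mode self-contractions give the factor $(q_{1,2,3};q_{4}^{-1})^{d}$ in front. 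Assembling these reproduces the displayed $\mathcal{Z}_{\mathrm{edge}}$, and the remaining statement (that picking up exactly the poles $y_{i}=\chi_{\bar{4},x}(\cube)$ gives $\mathcal{Z}_{k,d,0}=\mathcal{Z}_{\mathrm{edge}}\cdot(\text{one-leg partition function})$) then follows from Step~2.

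I expect the main obstacle to be Step~2: verifying that the prescribed contour selects exactly the poles labelled by the boxes of $\pi_{4}$ and no spurious ones (the $\mathsf{S}_{4}$--$\mathsf{Z}$ factor also carries poles at $y_{i}=xq_{4}^{-n}$, $n\ge 1$, which must be excluded), that the $1/d!$ symmetry factor exactly cancels the $d!$ orderings in which $\pi_{4}$ can be grown, and that the ``spectator'' contraction factors separate cleanly from the pole-carrying ones. In the $|q_{4}|>1$ chamber this separation is entangled with the regularization of the infinite products into $q_{4}$-theta functions, which has to be performed consistently with the residue extraction; once that is under control, the identification in Step~3 is essentially bookkeeping through the index--structure-function dictionary.
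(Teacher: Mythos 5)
Your proposal is correct and follows essentially the same route as the paper: apply Lemma~\ref{lem:SS_XX_OPE} to reduce the correlator to a product of Pochhammer/structure-function factors, rewrite the symmetric $\mathsf{S}_4$--$\mathsf{S}_4$ product as an $i\neq j$ index part times the theta-function ratio, and evaluate the $y$-residues at $y_i=\chi_{\bar{4},x}(\cube)$ (the paper notes separately that the excluded poles at $y_i= \chi\, q_4^{\mathbb{Z}_{<0}}$ give the PT4 vertex, matching your concern). The only difference is that you spell out the residue bookkeeping and the index identification in more detail than the paper, which simply states "evaluating the residue \ldots we obtain the result."
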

\begin{proof}
By Lemma~\ref{lem:SS_XX_OPE}, we have
\begin{align}
\mathcal{Z}_{k,d,0} &= \frac{1}{k!d!} \oint \prod_{I=1}^{k}\frac{dx_{I}}{2\pi\iota x_{I}} \oint \prod_{i=1}^{d}\frac{dy_{i}}{2\pi\iota y_{i}} \prod_{I<J}^k\mathcal{A}_{\mathbb{C}^{4}}\left(x_{IJ}\right)^{-1}\prod_{I=1}^{k}\prod_{i=1}^{d}g_{\bar{4}}\left(\frac{y_i}{x_{I}}\right)^{-1} \prod_{I=1}^{k}\frac{1-Kx/x_{I}}{1-x/x_{I}} \nonumber \\
& \hspace{10em} \times \prod_{i<j}^d \frac{(y_{ij};q_4^{-1})_\infty(q_{12,23,31}y_{ij};q_4^{-1})_\infty}{(q_{1,2,3}y_{ij};q_4^{-1})_\infty(q_4^{-1}y_{ij};q_4^{-1})_\infty} \prod_{i=1}^d \frac{(Kx/y_i;q_4^{-1})_\infty}{(x/y_i;q_4^{-1})_\infty} \nonumber \\
&= \frac{1}{k!d!} \oint \prod_{I=1}^{k}\frac{dx_{I}}{2\pi\iota x_{I}} \oint \prod_{i=1}^{d}\frac{dy_{i}}{2\pi\iota y_{i}} \prod_{I<J}^k\mathcal{A}_{\mathbb{C}^{4}}\left(x_{IJ}\right)^{-1}\prod_{I=1}^{k}\prod_{i=1}^{d}g_{\bar{4}}\left(\frac{y_i}{x_{I}}\right)^{-1} \prod_{I=1}^{k}\frac{1-Kx/x_{I}}{1-x/x_{I}} \nonumber \\
& \hspace{10em} \times \prod_{i \neq j}^d \frac{(y_{ij};q_4^{-1})_\infty}{(q_{1,2,3}y_{ij};q_4^{-1})_\infty}\prod_{i=1}^d \frac{(Kx/y_i;q_4^{-1})_\infty}{(x/y_i;q_4^{-1})_\infty} \prod_{i<j}^d \frac{\theta(q_{1,2,3}y_{ji};q_4^{-1})}{\theta(y_{ji};q_4^{-1})} 
\, .
\end{align}
Evaluating the residue at $y_i = \chi_{\bar{4},x}(\cube)$ for $\cube \in \pi_4$, we obtain the result.
\end{proof}
The edge factor $\mathcal{Z}_\text{edge}$ agrees with that given in~\cite{Nekrasov:2023nai} up to the boundary contribution of the vertex function~\cite{Kimura:2024xpr}.
We also remark that the poles at $y_i \mapsto x q_1^{i_1 - 1} q_2^{i_2 - 1} q_3^{i_3 - 1} q_4^{\mathbb{Z}_{<0}}$ for $(i_1,i_2,i_3) = \cube \in \pi_4$, that we do not take into account in this case, give rise to the one-leg PT4 vertex~\cite{Cao:2023lon,Piazzalunga:2023qik,Kimura:2024xpr}.

\paragraph{Two-leg boundary conditions}
We can similarly consider the two-leg boundary condition. 
Let us examine a concrete example with $d_{\underline{\mathbf{4}}}=(1,2,0,0)$ and $|q_{1,2}|>1$,
\begin{align}
    & Z_{k,d_{\underline{\mathbf{4}}},0} \nonumber \\
    & = \frac{1}{2 \cdot k!} \oint \prod_{I=1}^k \frac{dx_I}{2\pi\iota x_I} \oint \frac{dy_{1}}{2\pi\iota y_{1}} \oint \prod_{i=1,2}\frac{dy_{2,i}}{2\pi\iota y_{2,i}} \left< \prod_{I=1}^k \mathsf{A}^{-1}(x_I) \mathsf{S}_1^{-1}(y_{1}) \prod_{i=1,2} \mathsf{S}_2^{-1}(y_{2,i}) \mathsf{Z}(K,x) \right> \nonumber \\
    & = \frac{1}{2 \cdot k!} \oint \prod_{I=1}^k \frac{dx_I}{2\pi\iota x_I} \oint \frac{dy_{1}}{2\pi\iota y_{1}} \oint \prod_{i=1,2}\frac{dy_{2,i}}{2\pi\iota y_{2,i}} \prod_{I<J}^k\mathcal{A}_{\mathbb{C}^{4}}\left(x_{IJ}\right)^{-1}\prod_{I=1}^{k}\left[g_{\bar{1}}\left(\frac{y_1}{x_{I}}\right)^{-1}\prod_{i=1,2}g_{\bar{2}}\left(\frac{y_{2,i}}{x_{I}}\right)^{-1}\right] 
    \nonumber \\
    & \hspace{6em} \times \prod_{I=1}^{k}\frac{1-Kx/x_{I}}{1-x/x_{I}} \frac{(y_{2,1}/y_{2,2};q_2^{-1})_\infty(q_{12,23,24}^{-1}y_{2,1}/y_{2,2};q_2^{-1})_\infty}{(q_{1,3,4}y_{2,1}/y_{2,2};q_2^{-1})_\infty(q_{2}^{-1}y_{2,1}/y_{2,2};q_2^{-1})_\infty} \nonumber \\
    & \hspace{6em} \times \frac{(Kx/y_1;q_1^{-1})_\infty}{(x/y_1;q_1^{-1})_\infty} \prod_{i=1,2} \frac{(Kx/y_{2,i};q_2^{-1})_\infty}{(x/y_{2,i};q_{2}^{-1})_\infty} \prod_{i=1,2} \mathscr{S}_{34}(q_1 y_{2,i}/y_1) \, .
\end{align}
For this contour integral, we first take the pole at $y_1 = x$.
Then, we take the pole at $y_{2,1} = x q_2$ in the $\mathscr{S}$-function.
There are three possible poles for the remaining $y_{2,2}$-variable, $y_{2,2} = q_{12}x$, $q_3x$, and $q_4x$, which correspond to the following configurations,
\if0
\begin{align}
    & Z_{k,d_{\underline{\mathbf{4}}},0} 
    \nonumber \\
    & = \frac{1}{2 \cdot k!} \frac{(Kq_1^{-1};q_1^{-1})_\infty}{(q_1^{-1};q_1^{-1})_\infty} \oint \prod_{I=1}^k \frac{dx_I}{2\pi\iota x_I} \oint \prod_{i=1,2}\frac{dy_{2,i}}{2\pi\iota y_{2,i}} \prod_{I<J}^k\mathcal{A}_{\mathbb{C}^{4}}\left(x_{IJ}\right)^{-1} \prod_{I=1}^{k}\left[g_{\bar{1}}\left(\frac{x}{x_{I}}\right)^{-1}\prod_{i=1,2}g_{\bar{2}}\left(\frac{y_{2,i}}{x_{I}}\right)^{-1}\right] \nonumber \\
    & \hspace{6em} \times \prod_{I=1}^{k}\frac{1-Kx/x_{I}}{1-x/x_{I}} \frac{(y_{2,1}/y_{2,2};q_2^{-1})_\infty(q_{12,23,24}^{-1}y_{2,1}/y_{2,2};q_2^{-1})_\infty}{(q_{1,3,4}y_{2,1}/y_{2,2};q_2^{-1})_\infty(q_{2}^{-1}y_{2,1}/y_{2,2};q_2^{-1})_\infty} \nonumber \\
    & \hspace{6em} \times \prod_{i=1,2} \frac{(Kx/y_{2,i};q_2^{-1})_\infty}{(x/y_{2,i};q_{2}^{-1})_\infty} \prod_{i=1,2} \mathscr{S}_{34}(q_1 y_{2,i}/x) \, .
\end{align}
\fi
\begin{equation}
\begin{tikzpicture}[scale=0.35,baseline=(current bounding  box.center)]
    \begin{scope}[scale=.5]
        \draw[->] (1,0)--(13,0);
        \node[right] at (13,0){$2$};
        \draw[->] (1,0)--(1,9);
        \node[above] at (1,9){$3, 4$};
        \draw[->] (1,0)--(-5.5,-6.5);
        \node[below left] at (-5.5,-6.5){$1$};
        \xstickvar{1}{1}{1}{5};
        \ystickvar{1}{2}{1}{9};
        \ystickvar{2}{2}{1}{9};
    \end{scope}
    \begin{scope}[shift={(15,0)},scale=.5]
        \draw[->] (1,0)--(13,0);
        \node[right] at (13,0){$2$};
        \draw[->] (1,0)--(1,9);
        \node[above] at (1,9){$3, 4$};
        \draw[->] (1,0)--(-5.5,-6.5);
        \node[below left] at (-5.5,-6.5){$1$};
        \xstickvar{1}{1}{1}{5};
        \ystickvar{1}{2}{1}{9};
        \ystickvar{1}{1}{2}{10};
    \end{scope}
\end{tikzpicture}
\end{equation}
In this way, we can dynamically generate the leg boundary conditions from the contour integral of the $y$-variables, which are identified with the D2 brane positions.
The higher-rank generalization, i.e., the multiple D8 brane system, is straightforward.
\paragraph{Surface boundary conditions}
We then discuss the contour integral formula with the surface boundary condition~\eqref{eq:D8_1surface_bdy}.
\begin{proposition}
Let $|q_{1,2}| > 1$.
For the one surface boundary condition such that $|\lambda_{12}| = d$, namely $d_A = d \delta_{A,12}$ for $A \in \underline{\mathbf{6}}$, we have
\begin{align}
\mathcal{Z}_{k,0,d}&= \frac{1}{k!d!} \oint \prod_{I=1}^{k}\frac{dx_{I}}{2\pi\iota x_{I}} \oint \prod_{i=1}^{d}\frac{dy_{i}}{2\pi\iota y_{i}} \left\langle \prod_{I=1}^{k}\mathsf{A}^{-1}(x_{I}) \prod_{i=1}^{d}\mathsf{X}_{12}^{-1}(y_i) \mathsf{Z}(K,x) \right\rangle \nonumber \\
& = \frac{{\mathcal{Z}}_\text{face}}{k!} \oint \prod_{I=1}^{k}\frac{dx_{I}}{2\pi\iota x_{I}} \left\langle \prod_{I=1}^{k}\mathsf{A}^{-1}(x_{I}):\frac{\mathsf{Z}(K,x)}{\prod_{\Abox \in\lambda_{12}}\mathsf{X}_{12}(\chi_{34,x}(\Bbox))}: \right\rangle
\end{align}
where the integration contour of the $y$-variables is taken to pick up the poles at $\{y_i\} = \{\chi_{34,x}(\Bbox)\}_{\Bbox \in \lambda_{12}}$, and the surface factor is given by 
\beq
    {\mathcal{Z}}_\text{face} = \Gamma(q_3;q_{1,2}^{-1})^d \, \mathbb{I}\left[ \frac{1}{\mathbf{P}_{12}} \left( \mathbf{P}_3^\vee \mathbf{X}^\vee \mathbf{X} - \mathbf{N}^\vee \mathbf{X}  \right) \right] \left.\prod_{1 \le i < j \le d} \frac{\Gamma(y_{ji};q_{1,2}^{-1})}{\Gamma(q_3 y_{ji};q_{1,2}^{-1})}\right|_{y_i = \chi_{34,x}(\Bbox)}
\eeq
with $\mathbf{X} = \sum_{\Abox \in \lambda_{12}} \chi_{34,x}(\Bbox)\,.$
\end{proposition}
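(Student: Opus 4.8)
The plan is to follow the same route as the proof of the one-leg proposition, with the D2 operators $\mathsf{S}_4$ replaced by the D4 operators $\mathsf{X}_{12}$ and the $|q_4|>1$ branches of Lemma~\ref{lem:SS_XX_OPE} replaced by the $|q_{1,2}|>1$ branches. First I would compute the vacuum correlator $\langle \prod_{I}\mathsf{A}^{-1}(x_I)\prod_i \mathsf{X}_{12}^{-1}(y_i)\mathsf{Z}(K,x)\rangle$ by bringing all operators to normal order. The $\mathsf{A}$-$\mathsf{A}$ contractions give $\prod_{I<J}\mathcal{A}_{\mathbb{C}^{4}}(x_{IJ})^{-1}$; the $\mathsf{A}$-$\mathsf{X}_{12}$ contractions give $\prod_{I,i}\mathscr{S}_{34}(y_i/x_I)^{-1}$ by \eqref{eq:contractions} (recall $\bar{A}=\{3,4\}$ for $A=12$); the $\mathsf{A}$-$\mathsf{Z}(K,x)$ contraction gives $\prod_I \frac{1-Kx/x_I}{1-x/x_I}$; the $\mathsf{X}_{12}$-$\mathsf{X}_{12}$ contractions give $\prod_{i<j}\frac{(y_{ij};q_{1,2}^{-1})_\infty(q_{12}^{-1}y_{ij};q_{1,2}^{-1})_\infty}{(q_3 y_{ij};q_{1,2}^{-1})_\infty(q_4 y_{ij};q_{1,2}^{-1})_\infty}$ from the $|q_{1,2}|>1$ case of Lemma~\ref{lem:SS_XX_OPE}; and the $\mathsf{X}_{12}$-$\mathsf{Z}(K,x)$ contraction gives $\prod_i \frac{(Kx/y_i;q_{1,2}^{-1})_\infty}{(x/y_i;q_{1,2}^{-1})_\infty}$, again directly from Lemma~\ref{lem:SS_XX_OPE}. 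Because there is only a single surface, the finite set $\mathcal{S}_{\{\lambda_A\}}$ is empty, so, unlike the general six-surface case, no extra $\mathcal{A}_{\mathbb{C}^{4}}$ correction factor is produced.

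The second step is to symmetrize the $\mathsf{X}_{12}$-$\mathsf{X}_{12}$ double product over $i<j$ into a product over $i\neq j$. Since the relevant shift lattice is generated by the \emph{pair} $q_1^{-1},q_2^{-1}$, this rearrangement produces a ratio of elliptic gamma functions $\prod_{1\le i<j\le d}\frac{\Gamma(y_{ji};q_{1,2}^{-1})}{\Gamma(q_3 y_{ji};q_{1,2}^{-1})}$, rather than of theta functions as in the one-leg case; here the Calabi--Yau relation $q_1q_2q_3q_4=1$, i.e. $q_{12}^{-1}=q_{34}$, is used to merge the numerator and denominator double-Pochhammers. The surviving $i\neq j$ part then combines with the $\mathsf{X}_{12}$-$\mathsf{Z}$ factor and with $\prod_{I,i}\mathscr{S}_{34}(y_i/x_I)^{-1}$ into exactly the building block that reassembles into an index character once the $y$-residues are evaluated.

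The third step is the evaluation of the $d$-fold $y$-contour integral. The contour is prescribed to enclose the poles generated jointly by $\mathscr{S}_{34}(y_i/x_I)^{-1}$ and by the denominator $(x/y_i;q_{1,2}^{-1})_\infty$ of the $\mathsf{X}_{12}$-$\mathsf{Z}$ factor, i.e. the loci at which a box may attach to the D8 vertex in the $34$-plane. Taking residues iteratively forces $\{y_i\}=\{\chi_{34,x}(\Bbox)\}_{\Bbox \in \lambda_{12}}$; the accompanying Jacobians and prefactors collapse to $\Gamma(q_3;q_{1,2}^{-1})^d$ times $\left.\prod_{1\le i<j\le d}\frac{\Gamma(y_{ji};q_{1,2}^{-1})}{\Gamma(q_3 y_{ji};q_{1,2}^{-1})}\right|_{y_i=\chi_{34,x}(\Bbox)}$, while the leftover $x_I$-dependent product assembles, through the definition of the index $\mathbb{I}[\,\cdot\,]$, into $\mathbb{I}\!\left[\frac{1}{\mathbf{P}_{12}}\left(\mathbf{P}_3^\vee\mathbf{X}^\vee\mathbf{X}-\mathbf{N}^\vee\mathbf{X}\right)\right]$ with $\mathbf{X}=\sum_{\Abox \in \lambda_{12}}\chi_{34,x}(\Bbox)$; the product of these three pieces is $\mathcal{Z}_\text{face}$. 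Finally I would recognize the remaining $x_I$-integrand as the correlator $\langle\prod_I\mathsf{A}^{-1}(x_I)\,{:\frac{\mathsf{Z}(K,x)}{\prod_{\Abox \in\lambda_{12}}\mathsf{X}_{12}(\chi_{34,x}(\Bbox))}:}\rangle$, up to non-essential zero modes, using the regularization identity ${:\mathsf{Z}(K,x)\prod_{i,j\ge1}\prod_{\Abox \in\lambda_{12}}\mathsf{A}^{-1}(q_1^{i-1}q_2^{j-1}\chi_{34,x}(\Bbox)):}\simeq{:\frac{\mathsf{Z}(K,x)}{\prod_{\Abox \in\lambda_{12}}\mathsf{X}_{12}(\chi_{34,x}(\Bbox))}:}$ established in Section~\ref{sec:D8_w_bdry}; this yields the second equality and hence the proposition, the first equality being a special case of the Conjecture.

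The main obstacle will be the careful bookkeeping of the pole structure. One must check that, with the prescribed contours, the iterated $y$-residues land exactly on the finite Young-diagram pattern $\{\chi_{34,x}(\Bbox)\}_{\Bbox\in\lambda_{12}}$ and not on the additional semi-infinite towers $y_i\mapsto x\,q_1^{i_1-1}q_2^{i_2-1}q_3^{\mathbb{Z}_{<0}}$ or $q_4^{\mathbb{Z}_{<0}}$, which would instead produce the one-surface PT4 vertex of \cite{Cao:2023lon,Piazzalunga:2023qik,Kimura:2024xpr}, and that all $d!$ orderings of the residue prescription contribute equally, cancelling the $1/d!$. A secondary technical point is that the infinite products defining $\mathsf{X}_{12}$ and appearing in the regularization identity are only formally convergent; this is handled by the zero-mode factors of Appendix~\ref{app:zero-modes} exactly as in the one-leg proof, and the finite remainder of that regularization is precisely the $\Gamma(q_3;q_{1,2}^{-1})^d$ prefactor of $\mathcal{Z}_\text{face}$. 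The agreement of $\mathcal{Z}_\text{face}$ with the surface (face) contribution of \cite{Nekrasov:2023nai} then follows from a routine comparison, in parallel with the edge-factor identification in the one-leg case.
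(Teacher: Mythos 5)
Your proposal follows exactly the paper's own route: apply Lemma~\ref{lem:SS_XX_OPE} to write out all pairwise contractions (with the single-surface simplification $\mathcal{S}_{\{\lambda_A\}}=\emptyset$), symmetrize the $\mathsf{X}_{12}$--$\mathsf{X}_{12}$ product from $i<j$ to $i\neq j$ using $q_{12}^{-1}=q_{34}$ to extract the elliptic-gamma ratio, and then evaluate the iterated $y$-residues at $\{y_i\}=\{\chi_{34,x}(\Bbox)\}_{\Bbox\in\lambda_{12}}$ to assemble $\mathcal{Z}_\text{face}$ and the normal-ordered correlator. The paper's proof is terser (it states the residue evaluation in one line), but the content and the key identities are the same as yours.
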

\begin{proof}
By Lemma~\ref{lem:SS_XX_OPE}, we have
\begin{align}
\mathcal{Z}_{k,0,d} &= \frac{1}{k!d!} \oint \prod_{I=1}^{k}\frac{dx_{I}}{2\pi\iota x_{I}} \oint \prod_{i=1}^{d}\frac{dy_{i}}{2\pi\iota y_{i}} \prod_{I<J}^k\mathcal{A}_{\mathbb{C}^{4}}\left(x_{IJ}\right)^{-1}\prod_{I=1}^{k}\prod_{i=1}^{d}\mathscr{S}_{34}\left(\frac{y_i}{x_{I}}\right)^{-1} \prod_{I=1}^{k}\frac{1-Kx/x_{I}}{1-x/x_{I}} \nonumber \\
& \hspace{10em} \times \prod_{i<j}^d \frac{(y_{ij};q_{1,2}^{-1})_\infty(q_{34}y_{ij};q_{1,2}^{-1})_\infty}{(q_{3,4}y_{ij};q_{1,2}^{-1})_\infty} \prod_{i=1}^d \frac{(Kx/y_i;q_{1,2}^{-1})_\infty}{(x/y_i;q_{1,2}^{-1})_\infty} \nonumber \\
&= \frac{1}{k!d!} \oint \prod_{I=1}^{k}\frac{dx_{I}}{2\pi\iota x_{I}} \oint \prod_{i=1}^{d}\frac{dy_{i}}{2\pi\iota y_{i}} \prod_{I<J}^k\mathcal{A}_{\mathbb{C}^{4}}\left(x_{IJ}\right)^{-1}\prod_{I=1}^{k}\prod_{i=1}^{d}\mathscr{S}_{34}\left(\frac{y_i}{x_{I}}\right)^{-1} \prod_{I=1}^{k}\frac{1-Kx/x_{I}}{1-x/x_{I}} \nonumber \\
& \hspace{10em} \times \prod_{i \neq j}^d \frac{(y_{ij};q_{1,2}^{-1})_\infty}{(q_{3}y_{ij};q_{1,2}^{-1})_\infty}\prod_{i=1}^d \frac{(Kx/y_i;q_{1,2}^{-1})_\infty}{(x/y_i;q_{1,2}^{-1})_\infty} \prod_{i<j}^d \frac{\Gamma(y_{ji};q_{1,2}^{-1})}{\Gamma(q_3 y_{ji};q_{1,2}^{-1})} 
\, .
\end{align}
Evaluating the residue at $y_i = \chi_{34,x}(\Bbox)$ for $\Bbox \in \lambda_{12}$, we obtain the result.
\end{proof}
The multi-surface boundary condition can be similarly discussed as in the case of the multi-leg configuration.

\subsection{Donaldson--Thomas \texorpdfstring{$qq$}{qq}-characters}\label{sec:general_qq} 
The above free field realizations imply the existence of an underlying quantum algebraic structure \cite{Kimura:2015rgi,Kimura:2019hnw,Kimura:2022zsm}. When we say we have a quantum algebraic structure, we are meaning that there is a quantum algebraic operator whose expectation value gives the partition functions. Eventually, this means that we have the BPS/CFT correspondence.

The free field realization given in previous sections means that we have an operator lift up of the partition function:
\bea
\mathsf{T}_{k}(v_{i},q_{a})&=\frac{\mathcal{G}^{k}}{k!}\oint_{\text{JK}} \prod_{I=1}^{k}\frac{dx_{I}}{2\pi\iota x_{I}}\prod_{I=1}^{k}\mathsf{A}(x_{I})^{-1}:\prod_{i}\mathsf{V}_{i}(v_{i}):\\
&=\frac{\mathcal{G}^{k}}{k!}\oint_{\text{JK}} \prod_{I=1}^{k}\frac{dx_{I}}{2\pi\iota x_{I}}\mathcal{Z}(v_{i},x_{I}):\prod_{I=1}^{k}\mathsf{A}(x_{I})^{-1}\prod_{i}\mathsf{V}_{i}(v_{i}):
\eea
where $\mathcal{Z}(v_{i},x_{I})$ is some rational function, and $\{v_{i}\},\{q_{a}\}$ are the flavor symmetry fugacities of the underlying supersymmetric quantum mechanics. Generally, one can also consider rational and elliptic analogues corresponding to matrix models and elliptic genera. Then, using the Jeffrey--Kirwan prescription \cite{Jeffrey1993LocalizationFN} (see \cite{Benini:2013xpa,Benini:2013nda,Hwang:2014uwa,Hori:2014tda} for applications to physics) and evaluating the poles, schematically we have
\bea
\mathsf{T}_{k}(v_{i},q_{a})=\frac{1}{k!}\sum_{x_{\ast}}\operatorname*{JK-Res}_{\eta,x_{\ast}}\left(\mathcal{
Z}(v_{i},x_{I})\prod_{I=1}^{k}\frac{dx_{I}}{2\pi\iota x_{I}}:\prod_{I=1}^{k}\mathsf{A}(x_{I})^{-1}\prod_{i}\mathsf{V}_{i}(v_{i}):\right),
\eea
where we denote the poles collectively as $x_{\ast}$ and the JK-residue means we are taking the residue there. Note that after taking the contraction, the operator part is a regular function and thus the poles are simply classified by $\mathcal{Z}(v_{i},x_{I})$ as how it is done for normal partition functions. The operator lift up of the instanton partition function is then given as
\bea
\mathsf{T}(v_{i})=\sum_{k=0}^{\infty}\mathfrak{q}^{k}
\mathsf{T}_{k}(v_{i},q_{a})
\eea
whose vacuum expectation value is just the instanton partition function:
\bea
\mathcal{Z}_{\text{inst.}}=\bra{0}\mathsf{T}(v_{i})\ket{0}.
\eea

This operator $\mathsf{T}(v_{i})$ is actually called the $qq$-character and identified with the generator of quiver W-algebras \cite{Frenkel:1998ojj, Frenkel:1997CMP, Shiraishi:1995rp,Awata:1996dx,Awata:1995zk, Kimura:2015rgi,Kimura:2016dys,Kimura:2017hez}. We expect that as long as we have a nice vertex operator representation of the rational function $\mathcal{Z}(v_{i},x_{I})$, this JK-residue procedure gives the $qq$-characters in a generic way. Moreover, we expect that it is still applicable to instanton partition functions for other gauge theories including theories with SO, Sp groups \cite{Marino:2004cn,Nekrasov:2004vw, Shadchin:2005mx,Hollands:2010xa}. In the context of quantum algebras, for the moment, such direction is still left for future work. See \cite{Chen:2023smd, Nawata:2021dlk, Nawata:2023wnk, Hayashi:2020hhb} for recent attempts on this direction. In this paper, we will not make an attempt to discuss the most general setup to obtain the $qq$-characters nor give a physical explanation of the existence of such vertex operator representations, but rather derive the $qq$-characters for concrete examples.

A different way to derive the $qq$-characters is to use the properties of the commutativity with the screening charges. To get the $qq$-characters, one will first define the screening charge. Starting from a highest weight $\mathsf{V}_{i}(v_{i})$ and imposing the commutativity with the screening charge, we obtain the expanded version of the $qq$-characters. The advantage of this method is that once the explicit zero-modes are fixed properly, the commutativity with the screening charge determines all the extra factors uniquely. In the following sections, we will derive the $qq$-characters by using this method.

Although we expect that the $qq$-characters obtained by using the JK-residue method and the screening charge method always give the same result, there are still some points unclear for the moment. On one hand, in the JK-residue method, we have the $\eta$ vector which determines the pole structure of the contour integral formula. Depending on $\eta$, the pole structure and the partition function might change, eventually giving the wall crossing phenomenon \cite{Hori:2014tda}. On the other hand, in the screening charge method, the commutativity with the screening charge determines the pole structure. Thus, one would expect that the definition of the screening charge corresponds to the choice of $\eta$. For the moment, we do not know the explicit correspondence of them and how to define different screening charges. The screening charge introduced in the following sections seems to correspond to the typical choice $\eta=(1,1,\ldots,1)$ and we will always use this.

The $qq$-characters we will introduce explicitly in the following sections comes from the contour integrals discussed in sections~\ref{sec:D4partitionfunction}, \ref{sec:D6partitionfunction}, and \ref{sec:D8partitionfunction} and thus they are operator versions of the equivariant DT vertex used to compute Donaldson--Thomas invariants of toric Calabi--Yau three-folds and four-folds (see for example \cite{Nekrasov:2014nea,Nekrasov:2023nai,Cao:2019tvv,Monavari:2022rtf} and references therein). We note that using the elliptic formulas in \cite[Sec.~12]{Kimura:2023bxy}, one can easily generalize the discussion in this paper and they correspond to elliptic DT invariants \cite{Benini:2018hjy,Fasola:2020hqa}. In this sense, it is natural to call the $qq$-characters introduced in this paper \textit{Donaldson--Thomas $qq$-characters}. 

In the context of algebraic geometry, we have the famous DT/PT correspondence~\cite{Pandharipande:2007sq,Nekrasov:2014nea,Okounkov:2015spn}. From this viewpoint, it is natural to ask if we can define a \textit{PT $qq$-character} and if we have an \textit{operator version of the DT/PT correspondence}. Using the fact that the commutativity with screening charges uniquely determine the full DT $qq$-characters, the discovery of the PT $qq$-characters would help understanding the combinatorial aspects of PT counting. All of these interesting aspects are left for future work.

\section{D4 \texorpdfstring{$qq$}{qq}-characters}\label{sec:D4_qq}
Before moving on to explicit constructions on the\footnote{When it is obvious, we will omit the terminology ``DT'' from now on.} D4 DT $qq$-characters, let us first review how to derive the D4 $qq$-character which is the generator of the affine quiver W-algebra.
\begin{definition}
    The screening charges are defined as
    \bea
\mathscr{Q}_{a}(x)=\sum_{k\in\mathbb{Z}}\mathsf{S}_{a}(q_{a}^{k}x),\quad a\in\four.
    \eea 
\end{definition}
The D4 $qq$-character is a $qq$-character whose highest weight is $\mathsf{X}_{A}(x)\,(A\in\six)$ that commutes with the screening charge in the transverse directions. Young diagrams label the monomial terms of the D4 $qq$-character. The operator part of the monomial terms of the $qq$-character is obtained by the iWeyl reflection as
\bea
\mathsf{X}_{A}(x)\rightarrow :\mathsf{X}_{A}(x)\mathsf{A}^{-1}(x):
\eea
recursively. Similar to \cite{Kimura:2023bxy}, we can also rescale the root current and include topological terms as
\bea
\mathsf{A}(x)\rightarrow \mathfrak{q}^{-1}\mathsf{A}(x).
\eea
\begin{proposition}
The D4 $qq$-characters are defined as
\bea
\,&\mathsf{T}_{A}(x)=\sum_{\lambda}\mathfrak{q}^{|\lambda|}\widetilde{\mathcal{Z}}_{A}^{\D4}[\lambda]\Lambda_{A,\lambda}(x),\quad\Lambda_{A,\lambda}(x)={:\mathsf{X}_{A}(x)\prod_{\Abox\in\lambda}\mathsf{A}^{-1}(\chi_{A,x}(\Bbox)):},\quad A\in\six,\\
\eea
and they obey
\bea
\relax[\mathsf{T}_{A}(x),\mathscr{Q}_{a}(x')]=0,\quad a\in\bar{A}.
\eea
\end{proposition}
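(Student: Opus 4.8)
The plan is to reproduce the standard quiver W-algebra argument (as carried out for the undeformed D4 $qq$-character in \cite{Kimura:2023bxy}, cf.\ also \cite{Kimura:2015rgi,Kimura:2016dys}): expand $[\mathsf{T}_{A}(x),\mathscr{Q}_{a}(x')]$ over Young diagrams, show that each bracket $[\mathsf{S}_{a}(y),\Lambda_{A,\lambda}(x)]$ is a distribution supported at the poles of the (unique) rational OPE factor, locate those poles at the addable/removable boxes of $\lambda$, and finally check that, after summing over $\lambda$ and over the $q_{a}$-lattice inside $\mathscr{Q}_{a}(x')$, the residue attached to an addable box of $\lambda$ telescopes against the residue attached to the matching removable box of $\lambda\cup\Bbox$.

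First I would record the recursion hidden in the definition. By construction $\Lambda_{A,\emptyset}(x)=\mathsf{X}_{A}(x)$ and $\Lambda_{A,\lambda\cup\Bbox}(x)={:\Lambda_{A,\lambda}(x)\,\mathsf{A}^{-1}(\chi_{A,x}(\Bbox)):}$ whenever $\Bbox$ is addable to $\lambda$. Using the contractions \eqref{eq:contractions} involving $\mathsf{X}_{A}$, $\mathsf{A}$ and the root-current self-OPE, the radially-ordered product $\Lambda_{A,\lambda}(x)\,\mathsf{A}^{-1}(\chi_{A,x}(\Bbox))$ equals $\Lambda_{A,\lambda\cup\Bbox}(x)$ times an explicit product of structure functions running over the boxes of $\lambda$ and over $\Bbox$. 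The content of the choice of coefficient $\widetilde{\mathcal{Z}}^{\D4}_{A}[\lambda]$ — i.e.\ the $k$-instanton index of Section~\ref{sec:D4partitionfunction} (cf.\ \eqref{eq:D4-partition}) — is precisely that $\mathfrak{q}^{|\lambda\cup\Bbox|}\widetilde{\mathcal{Z}}^{\D4}_{A}[\lambda\cup\Bbox]=\mathfrak{q}\cdot\mathfrak{q}^{|\lambda|}\widetilde{\mathcal{Z}}^{\D4}_{A}[\lambda]$ times that same product; the zero-mode factors $\mathsf{a}_{0}$ drop out of this comparison since they do not contract with the screening current. I would verify this recursion directly from \eqref{eq:contractions} and \eqref{eq:vertexoprelation}.

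Next I would compute the OPE of $\mathsf{S}_{a}(y)$ with $\Lambda_{A,\lambda}(x)$ for $a\in\bar{A}$, writing $\mathsf{S}_{a}(y)\Lambda_{A,\lambda}(x)=R_{\lambda}(y)\,{:\mathsf{S}_{a}(y)\Lambda_{A,\lambda}(x):}$ and $\Lambda_{A,\lambda}(x)\mathsf{S}_{a}(y)=R_{\lambda}(y)\,{:\mathsf{S}_{a}(y)\Lambda_{A,\lambda}(x):}$ with one and the same rational function $R_{\lambda}(y)$ — the two products being the boundary values of this function in complementary annuli fixed by radial ordering. Hence $[\mathsf{S}_{a}(y),\Lambda_{A,\lambda}(x)]$ is a sum of $\delta$-function distributions located at the poles of $R_{\lambda}(y)$, weighted by the corresponding residues. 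From \eqref{eq:contractions} one has $R_{\lambda}(y)=\mathscr{V}_{\ast}(q_{A}y/x)^{-1}\prod_{\Bbox\in\lambda}g_{\bar{a}}(y/\chi_{A,x}(\Bbox))$ (with $\ast$ the element of $\bar{A}$ distinct from $a$), and the standard observation is that the pole of $g_{\bar{a}}$ at a box of $\lambda$ is cancelled by a zero contributed by the adjacent box, so that the surviving poles of $R_{\lambda}(y)$ sit only at $y=\chi_{A,x}(\Bbox)$ for $\Bbox$ addable to $\lambda$ and at the $q_{a}$-shifted positions for $\Bbox$ removable. Evaluating at $y=q_{a}^{k}x'$ and summing $k\in\mathbb{Z}$, I would use $\mathsf{A}(\chi)={:\mathsf{S}_{a}(\chi)/\mathsf{S}_{a}(q_{a}\chi):}$ from \eqref{eq:vertexoprelation} together with the coefficient recursion of the previous paragraph to identify the addable-box residue coming from the $\Lambda_{A,\lambda}$ term with the removable-box residue coming from the $\Lambda_{A,\lambda\cup\Bbox}$ term, with opposite sign; the double sum $\sum_{\lambda}\sum_{k}$ then telescopes to zero, which is the claim.

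The main obstacle is exactly this last bookkeeping step: one must match, box by box, the residue of $R_{\lambda}$ at an addable box of $\lambda$ with the residue of $R_{\lambda\cup\Bbox}$ at the corresponding removable box after the $q_{a}$-shift, \emph{and} check that the accompanying normal-ordered operators ${:\mathsf{S}_{a}(\cdot)\Lambda_{A,\lambda}(x):}$ and ${:\mathsf{S}_{a}(\cdot)\Lambda_{A,\lambda\cup\Bbox}(x):}$ coincide when $\mathsf{S}_{a}$ is frozen at that pole — i.e.\ that normal-ordering $\mathsf{S}_{a}$ at the box position reconstructs the missing $\mathsf{A}^{-1}$. This is a direct but somewhat intricate computation with the explicit structure functions and zero modes of Definition~\ref{def:vertex-op}; everything else in the argument is formal. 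Finally, running the recursion in the opposite direction shows that the commutativity with $\mathscr{Q}_{a}$ together with the highest weight $\mathsf{X}_{A}(x)$ (the $\lambda=\emptyset$ term) fixes all the coefficients $\widetilde{\mathcal{Z}}^{\D4}_{A}[\lambda]$, giving the uniqueness statement quoted in the surrounding text.
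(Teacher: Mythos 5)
Your proposal is correct and follows essentially the same route the paper takes: the commutator $[\mathsf{T},\mathscr{Q}_a]$ is reduced to a sum of $\delta$-function contributions supported at the addable and removable boxes (the poles of the single rational OPE factor, expanded in complementary regions), and these cancel pairwise by the recursion relation $\widetilde{\mathcal{Z}}^{\D4}_A[\lambda\cup\Bbox]/\widetilde{\mathcal{Z}}^{\D4}_A[\lambda]=-\mathrm{Res}\,/\,\mathrm{Res}$ after relabelling $\lambda'=\lambda\cup\Bbox$ — exactly the telescoping argument the paper carries out in detail for the D6 leg case (Prop.~\ref{prop:D6leg_qq}) and cites from \cite{Kimura:2023bxy} for D4. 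The only point worth flagging is that your "main obstacle" (matching the normal-ordered operators ${:\mathsf{S}_a(\chi)\Lambda_{A,\lambda}(x):}={:\mathsf{S}_a(q_a\chi)\Lambda_{A,\lambda\cup\Bbox}(x):}$ via $\mathsf{A}(\chi)={:\mathsf{S}_a(\chi)/\mathsf{S}_a(q_a\chi):}$) is indeed the step the paper treats as immediate, and it does go through with the stated zero modes.
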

To obtain the D4 $qq$-characters with nontrivial boundary conditions, we simply need to start from the highest weight given as in section~\ref{sec:D4contourfreefield}. We denote such $qq$-character as
\bea
\mathsf{T}_{12,k_{12}\,l_{12}}(x)&=\Lambda^{k_{12}\,l_{12}}_{12,\emptyset}(x)+\cdots ,\quad
\Lambda^{k_{12}\,l_{12}}_{12,\emptyset}(x)= {:\mathsf{X}_{12}(x)\frac{\prod_{\Abox\in l_{12}}\mathsf{
A}^{-1}(\chi_{12,x}(\Bbox))\prod_{\Abox\in k_{12}}\mathsf{
A}^{-1}(\chi_{12,x}(\Bbox))}{\prod_{\Abox\in l_{12}\cap k_{12}}\mathsf{
A}^{-1}(\chi_{12,x}(\Bbox))}:}
\eea
and impose the condition
\bea
\relax[\mathsf{T}_{12,k_{12}\,l_{12}},\mathscr{Q}_{3,4}(x')]=0.
\eea
Other D4 $qq$-characters can be obtained by using the quadrality symmetry.

\paragraph{One-leg $qq$-character}
The one-leg $\D4_{12}$ heighest weight is
\bea
:\mathsf{X}_{12}(x)\prod_{j=1}^{k_{12}}\mathsf{S}_{1}(xq_{2}^{j-1})^{-1}:.
\eea
As mentioned in section~\ref{sec:D4contourfreefield}, the highest weight is simply $\mathsf{X}_{12}(q_{2}^{k_{12}}x)$ up to zero-modes coming from $\mathsf{s}_{1,0}(x)$. Moreover, the zero-modes $\mathsf{s}_{1,0}(x)$ commute with the screening charges $\mathscr{Q}_{3,4}(x)$. Therefore, the terms coming from the iWeyl reflection are classified by a finite Young diagram with the origin at $q_{2}^{k_{12}}x$.
\begin{proposition}
    The one-leg D4 $qq$-character is given as
    \bea
\mathsf{T}_{12,k_{12}\emptyset}(x)&={:\mathsf{X}_{12}(x)\prod_{j=1}^{k_{12}}\mathsf{S}_{1}(xq_{2}^{j-1})^{-1}:}+\cdots \\
&=\sum_{\lambda}\mathfrak{q}^{|\lambda|}\widetilde{\mathcal{Z}}^{\D4}_{12}[\lambda]:\frac{\mathsf{X}_{12}(x)}{\prod_{j=1}^{k_{12}}\mathsf{S}_{1}(xq_{2}^{j-1})}\prod_{\Abox\in\lambda}\mathsf{A}^{-1}(\chi_{12,q_{2}^{k_{12}}x}(\Bbox)):
\eea
where $\widetilde{\mathcal{Z}}_{12}^{\D4}[\lambda]$ is \eqref{eq:D4-partition} with 
\bea
\relax[\mathsf{T}_{12,k_{12}\emptyset}(x),\mathscr{Q}_{3,4}(x')]=0.
\eea
We can rewrite it as
\bea
\mathsf{T}_{12,k_{12}\emptyset}(x)={:\prod_{j=1}^{k_{12}}\mathsf{s}_{1,0}(xq_{2}^{j-1})^{-1}\mathsf{T}_{12}(q_{2}^{k_{12}}x):}.
\eea

\end{proposition}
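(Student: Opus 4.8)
The plan is to derive the one-leg D4 $qq$-character by running the iWeyl reflection algorithm starting from the regularized highest weight $\Lambda^{k_{12}\emptyset}_{12,\emptyset}(x)={:\mathsf{X}_{12}(x)\prod_{j=1}^{k_{12}}\mathsf{S}_{1}(xq_{2}^{j-1})^{-1}:}$, and to show that (i) the commutativity with $\mathscr{Q}_{3,4}(x')$ uniquely fixes all the terms and coefficients, and (ii) the result coincides with the stated product form $:\prod_{j=1}^{k_{12}}\mathsf{s}_{1,0}(xq_{2}^{j-1})^{-1}\mathsf{T}_{12}(q_{2}^{k_{12}}x):$. The key observation, already noted in section~\ref{sec:D4contourfreefield}, is that up to the scalar zero-mode factor $\prod_{j=1}^{k_{12}}\mathsf{s}_{1,0}(xq_{2}^{j-1})^{-1}$ the highest weight is literally $\mathsf{X}_{12}(q_{2}^{k_{12}}x)$, and this scalar factor involves only $\mathsf{s}_{1,0}$, which has no contraction with any $\mathsf{A}(x')$ and commutes with $\mathsf{S}_{3}(x')$ and $\mathsf{S}_{4}(x')$ (the modes $\mathsf{s}_{1,n}$, $\mathsf{s}_{3,m}$, $\mathsf{s}_{4,m}$ pairwise commute for the relevant index structure, and the zero-mode is a $c$-number in the spectral variable). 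Hence the entire problem reduces to the already-established D4 $qq$-character $\mathsf{T}_{12}(y)$ of the preceding Proposition, evaluated at the shifted argument $y=q_{2}^{k_{12}}x$.

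The steps I would carry out, in order: First, verify the regularization identity $:\mathsf{X}_{12}(x)\prod_{i=1}^{\infty}\prod_{j=1}^{k_{12}}\mathsf{A}^{-1}(xq_{1}^{i-1}q_{2}^{j-1}):={:\mathsf{X}_{12}(x)\prod_{j=1}^{k_{12}}\mathsf{S}_{1}(xq_{2}^{j-1})^{-1}:}$ using \eqref{eq:D4D2vertexrelation}, telescoping the infinite product over $i$ column by column; this is the content of \eqref{eq:D4oneboundaryregularize} and just needs to be restated. Second, apply the relation \eqref{eq:D4-D2simplication}-style simplification to write this (up to zero-modes $\mathsf{s}_{1,0}$) as $\mathsf{X}_{12}(q_{2}^{k_{12}}x)$, so that $\Lambda^{k_{12}\emptyset}_{12,\emptyset}(x)=:\prod_{j=1}^{k_{12}}\mathsf{s}_{1,0}(xq_{2}^{j-1})^{-1}\,\mathsf{X}_{12}(q_{2}^{k_{12}}x):$. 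Third, run the iWeyl reflection $\mathsf{X}_{12}(y)\mapsto\,:\mathsf{X}_{12}(y)\mathsf{A}^{-1}(y):$ and its iterates on $y=q_{2}^{k_{12}}x$; since the scalar prefactor is inert under reflection (it has no contraction with $\mathsf{A}$), the monomial terms are exactly $:\prod_{j}\mathsf{s}_{1,0}(xq_{2}^{j-1})^{-1}\Lambda_{12,\lambda}(q_{2}^{k_{12}}x):$ indexed by finite Young diagrams $\lambda$, with coefficients $\widetilde{\mathcal Z}^{\D4}_{12}[\lambda]$ inherited verbatim from $\mathsf{T}_{12}$. Fourth, check the commutativity $[\mathsf{T}_{12,k_{12}\emptyset}(x),\mathscr{Q}_{3,4}(x')]=0$: because the prefactor commutes with $\mathsf{S}_{3}(x')$ and $\mathsf{S}_{4}(x')$, and $[\mathsf{T}_{12}(y),\mathscr{Q}_{3,4}(x')]=0$ holds by the previous Proposition for any $y$, the bracket vanishes; the standard uniqueness argument (the commutativity constraint recursively determines each coefficient given the previous ones, as in \cite{Kimura:2023bxy}) then shows this is the \emph{unique} such $qq$-character with that highest weight. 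Fifth, match the residue evaluations: after contracting with $\prod_I\mathsf{A}(x_I)^{-1}$ and taking $\bra{0}\cdots\ket{0}$, the poles in the contour integral of Prop.~\ref{prop:D0general} are classified by finite Young diagrams with origin shifted to $q_{2}^{k_{12}}x$, which is exactly the content of \eqref{eq:D4-partition}, so the coefficient of each term is $\widetilde{\mathcal Z}^{\D4}_{12}[\lambda]$ as claimed.

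The main obstacle is bookkeeping the zero-mode prefactor carefully enough to justify that it genuinely decouples from \emph{both} the iWeyl reflection and the screening-charge commutator. Concretely, one must confirm: (a) $\mathsf{s}_{1,0}(x')$ has trivial OPE with $\mathsf{A}(x)$ — which follows since $\mathsf{a}_n=(1-q_1^{-n})\mathsf{s}_{1,n}$ and the zero-mode $\mathsf{s}_{1,0}$ contributes only a $c$-number monomial in the spectral variable, having no oscillator contraction; and (b) the commutator of $\mathsf{s}_{1,0}$ (more precisely, of the full $\mathsf{S}_1$-string that appears before simplification) with $\mathsf{S}_{3}(x')$, $\mathsf{S}_{4}(x')$ vanishes, which reduces to the statement that $\mathsf{S}_a$ and $\mathsf{S}_b$ commute when the structure function $\mathscr{S}_{\overline{ab}}$ is trivial for the pair $(1,3)$ and $(1,4)$ — this should follow from the explicit OPE $\mathsf{S}_a(x)\mathsf{S}_b(x')=\mathscr{S}_{\overline{ab}}(q_a x'/x):\mathsf{S}_a(x)\mathsf{S}_b(x'):$ combined with the fact that $\mathscr{S}_{\overline{13}}=\mathscr{S}_{\overline{14}}$ are the relevant D4 structure functions, which in the $\mathbb{C}^4$ setup reduce to rational factors that cancel in the commutator only after using the $q_1q_2q_3q_4=1$ constraint. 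If instead one prefers to sidestep (b), one can simply invoke that the \emph{simplified} highest weight $\mathsf{X}_{12}(q_2^{k_{12}}x)$ — stripped of all zero-modes — is manifestly covered by the previous Proposition, note that multiplication by a spectral-variable-dependent $c$-number scalar commutes with everything, and observe that this scalar is precisely the discrepancy between $\mathsf{X}_{12}(q_2^{k_{12}}x)$ and the operator $:\mathsf{X}_{12}(x)\prod_j\mathsf{S}_1(xq_2^{j-1})^{-1}:$ one wishes to use; this route is cleaner and is the one I would ultimately adopt.
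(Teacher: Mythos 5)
Your proposal is correct and, in the route you ultimately adopt, is essentially identical to the paper's argument: reduce the regularized highest weight to $\mathsf{X}_{12}(q_{2}^{k_{12}}x)$ times the zero-mode factor $\prod_{j}\mathsf{s}_{1,0}(xq_{2}^{j-1})^{-1}$, observe that this factor has no contraction with $\mathsf{A}$ and commutes with $\mathscr{Q}_{3,4}$ (its zero modes involve only $\mathsf{t}_{0}$ and $\widetilde{\partial}_{\mathsf{t}}$, which commute), and import the whole expansion from $\mathsf{T}_{12}(q_{2}^{k_{12}}x)$. One caution: your alternative justification (b) — that the full $\mathsf{S}_{1}$-string commutes with $\mathscr{Q}_{3,4}$ because the relevant structure function is ``trivial'' — would not actually go through, since $\mathscr{S}_{\overline{13}}=\mathscr{S}_{24}$ is a nontrivial rational function and $[\mathsf{S}_{1}(y),\mathscr{Q}_{3}(x')]$ produces delta-function terms that do not cancel; you are right to sidestep it, and the sidestepped route is exactly what the paper does.
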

Instead of the above highest weight, we may dress the highest weight with the extra zero-modes as
\bea
{:\prod_{j=1}^{k_{12}}\mathsf{s}_{1,0}(xq_{2}^{j-1})\frac{\mathsf{X}_{12}(x)}{\prod_{j=1}^{k_{12}}\mathsf{S}_{1}(xq_{2}^{j-1})}:}=\mathsf{X}_{12}(q_{2}^{k_{12}}x)
\eea
and then the $qq$-character is just $\mathsf{T}_{12}(q_{2}^{k}x)$. Such zero-modes of the highest weight only affects the perturbative part without modifying the instanton part and thus we may effectively use $\mathsf{T}_{12}(q_{12}^{k_{12}}x)$ as the one-leg D4$_{12}$ $qq$-character.

\paragraph{Two-legs $qq$-character}
Similar to the previous case, the highest weight is proportional to $\mathsf{X}_{12}(xq_{1}^{l_{12}}q_{2}^{k_{12}})$ up to zero-modes (see \eqref{eq:D4-D2simplication}) and thus we obtain the following.
\begin{proposition}
The two-legs D4 $qq$-character is
\bea
\mathsf{T}_{12,k_{12}\,l_{12}}(x)&=\sum_{\lambda}\mathfrak{q}^{|\lambda|}\widetilde{\mathcal{Z}}_{12}^{\D4}[\lambda]\Lambda_{12,\lambda}^{k_{12}\,l_{12}}(x),\\
\Lambda_{12,\lambda}^{k_{12}\,l_{12}}(x)&={:\Lambda_{12,\emptyset}^{k_{12}\,l_{12}}(x)\prod_{\Abox\in\lambda}\mathsf{A}^{-1}(\chi_{12,xq_{1}^{l_{12}}q_{2}^{k_{12}}}(\Bbox)):}
\eea
which is proportional to $\mathsf{T}_{12}(xq_{1}^{l_{12}}q_{2}^{k_{12}})$ up to zero-modes non-essential for instanton computations.
\end{proposition}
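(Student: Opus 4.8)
The statement to prove is the ``Proposition (D4 two-legs)'' describing the two-legs D4 $qq$-character, namely that
\[
\mathsf{T}_{12,k_{12}\,l_{12}}(x)=\sum_{\lambda}\mathfrak{q}^{|\lambda|}\widetilde{\mathcal{Z}}_{12}^{\D4}[\lambda]\Lambda_{12,\lambda}^{k_{12}\,l_{12}}(x)
\]
with $\Lambda_{12,\lambda}^{k_{12}\,l_{12}}(x)={:\Lambda_{12,\emptyset}^{k_{12}\,l_{12}}(x)\prod_{\Abox\in\lambda}\mathsf{A}^{-1}(\chi_{12,xq_{1}^{l_{12}}q_{2}^{k_{12}}}(\Bbox)):}$, is (up to inessential zero-modes) equal to $\mathsf{T}_{12}(xq_{1}^{l_{12}}q_{2}^{k_{12}})$, and that it is the unique $qq$-character built on the two-legs highest weight $\Lambda_{12,\emptyset}^{k_{12}\,l_{12}}(x)$ commuting with $\mathscr{Q}_{3,4}$.

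Let me think through how I would prove this.

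The key input is equation \eqref{eq:D4-D2simplication}, which already establishes that the two-legs highest weight
\[
\Lambda_{12,\emptyset}^{k_{12}\,l_{12}}(x)={:\mathsf{X}_{12}(x)\prod_{i=1}^{l_{12}}\mathsf{S}_{2}(xq_{2}^{k_{12}}q_{1}^{i-1})^{-1}\prod_{j=1}^{k_{12}}\mathsf{S}_{1}(xq_{2}^{j-1})^{-1}:}
\]
equals $\mathsf{X}_{12}(q_{1}^{l_{12}}q_{2}^{k_{12}}x)$ up to an overall product of $\mathsf{s}_{a,0}$ zero-modes. So the proof decomposes into two conceptually separate claims: first, that the zero-mode prefactor plays no role, and second, that once it is stripped away, everything reduces to the ordinary D4$_{12}$ $qq$-character with a shifted spectral parameter.

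The plan is as follows. First I would record that, by \eqref{eq:D4-D2simplication} (which is itself derived from the vertex-operator relations \eqref{eq:vertexoprelation}, \eqref{eq:vertexoprelation2}), we have $\Lambda_{12,\emptyset}^{k_{12}\,l_{12}}(x)=\mathfrak{z}(x)\,\mathsf{X}_{12}(q_{1}^{l_{12}}q_{2}^{k_{12}}x)$ where $\mathfrak{z}(x)$ is a finite product of the commuting zero-mode operators $\mathsf{s}_{1,0}(\cdot),\mathsf{s}_{2,0}(\cdot)$. Second, I would observe the two structural facts about $\mathfrak{z}(x)$: (i) it has trivial operator product with every $\mathsf{A}(x_I)$ (the zero-modes $\mathsf{s}_{a,0}$ contract trivially with $\mathsf{a}_n$, $n\ne 0$, since $\mathsf{a}_n=\bfP_a^{[-n]}\mathsf{s}_{a,n}$ governs the nonzero modes while $\mathsf{s}_{a,0}$ lives in the zero-mode sector), and hence does not affect any of the pole structure of the iWeyl-generated monomials; and (ii) it commutes with the screening charges $\mathscr{Q}_{3}(x')$ and $\mathscr{Q}_{4}(x')$, because $\mathscr{Q}_{3,4}$ are built from $\mathsf{S}_{3},\mathsf{S}_{4}$ whose operator products with $\mathsf{s}_{1,0},\mathsf{s}_{2,0}$ are trivial. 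These two facts together mean the factor $\mathfrak{z}(x)$ is inert for everything the $qq$-character construction cares about. Third, with $\mathfrak{z}(x)$ set aside, the effective highest weight is $\mathsf{X}_{12}(q_{1}^{l_{12}}q_{2}^{k_{12}}x)$, which is exactly the ordinary D4$_{12}$ highest weight evaluated at the shifted argument $x\mapsto q_{1}^{l_{12}}q_{2}^{k_{12}}x$. The earlier Proposition characterizing the D4 $qq$-character (the one giving $\mathsf{T}_{A}(x)=\sum_\lambda \mathfrak{q}^{|\lambda|}\widetilde{\mathcal{Z}}_A^{\D4}[\lambda]\Lambda_{A,\lambda}(x)$ with $[\mathsf{T}_A(x),\mathscr{Q}_a(x')]=0$ for $a\in\bar A$) applies verbatim after this substitution, since the screening charges $\mathscr{Q}_{3,4}$ and the contraction functions $\mathscr{S}_{34}$, $\mathcal{A}_{\mathbb{C}^4}$ are all translation-covariant in the spectral parameter. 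This yields $\mathsf{T}_{12,k_{12}\,l_{12}}(x)=\mathfrak{z}(x)\,\mathsf{T}_{12}(q_{1}^{l_{12}}q_{2}^{k_{12}}x)$, and reading off the expansion gives the stated formula with $\Lambda_{12,\lambda}^{k_{12}\,l_{12}}(x)={:\Lambda_{12,\emptyset}^{k_{12}\,l_{12}}(x)\prod_{\Abox\in\lambda}\mathsf{A}^{-1}(\chi_{12,q_{1}^{l_{12}}q_{2}^{k_{12}}x}(\Bbox)):}$ and coefficients $\widetilde{\mathcal{Z}}_{12}^{\D4}[\lambda]$ as in \eqref{eq:D4-partition}. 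Uniqueness is inherited from the uniqueness of the ordinary D4 $qq$-character under commutativity with the screening charges, exactly as in \cite{Kimura:2023bxy}: once the highest-weight term is fixed, the requirement $[\mathsf{T},\mathscr{Q}_{3,4}(x')]=0$ forces each successive iWeyl-reflection coefficient recursively, with the explicit zero-modes pinning down the overall normalizations.

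The main obstacle, and the only step requiring genuine care rather than bookkeeping, is making fact (i)/(ii) about $\mathfrak{z}(x)$ fully rigorous: one must check that the ordering ambiguities in writing $\Lambda_{12,\emptyset}^{k_{12}\,l_{12}}(x)$ in terms of the $\mathsf{S}_{1,2}$ operators (the two decompositions pictured in \eqref{eq:D4twoboundcondfigure2}) produce the \emph{same} effective highest weight $\mathsf{X}_{12}(q_1^{l_{12}}q_2^{k_{12}}x)$ — i.e. that the difference is absorbed entirely into $\mathfrak{z}(x)$ and not into anything contracting with the $\mathsf{A}$'s or $\mathscr{Q}_{3,4}$. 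This is really the content of \eqref{eq:D4-D2simplication}, so I would either cite it directly or, if a self-contained argument is wanted, verify it by a short telescoping computation: $:\prod_{i=1}^{l_{12}}\mathsf{X}_{12}(xq_1^i)/\mathsf{X}_{12}(xq_1^{i-1}):\,=\,:\mathsf{X}_{12}(xq_1^{l_{12}})/\mathsf{X}_{12}(x):$ and similarly for the $q_2$-tower, using $\mathsf{S}_a(x)={:\mathsf{s}_{a,0}(x)\,\mathsf{X}_{ab}(x)/\mathsf{X}_{ab}(q_b x):}$ from \eqref{eq:vertexoprelation}. Everything else is a routine transfer of the established D4 $qq$-character statement along a spectral-parameter shift, and can be left implicit.
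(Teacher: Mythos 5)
Your proposal is correct and follows essentially the same route as the paper: the paper's argument is precisely to invoke \eqref{eq:D4-D2simplication} to identify the two-legs highest weight with $\mathsf{X}_{12}(xq_{1}^{l_{12}}q_{2}^{k_{12}})$ up to zero-modes, note (as in the one-leg case) that those zero-modes neither contract with the $\mathsf{A}$-operators nor with $\mathscr{Q}_{3,4}$, and then inherit the full $qq$-character from $\mathsf{T}_{12}$ at the shifted spectral parameter. Your additional care about the ordering ambiguity in \eqref{eq:D4twoboundcondfigure2} and the uniqueness under screening-charge commutativity is consistent with, and slightly more explicit than, what the paper records.
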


\section{D6 \texorpdfstring{$qq$}{qq}-characters}\label{sec:D6_qq}

The D6 $qq$-character is a $qq$-character whose highest weight is $\mathsf{W}_{\bar{a}}(x)\,(a\in\four)$ and that commutes with the screening charge. Plane partitions label the monomial terms of the D6 $qq$-character. The operator part of the monomial terms of the $qq$-character is obtained by the iWeyl reflection:
\bea
\mathsf{W}_{\bar{a}}(x)\rightarrow {:\mathsf{W}_{\bar{a}}(x)\mathsf{A}^{-1}(x):}.
\eea
Doing this iWeyl reflection recursively, the operator part of the monomial terms is given by
\bea
\Lambda_{\bar{a},\pi}(x)\coloneqq{:\mathsf{W}_{\bar{a}}(x)\prod_{\scube\in\pi}\mathsf{A}^{-1}(\chi_{\bar{a},x}(\cube)):}.
\eea
The coefficients are the $\U(1)$ partition functions of the tetrahedron instanton system obtained by one D6-brane spanning $\mathbb{C}^{3}_{\bar{a}}\times \mathbb{S}^{1}$:
\bea\label{eq:D6U1partitionfunction}
\widetilde{\mathcal{Z}}_{\bar{a}}^{\D6}[\pi]=\prod_{\scube\in\pi}\frac{1-q_{a}x/\chi_{\bar{a},x}(\cube)}{1-x/\chi_{\bar{a},x}(\cube)}\prod_{\substack{\scube\in\pi\\\scubeF\in\pi}}g_{\bar{a}}\left(\frac{\chi_{\bar{a},x}(\cube)}{\chi_{\bar{a},x}(\cubeF)}\right)^{-1}.
\eea

\begin{definition}
The D6 $qq$-character is given as
\bea
\mathsf{T}_{\bar{a}}(x)=\sum_{\pi\in\mathcal{PP}}\mathfrak{q}^{|\pi|}\widetilde{\mathcal{Z}}^{\D6}_{\bar{a}}[\pi]\Lambda_{\bar{a},\pi}(x),\quad a\in\four,
\eea
where $\mathcal{PP}$ denotes the set of possible plane partitions. Moreover, the D6 $qq$-character $\mathsf{T}_{\bar{a}}(x)$ commutes with the screening charge $\mathscr{Q}_{a}(x')$:
    \bea
    \left[\mathsf{T}_{\bar{a}}(x),\mathscr{Q}_{a}(x')\right]=0,\quad a\in\four.
    \eea
\end{definition}

\begin{remark}
    There is another way to derive the D6 $qq$-character above using the intertwiner formalism of quantum toroidal $\mathfrak{gl}_{1}$ \cite{Awata:2011ce,Bourgine:2017jsi,Awata:2016riz} (see \cite{DIMreview} for a review). In this formalism, one first consider a physical theory and its corresponding brane web and then assign representations to branes and intertwiners or $R$-matrices to brane junctions. Compositions of the assigned operators will then automatically give the partition function of the theory considered. In \cite{Zenkevich:2023cza} (see also \cite{Zenkevich:2022dju}), the author considered a setup where branes \textit{spiraling} with each other appears and derived the K-theoretic vertex (see Thm.~1 there). The compositions of the intertwiners there is interpreted as an operator lift up of the K-theoretic vertex and thus it is a $qq$-character. Actually, when there are no nontrivial boundary conditions for the plane partitions, they are just the D6 $qq$-character given above, while when there are nontrivial boundary conditions, they are the ones that will be introduced in the following subsections. We also note that combining their discussions and the derivation of the D8 $qq$-character in \cite{Kimura:2023bxy} and section~\ref{sec:D8_qq}, it is almost obvious that the magnificent four partition function should appear by considering D7-NS5 branes and the corresponding MacMahon intertwiners in their setup.
\end{remark}

\subsection{Surface boundary conditions}
We derive the D6 $qq$-characters whose highest weight is associated with the surface boundary conditions. Let us focus on the case when we have $k_{12}$-surfaces in the 12-plane, where the highest weight is given as
\bea
\adjustbox{valign=c}{
\begin{tikzpicture}[scale=0.15]
\draw[->] (1,0)--(13,0);
\draw[->] (1,0)--(1,9);
\draw[->] (1,0)--(-6.5,-7.5);
\node[right] at (13,0){$2$};
\node[above] at (1,9){$3$};
\node[below left] at (-6.5,-7.5){$1$};
\xysurfacevar{1}{1}{1}{6}{10};
\xysurfacevar{1}{1}{2}{6}{10};
\xysurfacevar{1}{1}{3}{6}{10};
\xysurfacevar{1}{1}{4}{6}{10};
\end{tikzpicture}}\quad =\quad 
{:\frac{\mathsf{W}_{\bar{4}}(x)}{\prod_{k=1}^{k_{12}}\mathsf{X}_{12}(xq_{3}^{k-1})}:}=\mathsf{W}_{\bar{4}}(xq_{3}^{k_{12}}).
\eea
Here, instead of using the formula coming from the infinite product of $\mathsf{A}^{-1}(x)$, we used the regularized formula coming from $\mathsf{X}_{12}(x)^{-1}$. Since, the highest weight is simply $\mathsf{W}_{\bar{4}}(xq_{3}^{k_{12}})$, the $qq$-character is given as
\bea
\mathsf{T}_{\bar{4}}^{k_{23}k_{13}k_{12}}(x)&={:\frac{\mathsf{W}_{\bar{4}}(x)}{\prod_{k=1}^{k_{12}}\mathsf{X}_{12}(xq_{3}^{k-1})}:}+\cdots\\
&=\sum_{\pi\in\mathcal{PP}}\mathfrak{q}^{|\pi|}\widetilde{\mathcal{Z}}^{\D6}_{123}[\pi]:\mathsf{W}_{\bar{4}}(xq_{3}^{k_{12}})\prod_{\scube\in\pi}\mathsf{A}^{-1}(\chi_{\bar{4},q_{3}^{k_{12}}x}(\cube)):\\
&=\mathsf{T}_{\bar{4}}(q_{3}^{k_{12}}x).
\eea
For general surface boundary conditions, we have the following.
\begin{proposition}\label{prop:D6surface_qq}
    Let $\mathsf{T}_{abc}^{k_{bc}k_{ac}k_{ab}}(x)\,(a<b<c)$ be the D6 $qq$-character with the highest weight associated with the following configuration
    \bea
\adjustbox{valign=c}{
\begin{tikzpicture}[scale=0.17]
\draw[->] (1,0)--(13,0);
\draw[->] (1,0)--(1,9);
\draw[->] (1,0)--(-5.5,-6.5);
\node[right] at (13,0){$b$};
\node[above] at (1,9){$c$};
\node[below left] at (-5.5,-6.5){$a$};
\zxsurfacevar{1}{1}{1}{7}{6};
\zxsurfacevar{1}{2}{1}{7}{6};
\zxsurfacevar{1}{3}{1}{7}{6};
\xysurfacevar{1}{4}{1}{6}{8};
\xysurfacevar{1}{4}{2}{6}{8};
\yzsurfacevar{1}{4}{3}{8}{5};
\yzsurfacevar{1}{4}{3}{8}{5};
\end{tikzpicture}}\quad =\quad :\frac{\mathsf{W}_{abc}(x)}{\prod\limits_{j=1}^{k_{ac}}\mathsf{X}_{ac}(xq_{b}^{j-1})\prod\limits_{k=1}^{k_{ab}}\mathsf{X}_{ab}(xq_{b}^{k_{ac}}q_{c}^{k-1})\prod\limits_{i=1}^{k_{bc}}\mathsf{X}_{bc}(xq_{b}^{k_{ac}}q_{c}^{k_{ab}}q_{a}^{i-1})}:
\eea
Note that the above presentation in $\mathsf{X}_{A}(x)^{-1}$ depends on how we order the surfaces but after computation it is equal to $\mathsf{W}_{abc}(xq_{a}^{k_{bc}}q_{b}^{k_{ac}}q_{c}^{k_{ab}})$ (see \eqref{eq:D6surface-vertop-simple}), which does not depend on the ordering. We then have
\bea
\mathsf{T}_{abc}^{k_{bc}k_{ac}k_{ab}}(x)=\mathsf{T}_{abc}(xq_{a}^{k_{bc}}q_{b}^{k_{ac}}q_{c}^{k_{ab}}).
\eea
\end{proposition}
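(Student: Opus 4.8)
\textbf{Proof plan for Proposition~\ref{prop:D6surface_qq}.}
The plan is to reduce the general statement to a repeated application of the two observations already established earlier in the excerpt: first, that the regularized highest weight associated with the surface boundary data collapses, by the vertex operator identities, to a single shifted D6 vertex operator; and second, that the $qq$-character is uniquely determined by the commutativity with the screening charges together with the choice of highest weight. Concretely, I would start from the presentation in $\mathsf{X}_A(x)^{-1}$ given in the statement and apply relation~\eqref{eq:vertexoprelation}, namely $\mathsf{X}_{ab}(x) = {:\mathsf{W}_{abc}(x)/\mathsf{W}_{abc}(q_c x):}$, one surface at a time. For the innermost stack of $k_{ab}$ copies of $\mathsf{X}_{ab}$, the telescoping product
\bea
{:\prod_{k=1}^{k_{ab}}\frac{\mathsf{W}_{abc}(q_c^{k}x)}{\mathsf{W}_{abc}(q_c^{k-1}x)}:} = {:\frac{\mathsf{W}_{abc}(q_c^{k_{ab}}x)}{\mathsf{W}_{abc}(x)}:}
\eea
turns the $\mathsf{W}_{abc}(x)/\prod_k \mathsf{X}_{ab}(\cdots)$ factor into $\mathsf{W}_{abc}(q_c^{k_{ab}}x)$; then the $k_{ac}$ copies of $\mathsf{X}_{ac}$ (which telescope in the $q_b$-direction) shift the argument further to $\mathsf{W}_{abc}(q_b^{k_{ac}}q_c^{k_{ab}}x)$; and finally the $k_{bc}$ copies of $\mathsf{X}_{bc}$ shift it to $\mathsf{W}_{abc}(q_a^{k_{bc}}q_b^{k_{ac}}q_c^{k_{ab}}x)$. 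This is precisely the content of~\eqref{eq:D6surface-vertop-simple}, so I would either cite that computation directly or spell out the telescoping as above, noting that any leftover zero-mode prefactors $\mathsf{w}_{\bar a,0}$, $\mathsf{x}_{A,0}$ commute with all $\mathsf{A}(x')$ operators and with the screening charges $\mathscr{Q}_a(x')$, hence do not affect the pole structure or the instanton expansion.

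Having identified the effective highest weight as $\mathsf{W}_{abc}(x q_a^{k_{bc}} q_b^{k_{ac}} q_c^{k_{ab}})$, the second and final step is to invoke the defining property of the D6 $qq$-character: by the Definition preceding this proposition, $\mathsf{T}_{abc}(y)$ is the unique operator with highest weight $\mathsf{W}_{abc}(y)$ whose monomial terms are produced by the iWeyl reflection $\mathsf{W}_{abc}(y) \to {:\mathsf{W}_{abc}(y)\mathsf{A}^{-1}(y):}$ and which commutes with $\mathscr{Q}_a(x')$ for $a \in \four$. Since $\mathsf{T}_{abc}^{k_{bc}k_{ac}k_{ab}}(x)$ is by construction the $qq$-character built from the highest weight displayed in the statement, and that highest weight equals $\mathsf{W}_{abc}(y)$ with $y = x q_a^{k_{bc}} q_b^{k_{ac}} q_c^{k_{ab}}$ up to inessential zero-modes, the iWeyl reflection generates exactly the same tower of monomials as $\mathsf{T}_{abc}(y)$ — the boxes of the plane partition $\pi$ are simply attached starting from the shifted origin $y$ — and the coefficients are the same U(1) partition functions $\widetilde{\mathcal{Z}}^{\D6}_{abc}[\pi]$ of~\eqref{eq:D6U1partitionfunction} evaluated at $y$. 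Uniqueness of the commutant then forces $\mathsf{T}_{abc}^{k_{bc}k_{ac}k_{ab}}(x) = \mathsf{T}_{abc}(x q_a^{k_{bc}} q_b^{k_{ac}} q_c^{k_{ab}})$.

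The main obstacle, and the only place requiring genuine care, is the bookkeeping of the zero-mode factors and the claim that they are truly inessential. When one rewrites the ordered product of $\mathsf{X}_A(x)^{-1}$ operators in terms of $\mathsf{W}_{abc}$ via~\eqref{eq:vertexoprelation}, the scalar prefactors $\mathsf{x}_{A,0}(x)$ (and, in the alternative route through $\mathsf{S}_a$, the factors $\mathsf{s}_{a,0}(x)$) accumulate, and one must check that the surviving combination is independent of the chosen ordering of the surfaces and that it has vanishing contraction with every $\mathsf{A}(x')$. This follows from the structure of the zero-modes recorded in Appendix~\ref{app:zero-modes} and from the fact that $[\mathsf{a}_n,\cdot]$ acts trivially on the index-$0$ generators, but it is the step where the argument is not purely formal. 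A secondary, minor point is to confirm that the ordering-dependence visible in the $\mathsf{X}_A(x)^{-1}$ presentation corresponds only to a reshuffling of the boundary boxes into a one-dimensional rod (exactly as in the D4 two-legs discussion around~\eqref{eq:D4twoboundcondfigure2}) and hence does not change the regularized operator; once this is granted, the commutativity with $\mathscr{Q}_a(x')$ — which is inherited from $\mathsf{T}_{abc}(y)$ — closes the proof.
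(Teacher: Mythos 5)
Your proposal is correct and follows essentially the same route as the paper: the paper's argument is precisely the telescoping of the $\mathsf{X}_A^{-1}$ stacks via $\mathsf{X}_{ab}(x)={:\mathsf{W}_{abc}(x)/\mathsf{W}_{abc}(q_cx):}$ down to the shifted highest weight $\mathsf{W}_{abc}(xq_a^{k_{bc}}q_b^{k_{ac}}q_c^{k_{ab}})$ (this is \eqref{eq:D6surface-vertop-simple}, carried out explicitly for the $k_{12}$ case at the start of the D6 surface discussion), followed by the observation that the highest weight together with screening-charge commutativity and the iWeyl reflection determines the $qq$-character, forcing $\mathsf{T}_{abc}^{k_{bc}k_{ac}k_{ab}}(x)=\mathsf{T}_{abc}(xq_a^{k_{bc}}q_b^{k_{ac}}q_c^{k_{ab}})$. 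Your added care about the accumulated zero-modes and the ordering-independence of the regularized operator is consistent with how the paper treats the analogous D4 case.
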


\subsection{Leg boundary conditions}
Although, the D6 $qq$-characters associated with surface boundary conditions will not give new D6 $qq$-characters, the D6 $qq$-characters associated with leg boundary conditions give new D6 $qq$-characters.

The highest weight corresponds to the vacuum configuration of the plane partition with nontrivial boundary conditions:
\bea
:\mathsf{W}_{\bar{a}}(x)\prod_{\scube \in \mathcal{B}_{\lambda\mu\nu}}\mathsf{A}^{-1}(\chi_{\bar{a},x}(\cube)):
\eea
where $\mathcal{B}_{\lambda\mu\nu}$ is the vacuum configuration with nontrivial boundary conditions. For $\bar{a}=bcd\,(b<c<d)$, $\mathcal{B}_{\lambda\mu\nu}$ is the vacuum configuration whose nontrivial boundary conditions are $\lambda,\mu,\nu$ at the axes $b,c,d$ respectively. We denote these $qq$-characters as
\bea
\mathsf{T}_{\bar{a},\lambda\mu\nu}(x)&={:\mathsf{W}_{\bar{a}}(x)\prod_{\scube \in \mathcal{B}_{\lambda\mu\nu}}\mathsf{A}^{-1}(\chi_{\bar{a},x}(\cube)):}+\cdots
\eea
Let us derive the complete formula of this $qq$-character.


\paragraph{Structure functions}
We introduce the following structure functions as
\bea
\mathscr{W}^{\bar{a},\lambda\mu\nu}_{\pi,x}(x')=\mathbb{I}\left[\mathbf{Y}_{\bar{a}}^{\vee}x'\right],\quad \mathscr{W}^{\bar{a},\lambda\mu\nu\vee}_{\pi,x}(x')&=\mathbb{I}\left[\mathbf{Y}_{\bar{a}}x'^{\vee}\right]\\
\mathbf{Y}_{\bar{a}}|_{\pi}=\bfN-\bfP_{\bar{a}}\left(\bfK_{\pi}+\bfN_{\lambda\mu\nu}\right)
\eea
which is explicitly given as
\bea
\mathscr{W}^{abc,\lambda\mu\nu}_{\pi,x}(x')&=\left(1-\frac{x}{x'}\right) \prod_{\scube \in \mathcal{B}_{\lambda\mu\nu}}g_{abc}\left(\frac{\chi_{abc,x}(\cube)}{x'}\right)\prod_{\scube \in \pi}g_{abc}\left(\frac{\chi_{abc,x}(\cube)}{x'}\right),\\
{\mathscr{W}^{abc,\lambda\mu\nu}_{\pi,x}}^{\vee}(x')&=\left(1-\frac{x'}{x}\right)\prod_{\scube\in\mathcal{B}_{\lambda\mu\nu}}g_{abc}\left(q_{abc}^{-1}\frac{x'}{\chi_{abc,x}(\cube)}\right)^{-1}\prod_{\scube\in\pi}g_{abc}\left(q_{abc}^{-1}\frac{x'}{\chi_{abc,x}(\cube)}\right)^{-1}
\eea
where $a<b<c$ and $\pi\in\mathcal{PP}_{\lambda\mu\nu}$. The structure function is an infinite product but after nontrivial cancellations of numerators and denominators they will be a finite product. When there is only one nontrivial boundary condition, we can write this finite product explicitly. For example, we have
\bea
\mathscr{W}^{abc,\lambda\emptyset\emptyset}_{\pi,x}(x')&=\left(1-\frac{x}{x'}\right)\prod_{\Abox\in\lambda}\mathscr{S}_{bc}\left(\frac{\chi_{bc,x}(\Bbox)}{x'}\right)\prod_{\scube \in \pi}g_{abc}\left(\frac{\chi_{abc,x}(\cube)}{x'}\right),\\
{\mathscr{W}^{abc,\lambda\emptyset\emptyset}_{\pi,x}}^{\vee}(x')&=\left(1-\frac{x'}{x}\right)\prod_{\Abox\in\lambda}\mathscr{S}_{bc}\left(\frac{q_{bc}^{-1}x'}{\chi_{bc,x}(\Bbox)}\right)\prod_{\scube\in\pi}g_{abc}\left(q_{abc}^{-1}\frac{x'}{\chi_{abc,x}(\cube)}\right)^{-1}.
\eea

\begin{proposition}
The zeros of the structure function are determined by the addable and removable boxes of the plane partition $\pi$:
\bea
\mathscr{W}^{\bar{a},\lambda\mu\nu}_{\pi,x}(x')\propto \prod_{\scube\in A(\pi)}\left(1-\frac{\chi_{\bar{a},x}(\cube)}{x'}\right)\prod_{\scube\in R(\pi)}\left(1-q_{a}^{-1}\frac{\chi_{\bar{a},x}(\cube)}{x'}\right)
\eea
\end{proposition}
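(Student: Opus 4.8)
The plan is to prove that the infinite-product structure function $\mathscr{W}^{\bar a,\lambda\mu\nu}_{\pi,x}(x')$, defined as $\mathbb{I}[\mathbf{Y}_{\bar a}^\vee x']$ with $\mathbf{Y}_{\bar a}|_\pi = \mathbf{N} - \mathbf{P}_{\bar a}(\mathbf{K}_\pi + \mathbf{N}_{\lambda\mu\nu})$, telescopes to a finite product whose zeros and poles are governed exactly by the addable and removable boxes of $\tilde\pi$. The essential observation is the standard one for plane-partition $Y$-functions: for the \emph{trivial} boundary case this is the well-known statement (used already in \cite{Kimura:2023bxy}) that $\mathbb{I}[(\mathbf N - \mathbf P_{\bar a}\mathbf K_\pi)^\vee x'] \propto \prod_{\scube\in A(\pi)}(1-\chi_{\bar a,x}(\scube)/x')\prod_{\scube\in R(\pi)}(1-q_a^{-1}\chi_{\bar a,x}(\scube)/x')$. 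First I would set up the computation in the universal sheaf language: write $\mathbf{Y}_{\bar a}|_{\tilde\pi} = \mathbf N - \mathbf P_{\bar a}\mathbf K_{\tilde\pi}$ where $\mathbf K_{\tilde\pi} = \mathbf N_{\lambda\mu\nu} + \mathbf K_\pi$ is the character of the \emph{full} infinite partition $\tilde\pi = \pi_{\bd}\cup\pi_{\reg}$, and observe that formally the structure function only depends on $\tilde\pi$, not on the decomposition into boundary and regular parts.

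The key combinatorial step is to express $\mathbf P_{\bar a}\mathbf K_{\tilde\pi}$ in terms of a formal ``derivative'' of the partition. For a finite plane partition one has the identity $1 - \mathbf P_{\bar a}\mathbf K_\pi = \sum_{\scube\in A(\pi)}\chi - \sum_{\scube\in R(\pi)} q_a q_b q_c \,\chi / (\text{overcount corrections})$ — more precisely, the generating-function identity that relates the ``boundary'' of a monomial ideal in three variables to its addable/removable boxes. I would carry this over to the infinite case: even though $\tilde\pi$ is infinite, the set $A(\tilde\pi)\sqcup R(\tilde\pi)$ of genuinely addable/removable boxes (those not frozen by the asymptotic conditions \eqref{eq:planelegboundcond}) is finite once we quotient by the infinite ``rod'' and ``wall'' directions, because the regular part $\pi_{\reg}$ is a finite plane partition sitting in the corner. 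Concretely I would use the regularization already established in the excerpt, namely that $\mathbf P_{abc}^\vee \mathbf N_{\bar 4,\lambda\mu\nu}$ has only finitely many terms (the displayed formula with $\pi_{\bd}\langle N_1,N_2,N_3\rangle$), to guarantee that the a priori infinite product $\mathbb{I}[\mathbf Y_{\bar a}^\vee x']$ is a well-defined finite rational function; then the telescoping cancellation of numerator/denominator factors is a finite manipulation. The addable boxes contribute the zeros $(1-\chi_{\bar a,x}(\scube)/x')$ and the removable boxes contribute the would-be zeros $(1-q_a^{-1}\chi_{\bar a,x}(\scube)/x')$ that appear in the denominator-cancelled form; the proportionality constant absorbs the monomial prefactors and the $(1-x/x')$ framing term.

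The main obstacle, and where I would spend the most care, is the boundary region where two or three legs intersect: there the naive ``addable/removable'' bookkeeping double-counts, and one must check that the inclusion–exclusion defining $\mathcal{B}_{\lambda\mu\nu} = (\mathcal B_\lambda+\mathcal B_\mu+\mathcal B_\nu) - (\mathcal B_{\lambda\cap\mu}+\cdots)+\mathcal B_{\lambda\cap\mu\cap\nu}$ produces exactly the right cancellations so that no spurious zeros or poles survive from the frozen boxes. I expect this to reduce, after the telescoping, to the statement that a box $\scube$ adjacent to $\tilde\pi$ contributes a zero iff adding it keeps $\tilde\pi$ a plane partition \emph{and} respects the boundary conditions \eqref{eq:planelegboundcond} — i.e. iff $\scube\in A(\tilde\pi)=A(\pi_{\reg})$ in the notation of the excerpt — and symmetrically for removable boxes; the frozen boundary boxes neither appear as addable (they are already present) nor as removable (removing them violates the asymptotics), so their contributions cancel pairwise in $\mathbf P_{\bar a}\mathbf K_{\tilde\pi}$. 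I would verify this first in the one-leg case (where the explicit finite product $\mathscr{W}^{abc,\lambda\emptyset\emptyset}_{\pi,x}(x')$ is already written down in the excerpt and the claim can be checked directly against the $\mathscr{S}_{bc}$ and $g_{abc}$ factors), then the two-leg case, and finally invoke the general inclusion–exclusion structure to conclude, noting that the argument is uniform in $a\in\four$ by the quadrality symmetry.
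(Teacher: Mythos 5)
You should first be aware that the paper does \emph{not} contain a proof of this proposition: immediately after stating it, the authors write that they ``do not have a complete proof of this proposition for the moment'' and have only verified it for one leg with $|\lambda|\le 2$, for two legs $\lambda=\mu=\Bbox$, and for three legs $\lambda=\mu=\nu=\Bbox$. So there is no paper argument to match yours against; the question is only whether your sketch actually closes the gap the authors left open. It does not.

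The concrete problem is your ``key combinatorial step.'' In two variables the telescoping identity $1-(1-q_1)(1-q_2)\mathbf{K}_\lambda=\sum_{\Abox\in A(\lambda)}\chi-q_{12}\sum_{\Abox\in R(\lambda)}\chi$ is exact, but its three-variable analogue — which is what you need, since $\mathbf{P}_{\bar a}$ is a triple product — is simply false, and the ``(overcount corrections)'' you wave at are where the entire content of the proposition lives. Already for $\tilde\pi$ a single box with trivial boundary one finds
\bea
1-\mathbf{P}_{123}^{\vee}\,x^{-1}=\sum_{i}(q_i x)^{-1}+q_4 x^{-1}-\sum_{i<j}(q_i q_j x)^{-1},
\eea
so besides the three addable-box terms and the removable-box term there survive genuine extra monomials, which after taking the index give $\mathscr{W}$ uncancelled factors $(1-q_iq_jx/x')^{-1}$. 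The proposition is therefore not a clean factorization identity that telescopes out of a monomial-ideal boundary formula; one must show, for an arbitrary $\pi\in\mathcal{PP}_{\lambda\mu\nu}$, that all such correction terms either cancel among themselves or at least never collide with (and never cancel) the claimed zeros at $\chi_{\bar a,x}(\scube)$, $\scube\in A(\pi)$, and $q_a^{-1}\chi_{\bar a,x}(\scube)$, $\scube\in R(\pi)$. Your proposal never addresses this, and its closing step (``verify the one-leg case, then two legs, then invoke the general inclusion--exclusion structure to conclude'') is a programme of case checks plus an appeal to structure, not an argument — i.e., essentially the same state of affairs as in the paper. The regularization of $\mathbf{P}_{123}^{\vee}\mathbf{N}_{\bar 4,\lambda\mu\nu}$ to a finite character, which you do use correctly, guarantees that $\mathscr{W}$ is a well-defined finite rational function, but says nothing about where its zeros sit; and the inclusion--exclusion defining $\mathcal{B}_{\lambda\mu\nu}$, which you correctly identify as the delicate region, is precisely where your sketch stops short of a proof.
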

 Though we do not have a complete proof of this proposition for the moment, we have checked it for the situations when we have one leg $\lambda\neq \emptyset,\mu=\nu=\emptyset$ with $|\lambda|\leq 2$, when the two legs are $\lambda=\mu=\Bbox,\nu=\emptyset$, and when we have three legs $\lambda=\mu=\nu=\Bbox$. For the case when there is no boundary conditions, see \cite{Kimura:2023bxy}. 

\begin{proposition}
The partition function in \eqref{eq:D6legpartitionfunction} has the following recursive relation
\bea
\frac{\widetilde{\mathcal{Z}}_{\bar{a};\lambda\mu\nu}^{\D6}[\pi+\cube]}{\widetilde{\mathcal{Z}}_{\bar{a};\lambda\mu\nu}^{\D6}[\pi]}&=-\frac{\underset{x'=q_{a}\chi_{\bar{a},x}(\scube)}{\Res}\mathscr{W}_{\pi,x}^{\bar{a},\lambda\mu\nu}(q_{a}^{-1}x')^{-1}}{\underset{x'=\chi_{\bar{a},x}(\scube)}{\Res}\mathscr{W}_{\pi+\scube,x}^{\bar{a},\lambda\mu\nu}(q_{a}^{-1}x')^{-1}}
\eea

\end{proposition}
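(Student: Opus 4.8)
The plan is to establish the recursion by comparing, box by box, how the partition function $\widetilde{\mathcal{Z}}^{\D6}_{\bar a;\lambda\mu\nu}[\pi]$ and the structure function $\mathscr{W}^{\bar a,\lambda\mu\nu}_{\pi,x}(x')$ change when a single addable box $\cube$ is adjoined to $\pi$. Recall from \eqref{eq:D6legpartitionfunction} that $\widetilde{\mathcal{Z}}^{\D6}_{\bar a;\lambda\mu\nu}[\pi]=\mathbb{I}[\mathbf{v}]$ with $\mathbf{v}=-\bfP_{a}^{\vee}\bfN_{\bar a}^{\vee}\bfK_{\pi}+\bfP_{\four}\bfN_{\lambda\mu\nu}^{\vee}\bfK_{\pi}+\bfP_{abc}^{\vee}\bfK_{\pi}^{\vee}\bfK_{\pi}$, and from the definition of $\mathbf{Y}_{\bar a}|_\pi = \bfN - \bfP_{\bar a}(\bfK_\pi + \bfN_{\lambda\mu\nu})$ that $\mathbf{v}$ is, up to $\pi$-independent terms, essentially $-\bfK_\pi^\vee \mathbf{Y}_{\bar a}|_\pi / $(something) — more precisely, I would first write $\mathbf{v}$ in the form $\mathbf{v} = \mathbf{v}_0 - \bfK_\pi^\vee \mathbf{Y}_{\bar a}|_\pi$ plus terms symmetric enough to be handled, so that the ratio $\widetilde{\mathcal{Z}}[\pi+\cube]/\widetilde{\mathcal{Z}}[\pi]$ collects exactly the index contributions involving the single new character $\chi_{\bar a,x}(\cube)$.

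The key steps, in order, are as follows. First, expand $\bfK_{\pi+\scube} = \bfK_\pi + \chi_{\bar a,x}(\cube)$ and compute $\mathbf{v}|_{\pi+\scube} - \mathbf{v}|_\pi$, which splits into a linear piece $-\bfP_a^\vee\bfN_{\bar a}^\vee\chi(\cube) + \bfP_{\four}\bfN_{\lambda\mu\nu}^\vee\chi(\cube) + \bfP_{abc}^\vee(\bfK_\pi^\vee\chi(\cube)+\chi(\cube)^\vee\bfK_\pi)$ and the self-term $\bfP_{abc}^\vee\chi(\cube)^\vee\chi(\cube)$; using the reflection identity for $\mathbb{I}$ and the definitions of $g_{abc}$ and $\mathscr{S}$, recognize that $\mathbb{I}$ of this difference is precisely $\mathbf{Y}_{\bar a}|_\pi$-data evaluated at $\chi(\cube)$ — i.e. it equals $\mathscr{W}^{\bar a,\lambda\mu\nu}_{\pi,x}$ (and its dual) evaluated, up to the explicit $q_a$-shifts and signs, at the relevant argument. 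Second, invoke the preceding proposition identifying the zeros of $\mathscr{W}^{\bar a,\lambda\mu\nu}_{\pi,x}(x')$ with addable/removable boxes: the factor $(1-\chi_{\bar a,x}(\cube)/x')$ for $\cube\in A(\pi)$ has a simple zero at $x'=\chi_{\bar a,x}(\cube)$, so $\Res_{x'=\chi_{\bar a,x}(\scube)}\mathscr{W}^{\bar a,\lambda\mu\nu}_{\pi+\scube,x}(q_a^{-1}x')^{-1}$ picks out exactly the normalization of that zero, and likewise $\Res_{x'=q_a\chi(\scube)}\mathscr{W}^{\bar a,\lambda\mu\nu}_{\pi,x}(q_a^{-1}x')^{-1}$ picks out the removable-box factor $(1-q_a^{-1}\chi(\cube)/x')$ that becomes active once $\cube$ is added. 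Third, match the two computations: the ratio of indices from step one must equal the ratio of residues from step two, up to the overall sign $-1$, which is bookkeeping of the sign conventions in $\mathbb{I}$ of a pure character versus its dual (the $q_a^{-1}$ shift inside the argument is exactly the shift relating a box to its $a$-predecessor).

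I would carry this out concretely first for the no-boundary case (where it reduces to the known D6 $qq$-character recursion of \cite{Kimura:2023bxy}), then observe that the boundary data $\bfN_{\lambda\mu\nu}$ enters $\mathbf{Y}_{\bar a}|_\pi$ and $\mathscr{W}$ in exactly the same linear way as $\bfK_\pi$ does, so the identical manipulation goes through verbatim; the products $\prod_{\scube\in\mathcal{B}_{\lambda\mu\nu}}g_{abc}(\cdots)$ appearing in $\mathscr{W}$ are precisely the $\mathbb{I}$-image of the $\bfP_{\four}\bfN_{\lambda\mu\nu}^\vee\bfK_\pi$ cross-term, so nothing new is needed. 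The main obstacle is that $\mathscr{W}^{\bar a,\lambda\mu\nu}_{\pi,x}(x')$ is defined as an infinite product and one must be careful that the "nontrivial cancellations of numerators and denominators" leaving a finite product commute with taking residues — i.e. that the regularization is consistent enough that $\Res_{x'=\chi(\scube)}$ of the (reciprocal of the) regularized finite product really equals the naive residue of the zero factor $(1-\chi(\cube)/x')$. This is where I expect the real work to be: one should argue that the extra infinite-product factors beyond the addable/removable zeros are regular and nonvanishing at $x'=\chi(\cube)$ and $x'=q_a\chi(\cube)$ respectively, so they contribute only to the value, not the order, of the residue, and that their contributions are precisely the ones reassembled into $\widetilde{\mathcal{Z}}[\pi+\cube]/\widetilde{\mathcal{Z}}[\pi]$ by the index computation of step one. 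Granting the zero-structure proposition (which the authors state they have verified in low-degree cases), this reduces to a finite, if somewhat intricate, comparison of rational factors.
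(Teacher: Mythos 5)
Your proposal follows essentially the same route as the paper's proof: expand $\bfK_{\pi+\scube}=\bfK_\pi+\chi_{\bar a,x}(\cube)$, compute $\delta\mathbf{v}=\mathbf{v}|_{\pi+\scube}-\mathbf{v}|_{\pi}$, apply the reflection property of the index (with the extra sign coming from the quadratic self-term $\bfP_{abc}^{\vee}\bfK^{\vee}\bfK$, via Prop.~\ref{app-prop:reflection_sign}), recognize the result as the ratio $\mathscr{W}^{\bar a,\lambda\mu\nu}_{\pi,x}(x')^{-1}/\mathscr{W}^{\bar a,\lambda\mu\nu}_{\pi+\scube,x}(q_a^{-1}x')^{-1}$ in the limit $x'\to\chi_{\bar a,x}(\cube)$, and convert to a ratio of residues using the simple zeros at addable/removable boxes. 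Your additional caution about the regularization of the infinite product in $\mathscr{W}$ commuting with taking residues, and the fact that the argument is conditional on the (only partially verified) zero-structure proposition, accurately reflects the same dependencies present in the paper's own argument.
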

\begin{proof}
    Let us study the recursive relation of $\mathbf{v}$:
    \bea
    \mathbf{v}=-\bfP_{4}^{\vee}\bfN^{\vee}\bfK+\bfP_{\four}\bfN_{\lambda\mu\nu}^{\vee}\bfK+\bfP_{123}^{\vee}\bfK^{\vee}\bfK
    \eea
    We focus on $a=4$. Assume that $x'=\chi_{\bar{4},x}(\cube)$, where $\cube \in A(\pi)$ and $\cube \in R(\pi+\cube)$. Then, the recursive relation is given as
\bea
\delta \mathbf{v}&=\mathbf{v}|_{\pi+\scube}-\mathbf{v}|_{\pi}\\
&=-\left(\bfN-\bfP_{123}\bfK_{\pi}-\bfP_{123}\bfN_{\lambda\mu\nu}\right)^{\vee}x'\\
&+q_{4}^{-1}\bfN^{\vee}x'+\textcolor{red}{\bfP_{123}^{\vee}\bfK_{\pi+\scube}{x'}^{-1}}-q_{4}^{-1}\bfP_{123}^{\vee}\bfN_{\lambda\mu\nu}^{\vee}x'.
\eea
When taking the index, the red term will give an extra sign factor coming from Prop.~\ref{app-prop:reflection_sign} and we obtain
\bea
\mathbb{I}\left[\delta\mathbf{v}\right]&=-\frac{\mathbb{I}\left[(\bfN-\bfP_{123}(\bfK_{\pi+\scube}+\bfN_{\lambda\mu\nu}))^{\vee}(q_{4}^{-1}x')\right]}{\mathbb{I}\left[(\bfN-\bfP_{123}(\bfK_{\pi}+\bfN_{\lambda\mu\nu}))^{\vee}x'\right]}\\
&=-\lim_{x'\rightarrow \chi_{\bar{4},x}(\scube)}\frac{\mathscr{W}^{\bar{4},\lambda\mu\nu}_{\pi,x}(x')^{-1}}{\mathscr{W}^{\bar{4},\lambda\mu\nu}_{\pi+\scube,x}(q_{4}^{-1}x')^{-1}}\\
&=-\frac{\underset{x'=q_{4}\chi_{\bar{4},x}(\scube)}{\Res}\mathscr{W}_{\pi,x}^{\bar{4},\lambda\mu\nu}(q_{4}^{-1}x')^{-1}}{\underset{x'=\chi_{\bar{4},x}(\scube)}{\Res}\mathscr{W}_{\pi+\scube,x}^{\bar{4},\lambda\mu\nu}(q_{4}^{-1}x')^{-1}}
\eea
Note that after setting the initial condition, the partition function is determined uniquely. In our setup, we set the vacuum configuration (the configuration when we have no yellow boxes) to have trivial partition function, which is $1$.
\end{proof}

\begin{proposition}\label{prop:D6leg_qq}
    The D6 $qq$-characters with nontrivial leg boundary conditions are given as
    \bea
    \mathsf{T}_{\bar{a},\lambda\mu\nu}(x)=\sum_{\pi\in\mathcal{PP}_{\lambda\mu\nu}}\mathfrak{q}^{|\pi|}\widetilde{\mathcal{Z}}^{\D6}_{\bar{a};\lambda\mu\nu}[\pi]\Lambda^{\lambda\mu\nu}_{\bar{a},\pi}(x)
    \eea
    where
    \bea
    \Lambda^{\lambda\mu\nu}_{\bar{a},\pi}(x)={:\mathsf{W}_{\bar{a}}(x)\prod_{\scube\in\mathcal{B}_{\lambda\mu\nu}}\mathsf{A}^{-1}(\chi_{\bar{a},x}(\cube))\prod_{\scube \in \pi} \mathsf{A}^{-1}(\chi_{\bar{a},x}(\cube)) :}.
    \eea
    We have
    \bea
    \relax[\mathsf{T}_{\bar{a},\lambda\mu\nu}(x),\mathscr{Q}_{a}(x')]=0.
    \eea
\end{proposition}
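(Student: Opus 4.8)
The plan is to mirror the strategy used for the D6 $qq$-character without boundary conditions (and for the D4 $qq$-characters in Section~\ref{sec:D4_qq}): use the commutativity with the screening charge $\mathscr{Q}_{a}(x')$ to recursively determine the monomial coefficients, and then identify those coefficients with the partition functions $\widetilde{\mathcal{Z}}^{\D6}_{\bar{a};\lambda\mu\nu}[\pi]$ via the recursion already established in the preceding proposition. Concretely, I would first fix $\bar{a}=\bar{4}$ (the general case following by the quadrality of $q_{1,2,3,4}$) and start from the highest weight $\Lambda^{\lambda\mu\nu}_{\bar{4},\emptyset}(x)={:\mathsf{W}_{\bar{4}}(x)\prod_{\scube\in\mathcal{B}_{\lambda\mu\nu}}\mathsf{A}^{-1}(\chi_{\bar{4},x}(\cube)):}$. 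The iWeyl reflection $\mathsf{W}_{\bar{4}}(x)\to{:\mathsf{W}_{\bar{4}}(x)\mathsf{A}^{-1}(x):}$ generates new monomials, and the boundary-condition constraint plus the plane-partition condition forces the set of boxes that can be appended to be exactly $A(\tilde{\pi})=A(\pi_{\reg})$; applying the reflection recursively produces precisely the operators $\Lambda^{\lambda\mu\nu}_{\bar{4},\pi}(x)$ indexed by $\pi\in\mathcal{PP}_{\lambda\mu\nu}$. This is the combinatorial skeleton of the argument.

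Next I would compute the contraction $\mathsf{A}(x')\,\Lambda^{\lambda\mu\nu}_{\bar{4},\pi}(x)$ and $\Lambda^{\lambda\mu\nu}_{\bar{4},\pi}(x)\,\mathsf{A}(x')$ using the OPE formulas of Proposition~\ref{prop:D0general} and the operator-product factors in \eqref{eq:contractions}. Since $\mathsf{S}_{a}(q_a^k x')$ for $a\in\bar{A}$ contracts with $\mathsf{A}$ in a controlled rational way, commuting $\mathsf{T}_{\bar{4},\lambda\mu\nu}(x)$ through $\mathscr{Q}_{a}(x')$ reduces, term by term, to the vanishing of a telescoping sum: the poles of the structure function $\mathscr{W}^{\bar{4},\lambda\mu\nu}_{\pi,x}(x')$ (together with its dual) at addable/removable boxes must cancel pairwise between consecutive terms $\Lambda^{\lambda\mu\nu}_{\bar{4},\pi}(x)$ and $\Lambda^{\lambda\mu\nu}_{\bar{4},\pi+\scube}(x)$. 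This is exactly where the two preceding propositions enter: the first tells us $\mathscr{W}^{\bar{4},\lambda\mu\nu}_{\pi,x}(x')\propto\prod_{\scube\in A(\pi)}(1-\chi_{\bar{4},x}(\cube)/x')\prod_{\scube\in R(\pi)}(1-q_a^{-1}\chi_{\bar{4},x}(\cube)/x')$, so the residue structure is governed purely by addable/removable boxes; the second gives the ratio $\widetilde{\mathcal{Z}}^{\D6}_{\bar{a};\lambda\mu\nu}[\pi+\cube]/\widetilde{\mathcal{Z}}^{\D6}_{\bar{a};\lambda\mu\nu}[\pi]$ as exactly the ratio of residues of $\mathscr{W}^{-1}$ needed to make the telescoping cancellation occur. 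Choosing the coefficients to be $\mathfrak{q}^{|\pi|}\widetilde{\mathcal{Z}}^{\D6}_{\bar{a};\lambda\mu\nu}[\pi]$ thus makes every spurious pole in $[\mathsf{T}_{\bar{4},\lambda\mu\nu}(x),\mathscr{Q}_{a}(x')]$ cancel, and conversely the commutativity requirement fixes these coefficients uniquely up to the overall normalization $\widetilde{\mathcal{Z}}^{\D6}_{\bar{a};\lambda\mu\nu}[\emptyset]=1$.

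The main obstacle I anticipate is the careful bookkeeping of the infinite products coming from $\mathcal{B}_{\lambda\mu\nu}$ — in particular the boxes at the intersections of the legs, encoded by the finite correction set $\mathcal{S}_{\lambda\mu\nu}$. Because $\mathcal{B}_{\lambda\mu\nu}$ is an infinite set of boxes, the naive contraction of $\prod_{\scube\in\mathcal{B}_{\lambda\mu\nu}}\mathsf{A}^{-1}(\chi_{\bar{4},x}(\cube))$ with $\mathsf{S}_a$ or $\mathsf{A}$ produces infinite products that must be regularized using \eqref{eq:vertexoprelation}--\eqref{eq:vertexoprelation2} into finite products of $\mathsf{S}$-operators plus residual zero-modes; one has to check that these zero-modes commute with $\mathscr{Q}_a(x')$ (as noted for the one-leg case in Section~\ref{sec:D6contourfreefield}, $\mathsf{s}_{a,0}(x)$ is central with respect to $\mathscr{Q}_{b}$ for $b\neq a$) so that they do not obstruct the commutativity argument, and that the structure function $\mathscr{W}^{\bar{4},\lambda\mu\nu}_{\pi,x}$ — itself defined as an index of the infinite character $\mathbf{Y}_{\bar{4}}|_\pi=\mathbf{N}-\mathbf{P}_{\bar{4}}(\mathbf{K}_\pi+\mathbf{N}_{\lambda\mu\nu})$ — indeed collapses to the advertised finite product. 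Since the preceding proposition on the zeros of $\mathscr{W}$ is only verified in examples and not fully proven, the cleanest route is to phrase the final statement as conditional on that proposition, or to restrict the rigorous argument to the cases already checked ($|\lambda|\le 2$ with one leg, $\lambda=\mu=\Bbox$ with two legs, $\lambda=\mu=\nu=\Bbox$ with three legs) while stating the general case as following by the same mechanism. Once the finiteness and the zero structure of $\mathscr{W}^{\bar{4},\lambda\mu\nu}_{\pi,x}$ are granted, the commutativity proof is the standard residue-cancellation computation identical in form to the no-boundary D6 case in~\cite{Kimura:2023bxy}.
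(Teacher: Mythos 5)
Your proposal follows essentially the same route as the paper's proof: contract $\Lambda^{\lambda\mu\nu}_{\bar{a},\pi}(x)$ with the screening current to get $(-q_a x)\bigl[\mathscr{W}^{\bar{a},\lambda\mu\nu}_{\pi,x}(q_a^{-1}x')^{-1}\bigr]_{\mp}$, reduce the commutator to delta-function contributions at addable and removable boxes via the zero structure of $\mathscr{W}$, and cancel them pairwise between $\pi$ and $\pi+\scube$ using the recursion relation for $\widetilde{\mathcal{Z}}^{\D6}_{\bar{a};\lambda\mu\nu}[\pi]$. Your added caveat that the argument is conditional on the (only example-checked) proposition about the zeros of $\mathscr{W}^{\bar{a},\lambda\mu\nu}_{\pi,x}$ is an honest and accurate reading of the paper's own logical dependencies.
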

\begin{proof}
The contraction of $\Lambda^{\lambda\mu\nu}_{\bar{a},\pi}(x)$ and the screening current $\mathsf{S}_{a}(x)$ is 
\bea
\Lambda^{\lambda\mu\nu}_{\bar{a},\pi}(x)\mathsf{S}_{a}(x')&=(-q_{a}x)\left[\mathscr{W}^{\bar{a},\lambda\mu\nu}_{\pi,x}(q_{a}^{-1}x')^{-1}\right]_{-}:\Lambda^{\lambda\mu\nu}_{\bar{a},\pi}(x)\mathsf{S}_{a}(x'):,\\
\mathsf{S}_{a}(x')\Lambda^{\lambda\mu\nu}_{\bar{a},\pi}(x)&=(-q_{a}x)\left[\mathscr{W}^{\bar{a},\lambda\mu\nu}_{\pi,x}(q_{a}^{-1}x')^{-1}\right]_{+}:\Lambda^{\lambda\mu\nu}_{\bar{a},\pi}(x)\mathsf{S}_{a}(x'):
\eea
The commutation of the $qq$-characters are then given as
\bea
\relax &[\mathsf{T}_{\bar{a},\lambda\mu\nu}(x),\mathsf{S}_{a}(x')]\\
=&q_{a}x\sum_{\pi\in\mathcal{PP}_{\lambda\mu\nu}}\widetilde{\mathcal{Z}}_{\bar{a};\lambda\mu\nu}^{\D6}[\pi]\left(\sum_{\scube \in A(\pi)}\underset{x'=q_{a}\chi_{\bar{a},x}(\scube)}{\Res}\mathscr{W}_{\pi,x}^{\bar{a},\lambda\mu\nu}(q_{a}^{-1}x')^{-1}\delta\left(\frac{x'}{q_{a}\chi_{\bar{a},x}(\cube)}\right):\Lambda^{\lambda\mu\nu}_{\bar{a},\pi}(x)\mathsf{S}_{a}(q_{a}\chi_{\bar{a},x}(\cube)):\right.\\
&\left.+\sum_{\scube \in R(\pi)}\underset{x'=\chi_{\bar{a},x}(\scube)}{\Res}\mathscr{W}_{\pi,x}^{\bar{a},\lambda\mu\nu}(q_{a}^{-1}x')^{-1}\delta\left(\frac{x'}{\chi_{\bar{a},x}(\cube)}\right):\Lambda^{\lambda\mu\nu}_{\bar{a},\pi}(x)\mathsf{S}_{a}(\chi_{\bar{a},x}(\cube)):\right)
\eea
Shifting the second term as $\pi'=\pi-\scube$, the second term can be rewritten in terms of summation of $\sum_{\scube\in A(\pi')}$. Since we have the recursion relation of the coefficients $\widetilde{\mathcal{Z}}_{\bar{a};\lambda\mu\nu}^{\D6}[\pi]$:
\bea
\frac{\widetilde{\mathcal{Z}}_{\bar{a};\lambda\mu\nu}^{\D6}[\pi+\cube]}{\widetilde{\mathcal{Z}}_{\bar{a};\lambda\mu\nu}^{\D6}[\pi]}&=-\frac{\underset{x'=q_{a}\chi_{\bar{a},x}(\scube)}{\Res}\mathscr{W}_{\pi,x}^{\bar{a},\lambda\mu\nu}(q_{a}^{-1}x')^{-1}}{\underset{x'=\chi_{\bar{a},x}(\scube)}{\Res}\mathscr{W}_{\pi+\scube,x}^{\bar{a},\lambda\mu\nu}(q_{a}^{-1}x')^{-1}},\\
&=-\lim_{x'\rightarrow \chi_{\bar{a},x}(\scube)}\frac{\mathscr{W}^{\bar{a},\lambda\mu\nu}_{\pi,x}(x')^{-1}}{\mathscr{W}^{\bar{a},\lambda\mu\nu}_{\pi+\scube,x}(q_{a}^{-1}x')^{-1}},
\eea
we obtain the result.\footnote{\label{foot:initial}Note that the initial condition is chosen to be $\widetilde{\mathcal{Z}}^{\D6}_{\bar{a};\lambda\mu\nu}[\pi]=1$ in this paper. One may impose different initial conditions depending on the boundary conditions $\lambda,\mu,\nu$. Even if one does so, the recursion relation does not change and we still have the commutativity.}
\if0
Then, the commutativity is translated into the recursive relation of the coefficients $\widetilde{\mathcal{Z}}_{\bar{a};\lambda\mu\nu}^{\D6}[\pi]$:
\bea
\frac{\widetilde{\mathcal{Z}}_{\bar{a};\lambda\mu\nu}^{\D6}[\pi+\cube]}{\widetilde{\mathcal{Z}}_{\bar{a};\lambda\mu\nu}^{\D6}[\pi]}&=-\frac{\underset{x'=q_{a}\chi_{\bar{a},x}(\scube)}{\Res}\mathscr{W}_{\pi,x}^{\bar{a},\lambda\mu\nu}(q_{a}^{-1}x')^{-1}}{\underset{x'=\chi_{\bar{a},x}(\scube)}{\Res}\mathscr{W}_{\pi+\scube,x}^{\bar{a},\lambda\mu\nu}(q_{a}^{-1}x')^{-1}},\\
&=-\lim_{x'\rightarrow \chi_{\bar{a},x}(\scube)}\frac{\mathscr{W}^{\bar{a},\lambda\mu\nu}_{\pi,x}(x')^{-1}}{\mathscr{W}^{\bar{a},\lambda\mu\nu}_{\pi+\scube,x}(q_{a}^{-1}x')^{-1}}
\eea
Since this recursive relation is true, we obtain the commutativity.
\fi
\end{proof}

\subsection{Fusion of D4 \texorpdfstring{$qq$}{qq}-characters}
The $\D6$ $qq$-characters can be obtained by fusion of infinite number of $\D4$ $qq$-characters \cite{Kimura:2023bxy}. Let us review this process in detail including the sign factor aspects not discussed in \cite{Kimura:2023bxy}.
\begin{lemma}[\cite{Kimura:2023bxy}]\label{lem:D4contraction}
    The contraction of the operators $\Lambda_{12,\lambda}(x)$ are given as 
    \bea
    \Lambda_{12,\lambda^{(2)}}(x_{2})\Lambda_{12,\lambda^{(1)}}(x_{1})&=\mathcal{Z}_{\text{1-loop}}^{\D4\tbar\D4}(x_{1},12\mid x_{2},12)\mathcal{Z}_{12|12}^{\D4\tbar\D4}(x_{1},\lambda^{(1)}\mid x_{2},\lambda^{(2)}):\Lambda_{12,\lambda^{(2)}}(x_{2})\Lambda_{12,\lambda^{(1)}}(x_{1}):
    \eea
    where
    \bea
\mathcal{Z}_{A|B}^{\D4\tbar\D4}(v_{1},\lambda^{(1)}\,|\,v_{2},\lambda^{(2)})&=\prod_{\Abox\in\lambda^{(1)}}\mathscr{S}_{\bar{B}}\left(q_{B}\frac{\chi_{A,v_{1}}(\Bbox)}{v_{2}}\right)\prod_{\AboxF\in\lambda^{(2)}}\mathscr{S}_{\bar{A}}\left(\frac{v_{1}}{\chi_{B,v_{2}}(\BboxF)}\right)\prod_{\substack{\Abox\in\lambda^{(1)}\\\AboxF\in\lambda^{(2)}}}\mathcal{A}_{\mathbb{C}^{4}}\left(\frac{\chi_{A,v_{1}}(\Bbox)}{\chi_{B,v_{2}}(\BboxF)}\right)^{-1}\\
\mathcal{Z}_{\text{1-loop}}^{\D4\tbar\D4}(x_{1},A\,|\,x_{2},B)&=\exp\left(-\sum_{n=1}^{\infty}\frac{1}{n}\frac{\bfP_{\bar{A}}^{[n]}\bfP_{\bar{B}}^{[-n]}}{\bfP_{\four}^{[n]}}\left(\frac{x_{1}}{x_{2}}\right)^{n}\right)
    \eea
\end{lemma}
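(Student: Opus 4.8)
The plan is to obtain the contraction by a direct application of Wick's theorem for free-field vertex operators. Since each factor is the normal-ordered product $\Lambda_{12,\lambda}(x)={:\mathsf{X}_{12}(x)\prod_{\Abox\in\lambda}\mathsf{A}^{-1}(\chi_{12,x}(\Bbox)):}$, the product $\Lambda_{12,\lambda^{(2)}}(x_{2})\Lambda_{12,\lambda^{(1)}}(x_{1})$ equals the total normal-ordered product times the product of all pairwise contractions between a constituent of the left factor and a constituent of the right factor. These fall into four families: (i) $\mathsf{X}_{12}(x_{2})$ with $\mathsf{X}_{12}(x_{1})$; (ii) $\mathsf{X}_{12}(x_{2})$ with $\mathsf{A}^{-1}(\chi_{12,x_{1}}(\Bbox))$ for $\Abox\in\lambda^{(1)}$; (iii) $\mathsf{A}^{-1}(\chi_{12,x_{2}}(\BboxF))$ with $\mathsf{X}_{12}(x_{1})$ for $\AboxF\in\lambda^{(2)}$; and (iv) $\mathsf{A}^{-1}(\chi_{12,x_{2}}(\BboxF))$ with $\mathsf{A}^{-1}(\chi_{12,x_{1}}(\Bbox))$ for each pair. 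The claim amounts to showing that family (i) resums to $\mathcal{Z}_{\text{1-loop}}^{\D4\tbar\D4}(x_{1},12\mid x_{2},12)$ while families (ii)--(iv) together yield $\mathcal{Z}_{12|12}^{\D4\tbar\D4}(x_{1},\lambda^{(1)}\mid x_{2},\lambda^{(2)})$.

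For family (i) I would use $\mathsf{a}_{n}=\bfP_{12}^{[-n]}\mathsf{x}_{12,n}$ together with $[\mathsf{a}_{n},\mathsf{a}_{m}]=-\frac{1}{n}\bfP_{\four}^{[n]}\delta_{n+m,0}$ to obtain $[\mathsf{x}_{12,n},\mathsf{x}_{12,-n}]=-\frac{1}{n}\bfP_{\four}^{[n]}/(\bfP_{12}^{[n]}\bfP_{12}^{[-n]})$; because $\bfP_{12}^{[-n]}=(1-q_{1}^{-n})(1-q_{2}^{-n})$ sits in the denominator, this contraction is a genuine infinite product rather than a rational function, i.e.\ the ``unmovable'' one-loop piece. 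Writing it as an exponential, using $\bfP_{\four}^{[n]}=\bfP_{12}^{[n]}\bfP_{34}^{[n]}$ (with $\bar{12}=34$) and the Calabi--Yau constraint $q_{1}q_{2}q_{3}q_{4}=1$ to absorb the leftover monomial (and the zero modes $\mathsf{x}_{12,0}(x_{i})$), it matches $\mathcal{Z}_{\text{1-loop}}^{\D4\tbar\D4}$ in the stated form. Families (ii)--(iv) are rational and are read off from the operator product formulas \eqref{eq:contractions}: $\mathsf{X}_{12}(x_{2})\mathsf{A}^{-1}(y)=\mathscr{S}_{34}(q_{12}y/x_{2})\,{:}\cdots{:}$ produces $\prod_{\Abox\in\lambda^{(1)}}\mathscr{S}_{34}(q_{12}\chi_{12,x_{1}}(\Bbox)/x_{2})$; $\mathsf{A}^{-1}(y)\mathsf{X}_{12}(x_{1})=\mathscr{S}_{34}(x_{1}/y)\,{:}\cdots{:}$ produces $\prod_{\AboxF\in\lambda^{(2)}}\mathscr{S}_{34}(x_{1}/\chi_{12,x_{2}}(\BboxF))$; and the $\mathsf{A}$--$\mathsf{A}$ contraction gives $\prod_{\Abox,\AboxF}\mathcal{A}_{\mathbb{C}^{4}}(\chi_{12,x_{1}}(\Bbox)/\chi_{12,x_{2}}(\BboxF))^{-1}$. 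Multiplying these three and specializing $A=B=12$ reproduces $\mathcal{Z}_{12|12}^{\D4\tbar\D4}$.

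The main obstacle is bookkeeping rather than conceptual: one must fix the radial ordering so that every contraction is expanded in the convergent region $|x_{2}|\gg|x_{1}|$, carefully transfer the inverses on the $\mathsf{A}$-operators onto the powers of $\mathscr{S}_{34}$ and $\mathcal{A}_{\mathbb{C}^{4}}$ (including the reflection identities for $\mathcal{A}_{\mathbb{C}^{4}}$ that interchange its two arguments up to a monomial), and track the zero-mode prefactors $\mathsf{a}_{0}$ and $\mathsf{x}_{12,0}$ so that all spurious monomial factors cancel, which is precisely where $q_{1}q_{2}q_{3}q_{4}=1$ enters. Since this lemma is quoted from \cite{Kimura:2023bxy}, no new ingredient is needed beyond organizing these elementary contractions into the compact form stated.
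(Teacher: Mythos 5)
The paper itself states this lemma without proof (it is imported from \cite{Kimura:2023bxy}), and your blind Wick-contraction argument is correct and is essentially the same computation used throughout the paper for the analogous contraction lemmas (e.g.\ Lemma~\ref{lem:SS_XX_OPE} and the D6--D6 contraction): the $\mathsf{X}_{12}$--$\mathsf{X}_{12}$ contraction, with $[\mathsf{x}_{12,n},\mathsf{x}_{12,-n}]=-\tfrac{1}{n}\bfP_{\four}^{[n]}/(\bfP_{12}^{[n]}\bfP_{12}^{[-n]})$, gives the non-rational one-loop factor once one uses $\bfP_{\four}^{[n]}=\bfP_{12}^{[n]}\bfP_{34}^{[n]}$ and $q_{12}^{n}=q_{34}^{-n}$, while the $\mathsf{X}$--$\mathsf{A}$ and $\mathsf{A}$--$\mathsf{A}$ contractions read off from \eqref{eq:contractions} assemble into the three rational factors of $\mathcal{Z}^{\D4\tbar\D4}_{12|12}$ exactly as you describe. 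Your only over-caution is the worry about a reflection monomial for $\mathcal{A}_{\mathbb{C}^{4}}$: since $\bfP_{\four}^{[n]}=\bfP_{\four}^{[-n]}$ one has $\mathcal{A}_{\mathbb{C}^{4}}(x^{-1})=\mathcal{A}_{\mathbb{C}^{4}}(x)$ with no extra monomial, so nothing further needs to be tracked there.
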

\begin{lemma}[\cite{Kimura:2023bxy}]\label{lem:D4tuning}
    Given two Young diagrams $\lambda^{(1)},\lambda^{(2)}$ and the parameters $x_{2}=q_{a}x_{1}\,\,(a\in\bar{A})$, we have 
    \bea
    \mathcal{Z}_{A|A}^{\D4\tbar\D4}(x_{1},\lambda^{(1)}\mid q_{a}x_{1},\lambda^{(2)})=0
    \eea
    for $\lambda^{(2)}\succ \lambda^{(1)}$.
\end{lemma}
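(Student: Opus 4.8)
The plan is to rewrite $\mathcal{Z}^{\D4\tbar\D4}_{A|A}$ as an index and detect the vanishing through the appearance of the trivial weight with positive multiplicity. From \eqref{eq:contractions} the structure functions are indices of Laurent polynomials, $\mathscr{S}_{\bar A}(y)=\mathbb{I}[-\bfP_{\bar A}\,y]$ and $\mathcal{A}_{\mathbb{C}^{4}}(y)^{-1}=\mathbb{I}[\bfP_{\four}\,y]$; substituting these (together with $\bfP_{\bar A}q_{A}=\bfP_{\bar A}^{\vee}$, $\bfP_{\four}^{\vee}=\bfP_{\four}$) into the product of Lemma~\ref{lem:D4contraction} collapses it, up to a nonvanishing monomial prefactor, to $\mathbb{I}[\mathbf{v}]$ with
\bea
\mathbf{v}=-\bfP_{\bar A}^{\vee}\bfK_{1}\bfN_{2}^{\vee}-\bfP_{\bar A}\bfN_{1}\bfK_{2}^{\vee}+\bfP_{\four}\bfK_{1}\bfK_{2}^{\vee},\qquad
\bfN_{i}=x_{i},\quad \bfK_{i}=\sum_{\Abox\in\lambda^{(i)}}\chi_{A,x_{i}}(\Bbox)
\eea
(the precise placement of the $\vee$'s and the prefactor depend on the square--root convention of \cite{Kimura:2023bxy} and are immaterial here). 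Because $\mathbb{I}[\mathbf{w}]$ vanishes precisely when the constant term of the Laurent polynomial $\mathbf{w}$ is a strictly positive integer, the lemma is equivalent to: at $x_{2}=q_{a}x_{1}$ with $a\in\bar A$, the constant term of $\mathbf{v}$ is $\ge 1$ whenever $\lambda^{(2)}\succ\lambda^{(1)}$.

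At $x_{2}=q_{a}x_{1}$ the framing $x_{1}$ cancels from $\mathbf{v}$, and expanding in the two equivariant parameters transverse to $A$ shows that this constant term equals
\bea
\nu(\lambda^{(1)},\lambda^{(2)})=\bigl[\lambda^{(2)}\neq\emptyset\bigr]-\Phi(\lambda^{(1)},\lambda^{(2)}),\qquad
\Phi(\lambda^{(1)},\lambda^{(2)})=\bigl|\lambda^{(1)}\cap\lambda^{(2)}\bigr|-n_{1}-n_{2}+n_{12},
\eea
where, writing a box of the $A$--plane as $(i,j)$, we set $n_{1}=\#\{(i,j)\in\lambda^{(1)}:(i+1,j)\in\lambda^{(2)}\}$, $n_{2}=\#\{(i,j)\in\lambda^{(1)}:(i,j+1)\in\lambda^{(2)}\}$, $n_{12}=\#\{(i,j)\in\lambda^{(1)}:(i+1,j+1)\in\lambda^{(2)}\}$; equivalently, $\nu$ counts the factors of the structure--function product that develop a zero at $x_{2}=q_{a}x_{1}$ minus those that develop a pole. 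One checks $\nu(\lambda,\lambda)=0$ for every $\lambda$ (using $n_{1}=|\lambda|-r$, $n_{2}=|\lambda|-c$, $n_{12}=|\lambda|-r-c+1$ for $\lambda\neq\emptyset$, with $r,c$ the lengths of the first row and column), so the diagonal monomials are never killed, as the $\D4\to\D6$ fusion of $qq$--characters requires. It then remains to prove the purely combinatorial statement $\nu(\lambda^{(1)},\lambda^{(2)})\ge1\iff\lambda^{(2)}\not\subseteq\lambda^{(1)}$.

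I would prove this by building $\lambda^{(2)}$ up one box at a time and tracking $\Phi(\lambda^{(1)},\cdot)$. When a box $u=(i,j)$ is added, $\Phi$ changes by $\Delta\Phi=[u\in\lambda^{(1)}]-[(i-1,j)\in\lambda^{(1)}]-[(i,j-1)\in\lambda^{(1)}]+[(i-1,j-1)\in\lambda^{(1)}]$; using that $\lambda^{(1)}$ is a Young diagram one gets $\Delta\Phi=1$ only for $u=(1,1)\in\lambda^{(1)}$, $\Delta\Phi\le0$ whenever $u\notin\lambda^{(1)}$, and, crucially, $\Delta\Phi=-1$ for the \emph{first} box added outside $\lambda^{(1)}$ (being addable to $\lambda^{(1)}\cap\lambda^{(2)}$ but not lying in $\lambda^{(1)}$, such a box has its $(i-1,j)$-- or $(i,j-1)$--neighbour forced into $\lambda^{(1)}$ while $u$ itself stays out). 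Building $\lambda^{(1)}\cap\lambda^{(2)}$ first raises $\Phi$ from $0$ to $1$ (when both partitions are nonempty), the first box of $\lambda^{(2)}\setminus\lambda^{(1)}$ brings it back to $0$, and $\Phi$ is non--increasing afterwards; hence $\Phi(\lambda^{(1)},\lambda^{(2)})\le0$ and $\nu\ge1$ as soon as $\lambda^{(2)}\not\subseteq\lambda^{(1)}$, while $\Phi=1$, $\nu=0$ when $\lambda^{(2)}\subseteq\lambda^{(1)}$ (the case $\lambda^{(1)}=\emptyset$ is immediate, since then $\Phi\equiv0$). The hard part is this combinatorial step: all the content sits in the sign bookkeeping of $\Delta\Phi$ through the boundary cases ($\lambda^{(1)}$ empty, boxes on an axis, and the first box leaving $\lambda^{(1)}$); the reduction to it, via the index/structure--function dictionary, is routine.
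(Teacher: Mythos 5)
The paper itself does not prove this lemma (it is imported from \cite{Kimura:2023bxy}), so your argument has to stand on its own. The reduction is sound: writing $\mathcal{Z}^{\D4\tbar\D4}_{A|A}$ as an index and testing for a strictly positive constant term is a valid vanishing criterion, and I checked that at $x_2=q_a x_1$ the constant term is indeed $\nu=\bigl[\lambda^{(2)}\neq\emptyset\bigr]-\Phi$ with your $\Phi=|\lambda^{(1)}\cap\lambda^{(2)}|-n_1-n_2+n_{12}$ (the $-\bfP_{\bar A}\bfN_2\bfK_1^{\vee}$ term contributes nothing, the $\bfN_1^{\vee}\bfK_2$ term contributes $[(1,1)\in\lambda^{(2)}]$, and the $\bfP_{\four}$ term contributes $-\Phi$); your check $\nu(\lambda,\lambda)=0$ is also correct. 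The gap is exactly where you say the content sits. Your claims that $\Delta\Phi\le 0$ whenever $u\notin\lambda^{(1)}$ and that $\Phi$ is non-increasing after the first box leaves $\lambda^{(1)}$ are both false. Take $\lambda^{(1)}=\{(1,1)\}$ and $\lambda^{(2)}$ the $2\times2$ square: the box $u=(2,2)$ lies outside $\lambda^{(1)}$, must be added last, and gives
\bea
\Delta\Phi=0-[(1,2)\in\lambda^{(1)}]-[(2,1)\in\lambda^{(1)}]+[(1,1)\in\lambda^{(1)}]=+1 .
\eea
In general $\Delta\Phi=+1$ occurs at every position $u=(p+1,q+1)$ for which $(p,q)$ is an outer corner of $\lambda^{(1)}$ (i.e.\ $(p,q)\in\lambda^{(1)}$, $(p+1,q),(p,q+1)\notin\lambda^{(1)}$), and since $\Delta\Phi$ depends only on $\lambda^{(1)}$ and the position of $u$, no reordering of the boxes rescues the monotonicity. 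Your induction therefore does not establish $\Phi\le0$.

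The target inequality is nevertheless true, so the proof can be repaired by replacing step-by-step monotonicity with a global count. Write $\Phi=\sum_{(k,l)\in\lambda^{(2)}}h(k,l)$ where $h$ is the mixed second difference of the indicator of $\lambda^{(1)}$: then $h=+1$ at $(1,1)$ and at the positions $(p+1,q+1)$ diagonally beyond the outer corners of $\lambda^{(1)}$, $h=-1$ at the addable corners of $\lambda^{(1)}$, and $h=0$ elsewhere. If a $+1$ site $(p+1,q+1)$ lies in $\lambda^{(2)}$, then the addable corner of $\lambda^{(1)}$ in the same row (to its left) and the one in the same column (above it) are forced into $\lambda^{(2)}$ as well; this pairing, together with the observation that a minimal box of $\lambda^{(2)}\setminus\lambda^{(1)}$ is an occupied addable corner, yields $\#\{-1\text{ sites in }\lambda^{(2)}\}\ge\#\{+1\text{ sites in }\lambda^{(2)}\text{ other than }(1,1)\}+1$ whenever $\lambda^{(2)}\not\subseteq\lambda^{(1)}$, hence $\Phi\le0$ and $\nu\ge1$. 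With that substitution your argument is complete; note also that what you then prove is vanishing for all $\lambda^{(2)}\not\subseteq\lambda^{(1)}$, which is the version actually used in Prop.~\ref{lem:finiteplanedecomposition}, the literal hypothesis $\lambda^{(2)}\succ\lambda^{(1)}$ being a special case.
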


\begin{proposition}\label{lem:finiteplanedecomposition}
    Given a finite plane partition $\pi$ spanning the 123-plane, we can decompose it into layers of non-increasing finite Young diagrams:
    \bea
    \pi=\{\lambda^{(1)},\lambda^{(2)},\ldots \lambda^{(k)},\ldots\},\quad \lambda^{(1)}\succeq \lambda^{(2)} \succeq \cdots
    \eea
    and then we have
    \bea
    \prod_{k=1}^{\infty}\widetilde{\mathcal{Z}}^{\D4}_{12}[\lambda^{(k)}]\,\,\overleftarrow{\prod_{k=1}^{\infty}}\Lambda_{12,\lambda^{(k)}}(xq_{3}^{k-1})\simeq \widetilde{\mathcal{Z}}^{\D6}_{\bar{4}}[\pi]:\Lambda_{\bar{4},\pi}(x):
    \eea
    up to one-loop perturbative factors.
\end{proposition}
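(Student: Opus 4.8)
The plan is to build the identity from the two-body contractions in Lemma~\ref{lem:D4contraction} and the vanishing mechanism in Lemma~\ref{lem:D4tuning}, exactly as in the fusion arguments of \cite{Kimura:2023bxy}. First I would write out the ordered product
\[
\prod_{k=1}^\infty \widetilde{\mathcal{Z}}^{\D4}_{12}[\lambda^{(k)}]\;\overleftarrow{\prod_{k=1}^\infty}\Lambda_{12,\lambda^{(k)}}(xq_3^{k-1})
\]
and collect all the pairwise contractions coming from $\Lambda_{12,\lambda^{(l)}}(xq_3^{l-1})\Lambda_{12,\lambda^{(k)}}(xq_3^{k-1})$ for $k<l$. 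By Lemma~\ref{lem:D4contraction} each such pair contributes the factor $\mathcal{Z}^{\D4\tbar\D4}_{12|12}(xq_3^{k-1},\lambda^{(k)}\mid xq_3^{l-1},\lambda^{(l)})$ together with a one-loop factor $\mathcal{Z}^{\D4\tbar\D4}_{\text{1-loop}}$; the latter factorize into the $\simeq$ that the statement already allows us to discard. Since the spectral parameters differ by powers of $q_3$ and $3\in\overline{12}$, consecutive layers $l=k+1$ satisfy the hypothesis of Lemma~\ref{lem:D4tuning}, but with $\lambda^{(k)}\succeq\lambda^{(k+1)}$ the vanishing is \emph{avoided} (the Lemma kills the opposite ordering $\lambda^{(l)}\succ\lambda^{(k)}$); so no contraction is forced to zero and all factors are finite and nonzero — this is precisely the combinatorial reason the non-increasing condition on the $\lambda^{(k)}$ appears.

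Next I would match the two sides term by term. On the normal-ordered operator side, $:\overleftarrow{\prod}_k\Lambda_{12,\lambda^{(k)}}(xq_3^{k-1}):$ is, by the vertex-operator relations \eqref{eq:vertexoprelation}–\eqref{eq:vertexoprelation2} (the same regularization used to pass from $\mathsf{X}_{12}$-products to $\mathsf{W}_{\bar 4}$, as in the surface boundary computations of section~\ref{sec:D6contourfreefield}), equal to $:\Lambda_{\bar4,\pi}(x):$ up to one-loop zero-mode factors, using that $\pi$ is the stack of the $\lambda^{(k)}$ along the $q_3$-axis and $\chi_{\bar4,x}(\cube)=\chi_{12,xq_3^{k-1}}(\Bbox)$ for a box $\cube=(i,j,k)\in\pi$. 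On the coefficient side, I would show
\[
\prod_{k=1}^\infty \widetilde{\mathcal{Z}}^{\D4}_{12}[\lambda^{(k)}]\prod_{k<l}\mathcal{Z}^{\D4\tbar\D4}_{12|12}\bigl(xq_3^{k-1},\lambda^{(k)}\mid xq_3^{l-1},\lambda^{(l)}\bigr)=\widetilde{\mathcal{Z}}^{\D6}_{\bar4}[\pi]
\]
by comparing characters: the left side is the index $\mathbb{I}$ of $-\bfP_{34}^\vee\bfN_{12}^\vee\bfK_{12}^{(k)}+\bfP_{123}^\vee\bfK_{12}^{(k)\vee}\bfK_{12}^{(k)}$ summed over $k$ plus the cross terms, and upon using $\bfP_{\bar4}=\bfP_{123}=\bfP_{12}\,\bfP_3$, $\sum_k \bfK_{12}^{(k)}q_3^{k-1}=\bfK_\pi$ (the total character of $\pi$), and $\bfN=\bfN_{12}$, this collapses to $\mathbb{I}[-\bfP_4^\vee\bfN^\vee\bfK_\pi+\bfP_{123}^\vee\bfK_\pi^\vee\bfK_\pi]$, which is exactly $\widetilde{\mathcal{Z}}^{\D6}_{\bar4}[\pi]$ as in \eqref{eq:D6U1partitionfunction}. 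This is a finite computation once one knows the infinite products truncate, and truncation follows because $\pi$ is a \emph{finite} plane partition, so only finitely many $\lambda^{(k)}$ are nonempty and the cross-contractions $\mathcal{Z}^{\D4\tbar\D4}_{12|12}$ between a nonempty and an empty diagram are $1$.

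The main obstacle I anticipate is the bookkeeping of the regularization of the infinite products — both the operator normal-ordering (justifying $:\overleftarrow{\prod}_k\Lambda_{12,\lambda^{(k)}}:\ \simeq\ :\Lambda_{\bar4,\pi}:$ rigorously, controlling the zero-modes that are absorbed into the "one-loop perturbative factors") and the convergence/telescoping of $\prod_{k<l}\mathcal{Z}^{\D4\tbar\D4}_{12|12}$, which requires showing the character-level identity $\sum_{k<l}(\cdots)=\bfP_{123}^\vee\bfK_\pi^\vee\bfK_\pi-\sum_k\bfP_{123}^\vee\bfK_{12}^{(k)\vee}\bfK_{12}^{(k)}$ holds as a rational function and not merely formally. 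A clean way to sidestep the analytic subtleties is to truncate to the first $N$ layers with $N\gg \ell_3(\pi)$ (the height of $\pi$ in the $q_3$-direction), prove the identity for the truncation by the finite character manipulation above, and then observe that both sides are independent of $N$ once $N$ exceeds the support of $\pi$. I would also double-check that the ordering convention in $\overleftarrow{\prod}$ matches the radial ordering implicit in Lemma~\ref{lem:D4contraction}, since an ordering mismatch only changes the $\simeq$ side but one wants the statement stated with the correct one.
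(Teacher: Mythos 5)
Your overall strategy (contract pairwise via Lemma~\ref{lem:D4contraction}, then match characters) is the paper's, but two steps as written would fail.

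First, the truncation argument is wrong: the cross-contraction between a nonempty layer and an empty one is \emph{not} $1$. From the definition in Lemma~\ref{lem:D4contraction}, $\mathcal{Z}^{\D4\tbar\D4}_{12|12}(x_i,\lambda^{(i)}\mid x_j,\emptyset)=\prod_{\Abox\in\lambda^{(i)}}\mathscr{S}_{34}\bigl(q_{12}\chi_{12,x_i}(\Abox)/x_j\bigr)$, which depends on $\lambda^{(i)}$ and is an instanton-level contribution, not a one-loop factor you may discard. The infinite tail of these factors over the empty layers $j>\ell_3(\pi)$ is precisely what converts the D4-type structure into the D6 one: at the level of characters it turns $-\bfP_{34}^{\vee}\sum_{j}x_j^{-1}$ into $-\bfP_{34}^{\vee}x^{-1}/\bfP_{3}^{\vee}=-\bfP_{4}^{\vee}x^{-1}$ (equivalently, the telescoping $\mathscr{S}_{34}(u)=\mathscr{V}_4(u)/\mathscr{V}_4(q_3u)$ resums the fundamental factor $\mathscr{S}_{34}$ of each D4 layer into the $\mathscr{V}_4$ of the D6 theory). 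If you set these factors to $1$ and truncate at $N$ layers, the coefficient you obtain is not $\widetilde{\mathcal{Z}}^{\D6}_{\bar 4}[\pi]$.

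Second, the proposal skips the step that carries the actual mathematical content: the cross terms appear in the character as $-\bfP_{34}^{\vee}x_jq_{12}^{-1}\bm{\lambda}^{(i)\vee}$ and $\bfP_{\four}\bm{\lambda}^{(i)\vee}\bm{\lambda}^{(j)}$ for $j>i$, and to "collapse" them into $-\bfP_{34}^{\vee}x_j^{-1}\bm{\lambda}^{(i)}$ and $\bfP_{123}^{\vee}\bigl(\bm{\lambda}^{(i)\vee}\bm{\lambda}^{(j)}+\bm{\lambda}^{(j)\vee}\bm{\lambda}^{(i)}\bigr)$ one must apply the reflection property of the index, which is only sign-free if the relevant characters contain no unmovable (constant) terms. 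The paper checks this explicitly (using $j>i$ and the positivity of the box coordinates); this is exactly the point that fails in the D8 fusion of section~\ref{sec:fusionD6toD8} and produces the sign factor $(-1)^{\sigma_4(\rho)}$, so it cannot be treated as mere bookkeeping. A complete proof must include this movability check. (Minor point: Lemma~\ref{lem:D4tuning} is not needed here, since the non-increasing sequence is given; it enters only when fusing the full $qq$-characters in Theorem~\ref{thm:D4tensortoD6}.)
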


\begin{proof}
Using Lemma~\ref{lem:D4contraction}, the left hand side gives
\bea
\prod_{k=1}^{\infty}\widetilde{\mathcal{Z}}^{\D4}_{12}[\lambda^{(k)}]\,\,\overleftarrow{\prod_{k=1}^{\infty}}\Lambda_{12,\lambda^{(k)}}(xq_{3}^{k-1})\simeq \prod_{i=1}^{\infty}\widetilde{\mathcal{Z}}^{\D4}_{12}[\lambda^{(i)}]\prod_{i<j}\mathcal{Z}^{\D4\tbar\D4}_{12|12}(x_{i},\lambda^{(i)}\mid x_{j},\lambda^{(j)}):\prod_{k=1}^{\infty}\Lambda_{12,\lambda^{(k)}}(xq_{3}^{k-1}):
\eea
where $x_{i}=q_{3}^{i-1}x$ and the equality is up to one-loop perturbative factors. The operator part is obvious (see \cite{Kimura:2023bxy}) so we focus only on the coefficient part. Introducing 
\bea
\bm{\lambda}^{(i)}=\sum_{\Abox\in\lambda^{(i)}}\chi_{12,x_{i}}(\Bbox)
\eea
the factor is rewritten as
\bea
\prod_{i=1}^{\infty}\widetilde{\mathcal{Z}}_{12}^{\D4}[\lambda^{(i)}]&=\mathbb{I}\left[\sum_{i=1}^{\infty}\left(-\bfP_{34}^{\vee}x_{i}^{-1}\bm{\lambda}^{(i)}+\bfP_{123}^{\vee}\bm{\lambda}^{(i)\vee}\bm{\lambda}^{(i)}\right)\right],\quad \\
\prod_{i<j}\mathcal{Z}^{\D4\tbar\D4}_{12|12}(x_{i},\lambda^{(i)}\mid x_{j},\lambda^{(j)})&=\mathbb{I}\left[\sum_{i<j}\left(-\bfP_{34}^{\vee}x_{j}q_{12}^{-1}{\bm{\lambda}^{(i)\vee}}-\bfP_{34}^{\vee}\bm{\lambda}^{(j)}x_{i}^{-1}+\bfP_{\four}\bm{\lambda}^{(j)}\bm{\lambda}^{(i)\vee}\right)\right].
\eea
We denote the total character as $\mathbf{v}_{\D4\rightarrow\D6}$:
\bea
\mathbf{v}_{\D4\rightarrow\D6}&=\sum_{i=1}^{\infty}\left(-\bfP_{34}^{\vee}x_{i}^{-1}\bm{\lambda}^{(i)}+\bfP_{123}^{\vee}\bm{\lambda}^{(i)\vee}\bm{\lambda}^{(i)}\right)\\
&+\sum_{i<j}\left(-\bfP_{34}^{\vee}x_{j}q_{12}^{-1}{\bm{\lambda}^{(i)\vee}}-\bfP_{34}^{\vee}\bm{\lambda}^{(j)}x_{i}^{-1}+\bfP_{\four}\bm{\lambda}^{(j)}\bm{\lambda}^{(i)\vee}\right).
\eea
By direct computation, one can show that $\mathbf{v}_{\D4\rightarrow \D6}$ is \textit{movable} (see Def.~\ref{app-def:movable}). To compare with $\widetilde{\mathcal{Z}}_{\bar{4}}^{\D6}[\pi]$, we need to apply the reflection properties in \eqref{eq:reflectionprop0}, Prop.~\ref{app-prop:reflection_sign}, \ref{app-prop:reflection-mod} to some terms of $\mathbf{v}_{\D4\rightarrow \D6}$.
Let us focus on the following term:
\bea
-\bfP_{34}^{\vee}x_{j}q_{12}^{-1}{\bm{\lambda}^{(i)\vee}}=-\bfP_{34}^{\vee}q_{3}^{j-i}q_{12}^{-1}\sum_{(x,y)\in\lambda^{(i)}}q_{1}^{-x+1}q_{2}^{-y+1},\quad j>i.
\eea
and each term is expanded as
\bea
&(1-q_{3}^{-1})(1-q_{4}^{-1})q_{3}^{j-i}q_{1}^{-x}q_{2}^{-y}\\
=&q_{3}^{j-i}q_{1}^{-x}q_{2}^{-y}-q_{3}^{j-i-1}q_{1}^{-x}q_{2}^{-y}-q_{3}^{j-i+1}q_{1}^{-x+1}q_{2}^{-y+1}+q_{3}^{j-i}q_{1}^{-x+1}q_{2}^{-y+1}.
\eea
Since $j>i$ and $x,y\geq 1$, no term will be unmovable and we have
\bea
\mathbb{I}\left[-\bfP_{34}^{\vee}x_{j}q_{12}^{-1}{\bm{\lambda}^{(i)\vee}}\right]=\mathbb{I}\left[-\bfP_{34}^{\vee}x_{j}^{-1}{\bm{\lambda}^{(i)}}\right].
\eea
By doing a similar analysis, we also have
\bea
\mathbb{I}\left[ \bfP_{\four}\bm{\lambda}^{(j)}\bm{\lambda}^{(i)\vee}  \right]&=\mathbb{I}\left[( \bfP_{123}^{\vee}+\bfP_{123})\bm{\lambda}^{(i)\vee}\bm{\lambda}^{(j)} \right]=\mathbb{I}\left[ \bfP_{123}^{\vee}\left(\bm{\lambda}^{(j)\vee}\bm{\lambda}^{(i)}+\bm{\lambda}^{(i)\vee}\bm{\lambda}^{(j)} \right) \right].
\eea
To see this, we need to check that there are no unmovable terms in $\bfP_{123}\bm{\lambda}^{(i)\vee}\bm{\lambda}^{(j)}$.
For $\Bbox=(A_{1},B_{1})\in\lambda^{(i)},\,\,\BboxF=(A_{2},B_{2})\in\lambda^{(j)}$, each term takes the form
\bea
\bfP_{123}q_{3}^{j-i}q_{1}^{A_{2}-A_{1}}q_{2}^{B_{2}-B_{1}}.
\eea
The unmovable terms appear only when all of the powers of $q_{1,2,3}$ are zero. The factor $\bfP_{123}$ will be a Laurent polynomial with $q_{3}^{\geq 0}$ while the remaining factor is $q_{3}^{>0}$ due to $j>i$. The character $\bfP_{123}\bm{\lambda}^{(i)\vee}\bm{\lambda}^{(j)}$ is then in powers of $q_{3}^{>0}$ and thus it is movable. 

Therefore, the index of $\mathbf{v}_{\D4\rightarrow \D6}$ is
\bea
\mathbb{I}\left[\mathbf{v}_{\D4\rightarrow \D6}\right]&=\mathbb{I}\left[-\bfP_{34}^{\vee}\sum_{i}x_{i}^{\vee}\sum_{j}\bm{\lambda}^{(j)}+\bfP_{123}^{\vee}\sum_{i}\bm{\lambda}^{(i)\vee}\sum_{j}\bm{\lambda}^{(j)}\right]\\
&=\mathbb{I}\left[-\bfP_{4}^{\vee}x^{-1}\sum_{j}\bm{\lambda}^{(j)}+\bfP_{123}^{\vee}\sum_{i}\bm{\lambda}^{(i)\vee}\sum_{j}\bm{\lambda}^{(j)}\right]=\widetilde{\mathcal{Z}}^{\D6}_{\bar{4}}[\pi].
\eea
\end{proof}

\begin{theorem}[\cite{Kimura:2023bxy}]\label{thm:D4tensortoD6}
The D6 $qq$-characters are obtained by infinite products of the D4 $qq$-characters:
\bea
\overleftarrow{\prod_{i=1}^{\infty}}\mathsf{T}_{ab}(xq_{c}^{i-1})\simeq \mathsf{T}_{abc}(x)
\eea
where the equality is up to one-loop perturbative factors.
\end{theorem}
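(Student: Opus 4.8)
The plan is to establish the identity $\overleftarrow{\prod}_{i=1}^{\infty}\mathsf{T}_{ab}(xq_c^{i-1}) \simeq \mathsf{T}_{abc}(x)$ by matching the two sides term by term in the monomial expansion, where on the left a monomial is indexed by an infinite sequence of Young diagrams $(\lambda^{(1)},\lambda^{(2)},\ldots)$ — one per factor — and on the right a monomial is indexed by a plane partition $\pi$ spanning the $abc$-plane. First I would invoke Lemma~\ref{lem:D4tuning}: since consecutive factors in the product have spectral parameters differing by $q_c$ and $c\in\overline{ab}$, the contraction factor $\mathcal{Z}^{\D4\tbar\D4}_{ab|ab}(xq_c^{i-1},\lambda^{(i)}\mid xq_c^{i},\lambda^{(i+1)})$ vanishes whenever $\lambda^{(i+1)}\succ\lambda^{(i)}$. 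Hence only sequences with $\lambda^{(1)}\succeq\lambda^{(2)}\succeq\cdots$ survive in the product, and these are exactly the layer decompositions of plane partitions spanning the $abc$-plane, giving a bijection between surviving monomials on the left and monomials on the right. This reduces the statement to a per-monomial identity.

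Next, for each such surviving sequence, I would apply Prop.~\ref{lem:finiteplanedecomposition} directly: it states precisely that $\prod_{k}\widetilde{\mathcal{Z}}^{\D4}_{ab}[\lambda^{(k)}]\,\overleftarrow{\prod}_k \Lambda_{ab,\lambda^{(k)}}(xq_c^{k-1}) \simeq \widetilde{\mathcal{Z}}^{\D6}_{\bar d}[\pi]\,{:}\Lambda_{\bar d,\pi}(x){:}$ up to one-loop perturbative factors, where $\{a,b,c,d\}=\four$ and $\pi$ is the plane partition with layers $\lambda^{(k)}$. (In the excerpt this is stated for the $12\to 123$ case with special direction $4$; the general $ab\to abc$ case follows by the quadrality symmetry of the equivariant parameters, as is used throughout.) Matching coefficients and operator parts term by term then yields the claimed equality of $qq$-characters. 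The topological weight $\mathfrak{q}^{|\pi|}$ on the right is reproduced since $|\pi|=\sum_k |\lambda^{(k)}|$ and each D4 factor carries $\mathfrak{q}^{|\lambda^{(k)}|}$.

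A subtlety worth addressing explicitly is convergence/regularization of the infinite product: the left-hand side is an infinite ordered product of operators, so I would note that after normal-ordering, the infinite product of zero-mode and one-loop factors must be handled as in \cite{Kimura:2023bxy} — these are absorbed into the ``$\simeq$'' (equality up to one-loop perturbative factors) and do not affect the instanton/monomial content. One should check that for a fixed plane partition $\pi$ with finitely many boxes, only finitely many layers $\lambda^{(k)}$ are nonempty, so the operator $\Lambda_{\bar d,\pi}(x)$ is a well-defined normal-ordered product of finitely many $\mathsf{A}^{-1}$'s dressed by $\mathsf{W}_{\bar d}(x)$, consistent with the definition of the D6 $qq$-character. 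I would also remark that the sign factor issue — trivial here because D4 and D6 $qq$-characters have no signs — is the new point that becomes nontrivial only in the subsequent D6$\to$D8 fusion, so no sign bookkeeping is needed at this stage.

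The main obstacle I anticipate is not any single hard computation but rather making the bijection-plus-matching argument watertight in the presence of the infinite product: specifically, justifying that the vanishing from Lemma~\ref{lem:D4tuning} genuinely truncates the sum to ordered sequences even though infinitely many factors are involved (one must argue that any sequence violating $\lambda^{(i)}\succeq\lambda^{(i+1)}$ at even a single adjacent pair contributes zero, which is immediate, but one should also confirm no cancellation or reordering subtlety spoils this), and that the normal-ordering of infinitely many operators commutes with extracting the coefficient of a given monomial. Once Prop.~\ref{lem:finiteplanedecomposition} is accepted, the rest is bookkeeping; the genuinely delicate input — the movability analysis showing $\mathbb{I}[\mathbf{v}_{\D4\to\D6}]=\widetilde{\mathcal{Z}}^{\D6}_{\bar d}[\pi]$ — has already been carried out in its proof.
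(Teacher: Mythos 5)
Your proposal is correct and follows essentially the same route the paper takes: truncating the infinite product to non-increasing layer sequences via Lemma~\ref{lem:D4tuning}, identifying those with plane partitions, and matching coefficients and operators layer by layer via Prop.~\ref{lem:finiteplanedecomposition}, with the one-loop and zero-mode factors absorbed into the ``$\simeq$''. The points you flag as subtleties (finiteness of nonempty layers for a finite $\pi$, quadrality to go from $12\to123$ to general $ab\to abc$, triviality of signs at this stage) are exactly the ones the paper relies on implicitly, so nothing further is needed.
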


\paragraph{$qq$-characters with nontrivial boundary conditions}
Lemma~\ref{lem:finiteplanedecomposition} can be generalized to infinite size plane partitions by doing the same computation as in section~\ref{sec:D6partitionfunction}. Let $\tilde{\pi}$ be an infinite size plane partition decomposed as in $\pi_{\bd},\pi_{\reg}$. We can decompose the plane partition $\tilde{\pi}$ into sequence of infinite size Young diagrams $\tilde{\lambda}=\{\tilde{\lambda}^{(k)}\mid k=1,\ldots, \infty\}$ with the condition
\bea
\tilde{\lambda}^{(1)}\succeq \tilde{\lambda}^{(2)} \succeq \cdots.
\eea
We then have
\bea
\prod_{k=1}^{\infty}\widetilde{\mathcal{Z}}^{\D4}_{12}[\tilde{\lambda}^{(k)}]\overleftarrow{\prod_{k=1}^{\infty}}\Lambda_{12,\tilde{\lambda}^{(k)}}(xq_{3}^{k-1})\simeq \widetilde{\mathcal{Z}}^{\D6}_{\bar{4}}[\tilde{\pi}]:\Lambda_{\bar{4},\tilde{\pi}}(x):.
\eea
Generally, $\widetilde{\mathcal{Z}}_{\bar{4}}^{\D6}[\tilde{\pi}]$ can be decomposed into contributions from $\pi_{\bd}$ and $\pi_{\reg}$ as $\widetilde{\mathcal{Z}}^{\D6}_{\bar{4}}[\widetilde{\pi}]=\widetilde{\mathcal{Z}}^{\D6}_{\bar{4}}[\pi_{\bd}]\widetilde{\mathcal{Z}}^{\D6}_{\bar{4}}[\pi_{\reg}]$. Moreover, depending on the boundary conditions, we have
\bea
\Lambda_{\bar{4},\tilde{\pi}}(x)=\begin{dcases}
    \Lambda_{\bar{4},\pi_{\reg}}^{\lambda\mu\nu}(x),\quad \text{leg bd. cond.},\\
    \Lambda_{\bar{4},\pi_{\reg}}(q_{1}^{k_{23}}q_{2}^{k_{13}}q_{3}^{k_{12}}x),\quad \text{surface bd. cond.}
\end{dcases}
\eea
For example, let us consider the plane partition spanning $123$-plane with $\lambda,\mu$ at the 1,2 axes, respectively. The highest weight in \eqref{eq:D4tensor-op} is represented as
\bea
:\frac{\mathsf{W}_{\bar{4}}(x)}{\prod_{i=1}^{\ell(\lambda^{\rmT})}\prod_{j=1}^{\lambda_{i}^{\rmT}}\mathsf{S}_{1}(q_{2}^{j-1}q_{3}^{i-1}x)\prod_{i=1}^{\ell(\mu)}\prod_{j=1}^{\mu_{i}}\mathsf{S}_{2}(q_{1}^{j-1}q_{2}^{\lambda_{i}^{\rmT}}q_{3}^{i-1}x)}:&={:\prod_{k=1}^{\infty}\frac{\mathsf{X}_{12}(xq_{3}^{k-1})}{\prod_{j=1}^{\lambda_{k}^{\rmT}}\mathsf{S}_{1}(q_{2}^{j-1}q_{3}^{k-1})\prod_{i=1}^{\mu_{k}}\mathsf{S}_{2}(q_{1}^{i-1}q_{2}^{\lambda_{k}^{\rmT}})}:}\\
&={:\prod_{k=1}^{\infty}\mathsf{X}_{12}(xq_{3}^{k-1}q_{1}^{\mu_{k}}q_{2}^{\lambda^{\rmT}_{k}}):}.
\eea
Therefore, up to one-loop perturbative factors and boundary contributions, each monomial term of the D6 $qq$-characters with boundary conditions can be obtained by infinite products of the monomial terms of the D4 $qq$-characters with boundary conditions.

Whether we can decompose the entire D6 $qq$-character to infinite products of D4 $qq$-characters is nontrivial. For this to happen, each layer needs to be a consistent $qq$-character. For the surface boundary conditions, after shifting the spectral parameters properly, the $qq$-character is just $\mathsf{T}_{\bar{4}}(x)$ and thus we have a nice decomposition in the D4 $qq$-characters.

For the leg boundary conditions, this is possible when we have up to two nontrivial legs.
\begin{theorem}
The D6 $qq$-character $\mathsf{T}_{\bar{4},\lambda\mu\emptyset}(x)$ can be decomposed into infinite number of D4 $qq$-characters $\mathsf{T}_{12}(x)$ as
    \bea
\overleftarrow{\prod_{k=1}^{\infty}}\mathsf{T}_{12}(xq_{3}^{k-1}q_{1}^{\mu_{k}}q_{2}^{\lambda_{k}^{\rmT}})\simeq \mathsf{T}_{\bar{4},\lambda\mu\emptyset}(x)
\eea
where the equality is understood up to one-loop perturbative factors.\footnote{In the infinite product process, the one-loop perturbative sector part appears because of the contractions between the vertex operators of the highest weight vertex operators. Such one-loop perturbative part can be included in the $qq$-characters by modifying the initial condition to the recursion relation of the partition functions (see for example footnote \ref{foot:initial}). However, such contributions are just overall factors so we excluded them in all the computations. }
\end{theorem}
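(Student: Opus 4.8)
The plan is to mirror the fusion argument behind Theorem~\ref{thm:D4tensortoD6}, decorating each layer with boundary data. First I would rewrite the factors on the left-hand side: by the two-legs D4 $qq$-character proposition of section~\ref{sec:D4_qq}, the shifted character $\mathsf{T}_{12}(xq_{3}^{k-1}q_{1}^{\mu_{k}}q_{2}^{\lambda^{\rmT}_{k}})$ agrees, up to zero-modes inessential for instanton computations, with the two-legs D4 $qq$-character $\mathsf{T}_{12,\lambda^{\rmT}_{k}\,\mu_{k}}(xq_{3}^{k-1})$, whose monomial terms are $\Lambda^{\lambda^{\rmT}_{k}\,\mu_{k}}_{12,\nu^{(k)}}(xq_{3}^{k-1})$ for finite Young diagrams $\nu^{(k)}$ with coefficients $\widetilde{\mathcal{Z}}^{\D4}_{12}[\nu^{(k)}]$ as in \eqref{eq:D4-partition}. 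The identification of parameters $k_{12}=\lambda^{\rmT}_{k}$, $l_{12}=\mu_{k}$ is read off by comparing the effective highest weight $\mathsf{X}_{12}(xq_{2}^{k_{12}}q_{1}^{l_{12}})$ with the per-layer decomposition of the two-legs boundary character \eqref{eq:D6twolegscharacter}. Expanding the ordered product then yields a formal sum over sequences $(\nu^{(k)})_{k\ge 1}$, in which the $k$-th layer is the infinite Young diagram $\tilde{\lambda}^{(k)}$ obtained by stacking $\nu^{(k)}$ on top of the layer-$k$ boundary hook specified by $\mu_{k},\lambda^{\rmT}_{k}$.

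Second, I would isolate the surviving sequences. Using Lemma~\ref{lem:D4contraction} to contract adjacent layers and Lemma~\ref{lem:D4tuning} specialized to $a=3\in\overline{12}$ with $x_{2}=q_{3}x_{1}$, the cross factor $\mathcal{Z}^{\D4\tbar\D4}_{12|12}(xq_{3}^{k-1},\tilde{\lambda}^{(k)}\mid xq_{3}^{k},\tilde{\lambda}^{(k+1)})$ vanishes unless $\tilde{\lambda}^{(k)}\succeq\tilde{\lambda}^{(k+1)}$. Since $\lambda,\mu$ are genuine partitions the layer-$k$ boundary hooks are already non-increasing in $k$, so this constraint is exactly the requirement that the $\tilde{\lambda}^{(k)}$ assemble into an infinite plane partition $\tilde{\pi}$ obeying the two-legs boundary conditions \eqref{eq:planelegboundcond}; equivalently, the surviving data $(\nu^{(k)})$ is in bijection with $\pi_{\reg}\in\mathcal{PP}_{\lambda\mu\emptyset}$. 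This also makes the infinite product well-defined order by order in $\mathfrak{q}$, since at each order only finitely many layers are excited, and it identifies the highest weight (the $\pi_{\reg}=\emptyset$ term) with $\Lambda^{\lambda\mu\emptyset}_{\bar{4},\emptyset}(x)$ via the ordered-product identity for the $\mathsf{X}_{12}$ operators recalled just before the theorem.

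Third, on the surviving configurations I would apply the infinite-size generalization of Proposition~\ref{lem:finiteplanedecomposition} recorded after its proof, namely $\prod_{k=1}^{\infty}\widetilde{\mathcal{Z}}^{\D4}_{12}[\tilde{\lambda}^{(k)}]\,\overleftarrow{\prod_{k=1}^{\infty}}\Lambda_{12,\tilde{\lambda}^{(k)}}(xq_{3}^{k-1})\simeq\widetilde{\mathcal{Z}}^{\D6}_{\bar{4}}[\tilde{\pi}]:\Lambda_{\bar{4},\tilde{\pi}}(x):$ up to one-loop factors, together with $\widetilde{\mathcal{Z}}^{\D6}_{\bar{4}}[\tilde{\pi}]=\widetilde{\mathcal{Z}}^{\D6}_{\bar{4}}[\pi_{\bd}]\,\widetilde{\mathcal{Z}}^{\D6}_{\bar{4}}[\pi_{\reg}]$ and $\Lambda_{\bar{4},\tilde{\pi}}(x)=\Lambda^{\lambda\mu\emptyset}_{\bar{4},\pi_{\reg}}(x)$. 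The operator parts then match immediately by \eqref{eq:vertexoprelation}--\eqref{eq:vertexoprelation2}, and it remains to trade the coefficient $\prod_{k}\widetilde{\mathcal{Z}}^{\D4}_{12}[\tilde{\lambda}^{(k)}]$ carried by the full layers for $\prod_{k}\widetilde{\mathcal{Z}}^{\D4}_{12}[\nu^{(k)}]$ carried by the excited boxes alone. I would do this by the same ``movable character'' computation as in the proof of Proposition~\ref{lem:finiteplanedecomposition}: form the total character $\mathbf{v}_{\D4\rightarrow\D6}$ now decorated with $\bfN_{\bar{4},\lambda\mu\emptyset}$, and check, using the layer ordering $j>i$ and the $q_{3}^{>0}$ structure of $\bfP_{123}$, that all cross terms between distinct layers---including those involving boundary boxes---are movable, so that $\mathbb{I}[\mathbf{v}_{\D4\rightarrow\D6}]$ collapses to the two-legs D6 vertex character $-\bfP_{4}^{\vee}\bfN_{\bar{4}}^{\vee}\bfK_{\pi_{\reg}}+\bfP_{\four}\bfN_{\bar{4},\lambda\mu\emptyset}^{\vee}\bfK_{\pi_{\reg}}+\bfP_{123}^{\vee}\bfK_{\pi_{\reg}}^{\vee}\bfK_{\pi_{\reg}}$ of \eqref{eq:D6legpartitionfunction}, up to a $\pi_{\reg}$-independent boundary prefactor.

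The main obstacle I anticipate is precisely this last coefficient-matching step with boundary data present: one must verify that no unmovable monomial is produced when boundary boxes of one layer meet boundary or bulk boxes of another, which would otherwise spoil the clean factorization into a $\pi_{\reg}$-independent prefactor times $\widetilde{\mathcal{Z}}^{\D6}_{\bar{4};\lambda\mu\emptyset}[\pi_{\reg}]$. A cleaner alternative I would keep in reserve is a uniqueness argument: having shown that the regularized infinite product has highest weight $\Lambda^{\lambda\mu\emptyset}_{\bar{4},\emptyset}(x)$ and that it commutes with $\mathscr{Q}_{4}(x')$---the latter because each factor $\mathsf{T}_{12}(\,\cdot\,)$ commutes with $\mathscr{Q}_{4}$ and commutators are additive over products, the only delicate point being that the regularization absorbing the divergent contractions is built from $\mathsf{S}_{1},\mathsf{S}_{2}$ and not $\mathsf{S}_{3}$, hence preserves $\mathscr{Q}_{4}$-commutativity---one then invokes the fact, used in sections~\ref{sec:D4_qq} and~\ref{sec:D6_qq} (cf.\ Proposition~\ref{prop:D6leg_qq}), that a highest weight together with commutativity with the single screening charge pins down the D6 $qq$-character uniquely up to the chosen normalization.
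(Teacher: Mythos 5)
Your proposal follows essentially the same route as the paper: identify each shifted factor with the two-legs D4 $qq$-character via its effective highest weight, use the contraction and vanishing lemmas to restrict to non-increasing layers that assemble into $\pi_{\reg}\in\mathcal{PP}_{\lambda\mu\emptyset}$, and invoke the infinite-size generalization of the fusion proposition together with the factorization of $\widetilde{\mathcal{Z}}^{\D6}_{\bar{4}}[\tilde{\pi}]$ into boundary and regular parts. The paper presents exactly this argument (rather tersely) in the paragraphs preceding the theorem, and both your worry about movability of the boundary cross terms and your reserve uniqueness argument via $\mathscr{Q}_{4}$-commutativity are refinements of points the paper leaves implicit rather than departures from its method.
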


The situation is different when we have three nontrivial legs $\lambda,\mu,\nu$. In this case, we can not decompose the D6 $qq$-character $\mathsf{T}_{\bar{4},\lambda\mu\nu}(x)$ into infinite products of D4 $qq$-characters. For layers at $k\gg 1$, the boundary contribution is illustrated as
\bea
\adjustbox{valign=c}{\begin{tikzpicture}[scale=0.8]
        \draw[->] (-1,0)--(3,0);
        \node[above] at (-0.5,3){$2$};
        \node [right] at (3,0){$1$};
         \draw[->]   (-0.5,-0.5)--(-0.5,3);
          \fill[gray!70!white] (-0.5,0)--(0,0)--(0,0.5)--(-0.5,0.5)--(-0.5,0);
         \fill[gray!70!white] (0,0)--(0.5,0)--(0.5,0.5)--(-0,0.5)--(-0,0);
         \fill[gray!70!white] (0.5,0)--(1,0)--(1,0.5)--(0.5,0.5)--(0.5,0);
         \fill[gray!70!white] (1,0)--(1.5,0)--(1.5,0.5)--(1,0.5)--(1,0);
         \draw (-0.5,0)--(0,0)--(0,0.5)--(-0.5,0.5)--(-0.5,0);
         \draw (0,0)--(0.5,0)--(0.5,0.5)--(-0,0.5)--(-0,0);
         \draw (0.5,0)--(1,0)--(1,0.5)--(0.5,0.5)--(0.5,0);
         \draw (1,0)--(1.5,0)--(1.5,0.5)--(1,0.5)--(1,0);
        \fill[gray!70!white] (-0.5,0.5)--(0,0.5)--(0,1)--(-0.5,1)--(-0.5,0.5);
         \fill[gray!70!white] (0,0.5)--(0.5,0.5)--(0.5,1)--(-0,1)--(-0,0.5);
         \fill[gray!70!white] (0.5,0.5)--(1,0.5)--(1,1)--(0.5,1)--(0.5,0.5);
         \draw (-0.5,0.5)--(0,0.5)--(0,1)--(-0.5,1)--(-0.5,0.5);
         \draw (0,0.5)--(0.5,0.5)--(0.5,1)--(-0,1)--(-0,0.5);
         \draw (0.5,0.5)--(1,0.5)--(1,1)--(0.5,1)--(0.5,0.5);
         \fill[gray!70!white] (-0.5,1)--(0,1)--(0,1.5)--(-0.5,1.5)--(-0.5,1);
         \fill[gray!70!white] (0,1)--(0.5,1)--(0.5,1.5)--(-0,1.5)--(-0,1);
         \draw (-0.5,1)--(0,1)--(0,1.5)--(-0.5,1.5)--(-0.5,1);
         \draw (0,1)--(0.5,1)--(0.5,1.5)--(-0,1.5)--(-0,1);
         \fill[gray!70!white] (-0.5,1.5)--(0,1.5)--(0,2)--(-0.5,2)--(-0.5,1.5);
         \draw (-0.5,1.5)--(0,1.5)--(0,2)--(-0.5,2)--(-0.5,1.5);
          \end{tikzpicture}
        }\quad =\quad :\mathsf{X}_{12}(xq_{3}^{k-1})\prod_{\Abox\in\nu}\mathsf{A}^{-1}(q_{3}^{k-1}\chi_{12,x}(\Bbox)): .
\eea
We cannot construct any $qq$-character with the highest weight above and thus the D6 $qq$-character with three nontrivial legs is not represented as an infinite product of D4 $qq$-characters.

\subsection{General D6 \texorpdfstring{$qq$}{qq}-characters}
The D6 $qq$-characters introduced in the previous sections are the $qq$-characters whose highest weights come only from one D6-brane. We may generalize the situation when we have multiple D6-branes giving higher rank generalizations and tetrahedron instantons generalizations. The discussion is similar to \cite[Sec.~7.5]{Kimura:2023bxy}, so we omit the generalizations. Another generalization is the D6 $qq$-characters with a negative highest weight appearing in the denominator, the so-called supergroup generalization. Physically, the negative highest weight corresponds to the anti D6-branes and they give extra anti-fundamental contributions to the partition function. Since we will use them to construct D8 $qq$-characters in section~\ref{sec:D8_qq}, let us list down the explicit formulas.

\paragraph{Supergroup generalization} Instead of including only the $\mathsf{W}_{\bar{a}}(x)$ as the highest weight, we can also include a negative weight as
\bea
:\frac{\mathsf{W}_{\bar{a}}(x)}{\mathsf{W}_{\bar{a}}(Kx)}\prod_{\scube\in\mathcal{B}}\mathsf{A}^{-1}(\chi_{\bar{a},x}(\cube)):
\eea
where $\mathcal{B}$ is the contributions coming from the boundaries. When $\mathcal{B}=\emptyset$, this will give the $qq$-character of the 7d $\U(1|1)$ theory (see \cite{Kimura:2023bxy} for details):
\bea
\mathsf{T}_{\bar{a}}(x\mid Kx)&\coloneqq\sum_{\pi\in\mathcal{PP}}\mathfrak{q}^{|\pi|}\widetilde{\mathcal{Z}}_{\bar{a}}[\pi,K]\Lambda_{\bar{a},\pi}^{K}(x),\quad \Lambda^{K}_{\bar{a}}(x)=:\frac{\mathsf{W}_{\bar{a}}(x)}{\mathsf{W}_{\bar{a}}(Kx)}\prod_{\scube\in\pi}\mathsf{A}^{-1}(\chi_{\bar{a},x}(\cube)):\\
\widetilde{\mathcal{Z}}^{\D6}_{\bar{a}}[\pi,K]&=\prod_{\scube\in\pi}\frac{(1-Kx/\chi_{\bar{a},x}(\cube))(1-q_{a}x/\chi_{\bar{a},x}(\cube))}{(1-Kq_{a}x/\chi_{\bar{a},x}(\cube))(1-x/\chi_{\bar{a},x}(\cube))}\prod_{\substack{\scube\in\pi\\\scubeF\in\pi}}g_{\bar{a}}\left(\frac{\chi_{\bar{a},x}(\cube)}{\chi_{\bar{a},x}(\cubeF)}\right)^{-1}.
\eea
The extra parameter $K$ here physically corresponds to the distance between the D6 and $\overline{\D6}$ branes. As discussed in \cite[Sec.~7.5]{Kimura:2023bxy}, tuning the parameter $K$ properly, one can obtain lower dimensional such as the D4 $qq$-characters or spiked instanton $qq$-characters. This situation will be the same when we have nontrivial boundary conditions. However, how we can tune $K$ depends on the boundaries. In this paper, to keep the discussion simple, we will always keep $K$ to be generic.

\paragraph{Surface boundary conditions}
The supergroup analogue of the D6 $qq$-character with surface boundary conditions given as in Prop.~\ref{prop:D6surface_qq} is
\bea
\mathsf{T}_{abc}^{k_{bc}k_{ac}k_{ab}}(x\mid Kx)&={:\frac{\mathsf{W}_{abc}(x)}{\mathsf{W}_{abc}(Kx)\prod\limits_{j=1}^{k_{ac}}\mathsf{X}_{ac}(xq_{b}^{j-1})\prod\limits_{k=1}^{k_{ab}}\mathsf{X}_{ab}(xq_{b}^{k_{ac}}q_{c}^{k-1})\prod\limits_{i=1}^{k_{bc}}\mathsf{X}_{bc}(xq_{b}^{k_{ac}}q_{c}^{k_{ab}}q_{a}^{i-1})}:}+\cdots\\
&=\mathsf{T}_{abc}(xq_{a}^{k_{bc}}q_{b}^{k_{ac}}q_{c}^{k_{ab}}\mid Kx)
\eea
and
\bea
\left\langle \mathsf{T}_{abc}^{k_{bc}k_{ac}k_{ab}}(x\mid Kx)\right\rangle= \sum_{\pi\in\mathcal{PP}}\mathfrak{q}^{|\pi|}\widetilde{\mathcal{Z}}_{abc}[\pi,Kq_{a}^{-k_{bc}}q_{b}^{-k_{ac}}q_{c}^{-k_{ab}}].
\eea
Namely, the surface boundary condition effectively shifts the $K$ parameter.

\paragraph{Leg boundary conditions}
The supergroup analogue of the D6 $qq$-characters with nontrivial leg boundary conditions as given in Prop.~\ref{prop:D6leg_qq} are
\bea
\mathsf{T}_{\bar{a},\lambda\mu\nu}(x\mid Kx)&={:\frac{\mathsf{W}_{\bar{a}}x}{\mathsf{W}_{\bar{a}}(Kx)}\prod_{\scube\in\mathcal{B}_{\lambda\mu\nu}}\mathsf{A}^{-1}(\chi_{\bar{a},x}(\cube)):}+\cdots\\
&=\sum_{\pi\in\mathcal{PP}_{\lambda\mu\nu}}\mathfrak{q}^{|\pi|}\widetilde{\mathcal{Z}}_{\bar{a};\lambda\mu\nu}^{\D6}[\pi,K]\Lambda_{\bar{a},\pi}^{K,\,\lambda\mu\nu}(x),\\
\eea
where
\bea
 \Lambda^{K,\,\lambda\mu\nu}_{\bar{a},\pi}(x)&={:\frac{\mathsf{W}_{\bar{a}}(x)}{\mathsf{W}_{\bar{a}}(Kx)}\prod_{\scube\in\mathcal{B}_{\lambda\mu\nu}}\mathsf{A}^{-1}(\chi_{\bar{a},x}(\cube))\prod_{\scube \in \pi} \mathsf{A}^{-1}(\chi_{\bar{a},x}(\cube)) :},\\
 \widetilde{\mathcal{Z}}^{\D6}_{\bar{a};\lambda\mu\nu}[\pi,K]&=\prod_{\scube\in\pi}\frac{1-Kx/\chi_{\bar{a},x}(\cube)}{1-Kq_{a}x/\chi_{\bar{a,x}}(\cube)}\widetilde{\mathcal{Z}}^{\D6}_{\bar{a};\lambda\mu\nu}[\pi].
\eea

\section{D8 \texorpdfstring{$qq$}{qq}-characters}\label{sec:D8_qq}
In this section, we generalize what we have done in previous sections and construct D8 $qq$-characters. Compared to the D6 $qq$-character case, we do not have screening charges for the D8 case. Instead, in our previous paper \cite{Kimura:2023bxy}, we made an attempt to construct the D8 $qq$-character by fusing infinite numbers of D6 $qq$-characters. In \cite{Kimura:2023bxy}, the sign issue that is crucial for the magnificent four partition function was not dealt in detail. We will show in section~\ref{sec:fusionD6toD8} that after taking care of the \textit{unmovable terms} carefully, we can reproduce the correct D8 $qq$-character \textit{including} the sign factor. Namely, we will give a quantum algebraic proof of the sign rule given in \eqref{eq:signfactor}. In section~\ref{sec:D8qqsignrule}, we will show that the D8 $qq$-characters commute with each other once the sign factor is fixed. Oppositely, the commutativity uniquely determines the sign factor. We will also discuss the relation with the plethystic exponential formula of the partition functions. In the following sections~\ref{sec:D8qqlegboundary} and \ref{sec:D8qqsurfaceboundary}, we construct D8 $qq$-characters with nontrivial boundary conditions using infinite products of D6 vertex operators.

\subsection{Fusion of D6 \texorpdfstring{$qq$}{qq}-characters and sign rules}\label{sec:fusionD6toD8}

The D8 $qq$-character is a $qq$-character whose highest weight is $\mathsf{Z}(K,x)$ and whose monomial terms are labeled by solid partitions. The operator part of the monomial terms of the D8 $qq$-character is given by
\bea
\Lambda^{K}_{\four,\rho}(x)={:\mathsf{Z}(K,x)\prod_{\shcube\in\rho}\mathsf{A}^{-1}(\chi_{\four,x}(\hcube)):},
\eea
where $\rho$ is a solid partition. The coefficients of the D8 $qq$-character are the $\U(1)$ partition functions of the magnificent four system obtained by a set of D8-$\overline{\D8}$ branes spanning $\mathbb{C}^{4}\times \mathbb{S}^{1}$ (see section~\ref{sec:D8partitionfunction}). Depending on how we take the infinite products, there are four possibilities of the U(1) partition function:
\bea
\mathcal{Z}^{\D8}_{\four;a}[\rho,\,K]&=\prod_{\shcube\in\rho}\frac{(1-Kx/\chi_{\four,x}(\shcube))}{(1-x/\chi_{\four,x}(\shcube))}\prod_{\shcube,\shcube'\in\rho}g_{\bar{a}}\left(\frac{\chi_{\four,x}(\hcube)}{\chi_{\four,x}(\hcube')}\right)^{-1},\quad a\in\four.
\eea
\begin{definition}[D8 $qq$-character]
The D8 $qq$-character is defined as
\bea
\mathsf{T}^{K}_{\four}(x)=\sum_{\rho\in\mathcal{SP}}\mathfrak{q}^{|\rho|}(-1)^{\sigma_{a}(\rho)}\mathcal{Z}^{\D8}_{\four;a}[\rho,K]\Lambda^{K}_{\four,\rho}(x),\quad a\in\four,
\eea
where $\mathcal{SP}$ denotes the set of arbitrary solid partitions extending in the four directions $1,2,3,4$. 
The explicit formula for the sign factor $\sigma_{a}(\rho)$ when $a=4$ is given in \eqref{eq:signfactor} and other formulas are obtained by using the quadrality symmetry.
\end{definition}
We note that actually the total coefficient $(-1)^{\sigma_{a}(\rho)}\mathcal{Z}^{\D8}_{\four;a}[\rho,K]$ does not depend on the choice of $a\in\four$ and thus the above modified D8 $qq$-character is a unique $qq$-character. 
\begin{theorem}[\cite{Monavari:2022rtf}]\label{thm:M4invariance}
For any $a,b\in\four$, we have
\bea
(-1)^{\sigma_{a}(\rho)}\mathcal{Z}^{\D8}_{\four;a}[\rho,K]=
(-1)^{\sigma_{b}(\rho)}\mathcal{Z}^{\D8}_{\four;b}[\rho,K].
\eea
\end{theorem}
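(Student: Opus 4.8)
The plan is to reduce the equality to a purely combinatorial sign identity on the solid partition $\rho$. Fix $\rho\in\mathcal{SP}$, put $\bfK=\bfK|_{\rho}=\sum_{\shcube\in\rho}\chi_{\four,x}(\hcube)$, $\bfN=(1-K)x$, and for $a\in\four$ set $\mathbf{v}_{a}=-\bfN^{\vee}\bfK+\bfP_{\bar{a}}^{\vee}\bfK^{\vee}\bfK$, which for $a=4$ is the square root $\mathbf{v}_{\text{inst.}}$ of Section~\ref{sec:D8partitionfunction}; by construction $\mathcal{Z}^{\D8}_{\four;a}[\rho,K]=\mathbb{I}[\mathbf{v}_{a}]$. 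Since $\mathbb{I}$ is multiplicative, the claim is equivalent to $\mathbb{I}[\mathbf{v}_{a}-\mathbf{v}_{b}]=(-1)^{\sigma_{a}(\rho)-\sigma_{b}(\rho)}$, and by composing transpositions of $q_{1},\dots,q_{4}$ it suffices to treat one pair, say $a=4$, $b=3$.

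First I would record the elementary identity $\bfP_{\bar{a}}+\bfP_{\bar{a}}^{\vee}=\bfP_{\four}$ (expand $\prod_{i\neq a}(1-q_{i})$ and use the Calabi--Yau relation $q_{1}q_{2}q_{3}q_{4}=1$ with $q_{i}q_{j}q_{k}=q_{l}^{-1}$). Hence $\mathbf{v}_{a}+\mathbf{v}_{a}^{\vee}=\mathbf{V}_{\text{inst.}}$ for every $a$, so $\mathbf{v}_{a}-\mathbf{v}_{b}$ is anti-self-dual. Concretely $\bfP_{\bar{4}}^{\vee}-\bfP_{\bar{3}}^{\vee}=\bfP_{12}^{\vee}(q_{4}^{-1}-q_{3}^{-1})$, so $\mathbf{v}_{4}-\mathbf{v}_{3}=\bfP_{12}^{\vee}(q_{4}^{-1}-q_{3}^{-1})\,\bfK^{\vee}\bfK$. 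For an anti-self-dual movable character $\mathbf{x}=\mathbf{x}_{+}-\mathbf{x}_{+}^{\vee}$ one has $\mathbb{I}[\mathbf{x}]=(-1)^{\operatorname{rk}\mathbf{x}_{+}}e^{-\deg\mathbf{x}_{+}}$, with $\operatorname{rk}\mathbf{x}_{+}\bmod 2$ and $\deg\mathbf{x}_{+}$ independent of the splitting; applying the linear functionals $\operatorname{rk}$ and $\deg$ (and $\operatorname{rk}(q_{4}^{-1}-q_{3}^{-1})=\operatorname{rk}\bfP_{12}^{\vee}=0$) shows $\deg(\mathbf{v}_{4}-\mathbf{v}_{3})=0$, so $\mathbb{I}[\mathbf{v}_{4}-\mathbf{v}_{3}]$ is a genuine sign $(-1)^{N_{34}(\rho)}$ and the theorem reduces to $N_{34}(\rho)\equiv\sigma_{4}(\rho)-\sigma_{3}(\rho)\pmod 2$.

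The crucial and delicate point — the one not treated carefully in \cite{Kimura:2023bxy} — is that $\mathbf{v}_{4}$ and $\mathbf{v}_{3}$ individually carry \emph{unmovable} terms (terms $\chi_{\four,x}(\hcube)/\chi_{\four,x}(\hcube')=1$ from $\bfP_{\bar{a}}^{\vee}\bfK^{\vee}\bfK$ with $\hcube,\hcube'$ in the same $q_{a}$-layer) which $\mathbb{I}$ discards, so what enters is $(\mathbf{v}_{4})_{\mathrm{mov}}-(\mathbf{v}_{3})_{\mathrm{mov}}$. One checks that the $e^{0}$-coefficients of $\mathbf{v}_{4}$ and $\mathbf{v}_{3}$ coincide — this is forced by self-duality of $\mathbf{V}_{\text{inst.}}$ once the Calabi--Yau relation is used to reduce $\bfP_{\bar{a}}^{\vee}$ — so the movable difference is again anti-self-dual with $\deg=0$ and $\mathbb{I}$ of it is a sign. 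To evaluate $N_{34}(\rho)$ I would then enumerate the unmovable terms of $\mathbf{v}_{4}-\mathbf{v}_{3}$: a monomial of $\bfP_{12}^{\vee}(q_{4}^{-1}-q_{3}^{-1})\,q_{1}^{i-i'}q_{2}^{j-j'}q_{3}^{k-k'}q_{4}^{l-l'}$ is constant only when $(i-i',j-j')\in\{0,1\}^{2}$ and $(k-k',l-l')\in\{(0,1),(1,0)\}$, and an inclusion--exclusion over the admissible pairs of boxes of a solid partition (using the order-ideal property of $\rho$ and the Calabi--Yau reduction) collapses the signed count onto boxes on the triple diagonal, i.e.\ $(m,m,m,n)\in\rho$. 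Because $\bfP_{12}^{\vee}(q_{4}^{-1}-q_{3}^{-1})$ is antisymmetric in $q_{3}\leftrightarrow q_{4}$, the symmetric part of this count cancels in the difference, leaving exactly $\#\{(m,m,m,n)\in\rho\mid m<n\}-\#\{(m,m,n,m)\in\rho\mid m<n\}=\sigma_{4}(\rho)-\sigma_{3}(\rho)$ mod $2$, as required.

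I expect the combinatorial accounting of the last step to be the main obstacle: one must be scrupulous about which unmovable terms are removed before applying $\mathbb{I}$, exploit the Calabi--Yau relation consistently (it is genuinely needed for the $e^{0}$-coefficients of $\mathbf{v}_{4}$ and $\mathbf{v}_{3}$ to match), and carry out the signed enumeration over an arbitrary solid partition. Two alternatives would sidestep the global enumeration. One is an induction on $|\rho|$: adding an admissible box $\shcube$, compare $\delta(\mathbf{v}_{4})=\mathbf{v}_{4}|_{\rho+\shcube}-\mathbf{v}_{4}|_{\rho}$ with $\delta(\mathbf{v}_{3})$ via the reflection and movability lemmas of the appendix (Prop.~\ref{app-prop:reflection_sign}), showing the two increments agree except when $\shcube$ lies on the triple diagonal below its $q_{4}$-coordinate, where they differ by $-1$ in step with the jump of $\sigma_{4}-\sigma_{3}$. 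The other, parallel to Section~\ref{sec:fusionD6toD8}, is to realize $\mathcal{Z}^{\D8}_{\four;a}[\rho,K]$ as the coefficient produced by fusing D6 $qq$-characters stacked along the $q_{a}$-direction and to deduce $a$-independence from associativity of the fusion, again modulo the same unmovable-term bookkeeping.
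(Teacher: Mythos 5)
The paper does not actually prove this theorem — it imports it from \cite[Thm.~2.8]{Monavari:2022rtf} — so there is no internal proof to measure you against; the closest internal analogues are Prop.~\ref{prop:sign_proof} and Lemma~\ref{lemm:D8-D6reduce-sign}, which run exactly the kind of unmovable-term bookkeeping you set up. Your reduction is correct and well executed: $\bfP_{\bar{4}}^{\vee}-\bfP_{\bar{3}}^{\vee}=\bfP_{12}^{\vee}(q_{4}^{-1}-q_{3}^{-1})$; the splitting $\mathbf{v}_{4}-\mathbf{v}_{3}=\mathbf{x}_{+}-\mathbf{x}_{+}^{\vee}$ with $\mathbf{x}_{+}=\bfP_{12}^{\vee}q_{4}^{-1}\bfK^{\vee}\bfK$ is genuinely anti-self-dual once $q_{1}q_{2}q_{3}q_{4}=1$ is used; the unmovable parts of $\mathbf{v}_{4}$ and $\mathbf{v}_{3}$ do coincide (the cleanest route is $[\mathbf{X}]^{(0)}=[\mathbf{X}^{\vee}]^{(0)}$ together with self-duality of $\bfK^{\vee}\bfK$, which gives $[\bfP_{\bar{a}}^{\vee}\bfK^{\vee}\bfK]^{(0)}=\tfrac{1}{2}[\bfP_{\four}\bfK^{\vee}\bfK]^{(0)}$ for every $a$); and $\operatorname{rk}\mathbf{x}_{+}=\deg\mathbf{x}_{+}=0$, so via Prop.~\ref{app-prop:reflection_sign} and Prop.~\ref{app-prop:reflection-mod} the theorem correctly collapses to the single congruence $\left[\bfP_{12}^{\vee}q_{4}^{-1}\bfK^{\vee}\bfK\right]^{(0)}\equiv\sigma_{4}(\rho)-\sigma_{3}(\rho)\pmod{2}$.

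The gap is that this congruence is the entire content of the theorem, and you do not prove it; moreover the one concrete statement you make about it is wrong. A monomial of $\bfP_{12}^{\vee}(q_{4}^{-1}-q_{3}^{-1})\,q_{1}^{\delta_{1}}q_{2}^{\delta_{2}}q_{3}^{\delta_{3}}q_{4}^{\delta_{4}}$ is unmovable iff $\delta-\epsilon=(m,m,m,m)$ for some $m\in\mathbb{Z}$, since $q_{1}^{m}q_{2}^{m}q_{3}^{m}q_{4}^{m}=1$ — not only for $m=0$. Restricting the differences to $\{0,1\}^{2}\times\{(0,1),(1,0)\}$ discards precisely the $|m|\geq 1$ contributions, and those are the ones that detect diagonal boxes at depth $m\geq 1$: for $\rho\ni(2,2,2,3)$ the contributing pair has $\delta=(1,1,1,2)$, which your stated criterion excludes. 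Closing the gap requires the analogue of the proof of Prop.~\ref{prop:sign_proof}: slice $\rho$ into $q_{4}$-layers $\Pi^{(i)}$, use the order-ideal property $\Pi^{(i)}\succeq\Pi^{(j)}$ for $i<j$ to split $\bm{\Pi}^{(i)}=\Delta_{i}(\eta)+\bm{\Pi}^{(i)}(\eta)$ against each $\eta\in\Pi^{(j)}$, show only the full box $\Delta_{i}(\eta)$ produces unmovable terms, and telescope $(1-q_{1}^{A})(1-q_{2}^{B})(1-q_{3}^{C})q_{123}^{\,i-j}$ to a delta function on the diagonal $A=B=C=j-i$ — then repeat with $q_{3}$ and $q_{4}$ exchanged and subtract. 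Your two fallback strategies (induction on $|\rho|$, fusion associativity) run into the same enumeration. So: a correct and useful reduction, but the decisive combinatorial lemma is missing and its sketch as written would not survive the shifted $m\neq 0$ terms.
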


The main claim of this section is the following theorem.
\begin{theorem}\label{thm:D6toD8}
The D8 $qq$-character is obtained as
\bea
\mathsf{T}^{K}_{\four}(x)\simeq \overleftarrow{\prod_{i=1}^{\infty}}\mathsf{T}_{\bar{a}}(xq_{a}^{i-1}\mid xq_{a}^{i-1}K),
\eea
where the equality is up to one-loop perturbative factors.
\end{theorem}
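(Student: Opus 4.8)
\textbf{Proof strategy for Theorem~\ref{thm:D6toD8}.}
The plan is to mirror the structure of the D4 $\to$ D6 fusion (Prop.~\ref{lem:finiteplanedecomposition} and Thm.~\ref{thm:D4tensortoD6}), but now tracking the sign factor carefully. First I would fix $a=4$ and write the solid partition $\rho$ in the $(1,3)$-type description as a non-increasing sequence of plane partitions $\rho=(\Pi^{(1)},\Pi^{(2)},\dots)$ with $\Pi^{(i)}\succeq\Pi^{(i+1)}$, so that the ordered product $\overleftarrow{\prod}_{i}\Lambda^{Kq_4^{i-1}}_{\overline{4},\Pi^{(i)}}(xq_4^{i-1})$ is the natural candidate for $\Lambda^{K}_{\underline{\mathbf 4},\rho}(x)$. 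The operator part of this identity is straightforward from the vertex operator relations already used in~\cite{Kimura:2023bxy} (each layer contributes $\mathsf A^{-1}$ at the $q_4$-shifted arguments, and the highest weights telescope via $\mathsf W_{\overline 4}(x)=\mathsf Z(q_4,x)$ and \eqref{eq:vertexoprelation3}), so the real content is the coefficient.

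For the coefficient I would compute $\prod_{i}\widetilde{\mathcal Z}^{\D6}_{\overline 4}[\Pi^{(i)},Kq_4^{i-1}]$ times all the pairwise contraction factors $\mathcal Z^{\D6\tbar\D6}$ coming from Lemma-type OPEs between the D6 monomials at different layers, package everything as the index $\mathbb I[\mathbf v_{\D6\to\D8}]$ of a single character $\mathbf v_{\D6\to\D8}$ built from the layer characters $\bm\lambda^{(i)}=\sum_{\scube\in\Pi^{(i)}}\chi_{\overline 4,x_i}(\cube)$ with $x_i=q_4^{i-1}x$ and the $K$-dependent numerator pieces. The key algebraic step, exactly as in the proof of Prop.~\ref{lem:finiteplanedecomposition}, is to apply the reflection identities \eqref{eq:reflectionprop0}, Prop.~\ref{app-prop:reflection_sign}, Prop.~\ref{app-prop:reflection-mod} to the ``cross'' terms $-\mathbf P^\vee_{\dots}x_j q_{\dots}^{-1}\bm\lambda^{(i)\vee}$ and $\mathbf P_{\underline{\mathbf 4}}\bm\lambda^{(j)}\bm\lambda^{(i)\vee}$ for $j>i$, and to check which terms are \emph{unmovable}. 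The new feature relative to the D6 case is that now $\mathbf P_{\underline{\mathbf 4}}$ rather than $\mathbf P_{123}$ appears, so the power of $q_4$ carried by $\mathbf P_{\underline{\mathbf 4}}\bm\lambda^{(i)\vee}\bm\lambda^{(j)}$ can be zero (when $\cube\in\Pi^{(i)}$, $\cubeF\in\Pi^{(j)}$ have coordinates differing only in the fourth direction by $j-i$ against a $q_4^{-(j-i)}$ from $\mathbf P_4^\vee\subset\mathbf P_{\underline{\mathbf 4}}$). Precisely these residual unmovable contributions assemble, box by box, into $(-1)^{\sigma_4(\rho)}$ with $\sigma_4(\rho)=\#\{(i,i,i,j)\in\rho\mid i<j\}$: the coordinate configuration $(i,i,i,j)$ is exactly the one for which a diagonal box in layer $i$ of the $123$-plane has a partner at the same $123$-position in a higher layer $j$, producing one unmovable $q_4^0$ term and hence one sign.

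The expected main obstacle is the sign bookkeeping: showing that the set of unmovable terms in $\mathbf v_{\D6\to\D8}$ is in bijection with $\{(i,i,i,j)\in\rho\mid i<j\}$, that each contributes exactly a factor $-1$ (not a nontrivial rational factor), and that no other terms — in particular none involving the $K$-numerators — become unmovable. I would handle this by reducing to the finite-size case (truncate each $\Pi^{(i)}$ and the depth of the sequence, as in the remark following Prop.~\ref{lem:finiteplanedecomposition}) where $\mathbf v_{\D6\to\D8}$ is manifestly a finite character, doing the unmovable-term count explicitly there, and then taking the limit; the independence of the answer on the choice of special direction $a$ is guaranteed a posteriori by Thm.~\ref{thm:M4invariance}, which also provides a useful consistency check. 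Finally, promoting the statement from individual monomials $\Lambda$ to the full $qq$-characters $\mathsf T$ follows because each layer is itself a consistent D6 $qq$-character $\mathsf T_{\overline a}(xq_a^{i-1}\mid xq_a^{i-1}K)$ and the monomial-level identity is term-by-term, so summing over $\rho\in\mathcal{SP}$ (equivalently over all compatible sequences of plane partitions) with the weights $\mathfrak q^{|\rho|}$ reconstructs $\mathsf T^K_{\underline{\mathbf 4}}(x)$ up to the stated one-loop perturbative prefactors.
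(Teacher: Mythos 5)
Your route is the paper's own: fix $a=4$, decompose $\rho$ via the $(1,3)$-type description into a non-increasing sequence of plane partitions, telescope the operator parts, and extract the sign from the unmovable terms obstructing the reflection identity applied to $\sum_{i<j}\bfP_{\four}\bm{\Pi}^{(i)\vee}\bm{\Pi}^{(j)}$. But the step you wave at — "precisely these residual unmovable contributions assemble, box by box, into $(-1)^{\sigma_4(\rho)}$" — is the heart of the proof, and the mechanism you sketch for it is wrong. Expanding $\bfP_{123}\,\chi_{\bar4,x_i}(\cube)^{\vee}\chi_{\bar4,x_j}(\cubeF)$ and using $q_4=q_{123}^{-1}$, an unmovable term occurs whenever \emph{each} of the three coordinate differences lies in $\{j-i-1,\,j-i\}$, with sign $(-1)^{|S|}$ where $S$ is the set of coordinates realizing $j-i-1$; pairs "differing only in the fourth direction" are merely the case $S=\{1,2,3\}$, $j=i+1$. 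These contributions do not count $\sigma_4(\rho)$ on their own: for $\Pi^{(1)}=\Pi^{(2)}=\{(1,1,1),(1,1,2)\}$ there are two same-position adjacent pairs but $\sigma_4(\rho)=1$ (the full signed count is $-1+1-1=-1$, so massive cancellation among mixed-sign terms is essential). The paper closes this by fixing $\eta=(A,B,C)\in\Pi^{(j)}$ and splitting $\bm{\Pi}^{(i)}=\Delta_i(\eta)+\bm{\Pi}^{(i)}(\eta)$ with $\Delta_i(\eta)$ the full $A\times B\times C$ rectangle guaranteed by $\Pi^{(i)}\succeq\Pi^{(j)}$: the tail is movable, and $\bfP_{123}\Delta_i^{\vee}(\eta)\eta=(1-q_1^{A})(1-q_2^{B})(1-q_3^{C})q_{123}^{\,i-j}$ has a single unmovable term exactly when $A=B=C=j-i$. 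Without this resummation (or an equivalent one) your count does not close, and "reduce to a finite truncation and count" is not a proof strategy for a combinatorial identity you have not actually established.

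Two further points. Promoting the monomial-level identity to the full $qq$-characters needs the vanishing lemma $\mathcal{Z}^{\D6\tbar\D6}_{\bar 4;K\mid\bar 4;K}(x_1,\Pi^{(1)}\mid q_4x_1,\Pi^{(2)})=0$ for $\Pi^{(2)}\npreceq\Pi^{(1)}$ (the D6 analogue of the D4 tuning lemma): the product $\overleftarrow{\prod}_i\mathsf{T}_{\bar 4}(x_i\mid Kx_i)$ a priori sums over arbitrary tuples of plane partitions, and it is this zero that truncates the sum to non-increasing sequences, i.e.\ to solid partitions — your "(equivalently over all compatible sequences)" silently assumes it. Minor: the layer monomials should be $\Lambda^{K}_{\bar 4,\Pi^{(i)}}(xq_4^{i-1})$ with the \emph{unshifted} ratio $K$, not $\Lambda^{Kq_4^{i-1}}$, or the highest weights do not telescope to $\mathsf{Z}(K,x)$; and no truncation-and-limit is needed since only finitely many $\Pi^{(i)}$ are nonempty. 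Your flag that the $K$-numerator cross terms $-\bfP_4(1-K)x_j\bm{\Pi}^{(i)\vee}$ must be checked to be movable is correct and is indeed part of the argument.
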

Let us focus on $a=4$ and give a proof of this theorem step by step. 
\begin{lemma}
    The contraction of the operators $\Lambda^{K}_{\bar{a},\Pi}(x)$ are
    \bea
    \Lambda^{K_{2}}_{\bar{b},\Pi^{(2)}}(x_{2})\Lambda^{K_{1}}_{\bar{a},\Pi^{(1)}}(x_{1})&=\mathcal{Z}^{\D6\tbar\D6}_{\text{1-loop}}(x_{1},\bar{a},K_{1}\mid x_{2},\bar{b},K_{2})\mathcal{Z}^{\D6\tbar\D6}_{\bar{a};K_{1}\mid \bar{b};K_{2}}(x_{1},\Pi^{(1)}\mid x_{2},\Pi^{(2)})\\ &\quad \times :\Lambda^{K_{2}}_{\bar{b},\Pi^{(2)}}(x_{2})\Lambda^{K_{1}}_{\bar{a},\Pi^{(1)}}(x_{1}):
    \eea
    where
    \bea
    \mathcal{Z}^{\D6\tbar\D6}_{\text{1-loop}}(x_{1},\bar{a},K_{1}\mid x_{2},\bar{b},K_{2})&=\exp\left(-\sum_{n>0}\frac{1}{n}\frac{\bfP_{a}^{[n]}\bfP_{b}^{[-n]}}{\bfP_{\four}^{[n]}}(1-K_{1}^{n})(1-K_{2}^{-n})\left(\frac{x_{1}}{x_{2}}\right)^{n}\right),\\
    \mathcal{Z}^{\D6\tbar\D6}_{\bar{a};K_{1}\mid \bar{b};K_{2}}(x_{1},\Pi^{(1)}\mid x_{2},\Pi^{(2)})&=\mathbb{I}\left[-\bfP_{a}^{\vee}(1-K_{1}^{-1})x_{1}^{-1}\bm{\Pi}^{(2)}-\bfP_{4}(1-K_{2})x_{2}\bm{\Pi}^{(i)\vee}+\bfP_{\four}\bm{\Pi}^{(i)\vee}\bm{\Pi}^{(j)}\right].
    \eea
    We introduced $\bm{\Pi}^{(1,2)}=\sum_{\scube\in\Pi^{(i)}}\chi_{\bar{a},\bar{b}}(\cube)$ for plane partitions $\Pi^{(1,2)}$ and for the moment $x_{1,2}$ are generic here.
\end{lemma}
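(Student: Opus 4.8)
The plan is to prove this by a direct free-field computation, in complete parallel with the D4 analogue, Lemma~\ref{lem:D4contraction}. First I would rewrite each operator $\Lambda^{K}_{\bar{a},\Pi}(x)$, using Definition~\ref{def:vertex-op}, as a zero-mode prefactor times a single normal-ordered exponential in the fundamental oscillators $\mathsf{a}_{n}$. Since $\mathsf{w}_{\bar{a},n}=\mathsf{a}_{n}/\bfP_{\bar{a}}^{[-n]}$, the ``momentum'' attached to $\mathsf{a}_{n}$ in $\Lambda^{K}_{\bar{a},\Pi}(x)$ is
\beq
\mathsf{p}_{\bar{a},\Pi,K}(x;n)=\frac{(1-K^{-n})\,x^{-n}}{\bfP_{\bar{a}}^{[-n]}}-\sum_{\scube\in\Pi}\chi_{\bar{a},x}(\cube)^{-n}.
\eeq
The contraction of two such operators is then the standard exponential of the oscillator pairing, governed by $[\mathsf{a}_{n},\mathsf{a}_{m}]=-\frac{1}{n}\bfP_{\four}^{[n]}\delta_{n+m,0}$, producing an overall scalar factor assembled from $\mathsf{p}_{\bar{b},\Pi^{(2)},K_{2}}(x_{2};\cdot)$ and $\mathsf{p}_{\bar{a},\Pi^{(1)},K_{1}}(x_{1};\cdot)$, times the zero-mode prefactor read off from Appendix~\ref{app:zero-modes}.

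Next I would split this scalar factor by bilinearity of $\mathsf{p}$, according to whether each factor originates from the continuous (first) term or from the box (second) term. The continuous$\times$continuous pairing, after incorporating the matching zero-mode ratio and using $\bfP_{\four}^{[n]}=\bfP_{a}^{[n]}\bfP_{\bar{a}}^{[n]}$, should collapse precisely to $\mathcal{Z}^{\D6\tbar\D6}_{\text{1-loop}}(x_{1},\bar{a},K_{1}\mid x_{2},\bar{b},K_{2})$, a purely perturbative factor depending only on $x_{1}/x_{2}$, $K_{1}$, $K_{2}$ and the $\bfP$'s. The remaining three pairings — box$\times$box and the two continuous$\times$box mixed terms — together with the leftover zero-mode factors, must then be repackaged into the index form $\mathcal{Z}^{\D6\tbar\D6}_{\bar{a};K_{1}\mid\bar{b};K_{2}}$. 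For this I would reuse the already-established operator product formulas \eqref{eq:contractions} (the $\mathsf{A}$-$\mathsf{W}_{\bar{a}}$, $\mathsf{Z}(K,\cdot)$-$\mathsf{A}$ and $\mathsf{W}_{\bar{a}}$-$\mathsf{A}$ contractions): each individual box $\scube\in\Pi^{(1)}$ or $\scubeF\in\Pi^{(2)}$ contributes a rational function of type $g_{\bar{a}}(\cdot)^{\pm1}$, $\mathcal{A}_{\mathbb{C}^{4}}(\cdot)^{-1}$ or $(1-\cdot)^{\pm1}$, and one checks that the product over all boxes equals $\mathbb{I}$ of $-\bfP_{a}^{\vee}(1-K_{1}^{-1})x_{1}^{-1}\bm{\Pi}^{(2)}-\bfP_{4}(1-K_{2})x_{2}\bm{\Pi}^{(1)\vee}+\bfP_{\four}\bm{\Pi}^{(1)\vee}\bm{\Pi}^{(2)}$, exactly as in the D4 case.

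The step I expect to be the main obstacle is precisely this last repackaging. One has to track the zero-mode contributions of Appendix~\ref{app:zero-modes} carefully in order to generate the $(1-K_{1}^{-1})$ and $(1-K_{2})$ prefactors, to respect the radial-ordering convention so that the ``$+$'' and ``$-$'' parts of each rational function land on the correct side of the product, and to reproduce the manifest asymmetry between the two mixed terms — $\bfP_{a}^{\vee}(1-K_{1}^{-1})$ against $\bfP_{4}(1-K_{2})$ without the overline — which reflects the polarization / square-root choice fixed in Section~\ref{sec:D8partitionfunction}. As is done elsewhere in the paper, I would verify the final identity on small plane partitions (e.g.\ $\Pi^{(1)}$ or $\Pi^{(2)}$ a single box) as a consistency check before asserting it in general. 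The case $a=b$, so $\bar{a}=\bar{b}=\bar{4}$, which is all that Theorem~\ref{thm:D6toD8} requires, is the cleanest; for $a\neq b$ one must additionally fix the convention for the common equivariant variable entering $\bm{\Pi}^{(1)}$ and $\bm{\Pi}^{(2)}$, but the computation is otherwise identical.
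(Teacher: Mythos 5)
Your proposal is correct and coincides with the derivation the paper intends: the lemma is stated without an explicit proof as the D6 analogue of Lemma~\ref{lem:D4contraction}, and the implicit argument is exactly the direct free-field computation you describe — rewriting each $\Lambda^{K}_{\bar{a},\Pi}(x)$ in the fundamental modes $\mathsf{a}_{n}$, pairing via $[\mathsf{a}_{n},\mathsf{a}_{m}]=-\tfrac{1}{n}\mathbf{P}_{\four}^{[n]}\delta_{n+m,0}$, and separating the continuous--continuous piece (which collapses to $\mathcal{Z}^{\D6\tbar\D6}_{\text{1-loop}}$ after using $\mathbf{P}_{\four}^{[n]}=\mathbf{P}_{a}^{[n]}\mathbf{P}_{\bar a}^{[n]}$) from the box-dependent pieces (which assemble into the index factor). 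Your remarks on the zero modes, on the radial-ordering origin of the asymmetry between the two mixed terms, and on the fact that only the $\bar a=\bar b=\bar 4$ case is needed for Theorem~\ref{thm:D6toD8} are all consistent with the paper's conventions.
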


\begin{lemma}[\cite{Kimura:2023bxy}]
Given two plane partitions $\Pi^{(1)},\Pi^{(2)}$ and parameters $x_{2}=q_{4}x_{1}$, we have
\bea
\mathcal{Z}^{\D6\tbar\D6}_{\bar{4};K\mid \bar{4};K}(x_{1},\Pi^{(1)}\mid x_{2},\Pi^{(2)})=0
\eea
for $\Pi^{(2)}\npreceq \Pi^{(1)}$.
\end{lemma}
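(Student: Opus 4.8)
The statement is the $\D6\tbar\D6$ analogue of the $\D4\tbar\D4$ tuning Lemma~\ref{lem:D4tuning}, and I would prove it by the same ``resonance'' mechanism: at the special ratio $x_2/x_1 = q_4$ the character sitting inside the index $\mathbb{I}[\,\cdot\,]$ that defines $\mathcal{Z}^{\D6\tbar\D6}_{\bar{4};K\mid\bar{4};K}$ ceases to be \emph{movable}, and the surviving uncancelled unmovable monomial forces $\mathbb{I}[\,\cdot\,]$ to vanish. Concretely, first I would substitute $x_2 = q_4 x_1$ into
\[
\mathbf{v} \;=\; -\bfP_{4}^{\vee}(1-K^{-1})\,x_{1}^{-1}\bm{\Pi}^{(2)} \;-\;\bfP_{4}(1-K)\,x_{2}\bm{\Pi}^{(1)\vee} \;+\;\bfP_{\four}\,\bm{\Pi}^{(1)\vee}\bm{\Pi}^{(2)},
\]
and, writing $\bm{\Pi}^{(i)} = x_i\, p_i$ with $p_i=\sum_{\cube\in\Pi^{(i)}}q_1^{a-1}q_2^{b-1}q_3^{c-1}$ the box-content polynomial of $\Pi^{(i)}$, use the elementary identities $\bfP_{4}^{\vee}q_4 = -\bfP_4$ and $\bfP_{\four}=\bfP_{123}\bfP_4$ to bring it to the factorized form
\[
\mathbf{v}\big|_{x_2=q_4 x_1} \;=\; \bfP_{4}\Big[(1-K^{-1})\,p_2 \;-\;(1-K)\,q_4\,p_1^{\vee} \;+\;\bfP_{123}\,q_4\,p_1^{\vee}p_2\Big].
\]

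The heart of the proof is then a combinatorial bookkeeping of the unmovable monomials of this character once the Calabi--Yau relation $q_1q_2q_3q_4=1$ is imposed. The plan is to show, by induction on the boxes of $\Pi^{(2)}$ (adding removable boxes with $\Pi^{(1)}$ fixed) and on the boxes of $\Pi^{(1)}$, that the net multiplicity of the trivial monomial in $\mathbf{v}|_{x_2=q_4 x_1}$ equals $|\Pi^{(2)}\setminus\Pi^{(1)}|$: adding a removable box $\cube=(a,b,c)$ to $\Pi^{(2)}$ varies the character by $\delta\mathbf{v}=\bfP_4\,q_1^{a-1}q_2^{b-1}q_3^{c-1}\big[(1-K^{-1})+\bfP_{123}q_4\,p_1^{\vee}\big]$, and expanding (again using $q_1q_2q_3q_4=1$) the coefficient of the trivial monomial in $\delta\mathbf{v}$ works out to $[\cube\notin\Pi^{(1)}]$, the analogous variation in $\Pi^{(1)}$ contributing $-[\cube\in\Pi^{(2)}]$, the base case $\Pi^{(1)}=\Pi^{(2)}=\emptyset$ giving $\mathbf{v}=0$. (I have checked this in the first few cases, e.g.\ $\Pi^{(1)}=\emptyset$ with $|\Pi^{(2)}|=1$, and $\Pi^{(1)}=\Pi^{(2)}=\{(1,1,1)\}$, obtaining multiplicities $1$ and $0$.) Granting this count, the hypothesis $\Pi^{(2)}\npreceq\Pi^{(1)}$ is exactly $\Pi^{(2)}\setminus\Pi^{(1)}\neq\emptyset$, so $\mathbb{I}[\mathbf{v}|_{x_2=q_4 x_1}]$ carries a factor $(1-q_1q_2q_3q_4)^{|\Pi^{(2)}\setminus\Pi^{(1)}|}$ which vanishes identically on the Calabi--Yau locus, whence $\mathcal{Z}^{\D6\tbar\D6}_{\bar{4};K\mid\bar{4};K}(x_1,\Pi^{(1)}\mid q_4 x_1,\Pi^{(2)})=0$.

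As an independent cross-check one may reduce to the $\D4$ case: by Prop.~\ref{lem:finiteplanedecomposition} the operators $\Lambda^{K}_{\bar{4},\Pi}(x)$ decompose, up to one-loop factors, into layers $\Lambda_{12,\lambda^{(k)}}(xq_3^{k-1})$ along the $q_3$-axis, so $\mathcal{Z}^{\D6\tbar\D6}_{\bar{4};K\mid\bar{4};K}$ factorizes into a product of $\mathcal{Z}^{\D4\tbar\D4}_{12\mid 12}$ over pairs of layers; at $x_2=q_4 x_1$ only the same-layer factors are resonant, and each of them vanishes by Lemma~\ref{lem:D4tuning} (with $a=4\in\overline{12}$) whenever the corresponding layer of $\Pi^{(2)}$ fails to be contained in that of $\Pi^{(1)}$. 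The main obstacle, however, lies in the direct route: controlling the unmovable terms is the exact analogue of the movability analysis in the proof of Prop.~\ref{lem:finiteplanedecomposition}, where one must apply the reflection identities \eqref{eq:reflectionprop0}, Prop.~\ref{app-prop:reflection_sign} and \ref{app-prop:reflection-mod} to the three summands of $\mathbf{v}$ in precisely the right pattern and track the cancellations between the two framing--instanton terms and the instanton--instanton term, so that exactly the boxes of $\Pi^{(2)}\setminus\Pi^{(1)}$ survive; making this box-by-box induction airtight (in particular checking that the count is independent of the shape of $\Pi^{(1)}$ for $\cube\notin\Pi^{(1)}$) is where the real work is, while the reduction route additionally requires verifying that the one-loop factors it produces stay regular at $x_2=q_4 x_1$.
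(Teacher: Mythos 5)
You should first note that the paper does not actually prove this lemma in situ: it is imported from \cite{Kimura:2023bxy}, so the only thing to compare against is the methodology the present paper uses for the analogous statements (the unmovable-term bookkeeping of Prop.~\ref{lem:finiteplanedecomposition} and Prop.~\ref{prop:sign_proof}). Your overall mechanism --- factor out $\bfP_{4}$ at $x_{2}=q_{4}x_{1}$ and show that the resulting character has a strictly positive unmovable part, so that $\mathbb{I}[\cdot]$ acquires an uncancelled factor $(1-1)$ --- is indeed the right kind of argument and matches the paper's style. Your factorized form $\mathbf{v}|_{x_2=q_4x_1}=\bfP_4\bigl[(1-K^{-1})p_2-(1-K)q_4p_1^{\vee}+\bfP_{123}q_4p_1^{\vee}p_2\bigr]$ is also correct.

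The gap is in the combinatorial count, and it is fatal as stated. Take $\Pi^{(1)}=\emptyset$, $\Pi^{(2)}=\{(1,1,1),(2,1,1)\}$. Then $p_2=1+q_1$ and the $K$-independent part of $\mathbf{v}|_{x_2=q_4x_1}$ is $(1-q_4)(1+q_1)=1+q_1-q_1^{-1}q_2^{-1}q_3^{-1}-q_2^{-1}q_3^{-1}$, whose unmovable part is $1$, not $|\Pi^{(2)}\setminus\Pi^{(1)}|=2$; correspondingly, your inductive step already fails here, since adding the box $(2,1,1)$ gives $\delta\mathbf{v}=\bfP_4q_1(1-K^{-1})$ with $[(1-q_4)q_1]^{(0)}=[q_1-q_2^{-1}q_3^{-1}]^{(0)}=0$ rather than the claimed $1$. (Your spot checks only reach $|\Pi^{(2)}|\le 1$, where the formula holds by accident.) Worse, the positivity your proof relies on is not generally true: for $\Pi^{(1)}=\emptyset$ the exact count is $[\bfP_4p_2]^{(0)}=\#\{(1,1,1)\in\Pi^{(2)}\}-\#\{(2,2,2)\in\Pi^{(2)}\}$, which is $0$ as soon as $\Pi^{(2)}$ contains the full $2\times2\times2$ cube. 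Geometrically, the box $(2,2,2)$ of $\Pi^{(2)}$ sits at $\chi_{\bar4,x_2}(2,2,2)=x_2q_1q_2q_3=x_1$, producing a genuine pole of $\mathscr{V}_4(x_1/\chi)$ that cancels the zero coming from the box $(1,1,1)$. So the vanishing cannot be obtained box-by-box from a net-positive unmovable count; one has to control these zero/pole cancellations (or argue at the level of the full product over all pairs of layers in the fusion), which your argument does not do. The same pole undermines your cross-check: in the $(1,3)$ decomposition it is a \emph{cross-layer} resonance (the box $(2,2)$ of the second $q_3$-layer of $\Pi^{(2)}$ against the highest weight at $x_1$), so the assertion that only same-layer $\D4\tbar\D4$ factors are singular at $x_2=q_4x_1$ is false, and in any case Lemma~\ref{lem:D4tuning} as quoted covers only $\lambda^{(2)}\succ\lambda^{(1)}$, which is strictly weaker than the containment failure you need layer by layer.
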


Combining these lemmas, the infinite products of D6 $qq$-characters are expanded as
\bea
\overleftarrow{\prod_{i=1}^{\infty}}\mathsf{T}_{\bar{4}}(x_{i}\mid Kx_{i})&\simeq \sum_{\cdots \Pi^{(i+1)}\preceq \Pi^{(i)}\cdots}\mathfrak{q}^{\sum_{i}|\Pi^{(i)}|}\prod_{i=1}^{\infty}\widetilde{\mathcal{Z}}^{\D6}_{\bar{4}}[K,\Pi^{(i)}]\prod_{i<j}\mathcal{Z}^{\D6\tbar\D6}_{\bar{4};K\mid \bar{4};K}(x_{i},\Pi^{(i)}\mid x_{j},\Pi^{(j)}) :\prod_{i=1}^{\infty}\Lambda^{K}_{\bar{4},\Pi^{(i)}}(x_{i}):
\eea
where $x_{i}=xq_{4}^{i-1}$ and the perturbative factors are all omitted. Given a \textit{finite} solid partition $\rho$, we can decompose it into non-increasing finite plane partitions: $\rho=(\Pi^{(1)},\Pi^{(2)},\Pi^{(3)},\cdots)$ with $\Pi^{(i)}\succeq \Pi^{(i+1)}$. Since only finite numbers of $\Pi^{(i)}$ will be nonempty, the topological term is $\mathfrak{q}^{|\rho|}=\mathfrak{q}^{\sum_{i}|\Pi^{(i)}|}$. Moreover, by direct computation, one can easily show that the operator part obeys
\bea
{:\prod_{i=1}^{\infty}\Lambda^{K}_{\bar{4},\Pi^{(i)}}(x_{i}):}=\Lambda^{K}_{\four,\rho}(x).
\eea
Thus, the nontrivial part is how to obtain the coefficient part.

\begin{proposition}\label{prop:D6toD8vacuumproof}
    The coefficient part obeys:
    \bea
\prod_{i=1}^{\infty}\widetilde{\mathcal{Z}}^{\D6}_{\bar{4}}[K,\Pi^{(i)}]\prod_{i<j}\mathcal{Z}^{\D6\tbar\D6}_{\bar{4};K\mid \bar{4};K}(x_{i},\Pi^{(i)}\mid x_{j},\Pi^{(j)})=(-1)^{\sigma_{4}(\rho)}\mathcal{Z}^{\D8}_{\four;4}[\rho,K].
    \eea
\end{proposition}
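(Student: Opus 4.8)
The plan is to adapt the $\D4\to\D6$ computation of Prop.~\ref{lem:finiteplanedecomposition}, the crucial new feature being that the character governing the $\D6\to\D8$ fusion is \emph{not} movable, and that its non-movable part produces exactly the sign $(-1)^{\sigma_4(\rho)}$. First I would assemble the left-hand side: by the $\D6\tbar\D6$ contraction formula above together with the preceding tuning lemma, $\prod_{i\ge1}\widetilde{\mathcal{Z}}^{\D6}_{\bar{4}}[K,\Pi^{(i)}]\prod_{i<j}\mathcal{Z}^{\D6\tbar\D6}_{\bar{4};K\mid\bar{4};K}(x_i,\Pi^{(i)}\mid x_j,\Pi^{(j)})$ equals $\mathbb{I}[\mathbf{v}_{\D6\to\D8}]$ with
\[
\mathbf{v}_{\D6\to\D8}=\sum_{i\ge1}\!\Bigl(-\bfP_{4}^{\vee}\bfN_{i}^{\vee}\bm{\Pi}^{(i)}+\bfP_{123}^{\vee}\bm{\Pi}^{(i)\vee}\bm{\Pi}^{(i)}\Bigr)+\sum_{i<j}\!\Bigl(-\bfP_{4}^{\vee}\bfN_{i}^{\vee}\bm{\Pi}^{(j)}-\bfP_{4}\bfN_{j}\bm{\Pi}^{(i)\vee}+\bfP_{\four}\bm{\Pi}^{(i)\vee}\bm{\Pi}^{(j)}\Bigr),
\]
where $x_i=xq_4^{i-1}$, $\bfN_i=(1-K)x_i$, $\bm{\Pi}^{(i)}=\sum_{\scube\in\Pi^{(i)}}\chi_{\bar{4},x_i}(\cube)$, and $\bfP_{123}=\bfP_{\bar{4}}$. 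The operator part already matches, ${:\prod_i\Lambda^{K}_{\bar{4},\Pi^{(i)}}(x_i):}=\Lambda^{K}_{\four,\rho}(x)$, so it remains to identify the coefficient with $(-1)^{\sigma_4(\rho)}\mathbb{I}[\mathbf{v}_{\text{inst.}}|_{\rho}]$, where $\mathbf{v}_{\text{inst.}}|_{\rho}=-\bfN^{\vee}\bfK+\bfP_{123}^{\vee}\bfK^{\vee}\bfK$, $\bfN=(1-K)x$, $\bfK=\sum_i\bm{\Pi}^{(i)}$.

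Second, I would reorganize $\mathbf{v}_{\D6\to\D8}$ using the Calabi--Yau relation $q_1q_2q_3q_4=1$, which gives $\bfP_{\four}=\bfP_{123}+\bfP_{123}^{\vee}$ (as $\bfP_{123}^{\vee}=-q_4\bfP_{123}$). Hence for $i<j$,
\[
\bfP_{\four}\bm{\Pi}^{(i)\vee}\bm{\Pi}^{(j)}=\mathbf{X}_{ij}+\bfP_{123}^{\vee}\bm{\Pi}^{(i)\vee}\bm{\Pi}^{(j)},\qquad \mathbf{X}_{ij}:=\bfP_{123}\bm{\Pi}^{(i)\vee}\bm{\Pi}^{(j)},\qquad \mathbf{X}_{ij}^{\vee}=\bfP_{123}^{\vee}\bm{\Pi}^{(j)\vee}\bm{\Pi}^{(i)},
\]
so that, after reflecting the linear pieces $-\bfP_4\bfN_j\bm{\Pi}^{(i)\vee}\mapsto-\bfP_4^{\vee}\bfN_j^{\vee}\bm{\Pi}^{(i)}$ and regrouping exactly as in the proof of Prop.~\ref{lem:finiteplanedecomposition}, one reaches $\mathbf{v}_{\D6\to\D8}=\mathbf{v}_{\text{inst.}}|_{\rho}+\sum_{i<j}(\mathbf{X}_{ij}-\mathbf{X}_{ij}^{\vee})$ modulo movable reflections and one-loop perturbative pieces (the latter controlled by \eqref{eq:reflectionprop0} and Prop.~\ref{app-prop:reflection-mod}; the separations $q_4^{j-i}$ with $j>i$ force strictly positive powers of $(q_1q_2q_3)^{-1}$ throughout the linear sector, keeping those reflections sign-neutral). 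By Prop.~\ref{app-prop:reflection_sign} one then has $\mathbb{I}[\mathbf{X}_{ij}]=(-1)^{n_{ij}}\mathbb{I}[\mathbf{X}_{ij}^{\vee}]$, with $n_{ij}$ the net multiplicity of the weight-zero (unmovable) monomials of $\mathbf{X}_{ij}$, whence $\mathbb{I}[\mathbf{v}_{\D6\to\D8}]=(-1)^{\sum_{i<j}n_{ij}}\mathbb{I}[\mathbf{v}_{\text{inst.}}|_{\rho}]$ up to one-loop factors.

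Finally, the heart of the argument, and the step I expect to be the main obstacle, is the evaluation of $\sum_{i<j}n_{ij}$. Writing a box of $\Pi^{(i)}$ as $(a_1,b_1,c_1)$, a box of $\Pi^{(j)}$ as $(a_2,b_2,c_2)$, $d=j-i\ge1$, and substituting $q_4=(q_1q_2q_3)^{-1}$, a monomial of $\mathbf{X}_{ij}=\bfP_{123}\bm{\Pi}^{(i)\vee}\bm{\Pi}^{(j)}$ is weight-zero precisely when each coordinate difference lies in $\{d-1,d\}$, in which case the subset $S\subseteq\{1,2,3\}$ picked out of $\bfP_{123}$ is forced (with sign $(-1)^{|S|}$). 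Since $\Pi^{(i)}\supseteq\Pi^{(i+1)}\supseteq\cdots\supseteq\Pi^{(j)}$ and every layer is a plane partition, for a fixed box of $\Pi^{(j)}$ the admissible boxes of $\Pi^{(i)}$ factorize over the three coordinates, and the resulting one-dimensional alternating sums all vanish except when the given coordinate equals $d$ exactly, where each contributes $-1$; this yields $n_{ij}=-\mathbf{1}\{(d,d,d)\in\Pi^{(j)}\}$, hence $\sum_{i<j}n_{ij}=-\#\{(m,m,m,n)\in\rho:1\le m<n\}=-\sigma_4(\rho)$ by \eqref{eq:signfactor}, so the total sign is $(-1)^{\sigma_4(\rho)}$ as claimed. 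The delicate points I anticipate are (i) making the reflection rules \ref{app-prop:reflection_sign}--\ref{app-prop:reflection-mod} precise enough in this infinite-product setting that all discarded factors are genuinely layer-independent, and (ii) verifying that the linear sector, where the $\bfN_i$ carry explicit $x_i$-dependence, contributes no unmovable terms beyond those already cancelling against $\mathbf{v}_{\text{inst.}}$; as a consistency check, the resulting coefficient is independent of the chosen special direction, in agreement with Thm.~\ref{thm:M4invariance}.
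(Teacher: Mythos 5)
Your proposal is correct and follows essentially the same route as the paper's proof: assemble the fusion character $\mathbf{v}_{\D6\rightarrow\D8}$, check that the linear cross terms are movable so their reflection is sign-neutral, split $\bfP_{\four}=\bfP_{123}+\bfP_{123}^{\vee}$, and identify the total sign with the net unmovable multiplicity of $\sum_{i<j}\bfP_{123}\bm{\Pi}^{(i)\vee}\bm{\Pi}^{(j)}$, which is then computed to equal $-\sigma_{4}(\rho)$. The only (cosmetic) difference is in that last count: the paper decomposes $\bm{\Pi}^{(i)}$ into the full rectangular box below a fixed $\eta\in\Pi^{(j)}$ plus a movable remainder and reads the answer off the closed form $(1-q_{1}^{A})(1-q_{2}^{B})(1-q_{3}^{C})(q_{123})^{i-j}$, whereas you localize to the $2\times2\times2$ cube of candidate boxes and use a factorized inclusion–exclusion; both rest on the same downward-closure fact $\Pi^{(j)}\subseteq\Pi^{(i)}$ and yield $n_{ij}=-\delta_{(d,d,d)\in\Pi^{(j)}}$.
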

The left hand side comes from the following character
\bea\label{eq:D8vacuum-proof1}
\mathbf{v}_{\D6\rightarrow \D8}&=\sum_{k=1}^{\infty}\left(-\bfP_{4}^{\vee}(1-K^{-1})x_{k}^{-1}\bm{\Pi}^{(k)} +\bfP_{123}^{\vee} \bm{\Pi}^{(k)\vee}\bm{\Pi}^{(k)}\right)\\
&+\sum_{i<j}\left(-\bfP_{4}^{\vee}(1-K^{-1})x_{i}^{-1}\bm{\Pi}^{(j)}-\bfP_{4}(1-K)x_{j}\bm{\Pi}^{(i)\vee}+\bfP_{\four}\bm{\Pi}^{(i)\vee}\bm{\Pi}^{(j)} \right).
\eea
First of all, one can show that this character is movable (see \cite{Nekrasov:2018xsb} and \cite{Monavari:2022rtf} for example). The term $-\bfP_{4}(1-K)x_{j}\bm{\Pi}^{(i)\vee}$ is also movable. Let $\chi=x_{i}q_{1}^{A-1}q_{2}^{B-1}q_{3}^{C-1}\,\,(A,B,C\in\mathbb{Z}_{\geq 1})$ be a term included in $\bm{\Pi}^{(i)}$, i.e. $(A,B,C)\in\Pi^{(i)}$. The unmovable part is  
\bea
\left[-\bfP_{4}(1-K)x_{j}\chi^{-1}\right]^{(0)}&=\left[-\bfP_{4}q_{4}^{j-i}q_{1}^{-A+1}q_{2}^{-B+1}q_{3}^{-C+1}\right]^{(0)}\\
&=-\delta_{A=B=C=i-j+1}+\delta_{A=B=C=i-j}
\eea
but since $j-i>0$ and $A,B,C\in\mathbb{Z}_{\geq 1}$ there is no unmovable part and we can safely use the reflection property \eqref{eq:reflectionprop0}, \eqref{eq:reflectionprop1}, \eqref{eq:reflectionprop2}, which eventually gives
\bea
\mathbb{I}\left[-\bfP_{4}(1-K)x_{j}\bm{\Pi}^{(i)\vee}\right]=\mathbb{I}\left[-\bfP^{\vee}_{4}(1-K^{-1})x^{-1}_{j}\bm{\Pi}^{(i)}\right].
\eea
Thus, the character $\mathbf{v}_{\D6\rightarrow \D8}$ above is equivalent to the following character
\bea\label{eq:D8vacuum-proof2}
\mathbf{v}'_{\D6\rightarrow \D8}&=-\bfP_{4}^{\vee}(1-K^{-1})\sum_{i}x_{i}^{-1}\sum_{j}\bm{\Pi}^{(j)}+\sum_{k=1}^{\infty}\bfP_{123}^{\vee} \bm{\Pi}^{(k)\vee}\bm{\Pi}^{(k)}+\sum_{i<j}\bfP_{\four}\bm{\Pi}^{(i)\vee}\bm{\Pi}^{(j)}\\
&=-(1-K^{-1})\sum_{j}\bm{\Pi}^{(j)}+\sum_{k=1}^{\infty}\bfP_{123}^{\vee} \bm{\Pi}^{(k)\vee}\bm{\Pi}^{(k)}+\sum_{i<j}\bfP_{\four}\bm{\Pi}^{(i)\vee}\bm{\Pi}^{(j)}
\eea
where in the second line, we formally regularized $\sum_{i}x_{i}=1/\bfP_{4}$. Note that we have $\mathbb{I}[\mathbf{v}_{\D6\rightarrow \D8}]=\mathbb{I}\left[\mathbf{v}'_{\D6\rightarrow \D8}\right]$ and no sign factor appears at this part. The nontrivial sign factor actually comes from the $\sum_{j>i}\bfP_{\four}\bm{\Pi}^{(i)\vee}\bm{\Pi}^{(j)}$ part.
\begin{definition}\label{def:signfactorrgeneral}
Let $\bm{\Upsilon}^{(i)}$ be a character whose terms take the form as $xq_{4}^{i-1}q_{1}^{\geq0}q_{2}^{\geq 0}q_{3}^{\geq 0}$. We define the sign factor as
\bea
s\left(\Upsilon\right)\coloneqq\left[\sum_{i<j}\bfP_{123}\bm{\Upsilon}^{(i)\vee}\bm{\Upsilon}^{(j)}\right]^{(0)}=\left[\sum_{i<j}\bfP^{\vee}_{123}\bm{\Upsilon}^{(i)}\bm{\Upsilon}^{(j)\vee}\right]^{(0)}
\eea
where the second equality is a consequence of the fact that the unmovable terms are always equal to its dual.
\end{definition}
Using $\bfP_{\four}=\bfP_{123}^{\vee}+\bfP_{123}$ and Prop.~\ref{app-prop:reflection_sign}, we have
\bea
\mathbb{I}\,[\,\mathbf{v}_{\D6\rightarrow \D8}]=
\mathbb{I}\,[\,\mathbf{v}'_{\D6\rightarrow \D8}]&=\mathbb{I}\left[\cdots +\sum_{i<j}\bfP_{\four}\bm{\Pi}^{(i)\vee}\bm{\Pi}^{(j)}\right]=\mathbb{I}\left[\cdots +\sum_{i<j}\bfP_{123}^{\vee}\bm{\Pi}^{(i)\vee}\bm{\Pi}^{(j)}+\sum_{i<j}\bfP_{123}\bm{\Pi}^{(i)\vee}\bm{\Pi}^{(j)}\right]\\
&=(-1)^{s(\Pi)}\mathbb{I}\left[\cdots +\sum_{i<j}\bfP_{123}^{\vee}\bm{\Pi}^{(i)\vee}\bm{\Pi}^{(j)}+\sum_{i<j}\bfP_{123}^{\vee}\bm{\Pi}^{(i)}\bm{\Pi}^{(j)\vee}\right]\\
&=(-1)^{s(\Pi)}\mathbb{I}\left[-(1-K^{-1})\sum_{j}\bm{\Pi}^{(j)}+\bfP_{123}^{\vee} \sum_{i}\bm{\Pi}^{(i)\vee}\sum_{j}\bm{\Pi}^{(j)}\right].
\eea
The index part indeed gives $\mathcal{Z}^{\D8}_{\four;4}[\rho,K]$ of \eqref{eq:D8U1partitionfunction}. Combining with the following proposition, we get Thm.~\ref{thm:D6toD8}.

\begin{proposition}\label{prop:sign_proof}
The sign factor $s(\Pi)$ is
\bea
s(\Pi)=\sigma_{4}(\rho) \mod2
\eea
where $\sigma_{4}(\rho)=\#\{(i,i,i,j)\in\rho\mid i<j\}$ and we have
\bea
(-1)^{s(\Pi)}=(-1)^{\sigma_{4}(\rho)}.
\eea

\end{proposition}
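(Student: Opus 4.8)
The plan is to convert the index-theoretic quantity $s(\Pi)$ into an explicit signed count of pairs of boxes of $\rho$ lying in different $q_4$-layers, and then to collapse that count modulo~$2$ onto the diagonal boxes enumerated by $\sigma_4$.

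First I would make the reduction to combinatorics explicit. Writing $\bm{\Pi}^{(i)} = x q_4^{i-1} F_i$ with $F_i = \sum_{(a,b,c)\in\Pi^{(i)}} q_1^{a-1}q_2^{b-1}q_3^{c-1}$, using the Calabi--Yau constraint $q_1 q_2 q_3 q_4 = 1$ to replace $q_4^{j-i}$ by $(q_1 q_2 q_3)^{-(j-i)}$, and expanding $\bfP_{123} = \prod_{\ell=1}^{3}(1-q_\ell)$, one finds that a pair $(a,b,c)\in\Pi^{(i)}$, $(a',b',c')\in\Pi^{(j)}$ with $i<j$ and $m = j-i$ contributes $(\delta_{a',a+m}-\delta_{a',a+m-1})(\delta_{b',b+m}-\delta_{b',b+m-1})(\delta_{c',c+m}-\delta_{c',c+m-1})$ to $[\,\cdot\,]^{(0)}$, so that $s(\Pi)$ becomes a sum of such products over $i<j$ and over boxes in the two layers. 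This is now a purely combinatorial statement about the nested plane partitions $\Pi^{(1)}\supseteq\Pi^{(2)}\supseteq\cdots$ (the nesting being exactly the solid-partition condition~\eqref{eq:solidpartitionmelting}).

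Passing to $\mathbb{F}_2$ --- all that the parity claim requires --- each factor becomes the condition $t'\in\{t+m-1,\,t+m\}$, so $s(\Pi)$ is the parity of the number of data $(i<j;\ \cube\in\Pi^{(i)};\ \cube'\in\Pi^{(j)})$ with $\cube'-\cube\in\{m-1,m\}^3$. Next I would run the inner sum one coordinate at a time: for fixed $(a,b)$ the admissible third coordinates fill the interval $\{1,\dots,\Pi^{(i)}_{a,b}\}$, and the alternating telescope gives $\sum_c(\delta_{c',c+m}-\delta_{c',c+m-1}) = \delta_{c',\,\Pi^{(i)}_{a,b}+m}-\delta_{c',\,m}$. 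Iterating over the three directions, using the nesting $\Pi^{(j)}\subseteq\Pi^{(i)}$ to determine which telescoped boundary terms can be realised by an actual box of $\Pi^{(j)}$, together with an induction on $|\rho|$ that tracks how $s(\Pi)$ and $\sigma_4(\rho)$ both change when a single addable box is adjoined, should kill every contribution except those anchored on the main diagonal $\cube=\cube'=(t,t,t)$.

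The hard part will be this middle step. Because $\Pi^{(j)}$ is an arbitrary plane partition rather than a rectangular box, the three telescopings do not decouple and the crude boundary terms overcount; I expect to need an explicit mod-$2$ fixed-point-free involution on the set of ``unmatched'' data --- roughly, a pairing that flips the least direction in which $\cube'-\cube\in\{m-1,m\}^3$ is unconstrained, combined with the shift $(\cube,j-i)\mapsto(\cube+(1,1,1),\,j-i-1)$ --- whose only fixed points are the configurations with $\cube=\cube'=(t,t,t)$, $i<j$ and $(t,t,t)\in\Pi^{(j)}$. Checking that this map is well defined (it must stay inside $\rho$, which is where down-closedness and layer-nesting enter) and that it has exactly the claimed fixed points is the crux of the argument. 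Once it is in place, the surviving contributions biject with $\{(t,u):t<u,\ (t,t,t)\in\Pi^{(u)}\} = \{(t,t,t,u)\in\rho:t<u\}$, giving $s(\Pi)\equiv\sigma_4(\rho)\pmod 2$ and hence $(-1)^{s(\Pi)}=(-1)^{\sigma_4(\rho)}$. I would also verify small solid partitions by hand, and note the consistency with Thm.~\ref{thm:M4invariance}: since fusing the D6 $qq$-characters along any of the four axes must yield the same operator, the axis-$a$ analogue $s_a(\Pi)$ of the sign factor satisfies $(-1)^{s_a(\Pi)}\mathcal{Z}^{\D8}_{\four;a}[\rho,K]=(-1)^{s_b(\Pi)}\mathcal{Z}^{\D8}_{\four;b}[\rho,K]$, which is precisely the transformation law obeyed by $\sigma_a$.
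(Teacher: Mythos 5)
Your reduction of $s(\Pi)$ to a signed count over pairs of boxes is correct and is essentially where the paper's proof also begins, but the argument as written has a genuine gap at exactly the point you flag as "the crux": the mod-$2$ cancellation of all off-diagonal contributions is not actually established. You telescope over the boxes of the \emph{larger} layer $\Pi^{(i)}$ one coordinate at a time, observe (correctly) that the three telescopings do not decouple because $\Pi^{(i)}$ is not a rectangular box, and then defer the entire remaining content of the proposition to a fixed-point-free involution that you neither construct nor verify. Without that involution being written down, shown to be well defined (staying inside the nested family $\Pi^{(1)}\succeq\Pi^{(2)}\succeq\cdots$), and shown to have precisely the diagonal configurations as fixed points, the claim $s(\Pi)\equiv\sigma_4(\rho)\bmod 2$ is not proved.

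The paper avoids the need for any involution by organizing the double sum the other way around: fix a box $\eta=(A,B,C)$ in the \emph{smaller} layer $\Pi^{(j)}$ and decompose the larger layer as $\bm{\Pi}^{(i)}=\Delta_i(\eta)+\bm{\Pi}^{(i)}(\eta)$, where $\Delta_i(\eta)$ is the full rectangle $[1,A]\times[1,B]\times[1,C]$ --- which is guaranteed to lie inside $\Pi^{(i)}$ by the nesting $\Pi^{(j)}\preceq\Pi^{(i)}$ together with down-closedness. The remainder $\bm{\Pi}^{(i)}(\eta)$ then contributes nothing unmovable \emph{termwise}, not by cancellation: each of its boxes has some coordinate strictly exceeding the corresponding coordinate of $\eta$, which pushes the relevant exponent to $\leq -2$, out of reach of $\bfP_{123}$ whose exponents in each $q_\ell$ are only $0$ or $1$. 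The rectangular part is a product of three intervals, so your three telescopings \emph{do} decouple there, giving $(1-q_1^A)(1-q_2^B)(1-q_3^C)(q_{123})^{i-j}$ with constant term $-\delta_{A=B=C=j-i}$; summing over $i<j$ and $\eta\in\Pi^{(j)}$ yields $\sigma_4(\rho)$ exactly (not merely mod $2$, up to overall sign). If you want to complete your route, the cleanest fix is to adopt this anchoring: telescope only over the rectangle below $\eta$ and kill the rest term by term, rather than hunting for a global pairing on the unmatched data.
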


\begin{proof}
    Let us consider the unmovable part of
    \bea\label{eq:proofdef}
    \sum_{i=1}^{j-1}\bfP_{123}\bm{\Pi}^{(i)\vee}\bm{\Pi}^{(j)},\quad \bm{\Pi}^{(i)}=\sum_{\scube\in\Pi^{(i)}}q_{4}^{i-1}\chi_{\bar{4},x}(\cube).
    \eea
    We fix a term $\eta=x_{j}q_{1}^{A-1}q_{2}^{B-1}q_{3}^{C-1},\,\,(A,B,C)\in\Pi^{(j)}$ and $A,B,C\geq 1$. Since the plane partitions obey $\Pi^{(i)}\succeq \Pi^{(j)}$ for $j>i$, we have $(A,B,C)\in\Pi^{(i)}$. From the plane partition condition, the character $\bm{\Pi}^{(i)}$ can be decomposed as
    \bea
    \bm{\Pi}^{(i)}=\Delta_{i}(\eta)+\bm{\Pi}^{(i)}(\eta),\quad \Delta_{i}(\eta)=\sum_{a=1}^{A}\sum_{b=1}^{B}\sum_{c=1}^{C}xq_{4}^{i-1}q_{1}^{a-1}q_{2}^{b-1}q_{3}^{c-1}
    \eea
where $\bm{\Pi}^{i}(\eta)$ contains terms expressed as $xq_{4}^{i-1}q_{1}^{a-1}q_{2}^{b-1}q_{3}^{c-1}$ with $a>A$ or $b>B$ or $c>C$. 

First of all, the term $\bfP_{123}\bm{\Pi}^{(i)\vee}(\eta)\eta$ is movable. Focusing on
\bea
\bfP_{123}\xi^{\vee}\eta=\bfP_{123}q_{4}^{j-i}q_{1}^{A-a}q_{2}^{B-b}q_{3}^{C-c}=\bfP_{123}q_{1}^{A-a+i-j}q_{2}^{B-b+i-j}q_{3}^{C-c+i-j}
\eea
for $\forall\xi=xq_{4}^{i-1}q_{1}^{a-1}q_{2}^{b-1}q_{3}^{c-1}\in\bm{\Pi}^{(i)}(\eta)$, the term $q_{1}^{A-a+i-j}q_{2}^{B-b+i-j}q_{3}^{C-c+i-j}$ is strictly negative in either $q_{1}^{\leq -2},q_{2}^{\leq -2},q_{3}^{\leq -2}$ because $a-A,b-B,c-C\geq 1,\,\, j-i\geq 1$. Since $\bfP_{123}$ only contains terms where the degrees with respect to $q_{1,2,3}$ are $0$ or $1$, there are no unmovable terms.

It is then enough to focus on $\bfP_{123}\Delta^{\vee}_{i}(\eta)\eta$:
\bea
\bfP_{123}\Delta^{\vee}_{i}(\eta)\eta&=\bfP_{123}\frac{(1-q_{1}^{-A})(1-q_{2}^{-B})(1-q_{3}^{-C})}{\bfP^{\vee}_{123}}q_{4}^{j-i}q_{1}^{A-1}q_{2}^{B-1}q_{3}^{C-1}\\
&=(1-q_{1}^{A})(1-q_{2}^{B})(1-q_{3}^{C})(q_{1}q_{2}q_{3})^{i-j}.
\eea
The unmovable term only comes from $-q_{1}^{A}q_{2}^{B}q_{3}^{C}(q_{123})^{i-j}$ with $A=B=C=j-i>0$.

Combining all of these, we finally have
\bea
s(\Pi)=\left[\sum_{j}\sum_{i=1}^{j-1}\bfP_{123}\bm{\Pi}^{(i)\vee}\bm{\Pi}^{(j)}\right]^{(0)}&=\left[\sum_{j}\sum_{i=1}^{j-1}\sum_{\eta\in\Pi^{(j)}} \bfP_{123}\Delta^{\vee}_{i}(\eta)\eta  \right]^{(0)}\\
&=-\sum_{j}\sum_{i=1}^{j-1}\sum_{\eta=(A,B,C)\in\Pi^{(j)}}\left[q_{1}^{A}q_{2}^{B}q_{3}^{C}(q_{123})^{i-j}\right]^{(0)}\\
&=-\sum_{j}\sum_{i=1}^{j-1}\sum_{\eta=(A,B,C)\in\Pi^{(j)}}\delta_{A=B=C=j-i}\\
&=-\sum_{j}\sum_{\eta=(A,B,C)\in\Pi^{(j)}}\delta_{A=B=C<j}\\
&=\sigma_{4}(\rho)\quad \mod2.
\eea

\end{proof}

\begin{remark}
The analysis of the unmovable terms given in the proof above is similar to the original proof in \cite{Nekrasov:2018xsb}. We note again that the difference of the sign factors there comes from the definition of the square root part of the total character. In this paper, we are using $\bfP_{123}^{\vee}$, while in theirs they are using $\bfP_{123}$. The sign factor defined in Def.~\ref{def:signfactorrgeneral} here also resembles the one defined in \cite[Thm.~5.16]{Nekrasov:2023nai}. After changing the square root part there to $\bfP_{123}^{\vee}$, the sign factor there is defined as
\bea
s^{\text{NP}}(\chi)=\left[\bfP_{123}^{\vee}\sum_{i<j}\chi_{i}\chi_{j}^{\vee}\right]^{(0)}
\eea
where $\{\chi_{i}\}_{i=1}^{k}$ are the poles of the JK-residue. Since when taking the JK-residue, the poles will be ordered in the lexicographic order on monomials in the four variables $q_{1,2,3,4}$ (see \cite[Sec.~2.2.2]{Nekrasov:2018xsb}), after denoting the four-dimensional coordinates of $\chi_{i}$ as $(a_{i},b_{i},c_{i},d_{i})$, we obtain
\bea
i<j\Leftrightarrow (d_{i}<d_{j})\vee(d_{i}=d_{j}\wedge c_{i}<c_{j})\vee(d_{i}=d_{j},c_{i}=c_{j},b_{i}<b_{j})\vee(d_{i}=d_{j},c_{i}=c_{j},b_{i}=b_{j},a_{i}<a_{j}).
\eea
The unmovable terms are computed as
\bea
\left[\bfP_{123}^{\vee}\chi_{i}\chi_{j}^{\vee}\right]^{(0)}=0
\eea
for $(d_{i}=d_{j}\wedge c_{i}<c_{j})\vee(d_{i}=d_{j},c_{i}=c_{j},b_{i}<b_{j})\vee(d_{i}=d_{j},c_{i}=c_{j},b_{i}=b_{j},a_{i}<a_{j})$, because $\chi_{i}\chi_{j}^{\vee}$ will be either in powers $q_{1}^{<0},q_{2}^{< 0},q_{3}^{<0}$. Therefore, the contribution of the unmovable terms only comes from $d_{i}<d_{j}$. We may then collect all of the terms with the same coordinate $d_{i}=k$ and get the equality
\bea
\bm{\Pi}^{(k)}=\sum_{d_{i}=k}\chi_{i}
\eea
where $\bm{\Pi}^{(k)}$ is defined in \eqref{eq:proofdef}. The sign factor $s^{\text{NP}}(\chi)$ is then rewritten as
\bea
s^{\text{NP}}(\chi)=\left[\bfP_{123}^{\vee}\sum_{i<j}\chi_{i}\chi_{j}^{\vee}\right]^{(0)}=s(\Pi)
\eea
which matches with our definition.

\end{remark}

\subsection{Commutativity of D8 \texorpdfstring{$qq$}{qq}-characters and sign rules}\label{sec:D8qqsignrule}
In the previous section, we derived the complete D8 $qq$-character including proper sign rules using the infinite products of lower dimensional D6 $qq$-characters. Given such $qq$-characters, one would like to find the quadratic relations of them and determine the quantum algebraic relations. In \cite{Kimura:2023bxy}, we gave a set of quadratic relations of the D2, D4, D6 $qq$-characters and showed that when the $qq$-characters are related with D-branes spanning transverse subspaces, they commute with each other. However, since the sign rules for the D8 $qq$-characters were not fixed and we do not know any screening charges for them, we could not discuss the quadratic relations of the D8 $qq$-characters. In this section, we give a conjecture of the quadratic relations of the D8 $qq$-characters and also the D6 $qq$-characters and discuss the physical implication of them.

We focus on the following D8 $qq$-character 
\bea
\mathsf{T}^{K}_{\four}(x)=\sum_{\rho\in\mathcal{SP}}\mathfrak{q}^{|\rho|}(-1)^{\sigma_{4}(\rho)}\mathcal{Z}^{\D8}_{\four;4}[\rho,K]\Lambda^{K}_{\four,\rho}(x).
\eea
The composition of the operators $\Lambda_{\four,\rho}^{K}(x)$ are
\bea
\Lambda_{\four,\rho^{(2)}}^{K_{2}}(x_{2})\Lambda_{\four,\rho^{(1)}}^{K_{1}}(x_{1})&=\mathcal{Z}^{\D8\tbar\D8}_{\text{1-loop}}(x_{1},K_{1}\,|\,x_{2},K_{2})\mathcal{Z}^{\D8\tbar\D8}_{K_{1}|K_{2}}(x_{1},\rho^{(1)}\,|\,x_{2},\rho^{(2)}):\Lambda_{\four,\rho^{(2)}}^{K_{2}}(x_{2})\Lambda_{\four,\rho^{(1)}}^{K_{1}}(x_{1}):,
\eea
where
\bea
\mathcal{Z}^{\D8\tbar\D8}_{\text{1-loop}}(x_{1},K_{1}\,|\,x_{2},K_{2})&=\exp\left(-\sum_{n>0}\frac{1}{n}\frac{(1-K_{2}^{-n})(1-K_{1}^{n})}{\bfP_{\four}^{[n]}}\left(\frac{x_{1}}{x_{2}}\right)^{n}\right),\\
\mathcal{Z}^{\D8\tbar\D8}_{K_{1}|K_{2}}(x_{1},\rho^{(1)}\,|\,x_{2},\rho^{(2)})&=\prod_{\shcube\in\rho^{(2)}}\left(\frac{1-K_{1}x_{1}/\chi_{\four,x_{2}}(\hcube)}{1-x_{1}/\chi_{\four,x_{2}}(\hcube)}\right)\prod_{\shcube\in\rho^{(1)}}\left(K_{2}\frac{1-K_{2}^{-1}\chi_{\four,x_{1}}(\hcube)/x_{2}}{1-\chi_{\four,x_{1}}(\hcube)/x_{2}}\right)\\
&\qquad \qquad \times\prod_{\substack{\shcube\in\rho^{(1)}\\\shcube'\in\rho^{(2)}}}\mathcal{A}_{\mathbb{C}^{4}}\left(\frac{\chi_{\four,x_{1}}(\hcube)}{\chi_{\four,x_{2}}(\hcube')}\right)^{-1}.
\eea
For later use, we define
\bea
\mathsf{f}^{K_{1}|K_{2}}_{\four\four}\left(x_{1}/x_{2}\right)=\mathcal{Z}^{\D8\tbar\D8}_{\text{1-loop}}(x_{1},K_{1}\,|\,x_{2},K_{2})^{-1}
\eea
and then the composition of the $qq$-characters are given as
\bea
\mathsf{f}^{K_{1}|K_{2}}_{\four\four}(x_{1}/x_{2})\mathsf{T}^{K_{2}}_{\four}(x_{2})\mathsf{T}^{K_{1}}_{\four}(x_{1})
=&\sum_{k=0}^{\infty}\mathfrak{q}^{k}\sum_{|\rho^{(1)}|+|\rho^{(2)}|=k}(-1)^{\sigma_{4}(\rho^{(1)})+\sigma_{4}(\rho^{(2)})}\mathcal{Z}^{\D8}_{\four;4}[\rho^{(1)},K_{1}]\mathcal{Z}^{\D8}_{\four;4}[\rho^{(2)},K_{2}]\\
&\qquad \times \mathcal{Z}^{\D8\tbar\D8}_{K_{1}|K_{2}}(x_{1},\rho^{(1)}\,|\,x_{2},\rho^{(2)}):\Lambda_{\four,\rho^{(2)}}^{K_{2}}(x_{2})\Lambda_{\four,\rho^{(1)}}^{K_{1}}(x_{1}):\\
\eqqcolon &\sum_{k=0}^{\infty}\mathfrak{q}^{k}\,\mathsf{F}_{k}(x_{1},K_{1}\,|\, x_{2}, K_{2})
\eea
where we denoted the $k$ instanton contribution of the quadratic relation as $\mathsf{F}_{k}(x_{1},K_{1}\,|\, x_{2},K_{2})$. Note that this coefficient is a sum of operators with coefficients. Our main claim of this section is the following conjecture.
\begin{conjecture}\label{conj:D8commutativity}
The quadratic relations of the D8 $qq$-characters are
    \bea
\mathsf{f}^{K_{2}|K_{1}}_{\four\four}\left(x_{2}/x_{1}\right)\mathsf{T}_{\four}^{K_{1}}(x_{1})\mathsf{T}^{K_{2}}_{\four}(x_{2})-\mathsf{f}^{K_{1}|K_{2}}_{\four\four}\left(x_{1}/x_{2}\right)\mathsf{T}_{\four}^{K_{2}}(x_{2})\mathsf{T}^{K_{1}}_{\four}(x_{1})=0,
    \eea
    where $K_{1},K_{2}$ are arbitrary. Moreover, this commutativity uniquely determines the sign factor $(-1)^{\sigma_{4}(\rho)}$ up to a global $\mathbb{Z}_{2}$ symmetry. 
\end{conjecture}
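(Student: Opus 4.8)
The plan is to prove the quadratic relation by converting it into a \emph{regularity} (pole–cancellation) statement, exactly as in the D6 case of \cite{Kimura:2023bxy}, the genuinely new input being the sign bookkeeping, which will be controlled by Theorem~\ref{thm:M4invariance}. First I would use the OPE of the $\Lambda^{K}_{\four,\rho}$'s to observe that multiplying $\mathsf{T}^{K_2}_{\four}(x_2)\mathsf{T}^{K_1}_{\four}(x_1)$ by $\mathsf{f}^{K_1|K_2}_{\four\four}(x_1/x_2)=\mathcal{Z}^{\D8\tbar\D8}_{\text{1-loop}}(x_1,K_1\,|\,x_2,K_2)^{-1}$ cancels the one-loop contraction and leaves
\begin{align*}
&\mathsf{f}^{K_1|K_2}_{\four\four}(x_1/x_2)\,\mathsf{T}^{K_2}_{\four}(x_2)\mathsf{T}^{K_1}_{\four}(x_1) \\
&= \sum_{\rho^{(1)},\rho^{(2)}} \mathfrak{q}^{|\rho^{(1)}|+|\rho^{(2)}|}(-1)^{\sigma_4(\rho^{(1)})+\sigma_4(\rho^{(2)})}\mathcal{Z}^{\D8}_{\four;4}[\rho^{(1)},K_1]\,\mathcal{Z}^{\D8}_{\four;4}[\rho^{(2)},K_2] \\
&\qquad\qquad \times\,\mathcal{Z}^{\D8\tbar\D8}_{K_1|K_2}(x_1,\rho^{(1)}\,|\,x_2,\rho^{(2)})\, {:\Lambda^{K_2}_{\four,\rho^{(2)}}(x_2)\,\Lambda^{K_1}_{\four,\rho^{(1)}}(x_1):} ,
\end{align*}
and symmetrically for the opposite ordering with $\mathcal{Z}^{\D8\tbar\D8}_{K_2|K_1}(x_2,\rho^{(2)}\,|\,x_1,\rho^{(1)})$. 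Using the reflection property of $\mathcal{A}_{\mathbb{C}^4}$ on the Calabi--Yau locus $q_1q_2q_3q_4=1$, together with the elementary identity $K\frac{1-K^{-1}u}{1-u}=\frac{1-K/u}{1-1/u}$, these two half-loop factors are \emph{the same} rational function of $z=x_1/x_2$, merely expanded from opposite sides ($z\to0$ versus $z\to\infty$). Since the normal-ordered operator is insensitive to the order, the difference of the two sides is a sum of formal $\delta$-functions in $z$ with operator-valued residues, so the conjecture is equivalent to: order by order in $\mathfrak{q}$, the rational function $\mathsf{F}_k(x_1,K_1\,|\,x_2,K_2)$ has no poles on $z\in\mathbb{C}^\times$.

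For this pole cancellation I would mimic the D6 argument in the proof of Prop.~\ref{prop:D6leg_qq}. The poles of $\mathcal{Z}^{\D8\tbar\D8}_{K_1|K_2}(x_1,\rho^{(1)}\,|\,x_2,\rho^{(2)})$ in $z$ sit at the $\chi$-coordinates attached to the \emph{addable} boxes of $\rho^{(2)}$ relative to $\rho^{(1)}$ (and conversely), with the zeros of $\mathcal{A}_{\mathbb{C}^4}$ killing all remaining candidate poles; this is the solid-partition analogue of the addable/removable-box description of the zeros of the structure function. At such a pole $z_\ast$, specializing the spectral parameters glues the offending box from $\rho^{(1)}$ onto $\rho^{(2)}$, producing exactly the normal-ordered operator of the partner term $(\rho^{(1)}\setminus\hcube,\rho^{(2)}\cup\hcube)$, so the residues of $(\rho^{(1)},\rho^{(2)})$ and of this box-transferred partner must cancel. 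By the recursion obeyed by the magnificent-four partition function $\mathcal{Z}^{\D8}_{\four;4}[\rho,K]$ under adding a box (the $\U(1|1)$ version of the recursion used in the proof of Prop.~\ref{prop:D6leg_qq}), the two residues already agree up to the sign ratio $(-1)^{\sigma_4(\rho^{(1)})+\sigma_4(\rho^{(2)})}$ versus $(-1)^{\sigma_4(\rho^{(1)}\setminus\hcube)+\sigma_4(\rho^{(2)}\cup\hcube)}$, and Theorem~\ref{thm:M4invariance} is precisely what supplies the missing sign identity: for a pole whose collision axis is $a\in\four$ one passes to the equivalent presentation $(-1)^{\sigma_a}\mathcal{Z}^{\D8}_{\four;a}$, in which the transfer of a box along the $a$-direction is manifestly sign-consistent. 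Summing the cancelling pairs over all addable boxes shows $\mathsf{F}_k$ is a Laurent polynomial, proving the first half of the conjecture. (A consistency check, and an alternative route, is provided by Theorem~\ref{thm:D6toD8}: the D8 $qq$-character is an infinite product of D6 $qq$-characters, for which the analogous quadratic relation is proven by the same method; the stronger fact that $\mathsf{F}_k$ is actually \emph{independent} of $z$ — reflecting the Coulomb-moduli independence and the plethystic-exponential formula — would follow by additionally controlling the behaviour at $z=0,\infty$, but is not needed for commutativity.)

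For the uniqueness statement I would run the argument in reverse. Suppose $\mathsf{T}^{K}_{\four}(x)=\sum_{\rho}\mathfrak{q}^{|\rho|}\epsilon(\rho)\,\mathcal{Z}^{\D8}_{\four;4}[\rho,K]\,\Lambda^{K}_{\four,\rho}(x)$ with $\epsilon:\mathcal{SP}\to\{\pm1\}$ and $\epsilon(\emptyset)=1$ satisfies the quadratic relation for \emph{all} $K_1,K_2$. Isolating the residue of the resulting distribution at the pole attached to a single addable box $\hcube$ forces $\epsilon(\rho^{(1)})\epsilon(\rho^{(2)})=\pm\,\epsilon(\rho^{(1)}\setminus\hcube)\epsilon(\rho^{(2)}\cup\hcube)$, with the sign fixed by the structure functions and $\mathcal{Z}^{\D8}$; propagating through this recursion from $\epsilon(\emptyset)$ determines every $\epsilon(\rho)$ with $|\rho|$ even absolutely and every $\epsilon(\rho)$ with $|\rho|$ odd up to the single common sign $\epsilon(\{(1,1,1,1)\})$ — that is, up to the global $\mathbb{Z}_2$ rescaling $\epsilon(\rho)\mapsto(-1)^{|\rho|}\epsilon(\rho)$, equivalently $\mathfrak{q}\mapsto-\mathfrak{q}$. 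Since the sign occurring in this recursion equals, by the unmovable-term analysis of Prop.~\ref{prop:sign_proof} adapted to the two-partition setting, precisely the increment of $\sigma_4$ under the box transfer, and since $(-1)^{\sigma_4(\rho)}$ was shown in the first half to solve the recursion, the unique solution is $\epsilon(\rho)=(-1)^{\sigma_4(\rho)}$ of \eqref{eq:signfactor}.

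The main obstacle I expect lies in the solid-partition combinatorics behind the two technical inputs above. First, one must prove that the zeros of $\mathcal{A}_{\mathbb{C}^4}$ cancel exactly the non-addable candidate poles of $\mathcal{Z}^{\D8\tbar\D8}_{K_1|K_2}$ — the 4d analogue of the addable/removable-box lemma, which is already delicate for plane partitions and more so for solid partitions, especially in the presence of the supergroup parameters $K_1,K_2$. Second, the precise sign bookkeeping at each pole, i.e.\ reproducing the ``$(i,i,i,j)$ with $i<j$ on the triple diagonal'' count in the configuration of two interacting solid partitions, requires re-running the unmovable-term computation of Prop.~\ref{prop:sign_proof} in this slightly more general setup. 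By contrast, the reduction to regularity and the residue-telescoping mechanism are formal and carry over essentially verbatim from the D6 discussion of \cite{Kimura:2023bxy}.
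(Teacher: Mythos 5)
The first thing to say is that the paper does not prove this statement at all: it is presented as a conjecture, supported by the explicit $k=0,1$ computations written out in the text and by a computer check up to five instantons (the authors tabulate the number of terms verified at each order, and explicitly write ``For the moment, we do not have a proof of this conjecture''). So there is no paper proof to compare against; what follows is an assessment of whether your strategy would close the gap the authors left open.

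Your reduction to pole cancellation plus residue telescoping is the natural strategy and matches how the paper handles the D6 case (Prop.~\ref{prop:D6leg_qq}), but it runs into the precise obstruction the authors identify as the reason no proof exists: for $k\geq 4$ the pair structure function $\mathcal{Z}^{\D8\tbar\D8}_{K_1|K_2}(x_1,\rho^{(1)}\,|\,x_2,\rho^{(2)})$ develops \emph{higher-order} poles (conjecturally only at $x_1=x_2$). Your telescoping mechanism --- matching first-order residues of $(\rho^{(1)},\rho^{(2)})$ against those of the box-transferred partner $(\rho^{(1)}\setminus\hcube,\,\rho^{(2)}\cup\hcube)$ --- only accounts for simple poles. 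At a pole of order two or more, the difference of the two expansions produces derivatives of the delta function whose operator-valued coefficients involve differentiated vertex operators, and there is no longer a one-to-one pairing of configurations; the authors explicitly \emph{exclude} these poles in their numerical check and note that treating them would require modifying the iWeyl reflection with derivative operators, which they leave for future work. Beyond this, both of your technical inputs are themselves unproven: the 4d addable/removable-box description of the zeros of the pair structure function has no analogue established in the paper (even the 3d version, the zeros of $\mathscr{W}^{\bar{a},\lambda\mu\nu}_{\pi,x}$, is only checked there for a handful of small configurations), and the identification of the residue sign with the increment of $\sigma_4$ under box transfer would require redoing the unmovable-term analysis of Prop.~\ref{prop:sign_proof} in the two-partition setting, which you acknowledge but do not carry out. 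Your roadmap is reasonable and correctly locates the hard combinatorics, but as a proof it has the same genuine gap that stopped the authors, and the claim that the telescoping step carries over ``essentially verbatim'' from the D6 discussion is not justified.
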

To fix this global $\mathbb{Z}_{2}$ symmetry, we can impose the sign of the first instanton contribution by hand. Imposing the sign factor of the one-instanton to be $(-1)^{\sigma_{4}(\rho)}=1$, we have
    \bea
 \sigma_{4}(\rho)=\#\left\{(i,j)\mid (i,i,i,j)\in\rho,\quad i<j\right\},
    \eea
    which is the same as \eqref{eq:signfactor}. Note that the commutativity of the quadratic relation implies
    \bea
\mathsf{F}_{k}(x_{1},K_{1}\mid x_{2},K_{2})=\mathsf{F}_{k}(x_{2},K_{2}\mid x_{1},K_{2}).
    \eea
    Note also that this identity should be understood as a relation between operators of different analytic region.
    
For the moment, we do not have a proof of this conjecture, but we have checked it up to five instantons $(k\leq 5)$ by using a computer program. Since we have $1,1,4,10,26,59$ possible configurations for $|\rho|=0,1,2,3,4,5$, respectively, the number of terms for each $k=0,1,2,3,4,5$ are
\bea
\begin{tabular}{|c|c|c|} \hline
$k$ &$(|\rho^{(1)}|,|\rho^{(2)}|)$ & total number of terms \\ \hline\hline
     $0$ & $(0,0)$ & $1$   \\\hline
     $1$ & $(1,0),(0,1)$ & 2  \\ \hline
     $2$   & $(2,0),(1,1),(0,2)$   &  9 \\\hline
      $3$  &  $(3,0),(2,1),(1,2),(0,3)$  &  28  \\\hline
      $4$  &  $(4,0),(3,1),(2,2),(1,3),(0,4)$ &  88  \\\hline
       $5$ &  $(5,0),(4,1),(3,2),(2,3),(1,4),(0,5)$  & 250    \\\hline   
\end{tabular}
\eea
The confirmation up to five instanton is already a nontrivial confirmation. 

At higher instanton levels $k\geq 4$, actually the factor $\mathcal{Z}^{\D8\tbar\D8}_{K_{1}|K_{2}}(x_{1},\rho^{(1)}\,|\,x_{2},\rho^{(2)})$ has poles with higher orders.
\begin{conjecture}
    The factor $\mathcal{Z}^{\D8\tbar\D8}_{K_{1}|K_{2}}(x_{1},\rho^{(1)}\,|\,x_{2},\rho^{(2)})$ has higher order poles only at $x_{1}=x_{2}$.
\end{conjecture}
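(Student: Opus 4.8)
## Proof proposal for Conjecture (higher order poles only at $x_1 = x_2$)

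The plan is to analyze the pole structure of $\mathcal{Z}^{\D8\tbar\D8}_{K_{1}|K_{2}}(x_{1},\rho^{(1)}\,|\,x_{2},\rho^{(2)})$ directly from its product formula. Writing $\chi_i = \chi_{\four,x_1}(\hcube_i)$ for $\hcube_i \in \rho^{(1)}$ and $\chi'_j = \chi_{\four,x_2}(\hcube'_j)$ for $\hcube'_j \in \rho^{(2)}$, the factor is a finite product of the ``matter'' factors $\tfrac{1-K_1 x_1/\chi'_j}{1-x_1/\chi'_j}$, $\tfrac{1-K_2^{-1}\chi_i/x_2}{1-\chi_i/x_2}$ (times $K_2$), and the ``bifundamental'' factors $\mathcal{A}_{\mathbb{C}^4}(\chi_i/\chi'_j)^{-1}$. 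First I would record the explicit zeros and poles of $\mathcal{A}_{\mathbb{C}^4}(z)^{-1}$: from the definition (Appendix), $\mathcal{A}_{\mathbb{C}^4}(z) = \tfrac{\prod_{a\in\four}(1-q_a z)}{(1-z)\prod_{A\in\six}(1-q_A z)}$ up to the usual normalization, so $\mathcal{A}_{\mathbb{C}^4}(z)^{-1}$ has simple poles at $z=q_a^{-1}$ $(a\in\four)$ and a simple zero at $z=1$ together with simple zeros at $z=q_A^{-1}$. The key point is that every elementary factor has only \emph{simple} poles in its argument; a higher order pole of the full product in the variable $x_1/x_2$ can therefore arise only by coincidence of several simple poles, i.e.\ only when several of the linear conditions $\chi_i = \chi'_j$, $\chi_i = q_a \chi'_j$, $x_1 = \chi'_j$, $\chi_i = x_2$ are satisfied simultaneously along a common hyperplane in the $x_1/x_2$ line.

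The main step is then a combinatorial argument showing that, once two or more of these coincidence conditions hold along the same value of $x_1/x_2$, that value must be $x_1 = x_2$. Concretely, each condition has the schematic form $x_1/x_2 = q_1^{a}q_2^{b}q_3^{c}q_4^{d}$ for integer exponents read off from the positions of the boxes $\hcube_i\in\rho^{(1)}$, $\hcube'_j\in\rho^{(2)}$ (and, for the matter factors, from the box together with the origin). I would argue that whenever a box-pair $(\hcube_i,\hcube'_j)$ contributing a pole is such that a \emph{second} box-pair produces a pole at the same monomial in $q$'s, the monotonicity of solid partitions (the melting condition~\eqref{eq:solidpartitionmelting}) forces the relevant exponent difference to collapse, pinning $x_1/x_2$ to $1$. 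This is parallel to, and should reuse, the ``movability'' analysis already carried out in the proof of Prop.~\ref{prop:sign_proof} and of Prop.~\ref{prop:D6toD8vacuumproof}: there one shows that for $j>i$ the characters $\bfP_{123}\bm{\Pi}^{(i)\vee}\bm{\Pi}^{(j)}$ have no unmovable terms \emph{except} at coincident layers, and the same dichotomy (generic layers vs.\ the diagonal layer) is what distinguishes simple poles away from $x_1=x_2$ from the potential pile-up at $x_1=x_2$. I would make the bookkeeping precise by introducing, for fixed $x_1/x_2 = t$, the index set of box-pairs activating a pole and showing this set has cardinality $\le 1$ unless $t=1$.

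An alternative, possibly cleaner, route is cohomological: interpret $\mathcal{Z}^{\D8\tbar\D8}_{K_1|K_2}$ as $\mathbb{I}[\mathbf{w}]$ for the character
\[
\mathbf{w} = -\bfN_1^\vee \bfK_2 + \bfN_2 \bfK_1^\vee \cdot(\text{normalization}) + \bfP_{\four}\bfK_1^\vee \bfK_2 \, ,
\]
with $\bfN_i = (1-K_i)x_i$, $\bfK_i = \sum_{\hcube\in\rho^{(i)}}\chi_{\four,x_i}(\hcube)$, and to count the order of vanishing/pole of $\mathbb{I}[\mathbf{w}]$ at a given $x_1/x_2$ by counting with sign the terms of $\mathbf{w}$ that become $1$ at that value. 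The order of the pole at $x_1/x_2 = t$ equals $-[\mathbf{w}]\big|_{\text{coefficient of the invariant part at }t}$, and one shows this integer is $\le 1$ for $t\ne 1$. I expect the main obstacle to be precisely the combinatorics of simultaneous coincidences for \emph{pairs of distinct} solid partitions $\rho^{(1)}\ne\rho^{(2)}$ at instanton number $\ge 4$: the two partitions are no longer nested, so the clean layer-by-layer monotonicity used in Sec.~\ref{sec:fusionD6toD8} is unavailable and one must instead track, box by box, how the $\bfP_\four$-bifundamental zeros (at $z=q_A^{-1}$) cancel against the would-be double poles. Verifying that these cancellations are exactly enough to leave only simple poles off the diagonal — and genuinely higher-order poles only on it — is the delicate part; the computer checks up to $k\le 5$ mentioned in the text give the concrete data to guide, and sanity-check, this argument.
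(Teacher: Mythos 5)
You should first be aware that the paper does not prove this statement: it is stated as a conjecture and left open, supported at most indirectly by the low-instanton computer checks surrounding it. So there is no proof in the paper to compare against, and the question is whether your proposal closes the gap on its own. It does not. Your very first step contains an error: from the definition in the appendix, $\mathcal{A}_{\mathbb{C}^{4}}(z)^{-1}$ has a \emph{double} zero at $z=1$, simple poles at both $z=q_{a}$ and $z=q_{a}^{-1}$ for $a\in\four$, and simple zeros at $z=q_{A}^{-1}$ for $A\in\six$; the formula you quote for $\mathcal{A}_{\mathbb{C}^{4}}$ drops half of the numerator and the square in the denominator. Since your whole argument is a multiplicity count, getting these orders right is essential --- in particular the double zero at $z=1$ is precisely what governs the cancellations near the diagonal, and the poles at $z=q_{a}$ (not just $q_{a}^{-1}$) double the set of coincidence conditions you must control.

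More importantly, the central combinatorial step is asserted rather than proved. A pole of the product at $x_{1}/x_{2}=t$ arises from a condition of the form $t=q_{a}^{\pm 1}q^{\vec{n}'-\vec{n}}$ for boxes $\vec{n}\in\rho^{(1)}$, $\vec{n}'\in\rho^{(2)}$, or $t=q^{\vec{n}'}$, $t=q^{-\vec{n}}$ from the matter factors. Because $q_{1}q_{2}q_{3}q_{4}=1$, monomials in the $q_{a}$ are only defined modulo this relation, so there is no a priori obstruction to two distinct box pairs producing the same $t\neq 1$; the two solid partitions have independent origins and are not nested, so the melting/monotonicity mechanism of Prop.~\ref{prop:sign_proof} --- which crucially uses $\Pi^{(i)}\succeq\Pi^{(j)}$ within a single fused tower --- does not transfer, as you yourself concede in your final paragraph. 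What remains to be shown is exactly the content of the conjecture. Your alternative index-theoretic route (counting the invariant part of a character $\mathbf{w}$ at each $t$) is a reasonable reformulation but converts the problem into the same unproved counting statement. In short, this is a plausible plan of attack, not a proof; the statement remains open both in the paper and in your write-up.
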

Strictly speaking, to consider the quadratic relations, one needs to deal with these poles. However, to confirm the conjecture, we excluded such poles and only focus on the contribution from poles which are single order.  To include higher order poles, we expect that we need to deal with differentiated operators. Somehow, in the context of quantum algebra, discussions on higher order poles are poorly studied. Well-known examples where this kind of phenomena appears are the $qq$-characters associated with geometries with D, E-type quivers or higher rank $qq$-characters with non-trivial limits of spectral parameters. In such cases, the iWeyl reflection procedure needs to be modified and derivatives of the vertex operators will appear (see \cite{Nekrasov:2015wsu, Kimura:2015rgi} for example). In this paper, we will not make an attempt to discuss all of these aspects and leave it for future work.

\paragraph{Low instanton computations}
To see the commutativity, focusing on the $k=0,1$ sectors is already intuitive. For the zero--instanton case, the commutativity is trivial because
\bea
\mathsf{f}^{K_{2}|K_{1}}_{\four\four}(x_{2}/x_{1})\mathsf{Z}(K_{1},x_{1})\mathsf{Z}(K_{2},x_{2})&={:\mathsf{Z}(K_{2},x_{2})\mathsf{Z}(K_{1},x_{1}):}\\
\mathsf{f}^{K_{1}|K_{2}}_{\four\four}(x_{1}/x_{2})\mathsf{Z}(K_{2},x_{2})\mathsf{Z}(K_{1},x_{1})&={:\mathsf{Z}(K_{2},x_{2})\mathsf{Z}(K_{1},x_{1}):}.
\eea

For the one--instanton case, the sign factor $(-1)^{\sigma_{4}(\{\{\{1\}\}\})}=1,-1$ is arbitrary because it is an overall factor when considering the quadratic relation. The possible configurations are $(\rho^{(1)},\rho^{(2)})=(\{\{\{1\}\}\},\emptyset),(\emptyset,\{\{\{1\}\}\})$. The contribution from $(\rho^{(1)},\rho^{(2)})=(\{\{\{1\}\}\},\emptyset)$ is
\bea
\mathsf{f}^{K_{1}| K_{2}}_{\four\four}(x_{1}/x_{2})\mathsf{Z}(K_{2},x_{2}):\mathsf{Z}(K_{1},x_{1})\mathsf{A}^{-1}(x_{1}):&=K_{2}\frac{1-K_{2}^{-1}x_{1}/x_{2}}{1-x_{1}/x_{2}}:\mathsf{Z}(K_{2},x_{2})\mathsf{Z}(K_{1},x_{1})\mathsf{A}^{-1}(x_{1}):,\\
\mathsf{f}_{\four\four}^{K_{2}|K_{1}}(x_{2}/x_{1}):\mathsf{Z}(K_{1},x_{1})\mathsf{A}^{-1}(x_{1}):\mathsf{Z}(K_{2},x_{2})&=\frac{1-K_{2}\,x_{2}/x_{1}}{1-x_{2}/x_{1}}:\mathsf{Z}(K_{2},x_{2})\mathsf{Z}(K_{1},x_{1})\mathsf{A}^{-1}(x_{1}):
\eea
which gives
\bea
&\mathsf{f}^{K_{1}| K_{2}}_{\four\four}(x_{1}/x_{2})\mathsf{Z}(K_{2},x_{2}):\mathsf{Z}(K_{1},x_{1})\mathsf{A}^{-1}(x_{1}):-\mathsf{f}_{\four\four}^{K_{2}|K_{1}}(x_{2}/x_{1}):\mathsf{Z}(K_{1},x_{1})\mathsf{A}^{-1}(x_{1}):\mathsf{Z}(K_{2},x_{2})\\
=&-(1-K_{2})\delta\left(x_{1}/x_{2}\right):\mathsf{Z}(K_{1},x_{1})\mathsf{Z}(K_{2},x_{1})\mathsf{A}^{-1}(x_{1}):.
\eea
The contribution from $(\rho^{(1)},\rho^{(2)})=(\emptyset,\{\{\{1\}\}\})$ is obtained by switching the parameters as $K_{1}\leftrightarrow K_{2},\,x_{1}\leftrightarrow x_{2}$:
\bea
\mathsf{f}_{\four\four}^{K_{1}|K_{2}}(x_{1}/x_{2}):\mathsf{Z}(K_{2},x_{2})\mathsf{A}^{-1}(x_{2}):\mathsf{Z}(K_{1},x_{1})&=\frac{1-K_{1}\,x_{1}/x_{2}}{1-x_{1}/x_{2}}:\mathsf{Z}(K_{1},x_{1})\mathsf{Z}(K_{2},x_{2})\mathsf{A}^{-1}(x_{2}):,\\
\mathsf{f}^{K_{2}| K_{1}}_{\four\four}(x_{2}/x_{1})\mathsf{Z}(K_{1},x_{1}):\mathsf{Z}(K_{2},x_{2})\mathsf{A}^{-1}(x_{2}):&=K_{1}\frac{1-K_{1}^{-1}x_{2}/x_{1}}{1-x_{2}/x_{1}}:\mathsf{Z}(K_{1},x_{1})\mathsf{Z}(K_{2},x_{2})\mathsf{A}^{-1}(x_{2}):
\eea
which gives
\bea
&\mathsf{f}_{\four\four}^{K_{1}|K_{2}}(x_{1}/x_{2}):\mathsf{Z}(K_{2},x_{2})\mathsf{A}^{-1}(x_{2}):\mathsf{Z}(K_{1},x_{1})-\mathsf{f}^{K_{2}| K_{1}}_{\four\four}(x_{2}/x_{1})\mathsf{Z}(K_{1},x_{1}):\mathsf{Z}(K_{2},x_{2})\mathsf{A}^{-1}(x_{2}):\\
=&(1-K_{1})\delta(x_{1}/x_{2}):\mathsf{Z}(K_{1},x_{1})\mathsf{Z}(K_{2},x_{1})\mathsf{A}^{-1}(x_{1}):.
\eea
Using
\bea
\mathcal{Z}^{\D8}_{\four;4}[\{\{\{1\}\}\},K]=\frac{(1-K)(1-q_{12})(1-q_{13})(1-q_{23})}{(1-q_{1})(1-q_{2})(1-q_{3})(1-q_{123})}
\eea
we have the following one-instanton contribution
\bea
&\mathsf{F}_{1}(x_{1},K_{1}\,|\,x_{2}, K_{2})-\mathsf{F}_{1}(x_{2},K_{2}\,|\, x_{1}, K_{1})\\
=&-\mathcal{Z}^{\D8}_{\four;4}[{\{\{\{1\}\}\}}\,, K_{1}](1-K_{2})\delta\left(x_{1}/x_{2}\right):\mathsf{Z}(K_{1},x_{1})\mathsf{Z}(K_{2},x_{1})\mathsf{A}^{-1}(x_{1}):\\
&+\mathcal{Z}^{\D8}_{\four;4}[{\{\{\{1\}\}\}},K_{2}](1-K_{1})\delta(x_{1}/x_{2}):\mathsf{Z}(K_{1},x_{1})\mathsf{Z}(K_{2},x_{1})\mathsf{A}^{-1}(x_{1}):\\
=&\frac{(1-K_{1})(1-K_{2})(1-q_{12})(1-q_{13})(1-q_{23})}{(1-q_{1})(1-q_{2})(1-q_{3})(1-q_{123})}\\
&\qquad \times \left(:\mathsf{Z}(K_{1},x_{1})\mathsf{Z}(K_{2},x_{1})\mathsf{A}^{-1}(x_{1}):-:\mathsf{Z}(K_{1},x_{1})\mathsf{Z}(K_{2},x_{1})\mathsf{A}^{-1}(x_{1}):\right)\delta\left(x_{1}/x_{2}\right)\\
=&0.
\eea
Therefore, at the one-instanton level, the D8 $qq$-characters commute with each other.

\paragraph{BPS/CFT correspondence} Using the D8 $qq$-characters, we can indeed reproduce the correct $\U(1|1)$ magnificent four partition function including the sign factors, following the logic in \cite{Kimura:2023bxy}. The BPS/CFT correspondence for higher rank magnificent four theory with $\U(n|n)$ is
\bea\label{eq:D8qqBPSCFT}
\bra{0}\mathsf{T}^{K_{n}}_{\four}(x_{n})\cdots \mathsf{T}_{\four}^{K_{1}}(x_{1})\ket{0}&=\prod_{\beta>\alpha}\mathcal{Z}_{\text{1-loop}}^{\D8\tbar\D8}(x_{\alpha},K_{\alpha}\mid x_{\beta},K_{\beta})\\
&\times \sum_{\rho^{(1)},\cdots ,\rho^{(n)}}\mathfrak{q}^{|\vec{\rho}|}\prod_{\alpha=1}^{n}(-1)^{\sigma_{4}(\rho^{(\alpha)})}\mathcal{Z}_{\four;4}^{\D8}[\rho^{(\alpha)},K_{\alpha}]\prod_{\beta>\alpha}\mathcal{Z}^{\D8\tbar\D8}_{K_{\alpha},K_{\beta}}(x_{\alpha},\rho^{(\alpha)}\mid x_{\beta},\rho^{(\beta)}).
\eea

\paragraph{Reduction to D6 \texorpdfstring{$qq$}{qq}-characters}
The D6 $qq$-characters can be obtained by reductions of the D8 $qq$-characters. The relation \eqref{eq:vertexoprelation} gives $\mathsf{Z}(q_{a},x)=\mathsf{W}_{\bar{a}}(x)$. Under this specialization, we actually can show that the solid partition stops its growth in one of the four directions.
\begin{lemma}\label{lemm:D8toD6reduce}
    If we specialize $K=q_{a}\,(a\in\four)$, then the coefficient $\mathcal{Z}^{\D8}_{\four;4}[\rho;K]\,\,(\rho\in\mathcal{SP})$ disappears when the solid partition extends to the direction $q_{a}$:
    \bea
    \mathcal{Z}^{\D8}_{\four;4}[\rho;q_{a}]=0,\quad \rho\notin\mathcal{PP}_{a}
    \eea
    where we denoted $\mathcal{PP}_{a}$ as the set of plane partitions not extending in the $q_{a}$ direction.
\end{lemma}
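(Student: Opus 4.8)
The plan is to reduce the claim to an elementary combinatorial fact about solid partitions by isolating the dependence of $\mathcal{Z}^{\D8}_{\four;4}[\rho;K]$ on the parameter $K$. In the $\U(1|1)$ magnificent four setup one has $\bfN=(1-K)x$, and inspecting \eqref{eq:D8U1partitionfunction} --- equivalently, $\mathcal{Z}^{\D8}_{\four;4}[\rho;K]=\mathbb{I}[\mathbf{v}_{\text{inst.}}|_{\rho}]$ with $\mathbf{v}_{\text{inst.}}=-\bfN^{\vee}\bfK+\bfP_{123}^{\vee}\bfK^{\vee}\bfK$ --- one sees that the only $K$-dependent factor is $\prod_{\shcube\in\rho}(1-Kx/\chi_{\four,x}(\hcube))$, while the remaining piece $C_{\rho}:=\prod_{\shcube\in\rho}(1-x/\chi_{\four,x}(\hcube))^{-1}\prod_{\shcube,\shcube'\in\rho}g_{\bar{4}}(\chi_{\four,x}(\hcube)/\chi_{\four,x}(\hcube'))^{-1}$ is independent of $K$. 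Thus $\mathcal{Z}^{\D8}_{\four;4}[\rho;K]=C_{\rho}\prod_{\shcube\in\rho}(1-Kx/\chi_{\four,x}(\hcube))$ is a polynomial of degree $|\rho|$ in $K$, and it suffices to produce, for every $\rho\notin\mathcal{PP}_{a}$, one box $\shcube\in\rho$ with $\chi_{\four,x}(\hcube)=q_{a}x$, since the corresponding factor then vanishes upon setting $K=q_{a}$.

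The box to use is $\hcube_{0}$, the unique box whose $a$-th coordinate equals $2$ and whose other three coordinates equal $1$; from $\chi_{\four,x}(i,j,k,l)=xq_{1}^{i-1}q_{2}^{j-1}q_{3}^{k-1}q_{4}^{l-1}$ one gets $\chi_{\four,x}(\hcube_{0})=xq_{a}$, so $1-Kx/\chi_{\four,x}(\hcube_{0})=1-Kq_{a}^{-1}$ does vanish at $K=q_{a}$. The remaining step is to show $\hcube_{0}\in\rho$ whenever $\rho\notin\mathcal{PP}_{a}$. Here I would use that $\mathcal{PP}_{a}$, viewed inside $\mathcal{SP}$, consists precisely of those solid partitions all of whose boxes have $a$-th coordinate $1$; hence $\rho\notin\mathcal{PP}_{a}$ means $\rho$ contains some box $\shcube$ with $(\shcube)_{a}\ge 2$. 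The melting condition \eqref{eq:solidpartitionmelting} says exactly that the set of boxes of a solid partition is an order ideal of $(\mathbb{Z}_{>0}^{4},\le)$ for the componentwise order, and since $\hcube_{0}\le\shcube$ componentwise ($a$-th coordinate: $2\le(\shcube)_{a}$; the others: $1\le(\shcube)_{b}$), this forces $\hcube_{0}\in\rho$. Combining the two points gives $\mathcal{Z}^{\D8}_{\four;4}[\rho;q_{a}]=0$; as a by-product, for generic equivariant parameters $\mathcal{Z}^{\D8}_{\four;4}[\rho;q_{a}]\neq 0$ for $\rho\in\mathcal{PP}_{a}$, which is what is actually needed to collapse the sum over $\mathcal{SP}$ to a sum over $\mathcal{PP}_{a}$ in the reduction of the D8 $qq$-character.

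The one genuinely delicate point --- and essentially the only obstacle --- is justifying that $C_{\rho}$ is finite, so that the vanishing is real and not an artefact of a compensating pole. Read literally, $\prod_{\shcube\in\rho}(1-x/\chi_{\four,x}(\hcube))^{-1}$ is singular because of the box at the origin and $\prod_{\shcube,\shcube'}g_{\bar{4}}(\cdots)^{-1}$ is singular on the diagonal and at other special relative box positions; these singularities cancel, which is the statement that $\mathbf{v}_{\text{inst.}}|_{\rho}$ is movable (cf.\ Appendix~\ref{app:notations}) and hence that its index is a well-defined rational function of $q_{1,2,3,4}$. I would make this precise by writing $C_{\rho}=\mathcal{Z}^{\D8}_{\four;4}[\rho;K]\big/\prod_{\shcube\in\rho}(1-Kx/\chi_{\four,x}(\hcube))$: the numerator is finite for generic $K$ and generic $q$'s (the $\U(1|1)$ specialisation of the well-defined magnificent four partition function), the denominator is a nonzero polynomial in $K$, so $C_{\rho}$ is finite; being $K$-independent, it remains finite after imposing $K=q_{a}$, which only constrains $q_{a}$ and keeps the $q$'s generic. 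As a cross-check --- and an alternative route avoiding $C_{\rho}$ altogether --- I would verify the statement directly on the character: at $K=q_{a}$ one has $\bfN=\bfP_{a}x$, and a short bookkeeping of unmovable terms, of the same type as in the proof of Prop.~\ref{prop:sign_proof}, shows that $\mathbf{v}_{\text{inst.}}|_{\rho}$ acquires the constant term $+1$ exactly when $\hcube_{0}\in\rho$; since any character containing a $+1$ unmovable term has vanishing index, this reproduces $\mathcal{Z}^{\D8}_{\four;4}[\rho;q_{a}]=0$.
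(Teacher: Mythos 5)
Your proof is correct, and it is worth noting that the paper does not actually supply an argument for this lemma --- it is stated and then used (the neighbouring Lemma~\ref{lemm:D8-D6reduce-sign} gets a direct proof in Appendix~\ref{app-sec:D8-D6reduce-sign-proof}, but Lemma~\ref{lemm:D8toD6reduce} does not), so there is no official proof to compare against. Your route is the natural one and each step checks out: the $K$-dependence of $\mathcal{Z}^{\D8}_{\four;4}[\rho;K]$ sits entirely in the polynomial $\prod_{\shcube\in\rho}\bigl(1-Kx/\chi_{\four,x}(\hcube)\bigr)$, whose character counterpart is $K^{-1}x^{-1}\bfK$; the box $\hcube_{0}=(1,\ldots,2,\ldots,1)$ with $\chi_{\four,x}(\hcube_{0})=q_{a}x$ lies in every $\rho\notin\mathcal{PP}_{a}$ by the order-ideal property \eqref{eq:solidpartitionmelting}; and the $K$-independent factor $C_{\rho}$ is finite because the residual character $-x^{-1}\bfK+\bfP_{123}^{\vee}\bfK^{\vee}\bfK$ inherits movability from $\mathbf{v}_{\text{inst.}}$ (the $-1$ from the origin box in $-x^{-1}\bfK$ is cancelled against the $+1$ in the diagonal $\bfP_{123}^{\vee}$ terms, exactly as for the full character), and setting $K=q_{a}$ leaves the $q$'s generic. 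You correctly identified the finiteness of $C_{\rho}$ as the only delicate point; your quotient argument $C_{\rho}=\mathcal{Z}/\prod(1-Kx/\chi)$ disposes of it cleanly, and your character-level cross-check (the appearance of an unmovable $+1$ at $K=q_{a}$ when $\hcube_{0}\in\rho$, hence a factor $\mathbb{I}[1]=0$) is exactly in the spirit of the unmovable-term bookkeeping the paper uses elsewhere (Prop.~\ref{prop:sign_proof}, Appendix~\ref{app-sec:D8-D6reduce-sign-proof}). One small remark: $K=q_{a}$ may be a multiple root of the polynomial (e.g.\ boxes $(m+1,m+1,m+1,m+2)$ also satisfy $\chi=q_{4}x$ by $q_{1}q_{2}q_{3}q_{4}=1$), which only strengthens the vanishing but is worth keeping in mind for your parenthetical converse about $\rho\in\mathcal{PP}_{a}$.
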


\begin{lemma}\label{lemm:signfactorreduce}
Note that when the solid partition $\rho$ is reduced to a plane partition as $\pi\in\mathcal{PP}_{a}\,(a\in\four)$, the sign factor can be written in a simple form:
\bea
\sigma_{4}(\pi)=\begin{dcases}
    \min\{h_{4}(\pi)-1,0\},\quad \pi\in\mathcal{PP}_{1,2,3}\\
    0,\qquad \pi\in\mathcal{PP}_{4}
\end{dcases}
\eea
and
\bea
(-1)^{\sigma_{4}(\pi)}=\begin{dcases}
    (-1)^{\min\{-1+h_{4}(\pi),0\}},\quad \pi\in\mathcal{PP}_{1,2,3}\\
    1,\qquad \pi\in\mathcal{PP}_{4}
    \end{dcases}
\eea
where $h_{4}(\pi)$ here is the height of the plane partition in the $4$-direction. Similar formulas for $\sigma_{a}(\pi)$ can be obtained by using the quadrality symmetry.
\end{lemma}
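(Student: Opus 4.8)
The plan is to prove Lemma~\ref{lemm:signfactorreduce} by a direct combinatorial count, since once the two definitions involved are unwound there is nothing analytic left. First I would recall that, by Lemma~\ref{lemm:D8toD6reduce} and the specialization $K=q_{a}$, the solid partition does not grow in the $q_{a}$-direction, so as a set of boxes $\pi\in\mathcal{PP}_{a}$ is supported in the slice $\{x_{a}=1\}$; and that $\sigma_{4}(\pi)=\#\{(i,i,i,j)\in\pi\mid i<j\}$ by \eqref{eq:signfactor}. Hence the whole statement amounts to evaluating $\#\big(D\cap\pi\big)$, where $D=\{(i,i,i,j)\mid i\ge 1,\ i<j\}$ is the ``special diagonal'' in $\mathbb{Z}_{\ge1}^{4}$.

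Then I would split into the two cases of the lemma. If $a=4$: any box $(i,i,i,j)\in\pi\subset\{x_{4}=1\}$ has $j=1$, which is incompatible with $i<j$ because $i\ge1$; thus $D\cap\pi=\emptyset$ and $\sigma_{4}(\pi)=0$. If $a\in\{1,2,3\}$ (the three cases being interchanged by the obvious symmetry of $D$ in its first three slots): a box $(i,i,i,j)\in D$ has its $a$-th coordinate equal to $i$, so lying in $\pi\in\mathcal{PP}_{a}$ forces $i=1$. Therefore $D\cap\pi=\{(1,1,1,j)\in\pi\mid j\ge2\}$, i.e.\ the ``corner column'' of $\pi$ along the $4$-axis with its bottom box $(1,1,1,1)$ deleted. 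By the monotonicity defining a (plane) partition, the extent of $\pi$ along the $4$-axis is maximal over the corner $(x_{1},x_{2},x_{3})=(1,1,1)$, and this maximal extent is exactly $h_{4}(\pi)$; hence $\#(D\cap\pi)=h_{4}(\pi)-1$ whenever $\pi\neq\emptyset$, and $=0$ when $\pi=\emptyset$ (where $h_{4}(\pi)=0$). This gives the claimed value of $\sigma_{4}(\pi)$; the formulas for $(-1)^{\sigma_{4}(\pi)}$ follow by exponentiation, and those for general $\sigma_{a}(\pi)$ by the quadrality relabelling applied to $D_{a}=\{(x_{1},x_{2},x_{3},x_{4})\mid x_{b}=x_{c}=x_{d}<x_{a}\}$.

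The only step needing a line of care — and the closest thing to an obstacle — is the identification of ``the height $h_{4}(\pi)$ of the plane partition in the $4$-direction'' with the length of the diagonal corner column. For $\pi\in\mathcal{PP}_{1}$, say, $\pi$ is a genuine $3$-dimensional plane partition in the coordinates $(x_{2},x_{3},x_{4})$ with height function $(x_{2},x_{3})\mapsto\rho_{1,x_{2},x_{3}}$ along the $4$-axis; the partition axioms give $\rho_{1,x_{2},x_{3}}\le\rho_{1,1,1}$ for all $(x_{2},x_{3})$, so $h_{4}(\pi)=\rho_{1,1,1}$ and $(1,1,1,d)\in\pi\iff d\le\rho_{1,1,1}$, i.e.\ the corner column has length exactly $h_{4}(\pi)$; freezing $x_{2}$ or $x_{3}$ to $1$ instead covers $a=2,3$. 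I note in passing that with the convention $h_{4}(\emptyset)=0$ the uniform formula reads $\sigma_{4}(\pi)=\max\{h_{4}(\pi)-1,0\}$, which agrees with $h_{4}(\pi)-1$ on every non-empty $\pi$ and hence controls the sign $(-1)^{\sigma_{4}(\pi)}=(-1)^{h_{4}(\pi)-1}$ used in the reduction of the D8 $qq$-character to the D6 one.
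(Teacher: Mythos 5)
Your proof is correct and follows essentially the same route as the paper's: both reduce $\sigma_{4}(\pi)=\#\{(i,i,i,j)\in\pi\mid i<j\}$ to a case-by-case count, observing that the support condition $\pi\subset\{x_{a}=1\}$ forces $j=1$ (hence an empty set) when $a=4$, and forces $i=1$ (hence the corner column $\{(1,1,1,j)\in\pi\mid j\ge 2\}$, of cardinality $h_{4}(\pi)-1$) when $a\in\{1,2,3\}$. Your side remark that the uniform formula should read $\max\{h_{4}(\pi)-1,0\}$ rather than the $\min$ appearing in the statement correctly identifies a typo, and is consistent with the paper's own proof, which simply asserts $\sigma_{4}(\pi)=h_{4}(\pi)-1$ for nonempty $\pi$ and $0$ for $\pi=\emptyset$.
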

\begin{proof}
When $\pi\in\mathcal{PP}_{4}$, we have
\bea
\left\{(i,j)\mid (i,i,i,j)\in\pi,\,\, i< j\right\}&=\left\{(i,1)\mid (i,i,i,1)\in\pi,\,\, i< 1\right\}=\emptyset
\eea
which gives $(-1)^{\sigma_{4}(\pi)}=(-1)^{2}=1$.

For other cases, when $\pi\in\mathcal{PP}_{a}\,(a=1,2,3)$:
\bea
\left\{(i,j)\mid (i,i,i,j)\in\pi,\,\, i< j\right\}&=\left\{(1,j)\mid (1,1,1,j)\in\pi,\,\, 1<j\right\}.
\eea
When $\pi=\emptyset$, then this set is also empty, which gives $\sigma_{4}(\pi)=0$. Focusing on the three-dimensional part $(1,1,j)$ where $(1,1)$ is a two-dimensional part of $\four\setminus 4$, the number of the elements of this set is rewritten using the height of the plane partition in the fourth direction:
\bea
\sigma_{4}(\pi)=h_{4}(\pi)-1,\quad h_{4}(\pi)\coloneqq\min\left\{j\geq 1\mid (1,1,j+1)\neq \pi\right\}.
\eea

\end{proof}

\begin{lemma}\label{lemm:D8-D6reduce-sign}
After specializing $K=q_{a}\,(a\in\four)$, we have the following identity:
\bea
(-1)^{\sigma_{4}(\pi)}\mathcal{Z}_{\four;4}^{\D8}[\pi,q_{a}]=\widetilde{\mathcal{Z}}^{\D6}_{\bar{a}}[\pi], \quad a\in\four
\eea
where the plane partition $\pi\in\mathcal{PP}_{a}$.
\end{lemma}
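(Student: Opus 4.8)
The plan is to reduce the statement to the independence of the magnificent four vertex on the choice of the ``special'' direction of the square root, i.e.\ to Theorem~\ref{thm:M4invariance}. First I would observe that the restriction to $\pi\in\mathcal{PP}_{a}$ is the only interesting case: by Lemma~\ref{lemm:D8toD6reduce} the coefficient $\mathcal{Z}^{\D8}_{\four;4}[\rho,q_{a}]$ vanishes identically unless $\rho\in\mathcal{PP}_{a}$, so for such $\rho$ there is nothing to prove, while for a genuine plane partition $\pi\in\mathcal{PP}_{a}$ both sides are a priori non-zero and the content of the lemma is that they agree \emph{including} the sign prefactor $(-1)^{\sigma_{4}(\pi)}$. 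The guiding idea is that once $K$ is tuned to $q_{a}$, the natural square-root direction of the theory is no longer $4$ but $a$, and relabelling the special direction from $4$ to $a$ should turn the D8 coefficient literally into the D6 coefficient $\widetilde{\mathcal{Z}}^{\D6}_{\bar{a}}[\pi]$.

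Next I would treat $a=4$ separately, where everything is immediate: Lemma~\ref{lemm:signfactorreduce} gives $\sigma_{4}(\pi)=0$ for $\pi\in\mathcal{PP}_{4}$, and substituting $K=q_{4}$ into $\mathcal{Z}^{\D8}_{\four;4}[\pi,K]$ reproduces the formula \eqref{eq:D6U1partitionfunction} for $\widetilde{\mathcal{Z}}^{\D6}_{\bar{4}}[\pi]$ verbatim, since on a plane partition lying in the $123$-plane the box character $\chi_{\four,x}(\hcube)$ is exactly $\chi_{\bar{4},x}(\cube)$. For $a\in\{1,2,3\}$ I would then apply Theorem~\ref{thm:M4invariance} with $b=a$ to write
\[
(-1)^{\sigma_{4}(\pi)}\,\mathcal{Z}^{\D8}_{\four;4}[\pi,q_{a}]=(-1)^{\sigma_{a}(\pi)}\,\mathcal{Z}^{\D8}_{\four;a}[\pi,q_{a}],
\]
and use the quadrality image of Lemma~\ref{lemm:signfactorreduce} (namely $\sigma_{a}(\pi)=0$ whenever $\pi\in\mathcal{PP}_{a}$) to kill the remaining sign. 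What is then left is the identification $\mathcal{Z}^{\D8}_{\four;a}[\pi,q_{a}]=\widetilde{\mathcal{Z}}^{\D6}_{\bar{a}}[\pi]$, which is pure bookkeeping: every box of $\pi\in\mathcal{PP}_{a}$ has trivial exponent in the $a$-direction, so $\chi_{\four,x}(\hcube)=\chi_{\bar{a},x}(\cube)$; the specialization $K=q_{a}$ turns the prefactor $\tfrac{1-Kx/\chi_{\four,x}(\hcube)}{1-x/\chi_{\four,x}(\hcube)}$ into $\tfrac{1-q_{a}x/\chi_{\bar{a},x}(\cube)}{1-x/\chi_{\bar{a},x}(\cube)}$; and the structure-function product $\prod g_{\bar{a}}(\cdot)^{-1}$ in $\mathcal{Z}^{\D8}_{\four;a}$ is already the one appearing in \eqref{eq:D6U1partitionfunction}. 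Comparing with the definition of $\widetilde{\mathcal{Z}}^{\D6}_{\bar{a}}[\pi]$ closes the argument.

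I do not expect a genuine obstacle here: the non-formal input, Theorem~\ref{thm:M4invariance}, is already available (cited from \cite{Monavari:2022rtf}) and is exactly what trades $g_{\bar{4}}$ for $g_{\bar{a}}$ at the level of the full solid partition, after which the reduction to $\mathcal{PP}_{a}$ makes the rest trivial. If anything, the only point requiring care is making the quadrality version of Lemma~\ref{lemm:signfactorreduce} explicit — i.e.\ verifying that the combinatorial count defining $\sigma_{a}$ indeed gives zero on plane partitions that do not extend in the $a$-direction — but this is an immediate relabelling of the proof of Lemma~\ref{lemm:signfactorreduce} already given. I would also remark, for completeness, that one could bypass Theorem~\ref{thm:M4invariance} and instead verify the $g_{\bar{4}}\leftrightarrow g_{\bar{a}}$ identity plus the sign directly by an index/unmovable-terms computation in the style of Prop.~\ref{prop:sign_proof}, but invoking the invariance theorem is far shorter.
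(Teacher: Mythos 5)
Your proof is correct, but it takes a genuinely different route from the one the paper actually carries out. The paper's real argument is the direct computation in Appendix~\ref{app-sec:D8-D6reduce-sign-proof}: it writes both sides as equivariant indices, splits $\bfP_{234}^{\vee}=\bfP_{23}^{\vee}-q_{1}\bfP_{23}$, and then evaluates the unmovable part $\left[q_{4}\bfP_{23}\bm{\pi}^{\vee}\bm{\pi}\right]^{(0)}=h_{4}(\pi)-1$ by hand, matching it against the formula for $\sigma_{4}(\pi)$ from Lemma~\ref{lemm:signfactorreduce}. You instead outsource all of the combinatorial content to Theorem~\ref{thm:M4invariance}: switching the square-root direction from $4$ to $a$ converts the sign $(-1)^{\sigma_{4}(\pi)}$ into $(-1)^{\sigma_{a}(\pi)}$, which vanishes on $\mathcal{PP}_{a}$ (the quadrality image of Lemma~\ref{lemm:signfactorreduce} — your check that $x_{a}=1$ makes the defining condition $x_{j}=x_{k}=x_{l}<x_{a}$ empty is the right one), and the remaining identification $\mathcal{Z}^{\D8}_{\four;a}[\pi,q_{a}]=\widetilde{\mathcal{Z}}^{\D6}_{\bar{a}}[\pi]$ is indeed verbatim bookkeeping since $\chi_{\four,x}(\hcube)=\chi_{\bar{a},x}(\cube)$ on $\mathcal{PP}_{a}$ and the $g_{\bar{a}}$ products already agree. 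Each approach buys something: yours is shorter and conceptually cleaner (the lemma becomes a corollary of square-root independence), but it leans on an external result cited from \cite{Monavari:2022rtf}; the paper's appendix computation is self-contained within the index/unmovable-terms formalism of Prop.~\ref{app-prop:reflection_sign}--\ref{app-prop:reflection-mod} and, as a by-product, independently verifies the special case of Theorem~\ref{thm:M4invariance} relevant here. Your closing remark correctly identifies that the direct computation is the alternative — it is in fact the one the authors chose.
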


\begin{proof}
For $a=4$, it is obvious. Other cases are simply obtained by using Lemma~\ref{lemm:D8toD6reduce} and Lemma~\ref{lemm:signfactorreduce}. A rather direct proof is given in Appendix~\ref{app-sec:D8-D6reduce-sign-proof}.
\end{proof}

\begin{proposition}
The $\D8$ $qq$-character reduces to an expansion of plane partitions of $\mathcal{PP}_{a}$. Using 
    \bea
    \Lambda^{q_{a}}_{\four,\pi}(x)=\Lambda_{\bar{a},\pi}(x),\quad \pi\in\mathcal
{PP}_{a},
    \eea
    we have
    \bea
    \mathsf{T}_{\four}^{q_{a}}(x)&=\sum_{\pi\in\mathcal{PP}_{a}}\mathfrak{q}^{|\pi|}(-1)^{\sigma_{4}(\pi)}\mathcal{Z}^{\D8}_{\four;4}[\pi,q_{a}]\Lambda_{\bar{a},\pi}(x)\\
    &=\mathsf{T}_{\bar{a}}(x).
    \eea
\end{proposition}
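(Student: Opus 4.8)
The plan is to specialize the auxiliary parameter $K$ to $q_{a}$ in the definition of the D8 $qq$-character and then to assemble the three reduction lemmas that precede the statement. First I would write
\[
\mathsf{T}^{q_{a}}_{\four}(x)=\sum_{\rho\in\mathcal{SP}}\mathfrak{q}^{|\rho|}(-1)^{\sigma_{4}(\rho)}\mathcal{Z}^{\D8}_{\four;4}[\rho,q_{a}]\,\Lambda^{q_{a}}_{\four,\rho}(x),
\]
and invoke Lemma~\ref{lemm:D8toD6reduce}: since $\mathcal{Z}^{\D8}_{\four;4}[\rho,q_{a}]=0$ whenever the solid partition $\rho$ grows in the $q_{a}$ direction, every surviving term is labelled by a solid partition with trivial extent in the $a$-th direction, i.e.\ by an element of $\mathcal{PP}_{a}$. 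The sum therefore collapses to $\sum_{\pi\in\mathcal{PP}_{a}}\mathfrak{q}^{|\pi|}(-1)^{\sigma_{4}(\pi)}\mathcal{Z}^{\D8}_{\four;4}[\pi,q_{a}]\Lambda^{q_{a}}_{\four,\pi}(x)$, which is the first displayed equality of the proposition; the topological grading is preserved because $|\rho|=|\pi|$ for the solid partition associated to the plane partition $\pi$.

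Next I would match the two remaining ingredients term by term. For the operators, the relation $\mathsf{Z}(q_{a},x)=\mathsf{W}_{\bar{a}}(x)$ of \eqref{eq:vertexoprelation} is an exact identity of vertex operators (not merely up to zero modes), and for $\pi\in\mathcal{PP}_{a}$ every box of $\pi$ has its $a$-th coordinate equal to $1$, so $\chi_{\four,x}(\hcube)=\chi_{\bar{a},x}(\cube)$ with $\cube$ the image box in the three transverse directions; hence
\[
\Lambda^{q_{a}}_{\four,\pi}(x)={:\mathsf{Z}(q_{a},x)\prod_{\shcube\in\pi}\mathsf{A}^{-1}(\chi_{\four,x}(\hcube)):}={:\mathsf{W}_{\bar{a}}(x)\prod_{\scube\in\pi}\mathsf{A}^{-1}(\chi_{\bar{a},x}(\cube)):}=\Lambda_{\bar{a},\pi}(x),
\]
which is the identification used in the statement. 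For the coefficients, Lemma~\ref{lemm:D8-D6reduce-sign} gives exactly $(-1)^{\sigma_{4}(\pi)}\mathcal{Z}^{\D8}_{\four;4}[\pi,q_{a}]=\widetilde{\mathcal{Z}}^{\D6}_{\bar{a}}[\pi]$ for $\pi\in\mathcal{PP}_{a}$. Substituting both identities, and noting that $\mathcal{PP}_{a}$ is literally the index set $\mathcal{PP}$ of plane partitions spanning $\mathbb{C}^{3}_{\bar{a}}$ that labels the D6 $qq$-character, I obtain $\mathsf{T}^{q_{a}}_{\four}(x)=\sum_{\pi\in\mathcal{PP}}\mathfrak{q}^{|\pi|}\widetilde{\mathcal{Z}}^{\D6}_{\bar{a}}[\pi]\Lambda_{\bar{a},\pi}(x)=\mathsf{T}_{\bar{a}}(x)$ by definition of the D6 $qq$-character.

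The case $a=4$ is immediate, since there $\mathsf{Z}(q_{4},x)=\mathsf{W}_{\bar{4}}(x)$, the sign factor $\sigma_{4}$ vanishes on $\mathcal{PP}_{4}$, and $\mathcal{Z}^{\D8}_{\four;4}[\pi,q_{4}]=\widetilde{\mathcal{Z}}^{\D6}_{\bar{4}}[\pi]$ by construction; the remaining values of $a$ follow either from the quadrality symmetry or directly from the appendix computation underlying Lemma~\ref{lemm:D8-D6reduce-sign}. Since Lemmas~\ref{lemm:D8toD6reduce}, \ref{lemm:signfactorreduce} and \ref{lemm:D8-D6reduce-sign} carry all the analytic content, there is no hard step left; the only point demanding genuine (though elementary) care is the combinatorial bookkeeping --- verifying that the solid partitions with non-vanishing $\mathcal{Z}^{\D8}_{\four;4}[\,\cdot\,,q_{a}]$ are in exact bijection with the label set $\mathcal{PP}_{a}$, with no over- or under-counting, and that this bijection simultaneously intertwines the sign factor, the coefficient, the topological weight, and the monomial vertex operator.
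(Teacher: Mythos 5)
Your proposal is correct and follows essentially the same route as the paper: the proposition is stated there as a direct assembly of Lemma~\ref{lemm:D8toD6reduce} (collapse of the sum to $\mathcal{PP}_{a}$), Lemma~\ref{lemm:D8-D6reduce-sign} (matching of the signed coefficients with $\widetilde{\mathcal{Z}}^{\D6}_{\bar{a}}[\pi]$), and the exact vertex-operator identity $\mathsf{Z}(q_{a},x)=\mathsf{W}_{\bar{a}}(x)$ from \eqref{eq:vertexoprelation}, which is precisely the chain of identifications you carry out. Your additional remark that boxes of $\pi\in\mathcal{PP}_{a}$ have trivial $a$-th coordinate, so that $\chi_{\four,x}(\hcube)=\chi_{\bar{a},x}(\cube)$ and the topological grading is preserved, is the correct (if implicit in the paper) bookkeeping.
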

Combining with the Conj.~\ref{conj:D8commutativity}, we obtain the commutativity of the D6 $qq$-characters.

\begin{corollary}\label{cor:D6commutativity}
    The D6 $qq$-characters all commute with each other:
    \bea
\mathsf{f}^{q_{b}|q_{a}}_{\four\four}\left(x_{2}/x_{1}\right)\mathsf{T}_{\bar{a}}(x_{1})\mathsf{T}_{\bar{b}}(x_{2})-\mathsf{f}^{q_{a}|q_{b}}_{\four\four}\left(x_{1}/x_{2}\right)\mathsf{T}_{\bar{b}}(x_{2})\mathsf{T}_{\bar{a}}(x_{1})=0,\quad a,b\in\four.
    \eea
    Moreover, they also commute with the D8 $qq$-characters:
    \bea
    \mathsf{f}^{q_{a}|K}_{\four\four}\left(x_{2}/x_{1}\right)\mathsf{T}_{\four}^{K}(x_{1})\mathsf{T}_{\bar{a}}(x_{2})-\mathsf{f}^{K|q_{a}}_{\four\four}\left(x_{1}/x_{2}\right)\mathsf{T}_{\bar{a}}(x_{2})\mathsf{T}^{K}_{\four}(x_{1})=0,\quad a\in\four.
    \eea
\end{corollary}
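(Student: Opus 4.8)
\textbf{Proof strategy for Corollary~\ref{cor:D6commutativity}.}
The plan is to obtain both commutativity statements as direct specializations of Conjecture~\ref{conj:D8commutativity} under the substitutions $K = q_a$. First I would recall the reduction established just above: by Lemma~\ref{lemm:D8toD6reduce}, Lemma~\ref{lemm:signfactorreduce}, and Lemma~\ref{lemm:D8-D6reduce-sign}, specializing the $\mathsf{Z}$-operator parameter to $K = q_a$ collapses the sum over solid partitions to a sum over $\mathcal{PP}_a$, and the signed magnificent-four coefficient $(-1)^{\sigma_4(\pi)}\mathcal{Z}^{\D8}_{\four;4}[\pi,q_a]$ becomes precisely the D6 coefficient $\widetilde{\mathcal{Z}}^{\D6}_{\bar a}[\pi]$, while the operator $\Lambda^{q_a}_{\four,\pi}(x)$ becomes $\Lambda_{\bar a,\pi}(x)$. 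Hence $\mathsf{T}^{q_a}_{\four}(x) = \mathsf{T}_{\bar a}(x)$ as operators, which is the Proposition immediately preceding the Corollary. The structure function reduces accordingly: $\mathsf{f}^{q_a|q_b}_{\four\four}(y)$ is by definition $\mathcal{Z}^{\D8\tbar\D8}_{\text{1-loop}}(x_1,q_a\mid x_2,q_b)^{-1}$, and there is nothing to simplify — one simply records that the relevant structure function appearing in the D6 quadratic relation should be \emph{defined} to be this specialization, so the statement is self-consistent by construction.

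With these identifications in hand, the first displayed identity of the Corollary is obtained by taking Conjecture~\ref{conj:D8commutativity} with $K_1 = q_a$, $K_2 = q_b$, $x_1,x_2$ generic, and then substituting $\mathsf{T}^{q_a}_{\four}(x_1) = \mathsf{T}_{\bar a}(x_1)$, $\mathsf{T}^{q_b}_{\four}(x_2) = \mathsf{T}_{\bar b}(x_2)$, together with the corresponding specialization of the prefactors $\mathsf{f}^{K_2|K_1}_{\four\four}$, $\mathsf{f}^{K_1|K_2}_{\four\four}$. The second displayed identity is the even simpler case: set $K_1 = K$ arbitrary and $K_2 = q_a$ (or vice versa), apply the same substitution only on the second factor, and read off the relation between $\mathsf{T}^K_{\four}(x_1)$ and $\mathsf{T}_{\bar a}(x_2)$. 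Since both relations are literal substitutions into an already-asserted identity, the ``proof'' is really a verification that the specialization is legitimate: one must check that the reduction $K \to q_a$ does not introduce spurious poles in $\mathsf{f}^{q_b|q_a}_{\four\four}$ or in the operator products — i.e.\ that the factors $(1 - K_i^{\pm n})$ vanishing at $K_i = q_a$ do not collide with zeros of $\bfP^{[n]}_{\four}$ in a way that changes the analytic content — but this is the same check already implicit in the Proposition $\mathsf{T}^{q_a}_{\four}(x) = \mathsf{T}_{\bar a}(x)$.

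\textbf{Main obstacle.} The genuine difficulty is entirely inherited: Conjecture~\ref{conj:D8commutativity} is only verified up to five instantons and only after excluding higher-order poles of $\mathcal{Z}^{\D8\tbar\D8}_{K_1|K_2}(x_1,\rho^{(1)}\mid x_2,\rho^{(2)})$, so the Corollary is strictly a corollary of a conjecture and cannot be stronger. The one subtlety worth flagging explicitly in the write-up is that the higher-order-pole issue might \emph{improve} under the specialization $K \to q_a$ (since the solid partitions collapse to plane partitions and the problematic coincidences at $x_1 = x_2$ may become milder); a careful treatment would examine whether, for $K_i \in \{q_a\}$, the factor $\mathsf{f}^{q_b|q_a}_{\four\four}(x_2/x_1)\mathsf{T}_{\bar a}(x_1)\mathsf{T}_{\bar b}(x_2)$ has only simple poles, in which case the D6 commutativity would be on firmer footing than the D8 statement it descends from. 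I would not attempt to resolve that here, but simply note it and present the Corollary as the direct consequence of Conjecture~\ref{conj:D8commutativity} and the reduction Proposition, exactly as phrased.
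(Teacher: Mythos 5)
Your proposal matches the paper's own derivation exactly: the corollary is obtained by specializing Conjecture~\ref{conj:D8commutativity} at $K_{1}=q_{a}$, $K_{2}=q_{b}$ (respectively $K_{2}=q_{a}$ with $K_{1}=K$ generic) and invoking the reduction $\mathsf{T}^{q_{a}}_{\four}(x)=\mathsf{T}_{\bar{a}}(x)$ from the preceding proposition, which itself rests on Lemmas~\ref{lemm:D8toD6reduce}--\ref{lemm:D8-D6reduce-sign}. Your additional remarks — that the result is conditional on the conjecture and that the pole structure might improve under the specialization — are accurate and do not change the argument.
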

Conj.~\ref{conj:D8commutativity} and Cor.~\ref{cor:D6commutativity} show that all the D6 and D8 $qq$-characters commute with each other. This property is surprising from the quantum algebraic viewpoint because usually when the generators of the deformed W-algebra commute with each other, one would expect it to be a trivial algebra, though they do reproduce the tetrahedron instantons and magnificent four partition functions. For the moment, we do not know how to understand this phenomenon from the representation theoretic viewpoint and details are left for future work.

\paragraph{Relation with the plethystic exponential formulas} Although, the quantum algebraic meaning of the commutativity of the D6 and D8 $qq$-characters is not so clear for the moment, it has a physical meaning. The commutativity of the D6 and D8 $qq$-characters is actually a consequence of the fact that the partition functions of the tetrahedron instantons and magnificent four do \emph{not} depend on the Coulomb branch parameters (spectral parameters) and have a beautiful plethystic exponential formula \cite{Awata:2009dd,Nekrasov:2017cih,Nekrasov:2009JJM,Pomoni:2023nlf, Fasola:2023ypx}. Recall that the plethystic exponential formula for the $\U(n|n)$ magnificent four instanton partition function is given as
\bea
\mathcal{Z}^{\D8}_{\text{PE}}\,\left[\{K_{\alpha}\}_{\alpha=1}^{n}\right]\coloneqq\PE\left[\frac{-q_{4}\prod_{i=1}^{3}(1-q_{i4}^{-1})}{\prod_{a\in\four}(1-q_{a})}\frac{1-\prod_{\alpha=1}^{n}K_{\alpha}^{-1}}{(1-\mathfrak{q})(1-\prod_{\alpha=1}^{n}K_{\alpha}^{-1}\mathfrak{q}^{-1})}\right]
\eea
in our notation. Namely, the instanton partition function only depends on the product $\prod_{\alpha=1}^{n}K_{\alpha}$ which is algebraically the central charge of this system. Moreover by setting $K_{\alpha}$ to $q_{1,2,3,4}$, we will get the plethystic formula for the tetrahedron instantons \cite{Pomoni:2023nlf, Fasola:2023ypx}. Physically, this corresponds to the tachyon condensation of the pairs of $\D8\tbar\overline{\D8}$ branes. 

Using \eqref{eq:D8qqBPSCFT}, we have
\bea
\prod_{\beta>\alpha}\mathsf{f}_{\four\four}^{K_{\alpha}|K_{\beta}}(x_{\alpha}/x_{\beta})\bra{0}\mathsf{T}^{K_{n}}_{\four}(x_{n})\cdots \mathsf{T}_{\four}^{K_{1}}(x_{1})\ket{0}=\mathcal{Z}^{\D8}_{\text{PE}}\,\left[\{K_{\alpha}\}_{\alpha=1}^{n}\right].
\eea
Since the right hand side does not depend on $\{x_{\alpha}\}_{\alpha=1}^{n}$, the left hand side should also not depend on them. Focusing on $n=2$, this gives exactly the commutativity in Conj.~\ref{conj:D8commutativity}. Similarly, this discussion is applicable to the tetrahedron instanton case by tuning the parameters $\{K_{\alpha}\}_{\alpha=1}^{n}$.

We stress that we are \emph{not} saying that the commutativity of the D6, D8 $qq$-characters proves the independence of the Coulomb branch parameters nor the existence of a plethystic formula. We are saying that if such kind of properties exist, then we should have the commutativity of the $qq$-characters and indeed for the D6, D8 $qq$-characters, it is true. 

We also note that this commutativity is \emph{not} satisfied for D4 and D6 $qq$-characters. As mentioned in \cite{Kimura:2023bxy}, we can introduce a $\D6\tbar\overline{\D6}$ $qq$-character which after tachyon condensation, we can reproduce the spiked instanton D4 $qq$-characters. The commutation relation of such $\D6\tbar\overline{\D6}$ $qq$-characters actually reproduces extra terms which imply nontrivial quadratic relations for the affine quiver W-algebra. Moreover, the commutation relations of the D6 $qq$-characters with nontrivial boundary conditions discussed in the following sections also seem to produce extra terms. Discussion of all of these cases are left for future work.

\subsection{Leg boundary conditions}\label{sec:D8qqlegboundary}
The D8 $qq$-characters with nontrivial leg boundary conditions are defined as the following.
\begin{definition}
Let $\pi_{a},\,(a\in\four)$ be finite plane partitions and $\mathcal{SP}_{\pi_{1}\pi_{2}\pi_{3}\pi_{4}}$ be the set of possible solid plane partitions with the boundary plane partitions $\pi_{1,2,3,4}$. The D8 $qq$-character with leg boundary conditions is defined as 
    \bea
    \mathsf{T}_{\four,\pi_{1}\pi_{2}\pi_{3}\pi_{4}}^{K}(x)=\sum_{\rho\in\mathcal{SP}_{\pi_{1}\pi_{2}\pi_{3}\pi_{4}}}\mathfrak{q}^{|\rho|}(-1)^{\sigma_{4}(\rho)}\mathcal{Z}^{\D8}_{\four;4;\pi_{1}\pi_{2}\pi_{3}\pi_{4}}[\rho,K]\Lambda_{\four,\rho}^{K;\pi_{1}\pi_{2}\pi_{3}\pi_{4}}(x),
    \eea
    where
    \bea
    \Lambda_{\four,\rho}^{K;\pi_{1}\pi_{2}\pi_{3}\pi_{4}}(x)={:\frac{\mathsf{Z}(K,x)}{\prod_{a\in\four}\prod_{\scube\in\pi_{a}}\mathsf{S}_{a}(\chi_{\bar{a},x}(\cube))}\prod_{\shcube\in\mathcal{S}_{\pi_{1}\pi_{2}\pi_{3}\pi_{4}}}\mathsf{A}(\chi_{\four,x}(\hcube))\prod_{\shcube\in\rho}\mathsf{A}^{-1}(\chi_{\four,x}(\hcube)):}
    \eea
    and $\mathcal{Z}^{\D8}_{\four;4;\pi_{1}\pi_{2}\pi_{3}\pi_{4}}[\rho,K]$, $\sigma_{4}(\rho)$ are defined in \eqref{eq:D8legpartition} and \eqref{eq:D8boundary_signrule}, respectively.
\end{definition}
We give a derivation of this D8 $qq$-character using the fusion process of the D6 $qq$-characters. To make the discussion simple, we only focus on the case when there is one nontrivial leg plane partition.

\paragraph{One-leg D8 $qq$-character: part 1}
Let us derive the one-leg D8 $qq$-character by taking infinite products of lower dimensional $qq$-characters. In particular, we will use the D6 $qq$-characters with boundary conditions spanning the 123-plane and choose the 4-direction to be a special direction. This decomposition is the one explained in section~\ref{sec:D8partitionfunction}. Since this decomposition breaks the quadrality symmetry and we only have the triality symmetry with respect with the 1,2,3 directions, we have two cases that needs to be treated differently. The D8 $qq$-character with a plane partition extending at one of the 123-directions and the D8 $qq$-character with a plane partition extending at the 4-direction.

Let us consider the case when the solid partition $\widetilde{\rho}$ has a boundary plane partition $\pi_{3}$ extending semi-infinitely in the 3-direction as \eqref{eq:fig-solidpartitionleg123}.
We can use the $(1,3)$ decomposition of the infinite size solid partition $\tilde{\rho}$ (whose finite part is denoted by $\rho$) and decompose it into infinite size plane partitions $\tilde{\rho}=(\widetilde{\Pi}^{(1)},\widetilde{\Pi}^{(2)},\widetilde{\Pi}^{(3)},\ldots)$. Under this decomposition, the boundary plane partition $\pi_{3}$ will be also decomposed into non-increasing sequences of Young diagrams:
\bea
\pi_{3}=\left(\nu^{(1)},\nu^{(2)},\nu^{(3)},\ldots \right),\quad \nu^{(i)}\succeq \nu^{(i+1)}
\eea
where $\nu^{(k)}$ are finite Young diagrams. Therefore, for the $k$-th layer, we will have an infinite size plane partition $\widetilde{\Pi}^{(k)}$ with an asymptotic Young diagram $\nu^{(k)}$. We denote $\Pi^{(k)}$ as the set of boxes not included in the boundary Young diagram $\nu^{(k)}$. The D6 $qq$-character for each layer comes from the following ingredients:
\bea
\text{operator part:}&\qquad \Lambda^{K,\,\emptyset\emptyset\nu^{(k)}}_{\bar{4},\Pi^{(k)}}(q_{4}^{k-1}x)={:\frac{\mathsf{W}_{\bar{4}}(q_{4}^{k-1}x)\prod_{\scube\in\Pi^{(k)}}\mathsf{A}^{-1}(\chi_{\bar{4},q_{4}^{k-1}x}(\cube))}{\mathsf{W}_{\bar{4}}(Kq_{4}^{k-1}x)\prod_{\Abox\in\nu^{(k)}}\mathsf{S}_{3}(\chi_{12,q_{4}^{k-1}x}(\Bbox))}:},\\
\text{coefficient part:}&\qquad \widetilde{\mathcal{Z}}_{\bar{4};\emptyset\emptyset\Lambda^{(k)}}^{\D6}[\,K,\Pi^{(k)}\,].
\eea
The coefficient part and the operator part of the D8 $qq$-character comes from the infinite product
\bea
\prod_{k=1}^{\infty}\widetilde{\mathcal{Z}}_{\bar{4};\emptyset\emptyset\nu^{(k)}}^{\D6}[\,K,\Pi^{(k)}\,]\overleftarrow{\prod_{k=1}^{\infty}}\Lambda^{K,\,\emptyset\emptyset\nu^{(k)}}_{\bar{4},\Pi^{(k)}}(q_{4}^{k-1}x).
\eea
\begin{proposition}\label{prop:D6toD8-oneleg1}
    We have
    \bea\label{eq:D6toD8-oneleg1}
\prod_{k=1}^{\infty}\widetilde{\mathcal{Z}}_{\bar{4};\emptyset\emptyset\nu^{(k)}}^{\D6}[\,K,\Pi^{(k)}\,]\overleftarrow{\prod_{k=1}^{\infty}}\Lambda^{K,\,\emptyset\emptyset\nu^{(k)}}_{\bar{4},\Pi^{(k)}}(q_{4}^{k-1}x)\simeq (-1)^{\widetilde{\sigma}_{4}(\rho)}\mathcal{Z}^{\D8}_{\four;4;\emptyset\emptyset\pi_{3}\emptyset}[\rho,K] \Lambda^{K;\emptyset\emptyset\pi_{3}\emptyset}_{\four;\rho}(x)
    \eea
    where the equality is up to one-loop perturbative factors and contributions coming only from the boundaries. Extra sign factors $\widetilde{\sigma}_{4}(\rho)$ depending on $\rho$ will also appear in this equality.
\end{proposition}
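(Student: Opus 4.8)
The plan is to mimic the proof of Prop.~\ref{prop:D6toD8vacuumproof} (the no-boundary case) with the D6 $qq$-characters now carrying a single leg boundary condition $\nu^{(k)}$ on each layer. First I would compute the contraction of the operator parts $\Lambda^{K,\emptyset\emptyset\nu^{(k)}}_{\bar4,\Pi^{(k)}}(q_4^{k-1}x)$ using a boundary-decorated version of the D6--D6 contraction lemma already used in Section~\ref{sec:fusionD6toD8}. The operator part of the right-hand side is essentially forced: the $\mathsf{W}_{\bar4}/\mathsf{W}_{\bar4}(K\,\cdot)$ factors telescope into $\mathsf{Z}(K,x)$ (using \eqref{eq:vertexoprelation}), the $\mathsf{S}_3$-insertions from $\nu^{(k)}$ reassemble, layer by layer, into the $\prod_{\scube\in\pi_3}\mathsf{S}_3(\chi_{\bar 3,x}(\cube))^{-1}$ appearing in $\Lambda^{K;\emptyset\emptyset\pi_3\emptyset}_{\four,\rho}$, and the $\mathsf{A}^{-1}$-insertions from the $\Pi^{(k)}$'s plus the $\mathcal S$-term assemble into $\prod_{\shcube\in\rho}\mathsf{A}^{-1}\prod_{\shcube\in\mathcal S}\mathsf A$; this is the same ``obvious'' operator-part computation referenced in the proofs of Prop.~\ref{lem:finiteplanedecomposition} and Prop.~\ref{prop:D6toD8vacuumproof}, now with the extra bookkeeping of how $\nu^{(k)}$ stacks along the 4-axis into $\pi_3$. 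So the substance is entirely in the coefficient part.

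For the coefficients I would write down the total character $\mathbf v_{\D6\to\D8}^{\text{1-leg}}$ as the sum of $\sum_k \widetilde{\mathbf v}[\Pi^{(k)},\nu^{(k)}]$ from $\prod_k\widetilde{\mathcal Z}^{\D6}_{\bar4;\emptyset\emptyset\nu^{(k)}}[K,\Pi^{(k)}]$ and the pairwise terms from $\prod_{i<j}\mathcal Z^{\D6|\D6}_{\bar4;K|\bar4;K}$, where now $\mathbf Y_{\bar4}|_{\Pi^{(k)}}$ includes $\mathbf N_{\emptyset\emptyset\nu^{(k)}}$ as in Section~\ref{sec:D6_qq}. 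Then, exactly as in the proof of Prop.~\ref{prop:D6toD8vacuumproof}, I would show this character is movable, use the reflection properties \eqref{eq:reflectionprop0}--\eqref{eq:reflectionprop2} to rewrite the ``$q_4$-positive'' cross terms $-\mathbf P_4(1-K)x_j\bm\Pi^{(i)\vee}$, $-\mathbf P_4(1-K)x_j\bm N^{(i)\vee}_{\nu}$ and the mixed $\bm\Pi\bm N_\nu$ terms in dual form (the key point being $j>i$ forces strict $q_4$-positivity so no unmovable term is produced), and then isolate the genuine sign from the $\sum_{i<j}\mathbf P_{\four}\bm\Pi^{(i)\vee}\bm\Pi^{(j)}$ piece via the sign functional $s(\cdot)$ of Def.~\ref{def:signfactorrgeneral}. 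Collecting the movable terms, the index collapses to $\mathcal Z^{\D8}_{\four;4;\emptyset\emptyset\pi_3\emptyset}[\rho,K]$ of \eqref{eq:D8legpartition} up to an overall boundary-only factor $\widetilde{\mathcal Z}^{\D6}_{\bar4}[\pi_3]$-type contribution, and up to an overall sign $(-1)^{\widetilde\sigma_4(\rho)}$ defined by the residual unmovable contributions — which is precisely the content of the statement, since $\widetilde\sigma_4(\rho)$ is \emph{allowed} to differ from $\sigma_4(\rho)$ here (the sign rule $\eqref{eq:D8boundary_signrule}$ is only conjectured and a genuine derivation is deferred to the later comparison with Prop.~\ref{prop:sign_proof}).

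The main obstacle I anticipate is controlling the unmovable terms coming from the interaction between the boundary characters $\bm N_{\nu}$ and the bulk $\bm\Pi$'s — in the no-boundary case the only unmovable contributions arose from $\bm\Pi^{(i)\vee}\bm\Pi^{(j)}$ with $(A,B,C)\in\Pi^{(i)}\cap\Pi^{(j)}$ and gave the clean answer of Prop.~\ref{prop:sign_proof}, but now $\bm\Pi^{(k)}$ lives ``above'' the boundary Young diagram $\nu^{(k)}$ and the semi-infinite leg can produce additional $q_4$-degree-zero terms whose count is $\widetilde\sigma_4(\rho)$. I would handle this by the same layer-by-layer decomposition $\bm\Pi^{(i)}=\Delta_i(\eta)+\bm\Pi^{(i)}(\eta)$ used in Prop.~\ref{prop:sign_proof}, checking that the $\bm N_\nu$ terms either are manifestly movable (because of the semi-infinite $q_3$-direction, $\mathbf P_{123}\bm N_\nu^\vee$ has no degree-zero part once paired against a $q_4$-positive partner) or contribute only a boundary-dependent overall factor. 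I expect that in this one-leg case $\widetilde\sigma_4(\rho)$ will actually coincide with $\sigma_4(\rho)$ of \eqref{eq:D8boundary_signrule} once the boxes are counted correctly, but since the statement only claims existence of \emph{some} such $\widetilde\sigma_4$, I would not need to prove that identification here; the precise evaluation of $\widetilde\sigma_4$ and its matching with the conjectured sign rule is what I would flag as following from a Prop.~\ref{prop:sign_proof}-type argument applied to the shifted partition, to be spelled out when comparing with the contour-integral sign rule.
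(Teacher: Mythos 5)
Your proposal follows essentially the same route as the paper's proof: the per-layer indices plus the pairwise vertex-operator contractions are assembled into a single movable total character, the reflection identities are applied to the $q_4$-positive cross terms (including $-\bfP_4(1-K)x_j\bm{\Pi}^{(i)\vee}$), the sums are telescoped into $\mathcal{Z}^{\D8}_{\four;4;\emptyset\emptyset\pi_3\emptyset}[\rho,K]$, and the residual unmovable contributions are absorbed into the definition of $(-1)^{\widetilde{\sigma}_4(\rho)}$, with the identification $\widetilde{\sigma}_4\simeq\sigma_4$ deferred exactly as the paper defers it to a conjecture. The only cosmetic difference is that the paper defines $\widetilde{\sigma}_{4}(\rho)$ explicitly as the unmovable part of both $\sum_{j>i}\bfP_{\four}\tfrac{\bm{\nu}^{(j)}}{\bfP_{3}}\bm{\Pi}^{(i)\vee}$ and $\sum_{j>i}\bfP_{123}\bm{\Pi}^{(i)\vee}\bm{\Pi}^{(j)}$ rather than attempting to argue the boundary--bulk cross term away, which is harmless since the statement only asserts existence of some such sign.
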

\begin{proof}
We first have
\bea
\widetilde{\mathcal{Z}}_{\bar{4};\emptyset\emptyset\nu^{(k)}}^{\D6}[\,K,\Pi^{(k)}\,]=\mathbb{I}\left[ -\bfP_{4}^{\vee}(1-K^{-1})x_{k}^{-1}\bm{\Pi}^{(k)}+\bfP_{\four}\frac{\bm{\nu}^{(k)\vee}}{\bfP_{3}^{\vee}}\bm{\Pi}^{(k)}+\bfP_{123}^{\vee}\bm{\Pi}^{(k)\vee}\bm{\Pi}^{(k)}  \right].
\eea
The contractions of the vertex operators are given as
\bea
\frac{\Lambda^{K,\,\emptyset\emptyset\nu^{(j)}}_{\bar{4},\Pi^{(j)}}(x_{j})\Lambda^{K,\,\emptyset\emptyset\nu^{(i)}}_{\bar{4},\Pi^{(i)}}(x_{i})}{:\Lambda^{K,\,\emptyset\emptyset\nu^{(j)}}_{\bar{4},\Pi^{(j)}}(x_{j})\Lambda^{K,\,\emptyset\emptyset\nu^{(i)}}_{\bar{4},\Pi^{(i)}}(x_{i}):}&\simeq\mathbb{I}\left[-\bfP_{4}^{\vee}(1-K^{-1})x_{i}^{-1}\bm{\Pi}^{(j)}-\bfP_{4}(1-K)x_{j}\bm{\Pi}^{(i)\vee}\right.\\
&\quad \left.+\bfP_{\four}\left(\frac{\bm{\nu}^{(j)}}{\bfP_{3}}\bm{\Pi}^{(i)\vee}+\frac{\bm{\nu}^{(i)\vee}}{\bfP_{3}^{\vee}}\bm{\Pi}^{(j)}\right)+\bfP_{\four}\mathbf{\Pi}^{(i)\vee}\mathbf{\Pi}^{(j)}\right]\\
&=\mathbb{I}\left[-\bfP_{4}^{\vee}(1-K^{-1})x_{i}^{-1}\bm{\Pi}^{(j)}-\bfP_{4}^{\vee}(1-K^{-1})x_{j}^{-1}\bm{\Pi}^{(i)}\right.\\
&\quad \left.+\bfP_{\four}\left(\frac{\bm{\nu}^{(j)}}{\bfP_{3}}\bm{\Pi}^{(i)\vee}+\frac{\bm{\nu}^{(i)\vee}}{\bfP_{3}^{\vee}}\bm{\Pi}^{(j)}\right)+\bfP_{\four}\mathbf{\Pi}^{(i)\vee}\mathbf{\Pi}^{(j)}\right]
\eea
for $j>i$ and $x_{i}=xq_{4}^{i-1}$, where we omitted contributions non-essential to instanton computations. We identified the plane partitions with the characters as
\bea
\mathbf{\Pi}^{(i)}=\sum_{\scube\in\Pi^{(i)}}\chi_{\bar{4},x_{i}}(\cube),\quad \boldsymbol{\nu}^{(i)}=\sum_{\Abox\in\nu^{(i)}}\chi_{12,x_{i}}(\Bbox)
\eea
and also used the reflection property in the second line. The computation is done similar as \eqref{eq:D8vacuum-proof1} and the second term of the first line is movable and so the reflection property can be safely used.

The coefficients of the left hand side of \eqref{eq:D6toD8-oneleg1} comes from the following character $\mathbf{v}_{\D6\rightarrow \D8}^{\emptyset\emptyset\pi_{3}\emptyset}$:
\bea
\mathbf{v}_{\D6\rightarrow \D8}^{\emptyset\emptyset\pi_{3}\emptyset}&=-(1-K^{-1})x^{-1}\sum_{i}\bm{\Pi}^{(i)}+\sum_{k}\left(\bfP_{123}^{\vee}\bm{\Pi}^{(k)\vee}\bm{\Pi}^{(k)}+\bfP_{\four}\frac{\bm{\nu}^{(k)\vee}}{\bfP_{3}^{\vee}}\bm{\Pi}^{(k)}  \right)\\
&+\sum_{j>i}\left(\bfP_{\four}\left(\frac{\bm{\nu}^{(j)}}{\bfP_{3}}\bm{\Pi}^{(i)\vee}+\frac{\bm{\nu}^{(i)\vee}}{\bfP_{3}^{\vee}}\bm{\Pi}^{(j)}\right)+\bfP_{\four}\mathbf{\Pi}^{(i)\vee}\mathbf{\Pi}^{(j)}\right)\\
\eea
where we used
\bea
-\sum_{j>i}\left(\bfP_{4}^{\vee}x_{i}^{-1}\bm{\Pi}^{(j)}+\bfP_{4}^{\vee}x_{j}^{-1}\bm{\Pi}^{(i)}\right)-\bfP_{4}^{\vee}\sum_{k=1}^{\infty}x_{k}^{-1}\bm{\Pi}^{(k)}&=-x^{-1}\sum_{i=1}^{\infty}\bm{\Pi}^{(i)}.
\eea
One can show by direct computation that this total character itself is movable \cite{Monavari:2022rtf, Nekrasov:2023nai}.

The nontrivial sign factors come from the second line:
\bea
\mathbb{I}\left[\mathbf{v}_{\D6\rightarrow \D8}^{\emptyset\emptyset\pi_{3}\emptyset}\right]&=\mathbb{I}\left[\cdots+\sum_{j>i}\bfP_{\four}\frac{\bm{\nu}^{(j)}}{\bfP_{3}}\bm{\Pi}^{(i)\vee}+\cdots +\sum_{j>i}\bfP_{\four}\bm{\Pi}^{(i)\vee}\bm{\Pi}^{(j)}\right]\\
&=(-1)^{\widetilde{\sigma}_{4}(\rho)}\mathbb{I}\left[\cdots+\sum_{j>i}\bfP_{\four}\frac{\bm{\nu}^{(j)\vee}}{\bfP^{\vee}_{3}}\bm{\Pi}^{(i)}+\cdots +\sum_{j>i}\bfP_{123}^{\vee}\left(\bm{\Pi}^{(i)\vee}\bm{\Pi}^{(j)}+\bm{\Pi}^{(i)}\bm{\Pi}^{(j)\vee}\right)\right]
\eea
where extra sign contributions appear due to Prop.~\ref{app-prop:reflection_sign}, \ref{app-prop:reflection-mod}. The signs are defined as
\bea
\widetilde{\sigma}_{4}(\rho)=\left[\sum_{j>i}\left(\bfP_{\four}\frac{\bm{\nu}^{(j)}}{\bfP_{3}}\bm{\Pi}^{(i)\vee}+\bfP_{123}\bm{\Pi}^{(i)\vee}\bm{\Pi}^{(j)}\right)\right]^{(0)}.
\eea
Using
\bea
\sum_{j>i}\bfP_{\four}\left(\frac{\bm{\nu}^{(j)\vee}}{\bfP_{3}^{\vee}}\bm{\Pi}^{(i)}+\frac{\bm{\nu}^{(i)\vee}}{\bfP_{3}^{\vee}}\bm{\Pi}^{(j)}\right)+\sum_{k=1}^{\infty}\bfP_{\four}\frac{\bm{\nu}^{(k)\vee}}{\bfP_{3}^{\vee}}\bm{\Pi}^{(k)}&=\bfP_{\four}\frac{\sum_{i}\bm{\nu}^{(i)\vee}}{\bfP_{3}^{\vee}}\sum_{j}\bm{\Pi}^{(j)},
\eea
we then have
\bea
\mathbb{I}\left[\mathbf{v}_{\D6\rightarrow \D8}^{\emptyset\emptyset\pi_{3}\emptyset}\right]&=(-1)^{\widetilde{\sigma}_{4}(\rho)}\, \mathbb{I}\left[-(1-K^{-1})x^{-1}\sum_{i}\bm{\Pi}^{(i)}+\bfP_{\four}\frac{\sum_{i}\bm{\nu}^{(i)\vee}}{\bfP_{3}^{\vee}}\sum_{j}\bm{\Pi}^{(j)}+\bfP_{123}^{\vee}\left(\sum_{i}\bm{\Pi}^{(i)}\right)^{\vee}\left(\sum_{j}\bm{\Pi}^{(j)}\right)\right]\\
   =&(-1)^{\widetilde{\sigma}_{4}(\rho)}\mathcal{Z}^{\D8}_{\four;4;\emptyset\emptyset\pi_{3}\emptyset}[\rho,K].
\eea
The vertex operator part comes from 
\bea
{:\prod_{k=1}^{\infty}\Lambda^{K,\,\emptyset\emptyset\nu^{(k)}}_{\bar{4},\Pi^{(k)}}(q_{4}^{k-1}x):}={:\frac{\mathsf{Z}(K,x)}{\prod_{\scube\in\pi_{3}}\mathsf{S}_{3}(\chi_{\bar{3},x}(\cube))}\prod_{\shcube\in\rho}\mathsf{A}^{-1}(\chi_{\four,x}(\hcube)):}
\eea
and we obtain the claim.
\end{proof}

\begin{conjecture}
    Up to sign factors coming from the boundary contributions, we have 
    \bea
\widetilde{\sigma}_{4}(\rho)=\left[\sum_{j>i}\left(\bfP_{\four}\frac{\bm{\nu}^{(j)}}{\bfP_{3}}\bm{\Pi}^{(i)\vee}+\bfP_{123}\bm{\Pi}^{(i)\vee}\bm{\Pi}^{(j)}\right)\right]^{(0)}\simeq \sigma_{4}(\rho)\quad \mod2
    \eea
    where $\sigma_{4}(\rho)=\#\{(i,i,i,j)\in\rho\mid i<j\}$. 
\end{conjecture}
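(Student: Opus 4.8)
The plan is to reduce the claimed congruence
\[
\widetilde{\sigma}_{4}(\rho)=\left[\sum_{j>i}\left(\bfP_{\four}\frac{\bm{\nu}^{(j)}}{\bfP_{3}}\bm{\Pi}^{(i)\vee}+\bfP_{123}\bm{\Pi}^{(i)\vee}\bm{\Pi}^{(j)}\right)\right]^{(0)}\simeq \sigma_{4}(\rho)\pmod 2
\]
to the no-boundary case Prop.~\ref{prop:sign_proof} by a careful bookkeeping of which unmovable terms the asymptotic Young diagrams $\nu^{(k)}$ contribute. The key observation is that the solid partition $\tilde\rho$ with the single leg $\pi_{3}$ is the disjoint union of its ``bulk'' boxes $\rho$ (the set of $\Pi^{(k)}$) and its boundary boxes $\mathcal{B}_{\emptyset\emptyset\pi_{3}\emptyset}$ (the set of $\nu^{(k)}$ extended semi-infinitely in the $3$-direction). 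Writing $\bm{\mathcal{N}}^{(i)} = \bm{\nu}^{(i)}/\bfP_{3}$ for the character of the $i$-th slice of the boundary, the full solid partition corresponds to the character $\sum_{i}(\bm{\Pi}^{(i)}+\bm{\mathcal{N}}^{(i)})$, and the sign $\widetilde\sigma_{4}$ should be the restriction of the universal sign $s$ of Def.~\ref{def:signfactorrgeneral} to this larger character, i.e.
\[
\widetilde{\sigma}_{4}(\rho) \;=\; \left[\sum_{j>i}\bfP_{123}\bigl(\bm{\Pi}^{(i)}+\bm{\mathcal{N}}^{(i)}\bigr)^{\vee}\bigl(\bm{\Pi}^{(j)}+\bm{\mathcal{N}}^{(j)}\bigr)\right]^{(0)} - (\text{purely boundary part}),
\]
where the subtracted term depends only on $\pi_{3}$ and hence is one of the ``sign factors coming from the boundary contributions'' that the statement allows us to discard.

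First I would verify this regrouping: expand $(\bm{\Pi}^{(i)}+\bm{\mathcal{N}}^{(i)})^{\vee}(\bm{\Pi}^{(j)}+\bm{\mathcal{N}}^{(j)})$ into four cross terms, match $\bfP_{123}\bm{\Pi}^{(i)\vee}\bm{\Pi}^{(j)}$ with the second term in the statement, match $\bfP_{123}\bm{\mathcal{N}}^{(i)\vee}\bm{\Pi}^{(j)} = \bfP_{123}(\bm{\nu}^{(i)}/\bfP_{3})^{\vee}\bm{\Pi}^{(j)}$ with the first term (using $\bfP_{\four}\bm{\nu}^{(j)}\bm{\Pi}^{(i)\vee}/\bfP_3$: here I need the reflection identity $\mathbb{I}[\bfP_{\four}\bm{\nu}^{(j)}\bm{\Pi}^{(i)\vee}/\bfP_3] \simeq \mathbb{I}[\bfP_{\four}\bm{\nu}^{(i)\vee}\bm{\Pi}^{(j)}/\bfP_3^\vee]$ up to a $(-1)^{[\cdots]^{(0)}}$, analogous to the step already used in the proof of Prop.~\ref{prop:D6toD8-oneleg1}), and collect the remaining term $\bfP_{123}\bm{\mathcal{N}}^{(i)\vee}\bm{\mathcal{N}}^{(j)}$ as purely boundary. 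Then the problem becomes: show that the unmovable part of $\sum_{j>i}\bfP_{123}\widehat{\bm{\Pi}}^{(i)\vee}\widehat{\bm{\Pi}}^{(j)}$, where $\widehat{\bm{\Pi}}^{(k)}=\bm{\Pi}^{(k)}+\bm{\mathcal{N}}^{(k)}$ is the character of the full $k$-th plane-partition slice (bulk plus semi-infinite rod), equals $\#\{(i,i,i,j)\in\tilde\rho\mid i<j\}$ mod $2$, minus the contribution of boxes lying entirely in the boundary rod. This is almost verbatim the argument of Prop.~\ref{prop:sign_proof}: for each box $\eta=(A,B,C)\in\widehat\Pi^{(j)}$ one splits $\widehat{\bm{\Pi}}^{(i)}=\Delta_{i}(\eta)+\widehat{\bm{\Pi}}^{(i)}(\eta)$ using the plane-partition (melting) condition — which holds for $\widehat\Pi^{(i)}\succeq\widehat\Pi^{(j)}$ since the semi-infinite rod respects it — and shows the $\widehat{\bm{\Pi}}^{(i)}(\eta)$-part is movable and the $\Delta_{i}(\eta)$-part contributes $-\delta_{A=B=C=j-i}$. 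Summing over $j$ and $\eta$ reproduces $\sigma_{4}(\tilde\rho)=\#\{(i,i,i,j)\in\tilde\rho\mid i<j\}$, and the boxes of the rod $\pi_{3}$ sit at $(1,1,c,j)$ for fixed large $c$, contributing a $j$-independent (hence ``boundary'') correction once $c>1$; the diagonal boxes $(i,i,i,j)$ with $i\le$ depth of the rod are precisely those counted by $\sigma_{4}(\rho)$ for the regularized partition.

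The main obstacle I anticipate is twofold. (i) Making precise the ``$\simeq$ up to boundary sign factors'' — one must argue that every unmovable term of $\bfP_{123}\bm{\mathcal{N}}^{(i)\vee}\bm{\mathcal{N}}^{(j)}$ and the sign picked up from the reflection of $\bfP_{\four}\bm{\nu}^{(j)}\bm{\Pi}^{(i)\vee}/\bfP_3$ indeed depends only on $\pi_{3}$ and not on $\rho$; this requires showing the diagonal boxes that lie in the rod region (where $A=B=C=j-i$ forces all three coordinates equal, but the rod has $A=B=1$) only occur for $j-i=1$, i.e.\ give a bounded, $\rho$-independent count. (ii) Controlling convergence/regularization of the infinite sums $\sum_{i}\bm{\Pi}^{(i)}$, $\sum_i\bm{\mathcal N}^{(i)}$ — the same formal device $\sum_i x_i=1/\bfP_4$ used in Prop.~\ref{prop:D6toD8vacuumproof} should suffice, but one should check the movability analysis is not spoiled by the $1/\bfP_3$ denominators appearing in $\bm{\mathcal N}^{(i)}$; concretely, $\bfP_{\four}/\bfP_3 = \bfP_{124}(1-q_3^{-1})/(1-q_3) \cdot(\dots)$ is not a polynomial, so the term-by-term ``strictly negative power'' argument of Prop.~\ref{prop:sign_proof} must be redone for these terms, expanding $1/(1-q_3)$ as a geometric series in $q_3^{>0}$ and checking the surviving contributions still have a strictly negative $q_1$- or $q_2$-degree except on the measure-zero diagonal. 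I would do a computer check for small $|\rho|$ and small $\pi_3$ (e.g.\ $\pi_3$ a single box or a domino) in parallel, exactly as was done for Conj.~\ref{conj:D8commutativity}, both to catch the precise form of the boundary correction and to confirm the final congruence.
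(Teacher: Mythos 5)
The statement you are addressing is labelled a \emph{conjecture} in the paper: the authors do not prove it, and your proposal does not close the gap either. Your strategy --- write the full slice character as $\widehat{\bm{\Pi}}^{(k)}=\bm{\Pi}^{(k)}+\bm{\mathcal{N}}^{(k)}$ with $\bm{\mathcal{N}}^{(k)}=\bm{\nu}^{(k)}/\bfP_{3}$, apply the diagonal-box argument of Prop.~\ref{prop:sign_proof} to the non-increasing family $\widehat{\Pi}^{(k)}$, and peel off the boundary --- is essentially what the authors carry out in Appendix~\ref{app-sec:boundary-signrule}. There, by splitting $\bfP_{\four}=\bfP_{123}+\bfP_{123}^{\vee}$, dualizing, and checking that $\bfP_{123}\bm{\Pi}_{\bd}^{(j)\vee}\bm{\Pi}^{(i)}$ is movable for $j>i$, they obtain the exact identity
\bea
\widetilde{\sigma}_{4}(\rho)=\sigma_{4}(\rho)-\left[\sum_{i<j}\bfP_{123}\,\bm{\Pi}^{(i)\vee}_{\bd}\bm{\Pi}^{(j)}\right]^{(0)},\qquad \bm{\Pi}_{\bd}^{(i)}=\frac{\bm{\nu}^{(i)}}{\bfP_{3}},
\eea
which is where the trail ends. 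Up to that point your plan and the paper's computation coincide.

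The genuine gap is the remaining cross term. In your displayed regrouping you subtract a ``purely boundary part'' and assert it depends only on $\pi_{3}$, but the actual leftover $\bigl[\sum_{i<j}\bfP_{123}\bm{\Pi}_{\bd}^{(i)\vee}\bm{\Pi}^{(j)}\bigr]^{(0)}$ manifestly involves the bulk boxes $\Pi^{(j)}$ as well as the rod, so the regrouping presumes what has to be proven. Moreover, your heuristic that the rod boxes sit at $(1,1,c,j)$ is only correct when $\nu^{(i)}$ is a single column; in general the rod in slice $i$ is $\nu^{(i)}\times\mathbb{Z}_{\geq 1}$, and a rod box $(a,b,c)$ with $(a,b)\in\nu^{(i)}$ paired with a bulk box of $\Pi^{(j)}$ can produce unmovable monomials whose count a priori depends on $\rho$. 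Showing that this count is independent of $\rho$ modulo $2$ (equivalently, that it is legitimately a ``boundary contribution'') is precisely the content of the conjecture; your obstacle~(i) names it but does not resolve it, and neither does the paper. Your proposal is therefore a correct reduction matching the paper's own partial argument, not a proof; the small-$\rho$, small-$\pi_{3}$ computer checks you suggest are indeed the appropriate next step and are what the authors defer to future work on gluing.
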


\paragraph{One-leg D8 $qq$-character: part 2}
Let us next consider the situation when the infinite size solid partition $\tilde{\rho}$ has a boundary plane partition $\pi_{4}$ in the 4-direction (see \eqref{eq:fig-solidpartitionleg4}). In the $(1,3)$ decomposition, we have a finite size plane partition\footnote{Note that this case, each plane partition is a \textit{finite} plane partition, but not \textit{infinite} plane partition.} for each layer $\tilde{\rho}=(\widetilde{\Pi}^{(1)},\widetilde{\Pi}^{(2)},\ldots)$ where we can decompose it into two parts $\Pi^{(k)}, \pi_{4}$. The contribution for each layer is
\bea
\text{operator part:}&\quad \Lambda^{K,\emptyset\emptyset\emptyset}_{\bar{4},\widetilde{\Pi}^{(k)}}(q_{4}^{k-1}x)={:\frac{\mathsf{W}_{\bar{4}}(q_{4}^{k-1}x)}{\mathsf{W}_{\bar{4}}(Kq_{4}^{k-1}x)}\prod_{\scube\in\pi_{4}}\mathsf{A}^{-1}(\chi_{\bar{4},q_{4}^{k-1}x}(\cube))\prod_{\scube\in\Pi^{(k)}}\mathsf{A}^{-1}(\chi_{\bar{4},q_{4}^{k-1}x}(\cube)):},\\
\text{coefficient part:}&\quad \widetilde{\mathcal{Z}}^{\D6}_{\bar{4};\emptyset\emptyset\emptyset}[K,\widetilde{\Pi}^{(k)}].
\eea

\begin{proposition}\label{prop:D6toD8-oneleg2}
    The coefficient part and vertex operator part of the D8 $qq$-character with leg boundary condition at the 4-direction comes from the following infinite product
    \bea
    \prod_{k=1}^{\infty}\widetilde{\mathcal{Z}}^{\D6}_{\bar{4};\emptyset\emptyset\emptyset}[K,\widetilde{\Pi}^{(k)}]\overleftarrow{\prod_{k=1}^{\infty}}\Lambda^{K,\emptyset\emptyset\emptyset}_{\bar{4},\widetilde{\Pi}^{(k)}}(q_{4}^{k-1}x)\simeq (-1)^{\widetilde{\sigma}_{4}(\rho)}\mathcal{Z}^{\D8}_{\four;4;\emptyset\emptyset\emptyset\pi_{4}}[\rho,K]\Lambda_{\four;\rho}^{K;\emptyset\emptyset\emptyset\pi_{4}}(x)
    \eea
    where the equality is understood up to one-loop perturbative and boundary contributions and $\widetilde{\sigma}_{4}(\rho)$ is some sign factor.
\end{proposition}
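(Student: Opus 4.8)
\textbf{Proof proposal for Proposition~\ref{prop:D6toD8-oneleg2}.}
The plan is to follow essentially the same strategy as in the proof of Prop.~\ref{prop:D6toD8vacuumproof} and Prop.~\ref{prop:D6toD8-oneleg1}, treating the boundary plane partition $\pi_{4}$ as a rigid ``frozen'' contribution that rides along unchanged in every layer. First I would write down the $(1,3)$-decomposition $\tilde{\rho}=(\widetilde{\Pi}^{(1)},\widetilde{\Pi}^{(2)},\ldots)$ with each $\widetilde{\Pi}^{(k)}=\Pi^{(k)}\cup\pi_{4}$ (the union being disjoint as sets of boxes after the obvious identification), set $x_{k}=xq_{4}^{k-1}$, and record the coefficient of each factor as the index of the character
\bea
\widetilde{\mathcal{Z}}^{\D6}_{\bar{4};\emptyset\emptyset\emptyset}[K,\widetilde{\Pi}^{(k)}]=\mathbb{I}\left[-\bfP_{4}^{\vee}(1-K^{-1})x_{k}^{-1}\widetilde{\bm{\Pi}}^{(k)}+\bfP_{123}^{\vee}\widetilde{\bm{\Pi}}^{(k)\vee}\widetilde{\bm{\Pi}}^{(k)}\right],
\eea
where $\widetilde{\bm{\Pi}}^{(k)}=\bm{\Pi}^{(k)}+\bm{\pi}_{4}$ with $\bm{\pi}_{4}=\sum_{\scube\in\pi_{4}}\chi_{\bar{4},x_{k}}(\cube)q_{4}^{1-k}$ (i.e.\ the boundary character is independent of $k$ up to the overall $q_{4}$-shift, which is exactly the statement that $\pi_{4}$ is a genuine $4$-leg boundary). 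Then, using Lemma~\ref{lem:D4contraction}'s D6 analogue, I would compute the cross-contractions $\Lambda^{K,\emptyset\emptyset\emptyset}_{\bar{4},\widetilde{\Pi}^{(j)}}(x_{j})\Lambda^{K,\emptyset\emptyset\emptyset}_{\bar{4},\widetilde{\Pi}^{(i)}}(x_{i})$ for $j>i$ and assemble the total character $\mathbf{v}^{\emptyset\emptyset\emptyset\pi_{4}}_{\D6\to\D8}$.

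The next step is the movability analysis. Just as in the proof of Prop.~\ref{prop:D6toD8vacuumproof}, the term $-\bfP_{4}(1-K)x_{j}\widetilde{\bm{\Pi}}^{(i)\vee}$ is movable because $j>i$ forces all $q_{4}$-exponents to be strictly positive; here the extra subtlety is that the boxes of $\pi_{4}$ extend semi-infinitely along the $4$-axis, so one must check that adding $\bm{\pi}_{4}$ to $\bm{\Pi}^{(i)}$ does not create unmovable terms in the cross-terms. This is true because a box of $\pi_{4}$ in layer $i$ paired against anything in layer $j>i$ still carries a net factor $q_{4}^{j-i}$ with $j-i\ge 1$, and the regularization $\sum_{i}x_{i}=1/\bfP_{4}$ applied to the $\pi_{4}$-contribution exactly reproduces the character $\bfP_{\four}\bfN_{\emptyset\emptyset\emptyset\pi_{4}}^{\vee}\bfK^{\reg}$ appearing in~\eqref{eq:D8legpartition} (with $\bfN_{\emptyset\emptyset\emptyset\pi_{4}}=\sum_{\scube\in\pi_{4}}\chi_{\bar{4},x}(\cube)/(1-q_{4})$, and $\mathcal{S}_{\emptyset\emptyset\emptyset\pi_{4}}=\emptyset$ so no correction boxes arise). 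After applying the reflection properties \eqref{eq:reflectionprop0}--\eqref{eq:reflectionprop2} together with Prop.~\ref{app-prop:reflection_sign} to the square-root completion $\bfP_{\four}=\bfP_{123}^{\vee}+\bfP_{123}$ in the $\sum_{j>i}$ block, the index collapses to $(-1)^{\widetilde{\sigma}_{4}(\rho)}\mathcal{Z}^{\D8}_{\four;4;\emptyset\emptyset\emptyset\pi_{4}}[\rho,K]$ with $\widetilde{\sigma}_{4}(\rho)=\big[\sum_{j>i}\bfP_{123}\widetilde{\bm{\Pi}}^{(i)\vee}\widetilde{\bm{\Pi}}^{(j)}\big]^{(0)}$; the operator side is the routine telescoping $:\prod_{k}\Lambda^{K,\emptyset\emptyset\emptyset}_{\bar{4},\widetilde{\Pi}^{(k)}}(x_{k}):=\Lambda^{K;\emptyset\emptyset\emptyset\pi_{4}}_{\four;\rho}(x)$, exactly as in Prop.~\ref{prop:D6toD8-oneleg1}.

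Finally I would note that the sign $\widetilde{\sigma}_{4}(\rho)$ can be split as $\widetilde{\sigma}_{4}(\rho)=s(\Pi)+(\text{terms involving }\bm{\pi}_{4})$; the first piece equals $\sigma_{4}(\rho)\bmod 2$ by Prop.~\ref{prop:sign_proof}, and the boundary cross-terms $\bfP_{123}\bm{\pi}_{4}^{\vee}\bm{\Pi}^{(j)}$, $\bfP_{123}\bm{\Pi}^{(i)\vee}\bm{\pi}_{4}$, $\bfP_{123}\bm{\pi}_{4}^{\vee}\bm{\pi}_{4}$ only ever contribute unmovable pieces supported on the frozen diagonal $(i,i,i,j)$ with $j>i$ lying inside $\pi_{4}$ or its shift, i.e.\ a contribution that depends only on the boundary data and not on $\rho$ — this is precisely the ``sign factors coming from the boundary contributions'' clause, matching the conjectural statement $\widetilde{\sigma}_{4}(\rho)\simeq\sigma_{4}(\rho)\bmod 2$. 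The main obstacle, as in Prop.~\ref{prop:D6toD8-oneleg1}, is the bookkeeping of exactly which unmovable terms survive once the semi-infinite $\pi_{4}$ boxes are inserted into every layer: one has to be careful that the formal regularization $\sum_{i}x_{i}=1/\bfP_{4}$ commutes with extracting the degree-zero (unmovable) part, and that no new unmovable terms appear at the interface between $\Pi^{(i)}$ and $\pi_{4}$. I expect this combinatorial sign accounting — rather than the operator-product or movability arguments, which are mechanical — to be where the real work lies, and it is the reason the precise equality of $\widetilde{\sigma}_{4}$ with $\sigma_{4}$ is stated only up to boundary-dependent corrections.
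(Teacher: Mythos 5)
Your proof of the proposition itself is essentially the paper's: the same $(1,3)$-decomposition with $\widetilde{\bm{\Pi}}^{(k)}=\bm{\Pi}^{(k)}+q_{4}^{k-1}\bm{\pi}_{4}$, the same OPE contractions and movability check on $-\bfP_{4}(1-K)x_{j}\widetilde{\bm{\Pi}}^{(i)\vee}$, the same regularization $\sum_{i}x_{i}=1/\bfP_{4}$ producing the term $\bfP_{\four}\bm{\pi}_{4}^{\vee}/\bfP_{4}^{\vee}\cdot\sum_{i}\bm{\Pi}^{(i)}$ (i.e.\ $\bfP_{\four}\bfN_{\emptyset\emptyset\emptyset\pi_{4}}^{\vee}\bfK^{\reg}$ with $\mathcal{S}_{\emptyset\emptyset\emptyset\pi_{4}}=\emptyset$), the same reflection-property bookkeeping for the sign, and the same telescoping of the operator part. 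Your sign factor is packaged as $\bigl[\sum_{j>i}\bfP_{123}\widetilde{\bm{\Pi}}^{(i)\vee}\widetilde{\bm{\Pi}}^{(j)}\bigr]^{(0)}$ rather than the paper's $\bigl[\sum_{j>i}\bigl(\bfP_{\four}q_{4}^{j-1}\bm{\pi}_{4}\bm{\Pi}^{(i)\vee}+\bfP_{123}\bm{\Pi}^{(i)\vee}\bm{\Pi}^{(j)}\bigr)\bigr]^{(0)}$ because you split $\bfP_{\four}=\bfP_{123}^{\vee}+\bfP_{123}$ uniformly instead of keeping the $\bm{\pi}_{4}^{\vee}$-cross-terms unreflected; since the proposition only asserts ``some sign factor,'' this is an acceptable reorganization of the same computation.

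One genuine error in your last paragraph: you assert that the boundary cross-terms $\bfP_{123}\bm{\pi}_{4}^{\vee}\bm{\Pi}^{(j)}$ and $\bfP_{123}\bm{\Pi}^{(i)\vee}\bm{\pi}_{4}$ ``contribute \ldots a contribution that depends only on the boundary data and not on $\rho$.'' This is false as stated --- those unmovable parts manifestly involve $\bm{\Pi}^{(i)}$ and $\bm{\Pi}^{(j)}$, hence $\rho$. Controlling precisely these $\rho$-dependent cross-terms is exactly why the paper states $\widetilde{\sigma}_{4}(\rho)\simeq\sigma_{4}(\rho)\bmod 2$ only as a separate conjecture rather than proving it. This does not affect the validity of your proof of the proposition (which requires only the existence of the sign factor), but you should not present the identification with $\sigma_{4}(\rho)$ as established.
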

\begin{proof}
    The contractions of the vertex operator part and the coefficient part are
    \bea
\frac{\Lambda^{K,\,\emptyset\emptyset\emptyset}_{\bar{4},\widetilde{\Pi}^{(j)}}(x_{j})\Lambda^{K,\,\emptyset\emptyset\emptyset}_{\bar{4},\widetilde{\Pi}^{(i)}}(x_{i})}{:\Lambda^{K,\,\emptyset\emptyset\emptyset}_{\bar{4},\widetilde{\Pi}^{(j)}}(x_{j})\Lambda^{K,\,\emptyset\emptyset\emptyset}_{\bar{4},\widetilde{\Pi}^{(i)}}(x_{i}):}&=\mathbb{I}\left[-\bfP_{4}^{\vee}(1-K^{-1})x_{i}^{-1}(\bm{\Pi}^{(j)}+q_{4}^{j-1}\bm{\pi}_{4})-\bfP_{4}(1-K)x_{j}(\bm{\Pi}^{(i)}+q_{4}^{i-1}\bm{\pi}_{4})^{\vee}\right.\\
&\qquad \left.  +\bfP_{\four}(\bm{\Pi}^{(i)}+q_{4}^{i-1}\bm{\pi}_{4})^{\vee}(\bm{\Pi}^{(j)}+q_{4}^{j-1}\bm{\pi}_{4}) \right] ,\\
\widetilde{\mathcal{Z}}^{\D6}_{\bar{4};\emptyset\emptyset\emptyset}[K,\widetilde{\Pi}^{(k)}]&=\mathbb{I}\left[ -\bfP_{4}^{\vee}(1-K^{-1})x_{k}^{-1}(\bm{\Pi}^{(k)}+q_{4}^{k-1}\bm{\pi}_{4}) +\bfP_{123}^{\vee} (\bm{\Pi}^{(k)}+q_{4}^{k-1}\bm{\pi}_{4})^{\vee}(\bm{\Pi}^{(k)}+q_{4}^{k-1}\bm{\pi}_{4})\right],
    \eea
    where $x_{i}=q_{4}^{i-1}x$. Omitting the boundary contributions and the one loop contributions such as $\bm{\pi}_{4}^{\vee}\bm{\pi}_{4},x_{i}^{-1}\bm{\pi}_{4}$, and using the reflection property of the index, we eventually obtain
    \bea
   \mathbb{I}\left[ -(1-K^{-1})x^{-1}\sum_{i}\bm{\Pi}^{(i)}+\bfP_{\four}\frac{\bm{\pi}_{4}^{\vee}}{\bfP_{4}^{\vee}}\sum_{i}\bm{\Pi}^{(i)}+\bfP_{123}^{\vee}\left(\sum_{i}\bm{\Pi}^{(i)}\right)^{\vee}\left(\sum_{j}\bm{\Pi}^{(j)}\right)\right]=\mathcal{Z}^{\D8}_{\four;4;\emptyset\emptyset\emptyset\pi_{4}}[\rho,K].
    \eea
The extra sign factor $\widetilde{\sigma}_{4}(\rho)$ comes from
\bea
\widetilde{\sigma}_{4}(\rho)=\left[\sum_{j>i}\left(\bfP_{\four}q_{4}^{j-1}\bm{\pi}_{4}\bm{\Pi}^{(i)\vee}+\bfP_{123}\bm{\Pi}^{(i)\vee}\bm{\Pi}^{(j)}\right)\right]^{(0)}.
\eea
The vertex operator part comes from
    \bea
    :\prod_{k=1}^{\infty}\Lambda^{K,\emptyset\emptyset\emptyset}_{\bar{4},\widetilde{\Pi}^{(k)}}(q_{4}^{k-1}x):=:\frac{\mathsf{Z}(K,x)}{\prod_{\scube\in\pi_{4}}\mathsf{S}_{4}(\chi_{\bar{4},x}(\cube))}\prod_{\shcube\in\rho}\mathsf{A}^{-1}(\chi_{\four,x}(\hcube)):.
    \eea
\end{proof}

\begin{conjecture}
    Up to contributions coming from the boundaries, we have
    \bea
    \widetilde{\sigma}_{4}(\rho)\simeq \sigma_{4}(\rho)\quad \mod 2
    \eea
\end{conjecture}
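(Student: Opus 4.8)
The plan is to reduce the conjecture to the bulk sign computation of Prop.~\ref{prop:sign_proof}, after peeling off the genuinely new ingredient — the cross term between the boundary plane partition $\pi_{4}$ and the regular layers $\Pi^{(i)}$ — and checking that it carries no unmovable part. First I would split
$$\widetilde{\sigma}_{4}(\rho)=\left[\sum_{j>i}\bfP_{\four}\,q_{4}^{j-1}\bm{\pi}_{4}\,\bm{\Pi}^{(i)\vee}\right]^{(0)}+\left[\sum_{j>i}\bfP_{123}\,\bm{\Pi}^{(i)\vee}\bm{\Pi}^{(j)}\right]^{(0)} .$$
Every monomial of $\bfP_{\four}\,q_{4}^{j-1}\bm{\pi}_{4}\,\bm{\Pi}^{(i)\vee}$ is of the form $\bfP_{\four}\,q_{1}^{a-a'}q_{2}^{b-b'}q_{3}^{c-c'}q_{4}^{j-i}$ with $(a,b,c)\in\pi_{4}$, $(a',b',c')\in\Pi^{(i)}$, so after expanding $\bfP_{\four}$ its $q_{4}$-degree is $j-i$ or $j-i+1$, hence strictly positive since $j>i$. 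Thus this term is movable and the first bracket vanishes identically. This is exactly the mechanism used for the term $-\bfP_{4}(1-K)x_{j}\bm{\Pi}^{(i)\vee}$ in the proof of Prop.~\ref{prop:D6toD8vacuumproof}.

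For the second bracket I would lift each regular layer to the honest plane partition of the $i$-th slice, $\tilde{\Pi}^{(i)}=\Pi^{(i)}\sqcup\pi_{4}$; this makes sense because the solid-partition condition forces $\pi_{4}\subseteq\tilde{\Pi}^{(i)}$ for every $i$. Writing $\bm{\Pi}^{(i)\vee}=\tilde{\bm{\Pi}}^{(i)\vee}-q_{4}^{1-i}\bm{\pi}_{4}^{\vee}$, the correction term $\bfP_{123}\,q_{4}^{1-i}\bm{\pi}_{4}^{\vee}\bm{\Pi}^{(j)}$ again has $q_{4}$-degree $j-i\ge 1>0$ and so contributes nothing to $[\,\cdot\,]^{(0)}$; therefore
$$\left[\sum_{j>i}\bfP_{123}\,\bm{\Pi}^{(i)\vee}\bm{\Pi}^{(j)}\right]^{(0)}=\left[\sum_{j>i}\bfP_{123}\,\tilde{\bm{\Pi}}^{(i)\vee}\bm{\Pi}^{(j)}\right]^{(0)} .$$
Now the combinatorial argument of Prop.~\ref{prop:sign_proof} applies verbatim with $\tilde{\Pi}^{(i)}$ in place of $\Pi^{(i)}$: for a box $\eta=(A,B,C)\in\Pi^{(j)}$ one has $(A,B,C)\in\tilde{\Pi}^{(j)}\subseteq\tilde{\Pi}^{(i)}$ for $i<j$, so $\tilde{\bm{\Pi}}^{(i)}=\Delta_{i}(\eta)+(\text{rest})$ with $\Delta_{i}(\eta)$ the full rectangular box $[1,A]\times[1,B]\times[1,C]$ at level $i$; the complementary part has every box strictly negative in at least one of $q_{1,2,3}$, hence no unmovable monomial, while $\bfP_{123}\,\Delta_{i}(\eta)^{\vee}\eta=(1-q_{1}^{A})(1-q_{2}^{B})(1-q_{3}^{C})(q_{123})^{i-j}$ yields the single unmovable term $-\delta_{A=B=C=j-i}$. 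Summing over $i<j$ and $\eta\in\Pi^{(j)}$ gives $-\#\{(i,i,i,j)\in\rho\mid i<j\}=-\sigma_{4}(\rho)$, and since $-\sigma_{4}(\rho)\equiv\sigma_{4}(\rho)\pmod 2$ the claim follows. The part-1 statement (boundary $\pi_{3}$ extending transversely to the special direction $4$) is proved the same way, the cross term being $\bfP_{\four}\tfrac{\bm{\nu}^{(j)}}{\bfP_{3}}\bm{\Pi}^{(i)\vee}$ — still of $q_{4}$-degree $j-i$ or $j-i+1$, hence movable — and the lift being $\tilde{\Pi}^{(i)}=\Pi^{(i)}\sqcup\mathcal{B}_{\nu^{(i)}}$.

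I expect the main obstacle to be bookkeeping rather than a single hard step: one must confirm that the diagonal boxes $(m,m,m)$ actually contributing are precisely those of the regular part $\rho_{\reg}$, so that the count matches the definition~\eqref{eq:D8boundary_signrule} and not an ill-defined count over all of $\tilde{\rho}$. The delicate case is a diagonal column $(m,m,m,\ast)$ contained in $\pi_{4}$: the boundary-only terms suppressed in the definition of $\widetilde{\sigma}_{4}(\rho)$ (such as $\bfP_{\four}\,q_{4}^{j-i}\bm{\pi}_{4}^{\vee}\bm{\pi}_{4}$ and the first-order pieces $\bfP_{4}^{\vee}x_{k}^{-1}\bm{\pi}_{4}$ entering $\widetilde{\mathcal{Z}}^{\D6}_{\bar{4};\emptyset\emptyset\emptyset}[K,\tilde{\Pi}^{(k)}]$) must be shown to depend only on $\pi_{4}$ and not on $\rho$; this is the content of the qualifier ``up to contributions coming from the boundaries'', and once established such a $\rho$-independent overall sign is harmless for the $qq$-character structure and can be absorbed into the choice of initial condition as in footnote~\ref{foot:initial}.
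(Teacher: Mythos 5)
This statement is left as a \emph{conjecture} in the paper: the authors do not prove it, and Appendix~\ref{app-sec:boundary-signrule} records exactly how far the reduction goes, namely $\widetilde{\sigma}_{4}(\rho)=\sigma_{4}(\rho)-\bigl[\sum_{i<j}\bfP_{123}\bm{\Pi}^{(i)\vee}_{\bd}\bm{\Pi}^{(j)}\bigr]^{(0)}$, with the last correction term left unresolved. Your proposal reproduces that reduction (splitting off the boundary--bulk cross term, lifting $\Pi^{(i)}$ to $\tilde{\Pi}^{(i)}=\Pi^{(i)}\sqcup\pi_{4}$, and rerunning Prop.~\ref{prop:sign_proof} on $\bfP_{123}\tilde{\bm{\Pi}}^{(i)\vee}\bm{\Pi}^{(j)}$), and then disposes of the residual correction in one line. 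That one line is where the open problem sits, and the argument you give for it is not valid.

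The error is the movability criterion. Under $q_{1}q_{2}q_{3}q_{4}=1$ a monomial $q_{1}^{\alpha}q_{2}^{\beta}q_{3}^{\gamma}q_{4}^{\delta}$ is unmovable iff $\alpha=\beta=\gamma=\delta$, \emph{not} iff $\delta=0$; a strictly positive $q_{4}$-degree proves nothing (e.g.\ $q_{1}q_{2}q_{3}q_{4}=1$ is unmovable). This is also not "the same mechanism" as the paper's treatment of $-\bfP_{4}(1-K)x_{j}\bm{\Pi}^{(i)\vee}$, which explicitly solves the all-exponents-equal condition and finds it incompatible with $A,B,C\geq 1$. Applied correctly, the criterion does kill your first bracket $\bfP_{\four}q_{4}^{j-1}\bm{\pi}_{4}\bm{\Pi}^{(i)\vee}$ — unmovability would force a box of $\Pi^{(i)}$ to sit componentwise below a box of $\pi_{4}$ and hence, by downward closure, inside $\pi_{4}$, contradicting disjointness — but it does \emph{not} kill the correction $\bfP_{123}\,q_{4}^{1-i}\bm{\pi}_{4}^{\vee}\bm{\Pi}^{(j)}$. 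Concretely, take $\pi_{4}=\{(1,1,1)\}$ and let layer $j=i+1$ of the regular part contain the $2\times2\times2$ cube minus its corner $(1,1,1)$; then the unmovable part of $\bfP_{123}\,q_{4}^{1-i}\bm{\pi}_{4}^{\vee}\bm{\Pi}^{(j)}$ is the signed count over the seven remaining corners of $\{1,2\}^{3}$, which equals $1-3+3=1\neq 0$, and this value changes with $\rho_{\reg}$ (it vanishes if $(1,1,1)$ is filled in). So the discarded term is neither zero nor a contribution depending only on the boundary, and your proof does not close the conjecture; it collapses at precisely the step the paper identifies as the remaining difficulty.
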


\begin{remark}
    The equality of the sign factors above is expected to be true up to extra sign factors appearing from boundary contributions which are non-essential for DT vertex computations. Technically, deriving the combinatorial formula from the above expression is difficult because we need to deal with infinite number of boxes and regularize them properly (see Appendix~\ref{app-sec:boundary-signrule} for some related formulas). Moreover, the sign factors coming from the boundary contributions are also essential when we want to study Donaldson--Thomas partition functions of toric Calabi--Yau 4-folds and thus we can not simply throw them away. However, we still expect that the procedure to derive the sign factors introduced in section~\ref{sec:D8qqsignrule} gives the sign factors in a systematical way once the concept of \textit{gluings} of $qq$-characters is established. We hope to come back to this problem in a near future.  
\end{remark}

\subsection{Surface boundary conditions}\label{sec:D8qqsurfaceboundary}
The D8 $qq$-character with nontrivial surface boundary conditions is defined as follows.
\begin{definition}
Let $\lambda_{A},\,(A\in\six)$ be finite Young diagrams and $\mathcal{SP}_{\{\lambda_{A}\}_{A\in\six}}$ be the set of possible solid plane partitions with the boundary Young diagrams $\lambda_{A\in\six}$. The D8 $qq$-character with surface boundary conditions is defined as 
    \bea
    \mathsf{T}_{\four,\{\lambda_{A}\}}^{K}(x)=\sum_{\rho\in\mathcal{SP}_{\{\lambda_{A}\}}}\mathfrak{q}^{|\rho|}(-1)^{\sigma_{4}(\rho)}\mathcal{Z}^{\D8}_{\four;4;\{\lambda_{A}\}}[\rho,K]\Lambda_{\four,\rho}^{K;\{\lambda_{A}\}}(x),
    \eea
    where
    \bea
    \Lambda_{\four,\rho}^{K;\{\lambda_{A}\}}(x)={:\frac{\mathsf{Z}(K,x)}{\prod_{A\in\six}\prod_{\Abox\in\lambda_{A}}\mathsf{X}_{A}(\chi_{\bar{A},x}(\Bbox))}\prod_{\shcube\in\mathcal{S}_{\{\lambda_{A}\}}}\mathsf{A}(\chi_{\four,x}(\hcube))\prod_{\shcube\in\rho}\mathsf{A}^{-1}(\chi_{\four,x}(\hcube)):}
    \eea
    and $\mathcal{Z}^{\D8}_{\four;4;\{\lambda_{A}\}}[\rho,K]$, $\sigma_{4}(\rho)$ are defined in \eqref{eq:D8surfacepartition} and \eqref{eq:D8boundary_signrule}, respectively.
\end{definition}

Similar to the leg boundary conditions, we derive the $qq$-characters with surface boundary conditions by taking infinite products of the D6 $qq$-characters. Similar to the leg boundary conditions, the $(1,3)$-type decomposition breaks the symmetry between the six elements $A\in\six$ and only triality symmetry remains. We will explicitly derive the D8 $qq$-characters when there is only one surface boundary condition.
\paragraph{One-face D8 $qq$-character: part 1}Let us derive the one-face D8 $qq$-character by taking the infinite products of lower dimensional $qq$-characters. Let $\widetilde{\rho}$ denote the solid partition with an asymptotic Young diagram $\lambda_{34}$ extending semi-infinitely in the 34-surface as in \eqref{eq:fig-solidpartitionsurface1}. Under the $(1,3)$ decomposition, the solid partition $\widetilde{\rho}$ and the surface boundary Young diagram will be decomposed as 
\bea
\widetilde{\rho}=(\widetilde{\Pi}^{(1)},\widetilde{\Pi}^{(2)},\ldots),\qquad (\lambda_{34},\lambda_{34},\ldots).
\eea
Namely, for each layer, we will have an infinite size plane partition $\widetilde{\Pi}^{(k)}$ with one leg $\lambda_{34}$. Similar to the previous cases, we denote the boxes not included in the leg boundary as $\Pi^{(k)}$. The D6 $qq$-character for each layer comes from
\bea
\text{operator part:}&\qquad \Lambda^{K,\,\emptyset\emptyset\lambda_{34}}_{\bar{4},\Pi^{(k)}}(q_{4}^{k-1}x),\\
\text{coefficient part:}&\qquad \widetilde{\mathcal{Z}}_{\bar{4};\emptyset\emptyset\lambda_{34}}^{\D6}[\,K,\Pi^{(k)}\,].
\eea
The difference with the situation in Prop.~\ref{prop:D6toD8-oneleg1} is that now the non-increasing sequence of Young diagrams are all the same $\nu^{(k)}=\lambda_{34}$.
\begin{proposition}\label{prop:D6toD8-oneface1}
    The coefficient part and vertex operator part of the D8 $qq$-character with surface boundary condition at the 34-surface comes from the following infinite product
    \bea
    \prod_{k=1}^{\infty}\widetilde{\mathcal{Z}}^{\D6}_{\bar{4};\emptyset\emptyset\lambda_{34}}[K,\Pi^{(k)}]\overleftarrow{\prod_{k=1}^{\infty}}\Lambda^{K,\emptyset\emptyset\lambda_{34}}_{\bar{4},\Pi^{(k)}}(q_{4}^{k-1}x)\simeq (-1)^{\widetilde{\sigma}_{4}(\rho)} \mathcal{Z}^{\D8}_{\four;4;\{\lambda_{34}\}}[\rho,K]\Lambda_{\four;\rho}^{K;\{\lambda_{34}\}}(x)
    \eea
    where the equality is understood up to one-loop perturbative and boundary contributions and $\widetilde{\sigma}_{4}(\rho)$ is some sign factor.
\end{proposition}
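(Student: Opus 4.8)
The plan is to follow the same strategy as in the proof of Prop.~\ref{prop:D6toD8-oneleg1}, adapted to the case where every layer carries the \emph{same} asymptotic Young diagram $\nu^{(k)} = \lambda_{34}$ rather than a non-increasing sequence. First I would write down the building blocks for each layer: the operator $\Lambda^{K,\,\emptyset\emptyset\lambda_{34}}_{\bar{4},\Pi^{(k)}}(q_{4}^{k-1}x)$ and the coefficient $\widetilde{\mathcal{Z}}^{\D6}_{\bar{4};\emptyset\emptyset\lambda_{34}}[K,\Pi^{(k)}]$, the latter being $\mathbb{I}$ of the character
\begin{equation}
-\mathbf{P}_{4}^{\vee}(1-K^{-1})x_{k}^{-1}\bm{\Pi}^{(k)}+\mathbf{P}_{\four}\frac{\bm{\lambda}_{34}^{\vee}}{\mathbf{P}_{3}^{\vee}}\bm{\Pi}^{(k)}+\mathbf{P}_{123}^{\vee}\bm{\Pi}^{(k)\vee}\bm{\Pi}^{(k)},
\end{equation}
where $\bm{\lambda}_{34} = \sum_{\scube\in\lambda_{34}}\chi_{12,x_{k}}(\Bbox)$, with $x_{k} = xq_{4}^{k-1}$. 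Then I would compute the pairwise contractions $\Lambda^{K,\,\emptyset\emptyset\lambda_{34}}_{\bar{4},\Pi^{(j)}}(x_{j})\Lambda^{K,\,\emptyset\emptyset\lambda_{34}}_{\bar{4},\Pi^{(i)}}(x_{i})$ for $j>i$ using Lemma on D6 contractions, obtaining the cross-character exactly as in Prop.~\ref{prop:D6toD8-oneleg1} but with $\bm{\nu}^{(i)}=\bm{\nu}^{(j)}=\bm{\lambda}_{34}$. The total character $\mathbf{v}_{\D6\rightarrow\D8}^{\{\lambda_{34}\}}$ assembled from the single-layer and cross terms is then recognized, after the telescoping identities
\begin{equation}
\sum_{j>i}\mathbf{P}_{\four}\Big(\frac{\bm{\lambda}_{34}^{\vee}}{\mathbf{P}_{3}^{\vee}}\bm{\Pi}^{(i)}+\frac{\bm{\lambda}_{34}^{\vee}}{\mathbf{P}_{3}^{\vee}}\bm{\Pi}^{(j)}\Big)+\sum_{k}\mathbf{P}_{\four}\frac{\bm{\lambda}_{34}^{\vee}}{\mathbf{P}_{3}^{\vee}}\bm{\Pi}^{(k)}=\mathbf{P}_{\four}\frac{\bm{\lambda}_{34}^{\vee}}{\mathbf{P}_{3}^{\vee}}\sum_{j}\bm{\Pi}^{(j)},
\end{equation}
together with the analogous regularization $\sum_i x_i^{-1}\bm{\Pi}^{(i)}\cdots$, as the character defining $\mathcal{Z}^{\D8}_{\four;4;\{\lambda_{34}\}}[\rho,K]$ in~\eqref{eq:D8surfacepartition}.

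The second ingredient is the movability check. I would argue, exactly as around~\eqref{eq:D8vacuum-proof1}--\eqref{eq:D8vacuum-proof2} and as in the proof of Prop.~\ref{prop:D6toD8-oneleg1}, that the full character $\mathbf{v}_{\D6\rightarrow\D8}^{\{\lambda_{34}\}}$ is movable (citing \cite{Monavari:2022rtf, Nekrasov:2023nai} for the surface case), so that the terms $-\mathbf{P}_{4}(1-K)x_{j}\bm{\Pi}^{(i)\vee}$ and $\mathbf{P}_{\four}\frac{\bm{\lambda}_{34}}{\mathbf{P}_{3}}\bm{\Pi}^{(i)\vee}$ (appearing from the $j>i$ contractions) can be converted to their duals by the reflection property \eqref{eq:reflectionprop0}, Prop.~\ref{app-prop:reflection_sign}, \ref{app-prop:reflection-mod}. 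The reflection on the $\mathbf{P}_{\four}\bm{\Pi}^{(i)\vee}\bm{\Pi}^{(j)}$ part and on the mixed $\bm{\lambda}_{34}$-part then produces the $\mathbb{Z}_2$ sign $(-1)^{\widetilde{\sigma}_{4}(\rho)}$, where
\begin{equation}
\widetilde{\sigma}_{4}(\rho)=\left[\sum_{j>i}\Big(\mathbf{P}_{\four}\frac{\bm{\lambda}_{34}}{\mathbf{P}_{3}}\bm{\Pi}^{(i)\vee}+\mathbf{P}_{123}\bm{\Pi}^{(i)\vee}\bm{\Pi}^{(j)}\Big)\right]^{(0)},
\end{equation}
in direct parallel with the $\widetilde{\sigma}_{4}(\rho)$ appearing in Prop.~\ref{prop:D6toD8-oneleg1}. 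Finally the operator side is routine: the normal-ordered product telescopes,
\begin{equation}
{:\prod_{k=1}^{\infty}\Lambda^{K,\,\emptyset\emptyset\lambda_{34}}_{\bar{4},\Pi^{(k)}}(q_{4}^{k-1}x):}={:\frac{\mathsf{Z}(K,x)}{\prod_{\Abox\in\lambda_{34}}\mathsf{X}_{34}(\chi_{\bar{A},x}(\Bbox))}\prod_{\shcube\in\rho}\mathsf{A}^{-1}(\chi_{\four,x}(\hcube)):}=\Lambda_{\four;\rho}^{K;\{\lambda_{34}\}}(x),
\end{equation}
using the identity $\mathsf{A}(x)=\,{:\mathsf{W}_{\bar a}(x)\prod_{i\in\bar a}\mathsf{W}_{\bar a}(q_{ia}^{-1}x)/(\mathsf{W}_{\bar a}(q_a^{-1}x)\prod_{i\in\bar a}\mathsf{W}_{\bar a}(q_i x)):}$ layer by layer and the fact that the boundary product builds up $\mathsf{X}_{34}$ from $\mathsf{W}_{\bar4}$-ratios exactly as in section~\ref{sec:D8_w_bdry}.

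The main obstacle, as flagged in the Remark following the leg case, is \emph{not} the movability/reflection bookkeeping (which is mechanical once set up) but rather the precise tracking of the boundary contributions: the telescoping of $\sum_k \mathbf{P}_{\four}\frac{\bm{\lambda}_{34}^\vee}{\mathbf{P}_3^\vee}\bm{\Pi}^{(k)}$ and of the cross terms involves infinite sums over the semi-infinite surface, which must be regularized, and in doing so one generates additional $\mathbb{Z}_2$ signs and one-loop-type factors that depend only on $\lambda_{34}$ and not on $\rho$. Hence the statement is phrased ``up to one-loop perturbative and boundary contributions,'' and I would simply absorb these into the stated caveat rather than attempt a closed combinatorial form for them here — matching exactly the level of rigor of Prop.~\ref{prop:D6toD8-oneleg1} and Prop.~\ref{prop:D6toD8-oneleg2}. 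A secondary subtlety is that, because the $(1,3)$ decomposition breaks quadrality down to the triality among $1,2,3$, the surfaces $\lambda_{34}$ (one leg per layer, treated here) and $\lambda_{12},\lambda_{13},\lambda_{23}$ (surfaces-within-layers, the ``part 2'' analogue) require separate arguments; I would note this and handle the $\lambda_{12}$-type case in a companion proposition mirroring Prop.~\ref{prop:D6toD8-oneleg2}.
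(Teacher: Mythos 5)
Your proposal follows the paper's proof of this proposition essentially step for step: same single-layer characters, same pairwise contractions, same assembly of a total character $\mathbf{v}_{\D6\rightarrow\D8}^{\lambda_{34}}$, same movability/reflection argument producing $(-1)^{\widetilde{\sigma}_{4}(\rho)}$, and the same telescoping of the operator part into $\Lambda_{\four;\rho}^{K;\{\lambda_{34}\}}(x)$. The deferral of the $\lambda_{12}$-type surfaces to a companion statement also matches what the paper does.

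One displayed formula is wrong as written, and it sits at exactly the point that distinguishes the surface case from the leg case. In your telescoping identity
\bea
\sum_{j>i}\bfP_{\four}\Big(\frac{\bm{\lambda}_{34}^{\vee}}{\bfP_{3}^{\vee}}\bm{\Pi}^{(i)}+\frac{\bm{\lambda}_{34}^{\vee}}{\bfP_{3}^{\vee}}\bm{\Pi}^{(j)}\Big)+\sum_{k}\bfP_{\four}\frac{\bm{\lambda}_{34}^{\vee}}{\bfP_{3}^{\vee}}\bm{\Pi}^{(k)}=\bfP_{\four}\frac{\bm{\lambda}_{34}^{\vee}}{\bfP_{3}^{\vee}}\sum_{j}\bm{\Pi}^{(j)}
\eea
the coefficient of $\bm{\lambda}_{34}$ must carry the layer-dependent factor $q_{4}^{k-1}$ (equivalently, $\bm{\nu}^{(k)}=q_{4}^{k-1}\bm{\lambda}_{34}$ with $\bm{\lambda}_{34}$ based at $x$, which is what your own definition of $\bm{\lambda}_{34}$ at base $x_{k}$ implies). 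With layer-independent coefficients the left-hand side diverges, since each fixed $\bm{\Pi}^{(m)}$ appears once in the diagonal sum and infinitely often in the cross terms. Restoring the $q_{4}^{\pm(k-1)}$ factors, the coefficient of each $\bm{\Pi}^{(m)}$ becomes the regularized geometric sum $\sum_{l\geq 1}q_{4}^{-l+1}=1/\bfP_{4}^{\vee}$, so the correct right-hand side is $\bfP_{\four}\,\bm{\lambda}_{34}^{\vee}\sum_{j}\bm{\Pi}^{(j)}/\bfP_{34}^{\vee}$, not $/\bfP_{3}^{\vee}$. This extra $1/\bfP_{4}^{\vee}$ is precisely what converts the per-layer leg character into the surface character $\bfN_{\{\lambda_{34}\}}^{\vee}=\bm{\lambda}_{34}^{\vee}/\bfP_{34}^{\vee}$ appearing in $\mathcal{Z}^{\D8}_{\four;4;\{\lambda_{34}\}}[\rho,K]$; without it your final index does not match \eqref{eq:D8surfacepartition}. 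The same layer factor $q_{4}^{j-1}$ should appear inside your expression for $\widetilde{\sigma}_{4}(\rho)$, though since that quantity is only asserted to be ``some sign factor'' this is less consequential. With these corrections the argument goes through exactly as in the paper.
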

\begin{proof}
    The derivation is similar to Prop.~\ref{prop:D6toD8-oneleg1}. The character $\bm{\nu}^{(k)}$ will be now modified to 
    \bea
    \bm{\nu}^{(k)}=q_{4}^{k-1}\bm{\lambda}_{34},\quad \bm{\lambda}_{34}=\sum_{\Abox\in\lambda_{34}}\chi_{12,x}(\Bbox).
    \eea
    We first have
    \bea
\prod_{k=1}^{\infty}\widetilde{\mathcal{Z}}_{\bar{4};\emptyset\emptyset\lambda_{34}}^{\D6}[\,K,\Pi^{(k)}\,]=\mathbb{I}\left[\sum_{k=1}^{\infty}\left( -(1-K^{-1})x_{k}^{-1}\bm{\Pi}^{(k)}+\bfP_{\four}\frac{q_{4}^{-k+1}\bm{\lambda}_{34}^{\vee}}{\bfP_{3}^{\vee}}\bm{\Pi}^{(k)}+\bfP_{123}^{\vee}\bm{\Pi}^{(k)\vee}\bm{\Pi}^{(k)}  \right)\right],
\eea
and the contraction of the vertex operator operators gives
\bea
\left\langle\overleftarrow{\prod_{k=1}^{\infty}}\Lambda^{K,\emptyset\emptyset\lambda_{34}}_{\bar{4},\Pi^{(k)}}(q_{4}^{k-1}x)\right\rangle &\simeq \mathbb{I}\left[\sum_{j>i} \left(-\bfP_{4}^{\vee}(1-K^{-1})x_{i}^{-1}\bm{\Pi}^{(j)} -\bfP_{4}^{\vee}(1-K^{-1})x_{j}^{-1}\bm{\Pi}^{(i)}    \right) \right.\\
&\left.+ \sum_{j>i}\left(\bfP_{\four}\left(\frac{\bm{\lambda}_{34}q_{4}^{j-1}}{\bfP_{3}}\right)\bm{\Pi}^{(i)\vee} +\bfP_{\four}\left(\frac{q_{4}^{i-1}\bm{\lambda}_{34}}{\bfP_{3}}\right)^{\vee}\bm{\Pi}^{(j)} +\bfP_{\four}\bm{\Pi}^{(i)\vee}\bm{\Pi}^{(j)} \right)  \right],
\eea
where $x_{i}=q_{4}^{i-1}x$ and contributions from the one-loop part and boundaries are excluded.

We define the character $\mathbf{v}_{\D6\rightarrow \D8}^{\lambda_{34}}$ as
\bea
\mathbf{v}_{\D6\rightarrow \D8}^{\lambda_{34}}&=-(1-K^{-1})x^{-1}\sum_{i}\bm{\Pi}^{(i)}+\sum_{k}\left(\bfP_{\four}\frac{q_{4}^{-k+1}\bm{\lambda}_{34}^{\vee}}{\bfP_{3}^{\vee}}\bm{\Pi}^{(k)}+\bfP_{123}^{\vee}\bm{\Pi}^{(k)\vee}\bm{\Pi}^{(k)}
\right)\\
&+\sum_{j>i}\left(\bfP_{\four}\left(\frac{\bm{\lambda}_{34}q_{4}^{j-1}}{\bfP_{3}}\right)\bm{\Pi}^{(i)\vee} +\bfP_{\four}\left(\frac{q_{4}^{i-1}\bm{\lambda}_{34}}{\bfP_{3}}\right)^{\vee}\bm{\Pi}^{(j)} +\bfP_{\four}\bm{\Pi}^{(i)\vee}\bm{\Pi}^{(j)} \right)
\eea
and then using the reflection property Prop.~\ref{app-prop:reflection_sign} and \ref{app-prop:reflection-mod}, we have
\bea
\mathbb{I}\left[ 
\mathbf{v}_{\D6\rightarrow \D8}^{\lambda_{34}}\right]&=(-1)^{\widetilde{\sigma}_{4}(\rho)}\mathbb{I}\left[ -(1-K^{-1})x^{-1}\sum_{i}\bm{\Pi}^{(i)}+\bfP_{\four}\frac{\bm{\lambda}_{34}^{\vee}}{\bfP_{34}^{\vee}}\sum_{i}\bm{\Pi}^{(i)}+\bfP_{123}^{\vee}\left(\sum_{i}\bm{\Pi}^{(i)}\right)^{\vee}\left(\sum_{j}\bm{\Pi}^{(j)}\right)  \right]\\
&=(-1)^{\widetilde{\sigma}_{4}(\rho)}\mathcal{Z}^{\D8}_{\four;4;\{\lambda_{34}\}}[\rho,K]
\eea
where
\bea
\widetilde{\sigma}_{4}(\rho)=\left[\sum_{j>i}\left(\bfP_{\four}\left(\frac{\bm{\lambda}_{34}q_{4}^{j-1}}{\bfP_{3}}\right)\bm{\Pi}^{(i)\vee}+\bfP_{123}\bm{\Pi}^{(i)\vee}\bm{\Pi}^{(j)}\right)\right]^{(0)}.
\eea
The vertex operator part comes from
\bea
{:\prod_{k=1}^{\infty}\Lambda^{K,\emptyset\emptyset\lambda_{34}}_{\bar{4},\Pi^{(k)}}(q_{4}^{k-1}x):}={:\frac{\mathsf{Z}(K,x)}{\prod_{\Abox\in\lambda_{34}}\mathsf{X}_{34}(\chi_{12,x}(\Bbox))}\prod_{\shcube\in\rho}\mathsf{A}^{-1}(\chi_{\four,x}(\hcube)):}.
\eea
\end{proof}

\begin{conjecture}
    Up to sign factors coming from the boundary contributions, we have
    \bea
    \widetilde{\sigma}_{4}(\rho)\simeq \sigma_{4}(\rho)\quad \mod2.
    \eea
\end{conjecture}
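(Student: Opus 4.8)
This conjecture asserts that the sign factor
\[
\widetilde{\sigma}_{4}(\rho)=\left[\sum_{j>i}\left(\bfP_{\four}\left(\frac{\bm{\lambda}_{34}q_{4}^{j-1}}{\bfP_{3}}\right)\bm{\Pi}^{(i)\vee}+\bfP_{123}\bm{\Pi}^{(i)\vee}\bm{\Pi}^{(j)}\right)\right]^{(0)}
\]
arising from the D6$\to$D8 fusion in Prop.~\ref{prop:D6toD8-oneface1} agrees modulo $2$ with the combinatorial sign $\sigma_{4}(\rho)=\#\{(i,i,i,j)\in\rho\mid i<j\}$ up to boundary-dependent contributions. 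The plan is to follow closely the structure of the proof of Prop.~\ref{prop:sign_proof}, which handled exactly the second summand $\left[\sum_{i<j}\bfP_{123}\bm{\Pi}^{(i)\vee}\bm{\Pi}^{(j)}\right]^{(0)}=\sigma_4(\rho)\bmod 2$ for a genuine solid partition $\rho$ decomposed as $\rho=(\Pi^{(1)},\Pi^{(2)},\dots)$ with $\Pi^{(i)}\succeq\Pi^{(i+1)}$. The new ingredient is the first summand, which encodes the interaction between the fixed asymptotic surface Young diagram $\lambda_{34}$ (appearing identically in every layer, scaled by $q_4^{j-1}$) and the regular boxes $\Pi^{(i)}$. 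So the strategy splits into two parts: (i) re-derive the $\bfP_{123}\bm{\Pi}^{(i)\vee}\bm{\Pi}^{(j)}$ contribution verbatim from Prop.~\ref{prop:sign_proof}, noting that in the present surface-boundary setting the $\widetilde{\Pi}^{(i)}$ (gray plus yellow) still satisfy the nesting $\widetilde{\Pi}^{(i)}\succeq\widetilde{\Pi}^{(i+1)}$ so that the box-by-box analysis of unmovable terms goes through with $\rho$ now the \emph{full} solid partition $\tilde\rho$; and (ii) show that the extra term $\sum_{j>i}\bfP_{\four}\left(\bm{\lambda}_{34}q_4^{j-1}/\bfP_3\right)\bm{\Pi}^{(i)\vee}$ contributes only a sign that depends on the boundary data $\lambda_{34}$ and on the gray boxes, hence is absorbed into the ``$\simeq$'' up to boundary contributions.

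For part (i) I would reuse the key movability lemmas: for $\eta=x_jq_1^{A-1}q_2^{B-1}q_3^{C-1}$ with $(A,B,C)\in\Pi^{(j)}$, nesting forces $(A,B,C)\in\Pi^{(i)}$ for $i<j$, so one decomposes $\bm{\Pi}^{(i)}=\Delta_i(\eta)+\bm{\Pi}^{(i)}(\eta)$ with $\Delta_i(\eta)$ the rectangular box $[1,A]\times[1,B]\times[1,C]$ and observes that $\bfP_{123}\bm{\Pi}^{(i)}(\eta)^\vee\eta$ is movable (each monomial is strictly negative in at least one of $q_1,q_2,q_3$), while
\[
\bfP_{123}\Delta_i(\eta)^\vee\eta=(1-q_1^A)(1-q_2^B)(1-q_3^C)(q_1q_2q_3)^{i-j},
\]
whose unmovable part is $-\delta_{A=B=C=j-i}$. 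Summing over $i<j$ and over $\eta\in\Pi^{(j)}$ reproduces $\#\{(i,i,i,j)\in\rho\mid i<j\}=\sigma_4(\rho)$. One subtlety to check carefully: whether the boxes of $\lambda_{34}$ that lie inside the ``diagonal strip'' $\{A=B=C\}$ could alter the count; since $\lambda_{34}$ sits along the $12$-plane ($\chi_{12,x}$-coordinates), a gray box in layer $k$ has coordinates $(a,b,k)$ with $(a,b)\in\lambda_{34}$, so it lies on the diagonal only if $a=b=k$, which makes the gray contribution to $\sigma_4$ depend solely on $\lambda_{34}$ and the layer index, i.e. a purely boundary quantity.

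For part (ii) the computation is the $34$-surface analogue of the edge/face analysis in Prop.~\ref{prop:D6toD8-oneleg1}: I would expand $\bfP_{\four}\left(\bm{\lambda}_{34}q_4^{j-1}/\bfP_3\right)\bm{\Pi}^{(i)\vee}$ using $\bfP_{\four}/\bfP_3=\bfP_{124}$ and, for each box $\cube=(a,b)\in\lambda_{34}$ giving $\chi=xq_1^{a-1}q_2^{b-1}$ and each box $(A,B,C)\in\Pi^{(i)}$, track the $q_3$-degree: the factor $\bfP_{124}$ contributes $q_3$-degree $0$, while $\bm{\Pi}^{(i)\vee}$ carries $q_3^{-C+1}$ with $C\ge 1$; the only way a monomial is unmovable (all of $q_1,q_2,q_3,q_4$ degree zero) is to have $C=1$, $j-i$ matching a $q_4$-shift, and the $q_1,q_2$-degrees cancelling between $\lambda_{34}$ and $\Pi^{(i)}$ — which pins the relevant boxes of $\Pi^{(i)}$ to lie in the bottom layer $C=1$ directly ``shadowing'' $\lambda_{34}$. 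That count again only involves $\lambda_{34}$ and the shape of the gray region, so it is a boundary contribution. The \textbf{main obstacle} I anticipate is precisely making the last statement rigorous: one must show that the net sign from the first summand, after combining with the reflection-induced signs from Prop.~\ref{app-prop:reflection-mod}, depends on $\rho$ only through its boundary (gray) data and not through the yellow boxes, so that it genuinely disappears under ``$\simeq\bmod 2$''. This requires carefully regularizing the infinite sums over $i,j$ (the $\lambda_{34}$ stack is semi-infinite in the $4$-direction) and checking that the yellow-box contributions to part (ii) are \emph{movable} — an explicit but somewhat delicate degree count. Once that is established, combining (i) and (ii) yields $\widetilde{\sigma}_4(\rho)\equiv\sigma_4(\rho)\pmod 2$ up to boundary terms, which is the claim.
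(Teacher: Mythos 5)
You should first be aware that the paper does not prove this statement: it is left as a conjecture, and the Remark immediately following it says explicitly that deriving the combinatorial formula is difficult because one must deal with infinitely many boxes and regularize them properly, deferring the problem to future work. The only supporting material is Appendix~B.2, which reduces the claim to the identity
\[
\left[\sum_{j>i}\Bigl(\bfP_{\four}\bm{\Pi}^{(j)}_{\bd}\bm{\Pi}^{(i)\vee}+\bfP_{123}\bm{\Pi}^{(i)\vee}\bm{\Pi}^{(j)}\Bigr)\right]^{(0)}=\sigma_{4}(\rho)-\left[\sum_{i<j}\bfP_{123}\bm{\Pi}^{(i)\vee}_{\bd}\bm{\Pi}^{(j)}\right]^{(0)},
\]
leaving the residual mixed term uncontrolled. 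So there is no proof to compare against; the question is whether your sketch would close the gap, and as written it would not.

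Two concrete problems. First, in part (i) you propose to run the proof of Prop.~\ref{prop:sign_proof} ``verbatim'' on $\bfP_{123}\bm{\Pi}^{(i)\vee}\bm{\Pi}^{(j)}$. That argument hinges on the decomposition $\bm{\Pi}^{(i)}=\Delta_i(\eta)+\bm{\Pi}^{(i)}(\eta)$ with $\Delta_i(\eta)$ the full rectangle beneath $\eta\in\Pi^{(j)}$. For the regular (yellow) characters that actually appear in $\widetilde{\sigma}_4(\rho)$ this fails: part of that rectangle lies in the gray region $\Pi^{(i)}_{\bd}$, so the rectangles are only completed after adding the cross terms $\bfP_{123}\bm{\Pi}^{(i)\vee}_{\bd}\bm{\Pi}^{(j)}$ and $\bfP_{123}\bm{\Pi}^{(i)\vee}\bm{\Pi}^{(j)}_{\bd}$ (this is exactly what the appendix does). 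Those cross terms depend on the yellow boxes, are not obviously ``boundary contributions,'' and are precisely why the statement remains conjectural; your ``diagonal strip'' subtlety does not touch them, and running the argument on the full characters $\widetilde{\bm{\Pi}}^{(i)}$ instead produces a divergent count that needs the very regularization you defer. Second, your degree count in part (ii) is incorrect: a monomial of $\bfP_{124}\,q_4^{j-1}\bm{\lambda}_{34}\bm{\Pi}^{(i)\vee}$ has $q_3$-exponent $1-C-(j-i)+\epsilon$ with $\epsilon\in\{0,-1\}$ after eliminating $q_4$, which is $\le -1$ for all $j>i$ and $C\ge 1$; hence this summand is entirely movable and there are no unmovable terms ``pinned to $C=1$.'' The term you would actually need to control, $\bigl[\sum_{i<j}\bfP_{123}\bm{\Pi}^{(i)\vee}_{\bd}\bm{\Pi}^{(j)}\bigr]^{(0)}$, is never isolated in your sketch. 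You do honestly flag the main obstacle at the end, and your overall instinct (reduce to Prop.~\ref{prop:sign_proof} plus a boundary cross-term analysis) matches the authors' partial reduction, but the proposal does not resolve the open step and contains wrong intermediate claims along the way.
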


\paragraph{One-face D8 $qq$-character: part 2}
Let us next consider the case when the surface boundary spans the $12$-surface as in \eqref{eq:fig-solidpartitionsurface2}. For this situation, in the $(1,3)$ decomposition, each layer will have a finite plane partition $(\Pi^{(1)},\Pi^{(2)},\ldots)$ whose origin is shifted in the 3-direction. The asymptotic Young diagram $\lambda_{12}$ in the surface 12 will be decomposed into non-increasing 1d partitions as
\bea
\lambda_{12}=\{k_{12}^{(1)},k_{12}^{(2)},\ldots\},\quad k_{12}^{(i)}\geq k_{12}^{(i+1)}.
\eea
The origin of the plane partition for layer $k$ will be $xq_{4}^{k-1}q_{3}^{k_{12}^{(k)}}$ in the multiplicative language. The following contributes to the D8 $qq$-character:
\bea
\text{operator part:}&\qquad \Lambda^{K,\,\emptyset\emptyset\emptyset}_{\bar{4},\Pi^{(k)}}(q_{3}^{k_{12}^{(k)}}q_{4}^{k-1}x),\\
\text{coefficient part:}&\qquad \widetilde{\mathcal{Z}}_{\bar{4};\emptyset\emptyset\emptyset}^{\D6}[\,Kq_{3}^{-k_{12}^{(k)}},\Pi^{(k)}\,].
\eea
\begin{proposition}\label{prop:D6toD8-oneface2}
    The coefficient part and vertex operator part of the D8 $qq$-character with surface boundary condition at the 34-surface comes from the following infinite product
    \bea
    \prod_{k=1}^{\infty}\widetilde{\mathcal{Z}}^{\D6}_{\bar{4};\emptyset\emptyset\emptyset}[q_{3}^{-k_{12}^{(k)}}K,\Pi^{(k)}]\overleftarrow{\prod_{k=1}^{\infty}}\Lambda^{K,\emptyset\emptyset\emptyset}_{\bar{4},\Pi^{(k)}}(q_{4}^{k-1}q_{3}^{k_{12}^{(k)}}x)\simeq (-1)^{\widetilde{\sigma}_{4}(\rho)} \mathcal{Z}^{\D8}_{\four;4;\{\lambda_{12}\}}[\rho,K]\Lambda_{\four;\rho}^{K;\{\lambda_{12}\}}(x)
    \eea
    where the equality is understood up to one-loop perturbative and boundary contributions and $\widetilde{\sigma}_{4}(\rho)$ is some sign factor.
\end{proposition}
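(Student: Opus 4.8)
The plan is to follow the same strategy as in Propositions~\ref{prop:D6toD8-oneleg1}, \ref{prop:D6toD8-oneleg2}, and \ref{prop:D6toD8-oneface1}, now applied to the situation where the boundary surface spans the $12$-plane rather than the $34$-plane. First I would set up the $(1,3)$-type decomposition: write the solid partition as $\widetilde{\rho}=(\widetilde{\Pi}^{(1)},\widetilde{\Pi}^{(2)},\ldots)$ with each $\widetilde{\Pi}^{(k)}$ a \emph{finite} plane partition (this is the key structural difference from part~1: the surface spanning $12$ becomes, layer by layer, a shift of the origin in the $3$-direction by $k_{12}^{(k)}$ rather than an asymptotic Young diagram), and decompose $\lambda_{12}=\{k_{12}^{(1)}\geq k_{12}^{(2)}\geq\cdots\}$ into its non-increasing sequence of column heights. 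From the surface boundary computation in section~\ref{sec:D8partitionfunction} we know the effect is to shift the $K$-parameter of the $k$-th D6 layer to $Kq_3^{-k_{12}^{(k)}}$ and the spectral parameter to $xq_4^{k-1}q_3^{k_{12}^{(k)}}$, which is why the relevant D6 building blocks are $\Lambda^{K,\emptyset\emptyset\emptyset}_{\bar{4},\Pi^{(k)}}(q_3^{k_{12}^{(k)}}q_4^{k-1}x)$ and $\widetilde{\mathcal{Z}}^{\D6}_{\bar{4};\emptyset\emptyset\emptyset}[Kq_3^{-k_{12}^{(k)}},\Pi^{(k)}]$.

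The core of the argument is then to assemble the total character $\mathbf{v}_{\D6\rightarrow\D8}^{\lambda_{12}}$ out of (i) the self-contributions $\widetilde{\mathcal{Z}}^{\D6}_{\bar 4}$ of each layer, and (ii) the pairwise contractions $\Lambda^{(j)}(x_j)\Lambda^{(i)}(x_i)$ for $j>i$, using the $\D6\tbar\D6$ contraction lemma. As in Prop.~\ref{prop:D6toD8-oneleg1}, the term $-\bfP_4(1-K)x_j\bm{\Pi}^{(i)\vee}$ is \emph{movable} when $j>i$ (the power of $q_4$ is strictly positive, so no unmovable piece survives), so I can apply the reflection property \eqref{eq:reflectionprop0}--\eqref{eq:reflectionprop2} to convert it into $-\bfP_4^\vee(1-K^{-1})x_j^{-1}\bm{\Pi}^{(i)}$; combined with the diagonal and the $j<i$ terms this telescopes to $-(1-K^{-1})x^{-1}\sum_i\bm{\Pi}^{(i)}$ after the formal regularization $\sum_i x_i = 1/\bfP_4$. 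The surface terms $\bfP_{\four}\,q_3^{k_{12}^{(j)}}q_4^{j-1}(\bfP_3)^{-1}\bm{\Pi}^{(i)\vee}$ and their conjugates, together with the diagonal surface term $\bfP_{\four}q_3^{-k_{12}^{(k)}}\bm{\lambda}_{12}$-type contributions (encoding $\bm{\Pi}^{(k)}$ paired with the boundary), should recombine — again via Prop.~\ref{app-prop:reflection-mod} and after absorbing the sign — into $\bfP_{\four}\frac{\bm{\lambda}_{12}^\vee}{\bfP_{12}^\vee}\sum_i\bm{\Pi}^{(i)}$, matching the surface boundary character $\bfN_{\{\lambda_{12}\}}$ of \eqref{eq:D8surfacepartition}. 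Applying $\mathbb{I}[\cdot]$ and collecting the $\sum_{j>i}\bfP_{\four}\bm{\Pi}^{(i)\vee}\bm{\Pi}^{(j)}$ piece exactly as in Prop.~\ref{prop:sign_proof} produces $(-1)^{\widetilde{\sigma}_4(\rho)}\mathcal{Z}^{\D8}_{\four;4;\{\lambda_{12}\}}[\rho,K]$, where $\widetilde{\sigma}_4(\rho)=[\sum_{j>i}(\bfP_{\four}\,q_3^{k_{12}^{(j)}}q_4^{j-1}(\bfP_3)^{-1}\bm{\Pi}^{(i)\vee}+\bfP_{123}\bm{\Pi}^{(i)\vee}\bm{\Pi}^{(j)})]^{(0)}$. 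Finally the vertex operator part is handled by the already-established regularization $:\prod_k\Lambda^{K,\emptyset\emptyset\emptyset}_{\bar 4,\Pi^{(k)}}(q_4^{k-1}q_3^{k_{12}^{(k)}}x):\ =\ :\frac{\mathsf{Z}(K,x)}{\prod_{\Abox\in\lambda_{12}}\mathsf{X}_{12}(\chi_{34,x}(\Bbox))}\prod_{\shcube\in\rho}\mathsf{A}^{-1}(\chi_{\four,x}(\hcube)):$, using the surface highest-weight computation \eqref{eq:D8_1surface_bdy} and relation \eqref{eq:vertexoprelation}, giving $\Lambda_{\four;\rho}^{K;\{\lambda_{12}\}}(x)$.

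The main obstacle I anticipate is bookkeeping of the shifted $K$-parameters $Kq_3^{-k_{12}^{(k)}}$: unlike part~1, each layer carries a \emph{different} effective $K$, so the diagonal supergroup factors $\widetilde{\mathcal{Z}}^{\D6}_{\bar 4}[Kq_3^{-k_{12}^{(k)}},\Pi^{(k)}]$ and the cross-terms involve $q_3^{k_{12}^{(i)}-k_{12}^{(j)}}$-dependent shifts that must be shown to cancel correctly against the $q_3$-shifts in the spectral parameters $q_3^{k_{12}^{(k)}}q_4^{k-1}x$. One must check that the ``extra'' $K$-dependent pieces are exactly of the form that either reassembles into the clean $\bfP_{\four}\bm{\lambda}_{12}^\vee/\bfP_{12}^\vee\cdot\sum\bm{\Pi}^{(i)}$ boundary term or is a pure boundary/one-loop contribution that can be discarded; verifying movability of these $K$-dressed cross-terms (so the reflection property applies without generating spurious unmovable contributions) is the delicate point. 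As in the surrounding propositions, I would state the resulting combinatorial identification $\widetilde{\sigma}_4(\rho)\simeq\sigma_4(\rho)\bmod 2$ as a conjecture rather than attempt a full proof, since it requires regularizing infinitely many boundary boxes (cf.\ Appendix~\ref{app-sec:boundary-signrule}).
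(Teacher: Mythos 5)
Your overall strategy coincides with the paper's: the same $(1,3)$ decomposition into \emph{finite} plane partitions with origins shifted by $q_3^{k_{12}^{(k)}}$ and layer-dependent Kähler parameters $Kq_3^{-k_{12}^{(k)}}$, the same use of the $\D6\tbar\D6$ contraction, reflection properties, the formal regularization of $\sum_i x_i$, and the same decision to leave $\widetilde{\sigma}_4(\rho)\simeq\sigma_4(\rho)$ as a conjecture. However, the middle of your argument transplants bookkeeping from the $34$-surface case (part 1) that does not apply here, and this creates two concrete problems. First, there are no layer-by-layer ``surface terms'' $\bfP_{\four}\,q_3^{k_{12}^{(j)}}q_4^{j-1}\bfP_3^{-1}\bm{\Pi}^{(i)\vee}$ in this decomposition: each layer is a \emph{trivial-boundary} D6 character, and the boundary character $\bfP_{\four}\bm{\lambda}_{12}^{\vee}/\bfP_{12}^{\vee}\sum_i\bm{\Pi}^{(i)}$ is not sitting there waiting to be recombined — it is generated by resumming the $K$-independent parts of the fundamental-matter and cross terms via the identity $\bfP_4\sum_{i\geq 1} xq_4^{i-1}q_3^{k_{12}^{(i)}} = x - \bfP_{34}\bm{\lambda}_{12}$. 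This identity is the crux of the proof and is exactly the ``delicate cancellation'' you flag as an anticipated obstacle without resolving it. Second, your claim that the cross-term (here $-\bfP_4(1-Kq_3^{-k_{12}^{(j)}})x_jq_3^{k_{12}^{(j)}}\bm{\Pi}^{(i)\vee}$, not $-\bfP_4(1-K)x_j\bm{\Pi}^{(i)\vee}$) is movable for $j>i$ is not justified in this setting: because the $\bm{\Pi}^{(i)}$ carry the extra $q_3^{k_{12}^{(i)}}$ shifts, its $K$-independent piece can have a nonvanishing unmovable part, and the paper keeps precisely $\bigl[-\bfP_4 x_j q_3^{k_{12}^{(j)}}\bm{\Pi}^{(i)\vee}\bigr]^{(0)}$ inside $\widetilde{\sigma}_4(\rho)=\bigl[\sum_{j>i}\bigl(-\bfP_4 x_j q_3^{k_{12}^{(j)}}\bm{\Pi}^{(i)\vee}+\bfP_{123}\bm{\Pi}^{(i)\vee}\bm{\Pi}^{(j)}\bigr)\bigr]^{(0)}$, which differs from the formula you wrote. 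Since the proposition only asserts the existence of \emph{some} sign factor, these slips do not change the statement being proved, but the resummation identity and the correct source of the sign are the two ingredients your writeup is missing.
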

\begin{proof}
    We first introduce the following characters
    \bea
    \bm{\Pi}^{(i)}=\sum_{\scube\in\Pi^{(i)}}\chi_{\bar{4},x_{i}q_{3}^{k_{12}^{(i)}}}(\cube),\quad \bm{\lambda}_{12}=\sum_{i=1}^{\infty}\sum_{j=1}^{k_{12}^{(i)}}xq_{4}^{i-1}q_{3}^{j-1},\quad x_{i}=xq_{4}^{i-1}
    \eea
    where note that the first character is different from the previous case since the origin of the plane partition is shifted. We then have
    \bea
     \prod_{k=1}^{\infty}\widetilde{\mathcal{Z}}^{\D6}_{\bar{4};\emptyset\emptyset\emptyset}[K,\Pi^{(k)}]&=\mathbb{I}\left[\sum_{k=1}^{\infty}\left(-(1-q_{3}^{k_{12}^{(k)}}K^{-1})(x_{k}q_{3}^{k_{12}^{(k)}})^{-1}\bm{\Pi}^{(k)}+\bfP_{123}^{\vee}\bm{\Pi}^{(k)\vee}\bm{\Pi}^{(k)}\right)\right],\\
     \left\langle \overleftarrow{\prod_{k=1}^{\infty}}\Lambda^{K,\emptyset\emptyset\emptyset}_{\bar{4},\Pi^{(k)}}(q_{4}^{k-1}q_{3}^{k_{12}^{(k)}}x) \right\rangle &\simeq \mathbb{I}\left[ \sum_{j>i}\left(-\bfP_{4}^{\vee}(1-q_{3}^{k_{12}^{(k)}}K^{-1})x_{i}^{-1}q_{3}^{-k_{12}^{(i)}}\bm{\Pi}^{(j)}\right)\right.\\
     &\qquad \left.\sum_{j>i}\left(-\bfP_{4}(1-Kq_{3}^{-k_{12}^{(k)}})x_{j}q_{3}^{k_{12}^{(j)}}\bm{\Pi}^{(i)\vee}+\bfP_{\four}\bm{\Pi}^{(i)\vee}\bm{\Pi}^{(j)}  \right) \right].
    \eea
We define
\bea
\mathbf{v}_{\D6\rightarrow \D8}^{\lambda_{12}}&=\sum_{k=1}^{\infty}\left(-(1-q_{3}^{k_{12}^{(k)}}K^{-1})(x_{k}q_{3}^{k_{12}^{(k)}})^{-1}\bm{\Pi}^{(k)}+\bfP_{123}^{\vee}\bm{\Pi}^{(k)\vee}\bm{\Pi}^{(k)}\right)\\
&+\sum_{j>i}\left(-\bfP_{4}^{\vee}(1-q_{3}^{k_{12}^{(i)}}K^{-1})x_{i}^{-1}q_{3}^{-k_{12}^{(i)}}\bm{\Pi}^{(j)}-\bfP_{4}(1-Kq_{3}^{-k_{12}^{(j)}})x_{j}q_{3}^{k_{12}^{(j)}}\bm{\Pi}^{(i)\vee}+\bfP_{\four}\bm{\Pi}^{(i)\vee}\bm{\Pi}^{(j)}  \right).
\eea
Using the reflection property in Prop.~\ref{app-prop:reflection_sign} and \ref{app-prop:reflection-mod}, we have
\bea
\mathbb{I}\left[\mathbf{v}_{\D6\rightarrow \D8}^{\lambda_{12}}\right]&=(-1)^{\widetilde{\sigma}_{4}(\rho)}\mathbb{I}\left[-(1-K^{-1})x^{-1}\sum_{i}\bm{\Pi}^{(i)}+\bfP_{\four}\frac{\bm{\lambda}_{12}^{\vee}}{\bfP_{12}^{\vee}}\sum_{i}\bm{\Pi}^{(i)}+\bfP_{123}^{\vee}\left(\sum_{i}\bm{\Pi}^{(i)}\right)^{\vee}\left(\sum_{j}\bm{\Pi}^{(j)}\right)\right]\\
&=(-1)^{\widetilde{\sigma}_{4}(\rho)}\mathcal{Z}^{\D8}_{\four;4;\{\lambda_{12}\}}[\rho,K]
\eea
where we used
\bea
\bfP_{4}\sum_{i=1}xq_{4}^{i-1}q_{3}^{k_{12}^{(i)}}=x-\bfP_{34}\bm{\lambda}_{12}
\eea
and defined the sign factor
\bea
\widetilde{\sigma}_{4}(\rho)=\left[\sum_{j>i}\left(-\bfP_{4}x_{j}q_{3}^{k_{12}^{(j)}}\bm{\Pi}^{(i)\vee} +\bfP_{123}\bm{\Pi}^{(i)\vee} \bm{\Pi}^{(j)} \right)\right]^{(0)}.
\eea
The vertex operator part is given as
\bea
{:\prod_{k=1}^{\infty}\Lambda^{K,\emptyset\emptyset\emptyset}_{\bar{4},\Pi^{(k)}}(q_{4}^{k-1}q_{3}^{k_{12}^{(k)}}x):}={:\frac{\mathsf{Z}(K,x)}{\prod_{\Abox\in\lambda_{12}}\mathsf{X}_{12}(\chi_{34,x}(\Bbox))}\prod_{\shcube\in\rho}\mathsf{A}^{-1}(\chi_{\four,x}(\hcube)):}.
\eea
\end{proof}

\begin{conjecture}
    Up to boundary contributions, we have
    \bea
    \widetilde{\sigma}_{4}(\rho)\simeq \sigma_{4}(\rho)\quad \mod2.
    \eea
\end{conjecture}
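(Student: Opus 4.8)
The plan is to derive the combinatorial identity $\widetilde{\sigma}_{4}(\rho)\simeq\sigma_{4}(\rho)\pmod 2$ for the surface boundary condition along the $12$-plane by following the same box-by-box analysis of unmovable terms that was carried out in the proof of Prop.~\ref{prop:sign_proof}, suitably adapted to the shifted origins dictated by the $1$d partition $\lambda_{12}=\{k_{12}^{(i)}\}$. Recall from Prop.~\ref{prop:D6toD8-oneface2} that
\bea
\widetilde{\sigma}_{4}(\rho)=\left[\sum_{j>i}\left(-\bfP_{4}\,x_{j}q_{3}^{k_{12}^{(j)}}\bm{\Pi}^{(i)\vee}+\bfP_{123}\bm{\Pi}^{(i)\vee}\bm{\Pi}^{(j)}\right)\right]^{(0)},
\eea
so the first step is to show that the new term $-\bfP_{4}x_{j}q_{3}^{k_{12}^{(j)}}\bm{\Pi}^{(i)\vee}$ contributes no unmovable piece. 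Writing a generic box of $\bm{\Pi}^{(i)}$ as $x q_{4}^{i-1}q_{3}^{k_{12}^{(i)}}q_{1}^{A-1}q_{2}^{B-1}q_{3}^{C-1}$ with $A,B,C\ge 1$, the relevant monomial is $\bfP_{4}q_{4}^{j-i}q_{3}^{k_{12}^{(j)}-k_{12}^{(i)}}q_{1}^{-A+1}q_{2}^{-B+1}q_{3}^{-C+1}$; since $j>i$ forces the power of $q_{4}$ to be strictly positive while $\bfP_{4}=(1-q_{4}^{-1})$ only carries powers $q_4^{0},q_4^{-1}$, no term of total degree zero in all four variables survives. Hence this term can be replaced by its dual under the reflection property Prop.~\ref{app-prop:reflection_sign} without incurring a sign, and $\widetilde{\sigma}_{4}(\rho)$ reduces to $\bigl[\sum_{j>i}\bfP_{123}\bm{\Pi}^{(i)\vee}\bm{\Pi}^{(j)}\bigr]^{(0)}$.

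The second and main step is to evaluate this remaining unmovable count. Here I would mimic the argument of Prop.~\ref{prop:sign_proof} verbatim: fix a box $\eta=(A,B,C)\in\Pi^{(j)}$ (with coordinates measured from the shifted origin $xq_4^{j-1}q_3^{k_{12}^{(j)}}$), use the plane partition nesting $\Pi^{(i)}\succeq\Pi^{(j)}$ for $i<j$ together with $k_{12}^{(i)}\ge k_{12}^{(j)}$ to split $\bm{\Pi}^{(i)}=\Delta_i(\eta)+\bm{\Pi}^{(i)}(\eta)$ where $\Delta_i(\eta)$ is the full box $[1,A]\times[1,B]\times[1,C]$ sub-brick, and check that (a) $\bfP_{123}\bm{\Pi}^{(i)}(\eta)^{\vee}\eta$ is movable because the surviving exponents are $\le -2$ in at least one of $q_{1,2,3}$, and (b) $\bfP_{123}\Delta_i(\eta)^{\vee}\eta=(1-q_1^{A})(1-q_2^{B})(1-q_3^{C})(q_1q_2q_3)^{i-j}$ contributes exactly one unmovable term, coming from $-q_1^A q_2^B q_3^C (q_{123})^{i-j}$ with $A=B=C=j-i$. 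Summing over $i<j$ and over $\eta$ then gives, \emph{in the coordinates of the reindexed solid partition}, $s(\Pi)=\#\{(A,B,C,j)\in\rho_{\mathrm{reindex}}\mid A=B=C<j\}\equiv\sigma_{4}(\rho)\pmod 2$. The subtlety compared with the base case is that the origin-shift $q_3^{k_{12}^{(k)}}$ modifies which absolute coordinates the boxes of $\Pi^{(k)}$ sit at; one must verify that the non-increasing condition $k_{12}^{(k)}\ge k_{12}^{(k+1)}$ is exactly what makes the nesting $\Pi^{(i)}\supseteq\Delta_i(\eta)$ hold and that the boundary boxes (the stack of $k_{12}^{(i)}$ boxes in the $3$-direction glued on at layer $i$) either cancel pairwise or contribute only to the ``boundary sign factors'' that the conjecture permits us to drop.

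The hard part will be precisely controlling those boundary contributions: unlike the no-boundary case, $\bm{\lambda}_{12}$ is an infinite character and its self-interactions $\bfP_{\four}\bm{\lambda}_{12}^{\vee}\bm{\lambda}_{12}/\bfP_{12}^{\vee}$ and cross terms $\bfP_{\four}\bm{\lambda}_{12}^{\vee}\bm{\Pi}^{(j)}/\bfP_{12}^{\vee}$ require a regularization (cf.\ Appendix~\ref{app-sec:boundary-signrule}), and separating the genuinely $\rho$-dependent sign from the purely $\lambda_{12}$-dependent overall sign is delicate because the shifted origins entangle the two. My expectation is that, just as in the $12$-plane surface analysis, one should argue that any unmovable term involving at least one box of $\bm{\lambda}_{12}$ either comes in a dual pair (contributing an even count) or depends only on $\lambda_{12}$, so that modulo such boundary sign factors the equality $\widetilde{\sigma}_{4}(\rho)\equiv\sigma_{4}(\rho)$ holds; making this rigorous for arbitrary $\rho$ would require the full combinatorial machinery of gluing of $qq$-characters, which is why the statement is phrased as a conjecture rather than a proposition, and I would present the box-counting computation above as strong supporting evidence together with explicit low-degree checks of $\rho$ with $|\rho|\le 3$.
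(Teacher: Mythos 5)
First, be aware that the paper itself offers no proof of this statement: it is deliberately left as a conjecture, and the remark at the end of Section~\ref{sec:D8qqlegboundary} explains why --- extracting a clean combinatorial formula from $\widetilde{\sigma}_{4}(\rho)$ requires regularizing characters with infinitely many terms and disentangling the $\rho$-dependent sign from the purely boundary-dependent one. So your proposal is being measured against an open problem, not against an argument in the text. Within that framing, your plan is the natural one and its first step is essentially sound: the term $-\bfP_{4}x_{j}q_{3}^{k_{12}^{(j)}}\bm{\Pi}^{(i)\vee}$ is indeed movable, by the same computation the paper performs for $-\bfP_{4}(1-K)x_{j}\bm{\Pi}^{(i)\vee}$ in Prop.~\ref{prop:D6toD8vacuumproof}. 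But your stated reason is not quite right: you cannot conclude movability from the $q_{4}$-exponent alone, because $q_{4}=(q_{1}q_{2}q_{3})^{-1}$ is not independent of the other parameters (and note $\bfP_{4}=1-q_{4}$, not $1-q_{4}^{-1}$, in the paper's conventions). You must rewrite $q_{4}^{j-i}$ as $(q_{1}q_{2}q_{3})^{i-j}$ and check that the resulting $q_{1,2,3}$-exponents cannot all vanish simultaneously, which works because $i-j+1\leq 0$ while the box coordinates satisfy $A,B,C\geq 1$.

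The genuine gap is in your second step. The product $\bm{\Pi}^{(i)\vee}\bm{\Pi}^{(j)}$ is taken in absolute coordinates, where layer $k$ has its origin shifted to $xq_{4}^{k-1}q_{3}^{k_{12}^{(k)}}$ and the columns with third coordinate at most $k_{12}^{(k)}$ are excluded from $\Pi^{(k)}$. Repeating the brick decomposition of Prop.~\ref{prop:sign_proof} requires $\Delta_{i}(\eta)\subseteq\Pi^{(i)}$, but the full brick anchored at the absolute origin contains boundary boxes of layer $i$ that are \emph{not} in $\Pi^{(i)}$; the telescoping factor $(1-q_{1}^{A})(1-q_{2}^{B})(1-q_{3}^{C})$ is therefore truncated in the $3$-direction by $k_{12}^{(i)}-k_{12}^{(j)}$, and whether the single unmovable term at $A=B=C=j-i$ (in absolute coordinates) is actually counted depends on whether that diagonal box lies in the regular part $\rho$ or in the boundary. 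This is exactly the content of the phrase ``up to boundary contributions,'' and your proposal names the issue without resolving it: you assert that such terms ``either cancel pairwise or contribute only to the boundary sign factors'' but give no argument, and no pairing or regularization is exhibited. Since this is the entire difficulty (the rest of your computation is a faithful transcription of the paper's existing proofs of Prop.~\ref{prop:D6toD8vacuumproof} and Prop.~\ref{prop:sign_proof}), the proposal should be regarded as a plausible strategy plus the easy half, not a proof; the low-degree checks you suggest are the right way to accumulate evidence, but as written nothing in your text closes the gap that kept the authors from upgrading the conjecture to a proposition.
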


\subsection{Hypersurface boundary conditions}\label{sec:D8qqhypersurfaceboundary}
The D8 $qq$-character with hypersurface boundary conditions is defined as follows.
\begin{definition}
    Let $k_{\bar{1},\bar{2},\bar{3},\bar{4}}\in\mathbb{Z}_{\geq 0}$ be 1d partitions and $\mathcal{SP}_{\bar{1}\bar{2}\bar{3}\bar{4}}=\mathcal{SP}$ be the set of possible solid partitions with the boundary 1d partitions. The D8 $qq$-character with these boundary conditions is defined as
    \bea
    \mathsf{T}^{K}_{\four,k_{\bar{1}}k_{\bar{2}}k_{\bar{3}}k_{\bar{4}}}(x)&=\sum_{\rho\in\mathcal{SP}}\mathfrak{q}^{|\rho|}(-1)^{\sigma_{4}(\rho)}\mathcal{Z}^{\D8}_{\four;4}[\rho, \overline{K}]\Lambda_{\four,\rho}^{\overline{K}}(q_{1}^{k_{\bar{1}}}q_{2}^{k_{\bar{2}}}q_{3}^{k_{\bar{3}}}q_{4}^{k_{\bar{4}}}x)\\
    &=\mathsf{T}^{\overline{K}}_{\four}(q_{1}^{k_{\bar{1}}}q_{2}^{k_{\bar{2}}}q_{3}^{k_{\bar{3}}}q_{4}^{k_{\bar{4}}}x)
    \eea
    where $\overline{K}=q_{1}^{-k_{\bar{1}}}q_{2}^{-k_{\bar{2}}}q_{3}^{-k_{\bar{3}}}q_{4}^{-k_{\bar{4}}}K$. Namely, it is the normal D8 $qq$-character with the parameters $K,x$ shifted to $\overline{K}, q_{1}^{k_{\bar{1}}}q_{2}^{k_{\bar{2}}}q_{3}^{k_{\bar{3}}}q_{4}^{k_{\bar{4}}}x$.
\end{definition}

Let us derive this by using the infinite products of the D6 vertex operators. For simplicity, we consider the case when there are only 1d partitions $k_{234,134,124}$ in the boundaries. To obtain the case for the 1d partition $k_{123}$, we just need to shift the origin to $q_{4}^{k_{\bar{4}}}x$ so it is not difficult. Let $\widetilde{\rho}$ be the solid partition with the hypersurface boundary conditions and let $\widetilde{\Pi}^{(i)}$ be its decomposition, i.e. infinite size plane partitions. We denote the finite part of the solid partition and its decomposition as $\rho,\Pi^{(i)}$, respectively. $\Pi^{(i)}$ are non-increasing finite plane partitions obeying $\Pi^{(i)}\succeq \Pi^{(i+1)}$ where the origin is shifted by $(k_{\bar{1}},k_{\bar{2}},k_{\bar{3}})$. Using the $(1,3)$ decomposition in \eqref{eq:fig-solidpartitionhypersurface123}, the D6 vertex operators for each layer, will be
\bea
\text{operator part:}&\quad  \Lambda^{K,\emptyset\emptyset\emptyset}_{\bar{4},\Pi^{(i)}}(q_{1}^{k_{\bar{1}}}q_{2}^{k_{\bar{2}}}q_{3}^{k_{\bar{3}}}q_{4}^{i-1}x)   \\
\text{coefficient part:}&\quad  \widetilde{\mathcal{Z}}_{\bar{4};\emptyset\emptyset\emptyset}^{\D6}[q_{1}^{-k_{\bar{1}}}q_{2}^{-k_{\bar{2}}}q_{3}^{-k_{\bar{3}}}K,\Pi^{(i)}].
\eea

\begin{proposition}
    The coefficient part and vertex operator part of the D8 $qq$-character with hypersurface boundary conditions $k_{\bar{1},\bar{2},\bar{3}}$ comes from the following infinite product
    \bea
    \prod_{i=1}^{\infty}\widetilde{\mathcal{Z}}_{\bar{4};\emptyset\emptyset\emptyset}^{\D6}[q_{1}^{-k_{\bar{1}}}q_{2}^{-k_{\bar{2}}}q_{3}^{-k_{\bar{3}}}K,\Pi^{(i)}]\overleftarrow{\prod_{i=1}^{\infty}}\Lambda^{K,\emptyset\emptyset\emptyset}_{\bar{4},\Pi^{(i)}}(q_{1}^{k_{\bar{1}}}q_{2}^{k_{\bar{2}}}q_{3}^{k_{\bar{3}}}q_{4}^{i-1}x)\simeq (-1)^{\sigma_{4}(\rho)}\mathcal{Z}^{\D8}_{\four;4}[\rho, \overline{K}]\Lambda_{\four,\rho}^{\overline{K}}(q_{1}^{k_{\bar{1}}}q_{2}^{k_{\bar{2}}}q_{3}^{k_{\bar{3}}}x)
    \eea
    where the one-loop perturbative part is omitted and $\overline{K}=q_{1}^{-k_{\bar{1}}}q_{2}^{-k_{\bar{2}}}q_{3}^{-k_{\bar{3}}}K$. Note also that the sign factor $\sigma_{4}(\rho)$ is exact this time and we do not need to introduce $\widetilde{\sigma}_{4}(\rho)$.
\end{proposition}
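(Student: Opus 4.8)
The plan is to reduce the statement to Theorem~\ref{thm:D6toD8}, or more precisely to its constituent Propositions~\ref{prop:D6toD8vacuumproof} and~\ref{prop:sign_proof}, by the elementary observation that a hypersurface boundary condition acts purely by a translation of the spectral parameter together with a rescaling of the mass $K$. I would set $y=q_{1}^{k_{\bar1}}q_{2}^{k_{\bar2}}q_{3}^{k_{\bar3}}x$ and $\overline K=q_{1}^{-k_{\bar1}}q_{2}^{-k_{\bar2}}q_{3}^{-k_{\bar3}}K$, so that $\overline K y=Kx$; the configuration considered has $k_{\bar4}=0$, and the general $k_{\bar4}$ follows at the end by the further harmless shift $x\mapsto q_{4}^{k_{\bar4}}x$. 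The starting point is the highest-weight identity already recorded in Section~\ref{sec:D8_w_bdry}, namely $:\mathsf{Z}(K,x)/\prod_{k=1}^{k_{\bar3}}\mathsf{W}_{\bar3}(xq_{3}^{k-1}):=\mathsf{Z}(\overline K,y)$ and its analogues for $k_{\bar1},k_{\bar2}$: the D8 highest weight attached to the hypersurface configuration is literally the untwisted D8 highest weight of a translated magnificent four problem with data $(y,\overline K)$.

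First I would dispose of the operator part. Taking the $(1,3)$-decomposition $\rho=(\Pi^{(1)},\Pi^{(2)},\dots)$ about the origin $(k_{\bar1},k_{\bar2},k_{\bar3})$, the position $q_{1}^{k_{\bar1}}q_{2}^{k_{\bar2}}q_{3}^{k_{\bar3}}q_{4}^{i-1}x$ of the $i$-th D6 layer equals $q_{4}^{i-1}y$, and the contents $\chi_{\bar4,q_{4}^{i-1}y}(\cube)$ of the boxes of $\Pi^{(i)}$ glue into the four-dimensional contents $\chi_{\four,y}(\hcube)$ with $\hcube\in\rho$. The normal-ordered product of the layer operators then telescopes, using the regularization $\prod_{i\ge1}\mathsf{W}_{\bar4}(q_{4}^{i-1}w)=\mathsf{Z}(w)$, to $:\mathsf{Z}(\overline K,y)\prod_{\hcube\in\rho}\mathsf{A}^{-1}(\chi_{\four,y}(\hcube)):=\Lambda^{\overline K}_{\four,\rho}(y)$, which is exactly the operator on the right-hand side. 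This step is mechanical and follows the corresponding step in the proof of Proposition~\ref{prop:D6toD8vacuumproof}.

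The substantive part is the coefficient identity, and here the decisive point — and the reason the sign comes out \emph{exact} — is that the hypersurface deformation produces no new cross terms. Writing $y_{i}=q_{4}^{i-1}y$ and applying the $\D6$--$\D6$ contraction lemma of Section~\ref{sec:D8_qq} together with its tuning lemma, the infinite product of coefficients collapses to $\mathbb{I}[\mathbf{v}_{\D6\to\D8}]$, where $\mathbf{v}_{\D6\to\D8}$ is obtained from the character of \eqref{eq:D8vacuum-proof2} simply by the substitution $(x,K)\mapsto(y,\overline K)$: no $\bm{\nu}^{(k)}$-type terms and no $\bm{\lambda}_{A}$-type terms enter, in contrast with Propositions~\ref{prop:D6toD8-oneleg1} and~\ref{prop:D6toD8-oneface1}, precisely because the whole boundary datum has been absorbed into the shift. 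Consequently the movability analysis, the reflection identities of Proposition~\ref{app-prop:reflection_sign}, and the extraction of the sign from $\sum_{i<j}\bfP_{\four}\bm{\Pi}^{(i)\vee}\bm{\Pi}^{(j)}$ are word-for-word those of Propositions~\ref{prop:D6toD8vacuumproof}--\ref{prop:sign_proof}, yielding $\prod_{i}\widetilde{\mathcal{Z}}^{\D6}_{\bar4;\emptyset\emptyset\emptyset}[\overline K,\Pi^{(i)}]\prod_{i<j}\mathcal{Z}^{\D6\tbar\D6}_{\bar4;\overline K\mid\bar4;\overline K}(y_{i},\Pi^{(i)}\mid y_{j},\Pi^{(j)})=(-1)^{\sigma_{4}(\rho)}\mathcal{Z}^{\D8}_{\four;4}[\rho,\overline K]$ with $\sigma_{4}(\rho)=\#\{(i,i,i,j)\in\rho\mid i<j\}$ as in \eqref{eq:signfactor}. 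Combining with the operator computation gives the proposition.

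The one place that needs genuine care — and the source of the $\simeq$ rather than $=$ — is the bookkeeping of the infinite boundary contributions. I would control it exactly as in the proof of Proposition~\ref{prop:D6toD8vacuumproof}, via the formal rule $\sum_{i}x_{i}=1/\bfP_{4}$, and by invoking the explicit hypersurface boundary character $\bfK^{\bd}=\frac{x}{\bfP_{\four}}(1-q_{1}^{k_{\bar1}}q_{2}^{k_{\bar2}}q_{3}^{k_{\bar3}})$ computed in Section~\ref{sec:D8partitionfunction}: this is precisely the piece that turns the framing datum $\bfN=(1-K)x$ into $\bfN^{\reg}=(q_{1}^{k_{\bar1}}q_{2}^{k_{\bar2}}q_{3}^{k_{\bar3}}-K)x$ of the translated problem, see \eqref{eq:D8hypersurfacepartition}. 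Once this absorption is made exact, the only residual discrepancy is the self-contraction of the highest-weight vertex operators, which is a one-loop perturbative factor reabsorbable into the initial condition of the recursion, exactly as in the trivial-boundary case.
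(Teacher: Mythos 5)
Your proposal is correct and follows essentially the same route as the paper: reduce to the trivial-boundary fusion result (Propositions~\ref{prop:D6toD8vacuumproof} and~\ref{prop:sign_proof}) by observing that the hypersurface data act only through the substitution $(x,K)\mapsto(q_{1}^{k_{\bar1}}q_{2}^{k_{\bar2}}q_{3}^{k_{\bar3}}x,\overline{K})$, so that no new cross terms arise and the overall constant shift cancels in $\bm{\Pi}^{(i)\vee}\bm{\Pi}^{(j)}$, making the sign factor exactly $\sigma_{4}(\rho)$. Your additional remarks on the telescoping of the operator part and the absorption of $\bfK^{\bd}$ into the framing are consistent with, and slightly more explicit than, the paper's own argument.
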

\begin{proof}
    The proof is essentially the same with Prop.~\ref{prop:D6toD8vacuumproof} and \ref{prop:sign_proof}. The vertex operator part is trivial. The coefficient of the left hand side comes from the following character
    \bea
    \mathbf{v}^{k_{\bar{1}}k_{\bar{2}}k_{\bar{3}}}_{\D6\rightarrow \D8}&=-(1-\overline{K}^{-1})\sum_{j}\bm{\Pi}^{(j)}+\sum_{k=1}^{\infty}\bfP_{123}^{\vee} \bm{\Pi}^{(k)\vee}\bm{\Pi}^{(k)}+\sum_{i<j}\bfP_{\four}\bm{\Pi}^{(i)\vee}\bm{\Pi}^{(j)}
    \eea
    where
    \bea
    \bm{\Pi}^{(i)}=\sum_{\scube\in\Pi^{(i)}}\chi_{\bar{4},\bar{x}_{i}}(\cube),\quad \bar{x}_{i}=q_{1}^{k_{\bar{1}}}q_{2}^{k_{\bar{2}}}q_{3}^{k_{\bar{3}}}q_{4}^{i-1}x.
    \eea
    Effectively $K$ of \eqref{eq:D8vacuum-proof1} and \eqref{eq:D8vacuum-proof2} is transformed to $\overline{K}$. Using
    \bea
    \left[\sum_{i<j}\bfP_{123}\bm{\Pi}^{(i)\vee}\bm{\Pi}^{(j)}\right]^{(0)}=\sigma_{4}(\rho)
    \eea
    we obtain
    \bea
    \mathbb{I}\left[\mathbf{v}^{k_{\bar{1}}k_{\bar{2}}k_{\bar{3}}}_{\D6\rightarrow \D8}\right]=(-1)^{\sigma_{4}(\rho)}\mathcal{Z}^{\D8}_{\four;4}[\rho, \overline{K}].
    \eea
    This is because the character $\bm{\Pi}^{(i)}$ differs from the character appearing in the proof of Prop.~\ref{prop:D6toD8vacuumproof} and \ref{prop:sign_proof} just by an overall constant factor $q_{1}^{k_{\bar{1}}}q_{2}^{k_{\bar{2}}}q_{3}^{k_{\bar{3}}}$. When considering $\bm{\Pi}^{(i)\vee}\bm{\Pi}^{(j)}$, they will not appear and thus the sign factor is just $\sigma_{4}(\rho)$. Therefore, we obtain the claim.
\end{proof}

\section{Conclusion}\label{sec:conclusion}
Following our previous paper \cite{Kimura:2023bxy}, we generalized our analysis to cases when we have multiple  $\D(2p)$ $(p=1,2,3)$-branes extending in the non-compact directions and introduced free field realizations of the contour integral formulas. These free field realizations lead to $qq$-characters which we call the Donaldson--Thomas $qq$-characters. Namely, we have shown the BPS/CFT correspondence of the partition function of D0-brane counting with fixed boundary conditions associated with D2, D4, D6-branes. Combinatorially, introducing D2, D4, D6-branes as boundary conditions correspond to adding one-dimensional rods, two-dimensional surfaces, three-dimensional hypersurfaces to the multi-dimensional partitions, respectively (see \eqref{eq:rod-surface-figure}). These boundaries should obey the multi-dimensional partitions condition by themselves and adding D0-branes correspond to adding boxes to the setup. 

Besides the DT $qq$-characters, we also revisited the D8 $qq$-character first introduced in \cite{Kimura:2023bxy} and gave a complete proof on the sign rules for the case when there are no boundaries, which was not discussed in \cite{Kimura:2023bxy}. We found that the fusion process of D6 $qq$-characters at the end will give the complete magnificent four partition function \textit{including} the sign rules. We also showed that the D6 and D8 $qq$-characters without boundaries all commute with each other, which is compatible with the fact that we have a plethystic exponential formula for the partition functions of them.

Let us list some possible directions we hope to address in near future.

\paragraph{Webs of BPS $qq$-characters and 4G network}

As mentioned in the main text, the DT $qq$-characters we introduced correspond to the operator versions of the equivariant DT vertex. To compute partition functions for toric Calabi--Yau 4-folds, we need to glue these DT vertices by introducing extra terms for \textit{edges} and \textit{faces}. From the physical viewpoint, we need to take the sum over not only the D0-branes but also the D2, D4, D6-branes at the boundaries. Namely, we will have possibly multiple D8-branes and the sum will be over all possible D0, D2, D4, D6-branes which eventually give the D8-D6-D4-D2-D0 partition function. This formalism is called the \textbf{4G network} \cite{Nekrasov:2023nai,Cao:2019tnw,Cao:2019tvv,Monavari:2022rtf,Bae:2022pif,Bae:2024bpx}. Believing in the BPS/CFT correspondence, we should have a $qq$-character reproducing this 4G network. The concept of this framework was dubbed as \textbf{webs of BPS $qq$-characters} in \cite{Kimura:2023bxy}, though not explicitly established yet. We will discuss this framework in our future work \cite{Kimura-Noshita1}.

\paragraph{Sign rules for 4G network}
The derivation of the $qq$-characters is different depending on the non-compact dimensions of the D-branes. For D6-branes (and lower dimensions), we have transverse directions and using them we can derive the corresponding $qq$-characters by using the commutativity with the screening charges. On the other hand, we do not have any transverse direction for the D8-case and thus we can not use the commutativity with the screening charge to derive them. In section~\ref{sec:fusionD6toD8}, we derived the D8 $qq$-characters and gave a proof for the sign rules by studying the infinite products of the D6 $qq$-characters. By studying the unmovable terms carefully, the infinite products automatically reproduces the sign rules in a natural way. We expect this is true even for the 4G network/webs of BPS $qq$-characters. Namely, we expect we can introduce screening charges of the network and derive the $qq$-characters associated with D6-branes. Moreover, the infinite products of these D6 $qq$-characters will automatically produce the D8 $qq$-character with correct sign rules and complete the discussion in~\cite{Nekrasov:2023nai}. In this process, we expect that the conjectures in section~\ref{sec:D8qqlegboundary} and \ref{sec:D8qqsurfaceboundary} will be proven.

\paragraph{PT $qq$-characters}
Deriving the PT $qq$-characters as mentioned in section~\ref{sec:general_qq} is also interesting. One strategy to study this is to start by studying the contour integral formula giving the PT invariants \cite{Cao:2023lon,Piazzalunga:2023qik,Kimura:2024xpr}. Finding the free field realizations of them is the starting point. The difficult part might be how to define the \textit{screening charge}, if it exists. If such kind of screening charge could be defined, one may obtain the PT $qq$-character. Studying relation of the PT $qq$-characters and the refined topological vertices is also another interesting topic.

\section*{Acknowledgements}
The work of TK was in part supported by CNRS through MITI interdisciplinary programs, EIPHI Graduate School (No.~ANR-17-EURE-0002) and Bourgogne-Franche-Comté region. 
GN is supported by JSPS KAKENHI Grant-in-Aid for JSPS fellows Grant No.~JP22J20944, JSR Fellowship, and FoPM (WINGS Program), the University of Tokyo. 
A part of this work was presented in \href{https://www.birs.ca/events/2024/5-day-workshops/24w5501}{Geometry and Physics of Quantum Toroidal Algebra} (see also the \href{https://www.birs.ca/events/2024/5-day-workshops/24w5501/videos}{website} for the \href{https://www.birs.ca/iasm-workshops/2024/24w5501/files/9.4-02%20Noshita%20Go.pdf}{slides} and the \href{https://www.birs.ca/events/2024/5-day-workshops/24w5501/videos/watch/202409041015-Go.html}{video}). GN is grateful to the organizers for the invitation and hospitality.

\appendix
\section{Notations and special functions}\label{app:notations}
Let us summarize the notations and special functions we use in the main text. For details and motivations of the notations, see \cite[Sec.~3.1, App.~A,B]{Kimura:2023bxy}. 
\paragraph{Finite subsets} We introduce the following sets of non-negative integers:
\bea
\four=\{1,2,3,4\},\quad \six=\{12,13,14,23,24,34\},\quad \four^{\vee}=\{123,124,134,234\},
\eea
where $\six$ and $\four^{\vee}$ are the 2,3-element subsets of $\four$ respectively. We also denote the complement of $A\in\four,\six,\four^{\vee}$ under $\four$ as $\bar{A}$. For example, we have $\bar{A}=124$ for $A=3$. Using this map, it is obvious to see that $\four\simeq \four^{\vee}: a\leftrightarrow \bar{a}$

\paragraph{Special functions}The $q$-shifted factorial is 
\bea
(x;q)_{n}=\prod_{m=0}^{n-1}(1-xq^{m}),\quad (x;q)_{\infty}=\prod_{m=0}^{\infty}(1-xq^{m})=\exp\left(-\sum_{m=1}^{\infty}\frac{x^{m}}{m(1-q^{m})}\right)
\eea
for $|q|<1$. We have the analytic continuation
\bea
(x;q)_{\infty}=(xq^{-1};q^{-1})_{\infty}^{-1}.
\eea
We similarly denote the multiple infinite product by
\bea
(x;q_1,\ldots,q_m)_\infty=\prod_{0 \le n_1,\ldots,n_m \le \infty}(1 - x q_1^{n_1} \cdots q_m^{n_m}).
\eea
The theta function is defined as
\bea
\theta (x;p)=(x;p)_{\infty}(px^{-1};p)_{\infty},\quad |p|<1
\eea
and we have
\bea
\theta(x^{-1};p)=-x^{-1}\theta(x;p).
\eea
We denote the elliptic shifted factorial by%
\footnote{We apply this non-standard notation to avoid a confusion with the multiple infinite product function.}
\bea
\theta(z;q,p)_n = \prod_{k=0}^{n-1} \theta(zq^n;p).
\eea
The elliptic gamma function is defined as
\begin{align}
    \Gamma(x;p,q) = \frac{(pq/x;p,q)_\infty}{(x;p,q)_\infty},\quad |p|,|q|<1.
\end{align}

\paragraph{Equivariant index} For a vector bundle $\mathbf{X}$ with the character 
\bea
\operatorname{ch}\mathbf{X}=\sum_{i}n_{i}x_{i},
\eea
we define the index as
\bea
\mathbb{I}[\mathbf{X}]=\prod_{i}(1-x_{i})^{n_{i}}.
\eea
For example, we have
\bea
\mathbb{I}[x]=1-x^{-1}=\exp\left(-\sum_{n=1}^{\infty}\frac{1}{n}x^{-n}\right).
\eea
The dual of $\mathbf{X}$ is defined as
\bea
\operatorname{ch}\mathbf{X}^{\vee}=\sum_{i}n_{i}x_{i}^{-1}
\eea
and the index obeys the reflection property
\bea\label{eq:reflectionprop0}
\mathbb{I}[\mathbf{X}^{\vee}]=(-1)^{\operatorname{rk}\mathbf{X}}\det\mathbf{X}\,\,\mathbb{I}[\mathbf{X}]
\eea
where $\operatorname{rk}\mathbf{X}=\sum_{i}n_{i}$ and $\det \mathbf{X}=\prod_{i}x_{i}^{n_{i}}$.

We denote the $p$-th Adams operation on $\mathbf{X}$ as
\bea
\operatorname{ch}\mathbf{X}^{[p]}=\sum_{i}n_{i}x_{i}^{p}.
\eea
From now on, the vector bundle and the character will be identified.

\paragraph{$q$-deformation parameters}
The $q$-deformation parameters $q_{1,2,3,4}$ obey the condition $q_{1}q_{2}q_{3}q_{4}=1$ and are identified with the $\Omega$-background parameters. We define
\bea
\bfP_{a}=1-q_{a},\quad \bfP_{a}^{\vee}=1-q_{a}^{-1},\quad a\in\four
\eea
and for any subset $S\subseteq\four$
\bea
q_{S}=\prod_{a\in S}q_{a},\quad \bfP_{S}=\prod_{a\in S}\bfP_{a}.
\eea
Let $\bar{S}$ be the complement of the subset $S$ and then we define
\bea
\bfP_{\bar{S}}=\prod_{a\in\bar{S}}\bfP_{a}.
\eea

\paragraph{Reflection property}
We have the following properties which are obtained from the reflection property \eqref{eq:reflectionprop0}:
\bea\label{eq:reflectionprop1}
\mathbb{I}[\bfP_{a}x]=q_{a}^{-1}\mathbb{I}[\bfP_{a}^{\vee}x^{\vee}],\quad \mathbb{I}[\bfP_{ab}x]=\mathbb{I}[\bfP_{ab}^{\vee}x^{\vee}],\quad \mathbb{I}[\bfP_{abc}x]=\mathbb{I}[\bfP_{abc}^{\vee}x^{\vee}],\quad \mathbb{I}[\bfP_{\four}x]=\mathbb{I}[\bfP_{\four}^{\vee}x^{\vee}].
\eea
This can be generalized straightforward for a character whose rank is not infinite. Let $\mathbf{X}=\sum_{i\in I}x_{i}$ where $I$ is a finite set ($|I|<\infty$). Then,
\bea\label{eq:reflectionprop2}
\mathbb{I}[\bfP_{a}\mathbf{X}]=q_{a}^{-|I|}\mathbb{I}[\bfP_{a}^{\vee}\mathbf{X}^{\vee}],\quad \mathbb{I}[\bfP_{ab}\mathbf{X}]=\mathbb{I}[\bfP_{ab}^{\vee}\mathbf{X}^{\vee}],\quad \mathbb{I}[\bfP_{abc}\mathbf{X}]=\mathbb{I}[\bfP_{abc}^{\vee}\mathbf{X}^{\vee}],\quad \mathbb{I}[\bfP_{\four}\mathbf{X}]=\mathbb{I}[\bfP_{\four}^{\vee}\mathbf{X}^{\vee}].
\eea
This is because for example
\bea
\mathbb{I}[\bfP_{ab}\mathbf{X}]&=\prod_{i\in I}\frac{(1-x_{i}^{-1})(1-q_{ab}^{-1}x_{i}^{-1})}{(1-q_{a}^{-1}x_{i}^{-1})(1-q_{b}^{-1}x_{i}^{-1})}\\
&=\prod_{i\in I}\frac{(1-x_{i})(1-q_{ab}x_{i})}{(1-q_{a}x_{i})(1-q_{b}x_{i})}=\mathbb{I}[\bfP_{ab}^{\vee}\mathbf{X}^{\vee}].
\eea
When the character is an infinite sum, we need to regularize the infinite product properly and the above identities will not hold. For example, assume 
\bea
\mathbf{X}=\frac{x}{1-p}=x\sum_{i=0}p^{i},\quad |p|<1,
\eea
where $p$ is some generic parameter. The index is then 
\bea
\mathbb{I}[\mathbf{X}]&=\exp\left(-\sum_{n=1}^{\infty}\frac{1}{n}\frac{x^{-n}}{1-p^{-n}}\right)=(px^{-1};p)_{\infty}^{-1},\\
\mathbb{I}[\mathbf{X}^{\vee}]&=\exp\left(-\sum_{n=1}^{\infty}\frac{1}{n}\frac{x^{n}}{1-p^{n}}\right)=(x;p)_{\infty}
\eea
and therefore we have
\bea
\mathbb{I}[\mathbf{X}^{\vee}]&=\theta (x;p)\mathbb{I}[\mathbf{X}],\\
\mathbb{I}[\bfP_{ab}^{\vee}\mathbf{X}^{\vee}]&=\frac{\theta(x;p)\theta(q_{ab}x;p)}{\theta (q_{a}x;p)\theta(q_{b}x;p)}\mathbb{I}[\bfP_{ab}\mathbf{X}].
\eea

\paragraph{Structure functions}
We define the structure functions as
\bea\label{eq:structure-funct}
\mathscr{V}_{a}(x)&=\mathbb{I}[-\bfP_{a}^{\vee}x^{\vee}]=\frac{1-q_{a}x}{1-x},\\
\mathscr{S}_{ab}(x)&=\mathbb{I}[-\bfP_{ab}^{\vee}x^{\vee}]=\frac{(1-q_{a}x)(1-q_{b}x)}{(1-x)(1-q_{a}q_{b}x)},\\
g_{\bar{a}}(x)&=\mathbb{I}[-\bfP_{\bar{a}}^{\vee}x^{\vee}]=\frac{\prod_{i\neq a}(1-q_{i}x)(1-q_{a}^{-1}x)}{(1-x)\prod_{i\neq a}(1-q_{a}^{-1}q_{i}^{-1}x)},\\
\mathcal{A}_{\mathbb{C}^{4}}(x)&=\mathbb{I}[-\bfP_{\four}^{\vee}x^{\vee}]=\frac{\prod_{a\in\four}(1-q_{a}x)\prod_{a\in\four}(1-q_{a}^{-1}x)}{(1-x)^{2}\prod_{i\neq j}(1-q_{i}q_{j}x)}.
\eea
Note that we have the following properties:
\bea
\mathscr{S}_{ab}(x)=\frac{\mathscr{V}_{a}(x)}{\mathscr{V}_{a}(q_{b}x)},\quad g_{abc}(x)=\frac{\mathscr{S}_{ab}(x)}{\mathscr{S}_{ab}(q_{c}x)},\quad \mathcal{A}_{\mathbb{C}^{4}}(x)=\frac{g_{\bar{a}}(x)}{g_{\bar{a}}(q_{a}x)}.
\eea

\paragraph{Sign rules}
The reflection properties mentioned above are true only when the roots $\{x_{i}\}$ are generic. When $\{x_{i}\}$ is not generic, the reflection property will give extra sign factors. For example, let us consider the character $\bfP_{123}x$. When $x$ is generic, we simply have
\bea
\mathbb{I}[\bfP_{123}x]=\mathbb{I}[\bfP_{123}^{\vee}x^{-1}]
\eea
because
\bea
\bfP_{123}x=x-(q_{1}+q_{2}+q_{3})x+(q_{12}+q_{13}+q_{23})x-q_{123}x
\eea
and
\bea
\operatorname{rk}(\bfP_{123}x)=0,\quad \det(\bfP_{123}x)=1.
\eea
However, when $x\rightarrow 1$, the character will contain a $1$ term which give zeros after taking the index. Therefore, the reflection property should be modified as
\bea
\mathbb{I}[\bfP_{123}-1]=(-1)\mathbb{I}[\bfP_{123}^{\vee}-1].
\eea

\begin{definition}\label{app-def:movable}
    Let $\mathbf{A}$ be a Laurent polynomial
    \bea
    \mathbf{A}=\sum_{(n_{1},\ldots ,n_{p})}A_{\vec{n}}x_{1}^{n_{1}}\cdots x_{p}^{n_{p}},
    \eea
    where the sum $(n_{1},\ldots,n_{p})$ is taken over some subset in $\mathbb{Z}^{p}$. If there is no constant term $A_{\vec{n}}=0$ for $n_{1}=n_{2}=\cdots n_{p}=0$, then $\mathbf{A}$ is \textbf{movable}. The constant term is called the \textbf{unmovable} part. We denote the movable part and unmovable part as
    \bea
\left[\mathbf{A}\right]^{(\neq0)},\quad \left[\mathbf{A}\right]^{(0)},
    \eea
    respectively. Note that this means
    \bea
    \mathbf{A}=\left[\mathbf{A}\right]^{(\neq0)}+\left[\mathbf{A}\right]^{(0)}.
    \eea
\end{definition}

\begin{proposition}\label{app-prop:reflection_sign}
    Let $\mathbf{X}=\sum_{i\in I}x_{i}$ be a character, where $I$ is a finite set and $\{x_{i}\}_{i\in I}$ are generic. Then, the reflection property is
    \bea
    \mathbb{I}\left[\bfP_{123}\mathbf{X}\right]=\mathbb{I}[\bfP_{123}^{\vee}\mathbf{X}^{\vee}].
    \eea
    Suppose that for example when $\{x_{i}\}$ are specialized the character $\bfP_{123}\mathbf{X}$ is decomposed as
    \bea
    \bfP_{123}\mathbf{X}=\widetilde{\mathbf{X}}+\sum_{i\in I_{+}} 1-\sum_{i\in I_{-}}1
    \eea
    where $\widetilde{\mathbf{X}}$ does not contain any $\pm 1$ term and $I_{\pm}$ are finite sets giving $\pm1$ terms. Namely, 
    \bea
    \widetilde{\mathbf{X}}=\left[\bfP_{123}\bfX\right]^{(\neq 0)},\quad \sum_{i\in I_{+}} 1-\sum_{i\in I_{-}}1=\left[\bfP_{123}\bfX\right]^{(0)}.
    \eea
    The reflection property is then given as
    \bea
    \mathbb{I}[\bfP_{123}\mathbf{X}-|I_{+}|+|I_{-}|]=(-1)^{|I_{+}|-|I_{-}|}\mathbb{I}[\bfP_{123}^{\vee}\mathbf{X}^{\vee}-|I_{+}|+|I_{-}|].
    \eea
    Namely, the sign factor is determined by the unmovable terms (or the number of unmovable terms). Similar formulas can be obtained for others: $\bfP_{a}\bfX,\bfP_{A}\bfX,\bfP_{\four}\bfX$.
    
\end{proposition}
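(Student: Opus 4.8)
\textbf{Proof proposal for Proposition~\ref{app-prop:reflection_sign}.}

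The plan is to reduce the statement to the single-monomial case and then assemble. First I would recall from \eqref{eq:reflectionprop0} that for a finite-rank character $\mathbf{Y}$ one has $\mathbb{I}[\mathbf{Y}^{\vee}] = (-1)^{\operatorname{rk}\mathbf{Y}}(\det\mathbf{Y})\,\mathbb{I}[\mathbf{Y}]$, and that a key feature of $\mathbf{P}_{123}$ (and likewise $\mathbf{P}_a$, $\mathbf{P}_A$, $\mathbf{P}_{\four}$) is that as a Laurent polynomial in $q_{1,2,3,4}$ it is palindromic: writing $\mathbf{P}_{123} = \sum_i c_i m_i$ with $m_i$ monomials, the map $m_i \mapsto m_i^{-1}$ permutes the terms with the same coefficients (indeed $\mathbf{P}_{123}^{\vee} = q_{123}^{-1}\mathbf{P}_{123}$ up to reindexing, and more to the point $\operatorname{rk}\mathbf{P}_{123}=0$, $\det\mathbf{P}_{123}=1$). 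Consequently for generic $x$, $\mathbf{P}_{123}x$ has rank $0$ and determinant $1$, giving the first identity $\mathbb{I}[\mathbf{P}_{123}\mathbf{X}] = \mathbb{I}[\mathbf{P}_{123}^{\vee}\mathbf{X}^{\vee}]$ immediately when $\{x_i\}$ are generic, since rank and determinant are additive over $\mathbf{X}=\sum_{i\in I} x_i$.

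Next I would treat the specialized case. When the $x_i$ degenerate, the character $\mathbf{P}_{123}\mathbf{X}$ picks up constant terms; by Definition~\ref{app-def:movable} we split $\mathbf{P}_{123}\mathbf{X} = \widetilde{\mathbf{X}} + [\mathbf{P}_{123}\mathbf{X}]^{(0)}$ with $[\mathbf{P}_{123}\mathbf{X}]^{(0)} = \sum_{i\in I_+}1 - \sum_{i\in I_-}1$. The claim is an identity between the regularized indices in which these constant terms are subtracted off. The point is that $\widetilde{\mathbf{X}}$ is a finite-rank \emph{movable} character, so the generic reflection property applies to it verbatim: $\mathbb{I}[\widetilde{\mathbf{X}}^{\vee}] = (-1)^{\operatorname{rk}\widetilde{\mathbf{X}}}(\det\widetilde{\mathbf{X}})\mathbb{I}[\widetilde{\mathbf{X}}]$, and one checks that $\operatorname{rk}\widetilde{\mathbf{X}}$ is even and $\det\widetilde{\mathbf{X}}=1$ because $\widetilde{\mathbf{X}}$ inherits the palindromic structure of $\mathbf{P}_{123}\mathbf{X}$ minus a self-dual (under monomial inversion, $+1\leftrightarrow+1$, $-1\leftrightarrow-1$) constant piece. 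Hence $\mathbb{I}[\widetilde{\mathbf{X}}^{\vee}]=\mathbb{I}[\widetilde{\mathbf{X}}]$. The only asymmetry left is bookkeeping: $\mathbb{I}[\mathbf{P}_{123}\mathbf{X} - |I_+| + |I_-|]$ differs from $\mathbb{I}[\widetilde{\mathbf{X}}]$ by the index of the $-1$ terms surviving the subtraction, which contribute factors of $(1-1)$ — this is exactly why one subtracts $|I_+|-|I_-|$ — while the $+1$ terms that were removed are what produce the overall sign $(-1)^{|I_+|-|I_-|}$ upon comparing with the dual side, since $\mathbb{I}[+1 - (-1)^{\vee}]$-type manipulations convert a removed $+1$ into a sign $-1$ on dualization. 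I would then write this comparison out: $\mathbb{I}[\mathbf{P}_{123}\mathbf{X} - |I_+|+|I_-|] = \mathbb{I}[\widetilde{\mathbf{X}}]$ and $\mathbb{I}[\mathbf{P}_{123}^{\vee}\mathbf{X}^{\vee} - |I_+|+|I_-|] = \mathbb{I}[\widetilde{\mathbf{X}}^{\vee}]\cdot(-1)^{?}$, and extract the claimed $(-1)^{|I_+|-|I_-|}$.

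The main obstacle I anticipate is pinning down precisely where the sign $(-1)^{|I_+|-|I_-|}$ originates — i.e.\ making rigorous the informal statement ``the sign factor is determined by the unmovable terms.'' The subtlety is that $\det\widetilde{\mathbf{X}}$ and the parity of $\operatorname{rk}\widetilde{\mathbf{X}}$ must genuinely be trivial for the movable part, and this relies on the palindromic/self-dual structure being preserved under the degeneration; a careful argument should track a generic deformation parameter, apply the generic identity, and take a limit, noting that each coalescing pair of monomials $x^n \leftrightarrow x^{-n}$ that collapses to a $\pm 1$ is precisely a term where $\mathbb{I}$ acquires a zero or a sign. I would handle $\mathbf{P}_a\mathbf{X}$, $\mathbf{P}_A\mathbf{X}$, $\mathbf{P}_{\four}\mathbf{X}$ identically, the only change being that for $\mathbf{P}_a$ one has $\operatorname{rk}\mathbf{P}_a = 0$ but $\det \mathbf{P}_a = -q_a$, so an extra factor $q_a^{-|I|}$ (as in \eqref{eq:reflectionprop1}, \eqref{eq:reflectionprop2}) accompanies the sign, which I would state separately.
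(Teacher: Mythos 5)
The paper states this proposition without a dedicated proof, justifying it only through the preceding worked example $\mathbb{I}[\bfP_{123}-1]=(-1)\,\mathbb{I}[\bfP_{123}^{\vee}-1]$; the intended argument is a one-line application of \eqref{eq:reflectionprop0} to the movable part. Your generic case is fine: $\operatorname{rk}(\bfP_{123}\mathbf{X})=0$ and $\det(\bfP_{123}\mathbf{X})=1$, so \eqref{eq:reflectionprop0} gives the identity with no sign. But your specialized case contains a genuine error. You assert that $\operatorname{rk}\widetilde{\mathbf{X}}$ is even and $\det\widetilde{\mathbf{X}}=1$, and conclude $\mathbb{I}[\widetilde{\mathbf{X}}^{\vee}]=\mathbb{I}[\widetilde{\mathbf{X}}]$. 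This is false, and the paper's own example refutes it: for $\mathbf{X}=x$ with $x\to 1$ one has $\widetilde{\mathbf{X}}=-(q_{1}+q_{2}+q_{3})+(q_{12}+q_{13}+q_{23})-q_{123}$, whose rank is $-3+3-1=-1$, which is odd. The correct bookkeeping is that rank is additive and the constant part $\sum_{I_+}1-\sum_{I_-}1$ has rank $|I_{+}|-|I_{-}|$ and determinant $1$, so
\bea
\operatorname{rk}\widetilde{\mathbf{X}}=0-(|I_{+}|-|I_{-}|),\qquad \det\widetilde{\mathbf{X}}=1,
\eea
and applying \eqref{eq:reflectionprop0} to $\widetilde{\mathbf{X}}$ alone yields $\mathbb{I}[\widetilde{\mathbf{X}}^{\vee}]=(-1)^{|I_{+}|-|I_{-}|}\mathbb{I}[\widetilde{\mathbf{X}}]$, which is exactly the claimed sign since $[\bfP_{123}^{\vee}\mathbf{X}^{\vee}]^{(\neq 0)}=\widetilde{\mathbf{X}}^{\vee}$ (constants being self-dual). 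In other words, the sign lives entirely in $(-1)^{\operatorname{rk}\widetilde{\mathbf{X}}}$ of the movable part; it does not come from the removed constants.

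Because of this, your subsequent attempt to recover the sign from ``$\mathbb{I}[+1-(-1)^{\vee}]$-type manipulations'' on the subtracted $\pm 1$ terms cannot be made to work: a constant term is invariant under $\vee$, so the subtraction $-|I_{+}|+|I_{-}|$ is performed identically on both sides of the identity and is sign-neutral. The deformation-and-limit strategy you propose as a fix is unnecessary once the rank of $\widetilde{\mathbf{X}}$ is computed correctly. Two peripheral slips: $\bfP_{123}^{\vee}=-q_{123}^{-1}\bfP_{123}$ (not $+q_{123}^{-1}\bfP_{123}$), which is harmless since only $\operatorname{rk}=0$, $\det=1$ matter; and $\det(\bfP_{a}x)=q_{a}^{-1}$, not $-q_{a}$, which is where the factor $q_{a}^{-|I|}$ in \eqref{eq:reflectionprop2} comes from.
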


\begin{proposition}\label{app-prop:reflection-mod}
    Let $\mathbf{A},\mathbf{B}$ be a character where $\mathbf{A}+\mathbf{B}^{\vee}$ is movable. Then
    \bea
    \mathbb{I}\left[\mathbf{A}+\mathbf{B}^{\vee}\right]=(-1)^{\operatorname{rk}([\mathbf{B}]^{(\neq 0)})}\det [\mathbf{B}]^{(\neq 0)} \mathbb{I}\left[\mathbf{A}+\mathbf{B}\right]
    \eea
\end{proposition}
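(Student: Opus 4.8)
\textbf{Plan of proof for Proposition~\ref{app-prop:reflection-mod}.}
The statement to establish is the reflection identity
\[
\mathbb{I}\left[\mathbf{A}+\mathbf{B}^{\vee}\right]=(-1)^{\operatorname{rk}([\mathbf{B}]^{(\neq 0)})}\det [\mathbf{B}]^{(\neq 0)}\;\mathbb{I}\left[\mathbf{A}+\mathbf{B}\right]
\]
under the hypothesis that $\mathbf{A}+\mathbf{B}^{\vee}$ is movable. The plan is to reduce everything to the basic reflection property $\mathbb{I}[\mathbf{X}^{\vee}]=(-1)^{\operatorname{rk}\mathbf{X}}\det\mathbf{X}\,\mathbb{I}[\mathbf{X}]$ of Eq.~\eqref{eq:reflectionprop0}, together with multiplicativity of the index, $\mathbb{I}[\mathbf{X}+\mathbf{Y}]=\mathbb{I}[\mathbf{X}]\,\mathbb{I}[\mathbf{Y}]$. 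First I would split $\mathbf{B}$ into its movable and unmovable parts, $\mathbf{B}=[\mathbf{B}]^{(\neq 0)}+[\mathbf{B}]^{(0)}$, and likewise $\mathbf{B}^{\vee}=[\mathbf{B}^{\vee}]^{(\neq 0)}+[\mathbf{B}^{\vee}]^{(0)}$; note $[\mathbf{B}^{\vee}]^{(\neq 0)}=\big([\mathbf{B}]^{(\neq 0)}\big)^{\vee}$ and $[\mathbf{B}^{\vee}]^{(0)}=[\mathbf{B}]^{(0)}$, since the constant ($q$-degree-zero) monomial is fixed by the dual. The key observation to record is that the unmovable part $[\mathbf{B}]^{(0)}$ is a signed count of $1$'s, i.e.\ of the form $n_{+}\cdot 1 - n_{-}\cdot 1$, and $\mathbb{I}[1]=1-1^{-1}=0$; so as it stands $\mathbb{I}[\mathbf{A}+\mathbf{B}]$ and $\mathbb{I}[\mathbf{A}+\mathbf{B}^{\vee}]$ are not literally finite nonzero numbers unless the unmovable pieces of $\mathbf{A}$ cancel those of $\mathbf{B}$. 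The hypothesis ``$\mathbf{A}+\mathbf{B}^{\vee}$ is movable'' is exactly what guarantees $[\mathbf{A}]^{(0)}=-[\mathbf{B}^{\vee}]^{(0)}=-[\mathbf{B}]^{(0)}$, hence also $[\mathbf{A}+\mathbf{B}]^{(0)}=[\mathbf{B}]^{(0)}-[\mathbf{B}]^{(0)}\cdot(-1)$... — here one must be careful: $[\mathbf{A}+\mathbf{B}]^{(0)}=[\mathbf{A}]^{(0)}+[\mathbf{B}]^{(0)}=0$ as well, so both arguments are movable and the indices are well-defined. This bookkeeping of unmovable terms, and making precise in what regularized sense the identity holds when infinitely many generic monomials are present, is the step I expect to be the main obstacle; it is morally the content of Prop.~\ref{app-prop:reflection_sign}, which I would invoke rather than reprove.

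Granting the well-definedness, the computation is short. Write $\mathbf{A}=\mathbf{A}'-[\mathbf{B}]^{(0)}$ with $\mathbf{A}'$ movable (using the hypothesis), and $\mathbf{B}=\mathbf{B}_{m}+[\mathbf{B}]^{(0)}$ with $\mathbf{B}_{m}:=[\mathbf{B}]^{(\neq 0)}$. Then
\[
\mathbf{A}+\mathbf{B}=\mathbf{A}'+\mathbf{B}_{m},\qquad
\mathbf{A}+\mathbf{B}^{\vee}=\mathbf{A}'+\mathbf{B}_{m}^{\vee},
\]
since the unmovable contributions cancel in both cases and $[\mathbf{B}^{\vee}]^{(0)}=[\mathbf{B}]^{(0)}$. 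By multiplicativity of the index,
\[
\frac{\mathbb{I}[\mathbf{A}+\mathbf{B}^{\vee}]}{\mathbb{I}[\mathbf{A}+\mathbf{B}]}
=\frac{\mathbb{I}[\mathbf{A}']\,\mathbb{I}[\mathbf{B}_{m}^{\vee}]}{\mathbb{I}[\mathbf{A}']\,\mathbb{I}[\mathbf{B}_{m}]}
=\frac{\mathbb{I}[\mathbf{B}_{m}^{\vee}]}{\mathbb{I}[\mathbf{B}_{m}]},
\]
and applying Eq.~\eqref{eq:reflectionprop0} to the (movable, and in the relevant cases finite-rank) character $\mathbf{B}_{m}=[\mathbf{B}]^{(\neq 0)}$ gives $\mathbb{I}[\mathbf{B}_{m}^{\vee}]=(-1)^{\operatorname{rk}\mathbf{B}_{m}}\det\mathbf{B}_{m}\,\mathbb{I}[\mathbf{B}_{m}]$, which is precisely the claimed factor $(-1)^{\operatorname{rk}([\mathbf{B}]^{(\neq 0)})}\det[\mathbf{B}]^{(\neq 0)}$. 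Rearranging yields the proposition.

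There is one subtlety worth flagging in the writeup: when $\mathbf{B}_{m}$ is an infinite character (as happens in the applications to infinite products of D6 $qq$-characters in Section~\ref{sec:fusionD6toD8}), $\operatorname{rk}$ and $\det$ must be read as the regularized rank and determinant, and Eq.~\eqref{eq:reflectionprop0} in the form of the theta/elliptic-gamma corrections of the ``Reflection property'' paragraph in Appendix~\ref{app:notations} must be used; but in all the places this proposition is actually applied, the unmovable part has already been stripped off and $[\mathbf{B}]^{(\neq 0)}$ pairs against something of finite rank, so no such correction survives. I would therefore state and prove the proposition for $\mathbf{B}$ with $[\mathbf{B}]^{(\neq 0)}$ of finite rank (which is the generality used), remark that the general regularized version follows identically from the elliptic reflection formulas, and be done.
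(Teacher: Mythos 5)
Your proposal is correct and follows essentially the same route as the paper's own proof: use the hypothesis to deduce $[\mathbf{A}]^{(0)}=-[\mathbf{B}]^{(0)}$, observe that the unmovable parts cancel so that $\mathbf{A}+\mathbf{B}^{\vee}=[\mathbf{A}]^{(\neq 0)}+[\mathbf{B}]^{(\neq 0)\vee}$ and $\mathbf{A}+\mathbf{B}=[\mathbf{A}]^{(\neq 0)}+[\mathbf{B}]^{(\neq 0)}$, and then apply the basic reflection property \eqref{eq:reflectionprop0} to the movable part $[\mathbf{B}]^{(\neq 0)}$. Your additional remarks on regularization for infinite-rank characters go beyond what the paper records but do not change the argument.
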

\begin{proof}
    Let $[\mathbf{B}]^{(0)}=m\in\mathbb{Z}$. Since $\mathbf{A}+\mathbf{B}^{\vee}$ is movable, using $\mathbf{A}=[\mathbf{A}]^{(0)}+[\mathbf{A}]^{(\neq0)}$, we have $[\mathbf{A}]^{(0)}=-m$. We then have
    \bea
    \mathbf{A}+\mathbf{B}^{\vee}=[\mathbf{A}]^{(\neq 0)}+[\mathbf{B}]^{(\neq 0)\vee}.
    \eea
    Since both term are movable, we can safely use the reflection property \eqref{eq:reflectionprop0} and we obtain the statement.
\end{proof}

\section{Sign rules}
\subsection{Direct proof of Lemma~\ref{lemm:D8-D6reduce-sign}}\label{app-sec:D8-D6reduce-sign-proof}
\begin{lemma}
After specializing $K=q_{a}\,(a\in\four)$, we have the following identity:
\bea
(-1)^{\sigma_{4}(\pi)}\mathcal{Z}_{\four;4}^{\D8}[\pi,q_{a}]=\widetilde{\mathcal{Z}}^{\D6}_{\bar{a}}[\pi], \quad a\in\four
\eea
where the plane partition $\pi\in\mathcal{PP}_{a}$.
\end{lemma}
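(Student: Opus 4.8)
The statement to prove is that after specializing $K=q_a$ for $a\in\four$, the signed D8 vertex collapses to the D6 vertex on plane partitions not extending in the $q_a$ direction:
\[
(-1)^{\sigma_4(\pi)}\mathcal{Z}^{\D8}_{\four;4}[\pi,q_a]=\widetilde{\mathcal{Z}}^{\D6}_{\bar a}[\pi],\qquad \pi\in\mathcal{PP}_a .
\]
The plan is to reduce everything to the level of characters and indices, rather than manipulating the product formulas directly. The D8 vertex is $\mathcal{Z}^{\D8}_{\four;4}[\pi,K]=\mathbb{I}[\mathbf v_{\text{inst.}}|_\pi]$ with $\mathbf v_{\text{inst.}}=-\mathbf N^\vee\mathbf K+\mathbf P_{123}^\vee\mathbf K^\vee\mathbf K$ and $\mathbf N=(1-K)x$, $\mathbf K=\sum_{\scube\in\pi}\chi_{\four,x}(\hcube)$; the D6 vertex is $\widetilde{\mathcal Z}^{\D6}_{\bar a}[\pi]=\mathbb{I}[-\mathbf P_a^\vee\mathbf N_{\bar a}^\vee\mathbf K+\mathbf P_{\bar a}^\vee\mathbf K^\vee\mathbf K]$ with $\mathbf N_{\bar a}=x$ and $\mathbf K$ now the character of $\pi$ viewed inside $\mathbb C^3_{\bar a}$. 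The first step is to make these two characters literally comparable: for $\pi\in\mathcal{PP}_a$ the box $q$-coordinates of $\pi$ in the solid-partition labelling have trivial exponent in the $a$-direction, so $\mathbf K^{\D8}=\mathbf K^{\D6}\eqqcolon\mathbf K$ as characters. With $K=q_a$ one has $\mathbf N^{\D8}=(1-q_a)x=\mathbf P_a x$, so the linear term becomes $-\mathbf P_a^\vee x^\vee\mathbf K=-\mathbf P_a^\vee\mathbf N_{\bar a}^\vee\mathbf K$, which is exactly the D6 linear term; it remains to reconcile the quadratic terms $\mathbf P_{123}^\vee\mathbf K^\vee\mathbf K$ versus $\mathbf P_{\bar a}^\vee\mathbf K^\vee\mathbf K$.

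Second step: compare $\mathbf P_{123}^\vee\mathbf K^\vee\mathbf K$ and $\mathbf P_{\bar a}^\vee\mathbf K^\vee\mathbf K$ when $\mathbf K$ is supported away from the $a$-axis. When $a=4$ these are identical and the sign is $(-1)^{\sigma_4(\pi)}=1$ by Lemma~\ref{lemm:signfactorreduce}, so there is nothing to do. For $a\in\{1,2,3\}$ write $\mathbf P_{\bar a}^\vee=\mathbf P_{123}^\vee+(\mathbf P_{\bar a}^\vee-\mathbf P_{123}^\vee)$ and express the difference: since $\mathbf P_{123}=\mathbf P_a\mathbf P_b\mathbf P_c$ with $\{a,b,c\}$ being $\{1,2,3\}$ and $\mathbf P_{\bar a}$ being the product over $\four\setminus\{a\}$, the two cubic polynomials in $q$'s differ by an overall ratio $\mathbf P_{a}^\vee/\mathbf P_4^\vee$ up to reflection, i.e. $\mathbf P_{\bar a}^\vee\mathbf K^\vee\mathbf K-\mathbf P_{123}^\vee\mathbf K^\vee\mathbf K=(\mathbf P_4^\vee-\mathbf P_a^\vee)\cdots$; more usefully, I would use the reflection identity $\mathbb{I}[\mathbf P_{abc}\mathbf X]=\mathbb{I}[\mathbf P_{abc}^\vee\mathbf X^\vee]$ from \eqref{eq:reflectionprop1}–\eqref{eq:reflectionprop2} together with the careful sign-tracking of Prop.~\ref{app-prop:reflection_sign}. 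The key observation is that $\mathbf K^\vee\mathbf K$ is self-dual, so $\mathbf P_{\bar a}^\vee\mathbf K^\vee\mathbf K$ and its dual differ only by the unmovable part, and the index picks up a sign $(-1)^{\#\{\text{unmovable terms}\}}$. Computing that unmovable count for $\pi\in\mathcal{PP}_a$ and matching it with $\sigma_4(\pi)=\min\{h_4(\pi)-1,0\}$ (Lemma~\ref{lemm:signfactorreduce}) is the substance of the argument; it is an analogue of the box-counting in the proof of Prop.~\ref{prop:sign_proof}, where one decomposes $\mathbf K=\sum_{\scube}\chi$, analyzes $[\mathbf P_{123}\chi^\vee\chi']^{(0)}$ for pairs of boxes, and finds it is supported on diagonal towers $(i,i,i,j)$.

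Third step: assemble. Once the character identity
\[
\mathbf v^{\D8}_{\text{inst.}}|_{\pi,K=q_a}=\mathbf v^{\D6}_{\bar a,\text{inst.}}|_\pi+\big(\text{self-dual remainder}\big)
\]
is established, apply $\mathbb I[\,\cdot\,]$; the self-dual remainder contributes exactly $(-1)^{\sigma_4(\pi)}$ by the reflection-with-sign proposition, and everything else matches term by term, giving $(-1)^{\sigma_4(\pi)}\mathcal Z^{\D8}_{\four;4}[\pi,q_a]=\widetilde{\mathcal Z}^{\D6}_{\bar a}[\pi]$. One should also separately note that for $\pi\notin\mathcal{PP}_a$ the $\D8$ vertex vanishes by Lemma~\ref{lemm:D8toD6reduce}, which is consistent with the claim being stated only on $\mathcal{PP}_a$. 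I expect the main obstacle to be precisely the unmovable-term bookkeeping in the second step: one has to be scrupulous about which boxes of $\pi$ contribute $\pm1$ to $[\mathbf P_{\bar a}^\vee\mathbf K^\vee\mathbf K-\mathbf P_{123}^\vee\mathbf K^\vee\mathbf K]^{(0)}$ and to check that the resulting sign is $(-1)^{\sigma_4(\pi)}$ and not merely $\pm1$ up to a $\pi$-dependent error — the combinatorial identification with $h_4(\pi)$, as in Lemma~\ref{lemm:signfactorreduce}, is what closes this gap, and routing the computation through that lemma (rather than redoing it) is the cleanest path. An alternative, and arguably shorter, route would be to cite Thm.~\ref{thm:M4invariance}: by quadrality-invariance of the magnificent-four vertex, $(-1)^{\sigma_4(\pi)}\mathcal Z^{\D8}_{\four;4}[\pi,q_a]=(-1)^{\sigma_a(\pi)}\mathcal Z^{\D8}_{\four;a}[\pi,q_a]$, and the right side is manifestly the $a=4$ case of the identity after relabelling axes; since $\pi\in\mathcal{PP}_a$ has $\sigma_a(\pi)=0$, this reduces to the trivial $a=4$ computation. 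I would present the direct character-level proof as the main argument and remark on the quadrality shortcut.
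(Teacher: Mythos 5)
Your skeleton is the same as the paper's direct proof in Appendix~B.1: write both sides as indices of characters, observe that at $K=q_a$ the linear terms already agree since $\bfN=(1-q_a)x=\bfP_a x$, isolate the discrepancy in the quadratic term ($\bfP_{123}^{\vee}\bm{\pi}^{\vee}\bm{\pi}$ versus $\bfP_{\bar a}^{\vee}\bm{\pi}^{\vee}\bm{\pi}$), and convert it into a sign via reflection on unmovable terms. However, the decisive step is not carried out, and the way you propose to close it does not work as stated. The clean identity behind the paper's argument (for $a=1$, say) is $\bfP_{234}^{\vee}=\bfP_{23}^{\vee}-q_{1}\bfP_{23}$ together with $\bfP_{123}^{\vee}=\bfP_{23}^{\vee}-q_{1}^{-1}\bfP_{23}^{\vee}$, so the two quadratic terms differ by a character and its dual, and Prop.~\ref{app-prop:reflection_sign}/\ref{app-prop:reflection-mod} produce the sign $(-1)^{[q_{1}\bfP_{23}\bm{\pi}^{\vee}\bm{\pi}]^{(0)}}$; your version of this bookkeeping (``differ by an overall ratio $\bfP_a^{\vee}/\bfP_4^{\vee}$'', ``$\bfP_{\bar a}^{\vee}\bfK^{\vee}\bfK$ and its dual differ only by the unmovable part'') is garbled, though the mechanism you invoke is the right one. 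More importantly, you then say the unmovable count can be ``routed through'' Lemma~\ref{lemm:signfactorreduce}. It cannot: that lemma evaluates the combinatorial quantity $\sigma_4(\pi)=\#\{(i,i,i,j)\in\pi\mid i<j\}$ in terms of $h_4(\pi)$, but says nothing about the character-theoretic quantity $[q_{a}\bfP_{bc}\bm{\pi}^{\vee}\bm{\pi}]^{(0)}$. The identity $[q_{a}\bfP_{bc}\bm{\pi}^{\vee}\bm{\pi}]^{(0)}=h_{4}(\pi)-1$ is a separate computation — decompose $\pi$ into layers $\lambda^{(i)}$ along the $4$-direction, show only adjacent-layer pairs $\bm{\lambda}^{(i+1)\vee}\bm{\lambda}^{(i)}$ contribute, and count the surviving corner terms — and this is precisely what the paper's appendix proof spends most of its length doing. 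As written, your main argument therefore has a genuine gap at its core step, even though you correctly diagnose where the difficulty sits.

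Your proposed shortcut via Thm.~\ref{thm:M4invariance} is, by contrast, valid and genuinely different from the paper's direct proof: quadrality invariance gives $(-1)^{\sigma_{4}(\pi)}\mathcal{Z}^{\D8}_{\four;4}[\pi,q_a]=(-1)^{\sigma_{a}(\pi)}\mathcal{Z}^{\D8}_{\four;a}[\pi,q_a]$, one checks $\sigma_a(\pi)=0$ for $\pi\in\mathcal{PP}_a$ (the condition $x_b=x_c=x_d<x_a$ is vacuous when $x_a\equiv 1$), and $\mathcal{Z}^{\D8}_{\four;a}[\pi,q_a]=\mathbb{I}[-\bfP_a^{\vee}x^{-1}\bm{\pi}+\bfP_{\bar a}^{\vee}\bm{\pi}^{\vee}\bm{\pi}]$ is literally $\widetilde{\mathcal{Z}}^{\D6}_{\bar a}[\pi]$. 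What this buys is brevity; what it costs is self-containedness, since Thm.~\ref{thm:M4invariance} is an imported result whose proof is of the same unmovable-term type as the computation you are trying to avoid. If you want a complete argument of your own, you must either perform the corner count or lean explicitly on the external theorem.
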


\begin{proof}
For $a=4$ it is trivial. Let us focus on $a=1$ and the other $a=2,3$ are obtained from the triality symmetry between $a=1,2,3$.

The partition functions are written as
\bea
\mathcal{Z}^{\D8}_{\four;4}[\pi,q_{1}]&=\mathbb{I}\left[-\bfP_{1}^{\vee}x^{-1}\bm{\pi}+\bfP_{123}^{\vee}\bm{\pi}^{\vee}\bm{\pi}\right],\quad \widetilde{\mathcal{Z}}^{\D6}_{\bar{1}}[\pi]=\mathbb{I}\left[-\bfP_{1}^{\vee}x^{-1}\bm{\pi}+\bfP_{234}^{\vee}\bm{\pi}^{\vee}\bm{\pi}\right],
\eea
where $\bm{\pi}=\sum_{\scube\in\pi}\chi_{\bar{1},x}(\cube)$. Using $\bfP_{234}^{\vee}=\bfP_{23}^{\vee}-q_{1}\bfP_{23}$, we have
\bea
\widetilde{\mathcal{Z}}^{\D6}_{\bar{1}}[\pi]&=\mathbb{I}\left[-\bfP_{1}^{\vee}x^{-1}\bm{\pi}+\bfP_{234}^{\vee}\bm{\pi}^{\vee}\bm{\pi}\right]\\
&=\mathbb{I}\left[-\bfP_{1}^{\vee}x^{-1}\bm{\pi}+\bfP_{23}^{\vee}\bm{\pi}^{\vee}\bm{\pi}-q_{1}\bfP_{23}\bm{\pi}^{\vee}\bm{\pi}\right]\\
&=(-1)^{\left[q_{4}\bfP_{23}\bm{\pi}^{\vee}\bm{\pi}\right]^{(0)}}\mathbb{I}\left[-\bfP_{1}^{\vee}x^{-1}\bm{\pi}+\bfP_{123}^{\vee}\bm{\pi}^{\vee}\bm{\pi}\right]\\
&=(-1)^{\left[q_{4}\bfP_{23}\bm{\pi}^{\vee}\bm{\pi}\right]^{(0)}}\mathcal{Z}^{\D8}_{\four;4}[\pi,q_{1}],
\eea
where in the third and last line, we used Prop.~\ref{app-prop:reflection_sign}, \ref{app-prop:reflection-mod}.

Let us show that the sign factors are
\bea
\left[q_{4}\bfP_{23}\bm{\pi}^{\vee}\bm{\pi}\right]^{(0)}=\min\{h_{4}(\pi)-1,0\}.
\eea
When $\pi=\emptyset$ the sign is trivial. We assume $\pi\neq 0$ which is equivalent to $h_{4}(\pi)\geq 1$. We decompose the plane partitions to sequence of non-increasing Young diagrams in the 4-direction $\pi=\{\lambda^{(1)},\lambda^{(2)},\cdots\}$ with $\lambda^{(i)}\succeq \lambda^{(i+1)}$ and then the character is
\bea
\bm{\pi}=\sum_{i=1}^{h_{4}(\pi)-1}\bm{\lambda}^{(i)},\quad \bm{\lambda}^{(i)}=\sum_{\Abox\in\lambda^{(i)}}\chi_{23,xq_{4}^{i-1}}(\Bbox).
\eea
We first have
\bea
\left[q_{4}\bfP_{23}\bm{\pi}^{\vee}\bm{\pi}\right]^{(0)}&=\left[q_{4}\bfP_{23}\left(\sum_{i\leq j}\bm{\lambda}^{(i)\vee}\bm{\lambda}^{(j)}+\sum_{i>j} \bm{\lambda}^{(i)\vee}\bm{\lambda}^{(j)}   \right)\right]^{(0)}\\
&=\left[q_{4}\bfP_{23}\sum_{i>j} \bm{\lambda}^{(i)\vee}\bm{\lambda}^{(j)}   \right]^{(0)}.
\eea
This is because $\bm{\lambda}^{(i)\vee}\bm{\lambda}^{(j)},\,\,(i\leq j)$ contains terms with degree $q_{4}^{j-i\geq 0}$ which will not cancel the overall factor $q_{4}$ and thus it is movable. The remaining term $\bm{\lambda}^{(i)\vee}\bm{\lambda}^{(j)},\,\,(i>j)$ contains terms with $q_{4}^{j-i}$ and for them to be unmovable, we need $j-i+1=0$:
\bea
\left[q_{4}\bfP_{23}\sum_{i>j} \bm{\lambda}^{(i)\vee}\bm{\lambda}^{(j)}   \right]^{(0)}=
\left[q_{4}\bfP_{23}\sum_{i=1}^{h_{4}(\pi)-1} \bm{\lambda}^{(i+1)\vee}\bm{\lambda}^{(i)}   \right]^{(0)}.
\eea
\begin{figure}[t]
    \centering
    \includegraphics[width=0.5\linewidth]{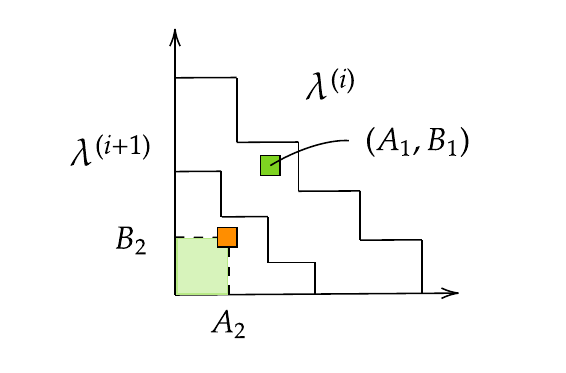}
    \caption{Positions of boxes}
    \label{fig:app-D8-D6-reduce-sign}
\end{figure}
We then can compute it as
\bea
\left[q_{4}\bfP_{23}\sum_{i=1}^{h_{4}(\pi)-1} \bm{\lambda}^{(i+1)\vee}\bm{\lambda}^{(i)}   \right]^{(0)}&=\left[\bfP_{23}\sum_{i=1}^{h_{4}(\pi)-1} \sum_{\eta=(A_{1},B_{1})\in\lambda^{(i)}}\sum_{\xi=(A_{2},B_{2})\in\lambda^{(i+1)}} \xi^{\vee}\eta\right]^{(0)}\\
&=\left[\bfP_{23}\sum_{i=1}^{h_{4}(\pi)-1} \sum_{(A_{1},B_{1})\in\lambda^{(i)}}\sum_{(A_{2},B_{2})\in\lambda^{(i+1)}} q_{2}^{A_{1}-A_{2}}q_{3}^{B_{1}-B_{2}}\right]^{(0)}
\eea
where $\eta=q_{2}^{A_{1}-1}q_{3}^{B_{1}-1},\,\,\xi=q_{2}^{A_{2}-1}q_{3}^{B_{2}-1}$ and $(A_{2},B_{2})\in\lambda^{(i+1)}\preceq \lambda^{(i)}$. Since $\bfP_{23}$ is a polynomial in powers of $q_{2}^{\geq 0}q_{3}^{\geq 0}$, the unmovable terms only come from $1\leq A_{1}\leq A_{2},\,\, 1\leq B_{1}\leq B_{2}$ which are all included in $\lambda^{(i)}$ (see Figure~\ref{fig:app-D8-D6-reduce-sign}):
\bea
&\left[\bfP_{23}\sum_{i=1}^{h_{4}(\pi)-1} \sum_{A_{1}=1}^{A_{2}}\sum_{B_{1}=1}^{B_{2}}\sum_{(A_{2},B_{2})\in\lambda^{(i+1)}} q_{2}^{A_{1}-A_{2}}q_{3}^{B_{1}-B_{2}}\right]^{(0)}\\
=&\left[\sum_{i=1}^{h_{4}(\pi)-1} \sum_{(A_{2},B_{2})\in\lambda^{(i+1)}}  q_{2}^{-A_{2}+1}q_{3}^{-B_{2}+1}(1-q_{2}^{A_{2}})(1-q_{3}^{B_{2}})\right]^{(0)}\\
=&\sum_{i=1}^{h_{4}(\pi)-1} \sum_{(A_{2},B_{2})\in\lambda^{(i+1)}}\delta_{A_{2}=1}\delta_{B_{2}=1}\\
=&h_{4}(\pi)-1.
\eea
Therefore, we get the claim.

\end{proof}

\subsection{Sign rules for solid partitions with nontrivial boundary conditions}\label{app-sec:boundary-signrule}
Let $\widetilde{\rho}$ be an infinite solid partition with nontrivial boundary conditions. We denote $\rho$ as the set of boxes not including the boxes in the boundaries and $\rho_{\bd}$ the boundary boxes. Let $\widetilde{\Pi}^{(i)}$ be the possibly infinite plane partitions corresponding to the $(1,3)$ decompositions of the solid partition $\widetilde{\rho}$ and $\Pi^{(i)},\Pi^{(i)}_{\bd}$ the decomposition of $\rho,\rho_{\bd}$. The corresponding characters are denoted as $\widetilde{\bm{\Pi}}^{(i)},\,\, \bm{\Pi}^{(i)},\,\,\bm{\Pi}^{(i)}_{\bd}$ and we have $\widetilde{\bm{\Pi}}^{(i)}=\bm{\Pi}^{(i)}+\bm{\Pi}^{(i)}_{\bd}$.

Using Prop.~\ref{prop:sign_proof}, we have the following propositions.
\begin{proposition}
    The sign factor $s(\widetilde{\Pi})$ which is defined using Def.~\ref{def:signfactorrgeneral} is
    \bea
    s(\widetilde{\Pi})=\#\{(i,i,i,j)\in\widetilde{\rho}\mid i<j \}.
    \eea
\end{proposition}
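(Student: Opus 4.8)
The plan is to reduce the computation of $s(\widetilde{\Pi})$ to the bulk case already handled in Prop.~\ref{prop:sign_proof}, by carefully separating the boundary and non-boundary contributions in the definition
\[
s(\widetilde{\Pi}) = \left[\sum_{i<j}\mathbf{P}_{123}\widetilde{\bm{\Pi}}^{(i)\vee}\widetilde{\bm{\Pi}}^{(j)}\right]^{(0)},
\]
and then showing that only the terms where both coordinates sit on the main diagonal $(i,i,i,\bullet)$ contribute to the unmovable part. First I would expand $\widetilde{\bm{\Pi}}^{(i)} = \bm{\Pi}^{(i)} + \bm{\Pi}^{(i)}_{\bd}$ so that the sum splits into four pieces: $\bm{\Pi}^{(i)\vee}\bm{\Pi}^{(j)}$, $\bm{\Pi}^{(i)\vee}\bm{\Pi}^{(j)}_{\bd}$, $\bm{\Pi}^{(i)\vee}_{\bd}\bm{\Pi}^{(j)}$, and $\bm{\Pi}^{(i)\vee}_{\bd}\bm{\Pi}^{(j)}_{\bd}$. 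The first piece gives exactly $\sigma_4(\rho)$ by Prop.~\ref{prop:sign_proof}, so the content of the statement is that the remaining three pieces, together with the first, reassemble into $\#\{(i,i,i,j)\in\widetilde{\rho}\mid i<j\}$.

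The key combinatorial input is the local analysis from the proof of Prop.~\ref{prop:sign_proof}: for a fixed term $\eta = x_j q_1^{A-1}q_2^{B-1}q_3^{C-1}$ with $(A,B,C)\in\widetilde{\Pi}^{(j)}$ and $j>i$, the solid partition condition $\widetilde{\Pi}^{(i)}\succeq \widetilde{\Pi}^{(j)}$ forces $(A,B,C)\in\widetilde{\Pi}^{(i)}$, hence the ``box below'' $\Delta_i(\eta) = \sum_{a\le A, b\le B, c\le C} x q_4^{i-1} q_1^{a-1} q_2^{b-1} q_3^{c-1}$ is a sub-character of $\widetilde{\bm{\Pi}}^{(i)}$, and only $\mathbf{P}_{123}\Delta_i(\eta)^\vee\eta$ produces unmovable terms — namely a single $-1$ when $A=B=C=j-i$. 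The crucial point is that this argument uses only the plane-partition structure of $\widetilde{\Pi}^{(i)}$ and $\widetilde{\Pi}^{(j)}$ and the inclusion $\widetilde{\Pi}^{(j)}\subseteq\widetilde{\Pi}^{(i)}$, which hold for the full infinite partitions regardless of how boxes are partitioned into $\Pi_{\bd}$ versus $\Pi$. Therefore I would rerun that computation with $\widetilde{\bm{\Pi}}$ in place of $\bm{\Pi}$ throughout, obtaining
\[
s(\widetilde{\Pi}) = -\sum_j\sum_{\eta=(A,B,C)\in\widetilde{\Pi}^{(j)}} \delta_{A=B=C<j} \mod 2 = \#\{(i,i,i,j)\in\widetilde{\rho}\mid i<j\} \mod 2.
\]
The one subtlety is that $\widetilde{\Pi}^{(i)}$ may now be an \emph{infinite} plane partition, so the intermediate sums are formally infinite; I would address this by noting that the unmovability analysis is term-by-term (each $\eta$ contributes at most one unmovable $\pm1$, and only for finitely many $i$, since $A=B=C=j-i$ forces $i=j-A$), so no genuine regularization of a divergent sum of unmovable terms is needed — the count $\#\{(i,i,i,j)\in\widetilde{\rho}\mid i<j\}$ is the honest (possibly infinite, but mod $2$ well-defined on each diagonal column) answer, matching the expected boundary-dependent sign.

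The main obstacle I anticipate is bookkeeping rather than conceptual: ensuring that the $\bm{\Pi}_{\bd}$-$\bm{\Pi}_{\bd}$ cross term does not secretly contribute extra unmovable pieces that would spoil the clean formula. Here I would argue that the diagonal boxes $(i,i,i,j)$ counted on the right-hand side may themselves lie in $\rho_{\bd}$, so such cross terms are genuinely \emph{meant} to be included — the formula counts all diagonal boxes of $\widetilde{\rho}$, not just those of $\rho$ — and the local ``$\Delta_i(\eta)$'' argument automatically produces the $-1$ for each such box whether or not it is a boundary box, because it only sees the inclusion of plane partitions. If one instead wanted the sign factor stripped of its boundary part (as in the conjectures of Sections~\ref{sec:D8qqlegboundary} and \ref{sec:D8qqsurfaceboundary}), one would split off the sub-count $\#\{(i,i,i,j)\in\rho_{\bd}\}$; but for the statement as given, the totality is precisely $\#\{(i,i,i,j)\in\widetilde{\rho}\mid i<j\}$, and I expect the proof to be a direct adaptation of Prop.~\ref{prop:sign_proof} with the replacement $\rho \rightsquigarrow \widetilde{\rho}$.
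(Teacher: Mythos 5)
Your proposal is correct and follows essentially the same route as the paper: the paper's entire proof of this proposition is the observation that the layers $\widetilde{\Pi}^{(i)}$ of $\widetilde{\rho}$ still satisfy the non-increasing condition $\widetilde{\Pi}^{(i)}\succeq\widetilde{\Pi}^{(j)}$ for $j>i$, so the term-by-term unmovability analysis of Prop.~\ref{prop:sign_proof} (via $\Delta_i(\eta)$) applies verbatim with $\rho$ replaced by $\widetilde{\rho}$. Your added remarks on the possible divergence of the count match the paper's own caveat, which it resolves only by passing to the regularized difference $s(\widetilde{\Pi})-s(\Pi_{\bd})$ in the subsequent proposition.
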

\begin{proposition}
    The sign factor $s(\Pi_{\bd})$ which is defined using Def.~\ref{def:signfactorrgeneral} is
    \bea
    s(\Pi_{\bd})=\#\{(i,i,i,j)\in\rho_{\bd}\mid i<j \}.
    \eea
\end{proposition}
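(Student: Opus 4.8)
The plan is to reduce the statement to the box-by-box argument already carried out in the proof of Prop.~\ref{prop:sign_proof}, now applied to the boundary configuration $\rho_{\bd}$ in place of the finite solid partition $\rho$. The first step is to observe that the underlying set of boundary boxes $\rho_{\bd}$, although it may be infinite, is a downward-closed subset of $\mathbb{Z}_{\geq 1}^{4}$. Indeed, for each type of boundary condition the set of boundary boxes is read off from the explicit characters: a union of the slabs $\mathcal{B}_{a,\pi_{a}}$ for legs, of $\mathcal{B}_{A,\lambda_{A}}$ for surfaces, and the complement of a single upper-corner box for hypersurfaces (from $\bfK^{\bd}=\frac{x}{\bfP_{\four}}(1-q_{1}^{k_{\bar 1}}q_{2}^{k_{\bar 2}}q_{3}^{k_{\bar 3}}q_{4}^{k_{\bar 4}})$). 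Each building block is downward closed, and a union of downward-closed sets is downward closed; the inclusion–exclusion formulas for $\bfN_{\pi_{1}\pi_{2}\pi_{3}\pi_{4}}$ and $\bfN_{\{\lambda_{A}\}}$ are merely the bookkeeping for the character of that union. Consequently the $(1,3)$-type slices $\Pi^{(i)}_{\bd}$ form a non-increasing sequence of (generally infinite) plane partitions, $\Pi^{(i)}_{\bd}\succeq\Pi^{(i+1)}_{\bd}$, with characters $\bm{\Pi}^{(i)}_{\bd}$ of the canonical form $xq_{4}^{i-1}q_{1}^{\geq 0}q_{2}^{\geq 0}q_{3}^{\geq 0}$ demanded by Def.~\ref{def:signfactorrgeneral}.

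Next I would repeat the extraction of the unmovable part verbatim. Fix $j$ and a box $\eta=(A,B,C)\in\Pi^{(j)}_{\bd}$; for $i<j$ the non-increasing property gives $(A,B,C)\in\Pi^{(i)}_{\bd}$, so one splits $\bm{\Pi}^{(i)}_{\bd}=\Delta_{i}(\eta)+\bm{\Pi}^{(i)}_{\bd}(\eta)$ with $\Delta_{i}(\eta)$ the full block $[1,A]\times[1,B]\times[1,C]$ at depth $i$. As in Prop.~\ref{prop:sign_proof}, $\bfP_{123}\,\bm{\Pi}^{(i)\vee}_{\bd}(\eta)\,\eta$ is movable because every monomial has degree $\leq -2$ in at least one of $q_{1},q_{2},q_{3}$, while
\[
\bfP_{123}\,\Delta_{i}^{\vee}(\eta)\,\eta=(1-q_{1}^{A})(1-q_{2}^{B})(1-q_{3}^{C})(q_{1}q_{2}q_{3})^{i-j}
\]
contributes the unmovable term $-\delta_{A=B=C=j-i}$. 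Summing over $i<j$, over $\eta\in\Pi^{(j)}_{\bd}$, and over $j$ gives
\[
s(\Pi_{\bd})=-\sum_{j}\sum_{(A,B,C)\in\Pi^{(j)}_{\bd}}\delta_{A=B=C<j}=-\#\{(i,i,i,j)\in\rho_{\bd}\mid i<j\},
\]
which is the claimed formula mod $2$. The twin statement $s(\widetilde{\Pi})=\#\{(i,i,i,j)\in\widetilde{\rho}\mid i<j\}$ is the same argument applied to the full configuration $\widetilde{\rho}$, whose slices $\widetilde{\Pi}^{(i)}$ are automatically a non-increasing sequence of plane partitions.

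The step I expect to be the main obstacle is making the box-by-box counting rigorous when $\rho_{\bd}$ is genuinely infinite (a leg in the $4$-direction, or a surface/hypersurface), since then each $\bm{\Pi}^{(i)}_{\bd}$ is an infinite sum and the double sum $\sum_{i<j}\bfP_{123}\bm{\Pi}^{(i)\vee}_{\bd}\bm{\Pi}^{(j)}_{\bd}$ requires regularization (as in the main text, by $\sum_{i}x_{i}\mapsto 1/\bfP_{4}$ together with the analogous resummations along the $1,2,3$-directions). One must check that this regularization commutes with extracting the constant term, i.e.\ that the infinite towers of boundary boxes produce no extra unmovable contributions; the movability estimate above controls the transverse directions, so the remaining issue is possible cancellations along the direction(s) in which $\rho_{\bd}$ is infinite. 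I would handle this by truncating each leg/surface/hypersurface at a large finite cutoff, running the finite argument, and verifying that all cutoff-dependent terms are movable so that the limit is clean — this is precisely the computation the excerpt defers as ``difficult,'' whose structure is exactly that of Prop.~\ref{prop:sign_proof}. Finally, one should note that for surface and hypersurface boundaries the statement is to be read in the canonically normalized $(1,3)$-description (the origin shift of $\rho_{\bd}$ being an overall constant that does not affect $\bm{\Pi}^{(i)\vee}_{\bd}\bm{\Pi}^{(j)}_{\bd}$), which is where the hypersurface case even has the exact sign $\sigma_{4}$ with no correction, consistent with the earlier reduction lemmas.
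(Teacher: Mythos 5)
Your proposal is correct and follows essentially the same route as the paper, which justifies this proposition in a single sentence by noting that the slices $\Pi^{(i)}_{\bd}$ are still non-increasing plane partitions and then invoking the box-by-box unmovable-term extraction of Prop.~\ref{prop:sign_proof} verbatim. The regularization caveat you raise for genuinely infinite $\rho_{\bd}$ is the same one the paper itself flags immediately after the statement, resolving it by working only with the finite difference $s(\widetilde{\Pi})-s(\Pi_{\bd})$.
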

These are obtained by following the proof of Prop.~\ref{prop:sign_proof} and noticing that since $\widetilde{\Pi}^{(i)},\Pi_{\bd}^{(i)}$ are still non-increasing plane partitions, they obey $\widetilde{\Pi}^{(i)}\succeq \widetilde{\Pi}^{(j)}$ for $j>i$. Note that this non-increasing condition seems to be crucial for the proof given in Prop.~\ref{prop:sign_proof}.

Strictly speaking, the sign factor $s(\widetilde{\Pi}), s(\Pi_{\bd})$ may be ill-defined since there might be infinite number of terms and the right hand side might diverge. However, we may regularize them as the following:
\bea
s(\widetilde{\Pi})-s(\Pi_{\bd})=\#\{(i,i,i,j)\in \rho \mid i<j\}\eqqcolon \sigma_{4}(\rho)
\eea
Since $\rho$ is just a finite set of boxes, this is well-defined. Using Def.~\ref{def:signfactorrgeneral} and $\widetilde{\bm{\Pi}}^{(i)}=\bm{\Pi}^{(i)}+\bm{\Pi}^{(i)}_{\bd}$ the left hand side is written as
\bea
s(\widetilde{\Pi})-s(\Pi_{\bd})&=\left[\sum_{i<j}\bfP_{123}\left(\widetilde{\bm{\Pi}}^{(i)\vee}\widetilde{\bm{\Pi}}^{(j)}-\bm{\Pi}_{\bd}^{(i)\vee}\bm{\Pi}_{\bd}^{(j)}   \right)\right]^{(0)}\\
&=\left[\sum_{i<j}\bfP_{123}\left(\bm{\Pi}^{(i)\vee}\bm{\Pi}^{(j)}+\bm{\Pi}_{\bd}^{(i)\vee}\bm{\Pi}^{(j)}+\bm{\Pi}^{(i)\vee}\bm{\Pi}_{\bd}^{(j)}  \right)\right]^{(0)}.
\eea

\begin{proposition}
We have
    \bea
\left[\sum_{j>i}\left(\bfP_{\four}\bm{\Pi}^{(j)}_{\bd}\bm{\Pi}^{(i)\vee}+\bfP_{123}\bm{\Pi}^{(i)\vee}\bm{\Pi}^{(j)}\right)\right]^{(0)}=\sigma_{4}(\rho)-\left[\sum_{i<j}\bfP_{123}\bm{\Pi}^{(i)\vee}_{\bd}\bm{\Pi}^{(j)}\right]^{(0)}.
    \eea
   
\end{proposition}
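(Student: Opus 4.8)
The plan is to deduce the identity formally from the material displayed just above — the two propositions computing $s(\widetilde{\Pi})$ and $s(\Pi_{\bd})$ and the expansion of $s(\widetilde{\Pi})-s(\Pi_{\bd})$ — reducing it to a single vanishing of unmovable terms. First I would rewrite what that material gives. By Def.~\ref{def:signfactorrgeneral} and the splitting $\widetilde{\bm{\Pi}}^{(i)}=\bm{\Pi}^{(i)}+\bm{\Pi}_{\bd}^{(i)}$, the (regularized) relation $s(\widetilde{\Pi})-s(\Pi_{\bd})=\sigma_{4}(\rho)$ becomes
\bea
\sigma_{4}(\rho)&=\left[\sum_{i<j}\bfP_{123}\bm{\Pi}^{(i)\vee}\bm{\Pi}^{(j)}\right]^{(0)}+\left[\sum_{i<j}\bfP_{123}\bm{\Pi}_{\bd}^{(i)\vee}\bm{\Pi}^{(j)}\right]^{(0)}\\
&\quad+\left[\sum_{i<j}\bfP_{123}\bm{\Pi}^{(i)\vee}\bm{\Pi}_{\bd}^{(j)}\right]^{(0)}.
\eea
Plugging the middle term into the right-hand side of the asserted identity and the first term into its left-hand side, the claim collapses to $\left[\sum_{j>i}\bfP_{\four}\bm{\Pi}^{(j)}_{\bd}\bm{\Pi}^{(i)\vee}\right]^{(0)}=\left[\sum_{j>i}\bfP_{123}\bm{\Pi}^{(i)\vee}\bm{\Pi}_{\bd}^{(j)}\right]^{(0)}$. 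Using $\bfP_{\four}=\bfP_{123}+\bfP_{123}^{\vee}$ and the commutativity of the character product, the $\bfP_{123}$-part of the left side matches the right side exactly, so the whole statement is equivalent to the single vanishing
\bea
\left[\sum_{j>i}\bfP_{123}^{\vee}\bm{\Pi}^{(i)\vee}\bm{\Pi}_{\bd}^{(j)}\right]^{(0)}=0.
\eea

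To prove this vanishing I would mimic the unmovable-term analysis in the proof of Prop.~\ref{prop:sign_proof}. Since the unmovable part of a character equals that of its dual, it is enough to show $\left[\sum_{j>i}\bfP_{123}\bm{\Pi}^{(i)}\bm{\Pi}_{\bd}^{(j)\vee}\right]^{(0)}=0$. Fix $i<j$, a regular box $(A,B,C)\in\Pi^{(i)}$ and a boundary box $(A',B',C')\in\Pi_{\bd}^{(j)}$; in the $(1,3)$-decomposition the corresponding monomials differ by $q_{4}^{i-j}q_{1}^{A-A'}q_{2}^{B-B'}q_{3}^{C-C'}$, which after $q_{4}=(q_{1}q_{2}q_{3})^{-1}$ reads $q_{1}^{A-A'+(j-i)}q_{2}^{B-B'+(j-i)}q_{3}^{C-C'+(j-i)}$. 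Multiplying by a monomial of $\bfP_{123}$, whose exponents lie in $\{0,1\}^{3}$, can give a constant only if $A'-A-(j-i),\,B'-B-(j-i),\,C'-C-(j-i)$ all lie in $\{0,1\}$; since $j-i\geq1$ this forces $A'>A$, $B'>B$, $C'>C$, i.e.\ the regular box is strictly dominated by the boundary box. But the boundary $\rho_{\bd}$ is itself a solid partition, hence downward closed and non-increasing along the $4$-axis, so $(A',B',C')\in\Pi_{\bd}^{(j)}\subseteq\Pi_{\bd}^{(i)}$ together with $(A,B,C)\leq(A',B',C')$ forces $(A,B,C)\in\Pi_{\bd}^{(i)}$, contradicting that $(A,B,C)$ lies in the regular part $\Pi^{(i)}$, which is disjoint from $\Pi_{\bd}^{(i)}$. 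Hence no pair contributes a constant term, the sum vanishes, and the identity follows. (In the surface and hypersurface cases the layer-dependent shift of the origin, such as $q_{3}^{k_{12}^{(k)}}$, makes the relevant exponent manifestly positive and yields the vanishing even more directly.)

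The genuine obstacle is bookkeeping rather than conceptual: $\bm{\Pi}_{\bd}^{(j)}$ is in general an infinite, rational-function character, so one must check that extraction of the unmovable part $[\,\cdot\,]^{(0)}$ commutes with the natural regularizations of these characters, so that the monomial-by-monomial argument above is legitimate term by term. This I would carry out case by case with the explicit presentations of the boundary characters from Sections~\ref{sec:D8qqlegboundary}, \ref{sec:D8qqsurfaceboundary}, \ref{sec:D8qqhypersurfaceboundary} (such as $\bm{\nu}^{(k)}/\bfP_{3}$, $q_{4}^{k-1}\bm{\lambda}_{34}/\bfP_{3}$, $\bm{\lambda}_{12}/\bfP_{12}$), in the same spirit as the reflection identities of Prop.~\ref{app-prop:reflection_sign} and \ref{app-prop:reflection-mod} are used in the proofs of Prop.~\ref{prop:D6toD8-oneleg1} and \ref{prop:D6toD8-oneface1}. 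As a bonus, the identity makes the earlier conjectures $\widetilde{\sigma}_{4}(\rho)\simeq\sigma_{4}(\rho)$ precise: its left-hand side is exactly $\widetilde{\sigma}_{4}(\rho)$, so the boundary sign defect is identified with $-\left[\sum_{i<j}\bfP_{123}\bm{\Pi}_{\bd}^{(i)\vee}\bm{\Pi}^{(j)}\right]^{(0)}$.
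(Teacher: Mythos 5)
Your proof is correct and follows essentially the same route as the paper's: expand $\bfP_{\four}=\bfP_{123}+\bfP_{123}^{\vee}$, absorb the three cross terms into $\sigma_{4}(\rho)$ via the regularized relation $s(\widetilde{\Pi})-s(\Pi_{\bd})=\sigma_{4}(\rho)$, dualize the $\bfP_{123}^{\vee}$ piece with $[\mathbf{X}]^{(0)}=[\mathbf{X}^{\vee}]^{(0)}$, and show the leftover term $\bfP_{123}\bm{\Pi}_{\bd}^{(j)\vee}\bm{\Pi}^{(i)}$ is movable. The only substantive difference is that the paper dismisses that last term as movable simply because it carries $q_{4}^{i-j}$, which after imposing $q_{1}q_{2}q_{3}q_{4}=1$ is not by itself conclusive, whereas your componentwise-domination argument combined with the downward-closedness of $\rho_{\bd}$ (a boundary box strictly dominating a regular box of an earlier layer would force that regular box into the boundary) supplies the justification the paper leaves implicit.
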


\begin{proof}  
The left hand side is rewritten as
\bea
\,&\sum_{i<j}\left(\bfP_{\four}\bm{\Pi}^{(j)}_{\bd}\bm{\Pi}^{(i)\vee}+\bfP_{123}\bm{\Pi}^{(i)\vee}\bm{\Pi}^{(j)}\right)\\
=&\sum_{i<j}\bfP_{123}\left(\bm{\Pi}^{(i)\vee}\bm{\Pi}^{(j)}+\bm{\Pi}_{\bd}^{(i)\vee}\bm{\Pi}^{(j)}+\bm{\Pi}^{(i)\vee}\bm{\Pi}_{\bd}^{(j)}  \right)\\
&+\sum_{i<j}\left(\bfP_{123}^{\vee}\bm{\Pi}^{(j)}_{\bd}\bm{\Pi}^{(i)\vee}-\bfP_{123}\bm{\Pi}_{\bd}^{(i)\vee}\bm{\Pi}^{(j)}\right).
\eea
Using the property $[\mathbf{X}]^{(0)}=[\mathbf{X}^{\vee}]^{(0)}$, the unmovable terms are
\bea
\sigma_{4}(\rho)+\left[\sum_{i<j}\bfP_{123}\left(\bm{\Pi}^{(j)\vee}_{\bd}\bm{\Pi}^{(i)} - \bm{\Pi}_{\bd}^{(i)\vee}\bm{\Pi}^{(j)}\right)\right]^{(0)}.
\eea
The first term $\bm{\Pi}^{(j)\vee}\bm{\Pi}^{(i)}$ contains terms with $q_{4}^{i-j}=(q_{123})^{j-i>0}$ and thus it is movable. It is enough to focus on the second term and we obtain the claim.

\end{proof}

\section{Vertex operators and zero-modes}\label{app:zero-modes}
\begin{definition}[\cite{Kimura:2023bxy}]
The zero modes of the vertex operators are defined as 
\bea
    &\mathsf{a}_{0}(x)=e^{\mathsf{a}_{0}},\quad \mathsf{s}_{a,0}(x)=x^{\mathsf{s}_{a,0}}e^{\widetilde{\mathsf{s}}_{a,0}},\quad \mathsf{w}_{\bar{a}}(x)=x^{\mathsf{w}_{\bar{a},0}}e^{\widetilde{\mathsf{w}}_{\bar{a},0}}e^{\widetilde{\widetilde{\mathsf{w}}}_{\bar{a},0}},\\
    &\mathsf{x}_{A,0}(x)=e^{\mathsf{x}_{A,0}}, \quad \widetilde{\mathsf{z}}_{0}^{K}(x)=x^{\mathsf{z}^{K}_{0}}e^{\widetilde{\mathsf{z}}^{K}_{0}}e^{\widetilde{\widetilde{\mathsf{z}}}^{K}_{0}}
\eea
with
\begin{align}
\begin{split}
& \mathsf{a}_{0}=\mathsf{t}_{0},\quad \mathsf{s}_{a,0}=-(\log q_{a})^{-1}\mathsf{t}_{0},\quad \widetilde{\mathsf{s}}_{a,0}=-(\log q_{a})^{-1}\widetilde{\partial}_{\mathsf{t}},\\
    &\mathsf{w}_{\bar{a},0}=-\log q_{a}\,\widetilde{\mathsf{t}}_{0},\quad \widetilde{\mathsf{w}}_{\bar{a},0}=-\log q_{a}\log (-q_{a})\,\widetilde{\mathsf{t}}_{0},\quad \widetilde{\widetilde{\mathsf{w}}}_{\bar{a},0}=-\log q_{a}\partial_{\mathsf{t}},\\
    &\mathsf{x}_{A,0}=\log q_{c}\log q_{d}\,\widetilde{\mathsf{t}}_{0},\,\,\quad (\bar{A}=cd),\\
    &\mathsf{z}^{K}_{0}=-\log K \tilde{\mathsf{t}}_{0},\quad \widetilde{\mathsf{z}}^{K}_{0}=-\log K\log(-K)\widetilde{\mathsf{t}}_{0},\quad \tilde{\tilde{\mathsf{z}}}^{K}_{0}=-\log K\partial_{\mathsf{t}}
\end{split}
\end{align}
where we introduced two independent sets of zero modes
\beq
    [\partial_{\mathsf{t}},\mathsf{t}_{0}]=[\widetilde{\partial}_{\mathsf{t}},\tilde{\mathsf{t}}_{0}]=1,\quad [\mathsf{t}_{0},\tilde{\mathsf{t}}_{0}]=[\partial_{\mathsf{t}},\widetilde{\partial}]=[\mathsf{t}_{0},\widetilde{\partial}_{\mathsf{t}}]=[\tilde{\mathsf{t}}_{0},\partial_{\mathsf{t}}]=0.\label{eq:zeromodesdef}
\eeq
The normal ordering is defined as 
\beq
{:\partial_{\mathsf{t}}\,\mathsf{t}_{0}:}=\mathsf{t}_{0}\partial_{\mathsf{t}},\quad {:\tilde{\partial}_{\mathsf{t}}\,\tilde{\mathsf{t}}_{0}:}=\tilde{\mathsf{t}}_{0}\tilde{\partial}_{\mathsf{t}}.
\eeq
\end{definition}

\bibliographystyle{ytamsalpha.bst}
\bibliography{Worigami}
\end{document}